\def\showauthornotes{0}
\def\showkeys{0}
\def\showdraftbox{0}
\def\confversion{0}
\def\isdraft{0}
\definecolor{darkred}{rgb}{0.5,0,0}
\definecolor{darkgreen}{rgb}{0,0.35,0}
\definecolor{darkblue}{rgb}{0,0,0.55}
\newcommand{\Authornote}[3]{{\sf\small\color{#3}{[#1: #2]}}}
\newcommand{\Authorcomment}[2]{{\sf \small\color{gray}{[#1: #2]}}}
\newcommand{\Authorfnote}[2]{\footnote{\color{red}{#1: #2}}}
\newcommand{\Authornote}[3]{}
\newcommand{\Authorcomment}[2]{}
\newcommand{\Authorfnote}[2]{}
\newtheorem{theorem}{Theorem}[section]
\newtheorem{observation}[theorem]{Observation}
\newtheorem{definition}[theorem]{Definition}
\newtheorem{notation}[theorem]{Notation}
\newtheorem{lemma}[theorem]{Lemma}
\newtheorem{remark}[theorem]{Remark}
\newtheorem{proposition}[theorem]{Proposition}
\newtheorem{corollary}[theorem]{Corollary}
\newtheorem{claim}[theorem]{Claim}
\newtheorem{fact}[theorem]{Fact}
\newtheorem{example}[theorem]{Example}
\newtheorem{algo}[theorem]{Algorithm}
\newenvironment{algorithm}[3]
        {\noindent\begin{boxedminipage}{\textwidth}\begin{algo}[#1]\ \par
        {\begin{tabular}{r l}
        \textbf{Input} & #2\\
        \textbf{Output} & #3
        \end{tabular}\par\enskip}}
        {\end{algo}\end{boxedminipage}}
\def\FullBox{\hbox{\vrule width 6pt height 6pt depth 0pt}}
\def\qed{\ifmmode\qquad\FullBox\else{\unskip\nobreak\hfil
\penalty50\hskip1em\null\nobreak\hfil\FullBox
\parfillskip=0pt\finalhyphendemerits=0\endgraf}\fi}
\def\qedsketch{\ifmmode\Box\else{\unskip\nobreak\hfil
\penalty50\hskip1em\null\nobreak\hfil$\Box$
\parfillskip=0pt\finalhyphendemerits=0\endgraf}\fi}
\newenvironment{proofsketch}{\begin{trivlist} \item {\bf
Proof Sketch:~~}}
  {\qedsketch\end{trivlist}}
\def\to{\rightarrow}
\def\eps{\varepsilon}
\def\epsilon{\varepsilon}
\def\eps{\epsilon}
\def\phi{\varphi}
\def\cal{\mathcal}
\def\implies{\Rightarrow}
\newcommand{\sub}{\ensuremath{\subseteq}}
\newcommand{\defeq}{:=}
\renewcommand{\bar}{\overline} 
\newcommand{\given}{\;\ifnum\currentgrouptype=16 \middle\fi \vert\;}
\newcommand{\ie}{i.e.,\xspace}
\newcommand{\eg}{e.g.,\xspace}
\newcommand{\etal}{et al.\xspace}
\newcommand{\cf}{{\it cf.,}}
\newcommand{\mper}{\,.}
\newcommand{\mcom}{\,,}
\newcommand{\R}{{\mathbb R}}
\newcommand{\E}{{\mathbb E}}
\newcommand{\N}{{\mathbb{N}}}
\newcommand{\F}{{\mathbb F}}
\newcommand{\pmone}{\{-1,1\}\xspace}
\newcommand{\indicator}[1]{\mathds{1}_{\{#1\}}}
\let\nfrac=\nicefrac
\newcommand{\abs}[1]{\ensuremath{\left\lvert #1 \right\rvert}}
\newcommand{\norm}[1]{\ensuremath{\left\lVert #1 \right\rVert}}
\newcommand{\smallnorm}[1]{\ensuremath{\lVert #1 \rVert}}
\newcommand{\ip}[2] {\ensuremath{\left\langle #1 , #2 \right\rangle}}
\newcommand{\one}{{\mathbf{1}}}
\newcommand{\Esymb}{\mathbb{E}}
\newcommand{\Psymb}{\mathbb{P}}
\newcommand{\Varsymb}{\mathrm{Var}}
\DeclareMathOperator*{\ExpOp}{\Esymb}
\def\Pr#1{%
    \ProbabilityRender{\Psymb}{#1}%
}
\def\Ex#1{%
    \ProbabilityRender{\Esymb}{#1}%
}
\def\tildeEx#1{%
    \ProbabilityRender{\widetilde{\Esymb}}{#1}%
}
\def\condPE#1#2{%
	\@ifnextchar\bgroup
	{\ConditionalProbabilityRender{\widetilde{\Esymb}}{#1}{#2}}
	{\ProbabilityRender{\widetilde{\Esymb}}{#1 \given #2}}
}
\def\tildeVar#1{
	\ProbabilityRender{\widetilde{\Varsymb}}{#1}
}
\def\tildeCov#1#2{
	\ProbabilityRender{\tildecov}{#1,#2}
}
\def\ConditionalProbabilityRender#1#2#3#4{
	\renderwithdist{#1}{#2}{#3 \given #4}	
}
\def\ProbabilityRender#1#2{
  \@ifnextchar\bgroup%
  {\renderwithdist{#1}{#2}}
   {\singlervrender{#1}{#2}}
}
\def\singlervrender#1#2{%
   \ensuremath{\mathchoice
       {{#1}\left[ #2 \right]}
       {{#1}[ #2 ]}
       {{#1}[ #2 ]}
       {{#1}[ #2 ]}
   }
}
\def\renderwithdist#1#2#3{%
   \@ifnextchar\bgroup
   {\superfancyrender{#1}{#2}{#3}}
   {\ensuremath{\mathchoice
      {\underset{#2}{#1}\left[ #3 \right]}
      {{#1}_{#2}[ #3 ]}
      {{#1}_{#2}[ #3 ]}
      {{#1}_{#2}[ #3 ]}
     }
   }
}
\def\superfancyrender#1#2#3#4#5{
   \ensuremath{\mathchoice
      {\underset{#1}{{#1}}\left#4 #3 \right#5}
      {{#1}_{#2}#4 #3 #5}
      {{#1}_{#2}#4 #3 #5}
      {{#1}_{#2}#4 #3 #5}
   }
}
\newcommand{\conv}[1]{\mathrm{conv}\inparen{#1}}
\newfont{\inhead}{eufm10 scaled\magstep1}
\newcommand{\deffont}{\sf}
\newcommand{\calB}{{\cal B}}
\newcommand{\calC}{{\cal C}}
\newcommand{\calD}{{\cal D}}
\newcommand{\calE}{{\cal E}}
\newcommand{\calF}{{\cal F}}
\newcommand{\calH}{{\cal H}}
\newcommand{\calJ}{{\cal J}}
\newcommand{\calL}{{\cal L}}
\newcommand{\calO}{{\cal O}}
\newcommand{\calR}{{\cal R}}
\newcommand{\calX}{{\cal X}}
\newcommand{\calT}{{\cal T}}
\newcommand{\calS}{{\cal S}}
\newcommand{\poly}{{\mathrm{poly}}}
\newcommand{\polylog}{{\mathrm{polylog}}}
\newcommand{\suchthat}{{\;\; : \;\;}}
\DeclareMathOperator\supp{Supp}
\DeclareMathOperator{\rank}{\operatorname {rk}}
\DeclareMathOperator{\sgn}{\operatorname{sgn}}
\DeclareMathOperator*{\argmin}{\arg\!\min}
\DeclareMathOperator*{\argmax}{\arg\!\max}
\renewcommand{\iff}{\ensuremath{\Leftrightarrow}}
\renewcommand{\bar}[1]{\ensuremath{\overline{#1}}}
\newcommand{\ceil}[1]{\ensuremath{\left\lceil #1 \right\rceil}}
\newcommand{\bigoh}{\operatorname{O}}
\newcommand{\bigomega}{\mathop{\Omega}}
\newcommand{\inparen}[1]{\left(#1\right)}             
\newcommand{\inbraces}[1]{\left\{#1\right\}}           
\newcommand{\insquare}[1]{\left[#1\right]}             
\newcommand{\Szemeredi}{Szemer\'edi\xspace}
\newcommand{\Caratheodory}{Carath\'eodory\xspace}
\newcommand{\Zemor}{Z\'emor\xspace}
\newcommand{\cC}{\calC}
\newcommand{\cD}{\calD}
\newcommand{\cE}{\calE}
\newcommand{\cF}{\calF}
\newcommand{\cB}{\calB}
\newcommand{\cR}{\calR}
\newcommand{\out}{\mathrm{out}}
\newcommand{\inn}{\mathrm{in}}
\newcommand{\dec}{\mathrm{dec}}
\newcommand{\cx}{\calC_X}
\newcommand{\cz}{\calC_Z}
\newcommand{\dx}{\calD_X}
\newcommand{\dz}{\calD_Z}
\newcommand{\ex}{\calE_X}
\newcommand{\ez}{\calE_Z}
\newcommand{\fx}{\calF_X}
\newcommand{\fz}{\calF_X}
\newcommand{\hx}{H_X}
\newcommand{\hz}{H_Z}
\newcommand{\subs}[2]{{#1}_{\scaleto{{#2}\mathstrut}{6.5pt}}}
\newcommand{\dxh}{\overline{h}}
\newcommand{\dxz}{\overline{z}}
\DeclarePairedDelimiter\angles{\langle}{\rangle}
\DeclarePairedDelimiter\braces{\lbrace}{\rbrace}
\DeclarePairedDelimiter\brackets{\lbrack}{\rbrack}
\newcommand{\distLperp}[1]{\Delta_{L, \cz^\perp}\parens*{#1\,,h}}
\newcommand{\distR}[1]{\Delta_{R}\parens*{#1\,,h}}
\newcommand{\dist}[1]{\Delta \parens*{ #1 }}
\newcommand{\partmin}{\psi}
\newcommand{\disR}[2]{\Delta_{R}\parens*{#1\,,#2}}
\newcommand{\phix}{\subs{\varphi}{X}}
\newcommand{\phiz}{\subs{\varphi}{Z}}
\newcommand{\wphix}{\widetilde{\phix}}
\newcommand{\wphiz}{\widetilde{\phiz}}
\newcommand{\unconx}{\subs{\varphi}{X}^{\raisebox{0.3ex}{$\scriptscriptstyle-1$}}}
\newcommand{\unconz}{\subs{\varphi}{Z}^{\raisebox{0.3ex}{$\scriptscriptstyle-1$}}}
\newcommand{\chiTan}{\chi}
\DeclareMathOperator{\fold}{\operatorname{Fold}}
 \newcommand\SetSymbol[1][]{%
     \nonscript\:#1\vert
     \allowbreak
     \nonscript\:
     \mathopen{}}
  \DeclarePairedDelimiterX\Set[1]\{\}{%
     \renewcommand\given{\SetSymbol[\delimsize]}
     #1
}
\declaretheorem[sibling=theorem]{summary}
\DeclareSymbolFont{extraup}{U}{zavm}{m}{n}
\DeclareMathSymbol{\varheart}{\mathalpha}{extraup}{86}
\DeclareMathSymbol{\vardiamond}{\mathalpha}{extraup}{87}
\def\matr#1{\mathsf{#1}}
\def\one{\mathbf 1}
\def\Cc{\mathcal C}
\def\OPT{\mathsf{OPT}}
\def\SDP{\mathsf{SDP}}
\def\Aye{\matr A}
\def\Ess{\matr S}
\def\Jay{\matr J}
\def\ess{\mathfrak s}
\def\tee{\mathfrak t}
\DeclareMathOperator{\Tr}{Tr}
\DeclareMathOperator{\PExp}{\widetilde \E}
\DeclarePairedDelimiter\set{\lbrace}{\rbrace}
\DeclarePairedDelimiter\parens{\lparen}{\rparen}
\DeclareMathOperator{\bias}{bias}
\DeclareMathOperator{\dsum}{dsum}
\DeclareMathOperator{\dprod}{dprod}
\newcommand{\swap}[2]{\Ess_{#1,#2}}
\newcommand{\CUT}{\textup{CUT}}
\newcommand{\SCUT}{\textup{CUT}_{\pm}}
\newcommand{\error}{\delta}
\newcommand{\sset}{\chi}
\newcommand{\AN}{\alpha_{\textup{\textbf{AN}}}}
\newcommand{\tuples}[1]{W(#1)}
\newcommand*\circled[1]{\tikz[baseline=(char.base)]{
            \node[shape=circle,draw,inner sep=1pt] (char) {#1};}}
\DeclareMathOperator{\zigzag}{\circled{{\rm z}}}
\newcommand{\snote}[1]{\Authornote{Shashank}{#1}{blue}}
\newcommand{\mnote}[1]{\Authornote{Madhur}{#1}{magenta}}
\DeclareMathOperator{\tow}{Tow}
\newcommand{\trans}{\mathsf{T}}
\newcommand{\fbar}{\overline{f}}
\def\pspacing{-10pt}
\def\li{{ \ell }}
\def\ri{{ r }}
\def\TanC{{ \calC^{Tan} }}
\def\AELC{{ \calC^{AEL} }}
\def\indi#1{{\one \{ #1 \} }}
\def\dupPE#1{
	\ProbabilityRender{\widetilde{\Esymb}^*}{#1}
}
\DeclareMathOperator{\agr}{\operatorname {agr}}
\DeclareMathOperator{\tildecov}{\widetilde{\operatorname {Cov}}}
\DeclareMathOperator{\dupCov}{\widetilde{\operatorname {Cov}}_{\widetilde{\Esymb}^*}}
\def\zee{{ \mathbf{Z} }}
\def\dis{{ \Delta }}
\def\tee{{ \theta }}
\def\chiTan{{ \chi}}
\def\chiAEL{{ \overline{\chi} }}
\title{Continuous Optimization for Decoding Errors}
\author{Shashank Srivastava}
\let\Title\@title\makeatother
\begin{document}
\pagenumbering{roman}
\begin{titlepage}
	\ifnum\isdraft=1
		\begin{tcolorbox}
			\begin{center}
        		\Huge
				DRAFT
			\end{center}
		\end{tcolorbox}
	\fi
    \begin{center}
		\ifnum\isdraft=0
			\vspace*{0.8cm}
		\fi
            
        \Huge
        \textbf{\Title}
            
        \vspace{0.8cm}
        \LARGE \textbf{Shashank Srivastava}
        \vspace{0.8cm}
        
        \large

        A thesis submitted\\
        in partial fulfillment of the requirements for\\
        the degree of\\
        \vspace{0.8cm}
        Doctor of Philosophy in Computer Science\\
        \vspace{0.8cm}
        at the\\

        \vspace{0.8cm}
        TOYOTA TECHNOLOGICAL INSTITUTE AT CHICAGO\\
        Chicago, Illinois\\

        \vspace{0.8cm}
        August 2024\\
            
        \vspace{0.8cm}
        \vspace{0.8cm}

        Thesis Committee:\\
        \vspace{0.4cm}
        Pravesh K. Kothari\\
        \vspace{0.4cm}
        Yury Makarychev\\
        \vspace{0.4cm}
        Madhur Tulsiani (Thesis Advisor)\\

    \end{center}
\end{titlepage}

\thispagestyle{empty}
\vspace*{1cm}
\begin{abstract}
Error-correcting codes are one of the most fundamental objects in pseudorandomness, with applications in communication, complexity theory, and beyond. Codes are useful because of their ability to support decoding, which is the task of recovering a codeword from its noisy copy. List decoding is a relaxation where the decoder is allowed to output a list of codewords, and has seen tremendous progress over the last 25 years. In this thesis, we prove new algorithmic and combinatorial results about list decoding. 

We describe a list decoding framework that reduces the task of efficient decoding to proving distance in certain restricted proof systems. We then instantiate this framework for Tanner codes of Sipser and Spielman [IEEE Trans. Inf. Theory 1996] and Alon-Edmonds-Luby (AEL) distance amplification [FOCS 1995] of unique decodable base codes to get the first polynomial time list decoding algorithms for these codes up to their respective Johnson bounds. We also discuss extensions to the quantum version of AEL distance amplification, yielding polynomial-time list decodable quantum LDPC codes.

We next give an alternate viewpoint of the list decoding framework based on abstract regularity lemmas instead of convex hierarchies. We show how to efficiently implement the regularity lemma for the case of Ta-Shma’s explicit codes near the Gilbert-Varshamov bound [STOC 2017]. This leads to a near-linear time algorithm for unique decoding of Ta-Shma’s codes.

We also give new combinatorial results that improve known list sizes beyond the Johnson bound. Firstly, we adapt the AEL amplification to construct a new family of explicit codes that can be combinatorially list decoded to the optimal error correction radius. This is the first example of such a code that does not use significant algebraic ingredients. Secondly, we present list size improvements for Folded Reed-Solomon codes, improving the state of the art list size among explicit list decoding capacity achieving codes.
\end{abstract}

\newpage
\chapter*{Acknowledgements}
I must start by thanking my advisor Madhur Tulsiani for his constant support and encouragement over the years. From teaching me the basics of research, to being an oracle for my many technical queries, Madhur has shown incredible patience. He has been generous with his time, pointed with his advice, and flexible with his expectations. His influence on me stretches well past academics, and like everything he chooses, he excels at being a mentor. I would like to thank other members of my thesis committee, Pravesh Kothari and Yury Makarychev, who have been supportive throughout the process with their feedback and encouragement.

I am particularly thankful to my co-author Fernando Granha Jeronimo, who always had valuable advice on how to address daunting challenges during my PhD, both technical and non-technical. Fernando has always taken out time for me from his own busy schedule, and I have learned a lot from him. I am also thankful to my other co-authors Vedat Levi Alev, Dylan Quintana and Tushant Mittal, who have all been fun people to do research with and learn things from!

While these individuals have all left a positive impact on me, it was amplified manifold because I met them in the amazing atmosphere that TTIC provides. From the very helpful admin staff, to the many workshops, and an overall atmosphere that encourages collaboration and socializing, TTIC gets a lot right when it comes to being a graduate school. Shout-out to Mary who single-handedly brings so much warmth to the institute. Thank you Adam, Alicia, Amy, Brandie, Chrissy, Erica, Jessica and everyone else who keeps things running smoothly! 

Talking about institutional support, I took some amazing and useful theory courses at TTIC offered by Julia Chuzhoy, Yury Makarychev, Nathan Srebro and Madhur Tulsiani, as well as many great courses at UChicago CS and Math. I should also thank Matthew Turk and Avrim Blum, under whose leadership TTIC continues to provide a nurturing environment for students like me.

The students at TTIC also formed the backbone of my social life in Chicago. Other than people already mentioned above, I should thank Akilesh, Amin, Ankita, Anmol, Goutham, Han, Kavya, Kshitij, Max, Naren, Nirmit, Omar, Pushkar, Shubham and Sudarshan. Special thanks to Akash, Mrinal and Rachit who welcomed me to the TTIC theory group - our trips to Devon are one of my favorite memories from Chicago. Rachit also showed me how to enjoy Chicago, and we spent countless hours walking or in the CTA together. 

Beyond TTIC, I received a lot of support from my friends Sahila and Shreyasi, for which I am grateful to both of them. Several friends from my four years at IIT Kharagpur deserve a mention as well, and they are Abhinav, Abhishek, Arafat, Astha, Pranjal, Rajat and Rakshit. Mentorship by KGP professors Rogers Mathew and Sudebkumar Pal sent me along the direction of theoretical CS, and I thank them both.

Finally, I thank my family for their unconditional love and support, and putting up with my annoying habits. Unbeknownst to them, their upbringing must have set me off on an academic journey some time in my childhood, and this thesis is the culmination of that journey. It is therefore fitting that I dedicate this thesis to them.
\tableofcontents

\listoffigures
\listoftables

\newpage
\pagenumbering{arabic}
\setcounter{page}{1}
\chapter{Introduction}\label{chap:intro}
The world today communicates at a scale unimaginable a few decades ago, and it is arguably this communication that has driven much of our progress over the last half a century. This communication takes place over extremely large networks, consisting of varying electronic architecture, geography and adversaries. This makes the communication inherently susceptible to noise, and error correction is the task of removing this noise.

Error-correcting codes, or just \emph{codes}, are objects designed to withstand noise, which are placed in a software layer over the hardware to achieve error correction. It has been known since the work of Shannon \cite{S48} that randomly chosen codes have the combinatorial structure to support optimal noise tolerance, but not necessarily to support efficient algorithms. Designing explicit codes and their associated encoding/decoding algorithms therefore pose interesting challenges in pseudorandomness and algorithm design respectively.


In this thesis, we present new algorithmic and combinatorial results for several explicit code families. Before discussing these results in detail, let us discuss introduce some basics of coding theory.

\section{Basics of Coding Theory}

\begin{definition}[Code, rate and distance]
	A code $\calC$ of blocklength $n$ and  alphabet size $q$ is a subset of $[q]^n$, and each element of this subset is called a codeword. Two important parameters of a code are its rate and distance:
\begin{itemize}
\item Rate $R = \frac{\log_q |\calC|}{n}$.
\item Distance $\delta = \min_{\substack{f_1, f_2 \in \calC \\ f_1\neq f_2}} \Delta(f_1,f_2)$.
\end{itemize}
\end{definition}
In the definition above, $\Delta(f_1,f_2)$ is the normalized Hamming distance between $f_1$ and $f_2$ viewed as strings over $[q]$. 

\begin{definition}[Linear and LDPC codes]
The code $\calC$ is linear if $[q]$ can be identified with the finite field $\F_q$ (that is, $q$ is a prime power), and the subset $\calC$ is a linear subspace of $\F_q^n$. A parity check matrix $H$ of a code $\calC$ is a matrix such that $\ker H = \calC$, and $\calC$ is a low-density parity check (LDPC) code if it has a parity check matrix that has only constantly many non-zero entries in each row.
\end{definition}

While designing codes, it is desirable to maximize both rate and distance, but it can be shown a high rate precludes high distance and a high distance precludes high rate. The possible rate vs distance tradeoff is of central importance in coding theory, and many questions related to it remain unresolved, especially in the case of binary codes with $q=2$. An infinite family of codes with growing blocklength is called \emph{good} if both rate and distance are bounded below by an absolute constant.

For communication, $q^{R\cdot n}$ messages are placed into a bijection with the code. Then, if Alice wants to send Bob a message, she sends the corresponding codeword. Because the codewords are all separated from each other, even a somewhat corrupted codeword may be identified with the original codeword, and therefore the intended message. More formally, if the fraction of positions in $[n]$ where corruptions occur is at most $\frac{\delta}{2}$, then by the triangle inequality, a corrupted codeword may be uniquely mapped back to the uncorrupted codeword, and this is called unique decoding.

One limitation of unique decoding is that the fraction of errors corrected can never exceed $1/2$. List decoding is a relaxation of unique decoding where a corrupted codeword may be mapped to a small (say, polynomial in $n$) list of codewords it could have come from. Unlike unique decoding, list decoding allows correcting a fraction of errors arbitrarily close to 1, and this feature has led to applications in complexity theory and other areas of pseudorandomness. 

Both unique decoding and list decoding have natural algorithmic questions associated to them: does there exist an efficient algorithm that takes as input a corrupted codeword (with a promise on the amount of corruption), and outputs the list of codewords close to it? In the case of unique decoding, when the promised number of errors is at most half the distance, this list must be of size at most 1.
%

If $g \in [q]^n$ is the corrupted codeword received by Bob, we denote by $\calL(g,\eta)$ the list of codewords that are at a distance of at most $\eta$ from $g$. Rephrasing the above, the combinatorial bounds ask for an upper bound on the size of $\calL(g,\eta)$ and the algorithmic challenge is to output the list $\calL(g,\eta)$.
\vspace{\pspacing}
\section{Background and Motivation}
%

There has been tremendous progress in the area of list decoding over the last 25 years, starting from the works of Guruswami and Sudan \cite{S97, GS99} who showed that widely used Reed-Solomon codes can be efficiently list decoded upto the Johnson bound. 

\begin{theorem}[Johnson Bound \cite{Joh62, G01}]\label{thm:johnson}
	Let $\calC$ be a code with distance $\delta$ and alphabet size $q$. Then there is a threshold $\calJ_q(\delta) \defeq (1-1/q) \cdot \inparen{ 1- \sqrt{1- \frac{q}{q-1}\delta}} \in \inparen{\frac{\delta}{2},\delta}$ such that the code is combinatorially list decodable upto radius $\calJ_q(\delta)$. More precisely, for any $g\in [q]^n$ and $\eps > 0$,
	\begin{enumerate}[(i)]
		\item $|\calL(g, \calJ_q(\delta))| \leq (q-1)\cdot n$.
		\item $|\calL(g, \calJ_q(\delta) - \eps)| \leq \calO_{\eps}(1)$.
	\end{enumerate}
\end{theorem}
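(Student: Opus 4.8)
The plan is to use the standard averaging/packing argument based on an inner-product computation over the sphere. First I would fix the received word $g \in [q]^n$ and suppose, for contradiction on size, that $f_1, \dots, f_m$ are distinct codewords each within normalized Hamming distance $\calJ_q(\delta)$ (resp.\ $\calJ_q(\delta)-\eps$) of $g$. The natural move is to pass from the combinatorial setting over $[q]$ to a geometric one: embed each symbol $a \in [q]$ as a unit vector in $\R^{q-1}$ (e.g.\ the vertices of a regular simplex, so that distinct symbols have inner product $-1/(q-1)$ and equal symbols have inner product $1$), and extend coordinatewise to a map $[q]^n \to (\R^{q-1})^n$ with the normalization $\langle \phi(x), \phi(y)\rangle = \tfrac{1}{n}\sum_i \langle \phi(x_i),\phi(y_i)\rangle$. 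Under this embedding, $\langle \phi(x),\phi(y)\rangle = 1 - \tfrac{q}{q-1}\Delta(x,y)$, so a distance bound $\Delta(f_j, g) \le \eta$ translates directly into an inner-product lower bound $\langle \phi(f_j), \phi(g)\rangle \ge 1 - \tfrac{q}{q-1}\eta =: \tau$, and the distance property of the code gives $\langle \phi(f_j),\phi(f_k)\rangle \le 1 - \tfrac{q}{q-1}\delta =: \rho$ for $j \ne k$.

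The key step is then the positivity argument. Consider the vector $v = \sum_{j=1}^m \big(\phi(f_j) - \phi(g)\big)$ — or, a cleaner variant, $v = \sum_j c_j \phi(f_j) - \big(\sum_j c_j\big)\phi(g)$ for suitable nonnegative weights $c_j$ — and expand $0 \le \langle v, v \rangle$. This yields an inequality of the shape
\begin{equation*}
0 \;\le\; m + m(m-1)\rho - 2m \cdot m\,\tau + m^2 \cdot 1 \;=\; m\big(1 + (m-1)\rho - 2m\tau + m\big),
\end{equation*}
wait — more carefully, $\langle v,v\rangle = \sum_{j,k}\langle \phi(f_j),\phi(f_k)\rangle - 2m\sum_j \langle \phi(f_j),\phi(g)\rangle + m^2\langle\phi(g),\phi(g)\rangle \le m + m(m-1)\rho - 2m^2\tau + m^2$. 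Dividing by $m$ and rearranging gives $m\,(1 + \rho - 2\tau) \le 2\tau - 1 - \rho$ whenever the coefficient $1 + \rho - 2\tau$ of $m$ is positive. Setting $\eta = \calJ_q(\delta)$ is precisely the threshold at which $\tau = \tfrac{1+\rho}{2} + \tfrac{1}{2}\sqrt{\text{(something)}}$... — concretely, the definition $\calJ_q(\delta) = (1-1/q)(1-\sqrt{1 - \tfrac{q}{q-1}\delta})$ is chosen so that $\tau^2 = \rho + (1-\rho)\cdot(\text{correction})$ makes the bound degenerate; at $\eta = \calJ_q(\delta) - \eps$ the slack becomes a positive constant depending on $\eps$, forcing $m = \calO_\eps(1)$, which is part (ii). For part (i) (the bound $(q-1)n$ exactly at the Johnson radius), I would instead argue that the vectors $\phi(f_j) - \phi(g)$, after the computation above shows they have pairwise nonpositive inner products once projected appropriately, can number at most $(q-1)n + 1$ by the standard fact that a set of vectors in $\R^{(q-1)n}$ with pairwise nonpositive inner products has size at most $2(q-1)n$, tightened to $(q-1)n$ using that they all lie in a halfspace determined by $\phi(g)$ — alternatively one uses the sharper simplex-code packing bound.

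The main obstacle I anticipate is getting the constants in part (i) exactly right: the crude "nonpositive inner products" bound gives roughly $2(q-1)n$, and shaving it to $(q-1)n$ requires care — either a weighting trick choosing the $c_j$ to kill the $\phi(g)$ contribution, or invoking the relative bound for the specific value $\tau$ at $\eta = \calJ_q(\delta)$ where the pairwise inner products are strictly negative (bounded away from $0$ by a quantity $\sim 1/n$ coming from discreteness of Hamming distance), so that a spectral/rank argument on the Gram matrix caps $m$. Part (ii) is comparatively routine once the embedding is set up. I would also need to verify the claimed containment $\calJ_q(\delta) \in (\delta/2, \delta)$, which is a short calculus exercise on the function $\delta \mapsto (1-1/q)(1-\sqrt{1-\tfrac{q}{q-1}\delta})$ using concavity of the square root.
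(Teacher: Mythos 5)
Your proposal is correct in outline, but it takes a genuinely different route from the paper. The paper derives the Johnson bound from the Covering Lemma (\cref{lem:covering_intro}, proved as \cref{lem:covering}): one minimizes the norm of a point $g_0$ in the convex hull of the embedded codewords subject to $\ip{g_0}{\chi(g)} > \sqrt{\beta}$, concludes $\ip{g_0}{\chi(h)} > \beta$ for every list element $h$, and then invokes \Caratheodory's theorem to bound the support of $g_0$ by $(q-1)n+1$; the code's distance then forces every list element into that support. Your argument is instead the classical packing proof: the same simplex embedding and the same identity $\ip{\phi(x)}{\phi(y)} = 1 - \tfrac{q}{q-1}\Delta(x,y)$, followed by expanding $\norm{v}^2 \ge 0$ for a signed combination of the embedded codewords and the received word, plus (for part (i)) the fact that at most $N$ nonzero vectors in $\R^N$ with pairwise nonpositive inner products can lie in a common open half-space. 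Both are valid. What the paper's route buys is algorithmizability — the norm-minimization step is exactly what is later relaxed to an SoS program — while your route gives part (ii) more directly, with an explicit $n$-independent bound of the form $(1-\rho)/(\tau^2-\rho)$, where the paper instead gestures at approximate \Caratheodory.

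One computational caution. With the unweighted choice $v=\sum_j(\phi(f_j)-\phi(g))$ that you actually expand, the coefficient of $m$ is $1+\rho-2\tau$, so the argument only closes when $\tau > (1+\rho)/2$; that certifies list decodability only up to half the excess distance, not up to the Johnson radius $\tau>\sqrt{\rho}$. You need to scale the $\phi(g)$ term: take $v=\sum_j \phi(f_j) - \tau m\,\phi(g)$, or equivalently work with the shifted vectors $w_j=\phi(f_j)-\tau\phi(g)$, which satisfy $\ip{w_j}{w_k}\le \rho-\tau^2\le 0$ for $j\ne k$ and $\ip{w_j}{\phi(g)}>0$ (strictly, because $\calL$ is defined by strict inequality). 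You allude to this with the weights $c_j$ and the ``projected appropriately'' remark, but as written your weighted variant $\sum_j c_j\phi(f_j) - (\sum_j c_j)\phi(g)$ still does not introduce the needed factor of $\tau$ on $\phi(g)$. With that fix, the half-space fact gives exactly $(q-1)n$ for part (i) and the quadratic inequality gives $\calO_\eps(1)$ for part (ii), so the proof goes through.
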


Reed-Solomon (RS) codes are based on evaluations of bounded degree polynomials over large finite fields. An RS code of rate $R$ has distance $1-R$, and this is optimal by the Singleton bound. The alphabet size of RS codes however is at least $n$, and in particular grows with the blocklength. The Johnson bound corresponding to this distance and large alphabet size is $1-\sqrt{R}$, and so RS codes allow correcting $1-\sqrt{R}$ using rate $R$ codes.

The technique of interpolation-based decoding pioneered by Guruswami and Sudan has since been used for list decoding variety of algebraic codes \cite{GS99, GI03, GR08, GW11, GX13, Kop15, BHKS23}, yielding both combinatorial and algorithmic results. This includes construction of explicit codes of rate $R$ and list decoding radius arbitrarily close to $1-R$, achieving the so-called list decoding capacity over large alphabets. They were also combined with combinatorial operations on codes to get list decodable codes with more desirable properties such as smaller alphabet size.

While algebraic codes such as Reed-Solomon and Reed-Muller continue to be the most well-studied family of codes, constructions based on expander graphs can enjoy some features missing in algebraic constructions. For example, some expander-based codes are LDPC and can be unique decoded in truly linear time \cite{SS96, GI05}. 
A number of recent breakthrough code constructions have also been based on expanders or high-dimensional expanders. This includes Ta-Shma's construction of explicit binary codes with near-optimal rate-distance tradeoff \cite{TS17}, and Dinur et al's contruction of locally testable codes with constant distance, constant rate and constant locality \cite{DELLM22}.

However, when it comes to list decoding algorithms (or even combinatorial list decodability in interesting parameter regimes), we know very few techniques in the absence of algebraic structure in a code. In view of the interesting code families based purely on combinatorial/spectral properties, it is desirable to reproduce the success of algebraic interpolation based list decoding for a broader class of codes.
%
%

%


\section{Overview of Results}

In this section, we give a brief overview of the results appearing in this thesis. We focus on the broader context and techniques used, and leave the technical details to later chapters.
\subsection{A Generic Framework for List Decoding}\label{sec:framework}
As mentioned previously, the rate vs distance tradeoff for binary codes remains poorly understood. The Plotkin bound says that any binary code with distance $1/2$ or more can only have polynomially many codewords, and therefore must have a vanishing rate. For distance $1/2-\eps$, it is known that the best rate achievable is $\calO(\eps^2\log(1/\eps))$ \cite{MRRW77, Alon09}, while the Gilbert-Varshamov bound says that random (linear) codes have a rate of $\Omega(\eps^2)$. In 2017, Ta-Shma \cite{TS17} made a breakthrough by constructing \emph{explicit} codes close to the GV bound with a rate of $\eps^{2+o(1)}$. This code is based on a direct sum operation based on a pseudorandom hypergraph constructed via modifications to expander walks.

While trying to design algorithms for decoding direct sum codes motivated by \cite{TS17}, Alev et al \cite{AJQST20} introduced the technique of entropy maximization for list decoding codes. The idea is to find a \emph{single} object in a one-shot optimization, from which the list of codewords may be extracted. Entropy maximization for this object ensures that it \emph{covers} every codeword in the list. Similar ideas have also been used in the setting of list decodable learning \cite{KarmalkarKK19, RY20}. This part of the algorithm was later used in essentially the same form by Jeronimo et al \cite{JQST20} and Richelson and Roy \cite{RR23}, who gave the first algorithms for unique decoding and list decoding up to Johnson bound respectively for Ta-Shma's code. 

However, this entropy proxy was defined in terms of a convex relaxation for the Ta-Shma code that crucially relied upon the direct sum structure. In \cref{chap:framework}, we show that this idea can be adapted for any code, and in fact, it can be used to give an alternate proof of the combinatorial Johnson bound (\cref{thm:johnson}).

\begin{definition}
Let $\chi: [q] \rightarrow \R^{q-1}$ be a map such that
\[
	\ip{\chi(x)}{\chi(y)} = \begin{cases}
		\frac{-1}{q-1} \qquad \qquad &x\neq y, \\
		1 & x = y
	\end{cases}
\]
Such a map exists because the Gram matrix is positive semi-definite and of rank $q-1$.
\end{definition}

In the binary case, the above definition just maps $\{0,1\}$ to $\{-1,1\}$, and for large alphabets it maps them to the indicator vectors. The next lemma shows that maximizing a simple entropy proxy for a distribution conditioned on the distribution being in a Hamming ball ensures the distribution is supported on every codeword in the said ball.
\begin{lemma}[Covering Lemma]\label{lem:covering_intro}
	Let $\calC$ be a code with alphabet $q$ and distance $\delta$. Let $g \in \F_q^n$ be a corrupted codeword. Let $\calD$ be a distribution over $\calC$ such that
	\begin{enumerate}[(i)]
		\item $\Ex{h \sim \calD}{\Delta(g,h)} < \calJ_q(\delta)$, and
		\item the entropy functional $\Psi(\calD) = - \Ex{i\in [n]}{\norm{\Ex{h\sim \calD}{\chi(h_i)}}^2}$ is maximized among all distributions satisfying (i).
	\end{enumerate}
	Then, $\calL(g,\calJ_q(\delta)) \sub \supp(\calD)$.
\end{lemma}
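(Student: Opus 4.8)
The plan is to argue by contradiction through a local perturbation of the maximizer $\calD$. Suppose some $f \in \calL(g, \calJ_q(\delta))$ satisfies $f \notin \supp(\calD)$. Let $\calD_f$ be the distribution putting all its mass on $f$, and for $\lambda \in [0,1)$ consider $\calD_\lambda \defeq (1-\lambda)\calD + \lambda \calD_f$, which is again a distribution over $\calC$. First I would check $\calD_\lambda$ stays feasible for constraint (i): by linearity, $\Ex{h\sim\calD_\lambda}{\Delta(g,h)} = (1-\lambda)\Ex{h\sim\calD}{\Delta(g,h)} + \lambda\,\Delta(g,f) < (1-\lambda)\calJ_q(\delta) + \lambda\calJ_q(\delta) = \calJ_q(\delta)$, using the \emph{strict} inequality in (i) together with $\Delta(g,f)\le\calJ_q(\delta)$. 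The goal is then to show $\Psi(\calD_\lambda) > \Psi(\calD)$ for all sufficiently small $\lambda > 0$, contradicting the maximality in (ii); since $\lambda\mapsto\Psi(\calD_\lambda)$ is a polynomial in $\lambda$, it suffices to show its derivative at $\lambda=0$ is strictly positive.

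To compute this derivative, write $v_i \defeq \Ex{h\sim\calD}{\chi(h_i)}$, so that $\Ex{h\sim\calD_\lambda}{\chi(h_i)} = (1-\lambda)v_i + \lambda\chi(f_i)$ and, using $\norm{\chi(f_i)}^2 = 1$,
\[
	\norm{(1-\lambda)v_i + \lambda\chi(f_i)}^2 = (1-\lambda)^2\norm{v_i}^2 + 2\lambda(1-\lambda)\ip{v_i}{\chi(f_i)} + \lambda^2.
\]
Differentiating $\Psi(\calD_\lambda) = -\Ex{i\in[n]}{\norm{(1-\lambda)v_i + \lambda\chi(f_i)}^2}$ at $\lambda=0$ gives
\[
	\left.\frac{d}{d\lambda}\Psi(\calD_\lambda)\right|_{\lambda=0} = 2\,\Ex{i\in[n]}{\norm{v_i}^2} - 2\,\Ex{i\in[n]}{\ip{v_i}{\chi(f_i)}},
\]
so everything reduces to proving $\Ex{i}{\norm{v_i}^2} > \Ex{i}{\ip{v_i}{\chi(f_i)}}$.

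Both sides are controlled by the identity $\Ex{i\in[n]}{\ip{\chi(a_i)}{\chi(b_i)}} = 1 - \tfrac{q}{q-1}\Delta(a,b)$, immediate from the definition of $\chi$. Averaging over $h\sim\calD$ gives $\Ex{i}{\ip{v_i}{\chi(f_i)}} = 1 - \tfrac{q}{q-1}\Ex{h\sim\calD}{\Delta(f,h)}$; since $f\in\calC$ and every $h\in\supp(\calD)\subseteq\calC$ is distinct from $f$, the code distance forces $\Delta(f,h)\ge\delta$, hence $\Ex{i}{\ip{v_i}{\chi(f_i)}} \le 1 - \tfrac{q}{q-1}\delta$. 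For the other side, the same identity with $g$ gives $\Ex{i}{\ip{v_i}{\chi(g_i)}} = 1 - \tfrac{q}{q-1}\Ex{h\sim\calD}{\Delta(g,h)} > 1 - \tfrac{q}{q-1}\calJ_q(\delta) \ge 0$ by (i); then per-coordinate Cauchy-Schwarz ($\ip{v_i}{\chi(g_i)}\le\norm{v_i}$) followed by Cauchy-Schwarz over $i\in[n]$ yields $\Ex{i}{\norm{v_i}^2} \ge \left(\Ex{i}{\ip{v_i}{\chi(g_i)}}\right)^2 > \left(1 - \tfrac{q}{q-1}\calJ_q(\delta)\right)^2$.

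The last step — and the only place the precise Johnson radius is used — is the elementary identity $\left(1 - \tfrac{q}{q-1}\calJ_q(\delta)\right)^2 = 1 - \tfrac{q}{q-1}\delta$, which is exactly the defining relation of $\calJ_q(\delta)$ (setting $t \defeq \tfrac{q}{q-1}\calJ_q(\delta) = 1 - \sqrt{1 - \tfrac{q}{q-1}\delta}$ gives $(1-t)^2 = 1 - \tfrac{q}{q-1}\delta$). Chaining the bounds, $\Ex{i}{\norm{v_i}^2} > 1 - \tfrac{q}{q-1}\delta \ge \Ex{i}{\ip{v_i}{\chi(f_i)}}$, so the derivative above is strictly positive and $\Psi(\calD_\lambda) > \Psi(\calD)$ for small $\lambda>0$, the desired contradiction. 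I expect the main thing to watch is that Cauchy-Schwarz loses exactly a square, and that the square of the Johnson-radius slack matches the distance slack exactly; this matching is precisely what makes the argument yield the Johnson bound rather than something weaker, so I would double-check the direction of every inequality and the source of each strictness (the one in (i) is what drives the whole thing).
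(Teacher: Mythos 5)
Your proof is correct, and it is essentially the paper's argument: the paper proves this via the abstract covering lemma (\cref{lem:covering}), which minimizes $\norm{g_0}^2$ over the convex hull (equivalently, maximizes $\Psi$) and derives a contradiction by mixing the optimizer with a violating $f$, using exactly your two ingredients — Cauchy--Schwarz against $\chi(g)$ to get $\norm{g_0} > \sqrt{\beta}$, and the code distance to bound the cross term by $\beta$. The only cosmetic difference is that you fold the "distance forces $h_0=h$" step of \cref{thm:johnson_bound} into the perturbation itself and take a derivative at $\lambda=0$ rather than optimizing the mixing weight; the inequalities and their sources of strictness match the paper's.
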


\begin{corollary}[Johnson bound]
	$|\calL(g,\calJ_q(\delta))| \leq (q-1)n+1$.
\end{corollary}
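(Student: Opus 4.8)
The plan is to deduce the corollary from the Covering Lemma (\cref{lem:covering_intro}) by showing that the entropy-maximizing distribution there can always be chosen to be supported on at most $(q-1)n+1$ codewords; since the Covering Lemma guarantees that $\calL(g,\calJ_q(\delta))$ sits inside that support, this is exactly the claimed bound. The single observation that makes this work is that \emph{both} the functional $\Psi$ and constraint (i) of \cref{lem:covering_intro} depend on a distribution $\calD$ only through its $n$ coordinate marginals $\mu_i \defeq \Ex{h\sim\calD}{\chi(h_i)} \in \R^{q-1}$. Indeed $\Psi(\calD) = -\Ex{i\in[n]}{\norm{\mu_i}^2}$ by definition, while $\Ex{h\sim\calD}{\Delta(g,h)} = \Ex{i\in[n]}{1 - q_i}$, where $q_i$ is the probability that a sample from $\calD$ agrees with $g$ in coordinate $i$, and $q_i$ is an affine function of $\mu_i$, namely $\ip{\chi(g_i)}{\mu_i} = \tfrac{q}{q-1}\,q_i - \tfrac{1}{q-1}$. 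Hence both quantities factor through the linear map $M\colon \calD \mapsto (\mu_1,\dots,\mu_n) \in \R^{(q-1)n}$.

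Concretely, I would fix $g$, let $\calD^{\star}$ be any distribution attaining the maximum in \cref{lem:covering_intro}, and consider $M(\calD^{\star})$. Because $M$ is linear, the set of attainable values of $M$ is the polytope $\conv{\set{M(\delta_f) : f \in \calC}} \subseteq \R^{(q-1)n}$, where $\delta_f$ denotes the point mass at the codeword $f$. Thus $M(\calD^{\star})$ lies in the convex hull of the points $\set{M(\delta_f)}$, and by \Caratheodory's theorem it is a convex combination $M(\calD^{\star}) = \sum_{j=1}^{k}\lambda_j\, M(\delta_{f_j})$ of at most $k \le (q-1)n+1$ of them. Setting $\calD' \defeq \sum_{j=1}^{k}\lambda_j\,\delta_{f_j}$ and using linearity of $M$ gives $M(\calD') = M(\calD^{\star})$; since $\Psi$ and constraint (i) see only $M$, the distribution $\calD'$ also satisfies (i) and attains the same maximal value of $\Psi$, so \cref{lem:covering_intro} applies to $\calD'$ and yields $\calL(g,\calJ_q(\delta)) \subseteq \supp(\calD') = \set{f_1,\dots,f_k}$. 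Therefore $\abs{\calL(g,\calJ_q(\delta))} \le k \le (q-1)n+1$.

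The step I expect to be the main obstacle is not this dimension count but the \emph{existence} of the maximizer $\calD^{\star}$: the feasible region in (i), $\set{\calD : \Ex{h\sim\calD}{\Delta(g,h)} < \calJ_q(\delta)}$, is open, so the supremum of $\Psi$ over it need not a priori be attained. I would handle this by splitting into cases. If some codeword lies at distance strictly less than $\calJ_q(\delta)$ from $g$, the feasible region is nonempty, and one argues that the supremum of the bounded, concave functional $\Psi$ is achieved --- either directly, or by maximizing instead over the closed slice $\set{\calD : \Ex{h\sim\calD}{\Delta(g,h)} \le \calJ_q(\delta)-\eps}$ for small $\eps>0$ and checking that the proof of \cref{lem:covering_intro} is robust to this shrinkage. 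The remaining degenerate case, in which every codeword within radius $\calJ_q(\delta)$ of $g$ is at distance \emph{exactly} $\calJ_q(\delta)$, is disposed of by a limiting argument (apply the bound with a marginally larger radius, or appeal to part~(i) of \cref{thm:johnson}). Apart from this analytic point, all the content of the corollary is the passage to the marginal space $\R^{(q-1)n}$ followed by \Caratheodory, which is precisely what produces the factor $(q-1)n+1$.
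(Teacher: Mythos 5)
Your proof is correct and follows essentially the same route as the paper: the paper also observes that $\Psi$ (and implicitly the distance constraint) depends only on the $(q-1)n$ real-valued marginal functionals $\Ex{h\sim\calD}{\chi(h_i)}$ and invokes Tchakaloff's theorem (a mild extension of \Caratheodory) to sparsify the support to $(q-1)n+1$ codewords, which is exactly the \Caratheodory-on-the-marginal-polytope argument you spell out. Your additional care about whether the supremum of $\Psi$ over the open feasible set is actually attained is a legitimate point the paper glosses over, but it does not change the approach.
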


\begin{proof}
	The functional $\Psi(\calD)$ only depends on the functions $\norm{\Ex{h\sim \calD}{\chi(h_i)}}^2$ for $i\in [n]$, which in turn depend on a total of $(q-1)n$ real-valued functionals on $\calD$. Then, by Tchakaloff's theorem, which is a simple extension of \Caratheodory's theorem \cite{COA20}, we can find a distribution $\calD'$ with the $\Psi(\calD') = \Psi(\calD)$ but with $|\supp(\calD')| \leq (q-1)n+1$. Since it is also true that $\calL(g,\calJ_q(\delta)) \sub \supp(\calD')$, it follows that 
	\[
		|\calL(g,\calJ_q(\delta)| \leq |\supp(\calD')| \leq (q-1)n+1.
	\]
\end{proof}
Further, we show that the techniques used by \cite{AJQST20, JQST20, RR23} for list decoding Ta-Shma codes can be extended to a wide class of codes whose distance proof is based on \emph{local} properties combined with spectral expansion for the local-to-global jump. The idea is to algorithmize the above proof of Johnson bound by replacing the set of distributions of codewords by a relaxation that is easier to optimize over via convex optimization techniques. Next, we argue that this relaxation is still tight enough to prove distances similar to true codewords. Combining the two parts yields a list decoding algorithm.

This leads to the first polynomial time list decoding algorithm for Tanner codes of Sipser and Spielman \cite{SS96}, which are an important class of LDPC codes. It also leads to an algorithm that reduces list decoding a distance amplification scheme of Alon, Edmonds and Luby \cite{AEL95} to unique decoding of the base code. This distance amplification scheme has found numerous applications \cite{GI05, KMRZS17, GKORZS18, BGG22}, but all list decoding algorithms for it relied upon efficient list recovery of the base code. List recovery is a generalization of list decoding, and therefore our reduction to just unique decoding of the base code significantly lowers the requirements posed. All of these algorithms are based on the above mentioned proof of the Johnson bound, and so work upto the corresponding Johnson bounds for these codes.

The convex relaxations used above are based on the Sum-of-Squares (SoS) hierarchy, which is a series of increasingly tight relaxations parameterized by a degree parameter $t$. Recall that a true distribution can be represented in terms of its (exponentially many) marginal distributions. A degree-$t$ SoS relaxation for these distributions maintains marginals for only sets of size at most $t$ that are consistent with each other. Another property of true distributions is that its moment matrix is positive semidefinite, and the SoS relaxation imposes this constraint on the $n^{\calO(t)}$ marginals as well. Another viewpoint of the SoS hierarchy is that it can be used to efficiently discover certificates for statements that can be proved using non-negativity of degree-$t$ polynomials. The main advantage of using SoS is that it supports optimization in time $n^{\calO(t)}$ using semidefinite programming.

The algorithms for both Tanner codes and AEL amplification use an SoS relaxation for a large but constant $t$, combined with a random conditioning based rounding from \cite{BRS11, AJT19}.

\subsubsection*{Comparison to Existing Techniques}
Most existing list decoders in literature are based on the framework of Guruswami and Sudan \cite{S97, GS99, PV05} for algebraic codes and its extensions. These interpolation-based decoders proceed by learning a structured object (bivariate or multivariate polynomials of low-degree) using a completely unstructured received word $g$. Then, the algorithm forgets about $g$, and the entire list is extracted out of the learned structured object.

We believe that the list decoding machinery developed in \cite{AJQST20, JQST20, RR23} and this thesis can be seen as a broad framework that implements the above scheme for general codes, regardless of any algebraic structure in the code. The structured object is now a distribution over codewords (or a regularity lemma decomposition, as we will see in \cref{sec:regularity}), which contains the entire list.

Therefore, this framework gives a natural starting point for designing list decoding algorithms for any code. Our results can be seen as a successful implementation of this scheme for multiple classes of codes that use spectral expanders, but the scheme itself is quite general.

\vspace{\pspacing}
\subsubsection*{Other Applications of the Framework}

We briefly mention some extensions of the entropy maximization machinery for decoding.

\begin{enumerate}
\item By suitably modifying the entropy function, all these list decoding algorithms can be adapted for list-recovery as well as weighted list-recovery, which are generalizations of list decoding useful when performing further operations such as concatenation on a code. It can also handle the presence of both errors and erasures while decoding.
\item If the code obtained by AEL were to be finally concatenated again to obtain a binary code instead of a large alphabet code, then a simple modification in the entropy function to write it in terms of the final binary code allows us to decode up to the Johnson radius corresponding to the product distance. This is different from using the list-recovery properties of the outer code for soft decoding, which incurs some loss \cite{GuruswamiS02}.
\item This machinery can also be adapted to list decode the recent construction of locally testable codes with constant rate, constant distance and constant locality by \cite{DELLM22}. That code is similar to the Tanner construction, and a straightforward adaptation of our Tanner code decoder extends to those $c^3$-LTCs.
\item While we focus on the list decoding results for these codes, the notion of a distance proof for pseudocodewords can also be used for unique-decoding up to half the distance. In fact, one may skip the entropy maximization step in this case, and simply ask for the closest pseudocodeword to a received word. This may be useful in boosting unique-decoding from a very small fraction of errors, to half the distance.
\end{enumerate}
\vspace{\pspacing}

\subsection{List Decodable Quantum LDPC Codes}\label{sec:quantum}
We saw how SoS based algorithms can successfully list decode an important class of LDPC codes constructed using bipartite spectral expanders upto the Johnson bound, namely the Tanner codes of Sipser and Spielman \cite{SS96}. In fact, these graph based constructions were the first explicit construction of a classical LDPC code with constant distance and constant rate, more than 30 years after Gallager proved their existence \cite{Gallager62}. LDPC codes are known for their fast, simple and parallelizable unique decoders, and have therefore been widely used in practice.

Quantum codes are a generalization of classical linear codes set to play an important role in the realization of Quantum Computing. We will focus on the subclass of Calderbank-Shor-Steane (CSS) codes which are closest to classical linear codes. Due to the features above, as well as other complications that arise in quantum computing architectures, quantum codes that are LDPC (QLDPC) are particularly important for applications. In fact, it is desirable that not only is every parity check sparse, but that it is also \emph{spatially} not spread out.

\begin{definition}[CSS Codes]
A CSS Code over $\F_2$ can be described as a pair of linear codes $\calC = (\calC_X,\calC_Z)$ with a shared blocklength $n$ and the property that $\calC_X^{\perp} \sub \calC_Z$ (which implies $\calC_Z^{\perp} \sub \calC_X$). The distance of $\calC$ is $\min( d_X, d_Z)$ where 
\begin{align*}
	d_X = \min_{f \in \calC_X \backslash \calC_Z^{\perp}} \frac{|f|}{n} \qquad \text{and} \qquad d_Z = \min_{f \in \calC_Z \backslash \calC_X^{\perp}} \frac{|f|}{n}
\end{align*}

and the rate is $r_X+r_Z - 1$ where
\begin{align*}
	r_X = \frac{\log |\calC_X|}{n} \qquad \text{and} \qquad 	r_Z = \frac{\log |\calC_Z|}{n}
\end{align*}

Further, $\calC$ is LDPC if both $\calC_X$ and $\calC_Z$ have parity check matrices with sparse rows.
\end{definition}

One of the central open problems in quantum error correction was to design QLDPC codes with constant distance and constant rate, until it was recently solved by Panteleev and Kalachev \cite{PK22} using Left-Right Square Cayley complexes which can be seen as higher dimensional analogs of the Ramanujan Cayley graphs. In fact, unlike in the classical case, we didn't even know of the existence of such quantum codes since we do not know of any randomized constructions either. 
Building on the work of Panteleev and Kalachev \cite{PK22}, Leverrier and \Zemor \cite{LZ22} constructed related quantum LDPC codes and paired them with an iterative decoder that can decode errors upto a constant fraction of the distance in linear time and is also parallelizable.

The AEL distance amplification was adapted to the quantum setting by Bergamaschi, Golowich and Gunn \cite{BGG22}. The quantum Singleton bound says that the distance of a rate $R$ quantum code is at most $\frac{1-R}{2}$, and just like in the classical case, this bound is achieved by quantum Reed-Solomon codes (which are not LDPC though) over large alphabet.
\cite{BGG22} applied the quantum AEL amplification to the good QLDPC codes of \cite{PK22, LZ22} to obtain QLDPC codes approaching the quantum Singleton bound.

These codes are shown to be unique decodable in \cite{BGG22} up to radius $\frac{1-R}{4}$, and one would like our techniques from \cref{sec:framework} to be able to list decode beyond this unique decoding radius. However, a straightforward adaptation of our covering lemma and distance proof based techniques quickly breaks down. This is because $\calC_Z^{\perp} \sub \calC_X$ necessarily contains sparse vectors, and notion of distance for CSS codes is only defined up to shifts within cosets of $\calC_Z^{\perp}$. 

With the exponential size of $\calC_Z^{\perp}$, this algebraic modification becomes challenging to implement in our largely analytic framework. The problem of distinguishing $\calC_Z^{\perp}$ from $\calC_X \backslash \calC_Z^{\perp}$ for certain quantum LDPC codes has even been the source of strong lower bounds for the Sum-of-Squares hierarchy \cite{HL22}.
%

WLOG, say we wish to prove a lower bound on $d_X$, that is, if $x,y\in \calC_X$ are such that $y \not\in x+\calC_Z^{\perp}$, then $d_X = \Omega(1)$. A typical proof starts with $x$ and $y$, and considers $x'$ that is defined to be the element of $x+\calC_Z^{\perp}$ closest to $y$. Obviously, $\Delta(x,y) \geq \Delta(x',y)$. The optimality of $x'$ with respect to distance from $y$ gives it some additional structure, and with this additional structure assumed, one proves a classical-like distance $\Delta(x',y)=0$ or $\Delta(x',y)=\Omega(1)$. Coming to algorithms, this last classical-like distance is often captured by the SoS machinery from previous section, but the trivial statement $\Delta(x,y) \geq \Delta(x',y)$ need not be a low-degree proof!

Nevertheless, we show in \cite{MST23} that for the case of quantum AEL amplification, one can find an $x''$ for which $\Delta(x,y)\geq \Delta(x'',y)$ and $\Delta(x'',y) = \Omega(1)$ are both low-degree proofs. However, this $x''$ is chosen explicitly instead of as an optimal point for some optimization problem, and while it lowers the distance to $y$, it need not be the closest to $y$ among $x+\calC_Z^{\perp}$. Therefore, we choose to call it a partial minimizer. Our algorithm only uses a function $\Gamma(x'')$ in the algorithm which is designed so that $\Gamma(x) = \Gamma(x'')$, and in particular, the algorithm not need to compute $x''$ corresponding to some pseudocodeword $x$. This is important since there are many different $x''$ for the same $x$ based on different values of $y$.

Moreover, we use the fact that the space $\calC_Z^{\perp}$ can be decomposed into a space generated by certain new AEL generators and another space generated by base code generators. 
Since these AEL generators are highly structured and explicit, they can be argued about with Sum-of-Squares of large enough degree without needing the algebraic structure. For the base code generators, we show that a unique decoder already encapsulates the ability to deal with the low-weight elements of $\calC_Z^{\perp}$.

\subsection{Near-linear time decoding via Regularity}\label{sec:regularity}
All the algorithms discussed so far have been based on the Sum-of-Squares hierarchy combined with random conditioning based rounding, leading to poor runtimes (even though polynomial-time) with large exponents of $n$. In this section, we show that ideas inspired from the covering lemma approach have the potential to lead to much faster algorithms. In particular, we will present a near-linear time algorithm from \cite{JST21} for Ta-Shma's codes that can unique decode up to half the distance, as well as list decode up to a radius that approaches $1/2$ as distance approaches $1/2$. We will call this \emph{gentle} list decoding to differentiate it from list decoding along the Johnson bound. 

Our starting point is a combinatorial regularity lemma that proves a weak form of Johnson bound, in the spirit of weak regularity lemma of Frieze and Kannan \cite{FriezeK99}.

\begin{lemma}[Weak Regularity Lemma]\label{lem:regularity}
	Let $g\in \R^n$ be a unit norm vector, and $\calF \sub \R^n$ be a family of unit norm vectors to be thought of as \emph{distinguishers}. 
	
	Then, for any $\eta > 0$, there exists an $h = \sum_{i=1}^k c_i f_i$, with $f_i \in \calF$, $c_i \in \R$, and $k\leq 1/\eta^2$, such that $g$ and $h$ are $\eta$-indistinguishable by $\calF$. That is, for any $f\in \calF$,
	\[
		\ip{g-h}{f} = \ip{g-\sum_{i=1}^k c_i f_i}{f} \leq \eta
	\]
	Moreover, we may ensure that $c_i >0$ and $\sum_{i} c_i \leq 1/\eta$.
\end{lemma}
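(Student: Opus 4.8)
The plan is to construct $h$ greedily by an energy-increment argument, exactly in the style of the Frieze--Kannan weak regularity lemma (a perceptron-style update). I would start with $h_0 = 0$ and iterate as follows: given $h_j$, check whether some $f \in \calF$ has $\ip{g - h_j}{f} > \eta$. If no such $f$ exists, stop and output $h := h_j$; by the stopping condition this $h$ satisfies $\ip{g-h}{f} \le \eta$ for all $f \in \calF$, as required. If such an $f$ exists, pick one, call it $f_{j+1}$, set $c_{j+1} := \ip{g - h_j}{f_{j+1}}$, and let $h_{j+1} := h_j + c_{j+1}\, f_{j+1}$. Since we only update when $\ip{g-h_j}{f_{j+1}} > \eta$, each coefficient satisfies $c_{j+1} > \eta > 0$, which immediately gives the positivity requirement $c_i > 0$.

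The key step is the potential analysis bounding the number of iterations. Let $\Phi_j := \norm{g - h_j}^2$, so $\Phi_0 = \norm{g}^2 = 1$ and $\Phi_j \ge 0$ always. Because $f_{j+1}$ is a unit vector and $c_{j+1}$ is precisely the minimizer of $t \mapsto \norm{g - h_j - t f_{j+1}}^2$, expanding the square yields
\[
  \Phi_{j+1} \;=\; \Phi_j - 2 c_{j+1}\ip{g-h_j}{f_{j+1}} + c_{j+1}^2 \;=\; \Phi_j - c_{j+1}^2 \;<\; \Phi_j - \eta^2 .
\]
Since $\Phi$ stays nonnegative and drops by more than $\eta^2$ at every step, the iteration must terminate after some $k$ steps with $k\eta^2 < \Phi_0 = 1$, hence $k \le 1/\eta^2$. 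This simultaneously shows the process is well-defined (it cannot run forever) and gives the claimed bound on the number of terms in $h = \sum_{i=1}^k c_i f_i$.

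For the last assertion, $\sum_i c_i \le 1/\eta$, I would telescope: summing $c_{i}^2 = \Phi_{i-1} - \Phi_{i}$ over $i = 1, \dots, k$ gives $\sum_{i=1}^k c_i^2 = \Phi_0 - \Phi_k \le 1$, and then Cauchy--Schwarz together with $k \le 1/\eta^2$ gives $\sum_{i=1}^k c_i \le \sqrt{k}\,\bigl(\sum_{i=1}^k c_i^2\bigr)^{1/2} \le (1/\eta)\cdot 1 = 1/\eta$.

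I do not expect a genuine obstacle; the only point that warrants care is that $\calF$ may be infinite, so the iteration should be phrased as ``if some violating $f$ exists, choose one'' rather than ``choose a maximizing $f$'' --- no compactness or structure on $\calF$ beyond unit norm is needed, and the potential argument is what makes the greedy choice of step size $c_{j+1} = \ip{g-h_j}{f_{j+1}}$ pay off. (If one additionally wanted two-sided indistinguishability $\smallabs{\ip{g-h}{f}} \le \eta$, the same argument run on $\calF \cup (-\calF)$ works, at the cost of dropping the sign condition on the $c_i$.)
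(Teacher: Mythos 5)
Your proof is correct and follows essentially the same energy-increment argument the paper sketches (gradient descent on $\sup_{f\in\calF}\ip{g-h}{f}$, with the potential $\norm{g-h}^2$ dropping by at least $\eta^2$ per update). The only cosmetic difference is that you take the line-search step $c_{j+1}=\ip{g-h_j}{f_{j+1}}$ where the paper uses the fixed step $\eta$; both yield $k\leq 1/\eta^2$, $c_i>0$, and $\sum_i c_i\leq 1/\eta$.
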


The lemma above works with any inner product, and the norms used are the ones induced by the corresponding inner product. This lemma appeared first in the work of Frieze and Kannan \cite{FriezeK99} on approximation schemes for dense Max-Cut, where $g$ is the adjacency matrix of the input graph and the family $\calF$ are rank-1 $\pm 1$ matrices. It has since been generalized significantly \cite{LS07, Gowers10}, and new proofs have been found based on running gradient descent for minimizing the convex function $\Gamma(h) = \sup_{f\in \calF} \ip{g-h}{f}$. The optimum of this function is at most 0 since $h=g$ is feasible, and gradient descent will lead us to a point $h$ with $\Gamma(h) \leq \eta$ in $1/\eta^2$ steps. The gradient at every step is some $f\in \calF$, so that the update looks like $h \leftarrow h+ \eta \cdot f$, which gives the desired structure for $h$.

The regularity lemma is also useful for illustrating the structure vs pseudorandomness paradigm. We are decomposing any $g = h + (g-h)$, where $h$ is to be thought of as the structured part, and $g-h$ is pseudorandom with respect to $\calF$. Moreover, the bigger the family $\calF$, the stronger the pseudorandomness property, but weaker the structure of $h$. We will exploit this handle on structure vs pseudorandomness to design our algorithms. But first, let us prove a slightly weak form of Johnson bound, which is particularly useful in the large distance regime. For simplicity, we focus on the binary case, since our main application is to decoding of Ta-Shma's codes.

\begin{corollary}[Weak Johnson bound]\label{cor:weak_johnson}
	Let $\calC$ be an binary code with distance greater than $\frac{1-\eps}{2}$. Then for any received word $g$, $\abs{\calL(g, \frac{1-2\sqrt{\eps}}{2})} \leq \frac{1}{\eps}$.
\end{corollary}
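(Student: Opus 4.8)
The plan is to apply the weak regularity lemma (\cref{lem:regularity}) in the Boolean setting, taking the target vector to be the received word $g$ and the distinguisher family $\calF$ to be the set of all codewords, each embedded into $\R^n$ through the $\pm 1$ map $\chi$ and equipped with the normalized inner product $\ip{u}{v} = \frac1n\sum_{i\in[n]} u_i v_i$. With this normalization every codeword as well as $g$ is a unit vector, and for $u,v\in\{-1,1\}^n$ one has $\ip{u}{v} = 1 - 2\Delta(u,v)$. Hence the distance hypothesis $\delta > \tfrac{1-\eps}{2}$ translates to $\ip{f}{f'} < \eps$ for all distinct codewords $f,f'$, while membership $f \in \calL(g,\tfrac{1-2\sqrt\eps}{2})$ is equivalent to $\ip{g}{f} \ge 2\sqrt\eps$.

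Next I would invoke \cref{lem:regularity} with indistinguishability parameter $\eta \defeq \sqrt\eps$. This produces $h = \sum_{i=1}^k c_i f_i$ with codewords $f_i$, coefficients $c_i > 0$ satisfying $\sum_i c_i \le 1/\sqrt\eps$, and $k \le 1/\eps$, such that $\ip{g-h}{f} \le \sqrt\eps$ for every codeword $f$. The claim to prove is that every codeword in $\calL(g,\tfrac{1-2\sqrt\eps}{2})$ must actually occur among $f_1,\dots,f_k$; since there are at most $k \le 1/\eps$ of those, the corollary follows. If $k=0$ the lemma already forces $\ip{g}{f}\le\sqrt\eps<2\sqrt\eps$ for every codeword, so the list is empty. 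For $k\ge 1$, suppose some $f^\star$ in the list equals none of the $f_i$. From indistinguishability, $\ip{h}{f^\star} \ge \ip{g}{f^\star} - \sqrt\eps \ge 2\sqrt\eps - \sqrt\eps = \sqrt\eps$; but since each $c_i>0$ and $\ip{f_i}{f^\star} < \eps$ for every $i$ (the $f_i$ being codewords distinct from $f^\star$), we get $\ip{h}{f^\star} = \sum_i c_i \ip{f_i}{f^\star} < \eps\sum_i c_i \le \eps/\sqrt\eps = \sqrt\eps$, a contradiction.

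Because the argument is this short, no step is genuinely hard; the only delicate point is the calibration $\eta = \sqrt\eps$, which is exactly the value at which the two opposing estimates on $\ip{h}{f^\star}$ — the lower bound $2\sqrt\eps - \eta$ coming from indistinguishability, and the upper bound $\eps\sum_i c_i \le \eps/\eta$ coming from near-orthogonality of codewords — coincide, so that it is the strictness of the near-orthogonality inequality (guaranteed by $c_i>0$ and $f_i\ne f^\star$) that closes the gap, while the bound $k \le 1/\eta^2$ simultaneously pins the list size at $1/\eps$. I would also record the one-line alternative that bypasses \cref{lem:regularity} entirely: writing $w = \sum_{f^\star\in\calL(g,\,\cdot\,)} f^\star$, Cauchy--Schwarz gives $4\eps\,\abs{\calL}^2 \le \ip{g}{w}^2 \le \norm{w}^2 < \abs{\calL} + \eps\abs{\calL}^2$, whence $\abs{\calL} < \tfrac{1}{3\eps} < \tfrac1\eps$; but the regularity-based proof is the one that foreshadows the algorithmic use of this $g = h + (g-h)$ decomposition later in the chapter.
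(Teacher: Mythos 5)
Your proof is correct and is essentially the paper's own argument: same $\pm 1$ embedding, same choice of distinguisher family and of $\eta=\sqrt{\eps}$, and the same contradiction between the indistinguishability lower bound on $\ip{h}{f^\star}$ and the near-orthogonality upper bound $\eps\sum_i c_i\le\sqrt{\eps}$ (the paper merely phrases the latter by normalizing $h$ by $\sum_i c_i$ and comparing to $\eps$). The Cauchy--Schwarz aside is a correct bonus but not part of the paper's route.
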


\begin{proof}
	Embed each $\F_2^n$ into $\R^n$ by mapping $0$ to $1$ and $1$ to $-1$, as used before in the proof of Johnson bound. The family $\calF$ is taken to be all codewords of $\calC$, and it holds from the distance of $\calC$ that that for any $f, f'\in \calF$, $\ip{f}{f'} \leq \eps$.
	
	Applying the regularity \cref{lem:regularity} with $\eta$, we obtain an $h = \sum_{i=1}^k c_i f_i$, where $f_i$ are codewords, such that for all $f\in \calF$,
	\[
		\ip{g-h}{f} \leq \eta \implies \ip{h}{f} \geq \ip{g}{f} - \eta
	\]
	Let $f^* \in \calL(g,\frac{1-2\sqrt{\eps}}{2})$, so that $\ip{g}{f^*} > 2\sqrt{\eps}$. Choosing $\eta = \sqrt{\eps}$, this means
	\begin{align*}
		\ip{h}{f^*} &\geq \ip{g}{f^*} - \eta > 2\sqrt{\eps} - \sqrt{\eps} = \sqrt{\eps} \\
		\ip{\sum_i c_i f_i}{f^*} &> \sqrt{\eps} \\
		\ip{\frac{\sum_i c_i f_i}{\sum_i c_i}}{f^*} &> \frac{1}{\sum_i c_i} \cdot \sqrt{\eps} ~\geq \eta \cdot \sqrt{\eps} = \eps
	\end{align*}
	Using the distance of $\calC$, this implies that $f^*$ must be one of $f_i$ that appear in $h = \sum_i c_i f_i$. Since there are only $1/\eta^2 = 1/\eps$ many of them, this gives an upper bound on the list size.
\end{proof}

\subsubsection*{Efficient Regularity Lemma}
To use the regularity lemma algorithmically, one needs to implement the gradient oracle, which can be hard in general. However, as suggested before, the regularity lemma can also be used by enlarging the set of distinguishers to be some superset of codewords. In the extreme case, when the set of distinguishers is every binary string, implementing the regularity lemma algorithmically becomes easy, but the structure of $h$ is not useful enough.

Therefore, we wish to find a relaxation of the set of codewords (say, pseudocodewords) which supports efficient optimization of linear forms, but this relaxation still needs to be tight enough to satisfy the distance property. Looking back at the proof of \cref{cor:weak_johnson}, regardless of whether the set of pseudocodewords have distance, one may always conclude $\ip{f^*}{f_i} > \eps$ for some $i \in [k]$. This means that we only want the set of pseudocodewords to support an efficient decoder that works upto the distance $\frac{1-\eps}{2}$. When the pseudocodeword $f_i$ is a true codeword, then this is trivial as the only codeword in a ball around $f_i$ is $f_i$ itself.

Roughly speaking, true codewords in Ta-Shma's code correspond to $k^{th}$ tensor power of base codewords, restricted to a pseudorandom $k$-regular (ordered) hypergraph instead of all $k$-tuples. For the case of Ta-Shma's code, we show that the set of rank-1 tensor products of $t$ many $\pm 1$ strings 
satisfies both of the conditions above simultaneously. The strings used in this tensor product no longer need to be confined to the base code. 

This apriori requires solving $k$-wise cut-norm problem, which is hard in general. However, we show that for expander walks, a combination of 2-wise cut norm \cite{AN04} and an iterative dense model theorem \cite{ReingoldTTV08} allows us to solve the linear form optimization problem to implement the gradient oracle. Moreover, the SDP-based algorithm of \cite{AN04} can be made to work in near-linear time based on the techniques of \cite{LeeP20}. The decoder for these rank-1 tensor based pseudocodewords is then simply to decode in the base code from each string that is a factor in the tensor product.

The algorithm crucially uses the pseudorandom hypergraph to be able to use 2-wise cut-norm iteratively. In the process, it incurs an error of $\approx \eps^{1/k}$ compared to the error of $\calO(\sqrt{\eps})$ for the non-algorithmic regularity lemma where $\eps$ is the bias of the code. This stops us from getting an algorithm that decodes close to the Johnson bound. However, our ideas are sufficient to get a unique decoder and a gentle list decoder for Ta-Shma's code. It is still open to decode Ta-Shma's code to the Johnson bound in near-linear time, and one potential avenue would be to implement this regularity scheme with error $\approx \sqrt{\eps}$.
\begin{theorem}[\cite{JST21}]
	Suppose $\calC$ is a binary $\eps$-balanced code with rate $\eps^{2+o(1)}$ obtained via the direct-sum based distance amplification scheme of Ta-Shma applied to a base code $\calC_0$, where $\calC_0$ can be unique-decoded in time $\calT(n)$. Then there is an algorithm that runs in time $\calO_{\eps}(n \cdot \polylog(n))\cdot \calT(n)$ and unique-decodes $\calC$ upto half its distance.
\end{theorem}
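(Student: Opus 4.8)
The plan is to instantiate the efficient weak regularity framework of \cref{sec:regularity} with the relaxation of Ta-Shma's codewords by rank-one tensors. Under the embedding $0\mapsto 1,\,1\mapsto -1$, each codeword of $\calC$ has the form $\dsum_W(\chi(z))$ for a base codeword $z\in\calC_0$, where $W$ is Ta-Shma's pseudorandom collection of length-$t$ walks; as a vector indexed by $w=(w_1,\dots,w_t)\in W$ its entry is $\prod_{j=1}^{t}\chi(z)_{w_j}$, i.e.\ a rank-one tensor restricted to $W$. I would take the distinguisher family $\calF$ to be the set of all normalized rank-one tensors $P(y^{(1)},\dots,y^{(t)})$ whose entry on $w\in W$ is $\prod_j y^{(j)}_{w_j}$, with $y^{(1)},\dots,y^{(t)}$ arbitrary $\pm 1$ strings on the base block length; this contains every codeword of $\calC$ but is a much coarser object. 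First I would run the gradient-descent form of \cref{lem:regularity} on the received word $g$ with a small constant accuracy $\eta\approx\tfrac14$, obtaining after $\calO_\eps(1)$ iterations a decomposition $h=\sum_{i=1}^{m}c_i f_i$ with $f_i\in\calF$, $c_i>0$, $\sum_i c_i\le 1/\eta$, $m\le 1/\eta^2$, and $\ip{g-h}{f}\le\eta$ for all $f\in\calF$. Then, for every $f_i=P(y_i^{(1)},\dots,y_i^{(t)})$ and every coordinate $j$, run the base-code unique decoder on $y_i^{(j)}$, re-encode each base codeword it returns to a codeword of $\calC$, and output the (at most one) re-encoded codeword lying within $\delta/2$ of $g$.

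\textbf{Correctness.} This follows the proof of \cref{cor:weak_johnson}, but in the unique-decoding regime. If $f^*$ is a codeword with $\Delta(g,f^*)<\delta/2$ then $\ip{g}{f^*}>\tfrac{1+\eps}{2}$, so $\eta$-indistinguishability gives $\ip{h}{f^*}\ge\tfrac{1+\eps}{2}-\eta$, and since $\sum_i c_i\le 1/\eta$ some index $i$ satisfies $\ip{f_i}{f^*}\ge\eta\,(\tfrac{1+\eps}{2}-\eta)=:\gamma$, a positive constant. Writing $f^*=\dsum_W(\chi(z))$, the (normalized) inner product $\ip{f_i}{f^*}$ equals the bias over $W$ of the direct sum of $y_i^{(1)}\odot\chi(z),\dots,y_i^{(t)}\odot\chi(z)$; the parity-sampling and splittability properties of Ta-Shma's walk collection then force one of these factors to be unbalanced, i.e.\ $|\ip{y_i^{(j)}}{\chi(z)}|\ge\gamma^{\Theta(1/t)}-\eps^{\Theta(1/t)}$ for some $j$. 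Here $\eps^{1/t}$ is a fixed constant, smaller than $\delta_0$ in Ta-Shma's parameter regime, while $\gamma^{1/t}\to 1$ as $\eps\to 0$ (equivalently $t\to\infty$); hence for $\eps$ below an absolute constant this correlation exceeds $1-\delta_0$, so $y_i^{(j)}$ lies strictly inside the unique-decoding radius of $\calC_0$, the base decoder returns $z$, and $f^*$ appears among the re-encoded candidates. A wrong candidate cannot pass the final $\delta/2$-test since $\calC$ is uniquely decodable there, so the unique surviving codeword is $f^*$. The total number of candidates is $m\cdot t=\calO_\eps(1)$, so the filtering is trivial.

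\textbf{Running time and the main obstacle.} Every step above is near-linear except the gradient oracle, which asks us to (approximately) solve $\max_{f\in\calF}\ip{g-h}{f}$ — a $t$-wise cut-norm problem, intractable for general hypergraphs. This is the technical heart of the argument: for Ta-Shma's walk collection the optimization can be peeled one coordinate at a time, because splitting a walk into its first step versus the rest, and exploiting the spectral expansion across the split, reduces the task to the $2$-wise cut-norm problem — solved by the Grothendieck-type SDP of \cite{AN04} — while an iterative dense-model theorem in the style of \cite{ReingoldTTV08} replaces the fractional object left on the remaining coordinates by a combination of $\pm1$ tensors so the recursion can continue; the SDP of \cite{AN04} can be run in near-linear time following \cite{LeeP20}. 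The delicate point is controlling how the spectral error accumulates over the $\calO_\eps(1)$ levels of the recursion, and this is precisely what degrades the effective accuracy from the $\calO(\sqrt\eps)$ of the combinatorial \cref{lem:regularity} to $\approx\eps^{1/t}$ — negligible for unique decoding, whose radius is only about $\tfrac14$, but the reason the Johnson bound stays out of reach. Accounting: $\calO_\eps(1)$ gradient steps at $n\cdot\polylog(n)$ time each, $\calO_\eps(1)$ calls to the base decoder at $\calT(n)$ time each, and $n\cdot\polylog(n)$ time to re-encode and test the $\calO_\eps(1)$ candidates, for a total of $\calO_\eps(n\cdot\polylog(n))\cdot\calT(n)$. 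The same scheme, tracking a polynomial list of surviving candidates rather than one, gives the gentle list decoder up to a radius approaching $\tfrac12$ as $\eps\to 0$.
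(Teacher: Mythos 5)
Your proposal is correct and follows the same architecture as the paper's proof: compute an efficient weak regularity decomposition of the received word against the family of rank-one $\pm 1$ tensors (\cref{theo:eff_weak_reg}), implement the gradient/correlation oracle by exploiting splittability to reduce to matrix cut-norm computations solved via Alon--Naor with fast SDP solvers, and then extract base codewords with $\calO_{\eps}(1)$ calls to the unique decoder of $\calC_0$ followed by a final $\delta/2$ distance test. You have also correctly identified the technical heart (the near-linear-time $k$-wise oracle) and the reason the accuracy degrades below what the existential lemma would give.

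The one place you genuinely deviate is the extraction step, and your version works. The paper (\cref{thm:direct-sum-decoding}, \cref{algo:direct-sum-decoding}) does not decode the tensor factors directly: it forms the factor $\calB$ generated by all functions appearing in the decomposition, brute-force enumerates discretized $\calB$-measurable functions $\fbar$, randomly rounds each to a $\pm1$ string, and decodes that; correctness rests on a Cauchy--Schwarz argument (\cref{lem:large-norm}) showing $\smallnorm{\Ex{\chi_z|\calB}}^2 \ge (\beta/2)^{2/k} \ge \nfrac12 + 2\eps_0$. You instead feed each factor $y_i^{(j)}$ of the decomposition to the base decoder, using positivity of the $c_i$ and $\sum_i c_i = \calO_\eps(1)$ to find an $i$ with $\ip{f_i}{\chi_z^{\otimes k}} \ge \gamma$, whence the product structure forces some $\abs{\ip{y_i^{(j)}}{\chi_z}} \ge \gamma^{1/k}$, which exceeds $\nfrac12+\eps_0$ once $k$ is large; this is a more direct rounding that avoids the enumeration over $D_\eta^{\abs{\calB}}$ (though a comparable brute force over atoms still lives inside the correlation oracle, so the asymptotic runtime is unchanged). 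Two small imprecisions that do not affect correctness: the efficient lemma gives $\sum_\ell \abs{c_\ell} \le O(k/\eta)$ rather than $1/\eta$, which only shifts constants; and your error term $\eps^{\Theta(1/t)}$ in the correlation lower bound is not quite the right quantity — the relevant loss is the splittable-mixing error $(k-1)\tau$ with $\tau = \exp(-\Theta(\log(1/\eps)^{1/6}))$ (or nothing at all if you measure the inner product under $\mu_1^{\otimes k}$, where it factorizes exactly), while the $\eps^{1/k}$-type loss you cite pertains to the achievable regularity accuracy, not to this step.
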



%
\subsection{Decodability beyond Johnson Bound}\label{sec:beyond_johnson}
One major drawback of the SoS-based list decoding framework is that it is limited to list decoding upto the Johnson bound for a code. This is because it can be seen as an efficient implementation of the Covering Lemma based proof of the combinatorial Johnson bound. Since the Johnson bound is known to be tight in general, we do not expect improvements to the Covering Lemma in general.

However, to harness the full potential of list decodability, and in particular construct codes that achieve the list decoding capacity, it becomes important to list decode beyond the Johnson bound of best possible distance. 
%
%
We know that for any $R\in (0,1)$ and $\eps>0$, there exist codes (in particular, random codes) of rate $R$, distance $1-R$ that can be list decoded upto radius $1-R-\eps$, even though the corresponding Johnson bound is only $1-\sqrt{R}$. An explicit family of such "list decoding capacity achieving" codes were constructed by \cite{GR08}, called Folded Reed-Solomon Codes. The proof of their list decodability used algebraic arguments relying on the algebra inherent in the construction of the code. This has also been the case for later constructions that achieve list decoding capacity \cite{GW11, Kop15, KMRZS17, GX22, KRSW23}. 

We saw AEL amplified codes provide an alternative to RS codes by approaching the same optimal rate-distance tradeoff. A natural question is whether we can modify the near-Singleton bound codes constructed via \cite{AEL95} in a way that retains the optimal rate-distance tradeoff but improves the list decoding radius beyond the Johnson bound. The AEL amplification can be seen as a local-to-global phenomenon for rate-distance tradeoff, since the AEL code inherits its rate-distance tradeoff from a constant sized code placed on vertex neighborhoods of the expander graph. This local to global transfer only happens when supported by a high-rate constant distance code however.

Indeed, when the base code is replaced by a stronger high-rate list recoverable code such as the one from \cite{GR08}, then one can even do a local to global transfer of list decodability properties. Can we transfer the list decodability properties even if the base code continues be an arbitrary high-rate code with constant distance? This is important for applications such as constructing LDPC codes achieving list decoding capacity.

Recently, in an ongoing work, we show that AEL amplification can indeed be used for constructing capacity-achieving codes, based only on expanders and any base code that has high-rate with constant distance.
As expected for the local-to-global phenomenon, the amplification now uses, in addition to a spectral expander, a constant sized inner code that is chosen to be a capacity achieving code with constant list sizes. Examples of such codes include random linear codes \cite{AGL24}, Reed-Solomon codes with random evaluation points \cite{ST20, BGM23, GZ23, AGL24}, folded Reed-Solomon codes \cite{KRSW23, Tamo24}.

This yields explicit codes achieving list decoding capacity with constant alphabet size and constant list size, as well as the first constant alphabet explicit code family near the generalized Singleton bound $\frac{2}{3}\cdot (1-R)$ for lists of size 2. Moreover, by starting with an LDPC and linear-time unique decodable code, our capacity achieving codes can also be made LDPC and linear-time unique decodable, thereby resolving an open problem of \cite{MosheiffRRSW19}.

\begin{theorem}[Ongoing work]
	For every $\eps >0$ and $R\in (0,1)$, there is an explicit family of codes constructed using AEL amplification with rate $R$ and distance at least $1-R-\eps$ which is combinatorially list decodable up to $\frac{2}{3}\cdot (1-R- \eps)$ with list size $2$. The alphabet size of the code is $2^{\poly(1/\eps)}$.
\end{theorem}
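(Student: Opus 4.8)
The plan is to instantiate the AEL distance amplification \cite{AEL95} with three carefully chosen ingredients, and then prove list decodability by a local-to-global argument that combines the expander mixing lemma with the list-$2$ (generalized Singleton) guarantee of the inner code. \textbf{Construction.} Fix $R\in(0,1)$ and $\eps>0$, and set $\eps_0:=c_0\eps$ for a small absolute constant $c_0$ fixed at the end. Take: (i) an explicit \emph{linear} base code $\calC_0$ over an alphabet $\Sigma_0$ with rate $R_{out}=1-\eps_0$ and distance $\delta_{out}=\Omega(\eps_0)$ (any explicit family works, e.g.\ a suitable Tanner code, which additionally keeps the final code LDPC); (ii) an explicit $d$-regular bipartite expander $G$ on $L\cup R$ with $|L|=|R|=n$ and second singular value $\lambda\le\eps_0^{10}d$, which exists with $d=\poly(1/\eps)$; (iii) an explicit constant-size code $\calC_{in}$ of block length $d$, rate $R_{in}=R/R_{out}$, distance at least $1-R_{in}-\eps_0$, over an alphabet $\Sigma_{in}$ of size $2^{\poly(1/\eps)}$, that is list-decodable with list size at most $2$ up to radius $\rho_{in}:=\tfrac23(1-R_{in}-\eps_0)$; such ``list-$2$ capacity-achieving'' codes are explicit (folded Reed--Solomon codes \cite{KRSW23, Tamo24}, Reed--Solomon codes with random evaluation points, or random linear codes). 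Then $\calC$ is the AEL amplification: encode with $\calC_0$, apply $\calC_{in}$ to each of the $n$ base symbols to form an $n\times d$ array over $\Sigma_{in}$, and redistribute this array along the edges of $G$ so that the symbol of the codeword at $v\in R$ is the $d$-tuple of array entries on the edges at $v$. The code $\calC$ then has block length $n$, rate $R_{in}R_{out}=R$, and alphabet $\Sigma_{in}^d$ of size $\big(2^{\poly(1/\eps)}\big)^{\poly(1/\eps)}=2^{\poly(1/\eps)}$, and the standard AEL distance lemma gives distance at least $(1-R_{in}-\eps_0)-\lambda/(d\sqrt{\delta_{out}})\ge 1-R-\eps$ once $c_0$ is small (using $R_{in}=R/(1-\eps_0)\le R+\calO(\eps_0)$).

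\textbf{Set-up for list decoding.} Let $g\in(\Sigma_{in}^d)^n$ be any received word and $\eta:=\tfrac23(1-R-\eps)$. For $u\in L$ let $g'_u\in\Sigma_{in}^d$ be the ``redistribution of $g$ to the left'', i.e.\ the $d$-tuple of entries of $g$ lying on the edges at $u$; for a codeword $h\in\calC$ with base codeword $h_{out}$, let $e_u\in\Sigma_{in}^d$ be the inner encoding of the $u$-th base symbol $h_{out,u}$. A coordinate-by-edge count shows that if $\Delta(g,h)\le\eta$ then $\E_{u\in L}\,\Delta(g'_u,e_u)\le\eta$: every right vertex on which $g$ and $h$ disagree corrupts at most $d$ edges, and every corrupted edge is counted once on the left. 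Since $\eta$ is essentially equal to $\rho_{in}$, a Markov bound on this average is worthless, so we use the expander instead: the corrupted edges are all incident on the right to the $\le\eta n$ bad right vertices, so by the expander mixing lemma the number of $u\in L$ with $\Delta(g'_u,e_u)>\eta+\eps_0$ is at most $(\lambda/d)^2\,\eta\,\eps_0^{-2}\,n\le\eps_0 n$. Since $\eta+\eps_0\le\rho_{in}$ for $c_0$ small, this yields the key pointwise bound: \emph{for every codeword $h$ in the list, all but $\eps_0 n$ of the left vertices $u$ satisfy $\Delta(g'_u,e_u)\le\rho_{in}$.}

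\textbf{Local-to-global, and the main obstacle.} Suppose towards a contradiction that $h^{(1)},h^{(2)},h^{(3)}$ are three distinct codewords of $\calL(g,\eta)$. By the last bound and a union bound, for all but $3\eps_0 n$ ``regular'' left vertices $u$ the three inner encodings $e^{(1)}_u,e^{(2)}_u,e^{(3)}_u$ all lie in the radius-$\rho_{in}$ ball around $g'_u$, which contains at most $2$ codewords of $\calC_{in}$; hence at a regular $u$ at most two of $h^{(1)}_{out,u},h^{(2)}_{out,u},h^{(3)}_{out,u}$ are distinct, and when they are not all equal the ``second'' value is the \emph{unique} other codeword of $\calC_{in}$ in that ball --- a symbol $\mu_u$ that depends only on $u$, $g$ and $h^{(1)}$, not on $j$. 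Setting $x^{(j)}:=h^{(j)}_{out}-h^{(1)}_{out}\in\calC_0$, this says that $x^{(2)}$ and $x^{(3)}$ are \emph{nonzero} codewords of $\calC_0$ all but $3\eps_0 n$ of whose coordinates lie in the two-element set $\{0,\nu_u\}$, where $\nu_u:=\mu_u-h^{(1)}_{out,u}$. It remains to rule out two such codewords, i.e.\ to show that at most one nonzero codeword of $\calC_0$ can be ``$\{0,\nu_u\}$-valued'' on all but $\calO(\eps_0 n)$ coordinates; this contradicts the existence of both $x^{(2)}$ and $x^{(3)}$ and gives $|\calL(g,\eta)|\le 2$. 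This last step is the one I expect to be the main obstacle: for an \emph{arbitrary} high-rate base code the $\{0,\nu_u\}$-valued codewords need not form a singleton, and only the distance $\delta_{out}=\Omega(\eps_0)$ is available --- together with the finer structure of the inner list-$2$ guarantee (which constrains \emph{which} values $\mu_u$ can arise) and the spreading of $G$. The plan is to repackage the per-vertex data as a genuine $(2,L)$-list-recovery instance for $\calC_0$ at error rate $\calO(\eps_0)$ whose input lists always contain $h^{(1)}_{out,u}$, bound the output list $L$ using $\delta_{out}>0$, and drive $L$ down to $1$ by re-anchoring on $h^{(1)}$; if $\calC_0$ is taken to be list-recoverable (as capacity-achieving codes usually are) this becomes routine, and obtaining it for a fully arbitrary high-rate, constant-distance base code --- which is exactly what makes the statement transfer to LDPC and linear-time unique-decodable base codes --- is the technical heart of the argument. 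Choosing $c_0$ small enough that all the displayed inequalities hold at once then finishes the proof, with alphabet size $2^{\poly(1/\eps)}$ as claimed.
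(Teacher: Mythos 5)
Your construction and the first two steps (localizing the received word to the left vertices and using the expander mixing lemma to show that, for every codeword in the list, all but an $O(\eps_0)$ fraction of left vertices see at most $\rho_{in}$ local errors) match the paper's argument. But the final step is a genuine gap, and you have correctly identified it as such: reducing to the claim that at most one nonzero codeword of $\calC_0$ is ``$\{0,\nu_u\}$-valued'' on most coordinates is not something the distance of an arbitrary base code gives you, and falling back on list recoverability of $\calC_0$ defeats the whole point of the construction (which is to work with an arbitrary high-rate, constant-distance — e.g.\ LDPC — base code). So the proposal as written does not close.

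The paper closes the argument without ever looking at the base code beyond its distance, via two observations you are missing. First, restrict attention to the set $L_2$ of left vertices whose local list has size exactly $2$ (if $L_2$ is a small fraction of $L$, a single string built from the local lists lies within the unique-decoding radius of every list element, so the list has size $1$). At any $\ell \in L_2$, the two local inner codewords are at distance at least $1-R$ from each other, so by the triangle inequality \emph{every} list codeword incurs at least $\frac{1}{3}(1-R)d$ local errors at $\ell$ — not just ``at most $\rho_{in}$''. Second, this lower bound is what kills a third codeword: if two distinct list codewords $h_1, h_2$ agreed on even an $O(\delta_{out}/C)$ fraction of $L_2$ (outside their exceptional sets), then pushing those $\geq \frac{1}{3}(1-R)d$ error edges per agreeing vertex through the expander produces a set of at least $\bigl(\frac{1}{3}(1-R)-\eps\bigr)n$ \emph{common} error locations on the right, whence $\Delta(h_1,h_2) \leq 2\bigl(\frac{2}{3}(1-R)-\eps\bigr) - \bigl(\frac{1}{3}(1-R)-\eps\bigr) = 1-R-\eps$, contradicting the code's distance. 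Hence distinct list elements, restricted to $L_2$ and snapped to their local lists, are strings over a per-coordinate alphabet of size $2$ with pairwise fractional distance at least $1 - 12/C > 3/4$; the Plotkin-type bound (\cref{thm:strong_plotkin} with $q=2$) then allows at most $2$ such strings. No list recovery, and no structural claim about $\{0,\nu_u\}$-valued codewords of $\calC_0$, is needed.
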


\begin{theorem}[Ongoing work]\label{thm:ael_capacity}
	For every $\eps >0$ and $R\in (0,1)$, there is an explicit family of codes constructed using AEL amplification with rate $R$ and distance at least $1-R-\eps$ which is combinatorially list decodable up to $1-R-\eps$ with list size and alphabet size dependent only on $\eps$ and independent of the blocklength.
\end{theorem}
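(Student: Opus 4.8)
The plan is to realize the code as an AEL amplification \cite{AEL95} built from three ingredients. First, an explicit \emph{outer} code $\calC_{out}$ of blocklength $n$, rate $r_1 = 1-\Theta(\eps)$ and constant relative distance $\delta_1 = \Omega_\eps(1)$; any explicit high-rate constant-distance code works, and taking it to be LDPC and linear-time unique-decodable yields the stronger corollary. Second, a $D$-left-regular bipartite graph $G$ on $n+n$ vertices that is a spectral expander with normalized second singular value $\lambda$, where $D=D(\eps)$ and $\lambda$ are chosen small enough in terms of $\eps$ and $\delta_1$. Third---the new ingredient---a \emph{constant-size inner} code $\calC_{in}$ of blocklength $D$ and rate $r_2$ with $r_1 r_2 = R$ (so $r_2 = R+\Theta(\eps)$) that is \emph{capacity achieving}: list-decodable up to radius $1-r_2-\eps'$ with list size $\ell=\ell(\eps)$, and with relative distance $\delta_2 \ge 1-r_2-\eps'$. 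Such $\calC_{in}$ exist by the cited random-linear / randomly-punctured-Reed--Solomon / folded-Reed--Solomon constructions and can be found deterministically since $D$ is an absolute constant. Encoding: apply $\calC_{out}$ to place one outer symbol per left vertex, re-encode each via $\calC_{in}$ with the $D$ output coordinates identified with the incident edges, and let the final symbol at each right vertex be the $D$-tuple of its incident edge-symbols. This has blocklength $n$, alphabet $\Sigma_{\mathrm{in}}^D$ of size $2^{\poly(1/\eps)}$, and rate exactly $r_1 r_2 = R$.

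The rate claim is immediate, and the distance claim is the classical AEL analysis: the expander mixing lemma converts the inner distance $\delta_2 \approx 1-R$ together with the merely-constant outer distance $\delta_1$ into a global distance $\ge \delta_2 - O(\lambda/\delta_1) \ge 1-R-\eps$ once $\lambda$ is small. So the real content is the combinatorial list-decoding bound.

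For that bound I would start with the standard local-to-global reduction. Fix a received word $g$ and any $\calC_{AEL}(x)$ within radius $1-R-\eps$ of it; pull $g$ back to an edge-labeling, and for each left vertex $u$ let $\tilde g_u$ be the labels on $u$'s incident edges and $\Lambda_u$ the output of the inner list-decoder on $\tilde g_u$, so $|\Lambda_u| \le \ell$. The $\le (1-R-\eps)n$ corrupted right vertices corrupt, by the mixing lemma, at most a $(1-R-\eps)+O(\lambda)$ fraction of the edges at all but an $o(1)$ fraction of left vertices; since $(1-R-\eps)+O(\lambda) < 1-r_2-\eps'$ for our parameters, we get $\calC_{out}(x)_u \in \Lambda_u$ for a $1-o(1)$ fraction of $u$. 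Thus every member of the list yields an outer codeword agreeing, on almost all coordinates, with the single fixed family of size-$\le\ell$ lists $\{\Lambda_u\}_u$.

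The main obstacle---and where I expect the genuine work to lie---is that $\calC_{out}$ is only assumed to have constant distance, so one cannot close by invoking list-recoverability of $\calC_{out}$; indeed a constant-distance code can have exponentially many codewords threading a family of size-$\ell$ lists, so no black-box list-recovery statement is available. The way I would attack this is to extract more than the lists from $g$, exploiting that $\calC_{in}$ is (near-)MDS: if $v$ is an uncorrupted right vertex for $\calC_{AEL}(x)$, each incident edge fixes one coordinate of the inner codeword at the corresponding left vertex, and for any set $W$ of right vertices of density slightly above $r_2$ the mixing lemma makes almost every left vertex have more than $r_2 D$ neighbors in $W$. Hence, if $W$ is (nearly) uncorrupted for $\calC_{AEL}(x)$, the restriction of $g$ to the edges into $W$ over-determines---so determines up to a small local ambiguity, via unique-decoding of the near-MDS code $\calC_{in}$ punctured to those coordinates---the symbol $\calC_{out}(x)_u$ at almost every $u$, and this value depends only on $W$ and $g$, not on $x$. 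Since the ambiguous set of $u$'s has density $o(1) < \delta_1/2$, the distance of $\calC_{out}$ forces all list members that are nearly uncorrupted on a common $W$ to coincide; it then remains to cover all list members by boundedly many witnessing sets $W$, and this is where the capacity-achieving list size $\ell$ of $\calC_{in}$ re-enters, bounding the number of local ``agreement configurations'' that must be witnessed and yielding a final list size depending only on $\eps$. Making this covering step go through with a number of witnesses that is a function of $\eps$ alone (rather than growing with $n$) is, I expect, the delicate core of the argument, and would require balancing $\lambda$, $\eps'$, and the ambiguity slack carefully against $\delta_1$.
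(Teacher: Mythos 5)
Your construction, parameter setup, and the local-to-global reduction (local inner lists $\Lambda_u$ of size $\le\ell$ containing the correct outer symbol at almost every left vertex) all match the paper, and you correctly identify the crux: with an outer code that is only constant-distance, no list-recovery statement is available and exponentially many outer codewords could thread the local lists. But your proposed resolution is not carried out, and as sketched it does not close. The pairwise ``witnessing set'' argument --- two list members that are both nearly uncorrupted on a common set $W$ of right vertices of density slightly above $r_2$ must coincide --- fails to bound the list, because for rate $R<1/2$ two distinct list members at radius $1-R-\eps$ have uncorrupted right-vertex sets of density only about $R+\eps$ each, which can be essentially disjoint; there is then no single $W$ witnessing both, and no mechanism is offered for covering all list members by a number of witnesses depending only on $\eps$. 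You explicitly flag this covering step as the ``delicate core,'' and indeed that is exactly where the entire content of the theorem lies.

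The paper closes this gap with a different engine: an induction on the parameter $k$ in the decoding radius $\frac{k}{k+1}(\Delta-s-\eps)$, where $s$ tracks a fraction of \emph{erasures} in the received word (so the inner code must be list-decodable from errors and erasures). Each global list member is snapped to a string in the product of the local lists; the key lemma shows that if too many of these strings agree on a $\beta$-fraction of the left vertices with more than one local candidate, then (by a triangle-inequality argument at each such vertex) those codewords share a large common set of right-vertex \emph{error} locations, which one erases to produce a new received word for which the same codewords lie within radius $\frac{k-1}{k}(\Delta-s'-\eps)$, contradicting the inductive list bound for $k-1$. A Plotkin-type bound for $L$-wise agreement (no $L$ strings agreeing on a $\beta$-fraction of coordinates) then bounds the list at level $k$. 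This is the combinatorial analogue of the Parvaresh--Vardy/Schwartz--Zippel dimension-reduction step, and it is what produces the (tower-type) list size; it also explains why the paper's alphabet size is itself a tower in $1/\eps$ rather than the $2^{\poly(1/\eps)}$ you claim, since the expansion requirement $\lambda\le\beta\gamma$ with $\beta\approx 1/\tow(k)$ forces a correspondingly large degree.
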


The list size as well as alphabet size for the codes in \cref{thm:ael_capacity}, though constant, grow as a tower of height $\poly(1/\eps)$, where $\eps$ is the gap to capacity. Moreover, these results are just combinatorial in nature, and leave open the interesting task of efficiently decoding these codes up to the capacity. 

These codes are based on a new analysis of AEL amplification that enables a local-to-global transfer of combinatorial list decodability from the inner code to $\AELC$. The ideas are based on using erasures to guarantee slightly non-trivial list decodability for interleaved codes. Such a result was already known when interleaving Reed-Solomon codes due to the multivariate interpolation technique of \cite{CS03, PV05} and other works thereafter, but our argument works for interleaving of general codes. Somewhat surprisingly, an argument analogous to the Schwartz-Zippel lemma can be made to work for interleaved codes, even when the codes being interleaved do not have any algebraic structure. We believe that these ideas shed new light on interpolation-based decoding, and they may find further applications.
\subsection{List Size Improvement for Folded Reed-Solomon Codes}\label{sec:folded_rs}
While the codes from previous section are interesting because of new techniques, we have known list decoding capacity achieving codes since the work of Guruswami and Rudra \cite{GR08}. These codes are based on \emph{folding} the usual Reed-Solomon code. Folding is a simple operation where for each string, $m$ different alphabet symbols in $[q]$ are treated single symbol of $[q^m]$. Therefore, folding transforms a string in $[q]^n$ to $[q^m]^{n/m}$. 

It is not difficult to see that folding preserves the rate, and the distance cannot decrease. If we fold a code on the Singleton bound such as a Reed-Solomon code, then the distance must also be preserved. The main advantage of folding Reed-Solomon (RS) codes is that the list decoding radius improves to beyond Johnson bound, and is at least $1-R-\eps$ if $m = \frac{1}{\eps^2}$. This was shown by Guruswami and Rudra \cite{GR08}, building on the work on Parvaresh and Vardy \cite{PV05}, and a simpler proof and algorithm was found in \cite{Gur11} using observations from \cite{Vadhan12}.

However, it is important that this folding for RS codes is done along a specific algebraic structure. Let $\F_q$ be a field, and $\gamma \in \F_q^*$ be a primitive element of $\F_q$, so that every non-zero element of $\F_q$ can be written as $\gamma^i$ for some integer $i\geq 0$. The (full-length) Reed-Solomon code of blocklength $n$ and rate $R$ has messages corresponding to polynomials of degree $<Rn$, and the codewords are given by the map
\[
	f(X) \rightarrow \insquare{ f(1), f(\gamma), \cdots ,f(\gamma^{n-1}) } \in \F_q^n
\]
where $n=q-1$.
The folded Reed-Solomon code has the same set of messages, but the encoding changes to
\[
	f(X) \rightarrow \insquare{ \begin{pmatrix}
	f(1) \\
	f(\gamma)\\
	\vdots \\
	f(\gamma^{m-1})
	\end{pmatrix}, 
	\begin{pmatrix}
	f(\gamma^m) \\
	f(\gamma^{m+1})\\
	\vdots \\
	f(\gamma^{2m-1})
	\end{pmatrix},
	\cdots ,\begin{pmatrix}
	f(\gamma^{n-m}) \\
	f(\gamma^{n-m+1})\\
	\vdots \\
	f(\gamma^{n-1})
	\end{pmatrix} } \in (\F_q^m)^{n/m}
\]

The main result from \cite{Gur11} says that if $m=1/\eps^2$, then for any $g\in (\F_q^m)^{n/m}$, the list $\calL(g,1-R-\eps)$ is contained in an affine subspace of dimension at most $\calO(1/\eps)$. This immediately gives an upper bound of $n^{\calO(1/\eps)}$ for the list size, proving that folded RS codes combinatorially achieve list decoding capacity. \cite{Gur11} also showed that a basis for the affine subspace can be found in $\calO(n^2)$ time, and so there also exists an efficient algorithm for this list decoding.

Note that the bounds on both list size and alphabet size are large polynomials in $n$. \cite{GR08} showed in their original paper on capacity achieving codes that the AEL amplification can also be used for alphabet size reduction to a constant independent of $n$ (but dependent on $\eps$). For list size improvement, \cite{Gur11} isolated a pseudorandom object called \emph{subspace evasive sets} such that no affine subspace of small dimension can intersect with a subspace evasive set in more than $\calO_{\eps}(1)$ points. Thus, if the message polynomials for folded RS codes were chosen from such a set, the lists would be of size at most $\calO_{\eps}(1)$. Moreover, \cite{Gur11} showed the existence of such subspace evasive sets with large enough size that the loss in rate due to pre-encoding is negligible.

Explicit subspace evasive sets were then constructed by Dvir and Lovett \cite{DL12}, giving codes decodable up to $1-R-\eps$ with list size $(1/\eps)^{\calO(1/\eps)}$. There have also been attempts to use algebraic-geometric (AG) codes \cite{Gur09, GX12, GX22}, as well as to use variants of subspace evasive-ness \cite{GX13, GK16, GRZ21}, to reduce the alphabet size, list size and/or decoding time.

\subsubsection*{List size of Folded RS codes}
Somewhat surprisingly, it was shown in \cite{KRSW23} that folded RS codes themselves, without any modification, have much smaller list sizes than previously thought. They proved an upper bound of $(1/\eps)^{\calO(1/\eps)}$ using a general theorem on the intersection of Hamming balls and affine subspaces, matching the list size obtained by \cite{DL12} using subspace evasive sets. Their proof was somewhat simplified by \cite{Tamo24}, and was based on earlier ideas on subspace designs from \cite{GK16}.

For decoding upto $1-R-\eps$, it is known that the list size must be at least $1/\eps$. It is known that random puncturings of RS and folded RS codes achieve capacity with this optimal list size. However, for explicit codes, there is a gap between best possible list size and best known list size. In \cref{chap:list_size}, we give an elementary proof of the $(1/\eps)^{1/\eps}$ bound based on intersections between Hamming balls and affine subspaces. In fact, we prove a stronger result by showing that when decoding $m$-folded RS codes up to $\approx \frac{s}{s+1}\inparen{1-R}$, where $m$ is sufficiently large compared to $s$ and $s\geq 1$, the list size is bounded by $s\cdot (s+1)^{s-2}$. 

This implies the results of \cite{KRSW23, Tamo24}, but also works for fixed small values of $s=2,3,\cdots$. The case $s=1$ is just unique decoding. We note that the case of $s=2$ and the corresponding list size of 2 was also shown by \cite{Tamo24}, but his method did not generalize to $s>3$. 

We use many of the same techniques as earlier works, but structure our proof in a bottom-up manner instead of a top-down manner. We start by showing a simple combinatorial argument that shows that an affine subspace of dimension 1, or a line, can intersect a Hamming ball of radius $\frac{s}{s+1}\Delta$ in at most $s$ points. This relies on the simple observation that given a line, $[n]$ can be divided into two sets $S$ and $\overline{S}$ such that all the points on the line agree on $\overline{S}$, and any two points on the line differ \emph{everywhere} on $S$. Thus, the restriction to $S$ can be seen as a distance 1 code, and moreover $|S| \geq \Delta n$.

Therefore, the agreement sets between codewords and the received word must be disjoint over $S$, and if there were $s+1$ codewords in the list, one of these agreement sets must have size at most $\frac{|S|}{s+1}$ (when restricted to $S$). This codeword and the received word differ in at least $\frac{s}{s+1}\cdot |S| \geq \frac{s}{s+1}\Delta n$ positions, contradicting its membership in the list.

We then use an induction on the dimension of the affine subspace, and the capacity result is obtained for $s\approx 1/\eps$ and dimension $\approx 1/\eps$.

\subsubsection*{Using the Folded Structure}
Until now, our results are based on the above general argument applied to the RS code underlying the folded RS code. Next, we use the folded structure to gain improvements in the list size. Roughly speaking, our induction step above corresponds to counting how many agreement sets a fixed coordinate can belong to. For the 1-dimensional case, the disjointness of agreement sets is just the statement that each coordinate can belong to at most 1 agreement set. For higher dimensions, fixing a coordinate to be an agreement fixes its value, and therefore reduces the dimension of the affine subspace by 1. We then use the inductive hypothesis to obtain a weak version of the disjointness, and a double counting argument similar to 1-dimensional case finishes the proof.

For a folded RS code, fixing a particular coordinate to be an agreement actually gives us multiple equations, and can reduce the dimension by much more than 1. In an ideas case, all of these equations will be linearly independent, and we can fix the entire codeword after fixing a single coordinate. This would again look like the 1-dimensional case above. Unfortunately, such a strong guarantee need not hold. However, \cite{GK16} proved a global upper bound on the sum of rank deficit we see in each coordinate. This is based on the notion of a folded Wronskian determinant criterion for linear independence, and this part of the proof was also used by \cite{KRSW23, Tamo24}. However, with our sharper bottom-up structure of the proof, we are able to improve the list size to $(s-1)^2+1$. 

This gives a list size bound of 2 for decoding up to $\frac{2}{3}(1-R)$, and in the regime of list decoding capacity with $s\approx 1/\eps$, the list size is bounded by $\calO(1/\eps^2)$.
\vspace{\pspacing}
\section{Organization}

We start in \cref{chap:prelims} with a description of common code families and coding theoretic operations we will be encountering throughout this thesis. In addition, we include preliminaries on the Sum-of-Squares hierarchy of convex relaxations, which is one of the main technical ingredients used in the algorithmic results later.

The results in \cref{chap:framework} that describe a general list decoding framework along with concrete instantiations for Tanner codes and AEL amplification are based on joint work with Jeronimo and Tulsiani \cite{JST23}. 

In \cref{chap:quantum}, we introduce quantum CSS codes and describe the challenges the framework from \cref{chap:framework} faces in the quantum setting. The construction of list decodable QLDPC codes in this chapter is based on joint work with Mittal and Tulsiani \cite{MST23}.

In \cref{chap:regularity}, we describe a near-linear time algorithm for unique decoding Ta-Shma's codes based on efficient regularity lemmas for sparse but pseudorandom hypergraphs, such as the one used in Ta-Shma's code construction. This is based on joint work with Jeronimo and Tulsiani \cite{JST21}.

In \cref{chap:capacity}, we show our construction of new codes that have bounded list size upto optimal error radius. That is, these codes achieve list decoding capacity, and can be constructed by applying a suitable modification of the AEL distance amplification procedure to an arbitrary base code. Unfortunately, we do not know of an efficient algorithm to decode up to this radius however. This is based on unpublished results from ongoing work.

Finally, in \cref{chap:list_size}, we present list size improvements for Folded Reed-Solomon codes. First, we show how elementary combinatorial arguments can give a simplified proof of the results in \cite{KRSW23, Tamo24}, and then a more involved argument using folded Wronskian determinants from \cite{GK16} can give a significantly improved bound on the list size. This is also based on unpublished results from ongoing work.

In \cref{appendix:ta-shma}, we include the details of the code construction by Ta-Shma \cite{TS17}, which is mostly needed in \cref{chap:regularity}.

\chapter{Preliminaries}\label{chap:prelims}
%
%
For a bipartite graph $G=(L,R,E)$ where $L$ is the set of left vertices, and $R$ is the set of right
vertices, we index the left set by $\li$ and the right set by $\ri$. For a vertex $\li \in L$, 
we denote the set of edges incident to it by $N_L(\li)$ (left neighborhood), and the set of edges
incident to  $\ri\in R$ is denoted by $N_R(\ri)$(right neighborhood). 
We use $\li \sim \ri$ to denote that the vertex $\li \in L$ is adjacent to the vertex $\ri \in R$, that is, $(\li,\ri)\in E$. 

Fix an arbitrary ordering of the edges. Then there are bijections between the sets $E$, $L \times
[d]$, and $R \times [d]$, 
given by taking $(\li,i)$ to be the $i^{th}$ edge incident on $\li$, and similarly for $R \times [d]$.
%
Henceforth, we will implicitly assume such an ordering of the edges is fixed, and use the resulting bijections.
%


\begin{definition}
	Let $[q]$ be a finite alphabet and let $f,g\in [q]^n$. Then the (fractional) distance
        between $f,g$ is defined as \[ \dis(f,g) = \Ex{i\in [n]}{ \indi{f_i \neq g_i}} \mper \]
\end{definition}

\begin{definition}[Code, distance and rate]
	A code $\calC$ of block length $n$, distance $\delta$ and rate $\rho$ over the alphabet size $q$ is a set $\calC \subseteq [q]^n$ with the following properties
	\begin{enumerate}[(i)]
		\item $\rho = \frac{\log_{q} |\calC|}{n}$
		\item $\delta = \min_{\substack{h_1,h_2\in \calC \\ h_1\neq h_2}} \dis(h_1,h_2)$
	\end{enumerate}
	Such codes are succinctly represented as $[n,\delta,\rho]_q$.
	We say $\calC$ is a linear code if $[q]$ can be identified with a finite field and $\calC$ is a linear subspace of $[q]^n$.
\end{definition}

We generally think of codes as an infinite family such that $n\rightarrow \infty$. A code family, or just code, is \emph{good} if both $\rho$ and $\delta$ are bounded below by a constant independent of $n$ as $n\rightarrow \infty$.
\begin{definition}[List of codewords]
	For any $g\in [q]^n$, the list of codewords in $\calC$ that are at a distance less than
        $\eta$ from $g$ is denoted by $\calL(g,\eta)$. That is $\calL(g,\eta) = \inbraces{h\in \calC \suchthat \dis(h,g) < \eta }$.
\end{definition}
\begin{definition}[Unique and List Decoding]
	A list decoding algorithm for a code $\calC$ of distance $\delta$ takes as input $g\in [q]^n$ and a decoding radius parameter $\eta$, and outputs $\calL(g,\eta)$. When $\eta<\delta/2$, then this list is of size at most 1, and the algorithm is called unique decoding algorithm.
\end{definition}

Efficient list decoding algorithms that work up to radius $\eta$ can only work if the list $\calL(g,\eta)$ is of bounded size. We will call a code $\calC$ \emph{combinatorially list decodable} up to $\eta$ if $\calL(g,\eta)$ is of size at most $\poly(n)$.

The rate of the code corresponds to how densely packed it is in the ambient space $\F_q^n$, and the distance is a measure of how separated the codewords are. There is a tension between these quantities therefore, which is captured in the following simple relationship called the Singleton bound.

\begin{theorem}[Singleton bound]
	Any code with rate $\rho$ and distance $\delta$ must have $\rho+\delta \leq 1$.
\end{theorem}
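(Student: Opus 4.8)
The plan is to prove the bound by a \emph{puncturing} (coordinate-projection) argument, which is the standard route and uses nothing beyond the definitions of rate and distance. First I would set $d := \delta n$, which is a genuine non-negative integer since $\dis$ takes values in $\{0, 1/n, \ldots, 1\}$ and $\delta$ is a minimum of such values. Then I would fix any $n-d+1$ coordinates and consider the projection $\pi \colon [q]^n \to [q]^{\,n-d+1}$ that restricts a string to those coordinates.

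The key step is to show that $\pi$ is injective on $\calC$. Suppose $h_1, h_2 \in \calC$ with $h_1 \neq h_2$ and $\pi(h_1) = \pi(h_2)$; then $h_1$ and $h_2$ agree on the chosen $n-d+1$ coordinates, hence can differ only among the remaining $d-1$ coordinates, so $\dis(h_1,h_2) \le (d-1)/n < d/n = \delta$, contradicting the fact that $\delta$ is the minimum distance of $\calC$. Therefore $\pi|_{\calC}$ is injective, and since $\pi(\calC) \subseteq [q]^{\,n-d+1}$ we get $|\calC| = |\pi(\calC)| \le q^{\,n-d+1}$. Taking $\log_q$ and dividing by $n$ yields $\rho = \frac{\log_q|\calC|}{n} \le \frac{n-d+1}{n} = 1 - \delta + \frac{1}{n}$, i.e. $\rho + \delta \le 1 + \frac{1}{n}$; since we regard codes as infinite families with $n \to \infty$, the additive $\frac{1}{n}$ term vanishes and we obtain $\rho + \delta \le 1$.

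There is no substantive obstacle here, as the argument is elementary. The only point worth flagging is that the inequality as literally stated, with no lower-order term, is the asymptotic statement for code families: for a fixed finite block length the proof actually delivers the exact Singleton bound $\log_q|\calC| \le n - \delta n + 1$, and the cleaner form $\rho + \delta \le 1$ follows in the limit (it is in fact false for trivial short codes such as $\calC = [q]^1$). One could optionally remark afterward that Reed--Solomon codes meet this bound with equality, justifying the claim in the introduction that their distance $1-R$ is optimal.
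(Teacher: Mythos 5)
The paper states the Singleton bound without giving any proof (it is quoted as a standard fact, with Reed--Solomon codes noted as achieving it), so there is no in-paper argument to compare against. Your puncturing argument is the standard and correct proof: projecting onto any $n - \delta n + 1$ coordinates is injective on $\calC$ because two distinct codewords differ in at least $\delta n$ positions, giving $|\calC| \le q^{\,n-\delta n+1}$ and hence $\rho + \delta \le 1 + \tfrac{1}{n}$. Your flag about the additive $\tfrac{1}{n}$ is exactly right and is consistent with the paper's convention of treating codes as infinite families with $n \to \infty$, under which the stated form $\rho + \delta \le 1$ follows; for finite $n$ the sharp statement is $\log_q|\calC| \le n - \delta n + 1$, which is what your argument actually delivers.
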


As we will see soon, Reed-Solomon codes achieve the Singleton bound, and are therefore called Maximum Distance Separable (MDS) codes. In particular, one may obtain codes with distance $1-\eps$ with a constant rate $\eps$. 

However, Reed-Solomon are defined over very large (in fact, growing with $n$) alphabets. For small and fixed alphabet like $q=2$, the Plotkin bound says that distances beyond the threshold $1-\frac{1}{q}$ are out of reach for constant rate codes.
\begin{theorem}[Plotkin bound]
	A code with alphabet size $q$ and $\delta\geq 1-\frac{1}{q}$ must have rate $\rho \rightarrow 0$ as $n\rightarrow \infty$.
\end{theorem}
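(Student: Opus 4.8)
The plan is to establish the quantitative Plotkin bound — an absolute (indeed polynomial-in-$n$) upper bound on $|\calC|$ — by a first-moment double-counting argument, and then read off $\rho\to 0$. Fix a code $\calC\subseteq[q]^n$ with $M\defeq|\calC|$ codewords and minimum distance $\delta n$, and consider $S\defeq\sum_{h_1,h_2\in\calC} n\cdot\dis(h_1,h_2)$, the sum of (unnormalized) Hamming distances over all ordered pairs of codewords. On one hand, every one of the $M(M-1)$ ordered pairs of distinct codewords contributes at least $\delta n$, so $S\ge M(M-1)\,\delta n$. On the other hand, I would expand $S$ coordinate by coordinate: if $n_{i,a}$ is the number of codewords reading $a\in[q]$ in coordinate $i$, then coordinate $i$ contributes $\sum_{a\neq b}n_{i,a}n_{i,b}=M^2-\sum_a n_{i,a}^2$, and since $\sum_a n_{i,a}=M$, Cauchy–Schwarz gives $\sum_a n_{i,a}^2\ge M^2/q$, so coordinate $i$ contributes at most $M^2(1-1/q)$ and hence $S\le n\,M^2(1-1/q)$.

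Comparing the two bounds gives $(M-1)\,\delta\le M\,(1-1/q)$. When $\delta>1-1/q$, this rearranges to $M\le \delta/(\delta-1+1/q)$; moreover, since $\delta n$ is an integer, the quantity $\delta n-(1-1/q)n$, whenever positive, is at least $1/q$, so in fact $M\le qn$. Therefore $\rho=\log_q M/n\le(1+\log_q n)/n\to 0$ along any family of such codes.

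The one genuinely delicate point — and the step I expect to be the main obstacle — is the borderline case $\delta=1-1/q$ exactly (possible only when $q\mid n$), where the inequality $(M-1)\delta\le M(1-1/q)$ is vacuous and the first-moment argument alone yields nothing. Here I would first shorten the code: by pigeonhole some symbol $a^\star\in[q]$ occurs in coordinate $1$ in at least $M/q$ codewords; restricting $\calC$ to those codewords and deleting coordinate $1$ produces $\calC'\subseteq[q]^{n-1}$ with $|\calC'|\ge M/q$ and minimum distance still at least $(1-1/q)n=(1-1/q)(n-1)+(1-1/q)$, hence of relative distance strictly above $1-1/q$. Applying the previous paragraph to $\calC'$ yields $M/q\le|\calC'|\le qn$, so $M\le q^2 n$, and again $\rho=\log_q M/n\to 0$. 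This finishes the plan; everything outside the two-line double counting and the one-line shortening reduction is a routine estimate.
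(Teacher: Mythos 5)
Your proof is correct. Note that the paper states the Plotkin bound without proof; the only argument it supplies is for the adjacent \cref{thm:strong_plotkin} (the regime $\delta > 1-1/q^2$, giving $M \leq q$), and that argument is exactly your double-counting core: lower-bound the sum of pairwise distances by $M(M-1)\,\delta n$ and upper-bound it coordinate-wise by $nM^2(1-1/q)$ via Cauchy--Schwarz on the symbol counts. So your first two paragraphs reproduce the paper's technique for the nearby statement.

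What you add beyond the paper is the work needed to push the same inequality down to the true Plotkin threshold $\delta \geq 1-1/q$, and both additions are sound. The integrality observation — that $\delta n - (1-1/q)n$, being a positive multiple of $1/q$, is at least $1/q$, whence $M \leq qn$ — correctly converts the qualitative bound $M \leq \delta/(\delta-1+1/q)$ into a polynomial-in-$n$ bound when $\delta$ can approach $1-1/q$ from above. And you are right that the exact boundary case $\delta = 1-1/q$ is the genuinely delicate point, where the naive inequality $(M-1)\delta \leq M(1-1/q)$ says nothing; your shortening step (restrict to the $\geq M/q$ codewords sharing a symbol in coordinate $1$, delete that coordinate, and observe the relative distance strictly exceeds $1-1/q$ on blocklength $n-1$) handles it cleanly and yields $M \leq q^2 n$, hence $\rho \leq (2+\log_q n)/n \to 0$. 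The only cosmetic caveat is that the shortened code must have at least two codewords for its minimum distance to be meaningful, but if $M/q < 2$ you are trivially done, so nothing is lost.
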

In fact, the following bound is not difficult to show, and we include a proof for completeness.
\begin{theorem}\label{thm:strong_plotkin}
	A code with alphabet size $q$ and $\delta > 1-\frac{1}{q^2}$ can have only $q$ codewords. Therefore, the rate $\rho$ is at most $\frac{1}{n}$.
\end{theorem}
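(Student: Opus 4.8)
The plan is to run the classical second-moment (averaging) argument behind the Plotkin bound, and then to squeeze one extra unit out of the resulting constant by exploiting that the hypothesis is a \emph{strict} inequality. Write $M = \abs{\calC}$ and assume $M \geq 2$, the cases $M \leq 1$ being immediate. I would study the total pairwise distance
\[
	S \;=\; \sum_{h_1, h_2 \in \calC} \dis(h_1, h_2),
\]
the sum ranging over all ordered pairs of codewords. Since $\dis(h_1,h_2) \geq \delta$ whenever $h_1 \neq h_2$, and there are $M(M-1)$ such ordered pairs, this gives the lower bound $S \geq M(M-1)\,\delta$.

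For the matching upper bound I would expand $\dis(h_1,h_2) = \frac1n \sum_{i=1}^n \indi{(h_1)_i \neq (h_2)_i}$ and exchange the order of summation. Letting $n_{i,a}$ denote the number of codewords whose $i$-th coordinate equals $a \in [q]$, the number of ordered pairs of codewords differing in coordinate $i$ equals $M^2 - \sum_{a \in [q]} n_{i,a}^2$; and since $\sum_{a} n_{i,a} = M$, Cauchy--Schwarz yields $\sum_a n_{i,a}^2 \geq M^2/q$. Hence each of the $n$ coordinates contributes at most $M^2\inparen{1 - \frac1q}$ to $S$, so $S \leq M^2\inparen{1 - \frac1q}$. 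Combining the two bounds and dividing by $M$ would give
\[
	(M-1)\,\delta \;\leq\; M\inparen{1 - \tfrac1q},
	\quad\text{i.e.}\quad
	M\inparen{\delta - 1 + \tfrac1q} \;\leq\; \delta .
\]

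To finish, I would note that $\delta > 1 - \frac1{q^2} > 1 - \frac1q$ for every $q \geq 2$, so the coefficient $\delta - 1 + \frac1q$ is strictly positive and one may write $M \leq \delta \big/ \inparen{\delta - 1 + \frac1q}$. Viewed as a function of $\delta$ on the interval $\inparen{1 - \frac1q,\, 1}$, the right-hand side is strictly decreasing, and at the threshold value $\delta = 1 - \frac1{q^2}$ it equals exactly $q + 1$. Therefore $\delta > 1 - \frac1{q^2}$ forces $M < q + 1$, and since $M$ is a positive integer, $M \leq q$. The rate bound then follows immediately: $\rho = \frac{\log_q M}{n} \leq \frac{\log_q q}{n} = \frac1n$.

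The one delicate point — the only step that is not a rote Plotkin computation — is the very last one: substituting the threshold directly into the averaging bound yields merely $M \leq q+1$, so I genuinely need the strictness of the hypothesis $\delta > 1 - \frac1{q^2}$ together with the integrality of $M$ to reach $M \leq q$. Everything else is bookkeeping; I expect no real obstacle.
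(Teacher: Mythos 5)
Your proof is correct and follows essentially the same double-counting (Plotkin-type) argument as the paper: lower-bounding the total pairwise distance by $M(M-1)$ times the minimum distance, upper-bounding it coordinate-wise via Cauchy--Schwarz on the symbol counts, and concluding $M < q+1$. The only cosmetic difference is that you obtain the strict inequality from the strict hypothesis on $\delta$ via monotonicity of the final bound, whereas the paper feeds the strict per-pair distance bound directly into the computation.
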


\begin{proof}
Suppose the code is of size $M$, and we arrange all codewords into a matrix of size $M\times n$, so that each row corresponds to a codeword, and each column corresponds to a coordinate $i\in [n]$. We use a double counting argument to upper and lower bound the total number differences between all $M^2$ pairs of codewords. For pairs where both codewords are equal, the number of differences is 0, and it is $>(1-\frac{1}{q^2})n$ for any pair where the two codewords differ.

To upper bound this quantity, let $m_{ij}$ denote the count of symbol $j \in [q]$ in column $i\in [n]$. Then the coordinate $i$ will not contribute differences to $\sum_j m_{ij}^2$ pairs out of $M^2$, and the remaining $M^2-\sum_j m_{ij}^2$ pair of codewords must have a difference in coordinate $i$.
\begin{align*}
	M(M-1)\inparen{1-\frac{1}{q^2}}n &< \sum_{i\in[n]} \insquare{M^2 - \sum_{j\in [q]} m_{ij}^2} \\
	&\leq \sum_{i\in[n]} \insquare{M^2 - \frac{1}{q} \inparen{\sum_{j\in [q]} m_{ij}}^2} \\
	&= \sum_{i\in[n]} \insquare{M^2 - \frac{1}{q} M^2} \\
	&=nM^2\inparen{1-\frac{1}{q}}
\end{align*}
That is,
\[
	\inparen{1-\frac{1}{M}} \cdot \inparen{1+\frac{1}{q}} < 1 \implies M < q+1.
\]
\end{proof}

\section{Reed-Solomon Codes and their Variants}
The most well-studied family of codes is the Reed-Solomon code, which is based on evaluations of low-degree polynomials over finite fields. We refer the reader to \cite{GRS23} for a thorough introduction to well-studied code families. We collect some basic facts that we will need throughout this thesis.

\begin{definition}[Reed-Solomon (RS) codes]
	Given a finite field $\F_q$ and $n$ distinct points $\alpha_1,\cdots ,\alpha_n \in \F_q$, the Reed-Solomon code of rate $R$ is defined to be
	\[
		\calC^{RS} = \inbraces{ \inparen{f(\alpha_1), f(\alpha_2), \cdots , f(\alpha_n) } \suchthat f(X)\text{ is a polynomial in }\F_q[X] \text{ of degree }< R\cdot n }
	\]
	Because no two distinct degree $<Rn$ polynomials can agree on more than $Rn$ evaluation points, the distance of the code is at least $1-R$, which is optimal by the Singleton bound. The alphabet of $\calC^{RS}$ is $\F_q$, and the blocklength is $n$.
\end{definition}

Note that the field $\F_q$ needs to have at least $n$ distinct points for this construction to work, and therefore the alphabet size for RS codes grows with $n$.

Efficient decoders for unique decoding and list decoding RS codes up to Johnson bound are well known \cite{GS99, GRS23}. The corresponding decoding radii are $\frac{1-R}{2}$ and $1-\sqrt{R}$ respectively. It was shown recently that an RS code with randomly chosen evaluation points can be combinatorially list decoded up to $1-R$ with optimal list size, achieving the generalized Singleton bound \cite{BGM23, GZ23, AGL24}. However, no explicit constructions or efficient algorithms are known.

After the work of Guruswami and Sudan \cite{GS99} gave an algorithm to decode up to $1-\sqrt{R}$, there were attempts to improve this decoding radius all the way to $1-R$, which would be the best possible. While this has not been achieved for Reed-Solomon codes themselves, we now know of codes closely related to RS codes that achieve this optimal error correction radius. Such codes were discovered by Parvaresh and Vardy \cite{PV05}, Guruswami and Rudra \cite{GR08}, Guruswami and Wang \cite{GW11}, Kopparty \cite{Kop15}, Bhandari et al \cite{BHKS23}, etc. These codes also come with efficient algorithms to perform this decoding, including near-linear time decoders \cite{GHKS24}.

One of these codes that achieves the list decoding capacity is the folded Reed-Solomon code. Recall that in the usual code, the codeword associated to a degree $<Rn$ polynomial is the evaluation of the polynomial on $n$ distinct points. In $m$-folded RS codes, $m$ different evaluations are clubbed into a single alphabet symbol, so that the alphabet of the code becomes $\F_q^m$. The blocklength also changes to $n/m$, but the rate is preserved, and the distance cannot decrease.

Moreover, this folding is done using a primitive element $\gamma$ of $\F_q$, so that the symbols $f(\alpha), f(\gamma \alpha), \cdots ,f(\gamma^{m-1}\alpha)$ are folded into a single bigger symbol. This is important for the algebraic proof of list decodability beyond Johnson bound. The encoding map now looks like
\[
	f(X) \rightarrow \insquare{ \begin{pmatrix}
	f(1) \\
	f(\gamma)\\
	\vdots \\
	f(\gamma^{m-1})
	\end{pmatrix}, 
	\begin{pmatrix}
	f(\gamma^m) \\
	f(\gamma^{m+1})\\
	\vdots \\
	f(\gamma^{2m-1})
	\end{pmatrix},
	\cdots ,\begin{pmatrix}
	f(\gamma^{n-m}) \\
	f(\gamma^{n-m+1})\\
	\vdots \\
	f(\gamma^{n-1})
	\end{pmatrix} } \in (\F_q^m)^{n/m}
\]
where $f(X) \in \F_q[X]^{<Rn}$.
The following key claim about list decodability of folded RS codes was proven in \cite{Gur11}
\begin{theorem}\label{thm:lin_alg_rs}
	Let $\calC^{FRS}$ be the $m$-folded Reed-Solomon code defined based on polynomials in $\F_q[X]^{<Rn}$, so that the the field size $q>n$, blocklength is $N=n/m$, rate is $R$ and distance is $1-R$.
	
	For any integer $s$, where $1\leq s \leq m$, and for any $g\in (\F_q^m)^N$, the list $\calL(g,\frac{s}{s+1}\inparen{1-\frac{m}{m-s+1}R})$ is contained in an $(s-1)$-dimensional affine subspace of $\F_q[X]^{<Rn}$.
	
	Moreover, there is an algorithm that given $g$, can find a basis for this affine subspace in time $\calO((n\log q)^2)$.
\end{theorem}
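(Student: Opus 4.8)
I will follow the linear-algebraic (interpolation-based) list-decoding approach of \cite{Gur11}, which runs in three stages: an \emph{interpolation} stage producing a nonzero low-degree relation that every message polynomial in the list must satisfy; a \emph{root-counting} stage forcing this relation to degenerate to the zero polynomial on each list element, thereby turning it into a linear functional equation in the message polynomial; and a \emph{linear algebra} stage bounding the solution set of that functional equation by an affine subspace of dimension $s-1$. Throughout, write $k = Rn$ for the number of message coefficients and $N = n/m$ for the folded blocklength, and recall the evaluation points are $1,\gamma,\dots,\gamma^{n-1}$ with $\gamma$ a primitive element of $\F_q$.

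\emph{Interpolation.} I look for a nonzero polynomial $Q(X,Y_1,\dots,Y_s) = A_0(X) + A_1(X)Y_1 + \cdots + A_s(X)Y_s$ with $\deg A_0 \le D + k - 1$ and $\deg A_i \le D$ for $1 \le i \le s$, vanishing on all sliding-window points coming from $g$: for each block $b \in \{0,\dots,N-1\}$ with received symbols $g_{b,0},\dots,g_{b,m-1}$ and each offset $l$ with $0\le l\le m-s$,
\[
	Q\big(\gamma^{bm+l},\; g_{b,l},\; g_{b,l+1},\; \dots,\; g_{b,l+s-1}\big) \;=\; 0 \mper
\]
This is a homogeneous linear system with $(D+k) + s(D+1) = (s+1)D + k + s$ unknowns (the coefficients of the $A_i$) and $N(m-s+1)$ equations, so a nonzero solution exists as soon as $(s+1)D + k + s > N(m-s+1)$; I take $D$ to be the least integer meeting this bound, so $D \approx \frac{N(m-s+1) - k}{s+1}$.

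\emph{Root-counting.} Suppose $f \in \F_q[X]$ with $\deg f < k$ lies in $\calL(g,\eta)$, and set $\Phi(X) := Q\big(X, f(X), f(\gamma X), \dots, f(\gamma^{s-1}X)\big)$; the degree bounds give $\deg\Phi \le D + k - 1$. On every one of the (more than $N(1-\eta)$) blocks where the encoding of $f$ agrees with $g$ we have $f(\gamma^{bm+l+j}) = g_{b,l+j}$ throughout each window, so $\gamma^{bm+l}$ is a root of $\Phi$ for each of the $m-s+1$ offsets, giving $\Phi$ more than $N(1-\eta)(m-s+1)$ roots (these points are distinct since $bm+l < n < q$). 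Substituting the chosen $D$ and $\eta = \frac{s}{s+1}\big(1 - \frac{m}{m-s+1}R\big)$, a short calculation shows $N(1-\eta)(m-s+1) > D + k - 1 \ge \deg\Phi$, hence $\Phi \equiv 0$; unwinding, every $f \in \calL(g,\eta)$ satisfies the linear functional equation
\[
	A_0(X) + A_1(X)f(X) + A_2(X)f(\gamma X) + \cdots + A_s(X)f(\gamma^{s-1}X) = 0 \mper
\]
Getting the two bounds to match so that the radius comes out to exactly $\frac{s}{s+1}\big(1 - \frac{m}{m-s+1}R\big)$ — which pins down the optimal $D$ and relies on counting $m - s + 1$ (not $m$) constraints per block — is the most delicate piece of bookkeeping.

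\emph{Dimension.} It remains to show that the $f \in \F_q[X]$ with $\deg f < k$ solving the functional equation, given $(A_0,\dots,A_s) \ne 0$, form an affine subspace of dimension $\le s-1$. If $A_1 \equiv \cdots \equiv A_s \equiv 0$ then $A_0 \not\equiv 0$ and there is no solution at all, so the list is empty. Otherwise let $e$ be the least exponent occurring with a nonzero coefficient in any of $A_1,\dots,A_s$, and write $f = \sum_j f_j X^j$ and $A_i = \sum_l a_{i,l}X^l$; comparing coefficients of $X^{e+j}$, the contributions of $f_{j'}$ with $j' > j$ vanish by minimality of $e$, leaving
\[
	f_j \cdot \Big( \sum_{i=1}^s a_{i,e}\,(\gamma^{j})^{i-1} \Big) \;=\; -[X^{e+j}]A_0(X) \;-\; \sum_{j' < j} f_{j'} \Big( \sum_{i=1}^s a_{i,e+j-j'}\,(\gamma^{j'})^{i-1} \Big) \mper
\]
The multiplier of $f_j$ is the value at $\gamma^{j}$ of the \emph{nonzero} polynomial $\sum_{i=1}^s a_{i,e}Z^{i-1}$, of degree $\le s-1$, so it has at most $s-1$ roots; since $\gamma$ is primitive and $k < q$, the powers $\gamma^0,\dots,\gamma^{k-1}$ are distinct, so the multiplier is nonzero for all but at most $s-1$ indices $j \in \{0,\dots,k-1\}$, and for each such $j$ the display determines $f_j$ affinely from $f_0,\dots,f_{j-1}$. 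Hence the solution set is an affine subspace of $\F_q^k \cong \F_q[X]^{<Rn}$ of dimension $\le s-1$, as claimed. For the algorithmic claim, the interpolation system is an $\calO(n) \times \calO(n)$ homogeneous system over $\F_q$ whose matrix has a (block) Vandermonde-type structure, from which a nonzero $(A_0,\dots,A_s)$ is extracted, and the coefficient recursion above then produces a basis for the affine subspace in a single sweep over $j = 0,\dots,k-1$; both steps fit in $\calO((n\log q)^2)$ bit operations. The main obstacle I anticipate is exactly the constant-tracking in the root-counting stage, together with the reliance on primitivity of $\gamma$ in the dimension stage: it is the distinctness of $\gamma^0,\dots,\gamma^{k-1}$ — a feature of the \emph{folded} construction with a primitive folding parameter — that keeps the multiplier of $f_j$ from vanishing too often, and without it the dimension bound collapses.
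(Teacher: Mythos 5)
The paper does not prove this theorem itself — it is stated as an imported result, cited from \cite{Gur11} — and your proposal is a correct reconstruction of exactly that linear-algebraic argument: interpolation of a relation that is affine in $Y_1,\dots,Y_s$ with the asymmetric degree bounds $\deg A_0 \le D+k-1$, $\deg A_i \le D$, root-counting over the $m-s+1$ sliding windows per block, and the lower-triangular coefficient recursion whose $f_j$-multiplier is the evaluation of a nonzero degree-$(s-1)$ polynomial at the distinct points $\gamma^j$. Your parameter check does recover the stated radius $\frac{s}{s+1}\bigl(1-\frac{m}{m-s+1}R\bigr)$, and the dimension bound and its reliance on the primitivity of $\gamma$ are handled correctly, so there is nothing further to add.
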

The codes achieving list decoding capacity are obtained by setting $s=1/\eps$ and $m=1/\eps^2$.
\section{Interleaving of codes}
We saw how folding as a coding theoretic operation can enhance the list decoding properties of a code at the cost of increasing the alphabet size. Another similar operation is interleaving of two codes $\calC_1$ and $\calC_2$ of the same blocklength (say $n$), denoted $\calC_1 \odot \calC_2$. We first define interleaving of two strings $x,y \in [q]^n$ to be a third string $z\in ([q]^2)^n$, defined as $z_i = (x_i,y_i)$. That is, the interleaved string simply writes down on coordinate $i$ both the symbols that appear on coordinate $i$ in $x$ and $y$. We use the notation $z = x \odot y$.

To extend this definition to codes, for every pair of codewords $(f_1,f_2) \in \calC_1 \times \calC_2$ (Cartesian product of sets), we include the string $f_1\odot f_2$ in $\calC_1 \odot \calC_2$. The interleaving operation may be extended to $k>2$ strings in a natural fashion by including the symbols on each of the $k$, and accordingly increasing the alphabet size of the new string to be $q^k$. Similar to above, we may also extend the definition to interleaving of $k$ codes by considering all $k$-tuples of codewords in $\calC_1 \times \cdots \times \calC_k$. Formally,
\[
	\calC_1 \odot \calC_2 = \inbraces{ \inparen{ \begin{matrix} f_1(1) \\	f_2(1) \end{matrix}, \begin{matrix} f_1(2) \\	f_2(2) \end{matrix}, \cdots , \begin{matrix} f_1(n) \\	f_2(n) \end{matrix} } \suchthat f_1 \in \calC_1, f_2 \in \calC_2}
\]
where $f_1(i)$ denotes the $i^{th}$ coordinate of $f_1$, and likewise for $f_2$.

The code $\calC \odot \calC \odot \cdots \odot \calC$, where the $k$ copies of the same code are interleaved is denoted in shorthand as $\calC^{\odot k}$.

The number of codewords in $\calC^{\odot k}$ is $|\calC|^k$, and the alphabet size also changes from $q$ in $\calC$ to $q^k$ in $\calC^{\odot k}$, so that the rate is unchanged. It is not difficult to show that the distance is also unchanged when interleaving. We will see in \cref{chap:capacity} that the list decoding radius shifts, if allowing for lists of size $|\calC|^{k-1}$ out of $|\calC|^k$. This slightly non-trivial claim is the workhorse of our capacity achieving codes in \cref{chap:capacity}.

Such a statement was previous known for certain algebraic codes, most famously when the codes being interleaved are RS codes, based on interpolation techniques.
\section{Expander graphs}
Expander graphs are a key ingredient in many code constructions due to their pseudorandom properties.

\begin{definition}[Non-bipartite expander]
	Let $G=(V,E)$ be a $d$-regular graph, and let $A_G$ be the normalized adjacency matrix of $G$ so that its top eigenvalue is 1. The graph $G$ is called an $(n,d,\lambda)$-expander if $A_G$ has all the other eigenvalues bounded in absolute value by $\lambda$. When $n$ and $d$ are clear from context, we call such graphs simply $\lambda$-expanders.
\end{definition}

Infinite families of $(n,d,\lambda)$-expanders, with growing $n$ as $d$ and $\lambda$ are constant, can be derived based on Ramanujan graphs of \cite{LPS88} as long as $\lambda \geq \frac{2\sqrt{d-1}}{d}$. The main property of an expander graph that we will use is that they satisfy the following expander mixing lemma:

\begin{lemma}[Expander Mixing Lemma (EML)]
Let $G=(V,E)$ be a $\lambda$-expander. Let $f,g : V\rightarrow \R$ be two functions on $V$. The expander mixing lemma says that the for an expander graph, the average of $f(u)g(v)$ is same whether $u$ and $v$ are chosen based on edges of an expander or $u$ and $v$ are chosen from all pairs in $V\times V$.
	\[
		\abs{\Ex{(u,v) \in E}{f(u)\cdot g(v)} - \Ex{u\in V}{f(u)}\cdot \Ex{v\in V}{g(v)}} \leq \lambda \cdot \Ex{u\in V}{f(u)^2}^{1/2} \cdot \Ex{v\in V}{g(v)^2}^{1/2}
	\]
\end{lemma}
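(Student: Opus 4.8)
The statement to prove is the Expander Mixing Lemma: for a $\lambda$-expander $G=(V,E)$ and functions $f,g:V\to\R$,
\[
	\abs{\Ex{(u,v)\in E}{f(u)g(v)} - \Ex{u\in V}{f(u)}\cdot\Ex{v\in V}{g(v)}} \leq \lambda\cdot\Ex{u\in V}{f(u)^2}^{1/2}\cdot\Ex{v\in V}{g(v)^2}^{1/2}.
\]

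The plan is the standard spectral argument. First I would rewrite both averages in terms of the normalized adjacency matrix $A_G$ and the inner product $\ip{a}{b} = \Ex{u\in V}{a(u)b(u)}$ on $\R^V$. Viewing $f$ and $g$ as vectors, the edge-average $\Ex{(u,v)\in E}{f(u)g(v)}$ equals $\ip{f}{A_G g}$ (since $G$ is $d$-regular, picking a random edge and then its two endpoints is the same as picking $u$ uniformly and then a uniform neighbor $v$). The product-of-averages term equals $\ip{f}{\one}\ip{\one}{g}$ where $\one$ is the all-ones vector, which is the top eigenvector of $A_G$ with eigenvalue $1$; equivalently it equals $\ip{f}{P g}$ where $P$ is the orthogonal projection onto $\Span(\one)$. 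So the left-hand side is $\abs{\ip{f}{(A_G - P)g}}$.

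Next I would decompose $f = f_0 + f^\perp$ and $g = g_0 + g^\perp$, where $f_0 = (\Ex{u}{f(u)})\one$ is the projection onto $\Span(\one)$ and $f^\perp \perp \one$, similarly for $g$. Since $A_G$ is symmetric and preserves $\Span(\one)$ and its orthogonal complement, and since $A_G - P$ kills the $\one$-component, we get $\ip{f}{(A_G-P)g} = \ip{f^\perp}{A_G g^\perp}$. Now $A_G$ restricted to $\one^\perp$ has operator norm at most $\lambda$ by the definition of a $\lambda$-expander (all non-top eigenvalues bounded by $\lambda$ in absolute value), so by Cauchy--Schwarz $\abs{\ip{f^\perp}{A_G g^\perp}} \leq \lambda\smallnorm{f^\perp}\smallnorm{g^\perp}$. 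Finally I would bound $\smallnorm{f^\perp} \leq \smallnorm{f} = \Ex{u}{f(u)^2}^{1/2}$ (and likewise for $g$), since removing the $\one$-component only decreases the norm, which yields exactly the claimed inequality.

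The argument is entirely routine; there is no real obstacle. The only point requiring a little care is the normalization bookkeeping: making sure that with the \emph{normalized} adjacency matrix and \emph{averaging} (rather than summing) inner products, the factor-of-$d$ and factor-of-$|V|$ conventions line up so that $\Ex{(u,v)\in E}{f(u)g(v)}$ really is $\ip{f}{A_G g}$ with no stray constants. One should also note that the statement as written does not require $f$ or $g$ to be nonnegative, so the decomposition-and-Cauchy--Schwarz route (as opposed to a combinatorial set-counting version) is the natural one.
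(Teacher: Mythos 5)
Your proof is correct and is the canonical spectral argument: identifying the edge-average with $\ip{f}{A_G g}$ under the expectation inner product, subtracting the rank-one projection onto $\Span(\one)$, and applying Cauchy--Schwarz together with the bound $\lambda$ on the operator norm of $A_G$ restricted to $\one^\perp$; the normalization bookkeeping you flag does work out because the paper's expanders are $d$-regular and $A_G$ is normalized. The paper itself states this lemma in the preliminaries without proof (treating it as standard background), so there is no in-paper argument to compare against, but your write-up is exactly the proof one would supply.
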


This reduction in number of choices from $n^2$ to $\frac{nd}{2}$ upto $\lambda$ error is what allows expanders to be useful in many derandomization applications.

We will be also be heavily using bipartite expanders, which can be obtained as 2-covers of usual expanders, and also satisfy a corresponding expander mixing lemma.

\begin{definition}[Bipartite expander]
	Let $G=(L,R,E)$ be a $d$-regular bipartite graph, and let $A_G$ be the $L \times R$ normalized biadjacency matrix of $G$ so that its top singular value is 1. The graph $G$ is called an $(n,d,\lambda)$-bipartite-expander if $A_G$ has all the other singular values bounded by $\lambda$. Again, when $n$ and $d$ are clear from context, we will call $G$ simply a $\lambda$-bipartite-expander. Moreover, when it is clear from context that we are dealing with bipartite graphs, we will simply call $G$ a $\lambda$-expander.
\end{definition}

\begin{lemma}[Expander Mixing Lemma (EML) on bipartite graphs]\label{lem:eml_bipartite}
Let $G=(L,R,E)$ be a $\lambda$-bipartite expander. Let $f : L\rightarrow \R$ and $g:R\rightarrow \R$ be two functions on $L$ and $R$ respectively. The EML for bipartite graphs says that
	\[
		\abs{\Ex{(\li,\ri) \in E}{f(\li)\cdot g(\ri)} - \Ex{\li\in L}{f(\li)}\cdot \Ex{\ri\in R}{g(\ri)}} \leq \lambda \cdot \Ex{\li \in L}{f(\li)^2}^{1/2} \cdot \Ex{\ri\in R}{g(\ri)^2}^{1/2}
	\]
\end{lemma}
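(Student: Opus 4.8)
The plan is to prove \cref{lem:eml_bipartite} by the standard spectral argument: view $f$ and $g$ as vectors, split each into its mean component and the part orthogonal to the all-ones vector, and control the orthogonal contribution using the singular value bound defining a $\lambda$-bipartite-expander. The only point requiring genuine care is keeping the normalizations straight, so I would fix them once at the outset. Since $G$ is $d$-regular bipartite we have $|L|=|R|=:n$ and $|E|=nd$; writing $B$ for the $\{0,1\}$ biadjacency matrix we have $A_G=\tfrac1d B$, and therefore
\[
	\Ex{(\li,\ri)\in E}{f(\li)\,g(\ri)} \;=\; \frac{1}{nd}\sum_{(\li,\ri)\in E} f(\li)\,g(\ri) \;=\; \frac{1}{nd}\, f^{\trans} B\, g \;=\; \frac{1}{n}\, f^{\trans} A_G\, g .
\]

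Next I would use that $d$-regularity makes the all-ones vectors the top singular pair of $A_G$: $A_G\one_R=\one_L$ and $A_G^{\trans}\one_L=\one_R$, matching the singular value $1$ asserted in the definition. Decompose $f=\bar f\,\one_L+f^{\perp}$ with $\bar f=\Ex{\li\in L}{f(\li)}$ and $f^{\perp}\perp\one_L$, and similarly $g=\bar g\,\one_R+g^{\perp}$. Expanding $f^{\trans}A_G g$ along this splitting, the two cross terms vanish because e.g. $(f^{\perp})^{\trans}A_G\one_R=(f^{\perp})^{\trans}\one_L=0$ and $\one_L^{\trans}A_G g^{\perp}=\one_R^{\trans}g^{\perp}=0$, while the $(\one_L,\one_R)$ term contributes $\bar f\bar g\cdot \one_L^{\trans}\one_L=n\bar f\bar g$. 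Hence
\[
	\Ex{(\li,\ri)\in E}{f(\li)\,g(\ri)} \;-\; \Ex{\li\in L}{f(\li)}\cdot\Ex{\ri\in R}{g(\ri)} \;=\; \frac{1}{n}\,(f^{\perp})^{\trans} A_G\, g^{\perp}.
\]

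Finally, since $g^{\perp}$ is orthogonal to the top right singular vector, the definition of an $(n,d,\lambda)$-bipartite-expander gives $\norm{A_G g^{\perp}}\le \lambda\norm{g^{\perp}}$, so by Cauchy--Schwarz $\abs{(f^{\perp})^{\trans}A_G g^{\perp}}\le \lambda\norm{f^{\perp}}\norm{g^{\perp}}\le \lambda\norm{f}\norm{g}$, using $\norm{f^{\perp}}\le\norm{f}$ and $\norm{g^{\perp}}\le\norm{g}$. Dividing by $n$ and rewriting $\tfrac1n\norm{f}\norm{g}=\Ex{\li\in L}{f(\li)^2}^{1/2}\cdot\Ex{\ri\in R}{g(\ri)^2}^{1/2}$ yields exactly the claimed inequality. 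I do not expect a real obstacle here; the argument is routine, and the only thing to get right is the bookkeeping distinguishing edge-uniform from vertex-uniform averages together with the $1/d$ factor in $A_G$, which is why I would state that conversion explicitly before starting.
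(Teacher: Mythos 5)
Your proof is correct, and it is the standard spectral argument for the bipartite expander mixing lemma; the normalizations (edge-uniform average equals $\tfrac1n f^{\trans}A_G g$, the cross terms vanishing by $d$-regularity, and the final $\tfrac1n\norm{f}\norm{g}$ conversion) are all handled properly. The paper itself states this lemma as a known fact in the preliminaries without proof, so there is no alternative argument to compare against; your write-up supplies exactly the proof one would expect.
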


\section{Graph based codes}
\paragraph{Expander codes.}
We recap the construction from~\cite{Zemor01}, building on ideas from \cite{Tanner81, SS96},
which constructs an infinite family of good codes starting from any good (inner) linear code over a
small fixed block length of rate larger than $1/2$. The code $\calC_0$ is also referred to as the
base code for $\calC^{Tan}$.
\begin{definition}
	Given an inner linear code $\calC_0$ on alphabet $[q]$ and block length $d$, and a $d$-regular graph $G(V,E)$, we define the Tanner code $\TanC$ as
	\[
		\TanC = \{ h: E\rightarrow [q] \suchthat h|_{N(v)} \in \calC_0, \forall v \in V\}
	\]
	where $h|_S$ for $S\subseteq E$ denotes the restriction of $h$ to the set of coordinates indexed by $S$.
	In this work, we will only use Tanner codes defined on bipartite graphs. 
\end{definition}
By using an infinite family of graphs with constant degree $d$, we get an infinite family of codes based on an inner code of block length $d$.

\begin{theorem}[\cite{SS96}, \cite{Zemor01}]
	Let the distance and rate of the inner code $\calC_0$ be $\delta_0$ and $\rho_0$ respectively, and those of associated $\TanC$ be $\delta$ and $\rho$. If $G$ is an $(n,d,\lambda)$-expander, then $\delta \geq \delta_0\cdot (\delta_0-\lambda)$ and $\rho\geq 2\rho_0-1$.
\end{theorem}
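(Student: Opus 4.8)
The plan is to prove the two bounds by separate counting arguments, with the bipartite expander mixing lemma (\cref{lem:eml_bipartite}) doing the real work in the distance estimate. Throughout I would write $G=(L,R,E)$ with $|L|=|R|=n$, so that the blocklength of $\TanC$ is $|E|=nd$, and I would phrase everything for a bipartite $\lambda$-expander (the statement's ``$(n,d,\lambda)$-expander'' terminology should be read in the bipartite sense here, since the construction only uses bipartite graphs).

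\textbf{Rate bound.} First I would count linear constraints. Membership $h\in\TanC$ imposes, for each of the $2n$ vertices $v$, the requirement $h|_{N(v)}\in\calC_0$; since $\calC_0$ has dimension $\rho_0 d$, this is $(1-\rho_0)d$ homogeneous linear constraints on the $nd$ coordinates of $h$. Hence $\dim\TanC \ge nd - 2n(1-\rho_0)d = nd(2\rho_0-1)$, and dividing by the blocklength $nd$ gives $\rho \ge 2\rho_0-1$. (This is only informative in the regime $\rho_0 > 1/2$, which is exactly where the construction is applied.)

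\textbf{Distance bound.} Fix a nonzero codeword $h\in\TanC$, let $S\subseteq E$ be its support, and set $\sigma := |S|/(nd)$; the goal is $\sigma \ge \delta_0(\delta_0-\lambda)$. Let $L'\subseteq L$ and $R'\subseteq R$ be the vertices incident to at least one edge of $S$, and put $\ell := |L'|/n$, $r := |R'|/n$. The key local step: for every $v\in L'\cup R'$, the restriction $h|_{N(v)}$ is a \emph{nonzero} codeword of $\calC_0$ (it is nonzero on some coordinate), hence has at least $\delta_0 d$ nonzero coordinates, i.e.\ $v$ is incident to at least $\delta_0 d$ edges of $S$. Summing this over $L'$ (each edge of $S$ counted once on the left) gives $|S|\ge \delta_0 d\,|L'|$, and symmetrically $|S|\ge \delta_0 d\,|R'|$, so $\ell \le \sigma/\delta_0$ and $r\le\sigma/\delta_0$. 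Since every edge of $S$ runs between $L'$ and $R'$, the fraction of edges of $G$ landing in $L'\times R'$ is at least $\sigma$. Now apply \cref{lem:eml_bipartite} with $f=\mathds{1}_{L'}$ and $g=\mathds{1}_{R'}$: the left expectation is $\ge\sigma$, the product of averages is $\ell r$, and the error term is $\lambda\sqrt{\ell r}$, so $\sigma \le \ell r + \lambda\sqrt{\ell r} \le \sigma^2/\delta_0^2 + \lambda\sigma/\delta_0$. Dividing by $\sigma>0$ and rearranging yields $\sigma \ge \delta_0^2 - \lambda\delta_0 = \delta_0(\delta_0-\lambda)$.

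\textbf{Expected obstacle.} There is no deep obstacle — this is essentially the Sipser--Spielman/Z\'emor expander-code argument — but the delicate bookkeeping is in the distance step: one must bound $|L'|$ and $|R'|$ \emph{separately} using the ``$\ge\delta_0 d$ incident edges of $S$'' count on each side, since a one-sided bound does not suffice (the mixing-lemma error scales like $\sqrt{\ell r}$, not $\ell$ or $r$ alone). The other thing to get right is matching the normalization of \cref{lem:eml_bipartite} (it controls the \emph{fraction} of edges in $L'\times R'$, not the raw count), and noting that we only ever need a lower bound on that fraction, which $S\subseteq E(L',R')$ supplies.
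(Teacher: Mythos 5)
Your proof is correct and follows the standard Sipser--Spielman/Z\'emor argument; the paper states this theorem without proof (citing the originals), but its own SoS generalization in \cref{lem:distance_from_codeword} uses exactly the same structure — a local lower bound $\dis \geq \delta_0 \cdot (\text{fraction of active vertices})$ on each side, an expander-mixing upper bound, and a comparison of the two. Your bookkeeping with $\ell$, $r$, and the rate count are all fine.
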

The codes above were shown to be linear time decodable up to their unique decoding radius (for the designed distance) in \cite{Zemor01} and \cite{SkaRoth03}.
\paragraph{Alon-Edmonds-Luby distance amplification}\label{sec:AEL_prelims}

The following distance amplification scheme was introduced in \cite{ABNNR92}, \cite{AEL95} and used by \cite{GI05} to design linear time unique decodable near-MDS codes.

\begin{definition}[Concatenated codes]
	Fix an $(n,d,\lambda)$-expander $G(L,R,E)$. Let $\calC_{1}$ be an $[n,\delta_{1},r_{1}]_{q_1}$ code and let $\calC_{0}$ be a $[d,\delta_{0},r_{0}]_{q_0}$ code with $q_1 = |\calC_0|$. 
	
	We define the concatenation of $f\in [q_1]^L$ with $\calC_{0}: [q_1]\rightarrow [q_0]^d$ as $f^*: E \rightarrow [q_0]$, defined as
	\[
		f^{*}_{\calC_{0}}(e) ~=~ \calC_{0}(f(\li))(j)
	\]
	where $\li$ is the left endpoint of edge $e$ and $e$ is the $j^{th}$ edge incident on $\li$.
	The concatenated code $\calC^{*}_{\calC_{0}}(\calC_1)$ of block length $n\cdot d$ and alphabet $[q_0]$ is defined to be,
	\[
		\calC^{*}_{\calC_{0}}(\calC_{1}) ~=~ \{ f^{*}_{\calC_{0}} \suchthat f \in \calC_{1}\}
	\]
	When clear from context, we will omit $\calC_{0},\calC_1$ in the above notation to call the concatenated code $\calC^*$.
\end{definition}

\begin{claim}
	$\dis(f^{*}_{\calC_{0}},g^{*}_{\calC_{0}}) ~\geq~ \delta_{0}\cdot \dis(f,g)$, which also
        implies $\Delta(\calC^{*}_{\calC_{0}}(\calC_{1})) \geq \delta_{0} \cdot \delta_{1}$.
\end{claim}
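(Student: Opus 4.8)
The plan is to unfold the definition of the concatenated symbols and reduce the distance of $f^*_{\calC_0}$ and $g^*_{\calC_0}$ to a coordinate-by-coordinate comparison over the left vertices. Recall that for an edge $e$ that is the $j^{th}$ edge incident on its left endpoint $\li$, we have $f^*_{\calC_0}(e) = \calC_0(f(\li))(j)$ and similarly $g^*_{\calC_0}(e) = \calC_0(g(\li))(j)$. So for a fixed left vertex $\li \in L$, the $d$ edges incident on $\li$ carry exactly the $d$ symbols of the inner codeword $\calC_0(f(\li)) \in [q_0]^d$ (resp. $\calC_0(g(\li))$). The key observation is that if $f(\li) \neq g(\li)$, then $\calC_0(f(\li))$ and $\calC_0(g(\li))$ are two distinct codewords of $\calC_0$, hence differ in at least a $\delta_0$ fraction of the $d$ coordinates; whereas if $f(\li) = g(\li)$ the two inner codewords are identical on all $d$ edges incident on $\li$.

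First I would use the edge bijection $E \cong L \times [d]$ to write the (fractional) distance as an average over $L$ of the per-vertex agreement:
\[
	\dis(f^*_{\calC_0}, g^*_{\calC_0}) = \Ex{\li \in L}{\,\Ex{j \in [d]}{\indi{\calC_0(f(\li))(j) \neq \calC_0(g(\li))(j)}}\,}.
\]
For each $\li$ with $f(\li) \neq g(\li)$, the inner average is $\dis(\calC_0(f(\li)), \calC_0(g(\li))) \geq \delta_0$; for each $\li$ with $f(\li) = g(\li)$ it is $0$. Hence the whole expression is at least $\delta_0 \cdot \Pr{\li \in L}{f(\li) \neq g(\li)} = \delta_0 \cdot \dis(f,g)$, which is the claimed bound. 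Taking the minimum over distinct pairs $f, g \in \calC_1$ and using $\dis(f,g) \geq \delta_1$ then gives $\Delta(\calC^*_{\calC_0}(\calC_1)) \geq \delta_0 \cdot \delta_1$; note that distinct $f \neq g$ in $\calC_1$ produce distinct $f^*_{\calC_0} \neq g^*_{\calC_0}$ precisely because the bound forces their distance to be positive.

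This argument is essentially bookkeeping, so I do not expect a genuine obstacle; the only point requiring care is the indexing — making sure that as $j$ ranges over $[d]$, the edges $(\li, j)$ really enumerate each inner-codeword coordinate exactly once, so that the uniform average over edges incident on $\li$ coincides with the uniform average over the $d$ coordinates of $\calC_0$. This is exactly the bijection $E \cong L \times [d]$ fixed in the preliminaries, so once that is invoked the rest is immediate.
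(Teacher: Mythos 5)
Your proof is correct and is the standard argument this claim rests on (the paper itself states the claim without proof): partition the edges as $L\times[d]$, note that each left vertex with $f(\li)\neq g(\li)$ contributes at least a $\delta_0$ fraction of disagreements among its $d$ incident edges by the distance of $\calC_0$ and the injectivity of the encoding map, and average over $L$. The indexing point you flag is exactly the bijection fixed in the preliminaries, so nothing is missing.
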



\begin{definition}[AEL Codes]
	Fix an $(n,d,\lambda)$-expander $G(L,R,E)$. Let $\calC_{1}$ be an $[n,\delta_{1},r_{1}]_{q_1}$ code and let $\calC_{0}$ be a $[d,\delta_{0},r_{0}]_{q_0}$ code with $q_1 = |\calC_0|$. 
We define the AEL-encoding $f^{AEL}_{\calC_{0}}: R \rightarrow [q_0]^d$ as
	\[
		f^{AEL}_{\calC_{0}} (\ri) ~=~ \left( f^{*}_{\calC_{0}}(e_1),f^{*}_{\calC_{0}}(e_2),\cdots,f^{*}_{\calC_{0}}(e_d) \right)
	\]
	where $e_1,e_2,\cdots,e_d$ are the $d$ edges incident on $\ri$.
	The AEL code $\calC^{AEL}_{\calC_{0}}(\calC_{1}) \subseteq [q_0^d]^{n}$ is defined as 
	\[
		\calC^{AEL}_{\calC_{0}}(\calC_{1}) ~=~ \{ f^{AEL}_{\calC_{0}} \suchthat f \in \calC_{1}\}
	\]
	When clear from context, we will omit $\calC_{0},\calC_1$ in the above notation to call the AEL code $\AELC$.
\end{definition}
Alon, Edmonds and Luby, proved the following result, which shows that the construction can be used
to amplify the distance to $\delta_0$, by choosing $\lambda$ sufficiently small.
\begin{theorem}[\cite{AEL95}]\label{thm:ael_distance}
	$\dis(f^{AEL}_{\calC_{0}},g^{AEL}_{\calC_{0}}) ~\geq~ \delta_{0}-
        \frac{\lambda}{\dis(f,g)}$, which also implies $\Delta(\calC^{AEL}_{\calC_{0}}(\calC_{1}))
        ~\geq~ \delta_{0} - \frac{\lambda}{\delta_{1}}$.
\end{theorem}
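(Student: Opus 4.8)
The plan is to reduce the statement about the amplified encoding to an edge-counting argument on the bipartite expander, seeding the count with the distance of the inner code $\calC_0$ and propagating it with the Expander Mixing Lemma (\cref{lem:eml_bipartite}). Throughout I will omit the $\calC_0$ subscripts and write $\mu = \dis(f,g)$ and $D = \{\li \in L : f(\li)\neq g(\li)\}$, so that $|D| = \mu n$. The first step is the local observation underlying the concatenation claim: the restriction of $f^{*}$ to the $d$ edges incident on a fixed $\li$ is exactly the inner codeword $\calC_0(f(\li))$, and likewise for $g$. Hence if $\li \in D$ the codewords $\calC_0(f(\li))$ and $\calC_0(g(\li))$ are distinct, so by the distance of $\calC_0$ they disagree on at least $\delta_0 d$ of those edges; if $\li \notin D$ they agree on all $d$. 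Call an edge $e$ \emph{bad} if $f^{*}(e)\neq g^{*}(e)$; then every bad edge has its left endpoint in $D$, and every $\li \in D$ is incident on at least $\delta_0 d$ bad edges.

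Next I would translate the target quantity into the language of bad edges. By the definition of the AEL encoding, $f^{AEL}(\ri)$ and $g^{AEL}(\ri)$ differ precisely when some edge incident on $\ri$ is bad. So if $S \subseteq R$ denotes the set of right vertices incident on \emph{no} bad edge, then $\dis(f^{AEL}, g^{AEL}) = 1 - |S|/n$, and it suffices to prove $|S|/n \le 1 - \delta_0 + \lambda/\mu$. The structural input is that every edge between $D$ and $S$ must be good — a bad edge landing in $S$ would contradict the definition of $S$ — so, writing $e(D,S)$ for the number of edges with left endpoint in $D$ and right endpoint in $S$, we get $e(D,S) \le \sum_{\li\in D}(d - \delta_0 d) = (1-\delta_0)\,d\,|D|$.

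Finally I would invoke \cref{lem:eml_bipartite} with $\mathbf{1}_D$ on the left and $\mathbf{1}_S$ on the right. Setting $\sigma = |S|/n$, the lemma gives $e(D,S)/(nd) \ge \mu\sigma - \lambda\sqrt{\mu\sigma}$, while the structural bound gives $e(D,S)/(nd) \le (1-\delta_0)\mu$. Combining these and dividing by $\mu$ yields $\sigma \le (1-\delta_0) + \lambda\sqrt{\mu\sigma}/\mu \le (1-\delta_0) + \lambda/\mu$, where the last inequality uses only that $\mu\sigma \le 1$, so $\sqrt{\mu\sigma}\le 1$. This is exactly the claimed bound $|S|/n \le 1-\delta_0+\lambda/\mu$, hence $\dis(f^{AEL},g^{AEL}) \ge \delta_0 - \lambda/\dis(f,g)$. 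The statement about $\Delta(\calC^{AEL})$ then follows by applying this to any two distinct codewords $f,g \in \calC_1$, for which $\dis(f,g) \ge \delta_1$, and using that $t \mapsto \delta_0 - \lambda/t$ is increasing (with the usual understanding that the bound is meaningful only once $\lambda$ is small enough that $\delta_0 - \lambda/\delta_1 > 0$, which is also what ensures $f^{AEL}\neq g^{AEL}$).

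I expect the only genuinely delicate point — the ``main obstacle'' — to be the accounting in the EML step: one must handle the error term $\lambda\sqrt{\mu\sigma}$ so that the final dependence comes out as $\lambda/\mu$, which rests on the elementary bound $\sqrt{\mu\sigma}\le 1$ rather than the wasteful $\sqrt{\mu\sigma}\le\sqrt{\mu}$. Everything else is bookkeeping: checking the trivial regime $\sigma \le 1-\delta_0$ separately, and verifying that $f^{*}$ restricted to $N_L(\li)$ really is the inner codeword $\calC_0(f(\li))$ under the fixed edge ordering, so that the distance of $\calC_0$ applies coordinate-for-coordinate.
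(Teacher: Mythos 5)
Your proof is correct, and it is essentially the argument the paper uses (see the proof of \cref{lem:AEL_amplification}, which is the pseudocodeword version of this statement with $\eta=0$): the inner-code distance forces at least $\delta_0 d$ disagreeing edges at each left vertex in $D$, and a single application of \cref{lem:eml_bipartite} to the indicators of the left-disagreement set and a right-side set converts this into the $\delta_0 - \lambda/\dis(f,g)$ bound. The only (cosmetic) difference is that you count good edges into the right agreement set $S$, whereas the paper sandwiches the edge-level distance $\dis(f^*,g^*)$ between $\delta_0\cdot\dis^L(f,g)$ and $\dis^L(f,g)\cdot\dis^R(f,g)+\lambda$, working with the right disagreement set $R\setminus S$ instead.
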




A codeword $f$ of $\AELC$ technically belongs to the space $[q_0^d]^R$. However, we will often choose to identify codewords of $\AELC$ as belonging to $[q_0]^E$. It is clear that the two spaces are in bijection with each other and in fact, one can just \emph{fold} or \emph{unfold} the symbols to move from one space to the other. Choosing $f$ to be in $[q_0]^E$ allows us to talk about $f$ viewed from left vertex set $L$ or right vertex set $R$ simply by changing the distance functions. Let $f,g \in \AELC$, then we define the following three distance functions:
\begin{align*}
	\dis^L(f,g) &\defeq \Ex{\li\in L}{ \indi{ f_{N_L(\li)} \neq g_{N_L(\li)}}} \\
	\dis(f,g) &\defeq \Ex{e\in E}{\indi{ f_e \neq g_e}} = \Ex{\li \in L}{\dis(f_{N_L(\li)},g_{N_L(\li)})} = \Ex{\ri \in R}{\dis(f_{N_R(\ri)},g_{N_R(\ri)})}\\
	\dis^R(f,g) &\defeq \Ex{\ri\in R}{ \indi{ f_{N_R(\ri)} \neq g_{N_R(\ri)}}} 
\end{align*}

With this notation, \cref{thm:ael_distance} can be stated in a simpler form.
\begin{theorem}[Restatement of \cref{thm:ael_distance}]
	$\dis^R(f,g) \geq \delta_0 - \frac{\lambda}{\dis^L(f,g)}$.
\end{theorem}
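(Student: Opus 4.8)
The plan is to deduce the restated inequality from a single application of the bipartite expander mixing lemma (\cref{lem:eml_bipartite}), following the standard template for analyzing $\AELC$. We may assume $f \ne g$ (otherwise $\dis^L(f,g) = 0$ and the inequality holds vacuously), so that $\alpha \defeq \dis^L(f,g) > 0$; set $\beta \defeq \dis^R(f,g)$. Let $L' \sub L$ be the set of left vertices $\li$ with $f_{N_L(\li)} \ne g_{N_L(\li)}$, so $\abs{L'} = \alpha n$, and let $R' \sub R$ be the set of right vertices $\ri$ with $f_{N_R(\ri)} \ne g_{N_R(\ri)}$, so $\abs{R \setminus R'} = (1-\beta)n$. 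The goal is to show that $L'$ and $R \setminus R'$ cannot both be large.

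First I would bound $\abs{E(L', R \setminus R')}$, the number of edges with one endpoint in $L'$ and the other in $R \setminus R'$, from above. For each $\li \in L'$, the restrictions $f_{N_L(\li)}$ and $g_{N_L(\li)}$ are distinct codewords of $\calC_0$ (each is $\calC_0$ applied to the $\li$-th symbol of the underlying $\calC_1$-codeword), hence disagree on at least $\delta_0 d$ of the $d$ edges at $\li$; so at most $(1-\delta_0)d$ edges at $\li$ are \emph{agreeing} ($f_e = g_e$). On the other hand, every edge incident to a vertex of $R \setminus R'$ is agreeing, by definition of $R'$. Therefore each $\li \in L'$ sends at most $(1-\delta_0)d$ edges into $R \setminus R'$, and summing over $L'$ gives $\abs{E(L', R \setminus R')} \le (1-\delta_0)\,\alpha\,nd$.

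Next I would apply \cref{lem:eml_bipartite} to the $\{0,1\}$-valued indicators of $L'$ (on $L$) and of $R \setminus R'$ (on $R$); being Boolean, their densities and the relevant second-moment factors are $\alpha$ and $1-\beta$, so the lemma gives $\abs{E(L', R \setminus R')}/(nd) \ge \alpha(1-\beta) - \lambda\sqrt{\alpha(1-\beta)}$. Chaining this with the combinatorial bound yields $\alpha(1-\beta) - \lambda\sqrt{\alpha(1-\beta)} \le (1-\delta_0)\alpha$. Using $\sqrt{\alpha(1-\beta)} \le \sqrt{\alpha}$, rearranging to $\alpha(\delta_0 - \beta) \le \lambda\sqrt{\alpha}$, and dividing by $\alpha > 0$ gives $\delta_0 - \beta \le \lambda/\sqrt{\alpha}$. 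Since $\alpha \le 1$ implies $\lambda/\sqrt{\alpha} \le \lambda/\alpha$, we obtain $\dis^R(f,g) = \beta \ge \delta_0 - \lambda/\dis^L(f,g)$, which is the claim (in fact with the slightly stronger denominator $\sqrt{\dis^L(f,g)}$).

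The step that I expect to need the actual idea, as opposed to routine manipulation, is the choice of which two vertex sets to feed into the mixing lemma. Directly bounding the number of disagreeing edges, or applying the lemma to $E(L', R')$, does not work: the mixing lemma only pins such counts down to an additive error of $\lambda$ times the geometric mean of the densities, and the disagreements need not be evenly spread. The fix is to argue contrapositively about the edges from the \emph{bad} left vertices to the \emph{good} right vertices: the distance of $\calC_0$ forces these edges to be few (they are necessarily agreeing), while expansion forbids them from being too few unless $R'$ is already large --- and $R'$ being large is precisely what we want. The rest is bookkeeping with the definitions of $\dis^L$, $\dis^R$, and the AEL encoding map.
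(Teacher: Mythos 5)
Your proof is correct and is essentially the paper's argument (cf.\ the pseudocodeword version in \cref{lem:AEL_amplification} and the quantum version in \cref{lem:zfc_dist}): both rest on the same two facts, namely that every bad left vertex contributes at least $\delta_0 d$ disagreeing edges all of which land in bad right vertices, and the bipartite expander mixing lemma \cref{lem:eml_bipartite} applied to the indicators of these bad sets. Passing to the complement $R \setminus R'$ is only a cosmetic change, and the sharper denominator $\sqrt{\dis^L(f,g)}$ you obtain is also available in the paper's version by retaining the geometric-mean error term $\lambda\sqrt{\dis^L \cdot \dis^R}$ instead of crudely bounding it by $\lambda$.
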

\section{Sum-of-Squares hierarchy}
The sum-of-squares hierarchy of semidefinite programs (SDPs) provides a family of increasingly
powerful convex relaxations for several optimization problems. 
Each ``level" $t$ of the hierarchy is parameterized by a set of constraints corresponding to
polynomials of degree at most $t$ in the optimization variables. While the relaxations in the
hierarchy can be viewed as  semidefinite programs of size $n^{O(t)}$ \cite{BS14, Laurent09}, 
it is often convenient to view the solution as a linear operator, called the ``pseudoexpectation" operator.
%

%
%
\vspace{-5 pt}
\paragraph{Pseudoexpectations}
Let $t$ be an positive even integer and fix an alphabet $[q]$. An SoS solution of degree $t$, or a pseudoexpectation of SoS-degree $t$, over the variables $\zee = \{Z_{i,j}\}_{i\in[m],j\in[q]}$ is represented by a linear operator $ \tildeEx{\cdot}: \R[\zee]^{\leq t} \rightarrow \R$ such that:
\vspace{-5 pt}
\begin{enumerate}[(i)]
    \item $\tildeEx{1} = 1$.
    \item $\tildeEx{p^2} \geq 0$ if $p$ is a polynomial in $\zee = \{Z_{i,j}\}_{i\in [m],j\in [q]}$ of degree $\leq t/2$.
\end{enumerate}
\vspace{-5 pt}
 Note that linearity implies $\tildeEx{p_1} + \tildeEx{p_2} = \tildeEx{p_1+p_2}$ and $\tildeEx{c\cdot
  p_1} = c \cdot \tildeEx{p_1}$ for $c\in \R$, for $p_1, p_2 \in \R[\zee]^{\leq t}$.
This also allows for a succinct representation of $\tildeEx{\cdot}$ using any basis for $\R[\zee]^{\leq t}$.

The set of all pseudoexpectations should be seen as a relaxation for the set of all possible
(distributions over) assignments to $m$ variables in alphabet $[q]$.
Indeed, any assignment $f: [m] \rightarrow [q]$, can be seen as a pseudoexpectation 
which assigns the value $1$ to a
monomial consistent with $f$ and $0$ otherwise. This can be extended via linearity to all
polynomials, and then by convexity of the constraints to all distributions over assignments.
However, the reverse is not true when $t < m$, and there can be degree-$t$ pseudoexpectations which do
not correspond to any genuine distribution.

It is possible to optimize over the set of degree-$t$ pseudoexpectations in time $m^{O(t)}$ via SDPs
(under certain conditions on the bit-complexity of solutions~\cite{OD16, RW17:sos}).
We next define what it means for pseudoexpectations to satisfy some problem-specific constraints.

\begin{definition}[Constrained Pseudoexpectations]\label{def:constraints_on_sos}
Let $\calS = \inbraces{f_1 = 0, \ldots, f_m = 0, g_1 \geq 0, \ldots, g_r \geq 0}$ be a system of
polynomial constraints, with each polynomial in $\calS$ of degree at most $t$. We say $\tildeEx{\cdot}$ is a pseudoexpectation operator respecting $\calS$, if in addition to the above conditions, it also satisfies
	\begin{enumerate}
	\item $\tildeEx{p \cdot f_i} = 0$,  $\forall i \in [m]$ and $\forall p$ such that $\deg(p \cdot f_i) \leq t$.
	\item $\tildeEx{p^2 \cdot \prod_{i \in S} g_i} \geq 0$, $\forall S \subseteq [r]$ and $\forall p$ such that $\deg(p^2 \cdot \prod_{i \in S} g_i) \leq t$.
	\end{enumerate}
\end{definition}
\paragraph{Local constraints and local functions.}
Any constraint that involves at most $k$ variables from $\zee$, with $k\leq t$, can be written as a degree-$k$ polynomial, and such constraints may be enforced into the SoS solution.
%
%
In particular, we will always consider the following canonical constraints on the variables $\zee$.
\ifnum\confversion=1
\begin{align*}
&Z_{i,j}^2 = Z_{i,j},\ \forall i\in[m],j\in[q] \\
\text{and} \quad &\sum_j Z_{i,j} = 1,\ \forall i\in[m] \mper
\end{align*}
\else
\[
Z_{i,j}^2 = Z_{i,j},\ \forall i\in[m],j\in[q] 
\quad \text{and} \quad 
\sum_j Z_{i,j} = 1,\ \forall i\in[m] \mper
\]
\fi
%
%
%
We will also consider additional constraints and corresponding polynomials, defined by ``local" functions. For any $f\in [q]^m$ and $M\sub [m]$, we use $f_M$ to denote the restriction $f|_M$, and $f_i$ to denote $f_{\{i\}}$ for convenience.
\begin{definition}[$k$-local function]
	A function $\mu: [q]^m \rightarrow \R$ is called $k$-local if there is a set $M\subseteq [m]$ of size $k$ such that $\mu(f)$ only depends on $\inbraces{f(i)}_{i\in M}$, or equivalently, $\mu(f)$ only depends on $f|_M$.
	
	If $\mu$ is $k$-local, we abuse notation to also use $\mu: [q]^M \rightarrow \R$ with $\mu(\alpha) = \mu(f)$ for any $f$ such that $f|_M=\alpha$. It will be clear from the input to the function $\mu$ whether we are using $\mu$ as a function on $[q]^m$ or $[q]^M$.
\end{definition}

Let $\mu:[q]^m\rightarrow \R$ be a $k$-local function that depends on coordinates $M\subseteq [m]$ with $|M|=k$. Then $\mu$ can be written as a degree-$k$ polynomial $p_{\mu}$ in $\zee$:
\[
	p_{\mu}(\zee) = \sum_{\alpha \in [q]^M} \inparen{\mu(\alpha) \cdot\prod_{i\in M} Z_{i,\alpha_i}}
\]


With some abuse of notation, we let $\mu(\zee)$ denote $p_{\mu}(\zee)$. We will use such $k$-local
functions inside $\tildeEx{\cdot}$ freely without worrying about their polynomial
representation. For example, $\tildeEx{ \indi{\zee_{i} \neq j}}$ denotes $\tildeEx{ 1- Z_{i,j}}$. 
%
The notion of $k$-local functions can also be extended from real valued functions to vector valued functions in a straightforward way.

\begin{definition}[vector-valued local functions]
A function $\mu: [q]^m \rightarrow \R^N$ is $k$-local if the $N$ real valued functions corresponding to the $N$ coordinates are all $k$-local. Note that these different coordinate functions may depend on different sets of variables, as long as the number is at most $k$ for each of the functions.
\end{definition}
\vspace{-5 pt}
\paragraph{Local distribution view of SoS}

It will be convenient to use a shorthand for the function $\indi{\zee_S = \alpha}$, and we will use $\zee_{S,\alpha}$. Likewise, we use $\zee_{i,j}$ as a shorthand for the function $\indi{\zee_i = j}$. That is, henceforth,
\ifnum\confversion=1
\begin{align*}
	&\tildeEx{\zee_{S,\alpha}} = \tildeEx{\indi{\zee_S = \alpha}} = \tildeEx{ \prod_{s\in S}
                                    Z_{s,\alpha_s}}  \\
	\text{and } \quad & \tildeEx{\zee_{i,j}} = \tildeEx{\indi{\zee_i = j}} = \tildeEx{ Z_{i,j}}.
\end{align*}
\else
\begin{align*}
	\tildeEx{\zee_{S,\alpha}} ~=~ \tildeEx{\indi{\zee_S = \alpha}} = \tildeEx{ \prod_{s\in S}
                                    Z_{s,\alpha_s}} 
\qquad \text{and} \qquad
	\tildeEx{\zee_{i,j}} ~=~ \tildeEx{\indi{\zee_i = j}} = \tildeEx{ Z_{i,j}}
\end{align*}
\fi


Note that for any $S \subseteq [m]$ with $\abs{S} = k \leq t/2$,
\ifnum\confversion=1
\begin{gather*}
	\sum_{ \alpha \in [q]^{k}} \tildeEx{\zee_{S,\alpha}} = 
 \tildeEx{ \prod_{s\in S} \inparen{ \sum_{j\in [q]} Z_{s,j}} } = 1
\\
	\tildeEx{\zee_{S,\alpha}} = \tildeEx{ \prod_{s\in S} Z_{s,\alpha_s}} = \tildeEx{ \prod_{s\in
            S} Z^2_{s,\alpha_s}} \geq 0 \mper
\end{gather*}
\else
\[
	\sum_{ \alpha \in [q]^{k}} \tildeEx{\zee_{S,\alpha}} = 
 \tildeEx{ \prod_{s\in S} \inparen{ \sum_{j\in [q]} Z_{s,j}} } = 1
\qquad \text{and} \qquad
	\tildeEx{\zee_{S,\alpha}} = \tildeEx{ \prod_{s\in S} Z_{s,\alpha_s}} = \tildeEx{ \prod_{s\in
            S} Z^2_{s,\alpha_s}} \geq 0 \mper
\]
\fi
Thus, the values $\inbraces{\tildeEx{\zee_{S, \alpha}}}_{\alpha\in [q]^S}$ define a distribution
over $[q]^k$, referred to as the local  distribution for $\zee_S$.

%
%
Let $\mu: [q]^m \rightarrow\R$ be a $k$-local function for $k\leq t/2$, depending on $M \subseteq
[m]$. Then, $\tildeEx{\mu(\zee)}$ can be seen as the expected value of the function $\mu$ under the
local distribution for $M$, since
\ifnum\confversion=1
\begin{align*}
	\tildeEx{\mu(\zee)} 
~=~& \tildeEx{\sum_{\alpha\in [q]^M} \inparen{\mu(\alpha) \cdot\prod_{i\in M} Z_{i,\alpha_i}}}\\
~=~& \sum_{\alpha\in [q]^M} \mu(\alpha) \cdot \tildeEx{\prod_{i\in M} Z_{i,\alpha_i}}\\
~=~& \sum_{\alpha\in [q]^M} \mu(\alpha) \cdot \tildeEx{\zee_{M,\alpha}} \mper
\end{align*}
\else
\begin{align*}
	\tildeEx{\mu(\zee)} 
~=~ \tildeEx{\sum_{\alpha\in [q]^M} \inparen{\mu(\alpha) \cdot\prod_{i\in M} Z_{i,\alpha_i}}}
~=~ \sum_{\alpha\in [q]^M} \mu(\alpha) \cdot \tildeEx{\prod_{i\in M} Z_{i,\alpha_i}}
~=~ \sum_{\alpha\in [q]^M} \mu(\alpha) \cdot \tildeEx{\zee_{M,\alpha}} \mper
\end{align*}
\fi

\begin{claim}
	Let $\tildeEx{\cdot}$ be a degree-$t$ pseudoexpectation. For $k \leq t/2$, let $\mu_1,\mu_2$
        be two $k$-local functions on $[q]^m$, depending on the same set of coordinates $M$, and
        $\mu_1(\alpha) \leq \mu_2(\alpha) ~~\forall \alpha \in [q]^M$. Then $\tildeEx{\mu_1(\zee)} \leq \tildeEx{\mu_2(\zee)}$.
%
\end{claim}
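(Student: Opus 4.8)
The plan is to reduce the claim to the monotonicity of expectation under the local distribution $D_M$ for the coordinate set $M$. First I would use the polynomial representation of $k$-local functions recorded just above: since both $\mu_1$ and $\mu_2$ depend only on coordinates in $M$ with $\abs{M}=k\leq t/2$, the polynomials $p_{\mu_1}, p_{\mu_2}$ have degree $k \leq t$, so they lie in $\R[\zee]^{\leq t}$, and by linearity of $\tildeEx{\cdot}$,
\[
	\tildeEx{\mu_b(\zee)} ~=~ \sum_{\alpha \in [q]^M} \mu_b(\alpha)\cdot \tildeEx{\zee_{M,\alpha}}, \qquad b\in\{1,2\}.
\]

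Next I would invoke the two facts derived immediately before the claim: for every $\alpha\in[q]^M$ we have $\tildeEx{\zee_{M,\alpha}} = \tildeEx{\prod_{s\in M} Z_{s,\alpha_s}^2}\geq 0$ (this uses the canonical constraint $Z_{s,j}^2 = Z_{s,j}$ together with the SoS positivity condition applied to the square of the degree-$k$ monomial $\prod_{s\in M} Z_{s,\alpha_s}$, which is legal since $k\leq t/2$), and $\sum_{\alpha\in[q]^M}\tildeEx{\zee_{M,\alpha}} = 1$. Hence the numbers $\inbraces{\tildeEx{\zee_{M,\alpha}}}_{\alpha\in[q]^M}$ are the weights of a genuine probability distribution $D_M$ over $[q]^M$, and $\tildeEx{\mu_b(\zee)} = \Ex{\alpha\sim D_M}{\mu_b(\alpha)}$.

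Finally, the hypothesis $\mu_1(\alpha)\leq\mu_2(\alpha)$ for all $\alpha\in[q]^M$, combined with the nonnegativity of the weights $\tildeEx{\zee_{M,\alpha}}$, gives the inequality term by term:
\[
	\tildeEx{\mu_1(\zee)} ~=~ \sum_{\alpha} \mu_1(\alpha)\,\tildeEx{\zee_{M,\alpha}} ~\leq~ \sum_{\alpha} \mu_2(\alpha)\,\tildeEx{\zee_{M,\alpha}} ~=~ \tildeEx{\mu_2(\zee)},
\]
i.e.\ this is just monotonicity of expectation under $D_M$. I do not expect any real obstacle here; the only point requiring a modicum of care is that the claim fixes a \emph{common} set $M$ for both functions, so the local distribution $D_M$ is well defined and has support degree $\leq t/2$ — if $\mu_1,\mu_2$ depended on different sets one would first have to pad to their union, which would only be harmless when that union still has size $\leq t/2$. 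As an alternative one-line route, one could instead set $\mu := \mu_2-\mu_1$ (a $k$-local, pointwise nonnegative function), let $\nu := \sqrt{\mu}$ (also $k$-local), observe that modulo the canonical ideal generated by $Z_{s,j}^2 - Z_{s,j}$ and $Z_{s,j}Z_{s,j'}$ (for $j\neq j'$) one has $p_\nu(\zee)^2 \equiv p_{\mu}(\zee)$, and conclude $\tildeEx{\mu(\zee)} = \tildeEx{p_\nu(\zee)^2}\geq 0$ by SoS positivity since $\deg p_\nu = k\leq t/2$. I would present the local-distribution argument as the main one since it is the most transparent.
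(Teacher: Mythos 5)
Your main argument is correct and is essentially the paper's own proof: the paper also passes to the local distribution $\calD_M$ induced by $\tildeEx{\cdot}$ on $\zee_M$ (well defined since $|M| = k \leq t/2$), writes $\tildeEx{\mu_b(\zee)} = \Ex{\alpha\sim\calD_M}{\mu_b(\alpha)}$, and concludes by monotonicity of expectation. Your added care about the common coordinate set $M$ and the alternative SoS-positivity route via $\sqrt{\mu_2-\mu_1}$ are fine but not needed.
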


\begin{proof}
Let $\calD_M$ be the local distribution induced by $\tildeEx{\cdot}$ for $\zee_M$. Then
$\tildeEx{\mu_1(\zee)} = \Ex{\alpha \sim \calD_M}{\mu_1(\alpha)}$, and $\tildeEx{\mu_2(\zee)} =
\Ex{\alpha\sim \calD_M}{\mu_2(\alpha)}$, which implies $\tildeEx{\mu_1(\zee)} \leq \tildeEx{\mu_2(\zee)}$.
%
\end{proof}
The previous claim allows us to replace any local function inside $\tildeEx{\cdot}$ by another local function that dominates it. We will make extensive use of this fact.
\vspace{-5 pt}
\paragraph{Covariance for SoS solutions}
Given two sets $S,T \sub [m]$ with $|S|,|T|\leq k/4$, we can define the covariance between indicator random variables of $\zee_S$ and $\zee_T$ taking values $\alpha$ and $\beta$ respectively, according to the local distribution over $S \cup T$. This is formalized in the next definition.
\begin{definition}
Let $\tildeEx{\cdot}$ be a pseudodistribution operator of SoS-degree-$t$, and $S,T$ are two sets of
size at most $t/4$, and $\alpha\in [q]^S$, $\beta\in [q]^T$, we define the pseudo-covariance and
pseudo-variance,
\ifnum\confversion=1
\small
\begin{gather*}
\tildecov(\zee_{S,\alpha},\zee_{T,\beta}) 
= \tildeEx{ \zee_{S,\alpha} \cdot \zee_{T,\beta} } - \tildeEx{\zee_{S,\alpha}} \tildeEx{\zee_{T,\beta}} \\	
\tildeVar{\zee_{S,\alpha}} ~=~ \tildecov(\zee_{S,\alpha},\zee_{S,\alpha})
\end{gather*}
\normalsize
\else
\begin{gather*}
\tildecov(\zee_{S,\alpha},\zee_{T,\beta}) 
~=~ \tildeEx{ \zee_{S,\alpha} \cdot \zee_{T,\beta} } - \tildeEx{\zee_{S,\alpha}} \tildeEx{\zee_{T,\beta}} \\
\tildeVar{\zee_{S,\alpha}} ~=~ \tildecov(\zee_{S,\alpha},\zee_{S,\alpha})
\end{gather*}
\fi
The above definition is extended to pseudo-covariance and pseudo-variance for pairs of sets $S,T$, 
as the sum of absolute value of pseudo-covariance for all pairs $\alpha,\beta$ :
\ifnum\confversion=1
\begin{gather*}
\tildecov(\zee_S,\zee_T) 
~=~ \sum_{\alpha\in [q]^S \atop \beta\in [q]^T} \abs{ \tildecov(\zee_{S,\alpha},\zee_{T,\beta}) } \\
\tildeVar{\zee_S} ~=~ \sum_{\alpha\in [q]^S} \abs{ \tildeVar{\zee_{S,\alpha} } }
\end{gather*}
\else
\[
\tildecov(\zee_S,\zee_T) 
~=~ \sum_{\substack{\alpha\in [q]^S \\ \beta\in [q]^T}} \abs{ \tildecov(\zee_{S,\alpha},\zee_{T,\beta}) }
\qquad \text{and} \qquad		
\tildeVar{\zee_S} ~=~ \sum_{\alpha\in [q]^S} \abs{ \tildeVar{\zee_{S,\alpha} } }
\]
\fi
\end{definition}

%
We will need the fact that $\tildeVar{\zee_S}$ is bounded above by 1, since,
\begin{align*}
\tildeVar{\zee_S} 
&~=~ \sum_{\alpha} \abs{\tildeVar{\zee_{S,\alpha}}} \\
&~=~ \sum_{\alpha}\inparen{
          \tildeEx{\zee_{S,\alpha}^2} - \tildeEx{\zee_{S,\alpha}}^2} \\
&~\leq~ \sum_{\alpha} \tildeEx{\zee_{S,\alpha}^2} \\
&~=~ \sum_{\alpha} \tildeEx{\zee_{S,\alpha}} 
~=~ 1
\end{align*}

%
\vspace{-5 pt}
\paragraph{Conditioning SoS solutions.}
We will also make use of conditioned pseudoexpectation operators, which may be defined in a way
similar to usual conditioning for true expectation operators, as long as the event we condition on
is local. 
The conditioned SoS solution is of a smaller degree, but continues to respect the constraints that original solution respects.

\begin{definition} Let $F \subseteq [q]^m$ be subset (to be thought of as an event) such that $\one_F:[q]^m \rightarrow \{0,1\}$ is a $k$-local function. Then for every $t>2k$, we can condition a pseudoexpectation operator of SoS-degree $t$ on $F$ to obtain a new conditioned pseudoexpectation operator $\condPE{\cdot}{E}$ of SoS-degree $t-2k$, as long as $\tildeEx{\one^2_F(\zee)}>0$. The conditioned SoS solution is given by
\[
	\condPE{ p(\zee)}{F(\zee) } \defeq \frac{\tildeEx{p(\zee) \cdot \one^2_{F}(\zee)}}{\tildeEx{\one^2_{F}(\zee)}}
\]
where $p$ is any polynomial of degree at most $t-2k$.
\end{definition}

We can also define pseudocovariances and pseudo-variances for the conditioned SoS solutions.
\begin{definition}
	Let $F\sub [q]^m$ be an event such that $\one_F$ is $k$-local, and let $\tildeEx{\cdot}$ be a pseudoexpectation operator of degree $t$, with $t>2k$. Let $S,T$ be two sets of size at most $\frac{t-2k}{2}$ each. Then the pseudocovariance between $\zee_{S,\alpha}$ and $\zee_{T,\beta}$ for the solution conditioned on event $F$ is defined as,
\ifnum\confversion=1
\begin{multline*}
\tildecov(\zee_{S,\alpha},\zee_{T,\beta} \vert F) = \\
\tildeEx{ \zee_{S,\alpha} \zee_{T,\beta} \vert F} - \tildeEx{\zee_{S,\alpha} \vert F} \tildeEx{\zee_{T,\beta} \vert F}
\end{multline*}
\else
\begin{align*}
\tildecov(\zee_{S,\alpha},\zee_{T,\beta} \vert F) 
~=~ \tildeEx{ \zee_{S,\alpha} \zee_{T,\beta} \vert F} - \tildeEx{\zee_{S,\alpha} \vert F} \tildeEx{\zee_{T,\beta} \vert F}
\end{align*}
\fi
\end{definition}

We also define the pseudocovariance between $\zee_{S,\alpha}$ and $\zee_{T,\beta}$ after
conditioning on a random assignment for some $\zee_V$ with $V\sub [m]$. 
Note that here the random assignment for $\zee_V$ is chosen according to the local distribution for
the set $V$.

\begin{definition}[Pseudocovariance for conditioned pseudoexpectation operators]
\ifnum\confversion=1
\begin{multline*}
\tildecov(\zee_{S,\alpha},\zee_{T,\beta} \vert \zee_V) 
= \\ \sum_{\gamma \in [q]^V}   \tildecov(\zee_{S,\alpha},\zee_{T,\beta} \vert \zee_V = \gamma) \cdot \tildeEx{\zee_{V,\gamma}}
	\end{multline*}
\else
\begin{align*}
\tildecov(\zee_{S,\alpha},\zee_{T,\beta} \vert \zee_V) 
~=~ \sum_{\gamma \in [q]^V}   \tildecov(\zee_{S,\alpha},\zee_{T,\beta} \vert \zee_V = \gamma) \cdot \tildeEx{\zee_{V,\gamma}}
\end{align*}
\fi
\end{definition}

And we likewise define $\tildeVar{\zee_{S,\alpha} \vert \zee_V}$, $\tildecov(\zee_S, \zee_T \vert \zee_V)$ and $\tildeVar{\zee_S \vert \zee_V}$.

\section{SoS relaxations for codes}\label{sec:sos_and_codes}
For both Tanner codes and AEL codes, we will identify $[m]$ with $E$ so that the SoS solutions will be relaxations to the assignments to edges of a bipartite $(n,d,\lambda)$-expander.

\paragraph{Pseudocodewords for Tanner Codes}

Let $G(L,R,E)$ be the bipartite $(n,d,\lambda)$-expander on which the Tanner code is defined, and let $\calC_0\subseteq [q]^d$ be the inner code. The SoS variables will be $\zee = \{Z_{e,j}\}_{e\in E,j\in [q]}$. 

\begin{definition}[Tanner Pseudocodewords]
For $t\geq 2d$, we define a degree-$t$ Tanner pseuocodeword to be a degree-$t$ pseudoexpectation operator $\tildeEx{\cdot}$ on $\zee$ respecting the following constraints:
\ifnum\confversion=1
\begin{align*}
&\forall \li \in L, \quad \zee_{N_L(\li)} \in \calC_0 \\
\text{and } \quad &\forall \ri \in R,\quad \zee_{N_R(\ri)} \in\calC_0
\end{align*}
\else
\begin{align*}
\forall \li \in L, \quad \zee_{N_L(\li)} \in \calC_0 
\qquad \text{and} \qquad          
\forall \ri \in R,\quad \zee_{N_R(\ri)} \in\calC_0
\end{align*}
\fi
Again, since these constraints are $d$-local, it is sufficient to enforce that certain degree-$d$ polynomials are zero (respected by pseudoexpectation) to enforce these constraints. 
In particular, each parity check in the parity check matrix of $\calC_0$ will correspond to a monomial of size at most $d$ that we can enforce to be equal to 1.
\end{definition}

We can also define a generalization of distance between two codewords to include pseudocodewords.
\begin{definition}[Distance from a pseudocodeword]
The distance between a pseudocodeword $\tildeEx{\cdot}$ of SoS-degree $t\geq 2d$ and a codeword $h$ of $\TanC$ is defined as
\ifnum\confversion=1
\begin{align*}
\dis(\tildeEx{\cdot},h) 
~\defeq~& \Ex{e\in E}{\tildeEx{\indi{\zee_e \neq h_e}}} \\
~=~& \Ex{\li\in L}{\tildeEx{\dis(\zee_{N_L(\li)},  h_{N_L(\li)})}}
\end{align*}
\else
\[
\dis(\tildeEx{\cdot},h) 
~\defeq~ \Ex{e\in E}{\tildeEx{\indi{\zee_e \neq h_e}}} 
~=~ \Ex{\li\in L}{\tildeEx{\dis(\zee_{N_L(\li)},  h_{N_L(\li)})}}
\]
\fi
\end{definition}

\paragraph{Pseudocodewords for AEL}

Let $G(L,R,E)$ be the bipartite $(n,d,\lambda)$-expander on which the AEL code is defined, and let $\calC_0\subseteq [q_0]^d$ be the inner code. The SoS variables will be $\zee = \{Z_{e,j}\}_{e\in E,j\in [q_0]}$. 

\begin{definition}[AEL Pseudocodewords]
	For $t\geq 2d$, we define a degree-$t$ AEL pseuocodeword to be a degree-$t$ pseudoexpectation operator $\tildeEx{\cdot}$ on $\zee$ respecting the following constraints:
	\begin{align*}
		\forall \li \in L,\quad \zee_{N_L(\li)} \in \calC_0
	\end{align*}
\end{definition}

Next we define the distances between a pseudocodeword and a codeword of $\AELC$. 
\begin{definition}[Distance from a pseudocodeword]
	The left, middle and right distances between a pseudocodeword $\tildeEx{\cdot}$ of SoS-degree $t\geq 2d$ and a codeword $h \in \AELC$ are defined as
	\begin{align*}
		\dis^L(\tildeEx{\cdot},h) &\defeq \Ex{\li}{\tildeEx{\indi{\zee_{N_L(\li)} \neq h_{N_L(\li)}}}} \\
	\dis(\tildeEx{\cdot},h) &\defeq \Ex{e}{\tildeEx{\indi{\zee_e \neq h_e}}} \\
		\dis^R(\tildeEx{\cdot},h) &\defeq \Ex{\ri\in R}{\tildeEx{\indi{\zee_{N_R(\ri)}\neq  h_{N_R(\ri)}}}}
	\end{align*}
\end{definition}
%
%


\chapter{A General Framework for List Decoding}\label{chap:framework}
\label{sec:intro}
%
%
%
%
%

Expander graphs have been a powerful tool for the construction of codes with several interesting
properties, and a variety of applications.
A (very) small subsample of the list of applications already includes the seminal
constructions of expander codes~\cite{SS96, Zemor01}, widely used distance amplification
constructions~\cite{ABNNR92, AEL95}, as well as recent breakthrough constructions of
$\epsilon$-balanced codes~\cite{TS17}, locally testable codes~\cite{DELLM22}, and quantum LDPC
codes~\cite{PK22, LZ22}. 
A detailed account of the rich interactions between coding theory and expander graphs, and
pseudorandom objects in general, can be found in several excellent surveys and
textbooks on these areas~\cite{GSurvey04, Vadhan12, HLW06, GRS23}. 

The combinatorial and spectral structure of codes based on expander graphs often leads to very
efficient algorithms for unique decoding. However, obtaining list decoding for constructions based on
expanders often requires incorporating additional algebraic structure in the construction, to take
advantage of the well-established machinery for list decoding using polynomials~\cite{Guruswami:survey}.
While there are certainly important counterexamples to the above statement, such as the
expander-based codes of Guruswami and Indyk~\cite{GI03} and Ta-Shma~\cite{TS17} which
allow for list decoding, we know of few \emph{general techniques to exploit expansion for
  list decoding.}
In this work, we consider the question of finding techniques for list decoding from errors,
which can work in settings where no algebraic structure may be available, such as the decoding of LDPC
codes constructed from expander graphs.

Building on the significant body of work for LP decoding of expander codes~\cite{Feldman03, FWK05}, 
we consider the question of decoding as an optimization problem, which can be approached
via convex relaxations. 
We show that stronger relaxations obtained via the Sum-of-Squares (SoS) hierarchy of semidefinite
programs, can in fact be used to obtain list decoding algorithms for several code constructions
based on expanders. 
These hierarchies can be viewed as proof systems~\cite{FKP19}, with relaxations at a level $t$ of
the hierarchy corresponding to proofs which can be carried out by reasoning about sum-of-squares of
polynomials of degree at most $t$ in the optimization variables. The proof system corresponding to a
small number of levels of the SoS hierarchy turns out to be powerful enough
to capture the distance proofs for several expander-based codes, when the proofs rely on spectral
properties of expander graphs.
Combined with generic ``covering lemmas'' which ensure that the solutions to these relaxations
do not completely ignore any codeword in the list, these can be used to design
list decoding algorithms for several families of codes based on expanders, up to the Johnson bound
where the list size is known to be bounded.

\section{Our Results}
\vspace{-5 pt}
\subsection{Tanner Codes}
Low-Density Parity Check (LDPC) codes were introduced by a foundational work of
Gallager~\cite{Gallager62} and graph-based constructions were obtained by Tanner~\cite{Tanner81}.
Sipser and Spielman~\cite{SS96} gave the first constructions of Tanner codes with distance bounds
based on the expansion of the graph, which also admitted a linear time (unique) decoding algorithm. 
An elegant construction based on bipartite spectral expanders,  with particularly simple
(linear-time) unique-decoding algorithms, was given by \Zemor~\cite{Zemor01}.
Variants of these constructions have led to applications~\cite{RU:book} and have also been used as
building blocks in the recent constructions of locally testable codes by Dinur \etal~\cite{DELLM22}
and quantum LDPC codes by Panteleev and Kalachev~\cite{PK22} (see also \cite{LZ22}).

There exist highly efficient algorithms for the unique-decoding of these codes from both probabilistic
and adversarial errors, based on combinatorial arguments, linear programming
relaxations~\cite{Feldman03, ADS12, FWK05} and message passing
algorithms~\cite{Guruswami:MP-survey, RU:book}. 
In the setting of \emph{erasures} where the location of the corruptions in the transmitted codeword is known,
recent work has also led to linear-time list decoding algorithms~\cite{RZWZ21, HW15}, which also
work for the more general task of list recovery in the large alphabet (high-rate) case~\cite{HW15}.
However, to the best of our knowledge, no list decoding algorithms are known in the more challenging
(and common) setting of \emph{errors} when the location of the corruptions are unknown, even though
random ensembles of LDPC codes are even known to combinatorially achieve list-decoding
capacity~\cite{MosheiffRRSW19}, and thus have bounded list sizes up to optimal error radii.

We show that relaxations obtained via the SoS hierarchy can be used list-decode \Zemor's
construction of Tanner codes~\cite{Zemor01}, up to the Johnson bound (which is an error-radius where
list sizes are always known to be bounded). 
Our proof technique can also be extended to work for other constructions of Tanner codes where the
proof for the distance of the code is based on spectral arguments, but is easiest to illustrate in
the context of \Zemor's construction. 
We briefly recall the construction before describing our result.

Given a bipartite $d$-regular graph $G=(L,R,E)$ with $\abs{L}=\abs{R}=n$, the Tanner code
is of blocklength $m = \abs{E} = nd$. 
Given an alphabet size $q$, the code consists of all edge-labelings $f \in [q]^m$, such that the
labels in the neighborhood of every vertex
\footnote{One can also consider variants where the base code $\calC_0$ is different for different
  vertices, but this does not make a difference for our purposes.}, 
belong to a ``base code'' $\calC_0 \subseteq [q]^d$.
When the base code has (fractional) distance $\delta_0$ and $G$ has (normalized) second singular
value at most $\lambda$ for the biadjacency matrix, the distance of the Tanner code is known to be at least $\delta = \delta_0 \cdot
(\delta_0 - \lambda)$. 
The Johnson bound for distance $\delta$ and alphabet size $q$ is defined as
$\calJ_q(\delta) \defeq (1 - \nfrac{1}{q}) \cdot \inparen{ 1 - \inparen{1 - \nfrac{q \cdot
      \delta}{(q-1)}}^{1/2}}$, and is always greater than the unique decoding radius $\delta/2$.
\ifnum\confversion=1
We prove the following (see full version for a formal statement).
\else
We prove the following.
\fi

\begin{theorem}\ifnum\confversion=0[Informal version of \cref{thm:tanner-decoding}]\fi
\ Given a Tanner code $\calC$ as above and $\eps > 0$, there is a deterministic algorithm based on
$q^{O(d)}/\eps^4$ levels of the SoS hierarchy, which given an arbitrary $g \in [q]^m$, runs in time
$n^{q^{O(d)}/\eps^4}$, and recovers the list of codewords within distance
$\calJ_q(\delta) - \eps$ ~of $g$. 
\end{theorem}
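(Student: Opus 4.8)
The plan is to carry out the entropy-maximization scheme of \cref{sec:framework} with the Sum-of-Squares relaxation for Tanner pseudocodewords. First I would fix the degree $t = q^{O(d)}/\eps^{4}$ and set up the following convex program: over all degree-$t$ Tanner pseudocodewords $\tildeEx{\cdot}$ (pseudoexpectations on $\zee = \{Z_{e,j}\}_{e \in E, j \in [q]}$ respecting the local base-code constraints $\zee_{N_L(\li)} \in \calC_0$, $\zee_{N_R(\ri)} \in \calC_0$) that satisfy the slightly relaxed distance constraint $\dis(\tildeEx{\cdot}, g) < \calJ_q(\delta) - \eps/2$, maximize the pseudo-entropy proxy $\widetilde\Psi(\tildeEx{\cdot}) = -\Ex{e \in E}{\norm{\tildeEx{\chi(\zee_e)}}^2}$. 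Since each coordinate of $\tildeEx{\chi(\zee_e)}$ is linear in the first-order pseudomoments $\tildeEx{Z_{e,j}}$, the objective is a concave quadratic, and all constraints are $d$-local; hence this is an SDP of size $n^{O(t)}$ solvable deterministically in time $n^{O(t)}$. If it is infeasible, we output the empty list: any genuine codeword in $\calL(g, \calJ_q(\delta) - \eps)$ is in particular a feasible degree-$t$ pseudocodeword (its point mass), so infeasibility forces $\calL(g, \calJ_q(\delta) - \eps) = \emptyset$.

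The next step is to show the distance proof of the Tanner code is available to the proof system. The bound $\dis(\TanC) \geq \delta_0(\delta_0 - \lambda)$ proceeds by combining the $d$-local fact that two codewords of $\calC_0$ either agree or differ in at least $\delta_0 d$ coordinates (so for codewords $\dis^L(\zee, h) \leq \dis(\zee, h)/\delta_0$) with a single application of the bipartite expander mixing lemma (\cref{lem:eml_bipartite}) to the left- and right-disagreement indicator vectors of $\zee$ against $h$; both ingredients are degree-$O(d)$ statements (the indicators are $d$-local and the mixing lemma is quadratic in them), so they furnish a degree-$O(d)$ SoS proof that no pseudocodeword can be ``close but unequal'' to a genuine codeword $h$ — quantitatively, $\tildeEx{\dis(\zee, h)\cdot(\delta - \dis(\zee, h))} \le 0$ and, via the local relation, a lower bound on $\Ex{\li}{\tildeEx{\indi{\zee_{N_L(\li)} = h_{N_L(\li)}}}}$ in terms of $\tildeEx{\dis(\zee,h)}$.

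I would then combine this with an SoS analogue of the Covering Lemma and a rounding step. The crucial observation is that $\widetilde\Psi$ depends only on the first moments $\tildeEx{Z_{e,j}}$, so the mixing argument behind \cref{lem:covering_intro} transports almost verbatim: for any $h \in \calL(g, \calJ_q(\delta) - \eps)$, the pseudocodeword obtained by mixing $\tildeEx{\cdot}$ with weight $1 - \gamma$ and the point mass at $h$ with weight $\gamma$ is again feasible, and differentiating $\widetilde\Psi$ at $\gamma = 0$ together with optimality of $\tildeEx{\cdot}$ and the Johnson-bound calculation (with the $\eps$-slack providing the margin) forces $\tildeEx{\dis(\zee, h)}$ to be bounded below $\delta$ by $\Omega_{\eps}(1)$, hence (by the previous paragraph) a $\Omega_\eps(1)$-fraction of left vertices have local marginal assigning probability $\Omega_\eps(1)$ to $h_{N_L(\li)}$. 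To extract the list I would use the conditioning-and-rounding procedure of \cite{BRS11, AJT19}: repeatedly condition $\tildeEx{\cdot}$ on the value of a uniformly random left-neighborhood $\zee_{N_L(\li)}$ drawn from its local distribution; a standard potential argument shows $\Ex{\li}{\tildeVar{\zee_{N_L(\li)}}}$ falls by a fixed amount whenever it is not already small, so after $q^{O(d)}/\eps^4$ rounds it drops below $\eps^{O(1)}$, and then the expander mixing lemma upgrades small average local variance to small global pseudo-covariance $\Ex{\li,\ri}{\tildecov(\zee_{N_L(\li)}, \zee_{N_R(\ri)})}$, so independent rounding of each neighborhood from its (by then near-deterministic, $\calC_0$-supported) local distribution produces a genuine Tanner codeword with positive probability. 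Enumerating over the $q^{O(d/\eps^{O(1)})}$ possible conditioning assignments and over the polynomially many rounding outcomes yields a deterministic procedure running in time $n^{q^{O(d)}/\eps^4}$; the covering statement guarantees that every $h \in \calL(g, \calJ_q(\delta) - \eps)$ is reached by some conditioning sequence, and we prune the produced candidates to those within distance $\calJ_q(\delta) - \eps$ of $g$ (the total being bounded by \cref{thm:johnson}).

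The main obstacle is the interface between the analytic covering argument and the algebraic rounding, together with the quantitative bookkeeping. A degree-$t$ pseudocodeword is not a genuine distribution and there is no notion of ``support'', so the covering argument — which, being about $\widetilde\Psi$, can only control first moments — has to be married to a rounding step that inevitably controls higher moments; it is precisely the spectral expansion of $G$, through \cref{lem:eml_bipartite}, that both makes the Tanner distance proof expressible in constant-degree SoS and bounds the error introduced by conditioning. Making everything close simultaneously — that $q^{O(d)}/\eps^4$ conditioning rounds drive the local variances low enough that rounding lands in $\TanC$, while still preserving enough of each list codeword that it is recovered — is the delicate part, and getting the exponent to be exactly of the stated form requires tracking the $q^{\Theta(d)}$ factors coming from the $d$-local base-code constraints against the $\eps^{-\Theta(1)}$ factors from the variance-reduction potential.
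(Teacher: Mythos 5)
Your proposal follows essentially the same route as the paper: minimize the norm of the first moments (equivalently, maximize the entropy proxy) over degree-$t$ Tanner pseudocodewords subject to agreement with $g$, argue covering by perturbing toward the point mass at each list element, condition \`a la \cite{BRS11} to drive down average pseudo-covariance, invoke the SoS-ized expander-mixing distance proof to get the dichotomy $\dis(\tildeEx{\cdot},h)\le O(\eta)$ or $\ge \delta-O(\eta)$, and derandomize by enumerating conditionings and using threshold rounding. The one step that would fail as literally written is the last one: independently sampling each left neighborhood from its local distribution does not ``produce a genuine Tanner codeword with positive probability'' --- the right-vertex constraints will generally be violated, and even consistent per-edge rounding only yields a word within $O(\eta)$ of $h$, not $h$ itself; the paper repairs exactly this by rounding per edge and then invoking the (linear-time) \emph{unique decoder} of $\TanC$ on the rounded word (\cref{lem:unique_decoding}, \cref{lem:decoding_from_distributions}), which is the missing final ingredient in your sketch.
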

Note that one can think of $q, d$ in \Zemor's construction and $\eps$ above as constants (in fact
$d$ is required to be constant for LDPC codes), in which case the above running time is polynomial in
$n$. 
Of course, these running times are no match for the linear-time unique decoding, and erasure
list-decoding, algorithms available for these codes. 
However, we view the techniques used in the proof of the above algorithm as a first step towards
identifying the right structures, and designing truly efficient algorithms, to take advantage of
expansion for list-decoding of LDPC codes (as has proved to be the case for several SoS-based algorithms in the past). 

Our techniques also extend to yield a similar statement for the recent construction of locally
testable codes by Dinur \etal~\cite{DELLM22}, which are Tanner codes on a different structure
called a ``square Cayley complex''.
These are constructed using a group $H$ and two generator sets $A,B \subseteq H$, with each
generator set individually defining an expanding Cayley graph on $H$ (with second eigenvalue bounded
by $\lambda$).  The sizes of the generator
sets (equal to the graph degree) are taken as constant, say $\abs{A} = \abs{B} =  d$. The
construction relies on base codes $\calC_A, \calC_B \subseteq [q]^d$, with distances, say $\delta_A$
and $\delta_B$, and is known to have distance at least $\delta = \delta_A \cdot \delta_B \cdot
\inparen{\max\{\delta_A, \delta_B\} - \lambda}$.
\begin{theorem}\ifnum\confversion=0[Informal version of \cref{thm:square-decoding}]\fi
\ Given a code  $\calC^{SCC}$ with block length $m$ and alphabet $[q]$, 
supported on a square Cayley complex as described above, and $\eps > 0$, there is a deterministic
algorithm based on $q^{O(d^2)}/\eps^{O(1)}$ levels of the SoS hierarchy, which given an arbitrary $g \in [q]^m$, 
recovers the list of codewords within distance $\calJ_q(\delta) - \eps$ ~of $g$. 
\end{theorem}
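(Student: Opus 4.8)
The approach is to instantiate the general SoS framework already developed for \Zemor's Tanner codes (the machinery behind \cref{thm:tanner-decoding}), since the distance bound $\delta = \delta_A \delta_B (\max\{\delta_A,\delta_B\} - \lambda)$ for $\calC^{SCC}$ is proven by exactly the template the framework targets: a \emph{local} property (at every vertex $v \in H$, the restriction of a codeword to the $d \times d$ grid of squares incident to $v$ lies in the tensor code $\calC_A \otimes \calC_B$, so if it is nonzero it has relative weight at least $\min\{\delta_A,\delta_B\}$ in each populated row and column) together with a \emph{spectral local-to-global jump} supplied by the expander mixing lemma on the Cayley graphs $\Cay(H,A)$ and $\Cay(H,B)$. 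Accordingly, I would first set up degree-$t$ SoS \emph{pseudocodewords} for $\calC^{SCC}$ with $t = q^{O(d^2)}/\eps^{O(1)}$: SoS variables $\zee = \{Z_{s,j}\}$ indexed by squares $s$ and symbols $j \in [q]$ with the canonical Boolean/simplex constraints, plus, for each $v \in H$, the $O(d^2)$-local constraint $\zee_{\mathrm{view}(v)} \in \calC_A \otimes \calC_B$ (it is this tensor-code constraint, touching $d^2$ coordinates, that forces the $q^{O(d^2)}$ degree). Define the pseudo-distance $\dis(\tildeEx{\cdot}, h) = \Ex{s}{\tildeEx{\indi{\zee_s \neq h_s}}}$ and the entropy proxy $\Psi(\tildeEx{\cdot}) = -\Ex{s}{\norm{\tildeEx{\chi(\zee_s)}}^2}$, exactly as in \cref{lem:covering_intro}.

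The algorithm is then identical to the Tanner case: compute a degree-$t$ pseudocodeword $\tildeEx{\cdot}$ with $\dis(\tildeEx{\cdot}, g) \leq \calJ_q(\delta) - \eps/2$ maximizing $\Psi$ (the feasible set is nonempty whenever $\calL(g, \calJ_q(\delta) - \eps) \neq \emptyset$, since any such codeword is itself a feasible pseudocodeword); apply the random-conditioning rounding of \cite{BRS11, AJT19} on a uniformly random set of $O_\eps(1)$ squares, which drives the average pseudo-covariance between pairs of vertex-views below $\eps^{O(1)}$; pick a conditioning attaining this bound and read codewords out of the (now nearly deterministic) conditioned solution in the same way as in the proof of \cref{thm:tanner-decoding} — round each pseudo-marginal to its most likely symbol and apply a unique-decoding procedure for $\calC^{SCC}$; output the set of codewords produced over all conditionings. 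The running time is $n^{O(t)} = n^{q^{O(d^2)}/\eps^{O(1)}}$, and boundedness of the output list is the Johnson bound (\cref{thm:johnson}).

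Correctness uses two ingredients, both lifted from the framework. (i) \emph{Covering}: the SoS analog of \cref{lem:covering_intro}. For $h^{\ast} \in \calL(g, \calJ_q(\delta) - \eps)$, the mixture $(1-\gamma)\tildeEx{\cdot} + \gamma\, h^{\ast}$ still satisfies the pseudo-distance-to-$g$ constraint, so the first-order optimality of $\Psi$ at $\gamma = 0$ forces the correlation bound $\Ex{s}{\ip{\tildeEx{\chi(\zee_s)}}{\chi(h^{\ast}_s)}} \geq \Ex{s}{\norm{\tildeEx{\chi(\zee_s)}}^2}$; the same Johnson-bound arithmetic as in the proof of \cref{lem:covering_intro} upgrades this into a quantitative correlation between the pseudocodeword and $h^{\ast}$. (ii) \emph{Distance proof inside SoS}: one checks that the Sipser--Spielman/\Zemor-style distance argument for $\calC^{SCC}$ described above is a degree-$O(d^2)$ SoS proof, hence remains valid with the conditioned pseudocodeword substituted for a genuine codeword, up to an additive error controlled by the pairwise pseudo-covariance of vertex-views. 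After conditioning this error is at most $\eps$, so the pseudo-distance from the conditioned solution to $h^{\ast}$ is \emph{quantized}: it is either at most $\eps$ or at least $\delta - O(\eps)$. Combining (i), (ii), and the strict gap $\calJ_q(\delta) < \delta$ rules out the large alternative, so the conditioned solution is $O(\eps)$-close to $h^{\ast}$ and the rounding recovers it.

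\textbf{Main obstacle.} The crux is step (ii): certifying the distance of $\calC^{SCC}$ \emph{within low-degree SoS} and in a form robust to pseudo-covariance. Compared with \Zemor's construction, whose distance proof runs on a single bipartite expander, the SCC argument chains a tensor-code distance bound at every vertex with spectral estimates on \emph{two} Cayley graphs, so one must (a) phrase the tensor-code lower bound ``a vertex-view that disagrees with $h$ disagrees on at least a $\delta_A$- or $\delta_B$-fraction of its rows or columns'' as an SoS-provable inequality over the $d^2$ local variables — this is precisely what pins the degree at $q^{O(d^2)}$; (b) show that small pairwise pseudo-covariance of overlapping vertex-views causes the expander-mixing step(s) to lose only $O(\eps)$ in total, which requires tracking how covariances along $A$-edges and $B$-edges propagate through the complex; and (c) verify that the final ``round each conditioned vertex-view and glue along overlaps'' step is globally consistent, which again follows from the conditioned solution being $O(\eps)$-close to a convex combination of genuine codewords. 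Everything else — the covering computation, the conditioning/rounding bookkeeping, and the Johnson-bound arithmetic — is the same as in the proof of \cref{thm:tanner-decoding} and can be quoted directly.
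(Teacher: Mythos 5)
Your overall architecture is the right one and matches the paper: pseudocodewords over square variables with the degree-$d^2$ tensor-code constraints at each vertex-view, the entropy-minimizing covering step, random conditioning to drive down pseudo-covariance, a dichotomy from an SoS distance proof, and unique decoding from the rounded marginals. However, you correctly flag step (ii) — certifying the distance of $\calC^{SCC}$ inside SoS — as the crux and then leave it unresolved, and the route you hint at (chaining the tensor-code bound with spectral estimates on \emph{both} Cayley graphs and "tracking how covariances along $A$-edges and $B$-edges propagate through the complex") is not how the argument goes and would be awkward to push through: the distance bound $\delta_A\delta_B(\max\{\delta_A,\delta_B\}-\lambda)$ contains only one $\lambda$, so only one expander-mixing application should appear.

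The paper's resolution is a slicing trick you are missing. Assume WLOG $\delta_A \geq \delta_B$. For each \emph{fixed} $b \in B$, the squares $\{X(g,a,b)\}$ and $\{Y(g,a,b)\}$ organize into a \Zemor-style bipartite expander code over $\Cay(H,A)$ with inner code $\calC_A$, and the $\eta$-good condition is imposed separately for every $b$ (costing only a union bound over $d$ values of $b$, i.e.\ an extra factor $d$ in the SoS degree). This yields, for each $b$, the quadratic inequality $\tau_b^2 - (\delta_A-\lambda)\tau_b + \eta \geq 0$ exactly as in the Tanner case. The factor $\delta_B$ and the second Cayley graph never enter any spectral step: the \emph{lower} bound $\dis(\tildeEx{\cdot},h) \geq \delta_A\delta_B\,\tau_b$ holds for every single $b$ purely from the tensor-code distance of the local views, while the \emph{upper} bound $\dis(\tildeEx{\cdot},h) \leq \Ex{b}{\tau_b^2 + \lambda\tau_b + \eta}$ averages over $b$. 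The asymmetry closes the argument — if any one $\tau_b$ is large the distance is large; if all are small the pseudocodeword is $O(\eta)$-close to $h$ — without ever having to propagate covariance information between the $A$- and $B$-directions. With this lemma in hand, the rest of your proposal (covering, conditioning, Markov arguments, rounding, derandomization) goes through verbatim as you say.
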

\vspace{-5 pt}
\subsection{Distance Amplified Codes}
The proofs for the distance of the above Tanner codes, are also very similar to the ones used for
analyzing the distance amplification procedure of Alon, Edmonds, and Luby~\cite{AEL95} (AEL), based on
expander graphs.
While there are several variants of this construction discussed in the literature, we 
will discuss a version of the AEL construction~\cite{Kopparty} which is particularly close to the
Tanner code construction of \Zemor.
Given a $d$-regular bipartite graph $G=(L,R,E)$ with second singular value $\lambda$, an ``outer" code $\calC_1 \subseteq [q_1]^n$ with distance $\delta_1$, and an ``inner" code $\calC_0 \subseteq [q]^d$ with $\abs{\calC_0} = q_1$, the AEL procedure constructs a new code $\calC^{AEL} \subseteq [q^d]^n$ with distance at least $\delta_0 - \frac{\lambda}{\delta_1}$. 
Thus, it yields constructions with arbitrarily large block lengths that inherit the parameters of the small inner code. 

The AEL procedure has been used as an important ingredient for obtaining optimal rate-distance
tradeoffs in several constructions, such as the capacity-achieving list decodable codes by Guruswami
and Rudra~\cite{GR08}. The amplification is achieved via a simple redistribution of symbols using the expander, and the construction also preserves several interesting local properties of the outer code, such as the property of being LDPC, or locally testable, or locally correctable~\cite{KMRZS17, GKORZS18}. 
We refer the reader to the discussion in \cite{KMRZS17} for an excellent account of the applications and properties of the AEL construction.

The AEL procedure has been used to construct several list decodable codes, including some of the
results cited above, and a quantum analogue of the construction was also used recently by Bergamaschi \etal~\cite{BGG22} to obtain quantum codes meeting the Singleton bound (via quantum list decoding). We will see more about the quantum extension of AEL in the next chapter.

However, for the resulting code $\calC^{AEL}$ to be list-decodable, one often needs to assume stronger properties such as list-recovery for the outer code $\calC_1$. 
Since these stronger properties may not always be available (for example, when one wants to preserve some local properties for $C_1$ like being LDPC), we again consider the question of finding techniques which can allow for list-decoding $\calC^{AEL}$ for expanding graphs $G$, without relying on additional structure from $\calC_1$.

We show that relaxations based on the SoS hierarchy, can be used to list decode the distance-amplified code $\calC^{AEL}$, even when the outer code is only assumed to be \emph{unique decodable}. In particular, we prove the following result\ifnum\confversion=1 \ (a formal statement appears in full version)\fi:
\begin{theorem}\ifnum\confversion=0[Informal version of \cref{thm:list_decoding_ael}]\fi
\ Let $\calC^{AEL}$ be a distance-amplified code as above, with the outer code $\calC_1$ taken to be unique decodable from radius $\delta_{dec}$ in  time $T(n)$, and let $\delta = \delta_0 - \frac{\lambda}{\delta_{dec}}$. Then, for every $\eps > 0$,
there is a deterministic algorithm based on $q^{O(d)}/\eps^4$ levels of the SoS hierarchy, which
given an arbitrary $g \in [q^d]^n$, runs in time
$n^{q^{O(d)}/\eps^4} + O(T(n))$, and recovers the list of codewords within distance
$\calJ_{q^d}(\delta) - \eps$ ~of $g$.
\end{theorem}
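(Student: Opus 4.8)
The plan is to instantiate the generic list-decoding recipe from the framework — a ``covering lemma'' (\cref{lem:covering_intro}) to guarantee that an SoS-optimal pseudocodeword sees every codeword in the list, plus a ``distance proof'' for pseudocodewords that mimics the combinatorial distance argument for $\calC^{AEL}$ (\cref{thm:ael_distance}) but is carried out as a low-degree SoS proof. Concretely, I would first set up the degree-$t$ AEL pseudocodeword relaxation from \cref{sec:sos_and_codes} with $t = q^{O(d)}/\eps^4$, where the SoS variables $\zee = \{Z_{e,j}\}$ index edges of $G$ and the only hard constraints are $\zee_{N_L(\li)} \in \calC_0$ for every left vertex $\li$. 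On top of this one imposes the single global constraint $\dis^R(\tildeEx{\cdot}, g) < \calJ_{q^d}(\delta) - \eps$ (viewing codewords of $\calC^{AEL}$ as living in $[q_0]^E$ and measuring distance from the right vertex set $R$), and then maximizes the entropy proxy $\Psi(\tildeEx{\cdot}) = -\Ex{\ri \in R}{\norm{\tildeEx{\chiAEL((\zee_{N_R(\ri)}))}}^2}$ — the natural SoS analogue of $\Psi$ in \cref{lem:covering_intro}, written over the ``folded'' right-vertex symbols so that it is $d$-local in $\zee$. Optimization takes time $n^{q^{O(d)}/\eps^4}$ by semidefinite programming. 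Let $\tildeEx{\cdot}$ be the resulting maximizer.

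Second, I would establish the \emph{pseudocodeword covering lemma}: if $h \in \calC^{AEL}$ satisfies $\dis^R(g,h) < \calJ_{q^d}(\delta) - \eps$, then $\tildeEx{\cdot}$ ``contains'' $h$ in the sense that conditioning $\tildeEx{\cdot}$ on the local event $\{\zee_{N_R(\ri)} = h_{N_R(\ri)}\}$ is feasible for a positive fraction of $\ri$ — quantitatively, that $\Ex{\ri}{\tildeEx{\indi{\zee_{N_R(\ri)} = h_{N_R(\ri)}}}}$ is bounded below by a constant depending on $\eps$. This is exactly the SoS-ified version of the argument proving \cref{lem:covering_intro}: were this not the case, one could perturb $\tildeEx{\cdot}$ towards $h$ (a convex combination with the pseudoexpectation corresponding to the genuine codeword $h$) and strictly increase $\Psi$ while still respecting the distance constraint, contradicting maximality — the same second-moment/variance computation as in the combinatorial proof, valid because $\Psi$ depends only on the $O(q^d n)$ local functionals $\tildeEx{\chiAEL((\zee_{N_R(\ri)}))}$.

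Third — and this is the crux — I would prove the \emph{distance property for AEL pseudocodewords}: if $\tildeEx{\cdot}$ is any degree-$t$ AEL pseudocodeword with $\dis^L(\tildeEx{\cdot}, h) \geq \delta_{dec}$ for a codeword $h$ (i.e.\ the left-distance to $h$ is large enough that $h$ is ``confusable'' only after unique decoding), then in fact $\dis^R(\tildeEx{\cdot}, h) \geq \delta_0 - \frac{\lambda}{\delta_{dec}} = \delta$, and this implication is witnessed by a degree-$q^{O(d)}$ SoS proof. The key move is to run \Zemor-style analysis of \cref{thm:ael_distance} inside SoS: the event that a right vertex $\ri$ is ``unsatisfied'' ($\zee_{N_R(\ri)} \neq h_{N_R(\ri)}$) is $d$-local, the event that a left vertex $\li$ is bad is $d$-local, and the expander mixing lemma (\cref{lem:eml_bipartite}) applied to these local indicator functions — a spectral, hence low-degree SoS, inequality — forces a lower bound on the fraction of bad right vertices whenever the fraction of bad left vertices is at least $\delta_{dec}$. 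The unique decodability of $\calC_1$ from radius $\delta_{dec}$ in time $T(n)$ enters here: it guarantees that the left-view $\zee_{N_L(\cdot)}$ of the pseudocodeword, when forced close to $h$'s left-view on more than a $1-\delta_{dec}$ fraction of left vertices, must equal $h$'s outer codeword, which is what lets us conclude the two possibilities $\dis^R = 0$ or $\dis^R \geq \delta$ in the classical-like final step. I expect the main obstacle to be precisely making every step of the \cite{AEL95} distance argument a \emph{bona fide} low-degree SoS proof rather than an ordinary combinatorial argument — in particular, handling the ``optimal shift / closest codeword'' style reasoning and the case analysis without invoking global (exponential-degree) facts about $\calC_1$, which is why the hypothesis is phrased in terms of a unique decoder for $\calC_1$ (invoked once, as a black box, after the SoS phase) rather than list recovery.

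Finally, I would assemble the algorithm: solve the SDP to obtain the entropy-maximizing $\tildeEx{\cdot}$; apply the BRS/\cite{AJT19} random-conditioning rounding — conditioning on assignments to a small random set of edges drives the total pseudo-variance down, so that after rounding we obtain an actual distribution (or a short list of candidate left-assignments) that, by the covering lemma, hits every codeword in $\calL(g, \calJ_{q^d}(\delta) - \eps)$; for each candidate, run the unique decoder for $\calC_1$ from radius $\delta_{dec}$ in time $O(T(n))$ to recover the corresponding AEL codeword; and by the pseudocodeword distance property, no spurious codeword at distance more than $\calJ_{q^d}(\delta) - \eps$ survives. The list size is bounded by the Johnson bound (\cref{thm:johnson}) for distance $\delta$ and alphabet $q^d$, and the total running time is $n^{q^{O(d)}/\eps^4} + O(T(n))$ as claimed.
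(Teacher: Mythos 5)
Your proposal is essentially the paper's own proof: an entropy-minimizing (equivalently, norm-minimizing) SoS covering step over AEL pseudocodewords, BRS-style conditioning, an expander-mixing-lemma distance argument for pseudocodewords, and a single black-box call to the unique decoder of $\calC_1$ on the rounded left-view.

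One imprecision worth noting: as stated in your third step, the distance implication does \emph{not} hold for an arbitrary degree-$t$ AEL pseudocodeword. The expander mixing lemma for pseudoexpectations only controls $\Ex{\li,\ri}{\tildeEx{X_{\li}(\zee)Y_{\ri}(\zee)}}$, and to factor this into $\dis^L(\tildeEx{\cdot},h)\cdot\dis^R(\tildeEx{\cdot},h)$ one needs the $\eta$-good (low average pseudo-covariance) property, which is exactly what the random conditioning buys. So conditioning must happen \emph{before} the distance lemma is invoked (it is a hypothesis of that lemma), not merely as part of the final rounding; you do include the conditioning step, but you attribute it to rounding rather than to enabling the distance proof. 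Also, the dichotomy ``$\dis^R=0$ or $\dis^R\geq\delta$'' is the Tanner-code phrasing; for AEL the correct conclusion is the single inequality $\dis^R(\tildeEx{\cdot},h)\geq \delta_0-\frac{\lambda+\eta}{\dis^L(\tildeEx{\cdot},h)}$, which combined with the covering bound on $\dis^R$ yields $\dis^L(\tildeEx{\cdot},h)<\delta_{dec}-\Omega(\eps)$, after which the unique decoder finishes.
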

\vspace{-5 pt}
We note that the decoding radius for the above algorithm is $\calJ\inparen{\delta_0 - \frac{\lambda}{\delta_{dec}}}$ instead of $\calJ\inparen{\delta_0 - \frac{\lambda}{\delta_{1}}}$, which would be the Johnson bound for the true distance of the code $\calC^{AEL}$. 
However, in applications of AEL, one often chooses parameters so that the distance of code is about
$\delta_0$, and the effect of the second term is minimized by choosing a small $\lambda$. 
When $\calC_1$ is known to unique-decodable up to a smaller radius $\delta_{dec}$, one can still obtain list-decodable codes up to (nearly) the Johnson bound by choosing $G$ to be a sufficiently good expander (with small $\lambda$).

\vspace{-5 pt}
\subsection{Techniques}
As mentioned earlier, our techniques are based on using the Sum-of-Squares hierarchy of convex
relaxations for an optimization problem related to the decoding problem. 
In unique-decoding algorithms based on the LP relaxations, the optimization objective is to find the
closest codeword to a given received word, and the correctness of the decoding procedure often
relies on the LP being integral for an appropriate range of parameters.
In contrast, algorithms for list-decoding actually need to ensure that the solution to the convex
relaxation has sufficient information about all codewords in the list, and so it is important that
the solution is \emph{not} integral but rather a ``maximally-convex'' combination, covering all of
the list elements. 
This can be ensured by statements which we call ``covering lemmas'', which are discussed in more detail in \ifnum\confversion=1 \cref{sec:overview}. \else \cref{sec:covering}. \fi
The proofs for the covering lemmas are based on the techniques from~\cite{AJQST20}, where these were
used for the list-decoding of direct-sum codes.

A second key component of our proof, which makes the SoS hierarchy particularly appealing to work
with, is for the relaxation to be able to capture global properties of the code, such as the
distance. 
While local properties of the code, such as the structures of the base/inner codes are enforced
through explicit constraints included in the relaxation, the global property of distance is a
nontrivial consequence of these constraints.
However, the \emph{proofs} of these distance properties are spectral in nature, for the codes we consider
here, which makes them discoverable by the SoS hierarchy. This key idea is due to the work of Richelson and Roy \cite{RR23}.

For example, the proofs rely on statements such as the expander mixing lemma, which can viewed as a
consequence of statements like $\ip{f}{A f} \leq \lambda \cdot \norm{f}^2$ when $\ip{f}{1} = 0$, and
$A$ is the (normalized) adjacency matrix of a graph with second eigenvalue at most $\lambda$. 
Taking $\Pi$ to be the projector to the space orthogonal to the all-ones vector, we can re-write the
above inequality as $\ip{f}{(\lambda \cdot \Pi - \Pi A \Pi)f} \geq 0$. 
However, note that the matrix $\lambda \cdot \Pi - \Pi A \Pi$ is actually a positive semidefinite
matrix, which means that the expression $\ip{f}{(\lambda \cdot \Pi - \Pi A \Pi)f}$ is a
\emph{sum-of-squares} of linear forms in the entries of $f$. 
The SoS hierarchy can be viewed as a proof system, where a solution to the level-$2t$ relaxation
can be seen as satisfying all inequalities which can be derived using sum-of-squares of polynomials
of degree at most $t$. 
We can show that this means that the solutions (after some modification) satisfy some codeword-like
properties, using which it is possible to appeal to a \emph{unique} decoding algorithm to recover
one element from the list from one such ``codeword-like'' SoS solution.
\ifnum\confversion=1
See the full version for a formal statement about these codeword-like ``distance certificates'' for SoS solutions.
\else
These codeword-like ``distance certificates'' for SoS solutions are developed in \cref{sec:distance}.
\fi

Broadly speaking, our techniques can be seen as part of the ``Proofs to Algorithms'' paradigm based
on the Sum-of-Squares method~\cite{FKP19}. 
Our covering lemmas for SoS relaxations yield  a generic framework for converting SoS proofs of
distance for \emph{any code}, to list decoding algorithms which work up to the Johnson bound. 
This framework can also capture the results by Richelson and Roy~\cite{RR23} for list
decoding Ta-Shma's codes up to Johnson bound. However, we omit this proof since Ta-Shma's construction is quite technical 
and the ideas are more easily illustrated for Tanner and AEL codes. In \cref{chap:regularity}, we will see a different way of decoding Ta-Shma's codes that suffices for unique decoding and runs in near-linear time.

As mentioned in \cref{chap:intro}, list decoding algorithms often need to rely on algebraic structure, and are thus particularly
well suited to work with large alphabets (fields). One then obtains algorithms for small-alphabet
codes via techniques such as concatenation and list recovery.
On the other hand, the techniques based on convex relaxations discussed above seem to work well
directly over small alphabets.
\vspace{\pspacing}
\subsection{Related work}
In terms of techniques, the works most directly related to ours are those using similar SoS
relaxations for list-decoding of Ta-Shma's codes~\cite{AJQST20, JQST20, RR23}. 
In particular, the proofs of the covering lemmas follow the approach of Alev \etal~\cite{AJQST20},
and idea of viewing the proof of distance as implementable in the SoS hierarchy was also used by
 Richelson and Roy~\cite{RR23}. 
A precursor to much of this research on list-decoding, 
is the result of Dinur \etal~\cite{DHKLNTS19}, which suggested
the approach of using semidefinite programming and expansion for list decoding of codes obtained via
an earlier distance amplification procedure of Alon \etal~\cite{ABNNR92}, which can be seen as a
special case of the AEL distance amplification.

Another important work, related to the use of SoS hierarchy for decoding LDPC codes, is the
\emph{lower bound} of Ghazi and Lee~\cite{GL18} for using the SoS hierarchy to decode random LDPC
codes. 
However, the lower bound shows that the relaxation for finding the optimal (closest) codeword may
have value much better than the true optimal codeword, when the decoding radius is larger than that
of LP decoding, thus showing that the SoS relaxations may not be integral. 
On the other hand, the relaxations we use do not optimize for the closest codeword, but rather go
through covering lemmas.
A recent work of Chen \etal~\cite{CCLO22} also shows significantly improved distance bounds, and
improved unique-decoding bounds for the expander codes of Sipser and Spielman~\cite{SS96}. 
Our results do not apply for the codes considered in their work in a black box fashion, since the analysis is based on
lossless \emph{vertex expansion}, for which we do not always know of spectral certificates.

Our work can also be seen as obtaining ``sparse'' analogues of the results of Gopalan, Guruswami, and
Raghavendra~\cite{GGR09} for the list-decoding of tensor and interleaved codes, which can be viewed
as replacing the bipartite expanders in \Zemor and AEL constructions respectively, by a complete bipartite graph.
Indeed, viewing AEL as a sparse analog of interleaving will play a key role in \cref{chap:capacity}.
%
%
%
\section{A Technical Overview}
\label{sec:overview}

Our proof can be viewed as an algorithmic implementation of the proof of Johnson bound, and the
proofs of distance, for the relevant codes.
We start with an overview of the proof for Tanner codes. The argument is very similar for all codes
considered here, substituting an appropriate proof of distance in each case. 
\vspace{-5 pt}
\paragraph{Johnson bound via covering lemmas.}
We first prove the Johnson bound via a statement we will call a ``covering lemma'', which can then
be generalized to work with convex relaxations.
Given an $[m, \delta, \rho]_q$ code $\calC$ and a received word $g \in [q]^m$, our goal is to
bound the list of radius $\calL(g,\calJ_q(\delta)) = \inbraces{h\in \calC \suchthat \dis(h,g) <
  \calJ_q(\delta) }$, where $\calJ_q(\delta)$ denotes the Johnson bound. 
It will be convenient to work with $\beta = 1 -  \frac{q \cdot \delta}{q-1}$, and the Johnson bound can be
defined using the equation
\[
\inparen{1 - \frac{q \cdot \calJ_q(\delta)}{q-1}}^2 ~=~ \inparen{1 - \frac{q \cdot \delta}{q-1}} ~=~ \beta \mper
\]
Also, we can map elements of $[q]$ to corners $u_1, \ldots, u_q$ of the simplex in $\R^{q-1}$, which are unit
vectors satisfying $\ip{u_i}{u_j} = \nfrac{-1}{(q-1)}$ for $i \neq j$. Applying
this map, say $\chi$, pointwise to $g, h \in [q]^m$, gives
\ifnum\confversion=1
\begin{align*}
\ip{\chi(g)}{\chi(h)} ~=~& \Ex{i \in [m]}{\ip{\chi\inparen{g(i)}}{\chi\inparen{h(i)}}} \\
 ~=~& 1 - \Delta(g,h) - \frac{\Delta(g,h)}{q-1} \\
 ~=~& 1 - \frac{q \cdot \Delta(g,h)}{q-1} \mper
\end{align*}
\else
\[
\ip{\chi(g)}{\chi(h)} ~=~ \Ex{i \in [m]}{\ip{\chi\inparen{g(i)}}{\chi\inparen{h(i)}}} ~=~ 1 - \Delta(g,h) - \frac{\Delta(g,h)}{q-1} ~=~ 1 - \frac{q \cdot
  \Delta(g,h)}{q-1} \mper
\]
\fi
Thus, given $g$ and $\beta$ as above, we can write the list as 
$\calL = \inbraces{h\in \calC \suchthat \ip{\chi(g)}{\chi(h)} > \sqrt{\beta}}$. 
The covering lemma \ifnum\confversion=0 in \cref{sec:covering} \fi shows that given a set $\calF$ of unit vectors 
in an inner-product space, and a unit vector $\tilde{g}$ satisfying $\ip{\tilde{g}}{f} > \sqrt{\beta}$ for
all $f \in \calF$, there exists $g_0$ in the \emph{convex hull} $\conv{\calF}$ satisfying
$\ip{g_0}{f} > \beta$ for all $f \in \calF$. \ifnum\confversion=1 (See the full version for a proof based on minimizing $\ell_2$ norm of $g_0$ subject to $g_0 \in \conv{\calF}$.)\fi

Instantiating this with $\tilde{g} = \chi(g)$ and $\calF = \inbraces{\chi(h) \suchthat h \in
  \calL}$ gives $g_0 \in \conv{\calF}$ satisfying $\ip{g_0}{\chi(h)} > \beta$ for all $h
\in \calL$. 
Additionally, $g_0$ can be chosen to be supported on at most $m \cdot (q-1) +1$ elements of $\calF$ via
\Caratheodory theorem.
Since $\ip{\chi(h_1)}{\chi(h_2)} \leq \beta$ for any $h_1 \neq h_2$ in $\calC$, any $h \in \calF
\setminus \supp(g_0)$ will satisfy $\ip{g_0}{\chi(h)} \leq \beta$, which is a contradiction. 
Thus, we must have $\calL \subseteq \supp(g_0)$ implying $\abs{\calL} \leq (q-1) \cdot m + 1$.
While this is a weaker bound on the list size (but can be independent of $m$ via approximate \Caratheodory
theorems), each step of the above proof can be extended to work well with convex relaxations.
\vspace{-5 pt}
\paragraph{Algorithmic covering lemmas via SoS.}
Note that in the above proof, it suffices to have $g_0$ in the convex hull of \emph{all} codewords
\ie $g_0 \in \conv{\chi(h) \suchthat h \in \calC}$, instead of only the codewords in $\calL$.
Since the convex hull of all the codewords is still a difficult set to optimize over, we instead
work with degree-$t$ \emph{pseudocodewords} defined as solutions to an SoS relaxation of degree-$t$,
respecting certain local constraints corresponding to the code (see \cref{chap:prelims}).

Moreover, the covering lemma used above can be proved by finding the $g_0 \in \conv{\calF}$, which
minimizes $\norm{g_0}$ while satisfying $\ip{g_0}{g} > \sqrt{\beta}$.
Analogously, we consider solutions to the SoS relaxation, given as pseudocodewords $\tildeEx{\cdot}$
satisfying $\ip{\tildeEx{\chi(\zee)}}{\chi(g)} > \sqrt{\beta}$ and minimizing
$\norm{\tildeEx{\chi(\zee)}}$. Note that here $\chi(\zee)$ is a (vector-valued) 1-local function, and
thus the vector $\tildeEx{\chi(\zee)}$ is well defined.
A similar relaxation was also used for list-decoding of direct-sum codes in \cite{AJQST20}.
\vspace{-5 pt}
\paragraph{Choosing from the ``support'' via conditioning.}
The next step of the proof of Johnson bound can be seen as saying that for all $h_1$ such that
$\chi(h_1) \in \supp(g_0)$ and $h_2 \in \calL$, we must have $\ip{\chi(h_1)}{\chi(h_2)} = 1$ (when
they are equal) or $\ip{\chi(h_1)}{\chi(h_2)} \leq \beta$ (when they are distinct codewords).

We will do this in two steps. The first is to develop a good proxy for ``$\chi(h_1) \in \supp(g_0)$''
since we are working with the pseudocodeword $\tildeEx{\cdot}$, which is not a
convex combination of codewords.
Instead, it suffices to look at pseudocodewords which have small
pseudo-covariance across a typical pair of left-right vertices in the bipartite graph defining the Tanner code \ie
\[
\Ex{\substack{\ell \in L \\ r \in R}}{\tildecov(\zee_{N_L(\ell)},\zee_{N_R(r)})} ~\leq~ \eta \mper
\]
Note that when $\tildeEx{\cdot}$ corresponds to an actual codeword (or any
integral solution) the covariance will be 0. Thus, the above is a weakening of the notion of
``vertex of the convex hull''.  We call such a solution, an \emph{$\eta$-good} pseudocodeword.
A similar definition was also used by Richelson and Roy~\cite{RR23} in their list-decoding algorithm
for Ta-Shma's codes~\cite{TS17}, and also in earlier works~\cite{AJQST20, JQST20}.

An argument of Barak, Raghavendra, and Steurer~\cite{BRS11} shows that conditioning the starting SoS
solution on the values few randomly chosen variables, leads to an $\eta$-good solution. 
In fact, one either has small covariance in the sense above, or conditioning on the variables in
$N_R(r)$ for a random $r \in R$ reduces the average variance (seen from the left) by
$\Omega_{q,d}(\eta^2)$. Since the (pseudo-)variance is non-negative, this process terminates,
yielding an $\eta$-good solution.
The argument can also be made deterministic by enumerating over all (constant-sized) subsets to
condition on.

\vspace{-5 pt}
\paragraph{Proof of distance for good pseudocodewords.}
Given an $\eta$-good pseudocodeword $\tildeEx{\cdot}$ as defined above, we can now prove that for
any $h \in \calL$, we must have 
\ifnum\confversion=1
\begin{align*}
\ip{\tildeEx{\chi(\zee)}}{\chi(h)} &~\geq~ 1 - O(\eta) \\
\text{or} \qquad
\ip{\tildeEx{\chi(\zee)}}{\chi(h)} &~\leq~ \beta + O(\eta) \mper
\end{align*}
\else
\[
\ip{\tildeEx{\chi(\zee)}}{\chi(h)} ~\geq~ 1 - O(\eta) \qquad \text{or} \qquad
\ip{\tildeEx{\chi(\zee)}}{\chi(h)} ~\leq~ \beta + O(\eta) \mper
\]
\fi
%
%
To argue the above dichotomy, we will switch to the distances instead of inner products, as the
proof closely follows the distance proof of Tanner codes. Let $\inbraces{X_{\ell}}_{\ell \in L}$ and
$\inbraces{Y_r}_{r \in R}$ be ensembles of $d$-local functions defined as 
$X_{\ell}(\zee) = \one\inbraces{\zee_{N_L(\ell)} \neq h_{N_{L}(\ell)}}$ 
and 
$Y_{r}(\zee) = \one\inbraces{\zee_{N_R(r)} \neq h_{N_{R}(r)}}$. Using the fact that the distance of
the base code $\calC_0$ is at least $\delta_0$, and the fact that the local constraints for $\calC_0$
are respected by the pseudocodewords, one can show that 
\ifnum\confversion=1
\begin{align*}
\Delta(\tildeEx{\cdot}, h) 
~\defeq~& \Ex{e \in E}{\tildeEx{\one\inbraces{\zee_e \neq h_e}}} \\
~\geq~& \delta_0 \cdot \Ex{\ell \in L}{\tildeEx{X_{\ell}(\zee)}} \mper
\end{align*}
\else
\[
\Delta(\tildeEx{\cdot}, h) 
~\defeq~ \Ex{e \in E}{\tildeEx{\one\inbraces{\zee_e \neq h_e}}} 
~\geq~ \delta_0 \cdot \Ex{\ell \in L}{\tildeEx{X_{\ell}(\zee)}} \mper
\]
\fi
Taking the geometric mean with a similar inequality for $\inbraces{Y_r}_{r \in R}$ gives
\[
\Delta(\tildeEx{\cdot}, h) 
~\geq~ 
\delta_0 \cdot \inparen{\Ex{ \substack{\ell \in L \\ r \in R}}{\tildeEx{X_{\ell}(\zee)} \cdot \tildeEx{Y_{r}(\zee)}}}^{1/2}
\mper
\]
On the other hand, a variant of the expander mixing lemma for pseudoexpectations (see \ifnum\confversion=1 full version\else \cref{sec:distance}\fi), together with the $\eta$-good property and the simple observation that \[
\indi{\zee_e \neq h_e} \leq X_{\li}(\zee) \cdot Y_{\ri}(\zee), \] gives
\ifnum\confversion=1
\small
\begin{align*}
&\ \Delta(\tildeEx{\cdot}, h) \\
&\leq \Ex{\ell \sim r}{\tildeEx{X_{\ell}(\zee) \cdot Y_r(\zee)}}  \\
&\leq \Ex{\ell \in L \atop r \in R}{\tildeEx{X_{\ell}(\zee) \cdot Y_r(\zee)}} + \\
 &\hspace*{2cm} \lambda \cdot \inparen{\Ex{\ell \in L \atop r \in R}{\tildeEx{X_{\ell}^2(\zee)} \cdot \tildeEx{Y_{r}^2(\zee)}}}^{1/2} \\
&\leq \Ex{\ell \in L \atop r \in R}{\tildeEx{X_{\ell}(\zee)} \cdot \tildeEx{Y_r(\zee)}} + \\
& \hspace*{2cm} \eta + \lambda \cdot \inparen{\Ex{\ell \in L \atop r \in R}{\tildeEx{X_{\ell}(\zee)} \cdot
       \tildeEx{Y_{r}(\zee)}}}^{1/2} \mper
\end{align*}
\normalsize
\else
\begin{align*}
\Delta(\tildeEx{\cdot}, h) 
&~\leq~ \Ex{\ell \sim r}{\tildeEx{X_{\ell}(\zee) \cdot Y_r(\zee)}}  \\
&~\leq~ 
 \Ex{\substack{\ell \in L \\ r \in R}}{\tildeEx{X_{\ell}(\zee) \cdot Y_r(\zee)}} 
~+~ 
\lambda \cdot \inparen{\Ex{\substack{\ell \in L \\ r \in R}}{\tildeEx{X_{\ell}^2(\zee)} \cdot
                                                                         \tildeEx{Y_{r}^2(\zee)}}}^{1/2}
  \\
&~\leq~
 \Ex{\substack{\ell \in L \\ r \in R}}{\tildeEx{X_{\ell}(\zee)} \cdot \tildeEx{Y_r(\zee)}} ~+~ \eta   
~+~ 
\lambda \cdot \inparen{\Ex{ \substack{\ell \in L \\ r \in R}}{\tildeEx{X_{\ell}(\zee)} \cdot
       \tildeEx{Y_{r}(\zee)}}}^{1/2} \mper
\end{align*}
\fi
Note that the last line also used $\tildeEx{X_{\ell}^2(\zee)} = \tildeEx{X_{\ell}(\zee)}$ (and
similarly for $Y_r$).
Comparing the two bounds and solving the resulting quadratic inequality in $\tau =
(\Ex{\ell,r}{\tildeEx{X_{\ell}(\zee)} \cdot \tildeEx{Y_r(\zee)}})^{1/2}$ yields the dichotomy.
\vspace{-5 pt}
\paragraph{Completing the argument.}
Starting the algorithmic covering lemma with parameter $\beta + \eps$ ensures that with positive
(constant) probability, for any fixed $h \in \calL$, the conditioning yields an $\eta$-good
pseudocodeword satisfying $\ip{\tildeEx{\chi(\zee)}}{\chi(h)} \geq \beta + \nfrac{\eps}{2}$. 
By the above dichotomy, we must be in the first case (for sufficiently small $\eta$). 
%
%
A simple averaging argument then shows that considering $h' \in [q]^m$ with $h'(e) = \argmax_{j \in
  [q]}\inbraces{\tildeEx{\indi{\zee_{e}=j}}}$ gives $\Delta(h',h) = O(\eta)$.
Given such an $h'$, the codeword $h$ can be recovered by unique decoding.
%

The above argument can also be made to work for other codes, where the distance proof is a spectral argument
based on expanders, such as the codes obtained via the distance amplification of Alon, Edmonds and
Luby~\cite{AEL95}. 
One can simply use pseudocodewords for the appropriate code, and substitute the corresponding distance
proof in the argument above.


%
\section{Covering Lemma and Johnson Bounds}
\label{sec:covering}
In this section, we will introduce our abstract covering lemma, and then use it to give algorithm-friendly proofs of (known) Johnson bounds \cite{G01} by showing that there is a distribution over codewords that covers the list. Next, we will show that if we are willing to work with distributions over degree-$t$ pseudocodewords (or just pseudocodewords, due to convexity), then such a pseudocodeword may be found in time $n^{\calO(t)}$.

\subsection{Covering Lemma}

\begin{lemma}[Covering Lemma]\label{lem:covering}
	Let $\calH$ be an inner product space, and let $\calF$ be a family of unit vectors in $\calH$. Suppose there exists a $g \in\calH$ of unit norm such that for any $f\in \calF$, $\ip{g}{f} > \eps$. Then, there exists a $g_0 \in \conv{\calF}$ such that for any $f\in\calF$, $\ip{g_0}{f} > \eps^2$.
\end{lemma}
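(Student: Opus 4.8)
The plan is to take $g_0$ to be a point of minimum norm in $\conv{\calF}$ (in all the applications $\calF$ is finite, so $\conv{\calF}$ is a compact polytope and such a minimizer is attained; see below for the infinite case). Two facts about $g_0$ will do all the work: a lower bound on $\norm{g_0}$ coming from the hypothesis about $g$, and the first-order optimality condition for a norm-minimizer over a convex set.

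First I would lower-bound $\norm{g_0}$. Writing $g_0 = \sum_i \lambda_i f_i$ as a finite convex combination of elements of $\calF$, linearity of the inner product together with the hypothesis $\ip{g}{f_i} > \eps$ gives $\ip{g}{g_0} = \sum_i \lambda_i \ip{g}{f_i} > \eps \sum_i \lambda_i = \eps$. Since $\norm{g} = 1$, Cauchy--Schwarz yields $\norm{g_0} \ge \ip{g}{g_0} > \eps$, and hence $\norm{g_0}^2 > \eps^2$.

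Next I would extract the optimality condition. For any $f \in \calF$ and any $t \in [0,1]$, the point $(1-t)\,g_0 + t\,f$ lies in $\conv{\calF}$, so the quadratic $\phi(t) = \norm{g_0}^2 + 2t\,\ip{g_0}{f - g_0} + t^2\,\norm{f - g_0}^2$ is minimized over $[0,1]$ at $t = 0$ by the choice of $g_0$. Therefore $\phi'(0) = 2\,\ip{g_0}{f - g_0} \ge 0$, i.e. $\ip{g_0}{f} \ge \norm{g_0}^2$. Combining this with the bound from the previous step, $\ip{g_0}{f} \ge \norm{g_0}^2 > \eps^2$ for every $f \in \calF$, which is exactly the claim, and $g_0 \in \conv{\calF}$ by construction.

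The only genuine subtlety is the existence of a minimum-norm element when $\calF$ is infinite. In that case one replaces $\conv{\calF}$ by its closure $\overline{\conv{\calF}}$ and invokes the standard fact that a nonempty closed convex subset of a Hilbert space contains a (unique) element of minimum norm, namely the projection of the origin onto it. Passing to the closure only weakens $\ip{g}{v} > \eps$ to $\ip{g}{v} \ge \eps$, so one obtains $\norm{g_0} \ge \eps$ and the conclusion $\ip{g_0}{f} \ge \eps^2$ with a non-strict inequality, which still suffices for the applications; for finite $\calF$ the strict version above holds verbatim. I expect this existence/closedness point, rather than the inequalities themselves, to be the one place needing a word of care.
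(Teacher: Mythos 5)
Your proof is correct and takes essentially the same route as the paper: both choose $g_0$ as the minimum-norm point of $\conv{\calF}$ (the paper nominally minimizes over $\inbraces{v \in \conv{\calF} : \ip{v}{g} > \eps}$, but every point of the hull satisfies that constraint automatically) and lower-bound $\norm{g_0} > \eps$ via Cauchy--Schwarz. The only difference is presentational: you invoke the first-order optimality condition $\ip{g_0}{f - g_0} \ge 0$ to get $\ip{g_0}{f} \ge \norm{g_0}^2 > \eps^2$ directly, whereas the paper runs the same perturbation $(1-\tee)g_0 + \tee f$ as an explicit quadratic-in-$\tee$ contradiction; your caveat about attainment of the minimizer (immediate for the finite $\calF$ arising in the applications) is a fair point that the paper glosses over.
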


\begin{proof}
Consider the set $T = \inbraces{v\in \conv{\calF} \suchthat \ip{v}{g} >\eps}$, which is non-empty because any $f\in \calF$ also belongs to $T$. Let $g_0 = \argmin_{v\in T} \norm{v}^2$. We will show that $g_0$ must have the property that $\ip{g_0}{f}>\eps^2$ for any $f\in\calF$. Note that $\norm{g_0}\cdot \norm{g}\geq \ip{g_0}{g}>\eps$ means $\norm{g_0}>\eps$.

Suppose not, then there exists an $f\in\calF$ such that $\ip{g_0}{f} \leq \eps^2$. For an $\tee \in[0,1]$ to be chosen later, consider $g_1 = \tee g_0+(1-\tee)f$, which is also in $\conv{\calF}$, and $\ip{g_1}{g} = \tee \cdot \ip{g_0}{g}+(1-\tee)\cdot \ip{f}{g} >\eps$.
\begin{align*}
	\norm{g_1}^2 &~=~ \ip{\tee g_0+(1-\tee)f}{\tee g_0+(1-\tee)f} \\
	&~=~ \tee^2 \cdot \norm{g_0}^2+2\tee(1-\tee) \cdot \ip{g_0}{f}+(1-\tee)^2 \cdot \norm{f}^2\\
	&~\leq~ \tee^2 \cdot \norm{g_0}^2+2\tee(1-\tee)\eps^2+(1-\tee)^2\\
	&~=~\tee^2\cdot (\norm{g_0}^2-\eps^2) +\eps^2 \cdot (\tee+(1-\tee))^2 + (1-\tee)^2 (1-\eps^2)\\
	&~=~\tee^2\cdot (\norm{g_0}^2-\eps^2) +\eps^2 + (1-\tee)^2 \cdot (1-\eps^2)\\
	&~=~\tee^2 \cdot (1-\eps^2 + \norm{g_0}^2-\eps^2) - 2\tee \cdot (1-\eps^2) +1
\end{align*}
At $\tee=1$, this expression equals $\norm{g_0}^2$. The minimum of this quadratic function of $\tee$
is achieved at $\tee= \frac{2(1-\eps^2)}{2(1-\eps^2+\norm{g_0}^2-\eps^2)}$, which is $<1$ as
$\norm{g_0} >\eps$. Thus, we can reduce $\norm{g_1}^2$ further to strictly less than $\norm{g_0}^2$
by choosing $\tee$ to the above value, which contradicts the optimality of $g_0$.
\end{proof}
\subsection{Johnson bounds}

We will prove the standard $q$-ary Johnson bound and the version from \cite{G01} which receives weight $W_{i,j}$ for each $j\in [q]$ for every coordinate $i\in [n]$. First we define some functions to embed the received word $g$ (or received weights) in $\R^{(q-1)n}$, which will be the inner product space where we apply the covering lemma.

\begin{definition}
Fix $q \in \N$. We denote by $\chi_q: [q]\rightarrow \R^{q-1}$ any function that satisfies
\begin{align*}
	\ip{\chi_q(j_1)}{\chi_q(j_2)} = \begin{cases}
		1 &j_1=j_2 \\
		\frac{-1}{q-1} &j_1\neq j_2
	\end{cases}
\end{align*} 
For $f\in [q]^n$, we use $\chi_q(f)$ to denote the vector in $\R^{(q-1)n}$ obtained by applying $\chi_q$ on $f$ coordinate-wise. Note that \[\norm{\chi_q(f)}^2 ~=~ \Ex{i\in[n]}{\norm{\chi_q(f_i)}^2} = 1\]\\
When clear from context, we may omit $q$ to write $\chi(f)$.
\end{definition}

Such a $\chi_q$ exists because the corresponding $q\times q$ Gram matrix is positive semidefinite and of rank $q-1$.

\begin{observation}
For $f_1,f_2 \in [q]^n$, if $\dis(f_1,f_2) = (1-\frac{1}{q})(1-\beta)$, then
\[
	\ip{\chi(f)}{\chi(g)} = \beta
\]
\end{observation}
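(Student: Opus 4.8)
The plan is to simply expand the inner product coordinate by coordinate using the defining property of $\chi_q$ and then substitute the hypothesized value of the distance. First I would write
\[
	\ip{\chi(f_1)}{\chi(f_2)} ~=~ \Ex{i\in[n]}{\ip{\chi_q(f_1(i))}{\chi_q(f_2(i))}}\mcom
\]
which is just the definition of the coordinate-wise extension of $\chi_q$ together with the normalization $\norm{\chi_q(f)}^2 = \Ex{i\in[n]}{\norm{\chi_q(f_i)}^2}$.

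Next, I would split the average over coordinates according to whether $f_1(i) = f_2(i)$ or not. By the defining property of $\chi_q$, each agreeing coordinate contributes $1$ and each disagreeing coordinate contributes $-\tfrac{1}{q-1}$. Writing $\delta = \dis(f_1,f_2)$ for the fraction of disagreeing coordinates, this gives
\[
	\ip{\chi(f_1)}{\chi(f_2)} ~=~ (1-\delta)\cdot 1 ~+~ \delta \cdot \inparen{\frac{-1}{q-1}} ~=~ 1 - \delta - \frac{\delta}{q-1} ~=~ 1 - \frac{q\delta}{q-1}\mper
\]

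Finally I would substitute $\delta = \inparen{1-\tfrac1q}(1-\beta) = \tfrac{q-1}{q}(1-\beta)$, so that $\tfrac{q\delta}{q-1} = 1-\beta$, yielding $\ip{\chi(f_1)}{\chi(f_2)} = 1-(1-\beta) = \beta$, as claimed. (Here I am reading the $f,g$ in the statement as a typo for $f_1,f_2$.) There is no real obstacle: the only thing to be mildly careful about is correctly tracking the constant $\tfrac{q}{q-1}$ and the precise normalization of $\chi_q$, but both are pinned down exactly by the definition given just above the statement.
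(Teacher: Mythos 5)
Your proof is correct and follows exactly the computation the paper itself uses (the observation is stated without proof, but the identity $\ip{\chi(g)}{\chi(h)} = 1 - \frac{q\cdot\Delta(g,h)}{q-1}$ is derived coordinate-wise in the technical overview in precisely the way you describe, and the substitution $\delta = (1-\tfrac1q)(1-\beta)$ is routine). You are also right that the $f,g$ in the displayed equation is a typo for $f_1,f_2$.
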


\begin{theorem}[$q$-ary Johnson bound]\label{thm:johnson_bound}
Let $\calC \subseteq [q]^n$  be a code of distance at least $\dis(\calC) = (1-\frac{1}{q})(1-\beta)$. For any $g\in [q]^n$, 
\[
	\abs{\calL \inparen{ g,\inparen{1-\frac{1}{q}} \cdot (1-\sqrt{\beta}) } } ~\leq~ (q-1)\cdot n+1
\]
\end{theorem}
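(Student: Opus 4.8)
The plan is to derive the bound directly from the Covering Lemma (\cref{lem:covering}), following the informal argument sketched in \cref{sec:overview}. First I would take the inner product space to be $\calH = \R^{(q-1)n}$ with its standard inner product, and apply the embedding $\chi = \chi_q$ coordinatewise, so that every element of $[q]^n$ maps to a unit vector. Writing $\calL \defeq \calL\inparen{g, (1-1/q)(1-\sqrt{\beta})}$, membership $h \in \calL$ means $\dis(g,h) < (1-1/q)(1-\sqrt{\beta})$, so by the Observation (with the appropriate strict inequality) $\ip{\chi(g)}{\chi(h)} > \sqrt{\beta}$. On the other hand, for any two distinct codewords $h_1, h_2 \in \calC$ we have $\dis(h_1, h_2) \geq (1-1/q)(1-\beta)$, hence $\ip{\chi(h_1)}{\chi(h_2)} = 1 - \tfrac{q}{q-1}\dis(h_1,h_2) \leq \beta$. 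Thus $\chi(g)$ is a unit vector with $\ip{\chi(g)}{f} > \sqrt{\beta}$ for every $f$ in the family of unit vectors $\calF \defeq \inbraces{\chi(h) \suchthat h \in \calL}$.

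Next I would invoke \cref{lem:covering} with the witness vector $\chi(g)$ and with $\eps = \sqrt{\beta}$, obtaining $g_0 \in \conv{\calF}$ such that $\ip{g_0}{f} > \beta$ for every $f \in \calF$. Since $\conv{\calF}$ lies in the $(q-1)n$-dimensional space $\calH$, \Caratheodory's theorem lets me additionally assume $g_0 = \sum_{i=1}^{k} c_i \, \chi(h_i)$ with $h_i \in \calL$, $c_i > 0$, $\sum_{i=1}^k c_i = 1$, and $k \leq (q-1)n + 1$.

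Finally I would argue by contradiction: if $\abs{\calL} > (q-1)n + 1 \geq k$, then there is some $h^\star \in \calL$ not among $h_1, \dots, h_k$. Using $\ip{\chi(h_i)}{\chi(h^\star)} \leq \beta$ for each $i$ (distinct codewords of $\calC$) together with $\sum_i c_i = 1$, we get $\ip{g_0}{\chi(h^\star)} = \sum_{i=1}^k c_i \ip{\chi(h_i)}{\chi(h^\star)} \leq \beta$, which contradicts $\ip{g_0}{\chi(h^\star)} > \beta$ established above. Hence $\abs{\calL} \leq (q-1)n + 1$.

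The argument has no serious obstacle once \cref{lem:covering} is in hand; the only points needing care are (i) tracking the directions of the inequalities — converting "distance at least $(1-1/q)(1-\beta)$" into "inner product at most $\beta$", and "distance strictly below the Johnson radius" into a strict inner-product bound via the Observation — and (ii) the passage from $g_0 \in \conv{\calF}$ to a convex combination of at most $(q-1)n+1$ vectors through \Caratheodory, which is precisely what yields the quantitative list-size bound. (A list size independent of $n$ would instead use an approximate \Caratheodory theorem, at the cost of replacing $\beta$ by $\beta + o(1)$; the exact version suffices for the stated bound.)
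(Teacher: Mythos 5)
Your proposal is correct and follows essentially the same route as the paper: embed via $\chi_q$, apply \cref{lem:covering} with $\eps = \sqrt{\beta}$, and use \Caratheodory to bound the support of $g_0$. The only cosmetic difference is that you phrase the final step as a contradiction ("an $h^\star$ outside the support would have $\ip{g_0}{\chi(h^\star)} \leq \beta$"), whereas the paper argues the contrapositive directly ($\ip{g_0}{\chi(h)} > \beta$ forces $\chi(h) \in \supp(g_0)$); these are the same argument.
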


\begin{proof}

Let $\calL = \calL \inparen{ g,\inparen{1-\frac{1}{q}}\cdot (1-\sqrt{\beta}) } = \inbraces{h_1,h_2,\cdots,h_m}$. Then for any $h_i\in \calL$,
\[
	\ip{\chi(g)}{\chi(h_i)} > \sqrt{\beta}
\]
By \cref{lem:covering}, there exists a $g_0 \in \conv{\inbraces{\chi(h_1),\chi(h_2),\cdots,\chi(h_m)}}$ such that $\ip{g_0}{\chi(h)}>\beta$ for every $h\in \calL$. Since $g_0 \in \R^{(q-1)n}$, by the \Caratheodory theorem, we may assume that $g_0$ can be written as a convex combination of at most $(q-1)n+1$ elements of $\inbraces{\chi(h_1),\chi(h_2),\cdots,\chi(h_m)}$. Let this subset of $\inbraces{\chi(h_1),\chi(h_2),\cdots,\chi(h_m)}$ be $\supp(g_0)$. 

For any $h\in \calL$,
\begin{align*}
	\ip{g_0}{\chi(h)} > \beta &~\implies~
	\exists \chi(h_0)\in \supp(g_0) \text{ such that } \ip{ \chi(h_0) }{\chi(h)} > \beta \\
	&~\implies~ \exists \chi(h_0)\in \supp(g_0) \text{ such that } \ip{\chi(h_0)}{\chi(h)}=1 \\
	&~\implies~ \exists \chi(h_0)\in \supp(g_0) \text{ such that } h_0=h \\
	&~\implies~ \chi(h) \in \supp(g_0)
\end{align*}
Here we used the fact that for any $h_i,h_j\in \calL$, due to distance of the code,
\[
	\ip{\chi(h_i)}{\chi(h_j)} \leq \beta \quad \text{ or }\quad \ip{\chi(h_i)}{\chi(h_j)} =1
\]
Finally, we have concluded that $\inbraces{\chi(h_1),\chi(h_2),\cdots,\chi(h_m)} \sub \supp(g_0)$ and using $|\supp(g_0)| \leq (q-1)\cdot n+1$ gives $m=|\calL| \leq (q-1)\cdot n +1$.
\end{proof}

Note that in the above proof, $g_0\in \conv{\inbraces{\chi(h_1),\chi(h_2),\cdots,\chi(h_m)}} \sub \conv{\chi(\calC)}$. Indeed, even if we minimize the norm as in the proof of \cref{lem:covering} over $\conv{\chi(\calC)}$, we will still get the covering property. This will be useful when trying to find the cover via an efficient algorithm, where we do not know the list apriori.

We now prove the more general weighted version of Johnson bound, which captures list recovery as a special case.

\begin{theorem}[Weighted Johnson bound \cite{G01}]\label{thm:weighted_johnson_bound}
	Let distance of code be at least $(1-\frac{1}{q})(1-\beta)$. Given weights $\{w_{i,j}\}_{i\in [n],j\in [q]}$, let $W_i = \sum_j w_{i,j}$ and $W_i^{(2)} = \sum_j w_{i,j}^2$. The number of codewords $h$ that satisfy
	\[
		\Ex{i}{\frac{w_{i,h_i}}{W_i}} > \frac{1}{q} + \sqrt{\inparen{1-\frac{1}{q}} \cdot
                  \Ex{i}{\frac{W_i^{(2)}}{W_i^2} -\frac{1}{q}} \cdot \beta }
	\]
	is at most $n(q-1)+1$.
\end{theorem}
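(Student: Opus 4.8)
The plan is to reduce the weighted statement to the (already proved) unweighted Covering Lemma, \cref{lem:covering}, by repackaging the weight data $\{w_{i,j}\}$ into a single unit vector of $\R^{(q-1)n}$ against which inner products reproduce exactly the left-hand side of the displayed hypothesis. First I would record the identity $\sum_{j \in [q]} \chi_q(j) = 0$, which is immediate from the Gram relations since $\norm{\sum_j \chi_q(j)}^2 = q + q(q-1)\cdot(-\tfrac{1}{q-1}) = 0$. Then, for each coordinate $i \in [n]$ (assuming as usual $w_{i,j} \ge 0$ and $W_i > 0$), define $u_i \defeq \sum_{j \in [q]} \tfrac{w_{i,j}}{W_i}\,\chi_q(j) \in \R^{q-1}$ and set $u \defeq (u_1,\dots,u_n)$, using the averaged inner product $\ip{x}{y} = \Ex{i}{\ip{x_i}{y_i}}$ under which $\norm{\chi_q(h)} = 1$, just as in the definition of $\chi_q$ on strings.

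The only real content is a short calculation from $\ip{\chi_q(j_1)}{\chi_q(j_2)} \in \{1, -\tfrac{1}{q-1}\}$ together with $\sum_j \chi_q(j) = 0$: for any codeword $h$,
\[
\ip{u_i}{\chi_q(h_i)} ~=~ \frac{q}{q-1}\inparen{\frac{w_{i,h_i}}{W_i} - \frac1q}
\qquad\text{and}\qquad
\norm{u_i}^2 ~=~ \frac{q}{q-1}\inparen{\frac{W_i^{(2)}}{W_i^2} - \frac1q}\mper
\]
Writing $\sigma^2 \defeq \Ex{i}{W_i^{(2)}/W_i^2 - 1/q}$ (which is $\ge 0$ by Cauchy--Schwarz), I would handle the degenerate case $\sigma = 0$ separately — there every $w_{i,j}$ is constant in $j$, so $\Ex{i}{w_{i,h_i}/W_i} = 1/q$ and the hypothesis reads $1/q > 1/q$, so the list is empty. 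Otherwise set $g \defeq u/\norm{u}$; averaging the two displays over $i$ gives $\norm{g} = 1$ and
\[
\ip{g}{\chi_q(h)} ~=~ \sqrt{\frac{q}{q-1}}\cdot\frac{\Ex{i}{w_{i,h_i}/W_i} - 1/q}{\sigma}\mcom
\]
and one checks that the hypothesis on $h$ is exactly equivalent to $\ip{g}{\chi_q(h)} > \sqrt{\beta}$ (using $\tfrac{q}{q-1}\cdot(1-\tfrac1q) = 1$).

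From here the argument is a verbatim repetition of the proof of \cref{thm:johnson_bound}. Let $\calL$ be the set of codewords satisfying the hypothesis and $\calF = \{\chi_q(h) : h \in \calL\}$, a family of unit vectors with $\ip{g}{f} > \sqrt\beta$ for all $f \in \calF$; \cref{lem:covering} with $\eps = \sqrt\beta$ yields $g_0 \in \conv{\calF}$ with $\ip{g_0}{\chi_q(h)} > \beta$ for every $h \in \calL$, and by \Caratheodory's theorem $g_0$ is a convex combination of at most $(q-1)n+1$ of the vectors $\chi_q(h)$, whose set I call $\supp(g_0)$. For distinct codewords the code distance $(1-\tfrac1q)(1-\beta)$ and the Observation relating $\dis$ to $\ip{\chi_q(\cdot)}{\chi_q(\cdot)}$ give $\ip{\chi_q(h_i)}{\chi_q(h_j)} \le \beta$, so any $h \in \calL$ with $\chi_q(h) \notin \supp(g_0)$ would satisfy $\ip{g_0}{\chi_q(h)} \le \beta$, a contradiction; hence $\calF \subseteq \supp(g_0)$ and $\abs{\calL} \le (q-1)n+1$. (The list-recovery corollary follows by taking $w_{i,j} = \indi{j \in S_i}$ for input lists $S_i$.) I do not anticipate a genuine obstacle: the whole proof hinges on guessing the correct embedding $u_i$, and the only place demanding care is the degenerate case $\sigma = 0$ and checking the square roots are real, which holds since $W_i^{(2)}/W_i^2 \ge 1/q$ always and $\beta \ge 0$ in the regime where the Johnson bound is meaningful.
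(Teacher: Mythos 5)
Your proposal is correct and follows essentially the same route as the paper: the embedding $u_i = \sum_j \tfrac{w_{i,j}}{W_i}\chi_q(j)$, the normalization by $\bigl(\tfrac{q}{q-1}\Ex{i}{W_i^{(2)}/W_i^2 - 1/q}\bigr)^{1/2}$, the verification that the hypothesis becomes $\ip{g}{\chi_q(h)} > \sqrt{\beta}$, and the reduction to \cref{lem:covering} plus \Caratheodory{} are all exactly what the paper does. Your explicit treatment of the degenerate case $\sigma = 0$ is a small point of extra care the paper omits, but it is not a different argument.
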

\begin{proof}
Embed the received weights in $i^{th}$ position as $\chi(\{w_{i,j}\}_{j\in [q]}) = \sum_j \frac{w_{i,j} \chi_q(j)}{W_i}$, and append all these $n$ vectors, each of dimension $q-1$, and normalize to a unit vector to form the final embedded vector $u$. This normalizing factor is $ \inparen{\ExpOp_{i}{\norm{\frac{\sum_j w_{ij} \chi_q(j)}{W_i} }^2}}^{1/2} $, which we can simplify as
\begin{align*}
	\Ex{i}{\norm{\frac{\sum_j w_{ij} \chi_q(j)}{W_i} }^2} &~=~ \Ex{i}{\frac{1}{W_i^2} \left( \sum_j w_{ij}^2 -\frac{1}{q-1} \sum_{j_1\neq j_2} w_{ij_1}w_{ij_2}\right) }	\\
	&~=~ \Ex{i}{\frac{1}{W_i^2} \left( \frac{q}{q-1} \sum_j w_{ij}^2 -\frac{1}{q-1} (\sum_{j} w_{ij})^2 \right) }\\
	&~=~ \frac{q}{q-1} \cdot \Ex{i}{ \frac{1}{W_i^2} \left( W_i^{(2)} -\frac{1}{q}W_i^2 \right) }\\
	&~=~ \frac{q}{q-1} \cdot \Ex{i}{ \frac{W_i^{(2)}}{W_i^2} -\frac{1}{q} }
\end{align*}
We then have
\begin{align*}
	 \inparen{\frac{q}{q-1} \cdot \Ex{i}{ \frac{W_i^{(2)}}{W_i^2} -\frac{1}{q} }}^{1/2} \cdot  ~\ip{u}{\chi_q(h)} &~=~ \Ex{i}{\frac{1}{W_i} \left( \sum_{j} w_{i,j} \ip{\chi_q(j)}{\chi_q(h_i)}\right) } \\
	&~=~ \Ex{i}{\frac{1}{W_i} \left(  w_{i,h_i} - \frac{\sum_{j\neq f_i} w_{i,j}}{q-1} \right)} \\
	&~=~ \Ex{i}{ \frac{q}{q-1} \frac{w_{i,h_i}}{W_i} - \frac{1}{q-1} } 
	~=~ \frac{q}{q-1} \cdot \Ex{i}{ \frac{w_{i,h_i}}{W_i} - \frac{1}{q} } 
\end{align*}

For a codeword $h$ satisfying $\Ex{i}{\frac{w_{i,h_i}}{W_i}} > \frac{1}{q} +
\sqrt{(1-\frac{1}{q}) \cdot \Ex{i}{\frac{W_i^{(2)}}{W_i^2} -\frac{1}{q}} \cdot \beta }$, we
have $\ip{u}{\chi(h)} >\sqrt{\beta}$
and the rest of the proof follows as in \cref{thm:johnson_bound}, by using the fact that any two codewords $h_1,h_2$ satisfy $\ip{\chi(h_1)}{\chi(h_2)} \leq \beta$.
\end{proof}

\subsection{Algorithmic covering lemma}

As mentioned earlier, the above lemma guarantees the existence of a distribution over the list that agrees simultaneously with the entire list, but finding such a distribution may be hard without knowing the list already! The other option is to look for a distribution over all codewords, but such a polytope will have exponential number of vertices and it's not clear how to optimize over it efficiently.

We instead relax to allow degree-$t$ pseudodistributions, represented as degree-$t$ pseudoexpectation operators, which can be optimized over in time $n^{O(t)}$.

We recall that $\chiTan: [q] \rightarrow \R^{q-1}$ is a function that is extended pointwise to $\chiTan:[q]^E \rightarrow \inparen{\R^{q-1}}^{|E|}$. Note that $\chiTan$ is a 1-local vector-valued function on $[q]^E$, and so we can use $\tildeEx{\chiTan(\zee)} \in \R^E$ which will satisfy
\[
	\tildeEx{\chiTan(\zee)}(e) = \tildeEx{\chi_q(\zee_e)}
\]
%

\subsubsection{Tanner code}

Let $q \geq 2$ be an integer. Fix $G(L,R,E)$ to be an $(n,d,\lambda)$-expander, and $\calC_0 \subseteq [q]^d$ to be an inner code. Let $\TanC \subseteq [q]^E$ be the Tanner code determined by $G$ and $\calC_0$. 

\begin{lemma}\label{lem:abstract_algorithmic_covering}
	Fix $\gamma>0$. Let $u\in \R^{(q-1)|E|}$ with $\norm{u}_2 =1$. For any $t\geq d$, there exists a pseudocodeword $\tildeEx{\cdot}$ of SoS-degree $t$ such that for any $h \in \TanC$ that satisfies $\ip{u}{\chiTan(h)}>\gamma$, 
	\[
		\ip{\tildeEx{\chiTan(\zee)}}{\chiTan(h)} = \Ex{e\in E}{\ip{\tildeEx{\chi_q(\zee_e)}}{\chi_q(h(e))}} > \gamma^2
	\]
	Moreover, this pseudocodeword $\tildeEx{\cdot}$ can be found in time $n^{\bigoh(t)}$.
\end{lemma}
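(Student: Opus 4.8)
The idea is to run the argument of \cref{lem:covering} inside the (spectrahedral, hence convex) set of degree-$t$ Tanner pseudocodewords, exploiting that this set is SDP-representable and still contains the image of every genuine codeword. Write $\calP$ for the set of degree-$t$ Tanner pseudocodewords; the map $\tildeEx{\cdot} \mapsto \tildeEx{\chiTan(\zee)} \in \R^{(q-1)|E|}$ is linear, and for each $h \in \TanC$ the point mass at $h$ is a valid element of $\calP$ whose image is $\chiTan(h)$, so the image of $\calP$ contains $\conv{\chiTan(\TanC)}$. The plan is: minimize $\norm{\tildeEx{\chiTan(\zee)}}_2^2 = \Ex{e \in E}{\norm{\tildeEx{\chi_q(\zee_e)}}^2}$ over $\tildeEx{\cdot} \in \calP$ subject to the single linear constraint $\ip{\tildeEx{\chiTan(\zee)}}{u} \geq \gamma$. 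The defining constraints of $\calP$ are $d$-local (hence enforceable at SoS-degree $t$), the objective is a positive semidefinite quadratic form in the degree-$\le t$ moments, and the feasible region is a compact spectrahedron; so this is a semidefinite program of size $n^{\bigoh(t)}$ which attains its optimum and can be solved in time $n^{\bigoh(t)}$ by the results on optimizing over pseudoexpectations recalled in \cref{chap:prelims}. If the program is infeasible, no point mass at a codeword is feasible either, so no $h \in \TanC$ satisfies $\ip{u}{\chiTan(h)} > \gamma$, the claim is vacuous, and we output an arbitrary pseudocodeword. Otherwise let $\tildeEx{\cdot}$ be an optimal solution and $g_0 \defeq \tildeEx{\chiTan(\zee)}$.

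Next comes the dichotomy, copied essentially verbatim from the proof of \cref{lem:covering}. From $\ip{g_0}{u} \geq \gamma$ and $\norm{u} = 1$ we get $\norm{g_0} \geq \gamma$; if $\norm{g_0} = \gamma$, equality in Cauchy--Schwarz forces $g_0 = \gamma u$, and then every $h$ with $\ip{u}{\chiTan(h)} > \gamma$ already satisfies $\ip{g_0}{\chiTan(h)} = \gamma \ip{u}{\chiTan(h)} > \gamma^2$, as required. So assume $\norm{g_0} > \gamma$ and suppose, for contradiction, that some codeword $h$ has $\ip{u}{\chiTan(h)} > \gamma$ but $\ip{g_0}{\chiTan(h)} \leq \gamma^2$. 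Let $f \defeq \chiTan(h)$, a unit vector which is the image of the point mass at $h$; then for $\tee \in [0,1]$ the convex combination of $\tildeEx{\cdot}$ with that point mass is again in $\calP$ and maps to $g_\tee \defeq \tee g_0 + (1-\tee) f$, which satisfies $\ip{g_\tee}{u} > \gamma$, hence is feasible. The exact computation of \cref{lem:covering} (using $\norm{f} = 1$, $\ip{g_0}{f} \leq \gamma^2$, $\norm{g_0} > \gamma$) shows that $\tee \mapsto \norm{g_\tee}^2$ is a quadratic whose minimum is attained at $\tee^* = \frac{1-\gamma^2}{1-\gamma^2+\norm{g_0}^2-\gamma^2} \in (0,1)$ with value strictly below $\norm{g_0}^2$; feasibility of $g_{\tee^*}$ contradicts optimality of $g_0$. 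Hence no such $h$ exists.

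Therefore the optimal $\tildeEx{\cdot}$ satisfies, for every $h \in \TanC$ with $\ip{u}{\chiTan(h)} > \gamma$,
\[
\ip{\tildeEx{\chiTan(\zee)}}{\chiTan(h)} ~=~ \Ex{e \in E}{\ip{\tildeEx{\chi_q(\zee_e)}}{\chi_q(h(e))}} ~>~ \gamma^2 ,
\]
the displayed identity being just the definition of $\chiTan$ as a $1$-local map applied coordinatewise, together with $\tildeEx{\chiTan(\zee)}(e) = \tildeEx{\chi_q(\zee_e)}$. The main points requiring care are purely bookkeeping: that $\calP$ is compact so the minimum is attained, that the objective and the constraint $\ip{\tildeEx{\chiTan(\zee)}}{u} \geq \gamma$ are genuinely expressible within a degree-$t$ SoS SDP of size $n^{\bigoh(t)}$, and that every step of the improvement argument --- in particular the ``improving direction'' $f$ --- stays inside $\calP$, which holds precisely because point masses at true codewords are degree-$t$ pseudocodewords. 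The geometric content is identical to \cref{lem:covering}; no new inequality is needed.
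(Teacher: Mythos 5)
Your proposal is correct and follows essentially the same route as the paper: minimize $\norm{\tildeEx{\chiTan(\zee)}}^2$ over the SDP-representable set of degree-$t$ pseudocodewords subject to $\ip{\tildeEx{\chiTan(\zee)}}{u}\geq\gamma$, then derive the covering property by mixing the optimum with the point mass at a violating codeword $h$ and showing the norm strictly decreases, contradicting optimality. Your handling of the boundary case $\norm{g_0}=\gamma$ and of infeasibility, and your consistent use of $\gamma^2$ in the contradiction hypothesis, are if anything slightly more careful than the paper's own write-up, but the argument is the same.
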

\begin{proof}
	Define the convex function $\Psi(\tildeEx{\cdot}) = \ip{\tildeEx{\chiTan(\zee)}}{\tildeEx{\chiTan(\zee)}}$ and solve the convex program in \cref{tab:Tanner_SoS}.
	
\begin{table}[h]
\hrule
\vline
\begin{minipage}[t]{0.99\linewidth}
\vspace{-5 pt}
{\small
\begin{align*}
    &\mbox{minimize}\quad ~~ \Psi\left(\tildeEx{\cdot}\right)
    \\
&\mbox{subject to}\quad \quad ~\\
    &\qquad \ip{\tildeEx{\chiTan(\zee)}}{u} ~>~ \gamma\label{cons:agreement-ld}    \\
&\qquad \tildeEx{\cdot} \text{is a pseudocodeword of SoS-degree } t
\end{align*}}
\vspace{-14 pt}
\end{minipage}
\hfill\vline
\hrule
\caption{Finding cover for the list $\calL$.}
\label{tab:Tanner_SoS}
\end{table}
	
	Let $\tildeEx{\cdot}$ be the pseudocodeword obtained, with $\Psi^*=\Psi(\tildeEx{\cdot})$. We will use the optimality of $\tildeEx{\cdot}$ to argue that for any $h\in \calL$,
	\[
		\ip{\tildeEx{\chiTan(\zee)}}{\chi(h)} > \gamma^2
	\]
	
	Suppose not. Then there exists $h\in \TanC$ such that $\ip{\tildeEx{\chiTan(\zee)}}{\chiTan(h)} \leq \gamma$. Then, for some $\tee \in[0,1]$ to be chosen later, consider 
	\[
		\PExp_{\tee}[\cdot] = (1-\tee)\tildeEx{\cdot} + \tee\PExp^{(h)}[\cdot]
	\]
	Here, we think of $h$ as an SoS-degree-$t$ pseudocodeword, so that by convexity, $\PExp_{\tee}[\cdot]$ is also an SoS-degree-$t$ pseudocodeword. We will show towards a contradiction that $\Psi(\PExp_{\tee}[\cdot]) < \Psi^*$.
	\begin{align*}
		\Psi(\PExp_{\tee}[\cdot])  &= \ip{\PExp_{\tee}[\chiTan(\zee)]}{\PExp_{\tee}[\chiTan(\zee)]} \\
		&= \ip{(1-\tee)\cdot \tildeEx{\chiTan(\zee)} + \tee\cdot \PExp^{(h)}[\chiTan(\zee)]}{(1-\tee)\cdot \tildeEx{\chiTan(\zee)}+\tee\cdot \PExp^{(h)}[\chiTan(\zee)]} \\
		&= \ip{(1-\tee) \cdot \tildeEx{\chiTan(\zee)} + \tee\cdot \chiTan(h)}{(1-\tee)\cdot \tildeEx{\chiTan(\zee)}+\tee \cdot \chiTan(h)} \\
		&= (1-\tee)^2 \cdot \ip{\tildeEx{\chiTan(\zee)}}{\tildeEx{\chiTan(\zee)}} +2\cdot \tee(1-\tee)\cdot \ip{\tildeEx{\chiTan(\zee)}}{\chiTan(h)} + \tee^2\cdot \ip{\chiTan(h)}{\chiTan(h)}\\
		&\leq (1-\tee)^2\cdot \Psi^* +2\cdot \tee(1-\tee)\cdot \gamma +\tee^2
	\end{align*}
	
	Optimizing over $\tee$, we choose $\tee^* = \frac{\Psi^* - \gamma}{\Psi^*-2\gamma+1}$. As long as $\Psi^* > \gamma$, we get optimal $\tee^* >0$, which implies $\Psi(\PExp_{\tee^*}[\cdot]) < \Psi\inparen{\PExp_{\tee}[\cdot]{\Big \vert}_{\tee=0}} = \Psi^*$, which would be a contradiction.
\end{proof}

\begin{lemma}\label{lem:cover_for_list_tanner}
	Fix $\eps>0$. Let the distance of $\TanC$ be $\delta$, and let $g$ be a received word.
	For any $t\geq d$, there exists a degree-$t$ pseudocodeword $\tildeEx{\cdot}$ such that for any $h\in \TanC$ such that $\dis(g,h)<\calJ(\delta)-\eps$,
	\[
		\dis(\tildeEx{\cdot},h) < \delta-2\eps \cdot \sqrt{1-\frac{q}{q-1} \cdot\delta}
	\]
	Moreover, this pseudocodeword $\tildeEx{\cdot}$ can be found in time $n^{\bigoh(t)}$.
\end{lemma}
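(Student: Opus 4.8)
The plan is to deduce this directly from the algorithmic covering lemma (\cref{lem:abstract_algorithmic_covering}), by instantiating it with the vector $u = \chiTan(g)$ and a carefully chosen threshold $\gamma$, and then translating between Hamming distance and the inner product of simplex embeddings at both the input and the output. Write $\beta \defeq 1 - \frac{q}{q-1}\delta$, so that $\calJ(\delta) = \frac{q-1}{q}\bigl(1 - \sqrt{\beta}\bigr)$, $\delta = \frac{q-1}{q}(1-\beta)$, and $\sqrt{1 - \frac{q}{q-1}\delta} = \sqrt{\beta}$. Since $\norm{\chiTan(g)} = 1$, I would apply \cref{lem:abstract_algorithmic_covering} with $u = \chiTan(g)$, $\gamma \defeq \sqrt{\beta} + \frac{q}{q-1}\eps$, and the given $t \geq d$, obtaining in time $n^{\bigoh(t)}$ a degree-$t$ pseudocodeword $\tildeEx{\cdot}$ such that $\ip{\tildeEx{\chiTan(\zee)}}{\chiTan(h)} > \gamma^2$ for every $h \in \TanC$ with $\ip{u}{\chiTan(h)} > \gamma$.

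First I would verify that the hypothesis of the present lemma is precisely the hypothesis needed above: if $h \in \TanC$ satisfies $\dis(g,h) < \calJ(\delta) - \eps$, then using the identity $\ip{\chiTan(g)}{\chiTan(h)} = 1 - \frac{q}{q-1}\dis(g,h)$ we get $\ip{u}{\chiTan(h)} > 1 - \frac{q}{q-1}\bigl(\calJ(\delta) - \eps\bigr) = \sqrt{\beta} + \frac{q}{q-1}\eps = \gamma$. So the conclusion of \cref{lem:abstract_algorithmic_covering} applies to every codeword $h$ in the list $\calL(g, \calJ(\delta) - \eps)$.

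It then remains to push the inner-product bound $\ip{\tildeEx{\chiTan(\zee)}}{\chiTan(h)} > \gamma^2$ through the same dictionary in the reverse direction. Since $\tildeEx{\cdot}$ is linear and $\chi_q$ is $1$-local, for each edge $e$ one has $\ip{\tildeEx{\chi_q(\zee_e)}}{\chi_q(h_e)} = \tildeEx{\ip{\chi_q(\zee_e)}{\chi_q(h_e)}} = \tildeEx{1 - \frac{q}{q-1}\indi{\zee_e \neq h_e}} = 1 - \frac{q}{q-1}\tildeEx{\indi{\zee_e \neq h_e}}$, and averaging over $e \in E$ gives $\ip{\tildeEx{\chiTan(\zee)}}{\chiTan(h)} = 1 - \frac{q}{q-1}\dis(\tildeEx{\cdot}, h)$. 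Since $\gamma^2 \geq \beta + 2\sqrt{\beta}\cdot\frac{q}{q-1}\eps$, rearranging $1 - \frac{q}{q-1}\dis(\tildeEx{\cdot},h) > \gamma^2$ yields $\dis(\tildeEx{\cdot},h) < \frac{q-1}{q}\bigl(1 - \beta - 2\sqrt{\beta}\cdot\frac{q}{q-1}\eps\bigr) = \delta - 2\eps\sqrt{\beta}$, which is exactly the claimed bound.

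Two degenerate situations should be noted: if $\gamma \geq 1$ then no $h \in \TanC$ can have $\dis(g,h) < \calJ(\delta) - \eps$ (this would force $\ip{\chiTan(g)}{\chiTan(h)} > \gamma \geq 1$, impossible for unit vectors), and likewise if the convex program of \cref{lem:abstract_algorithmic_covering} happens to be infeasible; in either case the list is empty and any fixed degree-$t$ pseudocodeword (say, the one associated to an arbitrary codeword of $\TanC$) satisfies the conclusion vacuously. I do not expect any genuine obstacle here: all of the real content is already contained in \cref{lem:abstract_algorithmic_covering} (and, underneath it, \cref{lem:covering}), and the only thing requiring care is tracking the normalization constants and checking that the identity relating distance and inner product survives the passage through the linear pseudoexpectation operator.
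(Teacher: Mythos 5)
Your proposal is correct and follows essentially the same route as the paper: instantiate \cref{lem:abstract_algorithmic_covering} with $u = \chiTan(g)$ and threshold $\gamma = \sqrt{\beta} + \frac{q}{q-1}\eps$, then translate between Hamming distance and the simplex-embedding inner product on both ends. The explicit verification of the identity $\ip{\tildeEx{\chiTan(\zee)}}{\chiTan(h)} = 1 - \frac{q}{q-1}\dis(\tildeEx{\cdot},h)$ and the remark about degenerate cases are fine additions but do not change the argument.
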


\begin{proof}
	We use $g$ to construct a vector $u$, which we can use to find the required pseudocodeword via \cref{lem:abstract_algorithmic_covering}.
	
	Let $\delta = \inparen{1-\frac{1}{q}}(1-\beta)$ so that $\calJ(\delta) = \inparen{1-\frac{1}{q}}(1-\sqrt{\beta})$. Let $u = \chi(g)$. 
	
	For any $h\in \calL(g,\calJ(\delta)-\eps)$,
	\[
		\dis(g,h) < \calJ(\delta)-\eps = \inparen{1-\frac{1}{q}}(1-\sqrt{\beta}) -\eps \implies  \ip{\chi(g)}{\chi(h)} > \sqrt{\beta} + \frac{q}{q-1}\cdot \eps
	\]
	 Therefore, using \cref{lem:abstract_algorithmic_covering} and vector $u$, we can find a degree-$t$ pseudocodeword $\tildeEx{\cdot}$ such that for any $h\in \calL(g,\calJ(\delta)-\eps)$,
	 \[
	 	\ip{\tildeEx{\chiTan(\zee)}}{\chiTan(h)} > \inparen{\sqrt{\beta}+\frac{q}{q-1}\cdot \eps}^2 > \beta + 2\sqrt{\beta} \cdot \frac{q}{q-1} \cdot \eps
	 \]
	 Writing again in terms of distances and using $\beta = 1-\frac{q}{q-1}\delta$, this means,
	 \[
	 	\dis(\tildeEx{\cdot},h) < \inparen{1-\frac{1}{q}}(1-\beta -2\sqrt{\beta}\cdot
                \frac{q}{q-1}\cdot \eps) = \delta -2\sqrt{1-\frac{q}{q-1}\delta}\cdot \eps \mper
                \qquad\qquad \qquad \qedhere
	 \]
\end{proof}
	Next we show that \cref{lem:abstract_algorithmic_covering} can also be used to efficiently find a cover for the list when dealing with list recovery, by using the given weights to construct a modified vector $u$ (which will be the same vector embedding that was used in proof of \cref{thm:weighted_johnson_bound}).
\begin{lemma}\label{lem:cover_list_recovery_tanner}
	Fix $\eps>0$. Let the distance of $\TanC$ be $\delta$, and let the given weights be $\{w_{e,j}\}_{e\in E,j\in [q]}$. Assume that the weights are normalized so that $\sum_j w_{e,j}=1$ and denote  $W_e^{(2)} = \sum_j w_{e,j}^2$. \\
	For any $t\geq d$, there exists a pseudocodeword $\tildeEx{\cdot}$ of SoS-degree $t$ such that for any $h\in \TanC$ that satisfies
	\[
		\Ex{e}{w_{e,h(e)}} > \frac{1}{q} + \sqrt{\inparen{1-\frac{1}{q}  -\delta} \cdot \inparen{\Ex{e}{W_e^{(2)} -\frac{1}{q}}} } + \eps
	\]
	also satisfies $\dis(\tildeEx{\cdot},h) < \delta - \bigomega_{q,\delta,W}(\eps)$.
	
	Moreover, this pseudocodeword $\tildeEx{\cdot}$ can be found in time $n^{\bigoh(t)}$.
\end{lemma}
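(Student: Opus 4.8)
The plan is to mimic the proof of \cref{lem:cover_for_list_tanner}, but with the received word $g$ replaced by the weight-vector embedding $u$ already used in the proof of the weighted Johnson bound (\cref{thm:weighted_johnson_bound}). Concretely, for each $e\in E$ set $u_e = \sum_{j\in[q]} w_{e,j}\,\chi_q(j)\in\R^{q-1}$, concatenate the $u_e$ into a single vector of $\R^{(q-1)|E|}$, and normalize it by
\[
Z ~\defeq~ \inparen{\Ex{e}{\norm{\sum_{j} w_{e,j}\,\chi_q(j)}^2}}^{1/2} ~=~ \inparen{\tfrac{q}{q-1}\cdot\Ex{e}{W_e^{(2)} - \tfrac1q}}^{1/2},
\]
where the last equality is exactly the computation carried out in the proof of \cref{thm:weighted_johnson_bound}, specialized to the normalized weights $W_e = 1$; let $u$ be the resulting unit vector. (If $Z = 0$, then all weights are uniform and no $h$ can satisfy the hypothesis, so the claim is vacuous; assume $Z > 0$.)

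First I would reuse the identity established in the proof of \cref{thm:weighted_johnson_bound}, which for normalized weights states that $Z\cdot\ip{u}{\chiTan(h)} = \tfrac{q}{q-1}\cdot\Ex{e}{w_{e,h(e)} - \tfrac1q}$ for every $h\in[q]^E$. Writing $\delta = \inparen{1-\tfrac1q}(1-\beta)$, so that $\inparen{1-\tfrac1q} - \delta = \inparen{1-\tfrac1q}\beta$, I would substitute the hypothesis on $h$ into this identity and cancel the factor of $Z$, which after simplification gives
\[
\ip{u}{\chiTan(h)} ~>~ \sqrt{\beta} ~+~ \sqrt{\tfrac{q}{q-1}}\cdot\frac{\eps}{\sqrt{\Ex{e}{W_e^{(2)} - \tfrac1q}}} ~=:~ \sqrt{\beta} + \eps',
\]
where $\eps' > 0$ equals $\bigomega_{q,W}(\eps)$, depending only on $q$ and the second moments of the weights.

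Next I would invoke \cref{lem:abstract_algorithmic_covering} with this unit vector $u$ and threshold $\gamma = \sqrt\beta + \eps'$ (for $\eps$ small enough that $\gamma \le 1$; otherwise the list is empty and there is nothing to prove). This produces, in time $n^{\bigoh(t)}$, a degree-$t$ pseudocodeword $\tildeEx{\cdot}$ such that every $h$ obeying the hypothesis satisfies $\ip{\tildeEx{\chiTan(\zee)}}{\chiTan(h)} > \inparen{\sqrt\beta + \eps'}^2 > \beta + 2\sqrt\beta\,\eps'$. Converting back to distances through the affine relation $\dis(\tildeEx{\cdot},h) = \inparen{1-\tfrac1q}\inparen{1 - \ip{\tildeEx{\chiTan(\zee)}}{\chiTan(h)}}$ — valid for pseudocodewords because $\chi_q$ is $1$-local and this relation is linear in $\tildeEx{\cdot}$, exactly as for genuine codewords — and using $\delta = \inparen{1-\tfrac1q}(1-\beta)$ again, I get
\[
\dis(\tildeEx{\cdot},h) ~<~ \delta - 2\inparen{1-\tfrac1q}\sqrt\beta\,\eps' ~=~ \delta - \bigomega_{q,\delta,W}(\eps),
\]
since $\beta = 1 - \tfrac{q}{q-1}\delta$ is determined by $q$ and $\delta$. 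This is the claimed conclusion.

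There is no serious obstacle: the statement is a routine repackaging of \cref{lem:abstract_algorithmic_covering} together with the weight-embedding computation already done for \cref{thm:weighted_johnson_bound}, in exact analogy with how \cref{lem:cover_for_list_tanner} combines \cref{lem:abstract_algorithmic_covering} with the plain Johnson embedding. The only point that needs attention is the constant bookkeeping, namely verifying that the resulting slack $\bigomega_{q,\delta,W}(\eps)$ is a genuine positive constant independent of the blocklength $n$ — which it is, since $q$, $\beta$, and $\Ex{e}{W_e^{(2)}}$ are all independent of $n$ — and handling the degenerate all-uniform-weights case where the hypothesis is unsatisfiable.
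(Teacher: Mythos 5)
Your proposal is correct and follows essentially the same route as the paper's proof: embed the weights via the vector $u$ from the weighted Johnson bound computation, translate the hypothesis into $\ip{u}{\chiTan(h)} > \sqrt{\beta} + \bigomega_{q,W}(\eps)$, and invoke \cref{lem:abstract_algorithmic_covering} before converting back to distances. Your final constant even matches the paper's exactly, and your extra care with the degenerate all-uniform-weights case is a harmless addition.
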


\begin{proof}
Again, let $\delta = \inparen{1-\frac{1}{q}}(1-\beta)$. We use the same vector $u$ to embed the given weights, with the property that for any $h$ that satisfies 
	\begin{align*}
		\Ex{e}{w_{e,h(e)}} &> \frac{1}{q} + \sqrt{\inparen{1-\frac{1}{q}  -\delta} \cdot \inparen{\Ex{e}{W_e^{(2)} -\frac{1}{q}}} } + \eps \\
		&= \frac{1}{q} + \sqrt{\inparen{1-\frac{1}{q}} \cdot \inparen{\Ex{e}{W_e^{(2)} -\frac{1}{q}}} \beta} + \eps
	\end{align*}
	now satisfies \[ \ip{u}{\chi(h)}> \sqrt{\beta} + \sqrt{\frac{q}{q-1}} \frac{ \eps }{ \sqrt{\Ex{e}{W_e^{(2)} -\frac{1}{q}}}} \]
Next, we appeal to \cref{lem:abstract_algorithmic_covering} with $u$ to efficiently obtain a degree-$t$ pseudocodeword $\tildeEx{\cdot}$ such that
	\[
		\ip{\tildeEx{\chiTan(\zee)}}{\chiTan(h)} > \beta + 2 \sqrt{\beta} \sqrt{\frac{q}{q-1}} \frac{ \eps }{ \sqrt{\Ex{e}{W_e^{(2)} -\frac{1}{q}}}}
	\]
	In terms of distance, this means,
\begin{align*}
\dis(\tildeEx{\cdot},h) &~<~ \inparen{1-\frac{1}{q}} \inparen{1-\beta -2\sqrt{\beta}
                          \sqrt{\frac{q}{q-1}} \frac{\eps }{ \sqrt{\Ex{e}{W_e^{(2)} -\frac{1}{q}}}} } \\
		&~=~ \delta -2 \eps \cdot \sqrt{1-\frac{q}{q-1}\delta}
           \cdot\sqrt{\frac{1-\frac{1}{q}}{\Ex{e}{W_e^{(2)} -\frac{1}{q}}}} \mper
                \qquad\qquad \qquad \qquad  \qedhere
	\end{align*} 
\end{proof}

\subsubsection{AEL Code}

Fix $q\in \N$. Fix $G=(L,R,E)$ to be an $(n,d,\lambda)$-expander, $\calC_0 \subseteq [q]^d$ to be an inner code, and $\calC_1 \subseteq [|\calC_0|]^n$ to be an outer code. Let $\AELC \subseteq [q^d]^R$ be the Tanner code determined by $G$, $\calC_0$ and $\calC_1$. 

\begin{definition}
	Let $\chiAEL: [q]^E \rightarrow \inparen{\R^{q^d-1}}^R$ be defined as
	\[
		\inparen{\chiAEL(f)}(\ri) = \chi_{q^d}(f_{N_R(\ri)})
	\]
	so that $\chiAEL$ is a $d$-local vector-valued function on $[q]^E$.
\end{definition}

\begin{observation}
If $f_1,f_2\in [q]^E$ with $\dis_{AEL}(f_1,f_2) = (1-\frac{1}{q^d})(1-\beta)$, then $\ip{\chiAEL( f_1)}{\chiAEL(f_2)} = \beta$.
\end{observation}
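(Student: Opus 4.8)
The plan is to reduce this to exactly the computation that proved the Tanner-code observation, now carried out over the folded alphabet $[q^d]$ attached to the right vertices. First I would unfold the definition of $\chiAEL$. Since $\inparen{\chiAEL(f)}(\ri) = \chi_{q^d}(f_{N_R(\ri)})$, both $\chiAEL(f_1)$ and $\chiAEL(f_2)$ live in $\inparen{\R^{q^d-1}}^R$, and the inner product on this space is the normalized one (averaged over $\ri \in R$), consistent with $\norm{\chi_{q^d}(f)}^2 = \Ex{\ri \in R}{\norm{\chi_{q^d}(f_{N_R(\ri)})}^2} = 1$. Hence
\[
\ip{\chiAEL(f_1)}{\chiAEL(f_2)} ~=~ \Ex{\ri \in R}{\ip{\chi_{q^d}(f_{1,N_R(\ri)})}{\chi_{q^d}(f_{2,N_R(\ri)})}} \mper
\]

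Next I would invoke the defining two-valued property of $\chi_{q^d}$: the $\ri$-th summand equals $1$ if $f_{1,N_R(\ri)} = f_{2,N_R(\ri)}$ and $\frac{-1}{q^d-1}$ otherwise, where we identify $f_{N_R(\ri)} \in [q]^d$ with a symbol of $[q^d]$. Writing $p \defeq \dis_{AEL}(f_1,f_2) = \Ex{\ri \in R}{\indi{f_{1,N_R(\ri)} \neq f_{2,N_R(\ri)}}}$ — which is precisely the fractional Hamming distance $\dis^R(f_1,f_2)$ over the $n$ right super-symbols — this yields
\[
\ip{\chiAEL(f_1)}{\chiAEL(f_2)} ~=~ (1-p) - \frac{p}{q^d-1} ~=~ 1 - \frac{q^d}{q^d-1}\,p \mper
\]
Substituting the hypothesis $p = \inparen{1-\frac{1}{q^d}}(1-\beta) = \frac{q^d-1}{q^d}(1-\beta)$ and simplifying gives $1 - (1-\beta) = \beta$, as claimed.

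There is no substantive obstacle here: this observation is merely the $[q]\mapsto[q^d]$, $\dis \mapsto \dis^R$ specialization of the earlier Tanner observation, and the only thing to keep track of is the folding identification $[q]^d \cong [q^d]$ together with the fact that the ambient inner product on $\inparen{\R^{q^d-1}}^R$ is the averaged one rather than a plain sum. Indeed, one could alternatively just apply the earlier observation verbatim to the codewords $f_1, f_2$ regarded as elements of $[q^d]^R$.
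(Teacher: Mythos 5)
Your computation is correct and matches what the paper intends: the paper states this observation without proof, treating it as the immediate $[q]\mapsto[q^d]$ analogue of the earlier observation, and the implicit argument is exactly your expansion $\ip{\chiAEL(f_1)}{\chiAEL(f_2)} = 1 - \frac{q^d}{q^d-1}\,\dis^R(f_1,f_2)$ followed by substitution. Nothing further is needed.
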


\begin{lemma}\label{lem:abstract_algorithmic_covering_ael}
	Fix $\gamma>0$. Let $u\in (\R^{q^d-1})^R$ with $\norm{u}=1$. For any $t\geq d$, there exists a degree-$t$ pseudocodeword $\tildeEx{\cdot}$ such that for any $h\in \AELC$ that satisfies $\ip{u}{\chiAEL(h)} > \gamma$, 
	\[
		\ip{\tildeEx{\chiAEL(\zee)}}{\chiAEL(h)} = \Ex{\ri \in R}{\ip{\tildeEx{\chi_{q^d}(\zee_{N_R(\ri)})}}{\chi_{q^d}(h(\ri))}} > \gamma^2
	\]
	Moreover, this pseudocodeword $\tildeEx{\cdot}$ can be found in time $n^{\bigoh(t)}$.
\end{lemma}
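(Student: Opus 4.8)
The plan is to transcribe the proof of \cref{lem:abstract_algorithmic_covering} almost verbatim, since that argument is itself the algorithmic avatar of the abstract \cref{lem:covering} and never uses any feature special to Tanner codes. First I would set up the convex program that minimizes $\Psi(\tildeEx{\cdot}) \defeq \ip{\tildeEx{\chiAEL(\zee)}}{\tildeEx{\chiAEL(\zee)}}$ over all degree-$t$ AEL pseudocodewords $\tildeEx{\cdot}$ subject to the single linear constraint $\ip{\tildeEx{\chiAEL(\zee)}}{u} > \gamma$ --- the AEL analogue of the program in \cref{tab:Tanner_SoS}. Since $t \geq d$, the $d$-local vector-valued function $\chiAEL(\zee)$ has a well-defined pseudoexpectation, so $\tildeEx{\chiAEL(\zee)} \in (\R^{q^d-1})^R$ makes sense; moreover $\Psi$ is a squared norm of a linear functional of the SoS solution, hence a convex (positive semidefinite) quadratic form in the pseudoexpectation values $\inbraces{\tildeEx{\zee_{N_R(\ri),\alpha}}}$, the agreement constraint is linear, and ``degree-$t$ AEL pseudocodeword'' is an SDP feasibility condition. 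So, exactly as in \cref{chap:prelims}, this program can be cast as a semidefinite program of size $n^{\bigoh(t)}$ (linearize the objective with an epigraph variable) and solved in time $n^{\bigoh(t)}$. Let $\tildeEx{\cdot}$ be an optimal solution and $\Psi^* \defeq \Psi(\tildeEx{\cdot})$; Cauchy--Schwarz with $\norm{u} = 1$ gives $\Psi^* \geq \ip{\tildeEx{\chiAEL(\zee)}}{u}^2 > \gamma^2$, and feasibility of any genuine codeword $h$ (with $\norm{\chiAEL(h)}^2 = 1$) gives $\Psi^* \leq 1$. We may assume $\gamma < 1$, as otherwise the hypothesis $\ip{u}{\chiAEL(h)} > \gamma$ is vacuous.

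Next I would run the optimality argument. Suppose, towards a contradiction, that some $h \in \AELC$ has $\ip{u}{\chiAEL(h)} > \gamma$ but $\ip{\tildeEx{\chiAEL(\zee)}}{\chiAEL(h)} \leq \gamma^2$. Viewing $h$ as a degree-$t$ AEL pseudocodeword $\PExp^{(h)}[\cdot]$, form the mixture $\PExp_{\tee}[\cdot] \defeq (1-\tee)\,\tildeEx{\cdot} + \tee\,\PExp^{(h)}[\cdot]$ for a parameter $\tee \in [0,1]$; by convexity of the pseudocodeword constraints this is again a feasible degree-$t$ AEL pseudocodeword, and it still satisfies the agreement constraint since $\ip{\PExp_{\tee}[\chiAEL(\zee)]}{u} > (1-\tee)\gamma + \tee\gamma = \gamma$. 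Expanding $\Psi(\PExp_{\tee}[\cdot])$ by bilinearity, the cross term is $2\tee(1-\tee)\ip{\tildeEx{\chiAEL(\zee)}}{\chiAEL(h)} \leq 2\tee(1-\tee)\gamma^2$ and the pure-$h$ term is $\tee^2\norm{\chiAEL(h)}^2 = \tee^2$, so $\Psi(\PExp_{\tee}[\cdot]) \leq (1-\tee)^2\Psi^* + 2\tee(1-\tee)\gamma^2 + \tee^2$. This upward-opening quadratic in $\tee$ equals $\Psi^*$ at $\tee = 0$ and has derivative $2(\gamma^2 - \Psi^*) < 0$ there, so its minimizer $\tee^* = \frac{\Psi^* - \gamma^2}{\Psi^* - 2\gamma^2 + 1}$ lies in $(0,1)$ (both numerator and denominator are positive because $\gamma^2 < \Psi^* \leq 1$), and $\Psi(\PExp_{\tee^*}[\cdot]) < \Psi^*$ contradicts the optimality of $\tildeEx{\cdot}$. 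Hence every such $h$ satisfies $\ip{\tildeEx{\chiAEL(\zee)}}{\chiAEL(h)} > \gamma^2$, which on unfolding the definition of $\chiAEL$ is precisely $\Ex{\ri \in R}{\ip{\tildeEx{\chi_{q^d}(\zee_{N_R(\ri)})}}{\chi_{q^d}(h(\ri))}} > \gamma^2$.

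I do not expect a real obstacle here: the only structural differences from the Tanner case are that $\chiAEL$ is $d$-local rather than $1$-local and that AEL pseudocodewords carry only the left-vertex constraints $\zee_{N_L(\li)} \in \calC_0$ --- and the argument above is oblivious to \emph{which} constraints cut out the pseudocodeword polytope, using only that it is convex and contains every genuine codeword, and that $\tildeEx{\chiAEL(\zee)}$ is well-defined (which holds once $t \geq d$). The one point meriting care is that the agreement constraint $\ip{\tildeEx{\chiAEL(\zee)}}{u} > \gamma$ is strict, so the feasible region is not closed and a minimizer need not literally exist; this is handled exactly as in \cref{lem:abstract_algorithmic_covering}, e.g.\ by replacing $\gamma$ with $\gamma + \epsilon'$ and letting $\epsilon' \to 0$, or by noting the infimum is attained on the closure of the region while the conclusion only improves. (Using the sharper cross-term bound $\gamma^2$ rather than $\gamma$ in the quadratic makes the already-available fact $\Psi^* > \gamma^2$ suffice, so no extra hypothesis on $\Psi^*$ is needed.)
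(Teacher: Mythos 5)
Your proof follows the paper's argument for this lemma essentially verbatim: the same norm-minimizing convex program over degree-$t$ AEL pseudocodewords (the AEL analogue of \cref{tab:Tanner_SoS}), solved as an SDP in time $n^{\bigoh(t)}$, followed by the same convex-mixture perturbation argument against a violating codeword $h$. If anything, your write-up is slightly more careful than the paper's, which negates the conclusion as $\ip{\tildeEx{\chiAEL(\zee)}}{\chiAEL(h)} \leq \gamma$ rather than $\leq \gamma^2$ and so implicitly needs $\Psi^* > \gamma$, whereas your sharper cross-term bound lets the Cauchy--Schwarz fact $\Psi^* > \gamma^2$ close the argument directly.
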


\begin{proof}
The proof is very similar to the proof of \cref{lem:abstract_algorithmic_covering}, except we replace embedding function $\chiTan$ by $\chiAEL$.

Define the quantity $\Psi(\tildeEx{\cdot}) = \ip{\tildeEx{\chiAEL(\zee)}}{\tildeEx{\chiAEL(\zee)}}$ and solve the following convex program.
	
\begin{table}[h]
\hrule
\vline
\begin{minipage}[t]{0.99\linewidth}
\vspace{-5 pt}
{\small
\begin{align*}
    &\mbox{minimize}\quad ~~ \Psi\left(\tildeEx{\cdot}\right)
    \\
&\mbox{subject to}\quad \quad ~\\
    &\qquad \ip{\tildeEx{\chiAEL(\zee)}}{\chiAEL(g)} >  \gamma\label{cons:agreement-ld}
    \\
&\qquad \tildeEx{\cdot} \text{is a pseudocodeword of SoS-degree } t
\end{align*}}
\vspace{-14 pt}
\end{minipage}
\hfill\vline
\hrule
\caption{Finding cover for the list $\calL$.}
\end{table}

Let $\tildeEx{\cdot}$ be the pseudocodeword obtained, with $\Psi^*=\Psi(\tildeEx{\cdot})$. Suppose it does not have the covering property. Then there exists $h\in \calL$ such that $\ip{\tildeEx{\chiAEL(\zee)}}{\chiAEL(h)} \leq \gamma$. Then, for some $\tee \in[0,1]$ to be chosen later, consider 
	\[
		\PExp_{\tee}[\cdot] = (1-\tee)\tildeEx{\cdot} + \tee\PExp^{(h)}[\cdot]
	\]
We again have,
\begin{align*}
		\Psi(\PExp_{\tee}[\cdot])  &= \ip{\PExp_{\tee}[\chiAEL(\zee)]}{\PExp_{\tee}[\chiAEL(\zee)]} \\
		&= \ip{(1-\tee) \tildeEx{\chiAEL(\zee)} + \tee\PExp^{(h)}[\chiAEL(\zee)]}{(1-\tee)\tildeEx{\chiAEL(\zee)}+\tee \PExp^{(h)}[\chiAEL(\zee)]} \\
		&= \ip{(1-\tee) \tildeEx{\chiAEL(\zee)} + x\chiAEL(h)}{(1-\tee)\tildeEx{\chiAEL(\zee)}+\tee\chiAEL(h)} \\
		&= (1-\tee)^2 \ip{\tildeEx{\chiAEL(\zee)}}{\tildeEx{\chiAEL(\zee)}} +2\tee(1-\tee)\ip{\tildeEx{\chiAEL(\zee)}}{\chiAEL(h)} + \tee^2\ip{\chiAEL(h)}{\chiAEL(h)}\\
		&\leq (1-\tee)^2 \Psi^* +2\tee(1-\tee)\gamma +\tee^2
	\end{align*}
	
	Optimizing over $\tee$, we choose $\tee^* = \frac{\Psi^* - \gamma}{\Psi^*-2\gamma+1}$. We will obtain a contradiction since $\Psi^* > \gamma^2$.
\end{proof}

\begin{lemma}\label{lem:cover_for_list_ael}
Fix $\eps>0$. Let the distance of $\AELC$ be $\delta$, and let $g$ be a received word. For any $t\geq d$, there exists a degree-$t$ pseudocodeword $\tildeEx{\cdot}$ such that for any $h \in \AELC$ such that $\dis(g,h) < \calJ(\delta) - \eps$,
\[
	\Ex{\ri}{\indi{ \tildeEx{\zee_{N_R(\ri)} \neq h_{\ri}}}} < \delta - 2 \eps \cdot \sqrt{1-\frac{q^d}{q^d-1} \cdot \delta}
\]
\end{lemma}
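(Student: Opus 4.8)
The plan is to instantiate the proof of \cref{lem:cover_for_list_tanner} with the AEL embedding in place of the Tanner one: replace $\chiTan$ by $\chiAEL$, the alphabet $q$ by $q^d$, and the appeal to \cref{lem:abstract_algorithmic_covering} by one to \cref{lem:abstract_algorithmic_covering_ael}. Throughout, $\dis$ and $\calJ(\delta)$ in the statement should be read as the distance over the alphabet $[q^d]$ (the right-vertex distance $\dis^R$, which is the one matching $\chiAEL$) and as $\calJ_{q^d}(\delta)$ respectively, so that the Johnson-bound constants are the ones appearing on the right-hand side of the claim.

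First I would write $\delta = (1-\tfrac{1}{q^d})(1-\beta)$, so that $\calJ(\delta) = (1-\tfrac{1}{q^d})(1-\sqrt{\beta})$ and $\beta = 1 - \tfrac{q^d}{q^d-1}\delta$, and set $u = \chiAEL(g)$. This is a unit vector since $\norm{u}^2 = \Ex{\ri}{\norm{\chi_{q^d}(g_{N_R(\ri)})}^2} = 1$. By the observation relating $\chiAEL$-inner products to distances, $\ip{\chiAEL(g)}{\chiAEL(h)} = 1 - \tfrac{q^d}{q^d-1}\dis(g,h)$ for every $h$, so any $h \in \AELC$ with $\dis(g,h) < \calJ(\delta) - \eps$ satisfies $\ip{u}{\chiAEL(h)} > \sqrt{\beta} + \tfrac{q^d}{q^d-1}\eps$.

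Then I would apply \cref{lem:abstract_algorithmic_covering_ael} to this unit vector $u$ with $\gamma = \sqrt{\beta} + \tfrac{q^d}{q^d-1}\eps$, obtaining in time $n^{\bigoh(t)}$ a degree-$t$ pseudocodeword $\tildeEx{\cdot}$ for which every $h$ with $\ip{u}{\chiAEL(h)} > \gamma$ has $\ip{\tildeEx{\chiAEL(\zee)}}{\chiAEL(h)} > \gamma^2 > \beta + 2\sqrt{\beta}\cdot\tfrac{q^d}{q^d-1}\eps$. Finally I would translate back to distance: $\ip{\chi_{q^d}(\zee_{N_R(\ri)})}{\chi_{q^d}(h_{\ri})}$ is a $d$-local function of $\zee$ (legitimate since $t \geq d$) equal to $1 - \tfrac{q^d}{q^d-1}\indi{\zee_{N_R(\ri)} \neq h_{\ri}}$, so by linearity $\ip{\tildeEx{\chiAEL(\zee)}}{\chiAEL(h)} = 1 - \tfrac{q^d}{q^d-1}\Ex{\ri}{\tildeEx{\indi{\zee_{N_R(\ri)} \neq h_{\ri}}}}$; rearranging the previous inequality then yields $\Ex{\ri}{\tildeEx{\indi{\zee_{N_R(\ri)} \neq h_{\ri}}}} < \delta - 2\sqrt{\beta}\,\eps = \delta - 2\eps\sqrt{1 - \tfrac{q^d}{q^d-1}\delta}$, which is exactly the claim (the left-hand side being $\dis^R(\tildeEx{\cdot},h)$).

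I do not expect a genuine obstacle here: this is the AEL analogue of a computation already carried out for Tanner codes, and the only care needed is bookkeeping — confirming that $\dis$ in the hypothesis is the alphabet-$q^d$ (right) distance, so that the Johnson constants go through with $q^d$ in place of $q$, and that the $d$-locality of $\chiAEL$ (versus the $1$-locality of $\chiTan$) still permits forming $\tildeEx{\chiAEL(\zee)}$ and handling $\ip{\chi_{q^d}(\zee_{N_R(\ri)})}{\chi_{q^d}(h_{\ri})}$ inside $\tildeEx{\cdot}$ — which holds because $t \geq d$. Everything else is the same quadratic manipulation as in the Tanner case.
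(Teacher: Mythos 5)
Your proposal is correct and matches the paper's own argument, which likewise just instantiates the Tanner-case proof of \cref{lem:cover_for_list_tanner} with $u=\chiAEL(g)$, alphabet $q^d$, and an appeal to \cref{lem:abstract_algorithmic_covering_ael}. The bookkeeping points you flag (reading $\dis$ as $\dis^R$ over $[q^d]$, $\calJ$ as $\calJ_{q^d}$, and $d$-locality being admissible since $t\geq d$) are exactly the intended reading, so there is nothing to add.
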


\begin{proof}
Same proof as the proof of \cref{lem:cover_for_list_tanner}, with the alphabet changed. The received word $g$ can be used to construct a unit vector $u = \chiAEL(g)$, which is then used via \cref{lem:abstract_algorithmic_covering_ael} to find a pseudocodeword with the required covering property.
\end{proof}

\begin{lemma}\label{lem:cover_for_list_recovery_ael}
	Fix $\eps>0$. Let the distance of $\AELC$ be $\delta$, and let the given weights be $\{w_{\ri,j}\}_{\ri\in R,j\in [q^d]}$. Assume that the weights are normalized so that $\sum_j w_{\ri,j}=1$ and denote  $W_{\ri}^{(2)} = \sum_j w_{{\ri},j}^2$. \\
	For any $t\geq d$, there exists a degree-$t$ pseudocodeword $\tildeEx{\cdot}$ such that for any $h\in \AELC$ that satisfies
	\[
		\Ex{\ri}{w_{\ri,h(\ri)}} > \frac{1}{q^d} + \sqrt{\inparen{1-\frac{1}{q^d}  -\delta} \cdot \inparen{\Ex{\ri}{W_{\ri}^{(2)} -\frac{1}{q^d}}} } + \eps
	\]
	also satisfies $\dis(\tildeEx{\cdot},h) < \delta - \bigomega_{q,d,\delta,W}(\eps)$.
	
	Moreover, this pseudocodeword $\tildeEx{\cdot}$ can be found in time $n^{\bigoh(t)}$.
\end{lemma}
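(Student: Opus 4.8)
The plan is to run the proof of \cref{lem:cover_list_recovery_tanner} essentially verbatim, with the alphabet $q$ replaced by $q^d$, the embedding $\chiTan$ replaced by $\chiAEL$, and the appeal to \cref{lem:abstract_algorithmic_covering} replaced by an appeal to \cref{lem:abstract_algorithmic_covering_ael}. First I would convert the weights into a unit ``received'' vector $u \in (\R^{q^d-1})^R$ exactly as in the proof of \cref{thm:weighted_johnson_bound}: on the block indexed by $\ri \in R$ put the (unnormalized) entry $\sum_{j \in [q^d]} w_{\ri,j}\,\chi_{q^d}(j)$, and then rescale the whole vector by $\left(\Ex{\ri}{\norm{\sum_j w_{\ri,j}\chi_{q^d}(j)}^2}\right)^{-1/2}$ so that $\norm{u}=1$. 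Since $\sum_j w_{\ri,j} = 1$, the computation in \cref{thm:weighted_johnson_bound} shows this normalizing factor equals $\left(\frac{q^d}{q^d-1}\Ex{\ri}{W_{\ri}^{(2)} - \frac{1}{q^d}}\right)^{-1/2}$, and that for every $h \in \AELC$,
\[
\left(\frac{q^d}{q^d-1}\Ex{\ri}{W_{\ri}^{(2)} - \frac{1}{q^d}}\right)^{1/2} \cdot \ip{u}{\chiAEL(h)} ~=~ \frac{q^d}{q^d-1}\Ex{\ri}{w_{\ri,h(\ri)} - \frac{1}{q^d}} \mper
\]

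Next I would write $\delta = \left(1-\frac{1}{q^d}\right)(1-\beta)$, so that $1 - \frac{1}{q^d} - \delta = \left(1-\frac{1}{q^d}\right)\beta$ and $\left(1-\frac{1}{q^d}\right)\frac{q^d}{q^d-1} = 1$. With these identities the hypothesis on $h$ rearranges to $\ip{u}{\chiAEL(h)} > \sqrt{\beta} + \sqrt{\frac{q^d}{q^d-1}}\cdot \frac{\eps}{\sqrt{\Ex{\ri}{W_{\ri}^{(2)} - \frac{1}{q^d}}}}$. Applying \cref{lem:abstract_algorithmic_covering_ael} with this $u$ and $\gamma$ equal to the right-hand side produces, in time $n^{\bigoh(t)}$, a degree-$t$ AEL pseudocodeword $\tildeEx{\cdot}$ such that every $h \in \AELC$ satisfying the hypothesis obeys
\[
\ip{\tildeEx{\chiAEL(\zee)}}{\chiAEL(h)} ~>~ \gamma^2 ~\geq~ \beta + 2\sqrt{\beta}\cdot \sqrt{\frac{q^d}{q^d-1}} \cdot \frac{\eps}{\sqrt{\Ex{\ri}{W_{\ri}^{(2)} - \frac{1}{q^d}}}} \mper
\]

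Finally I would translate this inner-product lower bound into a distance upper bound using the observation preceding \cref{lem:abstract_algorithmic_covering_ael}, extended to pseudocodewords exactly as it holds for codewords: $\ip{\chiAEL(f_1)}{\chiAEL(f_2)} = 1 - \frac{q^d}{q^d-1}\dis^R(f_1,f_2)$. Together with $\left(1-\frac{1}{q^d}\right)\sqrt{\frac{q^d}{q^d-1}} = \sqrt{1-\frac{1}{q^d}}$ and $\sqrt{\beta} = \sqrt{1-\frac{q^d}{q^d-1}\delta}$, this yields
\[
\dis(\tildeEx{\cdot},h) ~<~ \delta ~-~ 2\eps\cdot \sqrt{1-\frac{q^d}{q^d-1}\delta}\cdot \sqrt{\frac{1-\frac{1}{q^d}}{\Ex{\ri}{W_{\ri}^{(2)} - \frac{1}{q^d}}}} \mcom
\]
which exhibits the claimed $\bigomega_{q,d,\delta,W}(\eps)$ improvement.

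The argument is pure bookkeeping and carries no conceptual obstacle; the only point needing care is the normalization of $u$ and keeping the $q^d$-versus-$q$ and $W_{\ri}^{(2)}$ factors straight, just as in \cref{thm:weighted_johnson_bound}. In particular, one must note that $\Ex{\ri}{W_{\ri}^{(2)} - \frac{1}{q^d}} > 0$ is needed for $u$ (and the final bound) to be well defined, which is automatic unless every weight vector $\{w_{\ri,j}\}_j$ is uniform, in which case the hypothesis is vacuous. I would also remark that the distance in the conclusion is the right-vertex distance $\dis^R(\tildeEx{\cdot},h)$, consistent with the conclusion of \cref{lem:cover_for_list_ael}.
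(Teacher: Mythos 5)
Your proposal is correct and is exactly what the paper intends: the paper's proof of this lemma is a one-line citation ("same proof as \cref{lem:cover_list_recovery_tanner}"), and your write-up faithfully instantiates that proof with $q^d$ in place of $q$, $\chiAEL$ in place of $\chiTan$, and \cref{lem:abstract_algorithmic_covering_ael} in place of \cref{lem:abstract_algorithmic_covering}. The normalization of $u$, the rearrangement of the hypothesis into $\ip{u}{\chiAEL(h)} > \sqrt{\beta} + \Omega(\eps)$, and the final conversion back to distances all match the paper's Tanner-code computation.
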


\begin{proof}
	Same proof as the proof of \cref{lem:cover_list_recovery_tanner}.
\end{proof}


%
%
%
%
%
%

%
%
%
%
\section{Sum-of-Squares Proofs of Distance}
\label{sec:distance}
We will be proving that pseudocodewords satisfying certain $\eta$-good property defined below have the same distance properties as true codewords, up to $\eta$ error.
\begin{definition}\label{def:eta_good}
A pseudocodeword of SoS-degree at least $2d$ is $\eta$-good if
\begin{align*}
    \Ex{\li,\ri}{\tildecov[\zee_{N_L(\li)},\zee_{N_R(\ri)}]} \leq \eta
\end{align*}
\end{definition}

\begin{observation}
    A true codeword is a 0-good pseudocodeword.
\end{observation}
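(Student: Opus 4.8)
The plan is to unwind the definitions directly: for a genuine codeword $h$, the associated pseudoexpectation is an honest expectation over a point mass, and point masses have zero covariance, so in particular the average cross-covariance between $N_L(\li)$ and $N_R(\ri)$ vanishes.

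First I would recall that a codeword $h \in \TanC$ (resp. $h \in \AELC$) induces, for every even $t \geq 2d$, a degree-$t$ pseudoexpectation operator $\tildeEx{\cdot}$ given by $\tildeEx{p(\zee)} = p(h)$ under the identification $Z_{i,j} \mapsto \indi{h_i = j}$. This operator is linear with $\tildeEx{1} = 1$ and $\tildeEx{p^2} = p(h)^2 \geq 0$, so it is a valid pseudoexpectation; and since $h|_{N(v)} \in \calC_0$ at every vertex $v$ of the underlying expander, it respects all the local constraints defining Tanner (resp. AEL) pseudocodewords. Hence $h$ is legitimately a degree-$t$ pseudocodeword. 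Under this operator one has $\tildeEx{\zee_{S,\alpha}} = \indi{h_S = \alpha}$ for any set $S$ with $\abs{S} \leq t$ and any $\alpha \in [q]^S$.

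Next, for any two coordinate sets $S, T$ and assignments $\alpha \in [q]^S$, $\beta \in [q]^T$, the pseudo-covariance is
\[
\tildecov(\zee_{S,\alpha}, \zee_{T,\beta}) = \tildeEx{\zee_{S,\alpha} \zee_{T,\beta}} - \tildeEx{\zee_{S,\alpha}}\,\tildeEx{\zee_{T,\beta}} = \indi{h_S = \alpha,\ h_T = \beta} - \indi{h_S = \alpha}\cdot \indi{h_T = \beta} = 0,
\]
using that the indicator of a conjunction of events is the product of the individual indicators. Summing $\abs{\,\cdot\,}$ over all $\alpha,\beta$ gives $\tildecov(\zee_S, \zee_T) = 0$ for all $S, T$; taking $S = N_L(\li)$ and $T = N_R(\ri)$ and averaging over $\li \in L$, $\ri \in R$ yields $\Ex{\li,\ri}{\tildecov(\zee_{N_L(\li)}, \zee_{N_R(\ri)})} = 0 \leq 0$, so $h$ is a $0$-good pseudocodeword.

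I do not expect any real obstacle here: the entire content is the elementary identity $\one_{A \cap B} = \one_A \one_B$ for indicator functions, together with the routine check that the point-mass functional is an admissible pseudocodeword. The observation is stated only to calibrate the $\eta$-good notion against the integral case, confirming that true codewords sit at $\eta = 0$.
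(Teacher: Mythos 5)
Your proposal is correct and is exactly the standard unwinding of the definitions that the paper leaves implicit (the observation is stated without proof): a codeword induces the point-mass pseudoexpectation, under which $\tildeEx{\zee_{S,\alpha}\zee_{T,\beta}} = \indi{h_S=\alpha}\cdot\indi{h_T=\beta} = \tildeEx{\zee_{S,\alpha}}\tildeEx{\zee_{T,\beta}}$, so every pseudo-covariance vanishes and the average over $(\li,\ri)$ is $0$. Nothing is missing.
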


The $\eta$-good property is useful to change the pseudoexpectation of a product of functions into the product of pseudoexpectations of those functions. We establish a formal claim about this in the next lemma. The terms $\norm{X_{\li}}_{\infty}$ and $\norm{Y_{\ri}}_{\infty}$ should be just seen as normalizing the scale, and indeed we will only use functions that are bounded in infinity norm by 1.

\begin{lemma}
	Let $\{X_{\li}\}_{\li\in L}$ and $\{Y_{\ri}\}_{\ri\in R}$ be two collections of $d$-local functions on $[q]^E$ such that for every $\li\in L$, $X_{\li}(f)$ only depends on $f|_{N_L(\li)}$ and for every $\ri\in R$, $Y_{\ri}(f)$ only depends on $f|_{N_R(\ri)}$. Then, for an $\eta$-good pseudocodeword $\tildeEx{\cdot}$,
	\[
		\Ex{\li,\ri}{\tildeEx{X_{\li}(\zee)Y_{\ri}(\zee)}} ~\leq~ \Ex{\li,\ri}{\tildeEx{X_{\li}(\zee)}\tildeEx{Y_{\ri}(\zee)}} + \eta \left( \max_{\li} \norm{X_{\li}}_{\infty} \right) \left( \max_{\ri} \norm{Y_{\ri}}_{\infty}\right)
	\]
\end{lemma}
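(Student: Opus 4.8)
The plan is to expand both sides of the desired inequality over the edges of the bipartite graph $G$, and to exploit the fact that $X_{\li}(\zee) Y_{\ri}(\zee)$ is a $2d$-local function so that its pseudoexpectation is well-defined (since the pseudocodeword has SoS-degree at least $2d$). First I would write, for each pair $(\li,\ri) \in L \times R$,
\[
\tildeEx{X_{\li}(\zee) Y_{\ri}(\zee)} ~=~ \tildeEx{X_{\li}(\zee)}\tildeEx{Y_{\ri}(\zee)} + \tildecov\inparen{X_{\li}(\zee), Y_{\ri}(\zee)}\mcom
\]
where $\tildecov$ here denotes the pseudo-covariance of the two local functions (as opposed to of the indicator variables $\zee_{N_L(\li)}, \zee_{N_R(\ri)}$). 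Averaging over $\li, \ri$, it then suffices to bound
\[
\Ex{\li,\ri}{\tildecov\inparen{X_{\li}(\zee), Y_{\ri}(\zee)}} ~\leq~ \eta \cdot \inparen{\max_{\li}\norm{X_{\li}}_{\infty}} \cdot \inparen{\max_{\ri}\norm{Y_{\ri}}_{\infty}}\mper
\]

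The main step is to relate the covariance of the local functions $X_{\li}, Y_{\ri}$ to the covariance of the underlying local distributions, which is what the $\eta$-good property controls. Since $X_{\li}$ depends only on $\zee_{N_L(\li)}$ and $Y_{\ri}$ only on $\zee_{N_R(\ri)}$, I would expand $X_{\li}$ and $Y_{\ri}$ in the indicator basis: $X_{\li}(\zee) = \sum_{\alpha \in [q]^{N_L(\li)}} X_{\li}(\alpha)\, \zee_{N_L(\li),\alpha}$ and similarly $Y_{\ri}(\zee) = \sum_{\beta \in [q]^{N_R(\ri)}} Y_{\ri}(\beta)\, \zee_{N_R(\ri),\beta}$. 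Then by bilinearity of pseudo-covariance,
\[
\tildecov\inparen{X_{\li}(\zee), Y_{\ri}(\zee)} ~=~ \sum_{\alpha,\beta} X_{\li}(\alpha)\, Y_{\ri}(\beta)\, \tildecov\inparen{\zee_{N_L(\li),\alpha}, \zee_{N_R(\ri),\beta}}\mcom
\]
so taking absolute values and bounding $|X_{\li}(\alpha)| \leq \norm{X_{\li}}_\infty$, $|Y_{\ri}(\beta)| \leq \norm{Y_{\ri}}_\infty$ gives
\[
\abs{\tildecov\inparen{X_{\li}(\zee), Y_{\ri}(\zee)}} ~\leq~ \norm{X_{\li}}_{\infty}\,\norm{Y_{\ri}}_{\infty} \sum_{\alpha,\beta} \abs{\tildecov\inparen{\zee_{N_L(\li),\alpha}, \zee_{N_R(\ri),\beta}}} ~=~ \norm{X_{\li}}_{\infty}\,\norm{Y_{\ri}}_{\infty}\cdot \tildecov\inparen{\zee_{N_L(\li)}, \zee_{N_R(\ri)}}\mcom
\]
using the definition of $\tildecov$ for pairs of sets from the preliminaries.

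Finally I would average this bound over $\li \in L$, $\ri \in R$, pull out the (uniform) bounds $\max_{\li}\norm{X_{\li}}_\infty$ and $\max_{\ri}\norm{Y_{\ri}}_\infty$, and invoke the $\eta$-good hypothesis $\Ex{\li,\ri}{\tildecov(\zee_{N_L(\li)}, \zee_{N_R(\ri)})} \leq \eta$ to conclude. I expect the only mild subtlety to be bookkeeping: making sure the SoS-degree $2d$ suffices for all the quantities $\tildeEx{X_{\li}(\zee) Y_{\ri}(\zee)}$ and the relevant covariances to be defined (the sets $N_L(\li)$ and $N_R(\ri)$ each have size $d$, so their union has size at most $2d$, which is exactly the degree requirement in Definition of $\eta$-good), and being careful that the $\alpha,\beta$ in the covariance expansion range over assignments to $N_L(\li)$ and $N_R(\ri)$ respectively rather than to all of $E$. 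There is no real obstacle here — the lemma is essentially a linearity-of-covariance computation combined with the definition of the $\eta$-good property.
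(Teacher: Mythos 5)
Your proposal is correct and follows essentially the same route as the paper: expand $X_{\li}$ and $Y_{\ri}$ in the indicator basis, use bilinearity to reduce the covariance of the local functions to a weighted sum of covariances $\tildecov(\zee_{N_L(\li),\alpha},\zee_{N_R(\ri),\beta})$, bound the weights by the sup norms so that the sum of absolute values becomes exactly $\tildecov(\zee_{N_L(\li)},\zee_{N_R(\ri)})$, and then average and invoke the $\eta$-good hypothesis. The degree bookkeeping you flag is indeed the only subtlety and is handled the same way in the paper.
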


\begin{proof}
	For any $\li$ and $\ri$,
{\small
	\begin{align*}
		&\tildeEx{X_{\li}(\zee)Y_{\ri}(\zee)} - \tildeEx{X_{\li}(\zee)}\tildeEx{Y_{\ri}(\zee)} \\
		&~=~ \sum_{\substack{\alpha \in [q]^{N_L(\li)} \\ \beta\in[q]^{N_R(\ri)}}}
                     \tildeEx{X_{\li}(\alpha) \prod_{s\in N_L(\li)} \zee_{s,\alpha_s} \cdot Y_{\ri}(\beta) \prod_{t\in N_R(\ri)} \zee_{t,\beta_t}} - 
		 \sum_{\substack{\alpha \in [q]^{N_L(\li)} \\ \beta\in[q]^{N_R(\ri)}}} \cdot \tildeEx{X_{\li}(\alpha) \prod_{s\in N_L(\li)} \zee_{s,\alpha_s}} \tildeEx{Y_{\ri}(\beta) \prod_{t\in N_R(\ri)} \zee_{t,\beta_t}} \\
		&~=~ \sum_{\alpha,\beta} X_{\li}(\alpha) Y_{\ri}(\beta) \cdot \inparen{\tildeEx{
           \prod_{s\in N_L(\li)} \zee_{s,\alpha_s} \cdot \prod_{t\in N_R(\ri)} \zee_{t,\beta_t}} -
           \tildeEx{\prod_{s\in N_L(\li)} \zee_{s,\alpha_s}} \cdot \tildeEx{\prod_{t\in N_R(\ri)} \zee_{t,\beta_t}}} \\
		&~\leq~  \norm{X_{\li}}_{\infty} \norm{Y_{\ri}}_{\infty} \cdot \mathlarger{\sum}_{\alpha,\beta}
           \abs{ \tildeEx{ \prod_{s\in N_L(\li)} \zee_{s,\alpha_s} \prod_{t\in N_R(\ri)}
           \zee_{t,\beta_t}} - \tildeEx{\prod_{s\in N_L(\li)} \zee_{s,\alpha_s}} \cdot \tildeEx{\prod_{t\in N_R(\ri)} \zee_{t,\beta_t}}} \\
		&~=~  \norm{X_{\li}}_{\infty} \norm{Y_{\ri}}_{\infty} \cdot \tildecov[\zee_{N_L(\li)},\zee_{N_R(\ri)}]
	\end{align*}		
}	
	Averaging over $\li$ and $\ri$, we get
\begin{align*}
\Ex{\li,\ri}{\tildeEx{X_{\li}(\zee)Y_{\ri}(\zee)} - \tildeEx{X_{\li}(\zee)}\tildeEx{Y_{\ri}(\zee)}} 
	&~\leq~  \Ex{\li,\ri}{\norm{X_{\li}}_{\infty} \norm{Y_{\ri}}_{\infty} \cdot \tildecov[\zee_{N_L(\li)},\zee_{N_R(\ri)}]} \\
	&~\leq~ \max_{\li}\norm{X_{\li}}_{\infty} \cdot \max_{\ri} \norm{Y_{\ri}}_{\infty} \cdot \Ex{\li,\ri}{\tildecov[\zee_{N_L(\li)},\zee_{N_R(\ri)}]} \\
	&~\leq~ \eta \cdot \max_{\li}\norm{X_{\li}}_{\infty} \cdot \max_{\ri}
   \norm{Y_{\ri}}_{\infty}
\end{align*}
\end{proof}
The proofs of distance for both Tanner and AEL codes go via Expander Mixing Lemma (EML), and so we establish the analog of EML for pseudocodewords. Morally speaking, EML allows us to change measure from the edges of an expander to the complete (bipartite) graph for product functions.  First we prove a version of EML for vector valued functions, and then show that since pseudoexpectation operators can be written in terms of certain underlying vectors, they also satisfy a version of EML. Note that this step does not require any $\eta$-good property.
	
	\begin{lemma}[EML for vector-valued functions]
	\label{high_dimensional_eml}
		Let $\{v_{\li}\}_{\li \in L}$ and $\{u_{\ri}\}_{\ri \in R}$ be a collection of vectors in $\R^N$. Then,
		\[
			\abs{\Ex{\li \sim \ri}{\ip{v_{\li}}{u_{\ri}}} - \Ex{\li,\ri}{\ip{v_{\li}}{u_{\ri}}}} \leq \lambda \sqrt{\Ex{\li}{\norm{v_{\li}}^2}} \sqrt{\Ex{\ri}{\norm{u_{\ri}}^2}}
		\]
	\end{lemma}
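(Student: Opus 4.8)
The plan is to reduce this vector-valued inequality to the scalar expander mixing lemma for bipartite graphs, namely \cref{lem:eml_bipartite}, applied one coordinate at a time, and then to recombine the coordinatewise bounds via the Cauchy--Schwarz inequality. Write the coordinates of the vectors in $\R^N$ as $v_{\li} = (v_{\li}^{(1)}, \dots, v_{\li}^{(N)})$ and $u_{\ri} = (u_{\ri}^{(1)}, \dots, u_{\ri}^{(N)})$, so that $\ip{v_{\li}}{u_{\ri}} = \sum_{k=1}^{N} v_{\li}^{(k)} u_{\ri}^{(k)}$. By linearity of expectation, and using that a uniformly random $\li \in L$ and a uniformly random $\ri \in R$ are independent (so that $\Ex{\li,\ri}{v_{\li}^{(k)} u_{\ri}^{(k)}} = \Ex{\li}{v_{\li}^{(k)}}\,\Ex{\ri}{u_{\ri}^{(k)}}$), I would first write
\[
\Ex{\li \sim \ri}{\ip{v_{\li}}{u_{\ri}}} - \Ex{\li,\ri}{\ip{v_{\li}}{u_{\ri}}} \;=\; \sum_{k=1}^{N} \left( \Ex{\li \sim \ri}{v_{\li}^{(k)} u_{\ri}^{(k)}} - \Ex{\li}{v_{\li}^{(k)}}\,\Ex{\ri}{u_{\ri}^{(k)}} \right).
\]

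Next, for each fixed coordinate $k \in [N]$, I would apply \cref{lem:eml_bipartite} with the scalar functions $f(\li) = v_{\li}^{(k)}$ on $L$ and $g(\ri) = u_{\ri}^{(k)}$ on $R$ (noting that the edge-average $\Ex{\li \sim \ri}{\cdot}$ coincides with the $\Ex{(\li,\ri)\in E}{\cdot}$ appearing there), which bounds the $k$-th summand in absolute value by $\lambda \cdot \Ex{\li}{(v_{\li}^{(k)})^2}^{1/2} \cdot \Ex{\ri}{(u_{\ri}^{(k)})^2}^{1/2}$. Summing over $k$ with the triangle inequality and then applying Cauchy--Schwarz to the finite sum over $k$ yields
\[
\abs{\Ex{\li \sim \ri}{\ip{v_{\li}}{u_{\ri}}} - \Ex{\li,\ri}{\ip{v_{\li}}{u_{\ri}}}} \;\leq\; \lambda \sum_{k=1}^{N} \Ex{\li}{(v_{\li}^{(k)})^2}^{1/2} \Ex{\ri}{(u_{\ri}^{(k)})^2}^{1/2} \;\leq\; \lambda \left( \sum_{k=1}^{N} \Ex{\li}{(v_{\li}^{(k)})^2} \right)^{1/2} \left( \sum_{k=1}^{N} \Ex{\ri}{(u_{\ri}^{(k)})^2} \right)^{1/2}.
\]
Finally, since the sum over $k$ is finite I can interchange it with the expectations and use $\sum_{k} (v_{\li}^{(k)})^2 = \norm{v_{\li}}^2$ (and likewise for $u_{\ri}$), identifying the two factors as $\Ex{\li}{\norm{v_{\li}}^2}^{1/2}$ and $\Ex{\ri}{\norm{u_{\ri}}^2}^{1/2}$, which is exactly the claimed bound.

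There is essentially no real obstacle here; the only point that needs a little care is where Cauchy--Schwarz is invoked --- it must be applied to the \emph{outer} sum over the $N$ coordinates, after the per-coordinate scalar EML bound has already been used, since applying it prematurely would not recover the Euclidean norms $\norm{v_{\li}}$ and $\norm{u_{\ri}}$. An equivalent but slightly heavier alternative would be to phrase everything in terms of the matrices whose columns are the $v_{\li}$ and the $u_{\ri}$ and bound a trace inner product directly against the second singular value of the (normalized) biadjacency operator, but the coordinatewise route above is the most economical given that \cref{lem:eml_bipartite} is already in hand.
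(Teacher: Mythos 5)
Your proof is correct and is exactly the argument the paper intends: its entire proof of this lemma is the one-liner ``Usual EML applied coordinate-wise,'' and your write-up simply fills in the coordinatewise application of \cref{lem:eml_bipartite} followed by Cauchy--Schwarz over the coordinate index, which is the right way to make that one-liner precise. No gaps; your remark about applying Cauchy--Schwarz to the outer sum over coordinates (rather than prematurely) is exactly the one point worth being careful about.
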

	
	\begin{proof}
		Usual EML applied coordinate-wise.
	\end{proof}
	
\begin{lemma}[EML for pseudoexpectations]\label{lem:eml_for_pexp}
	Let $\{X_{\li}\}_{\li\in L}$ and $\{Y_{\ri}\}_{\ri\in R}$ be two collections of $d$-local functions on $[q]^E$ such that for every $\li\in L$, $X_{\li}(f)$ only depends on $f|_{N_L(\li)}$ and for every $\ri\in R$, $Y_{\ri}(f)$ only depends on $f|_{N_R(\ri)}$.
Then for a $\lambda$-spectral expander, we have
	\[
		\abs{\Ex{\li\sim \ri}{\tildeEx{X_{\li}(\zee) Y_{\ri}(\zee) }} -
                  \Ex{\li,\ri}{\tildeEx{X_{\li}(\zee) Y_{\ri}(\zee)}}} \leq \lambda
                \sqrt{\Ex{\li}{\tildeEx{X_{\li}(\zee)^2}}} \sqrt{\Ex{\ri}
                  {\tildeEx{Y_{\ri}(\zee)^2}}} \mper
	\]
\end{lemma}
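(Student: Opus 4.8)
The plan is to reduce this to the vector-valued expander mixing lemma (\cref{high_dimensional_eml}) by exploiting the Gram-matrix (moment-matrix) representation of a degree-$t$ pseudoexpectation. First I would recall that any degree-$t$ pseudoexpectation operator $\tildeEx{\cdot}$ admits such a representation: writing $p_X$ for the polynomial in $\zee$ (of degree at most the locality of $X$) representing a local function $X$, the matrix $M$ indexed by monomials of degree at most $t/2$, with entries $M[m,m'] = \tildeEx{m\cdot m'}$, is positive semidefinite --- this is exactly the constraint $\tildeEx{r^2}\geq 0$ applied to $r = \sum_m c_m m$. Hence $M = V^{\trans}V$ for some matrix $V$, and extending by linearity we may associate to every polynomial $p$ of degree at most $t/2$ a vector $v_p$ (the corresponding linear combination of columns of $V$) so that $\tildeEx{p\cdot q} = \ip{v_p}{v_q}$ for all polynomials $p,q$ of degree at most $t/2$.

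Since the pseudocodeword has SoS-degree $t\geq 2d$, each $X_{\li}$ and each $Y_{\ri}$ is $d$-local, hence representable by a polynomial of degree at most $d\leq t/2$. Thus the vectors $v_{\li} \defeq v_{p_{X_{\li}}}$ and $u_{\ri}\defeq v_{p_{Y_{\ri}}}$ are well defined, and because $p_{X_{\li}}\cdot p_{Y_{\ri}}$ has degree at most $2d\leq t$, we get $\tildeEx{X_{\li}(\zee)\,Y_{\ri}(\zee)} = \ip{v_{\li}}{u_{\ri}}$. In particular $\norm{v_{\li}}^2 = \tildeEx{X_{\li}(\zee)^2}$ and $\norm{u_{\ri}}^2 = \tildeEx{Y_{\ri}(\zee)^2}$. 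Now I would simply invoke \cref{high_dimensional_eml} with the collections $\{v_{\li}\}_{\li\in L}$ and $\{u_{\ri}\}_{\ri\in R}$ in the relevant Euclidean space:
\[
\abs{\Ex{\li\sim\ri}{\ip{v_{\li}}{u_{\ri}}} - \Ex{\li,\ri}{\ip{v_{\li}}{u_{\ri}}}} \leq \lambda\sqrt{\Ex{\li}{\norm{v_{\li}}^2}}\sqrt{\Ex{\ri}{\norm{u_{\ri}}^2}},
\]
and substituting the identities above gives precisely the claimed bound.

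I do not expect a genuine obstacle here; the only thing requiring care is the bookkeeping of degrees --- one must check that $t\geq 2d$ places each $X_{\li}$ and $Y_{\ri}$ individually in the ``degree at most $t/2$'' regime where the Gram representation is valid, and places the product in the ``degree at most $t$'' regime where $\tildeEx{\cdot}$ is defined and the identity $\tildeEx{p_{X_{\li}}p_{Y_{\ri}}} = \ip{v_{\li}}{u_{\ri}}$ holds. Everything else is a direct application of the already-established vector-valued EML, which itself is just the scalar expander mixing lemma applied coordinate-wise, so no $\eta$-good hypothesis is needed at this step.
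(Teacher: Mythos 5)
Your proposal is correct and follows essentially the same route as the paper: both arguments realize $\tildeEx{X_{\li}(\zee)Y_{\ri}(\zee)}$, $\tildeEx{X_{\li}(\zee)^2}$ and $\tildeEx{Y_{\ri}(\zee)^2}$ as inner products and squared norms of Gram vectors and then invoke \cref{high_dimensional_eml}. The only (cosmetic) difference is how the Gram vectors are obtained --- the paper directly exhibits the $2n\times 2n$ matrix of pairwise pseudo-moments of the $X_{\li}$'s and $Y_{\ri}$'s and checks it is PSD via $v^{\trans}Mv = \tildeEx{\,\Ex{i}{(x_iX_i+y_iY_i)}^2}\geq 0$, whereas you factor the full degree-$t/2$ moment matrix and restrict to the columns indexed by $p_{X_{\li}}$ and $p_{Y_{\ri}}$; your degree bookkeeping ($d\leq t/2$ for each factor, $2d\leq t$ for the product) is exactly what is needed.
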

\begin{proof}	
Consider the $2n\times 2n$ matrix $M$, with
\[
		M_{ij} = \begin{cases}
			\tildeEx{X_iX_j}, & 1\leq i\leq n, 1\leq j\leq n \\
			\tildeEx{X_iY_{j-n}}, & 1\leq i\leq n, n+1\leq j\leq 2n \\
			\tildeEx{Y_{i-n}X_j}, &n+1\leq i\leq 2n, 1\leq j\leq n\\
			\tildeEx{Y_{i-n}Y_{j-n}} &n+1\leq i\leq 2n,n+1\leq j\leq 2n
		\end{cases}
\]
For any vector $v = (x_1,x_2,\cdots,x_n,y_1,y_2,\cdots,y_n)$, we show that $v^TMv\geq 0$, so that $M$ is PSD.
	\begin{align*}
		v^TMv &~=~ \Ex{i,j}{M_{ij} x_ix_j + M_{i,j+n}x_iy_j + M_{i+n,j} y_ix_j + M_{i+n,j+n} y_iy_j}\\
		&~=~ \Ex{i,j}{\tildeEx{X_iX_j}x_ix_j+\tildeEx{X_iY_j}x_iy_j+\tildeEx{Y_iX_j}y_ix_j+\tildeEx{Y_iY_j}y_iy_j} \\
		&~=~ \Ex{i,j}{\tildeEx{x_i x_j X_iX_j}+\tildeEx{x_iy_jX_iY_j}+\tildeEx{y_ix_jY_iX_j}+\tildeEx{y_iy_jY_iY_j}} \\
		&~=~ \Ex{i,j}{\tildeEx{(x_i X_i + y_i Y_i)(x_jX_j+y_jY_j)}} \\
		&~=~ \tildeEx{\Ex{i,j}{(x_i X_i + y_i Y_i)(x_jX_j+y_jY_j)}} \\
		&~=~ \tildeEx{\Ex{i}{(x_i X_i + y_i Y_i)}^2}  ~\geq~ 0\\
	\end{align*}
Therefore there exist vectors $\{v_{\li}\}_{\li \in L}$ and $\{u_{\ri}\}_{\ri\in R}$ such that 			
\[		
\tildeEx{X_{\li} Y_{\ri}} = \ip{v_{\li}}{u_{\ri}}, \quad
\tildeEx{X_{\li}^2} = \ip{v_{\li}}{v_{\li}}, \quad \text{and} \quad
\tildeEx{Y_{\ri}^2} = \ip{u_{\ri}}{u_{\ri}}
\]
Applying \cref{high_dimensional_eml} to the collection of vectors obtained above, we immediately obtain,
\[
		\abs{\Ex{\li\sim \ri}{\tildeEx{X_{\li}Y_{\ri}}} - \Ex{\li,\ri}{\tildeEx{X_{\li}
                      Y_{\ri}}}} ~\leq~ \lambda \sqrt{\Ex{\li}{\tildeEx{X_{\li}^2}}}
                \sqrt{\Ex{\ri}{\tildeEx{Y_{\ri}^2}}} \qquad \qedhere
\]
\end{proof}

\subsection{Tanner Code}
Suppose we are working with a Tanner code $\TanC$ with inner code $\calC_0$ of distance $\delta_0$, so that the distance of $\TanC$ is at least $\delta_0(\delta_0-\lambda)$. We show that $\eta$-good pseudocodewords satisfy a similar distance property, up to error $\eta$.
\begin{lemma}[Distance of Tanner code]
	\label{lem:distance_from_codeword}
    The distance between an $\eta$-good pseudocodeword $\tildeEx{\cdot}$ and a true codeword $h$ is at least $\delta_0(\delta_0-\lambda) -2\eta \frac{\delta_0}{\delta_0-\lambda}$, or at most $\frac{4\eta^2}{(\delta_0-\lambda)^2} +\frac{\eta (\delta_0+\lambda)}{\delta_0-\lambda}$.
In particular, if $\lambda \leq \delta_0/3$ and $\eta\leq \delta_0^2/9$, then $\dis(h,\tildeEx{\cdot}) \leq 3\eta$ or $\dis(h,\tildeEx{\cdot}) \geq \delta_0(\delta_0-\lambda) - 3\eta$.
\end{lemma}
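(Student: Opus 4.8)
The plan is to run the standard spectral proof of the distance of $\TanC$ entirely inside the pseudoexpectation operator, invoking the $\eta$-good property exactly once (to factorize a pseudoexpectation of a product) and the expander mixing lemma for pseudoexpectations exactly once. Throughout I would assume $\delta_0 > \lambda$ and $\eta < (\delta_0-\lambda)^2/4$; for larger $\eta$ the two claimed thresholds cross (indeed they coincide at $\eta = (\delta_0-\lambda)^2/4$), so the dichotomy becomes vacuous and there is nothing to prove.

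First I would fix the codeword $h$ and introduce the $d$-local $\{0,1\}$-valued functions $X_{\li}(\zee) = \indi{\zee_{N_L(\li)} \neq h_{N_L(\li)}}$ and $Y_{\ri}(\zee) = \indi{\zee_{N_R(\ri)} \neq h_{N_R(\ri)}}$, which satisfy $\norm{X_{\li}}_\infty, \norm{Y_{\ri}}_\infty \le 1$ and $X_{\li}^2 = X_{\li}$, $Y_{\ri}^2 = Y_{\ri}$. For the lower bound I would use the distance of $\calC_0$: since $h_{N_L(\li)} \in \calC_0$ and the local distribution of $\tildeEx{\cdot}$ on $N_L(\li)$ is supported on $\calC_0$ (that constraint being respected by the pseudocodeword), the pointwise inequality $\dis(\alpha, h_{N_L(\li)}) \ge \delta_0 \cdot X_{\li}(\alpha)$ holds for every $\alpha \in \calC_0$ — trivially when the left side is zero, and by $\Delta(\calC_0) \ge \delta_0$ otherwise — so applying $\tildeEx{\cdot}$ and averaging over $\li$ gives $\dis(\tildeEx{\cdot},h) = \Ex{\li}{\tildeEx{\dis(\zee_{N_L(\li)}, h_{N_L(\li)})}} \ge \delta_0\,\Ex{\li}{\tildeEx{X_{\li}(\zee)}}$, and symmetrically $\dis(\tildeEx{\cdot},h) \ge \delta_0\,\Ex{\ri}{\tildeEx{Y_{\ri}(\zee)}}$. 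Multiplying these and taking square roots, and writing $\tau \defeq \bigl(\Ex{\li}{\tildeEx{X_{\li}}}\cdot\Ex{\ri}{\tildeEx{Y_{\ri}}}\bigr)^{1/2} = \bigl(\Ex{\li,\ri}{\tildeEx{X_{\li}(\zee)}\,\tildeEx{Y_{\ri}(\zee)}}\bigr)^{1/2}$, this yields $\dis(\tildeEx{\cdot},h) \ge \delta_0\,\tau$.

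Next I would obtain a matching upper bound. For an edge $e = (\li,\ri)$ one has the pointwise inequality $\indi{\zee_e \neq h_e} \le X_{\li}(\zee)\,Y_{\ri}(\zee)$, so $\dis(\tildeEx{\cdot},h) = \Ex{e \in E}{\tildeEx{\indi{\zee_e \neq h_e}}} \le \Ex{\li \sim \ri}{\tildeEx{X_{\li}(\zee) Y_{\ri}(\zee)}}$. Applying \cref{lem:eml_for_pexp} (EML for pseudoexpectations) together with $\tildeEx{X_{\li}^2} = \tildeEx{X_{\li}}$ and $\tildeEx{Y_{\ri}^2} = \tildeEx{Y_{\ri}}$ bounds this by $\Ex{\li,\ri}{\tildeEx{X_{\li}(\zee) Y_{\ri}(\zee)}} + \lambda\,\tau$, and the factorization lemma for $\eta$-good pseudocodewords (applied to these $X_{\li}, Y_{\ri}$, whose infinity norms are at most $1$) bounds the first term by $\Ex{\li,\ri}{\tildeEx{X_{\li}}\,\tildeEx{Y_{\ri}}} + \eta = \tau^2 + \eta$. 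Hence $\dis(\tildeEx{\cdot},h) \le \tau^2 + \lambda\tau + \eta$; combining with $\dis(\tildeEx{\cdot},h) \ge \delta_0\tau$ gives the quadratic inequality $\tau^2 - (\delta_0-\lambda)\tau + \eta \ge 0$, i.e. $\tau \ge \tau_+$ or $\tau \le \tau_-$ where $\tau_\pm = \tfrac12\bigl((\delta_0-\lambda) \pm \sqrt{(\delta_0-\lambda)^2 - 4\eta}\bigr)$.

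Finally I would convert the two cases into the two distance alternatives using the elementary estimate $\sqrt{a^2 - b} \ge a - b/a$ (valid for $0 \le b \le a^2$) with $a = \delta_0 - \lambda$, $b = 4\eta$, which gives $\tau_+ \ge (\delta_0-\lambda) - \tfrac{2\eta}{\delta_0-\lambda}$ and $\tau_- \le \tfrac{2\eta}{\delta_0-\lambda}$. If $\tau \ge \tau_+$ then $\dis(\tildeEx{\cdot},h) \ge \delta_0\tau_+ \ge \delta_0(\delta_0-\lambda) - \tfrac{2\eta\delta_0}{\delta_0-\lambda}$; if $\tau \le \tau_-$ then $\dis(\tildeEx{\cdot},h) \le \tau_-^2 + \lambda\tau_- + \eta \le \tfrac{4\eta^2}{(\delta_0-\lambda)^2} + \tfrac{2\lambda\eta}{\delta_0-\lambda} + \eta = \tfrac{4\eta^2}{(\delta_0-\lambda)^2} + \tfrac{\eta(\delta_0+\lambda)}{\delta_0-\lambda}$. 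The ``in particular'' statement then follows by substituting $\lambda \le \delta_0/3$ (so $\delta_0-\lambda \ge 2\delta_0/3$ and $\delta_0+\lambda \le 4\delta_0/3$) and $\eta \le \delta_0^2/9$ into the two bounds: the first becomes $\ge \delta_0(\delta_0-\lambda) - 3\eta$, and the second becomes $\le \eta + 2\eta = 3\eta$, using $\tfrac{4\eta^2}{(\delta_0-\lambda)^2} \le \tfrac{9\eta^2}{\delta_0^2} \le \eta$ and $\tfrac{\eta(\delta_0+\lambda)}{\delta_0-\lambda} \le 2\eta$. I do not expect a genuine obstacle here, since all the ingredients (EML for pseudoexpectations, $\eta$-good factorization) are already in place; the only point requiring care is the very first step — making sure $\dis(\alpha, h_{N_L(\li)}) \ge \delta_0 X_{\li}(\alpha)$ survives under $\tildeEx{\cdot}$ rather than being merely a fact about honest codewords — and this is settled precisely because the constraint $\zee_{N_L(\li)} \in \calC_0$ confines the relevant local distribution to $\calC_0$; everything after that is bookkeeping around the quadratic.
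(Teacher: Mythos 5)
Your proposal is correct and follows essentially the same route as the paper: the same local functions $X_{\li}, Y_{\ri}$, the same lower bound $\delta_0\tau$ from the inner-code distance, the same upper bound $\tau^2+\lambda\tau+\eta$ via the pseudoexpectation EML plus the $\eta$-good factorization, and the same quadratic case analysis with identical constants. Your added remark that the dichotomy is vacuous once $\eta \ge (\delta_0-\lambda)^2/4$ is a small but correct refinement the paper leaves implicit.
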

\begin{proof}
Let $X_{\li}(\zee) \defeq \indi{\zee_{N_L(\li)}\neq h_{N_L(\li)}}$ and $Y_{\ri}(\zee) \defeq
\indi{\zee_{N_R(\ri)}\neq h_{N_R(\ri)}}$, and let $\tau$ denote the quantity
$\sqrt{\Ex{\li}{\tildeEx{X_{\li}(\zee)}}} \cdot \sqrt{\Ex{\ri}{\tildeEx{Y_{\ri}(\zee)}}}$. Then, we have
\begin{align*}
    		\dis\inparen{h,\tildeEx{\cdot}} &~=~ \Ex{e}{\tildeEx{\indi{\zee_e\neq h_e}}} \\
    		&~\leq~ \Ex{\li \sim \ri}{\tildeEx{X_{\li}(\zee) \cdot Y_{\ri}(\zee) }} \\
    		&~\leq~ \Ex{\li , \ri}{\tildeEx{ X_{\li}(\zee) \cdot Y_{\ri}(\zee)}} + \lambda \cdot \sqrt{\Ex{\li}{\tildeEx{X_{\li}(\zee)^2}}} \cdot \sqrt{\Ex{\ri}{\tildeEx{Y_{\ri}(\zee)^2}}}\\
    		&~=~ \Ex{\li , \ri}{\tildeEx{X_{\li}(\zee) \cdot Y_{\ri}(\zee)}} + \lambda \cdot \sqrt{\Ex{\li}{\tildeEx{X_{\li}(\zee)}}} \cdot \sqrt{\Ex{\ri}{\tildeEx{Y_{\ri}(\zee)}}}\\
    		&~=~ \Ex{\li , \ri}{\tildeEx{X_{\li}(\zee) \cdot Y_{\ri}(\zee) }} + \lambda \cdot \tau\\
    		&~=~ \Ex{\li , \ri}{\tildeEx{X_{\li}(\zee)} \cdot \tildeEx{Y_{\ri}(\zee) }} + \eta
           \cdot 1 \cdot 1 +\lambda \cdot \tau \\ &~=~ \tau^2 +\lambda \cdot \tau +\eta
\end{align*}
On the other hand,
\begin{align*}
   		\dis\inparen{h,\tildeEx{\cdot}} &= \Ex{e}{\tildeEx{\indi{\zee_e\neq h_e}}} \\
   		&~=~ \Ex{\li}{\Ex{e\in N_L(\li)}{\tildeEx{\indi{\zee_e\neq h_e}}}} \\
   		&~=~ \Ex{\li}{\tildeEx{\dis\inparen{\zee_{N_L(\li)}, h_{N_L(\li)}}}} \\
   		&~\geq~ \Ex{\li}{\tildeEx{0\cdot \indi{\zee_{N_L(\li)} = h_{N_L(\li)}} + \delta_0 \cdot \indi{\zee_{N_L(\li)} \neq h_{N_L(\li)}}}} \\
   		&~=~ \Ex{\li}{\tildeEx{ \delta_0 \cdot \indi{\zee_{N_L(\li)} \neq h_{N_L(\li)}}}} \\
   		&~=~ \Ex{\li}{\tildeEx{ \delta_0 \cdot X_{\li}(\zee)}} \\
   		&~=~ \delta_0 \cdot  \Ex{\li}{\tildeEx{ X_{\li}(\zee)}} \\
\end{align*}
Likewise, $\dis\inparen{h,\tildeEx{\cdot}} \geq \delta_0 \cdot \Ex{\ri}{\tildeEx{Y_{\ri}(\zee)}}$, and so,
\[
   		\dis\inparen{h,\tildeEx{\cdot}} ~\geq~ \delta_0 \cdot
                \sqrt{\Ex{\li}{\tildeEx{X_{\li}(\zee)}}} \cdot
                \sqrt{\Ex{\ri}{\tildeEx{Y_{\ri}(\zee)}}} ~=~ \delta_0 \cdot \tau
\]
Comparing, we get, $\tau^2 +\lambda \cdot \tau +\eta \geq \delta_0 \cdot \tau$, which means,
\[
\tau ~\geq~ \frac{(\delta_0-\lambda) + \sqrt{(\delta_0-\lambda)^2-4\eta}}{2}
\qquad \text{or} \qquad
\tau ~\leq~ \frac{(\delta_0-\lambda) - \sqrt{(\delta_0-\lambda)^2-4\eta}}{2}
\]
In the first case, we have
\begin{align*}
\dis\inparen{h,\tildeEx{\cdot}} ~\geq~ \delta_0 \cdot \tau 
    		&~\geq~ \delta_0 \cdot \frac{(\delta_0-\lambda) + \sqrt{(\delta_0-\lambda)^2-4\eta}}{2}\\
    		&= \frac{\delta_0(\delta_0-\lambda)}{2} \inparen{ 1+ \sqrt{1-\frac{4\eta}{(\delta_0 - \lambda)^2}} }\\
    		&\geq \frac{\delta_0(\delta_0-\lambda)}{2} \inparen{ 1+ 1-\frac{4\eta}{(\delta_0 - \lambda)^2} }\\
    		&= \delta_0(\delta_0-\lambda) - 2\eta \cdot \frac{\delta_0}{\delta_0-\lambda} \mper
    	\end{align*}
Also, in the second case, we have
\begin{align*}
    		\tau ~\leq~ \frac{(\delta_0-\lambda) - \sqrt{(\delta_0-\lambda)^2-4\eta}}{2} 
    		&~=~ \frac{\delta_0-\lambda}{2} \left( 1- \sqrt{1-\frac{4\eta}{(\delta_0-\lambda)^2}} \right)\\
    		&~\leq~ \frac{\delta_0-\lambda}{2} \left( 1- 1+\frac{4\eta}{(\delta_0-\lambda)^2} \right)\\
    		&~=~ \frac{2\eta}{\delta_0-\lambda} \mcom\\
    	\end{align*}
which gives
\[
\dis(g,\tildeEx{\cdot}) ~\leq~ \tau^2 +\lambda \tau +\eta 
~\leq~ \frac{4\eta^2}{(\delta_0-\lambda)^2} +\frac{2\eta \lambda}{\delta_0-\lambda}+\eta 
~=~ \frac{4\eta^2}{(\delta_0-\lambda)^2} +\frac{\eta (\delta_0+\lambda)}{\delta_0-\lambda}
\qedhere \] 
\end{proof}

\subsection{AEL Code}
Let $\calC_1$ be an outer code on an $(n,d,\lambda)$-expander graph $G(L,R,E)$ and let $\calC_0$ be the inner code. Let $\AELC$ be the code obtained by redistributing symbols along the edges of $G$ and then collecting them on vertices of $R$, as explained in \cref{sec:AEL_prelims}.

Let $\delta_0$ be the distance of $\calC_0$, so that (designed) distance of $\AELC$ is $\delta = \delta_0 - \frac{\lambda}{\delta_1}$. Let $h \in [q_0]^E$ be a codeword in $\AELC$. We show that an $\eta$-good pseudocodeword that has some left-distance from $h$ has a much larger right-distance from $h$.

\begin{lemma}[Distance of AEL Code]\label{lem:AEL_amplification}
	For an $\eta$-good pseudocodeword $\tildeEx{\cdot}$ and a codeword $h \in \AELC$, 
	\[
		\dis^R(\tildeEx{\cdot},h) \geq \delta_0 - \frac{\lambda+\eta}{\dis^L(\tildeEx{\cdot},h)}
	\]
\end{lemma}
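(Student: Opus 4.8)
The target is a pseudo-codeword analogue of AEL distance amplification (\cref{thm:ael_distance}), so I would mirror the classical proof of that theorem, inserting the EML-for-pseudoexpectations (\cref{lem:eml_for_pexp}) and the $\eta$-good decoupling lemma wherever the classical argument uses expander mixing and independence of neighborhoods. Fix a codeword $h \in \AELC$ and the $\eta$-good pseudocodeword $\tildeEx{\cdot}$. For each left vertex $\li$ set $X_{\li}(\zee) \defeq \indi{\zee_{N_L(\li)} \neq h_{N_L(\li)}}$ and for each right vertex $\ri$ set $Y_{\ri}(\zee) \defeq \indi{\zee_{N_R(\ri)} \neq h_{N_R(\ri)}}$; these are $d$-local $\{0,1\}$-valued functions, so $\Ex{\li}{\tildeEx{X_{\li}(\zee)}} = \dis^L(\tildeEx{\cdot},h)$ and $\Ex{\ri}{\tildeEx{Y_{\ri}(\zee)}} = \dis^R(\tildeEx{\cdot},h)$. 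Write $x \defeq \dis^L(\tildeEx{\cdot},h)$ and $y \defeq \dis^R(\tildeEx{\cdot},h)$, and WLOG assume $x > 0$ (else the statement is vacuous since the right side is $-\infty$).

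**Main estimate.** The key inequality, exactly as in the classical AEL proof, is a lower bound on the edge-fraction disagreement $\dis(\tildeEx{\cdot},h) = \Ex{e}{\tildeEx{\indi{\zee_e \neq h_e}}}$ coming from two directions. From the left: on any $\li$ where $\zee_{N_L(\li)} \neq h_{N_L(\li)}$, the inner-code constraint $\zee_{N_L(\li)} \in \calC_0$ together with $h_{N_L(\li)} \in \calC_0$ forces (as a local, hence SoS-valid, domination of local functions, via the claim on replacing local functions by dominating ones) $\dis(\zee_{N_L(\li)}, h_{N_L(\li)}) \geq \delta_0 \cdot \indi{\zee_{N_L(\li)} \neq h_{N_L(\li)}}$, so $\tildeEx{\dis(\zee_{N_L(\li)}, h_{N_L(\li)})} \geq \delta_0 \cdot \tildeEx{X_{\li}(\zee)}$, and averaging over $\li$ gives $\dis(\tildeEx{\cdot},h) \geq \delta_0 \cdot x$. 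From the edge side: using the pointwise bound $\indi{\zee_e \neq h_e} \leq X_{\li}(\zee)\cdot Y_{\ri}(\zee)$ for $e=(\li,\ri)$ (if the edge value disagrees, so do both neighborhoods), then EML for pseudoexpectations (\cref{lem:eml_for_pexp}), then the $\eta$-good decoupling lemma, and finally $\tildeEx{X_{\li}^2} = \tildeEx{X_{\li}}$ (idempotence of the indicator as a local function), we get
\[
\dis(\tildeEx{\cdot},h) \leq \Ex{\li \sim \ri}{\tildeEx{X_{\li}(\zee) Y_{\ri}(\zee)}} \leq \Ex{\li,\ri}{\tildeEx{X_{\li}(\zee)}\tildeEx{Y_{\ri}(\zee)}} + \eta + \lambda\sqrt{xy} = xy + \eta + \lambda\sqrt{xy}.
\]
Here I use $\Ex{\li}{\tildeEx{X_{\li}^2}} = x$ and $\Ex{\ri}{\tildeEx{Y_{\ri}^2}} = y$ so the EML error term is $\lambda\sqrt{xy}$, and the $\eta$-good lemma contributes the additive $\eta$ (both $X_{\li}, Y_{\ri}$ are $\infty$-norm at most $1$). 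Note $\dis(\tildeEx{\cdot},h) \leq xy + \eta + \lambda\sqrt{xy} \leq xy + \eta + \lambda$ would be wasteful; I keep $\lambda\sqrt{xy} \le \lambda y$ isn't right either — I should keep it as is and combine.

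**Combining.** Chaining the two bounds: $\delta_0 x \leq xy + \eta + \lambda\sqrt{xy}$. Dividing by $x$ (legitimate since $x>0$) gives $\delta_0 \leq y + \frac{\eta}{x} + \frac{\lambda\sqrt{xy}}{x} = y + \frac{\eta}{x} + \lambda\sqrt{y/x}$. Now I bound $\sqrt{y/x} \le 1/x$ when $y \le 1/x$... that's not generally true. The cleaner route matching the stated conclusion $\dis^R \ge \delta_0 - \frac{\lambda+\eta}{\dis^L}$ is to bound $\sqrt{xy} \le \sqrt{x}\cdot\sqrt{y}$ and then use $y \le 1$, giving $\lambda\sqrt{xy} \le \lambda\sqrt{x} \le \lambda$; combined with $\eta$ this gives $\delta_0 x \le xy + \eta + \lambda$, i.e. $y \ge \delta_0 - \frac{\eta+\lambda}{x}$, which is exactly the claim. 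So the final step is: from $\delta_0 x \le xy + (\lambda + \eta)$ conclude $\dis^R(\tildeEx{\cdot},h) = y \geq \delta_0 - \frac{\lambda+\eta}{x} = \delta_0 - \frac{\lambda+\eta}{\dis^L(\tildeEx{\cdot},h)}$.

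**Where the work is.** None of the individual steps is deep — the whole point of the preceding sections is that EML and the decoupling lemma are already packaged for pseudoexpectations. The one place to be careful is the left-side inequality $\tildeEx{\dis(\zee_{N_L(\li)},h_{N_L(\li)})} \geq \delta_0 \tildeEx{X_{\li}(\zee)}$: this is not just "distance of $\calC_0$" applied naively, but requires that the $d$-local function $f \mapsto \dis(f_{N_L(\li)}, h_{N_L(\li)})$ is pointwise dominated, on the support of the constraint $\zee_{N_L(\li)}\in\calC_0$, by the $d$-local function $\delta_0 \cdot \indi{f_{N_L(\li)} \neq h_{N_L(\li)}}$ — which holds because any two distinct codewords of $\calC_0$ are at distance $\ge \delta_0$ — and then invoking the monotonicity claim for local functions under $\tildeEx{\cdot}$; since the constraint is enforced, one may restrict attention to that support. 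I would also double-check the direction in which $\sqrt{xy}$ is bounded so that the error term comes out as $(\lambda+\eta)/x$ and not something weaker; using $\sqrt{xy}\le\sqrt x\le 1$ (as $x\le 1$) after dividing through by $x$ is the cleanest, though one could alternatively keep $\lambda\sqrt{xy}$ and note $\sqrt{xy}/x = \sqrt{y/x}\le\sqrt{1/x}$, which is weaker; the bound as stated in the lemma follows from the $\sqrt{xy}\le 1$ route.
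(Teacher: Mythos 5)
Your proposal is correct and follows essentially the same route as the paper's proof: lower-bound the edge disagreement by $\delta_0\cdot\dis^L(\tildeEx{\cdot},h)$ via the inner-code distance and local-function domination, upper-bound it by $\dis^L(\tildeEx{\cdot},h)\cdot\dis^R(\tildeEx{\cdot},h)+\lambda+\eta$ via the pseudoexpectation EML and the $\eta$-good decoupling, then divide by $\dis^L(\tildeEx{\cdot},h)$. The only (harmless) difference is that you briefly retain the sharper EML error $\lambda\sqrt{xy}$ before relaxing it to $\lambda$, whereas the paper bounds it by $\lambda$ immediately since the indicators have norm at most $1$.
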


\begin{proof}
	We establish upper and lower bounds on $\dis(\tildeEx{\cdot},h)$.
	\begin{align*}
		\dis(\tildeEx{\cdot},h) &= \Ex{e}{\tildeEx{\indi{\zee_e\neq h_e}}} \\
		&~\geq~ \Ex{\li\in L}{\tildeEx{0\cdot \indi{\zee_{N_L(\li)} = h_{N_L(\li)}} + \delta_0 \cdot \indi{\zee_{N_L(\li)}\neq h_{N_L(\li)}}}} \\
		&~=~ \delta_0 \Ex{\li\in L}{\tildeEx{\indi{\zee_{N_L(\li)}\neq h_{N_L(\li)}}}} \\
		&~=~ \delta_0 \dis^L(\tildeEx{\cdot},h)
	\end{align*}
For the upper bound, we again rely on the expander mixing lemma:
\begin{align*}
		\dis(\tildeEx{\cdot},h) &~=~ \Ex{e}{\tildeEx{\indi{\zee_e\neq h_e}}} \\
		&~\leq~ \Ex{\li\sim \ri}{\tildeEx{\indi{\zee_{N_L(\li)}\neq h_{N_L(\li)}} \indi{\zee_{N_R(\ri)}\neq h_{N_R(\ri)}}}} \\
		&~\leq~ \Ex{\li ,\ri}{\tildeEx{\indi{\zee_{N_L(\li)}\neq h_{N_L(\li)}} \indi{\zee_{N_R(\ri)}\neq h_{N_R(\ri)}}}} +\lambda\\
		&~\leq~ \Ex{\li ,\ri}{\tildeEx{\indi{\zee_{N_L(\li)}\neq h_{N_L(\li)}}} \tildeEx{\indi{\zee_{N_R(\ri)}\neq h_{N_R(\ri)}}}} +\lambda +\eta\\
		&~=~ \Ex{\li}{\tildeEx{\indi{\zee_{N_L(\li)}\neq h_{N_L(\li)}}}} \Ex{\ri}{\tildeEx{\indi{\zee_{N_R(\ri)}\neq h_{N_R(\ri)}}}} +\lambda +\eta\\
		&~=~ \dis^L(\tildeEx{\cdot},h) \cdot \dis^R(\tildeEx{\cdot},h) +\lambda+\eta
	\end{align*}
Dividing the two bounds by $\dis^L(\tildeEx{\cdot},h)$ and rearranging, we finally get,
\[
\dis^R(\tildeEx{\cdot},h) ~\geq~ \delta_0 - \frac{\lambda+\eta}{\dis^L(\tildeEx{\cdot},h)} \mper
\qquad \qquad \qquad \qquad \qedhere
\]
\end{proof}


%
\section{Correlation Reduction via Conditioning}
\label{sec:conditioning}
We will use the following claim from \cite{BRS11} (see Lemma 5.2 there) that says that if $\zee_S$ and $\zee_T$ have a large covariance, then conditioning on $\zee_T$ reduces the variance of $\zee_S$ significantly.
\begin{lemma}
	Let $\tildeEx{\cdot}$ be a pseudoexpectation operator of SoS-degree $t$ with associated pseudocovariance and pseudovariance operators. Assume $S,T$ are sets such that $|S|+|T|\leq t/2$, then,
	\[
		\tildeVar{\zee_S | \zee_T} \leq \tildeVar{\zee_S} - \frac{1}{q^{|T|}} \sum_{\alpha \in [q]^S, \beta\in [q]^T} \frac{(\tildeCov{\zee_{S,\alpha}}{\zee_{T,\beta}})^2}{\tildeVar{\zee_{T,\beta}}}
	\]
\end{lemma}
In particular, observe that pseudovariances are non-increasing under conditioning.
The next lemma shows that if the average covariance across all pairs $(\li,\ri)$ is $\eta$, then conditioning on a random vertex in $R$ will reduce the average variance in $L$ in expectation by $\Omega(\eta^2)$. Then, \cref{lem:low_covariance_solution} will use that this cannot happen more than $\calO(1/\eta^2)$ times, and then we must end up with a conditioned pseudoexpectation operator which has low average covariance, that is, it is $\eta$-good.
\begin{lemma}\label{lem:conditioning_reduces_variance}
    Let $\eta < \Ex{\li,\ri}{\tildeCov{\zee_{N_L(\li)}}{\zee_{N_R(\ri)}}} $. Then,
    \[
        \Ex{\ri \in R}{\Ex{\li}{ \tildeVar{\zee_{N_L(\li)} \vert \zee_{N_R(\ri)}}}} < \Ex{\li}{ \tildeVar{\zee_{N_L(\li)}}} - \frac{1}{q^{2d}}{\eta^2}
    \]
\end{lemma}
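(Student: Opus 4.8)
The plan is to apply the single-pair variance-reduction inequality of Barak--Raghavendra--Steurer quoted immediately above with $S = N_L(\li)$ and $T = N_R(\ri)$ (valid as long as the SoS-degree is at least $4d$), and then take expectations over a uniform left vertex $\li$ and a uniform right vertex $\ri$. Since $\abs{N_L(\li)} = \abs{N_R(\ri)} = d$, that inequality reads, for every pair $(\li,\ri)$,
\[
\tildeVar{\zee_{N_L(\li)} \vert \zee_{N_R(\ri)}} ~\leq~ \tildeVar{\zee_{N_L(\li)}} ~-~ \frac{1}{q^{d}} \sum_{\substack{\alpha \in [q]^{N_L(\li)} \\ \beta \in [q]^{N_R(\ri)}}} \frac{\left(\tildeCov{\zee_{N_L(\li),\alpha}}{\zee_{N_R(\ri),\beta}}\right)^2}{\tildeVar{\zee_{N_R(\ri),\beta}}} \mper
\]
So the only real work is to lower bound the double sum, for each fixed $(\li,\ri)$, by $\frac{1}{q^{d}}\left(\tildeCov{\zee_{N_L(\li)}}{\zee_{N_R(\ri)}}\right)^2$; after that, averaging plus two elementary convexity steps will finish the argument.

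For the lower bound, fix $(\li,\ri)$ and abbreviate $c_{\alpha,\beta} = \tildeCov{\zee_{N_L(\li),\alpha}}{\zee_{N_R(\ri),\beta}}$ and $v_\beta = \tildeVar{\zee_{N_R(\ri),\beta}}$ (dropping, as in \cite{BRS11}, the terms with $v_\beta = 0$, for which also $c_{\alpha,\beta} = 0$). By definition $\sum_{\alpha,\beta}\abs{c_{\alpha,\beta}} = \tildeCov{\zee_{N_L(\li)}}{\zee_{N_R(\ri)}}$, and the computation in \cref{chap:prelims} gives the global bound $\sum_\beta v_\beta = \tildeVar{\zee_{N_R(\ri)}} \leq 1$. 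Applying Cauchy--Schwarz, for each fixed $\alpha$, to the product $\abs{c_{\alpha,\beta}} = \frac{\abs{c_{\alpha,\beta}}}{\sqrt{v_\beta}}\cdot\sqrt{v_\beta}$ yields
\[
\left(\sum_\beta \abs{c_{\alpha,\beta}}\right)^2 ~\leq~ \left(\sum_\beta \frac{c_{\alpha,\beta}^2}{v_\beta}\right)\left(\sum_\beta v_\beta\right) ~\leq~ \sum_\beta \frac{c_{\alpha,\beta}^2}{v_\beta} \mper
\]
Summing over the $q^{d}$ choices of $\alpha$ and using convexity of $x \mapsto x^2$ (equivalently, Cauchy--Schwarz once more),
\[
\sum_{\alpha,\beta}\frac{c_{\alpha,\beta}^2}{v_\beta} ~\geq~ \sum_\alpha \left(\sum_\beta \abs{c_{\alpha,\beta}}\right)^2 ~\geq~ \frac{1}{q^{d}}\left(\sum_{\alpha,\beta}\abs{c_{\alpha,\beta}}\right)^2 ~=~ \frac{1}{q^{d}}\left(\tildeCov{\zee_{N_L(\li)}}{\zee_{N_R(\ri)}}\right)^2 \mper
\]

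Substituting this back into the first display and noting that $\tildeVar{\zee_{N_L(\li)}}$ does not depend on $\ri$, taking expectations over $\li$ and $\ri$ gives
\[
\Ex{\ri}{\Ex{\li}{\tildeVar{\zee_{N_L(\li)} \vert \zee_{N_R(\ri)}}}} ~\leq~ \Ex{\li}{\tildeVar{\zee_{N_L(\li)}}} ~-~ \frac{1}{q^{2d}}\Ex{\li,\ri}{\left(\tildeCov{\zee_{N_L(\li)}}{\zee_{N_R(\ri)}}\right)^2} \mper
\]
Finally, since the pseudo-covariances $\tildeCov{\zee_{N_L(\li)}}{\zee_{N_R(\ri)}}$ are nonnegative, the inequality $\E[X^2] \geq (\E[X])^2$ together with the hypothesis $\Ex{\li,\ri}{\tildeCov{\zee_{N_L(\li)}}{\zee_{N_R(\ri)}}} > \eta$ gives $\Ex{\li,\ri}{\left(\tildeCov{\zee_{N_L(\li)}}{\zee_{N_R(\ri)}}\right)^2} > \eta^2$, which yields the claimed inequality.

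There is no genuine obstacle here; the one point needing care is pinning down the exponent of $q$. Using the \emph{global} bound $\sum_\beta v_\beta \leq 1$ is what keeps the first Cauchy--Schwarz step lossless in the power of $q$ and produces the factor $\tfrac{1}{q^{2d}}$; replacing it by the cruder pointwise bound $v_\beta \leq 1$ would cost an extra factor of $q^{d}$ and only give $\tfrac{1}{q^{3d}}\eta^2$.
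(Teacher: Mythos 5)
Your proof is correct, and at the decisive step it is genuinely sharper than the paper's own argument. Both proofs start from the same Barak--Raghavendra--Steurer inequality with $S=N_L(\li)$, $T=N_R(\ri)$, and both finish by averaging over $(\li,\ri)$ and applying Jensen's inequality $\E[X^2]\geq (\E X)^2$; the difference is entirely in how $\sum_{\alpha,\beta} c_{\alpha,\beta}^2/v_\beta$ is lower-bounded in terms of $\bigl(\sum_{\alpha,\beta}\abs{c_{\alpha,\beta}}\bigr)^2$. The paper discards the denominators via the pointwise bound $v_\beta\leq 1$ and then applies Cauchy--Schwarz over all $q^{2d}$ pairs $(\alpha,\beta)$, which, combined with the $q^{-d}$ prefactor, produces a loss of $q^{-3d}$; its displayed chain in fact ends with $\Ex{\li}{\tildeVar{\zee_{N_L(\li)}}} - \frac{1}{q^{3d}}\eta^2$, and the subsequent lemma is calibrated to $q^{3d}/\eta^2$ conditioning rounds, so the $q^{2d}$ appearing in the lemma statement is not actually delivered by the paper's own proof. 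Your weighted Cauchy--Schwarz — using the global bound $\sum_\beta v_\beta = \tildeVar{\zee_{N_R(\ri)}}\leq 1$ for each fixed $\alpha$, followed by ordinary Cauchy--Schwarz over the $q^d$ values of $\alpha$ — saves exactly the factor of $q^d$ needed to prove the constant $q^{-2d}$ as stated, thereby resolving the paper's internal mismatch in favor of the lemma as written (and would allow the conditioning bound in the next lemma to be tightened to $q^{2d}/\eta^2$). The two side conditions you flagged are handled correctly: terms with $v_\beta=0$ can be dropped since $c_{\alpha,\beta}^2\leq v_\alpha v_\beta$ by pseudoexpectation Cauchy--Schwarz, and the conditioning requires SoS-degree at least $4d$, which is guaranteed wherever the lemma is invoked.
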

\begin{proof}

    \begin{align*}
        \Ex{\ri \in R}{\Ex{\li}{\tildeVar{\zee_{N_L(\li)}|\zee_{N_R(\ri)}}}} &= \Ex{\li,\ri}{\tildeVar{\zee_{N_L(\li)}|\zee_{N_R(\ri)}}}\\
        &\leq \Ex{\li,\ri}{\tildeVar{\zee_{N_L(\li)}} - \frac{1}{q^d} \sum_{\alpha, \beta} \frac{(\tildeCov{\zee_{N_L(\li),\alpha}}{\zee_{N_R(\ri),\beta}})^2}{\tildeVar{\zee_{N_R(\ri),\beta}}}} \\
        &\leq \Ex{\li,\ri}{\tildeVar{\zee_{N_L(\li)}} - \frac{1}{q^d} \sum_{\alpha, \beta} \left( \tildeCov{\zee_{N_L(\li),\alpha}}{\zee_{N_R(\ri),\beta}} \right)^2} \\
        &\leq \Ex{\li,\ri}{\tildeVar{\zee_{N_L(\li)}} - \frac{1}{q^{3d}} \left( \sum_{\alpha, \beta} \abs{ \tildeCov{\zee_{N_L(\li),\alpha}}{\zee_{N_R(\ri),\beta}}}\right)^2} \\
        &= \Ex{\li}{\tildeVar{\zee_{N_L(\li)}}} - \frac{1}{q^{3d}} \Ex{\li,\ri}{\left( \tildeCov{\zee_{N_L(\li)}}{\zee_{N_R(\ri)}} \right)^2} \\
        &\leq \Ex{\li}{\tildeVar{\zee_{N_L(\li)}}} - \frac{1}{q^{3d}} \left( \Ex{\li,\ri}{ \tildeCov{\zee_{N_L(\li)}}{\zee_{N_R(\ri)}} } \right)^2 
\end{align*}
\end{proof}
\begin{lemma}\label{lem:low_covariance_solution}
	Let $\eta>0$ be arbitrarily small. Given any SoS solution $\tildeEx{\cdot}$ of degree $\geq 2d\left(\frac{q^{3d}}{\eta^2}+1\right)$, there exists a number $k^* \leq q^{3d}/\eta^2 $ such that 
	\[
		\Ex{v_1,v_2,\cdots,v_{k^*}}{\Ex{\li,\ri}{\tildecov[\zee_{N_L(\li)},\zee_{N_R(\ri)} \vert \zee_{N_R(v_1)},\zee_{N_R(v_2)},\cdots,\zee_{N_R(v_{k^*})}]}} \leq \eta
	\]
\end{lemma}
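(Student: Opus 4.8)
The plan is to run a potential-reduction argument with the average ``left variance'' of the conditioned solution as the potential. For $k \ge 0$ let $v_1,\dots,v_k$ denote independent uniformly random vertices of $R$ (each new $v_{k+1}$ drawn, and the corresponding $\zee_{N_R(v_{k+1})}$ assigned, according to the local distribution of the pseudoexpectation obtained after the previous conditionings), and set
\[
\Phi_k ~\defeq~ \Ex{v_1,\dots,v_k}{\Ex{\li}{\tildeVar{\zee_{N_L(\li)} \vert \zee_{N_R(v_1)},\dots,\zee_{N_R(v_k)}}}}, \qquad
C_k ~\defeq~ \Ex{v_1,\dots,v_k}{\Ex{\li,\ri}{\tildecov[\zee_{N_L(\li)},\zee_{N_R(\ri)} \vert \zee_{N_R(v_1)},\dots,\zee_{N_R(v_k)}]}} .
\]
Since $\tildeVar{\zee_S} \le 1$ for every set $S$ (as computed in \cref{chap:prelims}) we have $\Phi_0 \le 1$, while trivially $\Phi_k \ge 0$ for all $k$. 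The degree hypothesis $t \ge 2d\inparen{q^{3d}/\eta^2 + 1}$ is what guarantees that all of the conditionings performed below, together with the pseudovariance/pseudocovariance expressions of the $d$-local sets $N_L(\li), N_R(\ri)$, stay within the allowed SoS degree (each conditioning on a $d$-local event costs $2d$ degrees).

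The first step is to establish the one-step recursion $\Phi_{k+1} \le \Phi_k - C_k^2/q^{3d}$. Fix an outcome $(v_1,\dots,v_k)$ and apply \cref{lem:conditioning_reduces_variance} to the conditioned pseudoexpectation operator $\condPE{\cdot}{\zee_{N_R(v_1)},\dots,\zee_{N_R(v_k)}}$; from its proof one in fact reads off the unconditional bound
\[
\Ex{v_{k+1}}{\Ex{\li}{\tildeVar{\zee_{N_L(\li)} \vert \zee_{N_R(v_1)},\dots,\zee_{N_R(v_{k+1})}}}} ~\le~ \Ex{\li}{\tildeVar{\zee_{N_L(\li)} \vert \zee_{N_R(v_1)},\dots,\zee_{N_R(v_k)}}} - \frac{1}{q^{3d}}\inparen{\Ex{\li,\ri}{\tildecov[\zee_{N_L(\li)},\zee_{N_R(\ri)} \vert \zee_{N_R(v_1)},\dots,\zee_{N_R(v_k)}]}}^2 .
\]
Taking expectation over $(v_1,\dots,v_k)$ and applying Jensen's inequality to the convex map $x \mapsto x^2$ converts the last term into $-C_k^2/q^{3d}$, yielding $\Phi_{k+1} \le \Phi_k - C_k^2/q^{3d}$ as claimed.

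The second step is a telescoping argument. Set $K \defeq \ceil{q^{3d}/\eta^2}$. Suppose, for contradiction, that $C_k > \eta$ for every $k \in \{0,1,\dots,K-1\}$. Summing the recursion,
\[
0 ~\le~ \Phi_K ~\le~ \Phi_0 - \sum_{k=0}^{K-1}\frac{C_k^2}{q^{3d}} ~<~ 1 - K\cdot\frac{\eta^2}{q^{3d}} ~\le~ 0 ,
\]
a contradiction. Hence some $k^\ast \in \{0,1,\dots,K-1\}$ has $C_{k^\ast} \le \eta$, and $k^\ast \le K-1 \le q^{3d}/\eta^2$, which is exactly the assertion of \cref{lem:low_covariance_solution}.

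The step I expect to require the most care is the passage from the fixed-operator statement of \cref{lem:conditioning_reduces_variance} to the averaged recursion for $\Phi_k$: one must check that ``conditioning on a random $\zee_{N_R(v)}$'' composes consistently across the $k$ rounds, so that $\Phi_{k+1}$ genuinely is the expectation over $v_{k+1}$ (and $v_1,\dots,v_k$) of the left variances of the $(k{+}1)$-fold conditioned operator, and that the Jensen step is applied to the correct random variable (the inner average covariance as a function of $v_1,\dots,v_k$). A secondary point to keep honest is the degree bookkeeping that justifies why $t \ge 2d\inparen{q^{3d}/\eta^2+1}$ suffices for all $K$ conditioning rounds plus the variance/covariance estimates that feed into \cref{lem:conditioning_reduces_variance} at each round.
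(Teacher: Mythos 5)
Your proof is correct and follows essentially the same route as the paper: a bounded, non-increasing potential $\Phi_k$ combined with the per-step variance-decrement bound from \cref{lem:conditioning_reduces_variance}, and pigeonhole (which you phrase as telescoping plus contradiction) to locate a step $k^*$ with small decrement and hence small average covariance. Your explicit Jensen step to pass from the pointwise bound to the bound on $C_k$ is a detail the paper leaves implicit in its appeal to the "contrapositive," but it is the same argument.
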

\begin{proof}
Consider $\Phi_k(\tildeEx{\cdot}) \defeq \Ex{v_1,v_2,\cdots,v_k}{\Ex{\li}{\tildeVar{\zee_{N_L(\li)}|\zee_{N_R(v_1)},\zee_{N_R(v_2)},\cdots,\zee_{N_R(v_k)}}}}$. We know that 
\[
1~\geq~ \Phi_0 ~\geq~ \Phi_1~\geq~ \cdots ~\geq~ \Phi_{q^{3d}/\eta^2} ~\geq~ 0
\]
so there exists a $k^* \leq q^{3d}/\eta^2$ such that $\Phi_{k^*}(\tildeEx{\cdot}) - \Phi_{k^*+1}(\tildeEx{\cdot}) \leq \frac{1}{q^{3d}/\eta^2} = \eta^2/q^{3d}$.
By contrapositive of \cref{lem:conditioning_reduces_variance}, this means that 
	\[
		\Ex{v_1,v_2,\cdots,v_{k^*}}{\Ex{\li,\ri}{\tildecov[\zee_{N_L(\li)},\zee_{N_R(\ri)} \vert \zee_{N_R(v_1)},\zee_{N_R(v_2)},\cdots,\zee_{N_R(v_{k^*})}]}} \leq \eta
\qedhere	
	\]
\end{proof}


%
\section{List Decoding up to Johnson Bound}
\label{sec:decoding}
In this section, we combine different pieces of the proof to give list decoding algorithms up to the Johnson bound. Note that in both Tanner and AEL cases, we reduce to unique decoding of either the same code or the base code, and this unique decoding needs to be done from pseudocodewords instead of codewords. We will handle this slight strengthening of unique decoding via randomized rounding in \cref{sec:decoding_from_fractional}.
\subsection{Tanner code}\label{sec:list_decoding_tanner}
Let $\TanC$ be a Tanner code on an $(n,d,\lambda)$-expander graph $G(L,R,E)$, with $\calC_0$ as the inner code. Let $\delta_0$ be the distance of $\calC_0$, so that (designed) distance of $\TanC$ is $\delta = \delta_0(\delta_0-\lambda)$. Assume $\lambda \leq \delta_0/3$. Given $g\in [q]^E$, we wish to recover the list $\calL(g,\calJ(\delta)-\eps)$. As $\eps \rightarrow 0$, the decoding radius gets arbitrarily close to the Johnson bound.

\begin{theorem}[List decoding Tanner codes]\label{thm:tanner-decoding}
There is a deterministic algorithm based on $\calO_{q,d}(1/\eps^4)$ levels of the SoS-hierarchy that given $g$ runs in time $n^{O_{q,d}(1/\eps^4)}$ time and computes the list $\calL(g,\calJ(\delta)-\eps))$.
\end{theorem}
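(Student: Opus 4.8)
The plan is to assemble the list decoding algorithm from the four ingredients developed in the preceding sections: the algorithmic covering lemma (\cref{lem:cover_for_list_tanner}), the correlation reduction via conditioning (\cref{lem:low_covariance_solution}), the distance certificate for $\eta$-good pseudocodewords (\cref{lem:distance_from_codeword}), and unique decoding from a pseudocodeword (deferred to \cref{sec:decoding_from_fractional}). Concretely, given $g$, I would first set error parameters: choose $\eta$ small enough (say $\eta \leq c\cdot\eps\cdot\delta_0$ for a suitable absolute constant $c$, and also $\eta \leq \delta_0^2/9$, $\lambda \leq \delta_0/3$), then invoke \cref{lem:cover_for_list_tanner} with $\eps/2$ to obtain, in time $n^{\calO(t)}$ with $t = \calO_{q,d}(1/\eps^4)$, a degree-$t$ pseudocodeword $\tildeEx{\cdot}$ with the covering property: every $h \in \calL(g,\calJ(\delta)-\eps)$ satisfies $\dis(\tildeEx{\cdot},h) < \delta - \Omega_{q,\delta}(\eps)$.

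Next I would apply \cref{lem:low_covariance_solution} to $\tildeEx{\cdot}$: enumerating deterministically over all choices of $k^* \leq q^{3d}/\eta^2$ right-vertices $v_1,\ldots,v_{k^*}$ and all local assignments to their neighborhoods (this is where the $q^{\calO(d)}/\eps^4$ levels and $n^{q^{\calO(d)}/\eps^4}$ running time come from), I obtain a collection of conditioned pseudocodewords, at least one of which is $\eta$-good. Conditioning does not increase $\dis(\cdot,h)$ in expectation over the conditioning randomness, so for each fixed $h$ in the list there is a choice of conditioning under which the resulting $\eta$-good pseudocodeword $\tildeEx{\cdot}'$ still satisfies $\dis(\tildeEx{\cdot}',h) < \delta - \Omega(\eps)$ — here I'd be a bit careful and either run the averaging argument or, since we enumerate all conditionings, simply collect all $\eta$-good conditioned solutions arising and process each. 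For each such $\eta$-good $\tildeEx{\cdot}'$, \cref{lem:distance_from_codeword} gives the dichotomy $\dis(h,\tildeEx{\cdot}') \leq 3\eta$ or $\dis(h,\tildeEx{\cdot}') \geq \delta_0(\delta_0-\lambda) - 3\eta = \delta - 3\eta$; having chosen $\eta$ small enough that $3\eta < \Omega(\eps)$, membership in the list forces the first case, so $\dis(h,\tildeEx{\cdot}') \leq 3\eta$.

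Finally, from an $\eta$-good pseudocodeword within distance $3\eta$ of some list element $h$, I recover $h$: round $\tildeEx{\cdot}'$ to an actual word $h' \in [q]^E$ by setting $h'(e) = \argmax_{j\in[q]} \tildeEx{\indi{\zee_e = j}}$, argue via a simple averaging (Markov) argument that $\dis(h',h) = \calO(\eta)$, which is well below $\delta/2$ for small $\eta$, and then run the linear-time unique decoder of \Zemor~\cite{Zemor01} (or the more careful pseudocodeword-to-codeword unique decoding of \cref{sec:decoding_from_fractional}) on $h'$ to output $h$. Iterating over all $\eta$-good conditioned solutions produced by the enumeration yields a list containing every element of $\calL(g,\calJ(\delta)-\eps)$; de-duplicating and discarding any spurious outputs (by checking each is a codeword within radius $\calJ(\delta)-\eps$ of $g$) gives exactly the list. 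The main obstacle — really the only nontrivial glue — is the bookkeeping of error parameters so that the covering slack $\Omega(\eps)$ strictly dominates the $\calO(\eta)$ losses from both conditioning (via \cref{lem:distance_from_codeword}) and rounding, while keeping $\eta = \Theta_{q,d}(\eps)$ so that $t = \calO_{q,d}(1/\eps^4)$; and ensuring the "unique decoding from a pseudocodeword" step (\cref{sec:decoding_from_fractional}) indeed tolerates the $\calO(\eta)$-corrupted rounded word, which is straightforward since $\calO(\eta) \ll \delta/2$.
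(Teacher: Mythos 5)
Your proposal is correct and follows essentially the same route as the paper's proof: covering lemma, conditioning to obtain an $\eta$-good pseudocodeword, the distance dichotomy, rounding to a nearby word, unique decoding, and derandomization by enumerating all conditionings (the paper establishes via Markov plus a union bound that a single conditioning is simultaneously $\eta$-good and close to $h$, which is exactly the "averaging argument" you flag). The only cosmetic difference is your deterministic $\argmax$ rounding in the last step, where the paper formally uses randomized sampling from the local distributions followed by threshold-rounding derandomization; both work, and the paper's own overview mentions the $\argmax$ variant.
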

\begin{proof}
We apply the algorithmic covering \cref{lem:cover_for_list_tanner} to obtain a pseudocodeword $\tildeEx{\cdot}$ of SoS-degree $t \geq 2d\left( \frac{q^{3d}}{\eta^2} + 2\right)$ such that for any $h \in \calL(g,\calJ(\delta)-\eps)$, we know that $\dis(\tildeEx{\cdot},h) \leq \delta - \eps_2$ for $\eps_2 = 2\eps \cdot \sqrt{1-\frac{q}{q-1} \delta} \geq \Omega(\eps)$. We will choose $\eta$ later, and note that the choice of $\eta$ does not change $\eps_2$.
Henceforth, we fix an $h\in \calL((g,\calJ(\delta) - \eps)$, so that $\dis(\tildeEx{\cdot},h) \leq \delta -\eps_2$. Our goal is to recover $h$.

Pick a random $k \in \{1,\cdots,\ceil{\frac{q^{3d}}{\eta^2}}\}$. From \cref{lem:low_covariance_solution}, we know that with probability at least $\frac{\eta^2}{q^{3d}}$, 
\begin{align}\label{eqn:random_conditioning}
\Ex{v_1,v_2,\cdots ,v_k}{\Ex{\li,\ri}{\tildecov[\zee_{N_L(\li)},\zee_{N_R(\ri)} |
  \zee_{N_R(v_1)},\zee_{N_R(v_2)},\cdots ,\zee_{N_R(v_k)}]}} \leq \eta \mper
\end{align}
	We assume that we found a $k$ such that \cref{eqn:random_conditioning} holds. Let $V$ be the (random) set of $k$ vertices we condition on, that is, $V = \{v_1,v_2,\cdots,v_k\}$, and let $N_R(V) = \cup_{v\in V} N_R(v)$. Then,
	\[
		\Ex{V\sub R, |V|=k}{\Ex{\li,\ri}{\tildecov[\zee_{N_L(\li)},\zee_{N_R(\ri)} | \zee_{N_R(V)}]}} \leq \eta
	\]
	More explicitly, conditioning on $N_R(V)$ involves sampling an assignment for $N_R(V)$ according to the local distribution of $\zee_{N_R(V)}$. Let this random assignment be $\beta$, and we get
	\[
		\Ex{\substack{V \sub R,|V|=k \\ \beta \sim \zee_{N_R(V)}}}{\Ex{\li,\ri}{\tildecov[\zee_{N_L(\li)},\zee_{N_R(\ri)} | \zee_{N_R(V)}=\beta]}} \leq \eta
	\]
By Markov's inequality,
	\[
	\Pr{\substack{V \sub R,|V|=k \\ \beta \sim \zee_{N_R(V)}}}{\Ex{\li,\ri}{\tildecov[\zee_{N_L(\li)},\zee_{N_R(\ri)} | \zee_{N_R(V)}=\beta]}> \frac{\eps_2}{15}} \leq \frac{15\eta}{\eps_2}
	\]
By choosing $\eta=\eps_2^2/60\delta$, we get
\begin{equation}\label{eqn:low_covariance}
	\Pr{\substack{V \sub R,|V|=k \\ \beta \sim \zee_{N_R(V)}}}{\Ex{\li,\ri}{\tildecov[\zee_{N_L(\li)},\zee_{N_R(\ri)} | \zee_{N_R(V)}=\beta]} \leq \frac{\eps_2}{15}} \geq 1- \eps_2/4\delta
\end{equation}
For some $V,\beta$ such that \cref{eqn:low_covariance} holds, we will be using $\tildeEx{~ \cdot~ |
  \zee_{N_R(V)}=\beta}$ as an $\eta$-good pseudocodeword. Using $\eps_2 < \delta$, note that 
\begin{align*}
		\frac{\eps_2}{15} < \frac{\delta}{15} = \frac{\delta_0\cdot (\delta_0-\lambda)}{15}< \frac{\delta_0^2}{9}
\end{align*}
and $\lambda\leq \delta_0/3$ so that the conditions of \cref{lem:distance_from_codeword} are satisfied. 
	 
	 For this $\eta$-good pseudocodeword, we need to argue that it is still close to $h$ that we are trying to find. This is easy to ensure in expectation, and we again appeal to Markov's inequality to say that it also holds with significant probability, up to some loss in distance. By the law of total expectation, for any $V \sub R$,
\[
		\Ex{\beta \sim \zee_{N_R(V)}}{\dis\inparen{\tildeEx{~\cdot~ | \zee_{N_R(V)}=\beta},h}} = \dis(\tildeEx{\cdot},h) \leq \delta-\eps_2
\]
Averaging over all $V\sub R$ of size $k$,
\begin{equation}
		\Ex{\substack{V \sub R,|V|=k \\ \beta \sim \zee_{N_R(V)}}}{\dis\inparen{\tildeEx{\ \cdot\ | \zee_{N_R(V)}=\beta},h}} \leq \delta-\eps_2
\end{equation}
Again, we claim via Markov's inequality that a significant fraction of conditionings must end up being not too far from $f$.
\begin{equation}\label{eqn:good_agreement}
		\Pr{\substack{V \sub R,|V|=k \\ \beta \sim \zee_{N_R(V)}}}{\dis\inparen{ \tildeEx{\ \cdot\ | \zee_{N_R(V)}=\beta},h} \leq \delta-\frac{\eps_2}{2}} \geq \frac{\eps_2/2}{\delta-\eps_2+\frac{\eps_2}{2}} \geq \frac{\eps_2}{2\delta}
\end{equation}
Henceforth, we fix a conditioning $(V,\beta)$ with $\beta\in [q]^{N_R(V)}$ such that events in both \cref{eqn:low_covariance} and \cref{eqn:good_agreement} happen. Note that by a union bound, a random $(V,\beta)$ has this property with probability at least $\eps_2/4\delta$. Fix such a conditioning, and let the conditioned pseudoexpectation be \[ \dupPE{\cdot} = \tildeEx{~\cdot~ | \zee_{N_R(V) = \beta}}\] and the corresponding covariance operator be $\dupCov[\cdot]$ Note that the degree of $\dupPE{\cdot}$ is $t-2d\cdot k \geq 4d$. From definition, we know that
\begin{align}
\dis(\dupPE{\cdot},h) \leq \delta-\frac{\eps_2}{2} \label{eqn:good_agreement_new_operator}\\
\Ex{\li,\ri}{\dupCov[ \zee_{N_L(\li)}, \zee_{N_R(\ri)}]} \leq \frac{\eps_2}{15}\label{eqn:low_covariance_new_operator}
\end{align}
From \cref{eqn:low_covariance_new_operator} and \cref{lem:distance_from_codeword}, we know that $\dis(\dupPE{\cdot},h) \leq \eps_2/5$ or $\dis(\dupPE{\cdot},h) \geq \delta - \eps_2/5$. The latter is impossible because of \cref{eqn:good_agreement_new_operator}, and so we must have $\dis(\dupPE{\cdot},h) \leq \eps_2/5 < \delta/5$.
	
Finally, we use \cref{lem:unique_decoding} to recover $h$ using $\dupPE{\cdot}$ with probability at least $1/5$.
The algorithm succeeds if 
\begin{enumerate}
\item a $k$ is picked so that \cref{eqn:random_conditioning} holds, 
\item $(V,\beta)$ is picked so that events in both \cref{eqn:low_covariance} and \cref{eqn:good_agreement} happen,
\item the call to \cref{lem:unique_decoding} succeeds.
\end{enumerate}
The success probability is therefore at least
\[
\Pr{\text{success}} ~\geq~ \frac{\eta^2}{q^{2d}} \cdot \frac{\eps_2}{4\delta} \cdot \frac{1}{5} ~\geq~
\Omega \left( \frac{\eps_2^5}{\delta^3\cdot q^{2d}} \right) ~=~ \Omega_{q,d}(\eps^5) \mper
\]
We have shown that for any $h\in \calL((g,\calJ(\delta)-\eps)$, the algorithm above outputs $h$ with probability at least $\Omega_{q,d}(\eps^5)$. Note that this implicitly proves an upper bound on the list of $\calO_{q,d}(1/\eps^5)$.
	
Therefore, the random choices that the algorithm makes lead it to different elements of the list. We next argue that we can derandomize all random choices in the algorithm, so that all elements of the list can be found with a deterministic algorithm.
	\begin{enumerate}
		\item For the random choice of $k$, we can try out all possible $q^{3d}/\eta^2$ values for $k$. 
		\item For random $(V,\beta)$, we can again try out all possible values, which are at most $n^k\cdot 2^k \leq n^{\calO_{q,d}(1/\eps^4)}$ in number.
		\item \cref{lem:unique_decoding} can be derandomized using a standard threshold rounding argument, as argued in \cref{lem:derandomized_decoding_from_distributions}.
	\end{enumerate}
Thus, the final algorithm starts with an empty list and goes over all the deterministic choices above. Every $h\in \calL(g,\calJ(\delta)-\eps)$ will be discovered in at least one of these deterministic steps, and we can efficiently check whether $\Delta(g,h) <\calJ(\delta)-\eps$. If yes, $h$ is added to the output list.
\end{proof}

\subsection{AEL Code}\label{sec:list_decoding_ael}
Let $\AELC$ be an AEL code determined by an $(n,d,\lambda)$-expander graph $G(L,R,E)$, an inner code $\calC_0$ of distance $\delta_0$, rate $r_0$, alphabet size $q_0$, and an outer code $\calC_1$ of distance $\delta_1$, rate $r_1$ and alphabet size $q_1 = |\calC_0|$. The code $\AELC$ is of alphabet size $q_0^d$, rate $r_0r_1$ and (designed) distance $\delta_0-\frac{\lambda}{\delta_1}$. 
\begin{theorem}[List Decoding AEL codes]\label{thm:list_decoding_ael}
	Suppose the code $\calC_1$ can be efficiently unique-decoded from radius $\delta_{dec}$. Assume $\lambda \leq \kappa \cdot \delta_{dec} \leq \kappa \cdot \delta_1$, so that the distance of $\AELC$ is at least $\delta_0-\kappa$. Then for any $\eps > 0$, the code $\AELC$ can be list decoded from a radius of $\calJ(\delta_0-\kappa)-\eps$ by using $\calO_{q,d,\delta_{dec}}(1/\eps^4)$ levels of SoS-hierarchy, in time $n^{\calO_{q,d,\delta_{dec}}(1/\eps^4)}$.
\end{theorem}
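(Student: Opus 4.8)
The plan is to follow the template of the proof of \cref{thm:tanner-decoding}, with two substitutions: the distance dichotomy for Tanner pseudocodewords (\cref{lem:distance_from_codeword}) is replaced by the AEL amplification inequality for pseudocodewords (\cref{lem:AEL_amplification}), and the final ``read-off'' step reduces to \emph{unique decoding of the outer code $\calC_1$} rather than to decoding inside $\AELC$. Fix the received word $g$ and an arbitrary codeword $h \in \AELC$ with $\dis(g,h) < \calJ(\delta_0-\kappa) - \eps$; the aim is to recover $h$. First I would invoke the algorithmic covering lemma for $\AELC$ as in the proof of \cref{lem:cover_for_list_ael}, but run with the designed distance $\delta_0-\kappa$ in place of the true distance (this is legitimate, since $\delta_0-\kappa$ is a lower bound on the true distance, so $\beta^\ast := 1 - \frac{q^d}{q^d-1}(\delta_0-\kappa) \ge 0$ and the relevant list is contained in a bounded Johnson list). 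Concretely this is \cref{lem:abstract_algorithmic_covering_ael} applied with $u = \chiAEL(g)$ and threshold $\gamma = \sqrt{\beta^\ast} + \frac{q^d}{q^d-1}\eps$; it produces a pseudocodeword $\tildeEx{\cdot}$ of SoS-degree $t = q^{\bigoh(d)}/\eps^4$, computable in time $n^{q^{\bigoh(d)}/\eps^4}$, such that $\dis^R(\tildeEx{\cdot},h) < (\delta_0-\kappa) - \eps_2$, where $\eps_2 = 2\eps\sqrt{1 - \frac{q^d}{q^d-1}(\delta_0-\kappa)} = \Omega(\eps)$.

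Next I would turn $\tildeEx{\cdot}$ into an $\eta$-good pseudocodeword by conditioning, where $\eta$ is a parameter fixed at the end. By \cref{lem:low_covariance_solution} there is a $k^\ast = \bigoh_{q,d}(1/\eta^2)$ for which conditioning $\tildeEx{\cdot}$ on $\zee_{N_R(V)}$ for a uniformly random $k^\ast$-subset $V \subseteq R$ and a random assignment sampled from the induced local distribution yields, in expectation, average $L$–$R$ covariance $\le \eta/\bigoh(1)$; Markov then gives a conditioned pseudocodeword $\dupPE{\cdot}$ that is $\eta$-good with constant probability. Because conditioning is an averaging operation, the law of total expectation keeps the average of $\dis^R(\dupPE{\cdot},h)$ over the conditioning equal to $\dis^R(\tildeEx{\cdot},h) < (\delta_0-\kappa)-\eps_2$, so a second application of Markov yields, with probability $\Omega(\eps)$, a conditioning for which $\dupPE{\cdot}$ is simultaneously $\eta$-good and $\dis^R(\dupPE{\cdot},h) < (\delta_0-\kappa) - \eps_2/2$. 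Plugging this into \cref{lem:AEL_amplification}, which reads $\dis^R(\dupPE{\cdot},h) \ge \delta_0 - \frac{\lambda+\eta}{\dis^L(\dupPE{\cdot},h)}$, and rearranging gives $\dis^L(\dupPE{\cdot},h) < \frac{\lambda+\eta}{\kappa + \eps_2/2} \le \frac{\kappa\,\delta_{dec} + \eta}{\kappa + \eps_2/2}$, using the hypothesis $\lambda \le \kappa\,\delta_{dec}$. Taking $\eta$ small, say $\eta = \Theta_{q,d,\delta_{dec},\kappa}(\eps^2)$ (so that in particular $\eta < \delta_{dec}\eps_2/4$), this bound becomes $\dis^L(\dupPE{\cdot},h) < \delta_{dec}\bigl(1 - \Omega(\eps)\bigr)$; and this $\eta$ is consistent with the degree budget, since \cref{lem:low_covariance_solution} needs degree $\bigoh_{q,d}(1/\eta^2) = q^{\bigoh(d)}/\eps^4$.

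The last step is to round and unique-decode. The left-local distributions of $\dupPE{\cdot}$ are distributions over $\calC_0 \cong [q_1]$, so randomized rounding of $\dupPE{\cdot}$ along $L$ produces a word $f' \in [q_1]^L$ whose expected distance to $f$ equals $\dis^L(\dupPE{\cdot},h) < \delta_{dec}(1-\Omega(\eps))$, where $f \in \calC_1$ is the outer codeword with $h = f^{AEL}_{\calC_0}$; by Markov, with probability $\Omega(\eps)$ we get $\dis(f',f) < \delta_{dec}$, and running the assumed unique decoder of $\calC_1$ on $f'$ returns $f$, hence $h$. Thus every codeword in the target list is output with probability $\Omega_{q,d,\delta_{dec},\kappa}(\poly(\eps))$, which in particular re-derives the Johnson-type bound on the list size. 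It then remains to derandomize, exactly as in \cref{thm:tanner-decoding}: enumerate the $\bigoh_{q,d}(1/\eta^2)$ choices of $k$, enumerate the at most $n^{k}\,q^{\bigoh(dk)} \le n^{q^{\bigoh(d)}/\eps^4}$ choices of the conditioning pair, and replace the randomized rounding by the threshold rounding of \cref{lem:derandomized_decoding_from_distributions}; collect all recovered codewords and discard those at distance $\ge \calJ(\delta_0-\kappa)-\eps$ from $g$. The total running time is $n^{\calO_{q,d,\delta_{dec}}(1/\eps^4)}$ plus the cost of the polynomially many unique-decoder calls.

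I expect the crux to be the second paragraph. In the Tanner proof, \cref{lem:distance_from_codeword} is a genuine dichotomy for a single quantity, so once the covering bound excludes the far branch the pseudocodeword is automatically $\bigoh(\eta)$-close to $h$ and the rest is slack. Here \cref{lem:AEL_amplification} only couples $\dis^L$ and $\dis^R$, so one has to track constants carefully: the hypothesis $\lambda \le \kappa\,\delta_{dec}$ is precisely what makes a right-distance strictly below the designed value $\delta_0-\kappa$ translate into a left-distance strictly below the outer decoding radius $\delta_{dec}$ with an $\Omega(\eps)$ margin, and $\eta$ must be chosen small enough that the $\eta$-good error term in \cref{lem:AEL_amplification} does not erode that margin. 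Making these inequalities line up — and ensuring the rounding step lands within $\delta_{dec}$ rather than merely within a constant multiple of it — is essentially the only content beyond transcribing the Tanner argument.
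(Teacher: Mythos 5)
Your proposal is correct and follows essentially the same route as the paper's proof: algorithmic covering lemma for $\AELC$, conditioning plus two applications of Markov to obtain an $\eta$-good pseudocodeword that stays within right-distance $(\delta_0-\kappa)-\eps_2/2$ of $h$, the rearrangement of \cref{lem:AEL_amplification} using $\lambda \le \kappa\,\delta_{dec}$ to push the left-distance below $\delta_{dec}(1-\Omega(\eps))$, rounding on $L$ followed by the outer unique decoder, and enumeration-based derandomization. The paper's concrete choice $\eta = \eps_2^2\delta_{dec}/(16(\delta_0-\kappa))$ matches your $\Theta(\eps^2)$ scaling, and your identification of the second paragraph as the crux is accurate.
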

\begin{proof}
Let $g\in [q_0^d]^R$ be a received word. Recall that the distance of an AEL codeword $h$ with $g$ is given by $\dis^R(g,h) = \Ex{\ri\in R}{\indi{g(\ri)\neq h_{N_R(\ri)}}}$.
	
We again start by applying the algorithmic covering \cref{lem:cover_for_list_ael} to get a pseudocodeword $\tildeEx{\cdot}$ of SoS-degree $t\geq 2d(\frac{q^{3d}}{\eta^2}+2)$ such that for every $h \in \calL(g,\calJ(\delta_0-\kappa)-\eps)$,
\[
\Ex{\ri}{\tildeEx{\indi{\zee_{N_R(\ri)} \neq h_{N_R(\ri)}}}} = \dis^R(\tildeEx{\cdot},h) \leq (\delta_0-\kappa)-\eps_2
\]
Here $\eta>0$ is a small constant to be chosen later, and $\eps_2 = 2\eps \sqrt{1-\frac{q^d}{q^d-1}(\delta_0-\kappa)} \geq \Omega(\eps)$.
	Henceforth, we fix an $h\in \calL\left((g,\calJ(\delta_0-\kappa) - \eps \right)$, so that $\dis^R(\tildeEx{\cdot},h) \leq \delta_0-\kappa -\eps_2$. Our goal is to recover $h$.
	
	Pick a random $k \in \{1,\cdots,\ceil{\frac{q^{3d}}{\eta^2}}\}$. From \cref{lem:low_covariance_solution}, we know that with probability at least $\frac{\eta^2}{q^{3d}}$,
	\[
		\Ex{v_1,v_2,\cdots ,v_k}{\Ex{\li,\ri}{\tildecov[\zee_{N_L(\li)},\zee_{N_R(\ri)} | \zee_{N_R(v_1)},\zee_{N_R(v_2)},\cdots ,\zee_{N_R(v_k)}]}} \leq \eta
	\]
We assume that we found a $k$ such that \cref{eqn:random_conditioning} holds. Let $V$ be the (random) set of $k$ vertices we condition on, that is, $V = \{v_1,v_2,\cdots,v_k\}$, and let $N_R(V) = \cup_{v\in V} N_R(v)$. Then,
\[
\Ex{V\sub R, |V|=k}{\Ex{\li,\ri}{\tildecov[\zee_{N_L(\li)},\zee_{N_R(\ri)} | \zee_{N_R(V)}]}} \leq \eta
\]
More explicitly, conditioning on $N_R(V)$ involves sampling an assignment for $N_R(V)$ according to the local distribution of $\zee_{N_R(V)}$. Let this random assignment be $\beta$, and we get
\[
\Ex{\substack{V \sub R,|V|=k \\ \beta \sim
    \zee_{N_R(V)}}}{\Ex{\li,\ri}{\tildecov[\zee_{N_L(\li)},\zee_{N_R(\ri)} | \zee_{N_R(V)}=\beta]}}
\leq \eta \mper
\]
By Markov's inequality, and by choosing $\eta = \frac{\eps_2^2\delta_{dec}}{16(\delta_0-\kappa)}$,
	\begin{gather}
	\Pr{\substack{V \sub R,|V|=k \\ \beta \sim \zee_{N_R(V)}}}{\Ex{\li,\ri}{\tildecov[\zee_{N_L(\li)},\zee_{N_R(\ri)} | \zee_{N_R(V)}=\beta]}> \frac{\delta_{dec} \cdot \eps_2}{4}} \leq \frac{4\eta}{\delta_{dec}\cdot\eps_2} \leq \frac{\eps_2}{4(\delta_0-\kappa)} \\
	\Pr{\substack{V \sub R,|V|=k \\ \beta \sim \zee_{N_R(V)}}}{\Ex{\li,\ri}{\tildecov[\zee_{N_L(\li)},\zee_{N_R(\ri)} | \zee_{N_R(V)}=\beta]} \leq \frac{\delta_{dec} \cdot \eps_2}{4}} \geq 1-\frac{\eps_2}{4(\delta_0-\kappa)}\label{eqn:eta_good_ael}
	\end{gather}
	As in the Tanner case, we next claim that the distance is preserved with significant probability when conditioning randomly. Let $h \in \calL(g,\calJ(\delta_0-\kappa)-\eps)$ so that $\dis^R(\tildeEx{\cdot},h) < \delta_0-\kappa -\eps_2$. Using a similar argument as in the Tanner case,
	\begin{equation}
		\Ex{(V,\beta)}{\dis^R\left( \tildeEx{\ \cdot\ | \zee_{N_R(V)}=\beta},h\right)} \leq (\delta_0-\kappa)-\eps_2
	\end{equation}
which allows us to claim via Markov's inequality that
\begin{align}\label{eqn:good_agreement_ael}
		\Pr{(V,\beta)}{\dis^R \left( \tildeEx{\ \cdot\ | \zee_{N_R(V)}=\beta},h \right) \leq (\delta_0-\kappa)-\frac{\eps_2}{2}} ~\geq~ \frac{\eps_2/2}{(\delta_0-\kappa)-\eps_2+\eps_2/2} 
		~\geq~ \frac{\eps_2}{2(\delta_0-\kappa)}
	\end{align}
Again, let $(V,\beta)$ be a conditioning such that events in both \cref{eqn:eta_good_ael} and \cref{eqn:good_agreement_ael} hold (which happens with probability at least $\frac{\eps_2}{4(\delta_0-\kappa)}$). Let $\dupPE{\cdot} = \tildeEx{\cdot | \zee_{N_R(V)} =\beta}$ be an $\eta$-good pseudocodeword that satisfies \cref{eqn:good_agreement_ael}.
This means
\[
\delta_0 - \frac{\lambda+\eta}{\dis^L(\dupPE{\cdot},h)} ~\leq~ \dis^R(\dupPE{\cdot},h) ~\leq~
(\delta_0-\kappa)-\eps_2/2 
\]
Rearranging, we get that
\[
\kappa+\eps_2/2 ~\leq~ \frac{\lambda+\eta}{\dis^L(\dupPE{\cdot},h)}
~\leq~ \frac{\lambda+\delta_{dec}\cdot\eps_2/4}{\dis^L(\dupPE{\cdot},h)}\\
~\leq~ \frac{\kappa \cdot \delta_{dec}+\delta_{dec}\cdot\eps_2/4}{\dis^L(\dupPE{\cdot},h)} \mcom
\]
which gives the required bound on $\dis^L(\dupPE{\cdot},h)$ as
\[
\dis^L(\dupPE{\cdot},h) ~\leq~ \delta_{dec} - \frac{\delta_{dec} \eps_2/4}{\kappa+\eps_2/2} 
		~\leq~ \delta_{dec} - \frac{\delta_{dec} \eps_2}{4\delta_0}
\]
Finally, we use \cref{lem:unique_decoding_ael} to find $h$ using $\dupPE{\cdot}$ with probability at least $\eps_2/4\delta_0$.
The final success probability is at least
\[
\frac{\eta^2}{q^{3d}} \cdot \frac{\eps_2}{4(\delta_0-\kappa)} \cdot \frac{\eps_2}{4\delta_0} ~\geq~
\Omega\inparen{\frac{\eps_2^6\delta_{dec}^2}{q^{3d}\delta_0^4}} ~\geq~
\Omega_{q,d,\delta_{dec}}(\eps^6) \mper
	\]
	Just as in the case of Tanner codes, this algorithm can be derandomized by trying out all possible random choices made by the algorithm, to give a deterministic algorithm that recovers the list.
\end{proof}
Note that while the theorem above deals with list decoding, it can be easily adapted for list recovery by replacing the use of \cref{lem:cover_for_list_ael} by \cref{lem:cover_for_list_recovery_ael}.
Next, we use the AEL amplification scheme to construct near-MDS codes list decodable up to the Johnson bound.

\begin{theorem}\label{thm:near_mds_main}
	For any $\nfrac{1}{2} >\eps_1, \eps_2>0$, there is an infinite family of codes $\calC$ of blocklength $n$ with the following properties:
	\begin{enumerate}[(i)]
		\item The rate of the code is $\rho$ and distance is at least $1-\rho-\eps_1$.
		\item The code is over an alphabet of size $2^{\calO(\eps_1^{-6}\log(1/\eps_1))}$.
		\item The code can be list decoded from radius $\calJ(1-\rho - \eps_1)-\eps_2$ in time $n^{\calO_{\eps_1}(1/\eps_2^4)}$.
	\end{enumerate}
\end{theorem}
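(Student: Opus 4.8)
The plan is to instantiate the list-decoding theorem for AEL codes, \cref{thm:list_decoding_ael}, with a Reed-Solomon inner code and an explicit expander-code outer code, choosing every parameter as a function of $\eps_1$ so that the hypotheses of \cref{thm:list_decoding_ael} hold, the rate and distance come out as claimed, and the inner blocklength $d$ --- and hence the final alphabet $q_0^d$ --- is $\calO(\eps_1^{-6})$. Given $\rho \in (0,1)$ and $\eps_1 > 0$, set the outer rate $r_1 = 1 - \eps_1/3$, the inner rate $r_0 = \rho/r_1$, and the AEL distance slack $\kappa = \eps_1/2$; note that $r_0(1-r_1) + \kappa \le \eps_1/3 + \eps_1/2 < \eps_1$. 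Take $\calC_0$ to be a Reed-Solomon code of blocklength $d$ over $\F_{q_0}$ (with $q_0$ the least prime power $\ge d$, and $d$ chosen so that $r_0 d \in \Z$) of rate $r_0$ and distance $\delta_0 = 1 - r_0$, so that $q_1 := |\calC_0| = q_0^{r_0 d}$ is a prime power. For the outer code $\calC_1 \subseteq [q_1]^n$ take \Zemor's expander code \cite{Zemor01, SS96} over $\F_{q_1}$ built from a rate-$(1-\eps_1/6)$ base code of relative distance $\Omega(\eps_1)$ --- such a base code exists over the large field $\F_{q_1}$ by the Gilbert-Varshamov bound and is found by exhaustive search, hence explicit --- on a bipartite expander with second eigenvalue below half the base distance; this yields an explicit $\calC_1$ of rate $\ge r_1$, constant distance $\delta_1 = \Theta(\eps_1^2)$, and linear-time unique decoding from radius $\delta_{dec} = \Theta(\eps_1^2)$ \cite{Zemor01, SkaRoth03}. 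Finally let $G$ be an $(n,d,\lambda)$ bipartite Ramanujan graph \cite{LPS88} with $\lambda \le 2/\sqrt{d}$; to achieve $\lambda \le \kappa\, \delta_{dec} = \Theta(\eps_1^3)$ it suffices to take $d = \Theta(\eps_1^{-6})$, and this is the step that pins down the alphabet, since $q_0^d = \calO(d)^d = 2^{\calO(d \log d)} = 2^{\calO(\eps_1^{-6} \log(1/\eps_1))}$.

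With these choices every claim follows by plugging into \cref{thm:list_decoding_ael}. The resulting AEL code $\AELC$ has rate $r_0 r_1 = \rho$ and distance at least $\delta_0 - \kappa = 1 - r_0 - \kappa \ge 1 - \rho - \eps_1$. Since $\delta_{dec} \le \delta_1$ and $\lambda \le \kappa\, \delta_{dec}$, the hypotheses of \cref{thm:list_decoding_ael} hold, so for every $\eps_2 > 0$ there is a deterministic algorithm list decoding $\AELC$ up to radius $\calJ(\delta_0 - \kappa) - \eps_2 \ge \calJ(1 - \rho - \eps_1) - \eps_2$ (using monotonicity of $\calJ$) in time $n^{\calO_{q_0, d, \delta_{dec}}(1/\eps_2^4)} + \calO(n) = n^{\calO_{\eps_1}(1/\eps_2^4)}$, because $q_0, d, \delta_{dec}$ depend only on $\eps_1$ and the outer decoder is linear-time; a decoder for radius $\calJ(\delta_0 - \kappa) - \eps_2$ of course also handles the smaller radius $\calJ(1-\rho-\eps_1) - \eps_2$. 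Letting $n$ range over the infinitely many values for which both the Ramanujan graph and \Zemor's code of blocklength $n$ exist produces the infinite family.

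The content is bookkeeping rather than a new idea; the one genuine constraint to honor is the chain $\lambda \le \kappa\, \delta_{dec} \le \kappa\, \delta_1$ of \cref{thm:list_decoding_ael} together with $\delta_0 - \kappa \ge 1 - \rho - \eps_1$, and the delicate point is keeping these compatible while still obtaining $d = \Theta(\eps_1^{-6})$ so that the alphabet stays $2^{\calO(\eps_1^{-6}\log(1/\eps_1))}$: the high-rate demand on $\calC_1$ forces its distance, hence $\delta_{dec}$, down to $\Theta(\eps_1^2)$, which forces $\lambda = \Theta(\eps_1^3)$ and thus $d = \Theta(\eps_1^{-6})$. Minor nuisances --- arranging $r_0 d \in \Z$ so the rate is exactly $\rho$ and $q_1$ is a true prime power (the $\calO(1/d) = \calO(\eps_1^6)$ rounding is absorbed into the slack or handled by puncturing), and checking $H_{q_1}(\delta) = \delta(1 + o_{q_1}(1))$ for small $\delta$ so the outer base code really has distance $\Omega(\eps_1)$ over the large field $\F_{q_1}$ --- do not affect the argument. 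Since the only properties of $\calC_1$ used are explicitness, rate $\ge 1 - \eps_1/3$, constant distance, and a polynomial-time unique decoder, the alphabet bound is an artifact of the choice of outer code and could be improved.
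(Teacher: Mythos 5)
Your proposal is correct and follows essentially the same route as the paper: instantiate \cref{thm:list_decoding_ael} with a Reed--Solomon (MDS) inner code of blocklength $d=\Theta(\eps_1^{-6})$, a high-rate constant-distance outer code unique-decodable from radius $\Theta(\eps_1^2)$, and $\lambda = \kappa\,\delta_{dec} = \Theta(\eps_1^3)$, which pins down the alphabet size $q_0^d = 2^{\calO(\eps_1^{-6}\log(1/\eps_1))}$. The only cosmetic difference is your choice of a \Zemor-style expander code over $\F_{q_1}$ as the outer code where the paper uses the Guruswami--Indyk construction (folding symbols to match the alphabet $|\calC_0|$); the paper explicitly notes that any outer code with these properties suffices, and your direct handling of the alphabet-matching and the $\eps_1/3$, $\eps_1/2$ budget up front replaces the paper's final "replace $\eps_1$ by $\eps_1/3$" step.
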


\begin{proof}
We sketch how to instantiate \cref{thm:list_decoding_ael} to obtain such codes.

Suppose we are working with $(n,d,\lambda)$-expander.

Choose the inner code $\calC_0$ to be a Reed Solomon code of rate $\rho_0$, distance $1-\rho_0$ and alphabet size $q_0=d$ (or any MDS code). Choose the outer code $\calC_1$ over alphabet of size $|\calC_0| = d^{\rho_0\cdot d}$ to have rate $1-\eps_1$ that can be unique decoded from radius $\delta_{dec}=\Omega(\eps_1^2)$, such as the one constructed in \cite{GI05}.

Let $\lambda = \kappa\delta_{dec}$ with $\kappa =\eps_1$, so that $\lambda \leq \Theta(\eps_1^3)$ and $d=\Theta(1/\eps_1^6)$.

The rate of the final AEL code is $\rho \defeq (1-\eps_1) \rho_0$, and the distance is at least
\begin{align*}
	(1-\rho_0) - \frac{\lambda}{\delta_1} &~\geq~ 1-\frac{\rho}{1-\eps_1} - \frac{\lambda}{\delta_{dec}} \\
	&~\geq~1-\rho -2\eps_1 \rho -\eps_1 \\
	&~\geq~1-\rho - 3\eps_1
\end{align*}

The alphabet size is $q_0^d = d^d  = 2^{ \calO\inparen{\eps_1^{-6}\log(1/\eps_1)}}$.

For list decodability, we use \cref{thm:list_decoding_ael} to claim that the above code can be list decoded from $\calJ(1-\rho-3\eps_1)-\eps_2$ in time $n^{\calO_{q_0,d,\delta_{dec}}(1/\eps_2^4)} = n^{\calO_{\eps_1}(1/\eps_2^4)}$.

Replace $\eps_1$ by $\eps_1/3$ to get the final result.
\end{proof}

Note that we can also deal with $\calC_1$ that can be decoded from smaller radius like $\calO(\eps_1^3)$, by suitably adjusting $\lambda$ and paying the cost in alphabet size. Above, we have not chosen parameters optimally to keep the exposition simple. The alphabet size in the code constructed in \cite{GI05} was smaller than what we ask for here, but alphabet size can always be increased while preserving rate, distance and unique decoding radius by folding multiple symbols together into one. This looks like multiple symbols of the outer code being assigned to the same left node in the AEL construction.

\subsection{Decoding from fractional vectors}\label{sec:decoding_from_fractional}
This section has auxiliary claims needed to finish the list decoding algorithm proof. In the Tanner code case, we reduce to unique decoding of the same code from an arbitrarily small radius. In AEL, we reduce to unique decoding of the base (outer) code. In both these cases, we have some $\eps$ slack, that is sufficient for randomized rounding to produce a (corrupted) word within the unique decoding radius.
\begin{lemma}\label{lem:decoding_from_distributions}
	Let $\calC$ be an $[n,\delta,\rho]_q$ code, which is unique decodable from distance $\delta_{dec}\leq \delta/2$ in time $\calT(n)$. Given a collection of distributions $\calD = \inbraces{\calD_i}_{i\in [n]}$, each of them supported on $[q]$, there is a unique codeword $h$ that satisfies
	\[
		\Ex{i}{\Ex{j\sim \calD_i}{\indi{h_i \neq j}}} \le \delta_{dec}-\eps
	\]
	This codeword $h$ can be found in time $\calO(qn)+\calT(n)$ with probability at least $\eps/\delta_{dec}$.
\end{lemma}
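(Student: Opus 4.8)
The plan is to produce a candidate received word by independent randomized rounding and then feed it to the unique decoder. Concretely, form $g' \in [q]^n$ by independently sampling $g'_i \sim \calD_i$ for each $i \in [n]$. For any codeword $h$ satisfying the hypothesis $\Ex{i}{\Ex{j\sim\calD_i}{\indi{h_i\neq j}}} \le \delta_{dec}-\eps$, linearity of expectation gives $\Ex{g'}{\dis(g',h)} = \Ex{i}{\Pr{j\sim\calD_i}{h_i \neq j}} \le \delta_{dec} - \eps$. By Markov's inequality applied to the nonnegative random variable $\dis(g',h)$, we get $\Pr{g'}{\dis(g',h) \ge \delta_{dec}} \le \frac{\delta_{dec}-\eps}{\delta_{dec}} = 1 - \frac{\eps}{\delta_{dec}}$, so with probability at least $\eps/\delta_{dec}$ we have $\dis(g',h) < \delta_{dec}$. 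On this event, running the unique decoder of $\calC$ on input $g'$ — which takes time $\calT(n)$ — returns $h$. Sampling $g'$ takes $\calO(qn)$ time (reading off one symbol from each of the $n$ distributions over alphabet $[q]$), giving the claimed total running time $\calO(qn) + \calT(n)$ and success probability $\eps/\delta_{dec}$.

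For the uniqueness claim: if $h_1 \neq h_2$ were two distinct codewords both satisfying the hypothesis, then by the same expectation computation, applied to a single rounded word $g'$, we would have $\Ex{g'}{\dis(g',h_1)} + \Ex{g'}{\dis(g',h_2)} \le 2(\delta_{dec}-\eps) < 2\delta_{dec} \le \delta$, so there exists a particular outcome $g'$ with $\dis(g',h_1) + \dis(g',h_2) < \delta$; by the triangle inequality $\dis(h_1,h_2) < \delta$, contradicting the minimum distance of $\calC$. (Alternatively, one can bypass rounding entirely for this part: the quantity $\Ex{i}{\Ex{j\sim\calD_i}{\indi{h_i\neq j}}}$ is an average over the $\calD_i$ of honest Hamming distances $\dis(h, f)$ where $f$ ranges over the support, and if two codewords each have this average below $\delta_{dec} \le \delta/2$ then some common reference point is within $\delta/2$ of both, forcing them within distance $\delta$.)

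I expect there to be essentially no serious obstacle here — the statement is a routine randomized-rounding reduction, and every ingredient (Markov's inequality, linearity of expectation, the unique decoder as a black box, the triangle-inequality argument for uniqueness) is elementary. The only point requiring a little care is making sure the slack $\eps$ is used correctly: it is exactly what converts the ``in expectation'' guarantee into a ``with probability $\eps/\delta_{dec}$'' guarantee via Markov, which is why the hypothesis is stated with $\delta_{dec}-\eps$ rather than $\delta_{dec}$. When this lemma is invoked in the proofs of \cref{thm:tanner-decoding} and \cref{thm:list_decoding_ael}, the collection $\inbraces{\calD_i}$ will be the local distributions of the conditioned pseudocodeword $\dupPE{\cdot}$ (so that $\Ex{i}{\Ex{j\sim\calD_i}{\indi{h_i\neq j}}} = \dis(\dupPE{\cdot},h)$), and the slack $\eps$ there is the $\Omega(\eps_2)$ distance gap established in those proofs; the derandomized version, enumerating threshold roundings instead of sampling, is handled separately in \cref{lem:derandomized_decoding_from_distributions}.
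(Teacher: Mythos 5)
Your proposal is correct and matches the paper's proof essentially step for step: the same independent randomized rounding followed by Markov's inequality for the algorithmic part, and the same triangle-inequality averaging argument for uniqueness (the paper phrases it via the pointwise bound $\indi{h_i\neq h'_i}\le \indi{h_i\neq j}+\indi{h'_i\neq j}$ averaged over $j\sim\calD_i$, which is just the deterministic form of your probabilistic-method version). The only nitpick is the parenthetical alternative for uniqueness — "some common reference point is within $\delta/2$ of both" does not follow from the averages each being below $\delta/2$; what follows is that some reference point has the \emph{sum} of distances below $\delta$, which is all you need — but your primary argument is fine.
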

\begin{proof}
First, we show uniqueness of $h$. Let $h, h'$ be two codewords in $\calC$ such that 
\[		
\Ex{i}{\Ex{j\sim \calD_i}{\indi{h_i \neq j}}} ~\le~ \delta_{dec}-\eps 
\qquad \text{and} \qquad 
\Ex{i}{\Ex{j\sim \calD_i}{\indi{h'_i \neq j}}} ~\le~ \delta_{dec}-\eps
\]
For any $j\in [q]$, we have $\indi{h_i\neq h'_i} \leq \indi{h_i\neq j}+\indi{h'_i\neq j}$. Therefore,
\begin{align*}
\dis(h,h') ~=~ \Ex{i}{\indi{h_i\neq h'_i}} &~\leq~ \Ex{i}{\Ex{j\sim \calD_i}{\indi{h_i\neq j}+\indi{h'_i\neq j}}} \\
		&~=~ \Ex{i}{\Ex{j\sim \calD_i}{\indi{h_i\neq j}}}+\Ex{i}{\Ex{j\in \calD_i}{\indi{h'_i\neq j}}} 
		~\leq~ 2\delta_{dec}-2\eps < \delta \mper
\end{align*}
By distance property of the code, this means that $\dis(h,h') = 0$, or that $h=h'$.

The algorithm to find $h$ is to independently sample from every distribution to get a random $g \in [q]^n$, and then issue a unique decoding call from $g$. We show that with significant probability, $h$ lies in the $\delta_{dec}$ radius ball around $g$, which will show that algorithm succeeds with that probability.
\begin{align*}
\Ex{g}{\dis(g,h)} ~=~ \Ex{g}{\Ex{i}{\indi{g_i\neq h_i}}} 
		~=~ \Ex{i}{\Ex{g}{\indi{g_i\neq h_i}}} 
		~=~ \Ex{i}{\Ex{g_i \sim \calD_i}{\indi{g_i\neq h_i}}} 
		~\leq~ \delta_{dec}-\eps \mper
	\end{align*}
Thus, by Markov's inequality, we have, $\Pr{g}{\dis(g,h) \leq \delta_{dec}} \geq
\frac{\eps}{\delta_{dec}}$, which proves the claim.
\end{proof}
\begin{remark}
		The success probability in the \cref{lem:decoding_from_distributions} can be amplified by repeated sampling. Moreover, the fact that we reduce to the unique decoding algorithm of $\calC$ is not important, as it is also possible to use a list decoding algorithm for $\calC$ from distance $\delta_{dec}$ to find $h$, as long as the sampled $g$ satisfies $\delta(g,h) < \delta_{dec}$. In that case, we output a random element of list obtained, which incurs an additional loss of $1/L$ factor, where $L$ is the list size guaranteed by list decoding algorithm for $\calC$ up to $\delta_{dec}$.
	\end{remark}
	
	\begin{remark}
		As shown in \cref{lem:derandomized_decoding_from_distributions}, this argument can be derandomized using threshold rounding. The use of \cref{lem:decoding_from_distributions} in the next two lemmas can therefore be replaced by \cref{lem:derandomized_decoding_from_distributions}.
	\end{remark}
	
	Next, we use pseudocodewords that lie in the unique decoding ball to construct the collection of distributions needed by \cref{lem:decoding_from_distributions}, for the Tanner and AEL cases.
	\begin{lemma}[Unique decoding from Tanner pseudocodewords]\label{lem:unique_decoding}
		Let $\TanC$ be a code as in \cref{sec:list_decoding_tanner}, with distance $\delta = \delta_0(\delta_0-\lambda)$ and $\lambda\leq \delta_0/3$, and in particular it can be unique decoded from radius $\delta/4$ in time $\calO(|E|)$. Given a Tanner pseudocodeword $\tildeEx{\cdot}$ such that $\dis(\tildeEx{\cdot},h) <\delta/5$, we can find $h$ in time $\calO(|E|)$ with probability at least $1/5$.
	\end{lemma}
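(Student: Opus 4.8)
Given a Tanner pseudocodeword $\tildeEx{\cdot}$ with $\dis(\tildeEx{\cdot},h) < \delta/5$, recover $h$ in time $\calO(|E|)$ with probability $\geq 1/5$.

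My plan: this is exactly the situation handled by Lemma~\ref{lem:decoding_from_distributions} once I build the right collection of single-coordinate distributions out of the pseudocodeword. The pseudoexpectation operator $\tildeEx{\cdot}$ induces, for each edge $e\in E$, a local distribution $\calD_e$ over $[q]$ given by $\calD_e(j) = \tildeEx{\zee_{e,j}} = \tildeEx{\indi{\zee_e = j}}$; these are genuine probability distributions since the values $\{\tildeEx{\zee_{e,j}}\}_j$ are nonnegative and sum to $1$ (this is recalled in the preliminaries on the local-distribution view of SoS). So I take $\calD = \{\calD_e\}_{e\in E}$, viewing $E$ as the coordinate set $[m]$ of the code $\TanC$.

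The key identity is that the expected disagreement of this distribution collection with $h$ is exactly $\dis(\tildeEx{\cdot},h)$:
\[
\Ex{e\in E}{\Ex{j\sim \calD_e}{\indi{h_e \neq j}}} ~=~ \Ex{e\in E}{\sum_{j\in[q]} \calD_e(j)\cdot \indi{h_e\neq j}} ~=~ \Ex{e\in E}{\tildeEx{\indi{\zee_e\neq h_e}}} ~=~ \dis(\tildeEx{\cdot},h) ~<~ \delta/5.
\]
Now I invoke Lemma~\ref{lem:decoding_from_distributions} with the code $\calC = \TanC$, which by the \Zemor analysis (and the cited linear-time decoders of \cite{Zemor01, SkaRoth03}) can be unique decoded from radius $\delta_{dec} = \delta/4$ in time $\calO(|E|)$. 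Setting $\eps = \delta/4 - \delta/5 = \delta/20 > 0$, the hypothesis $\dis(\tildeEx{\cdot},h) < \delta_{dec} - \eps$ is met, so Lemma~\ref{lem:decoding_from_distributions} gives that $h$ is the \emph{unique} codeword within this fractional radius, and that sampling $g_e \sim \calD_e$ independently for each $e$ and running the unique decoder from $g$ recovers $h$ with probability at least $\eps/\delta_{dec} = (\delta/20)/(\delta/4) = 1/5$. The running time is $\calO(q|E|) + \calO(|E|) = \calO(|E|)$ treating $q$ as constant.

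There is no real obstacle here; the only points to be careful about are (i) confirming that $\tildeEx{\cdot}$ being a Tanner pseudocodeword of SoS-degree $\geq 2d$ (in particular degree $\geq 2$) suffices for the single-edge marginals $\calD_e$ to be well-defined distributions, which it does; and (ii) matching the unique-decoding radius — I use $\delta/4$ rather than the full $\delta/2$ to leave the slack $\eps$ needed by the randomized rounding in Lemma~\ref{lem:decoding_from_distributions}, and any radius strictly between $\delta/5$ and $\delta/2$ would work with a correspondingly different success probability. If one prefers a deterministic statement, the remark following Lemma~\ref{lem:decoding_from_distributions} (threshold rounding, formalized in Lemma~\ref{lem:derandomized_decoding_from_distributions}) replaces the sampling step and removes the $1/5$ loss.
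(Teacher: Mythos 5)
Your proof is correct and follows exactly the paper's argument: extract the single-edge marginal distributions $\calD_e(j) = \tildeEx{\zee_{e,j}}$, observe that the expected disagreement with $h$ equals $\dis(\tildeEx{\cdot},h) < \delta/5$, and invoke \cref{lem:decoding_from_distributions} with $\delta_{dec} = \delta/4$ and $\eps = \delta/20$ to get success probability $1/5$. Nothing is missing.
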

	\begin{proof}
		The pseudocodeword gives a collection of distributions $\calD = \inbraces{\calD_e}_{e\in E}$, each distribution over $[q]$. The $e^{th}$ distribution $\calD_e$ gives a weight of $\tildeEx{\indi{\zee_e=j}}$ to the value $j\in [q]$.
		
		The distance $\dis(\tildeEx{\cdot},h)$ translates to
		\begin{align*}
			\dis(\tildeEx{\cdot},h) &= \Ex{e}{\tildeEx{\indi{\zee_e \neq h_e}}} \\
			&= \Ex{e}{\tildeEx{ \sum_{j\in [q]}  \indi{\zee_e =j }\indi{h_e \neq j}}} \\
			&=\Ex{e}{ \sum_{j\in [q]} \indi{h_e \neq j} \tildeEx{  \indi{\zee_e =j }}} \\
			&=\Ex{e}{ \Ex{j\sim \calD_e}{ \indi{h_e \neq j}}} \\
		\end{align*}
		We can therefore use this collection of distributions to obtain a codeword $h\in \TanC$ via \cref{lem:decoding_from_distributions} with $\delta_{dec} = \delta/4$ and $\eps = \delta/20$ such that $\dis(\tildeEx{\cdot},h) <\delta/5$.
	\end{proof}

	Next, we use similar ideas to round and decode from AEL pseudocodewords. We borrow the terminology used for AEL codes from \cref{sec:list_decoding_ael}.
	\begin{lemma}[Unique decoding from AEL pseudocodewords]\label{lem:unique_decoding_ael}
		Let $\AELC$ be a code as in \cref{sec:list_decoding_ael}, with $\lambda\leq \kappa\cdot \delta_{dec}$  and distance at least $\delta_0 - \kappa$. Assume that the outer code $\calC_1$ can be unique decoded from radius $\delta_{dec}$ in time $\calT(n)$. Given an AEL pseudocodeword $\tildeEx{\cdot}$ and $h \in \AELC$ such that $\dis^L(\tildeEx{\cdot},h) \leq \delta_{dec}-\eps$, we can find $h$ in time $\calO(n) + \calT(n)$ with probability at least $\eps/\delta_{dec}$.
	\end{lemma}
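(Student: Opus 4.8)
The plan is to read off from the AEL pseudocodeword a collection of distributions over the outer alphabet $[q_1]$, one for each left vertex $\li \in L$, and then feed these to \cref{lem:decoding_from_distributions} applied to the outer code $\calC_1$. Recall that an AEL pseudocodeword $\tildeEx{\cdot}$ of SoS-degree $t \geq 2d$ respects the constraints $\zee_{N_L(\li)} \in \calC_0$ for all $\li \in L$. Since $\abs{N_L(\li)} = d \leq t/2$, the local-distribution view of SoS solutions from \cref{chap:prelims} says that $\inbraces{\tildeEx{\indi{\zee_{N_L(\li)} = c}}}_{c \in [q_0]^{N_L(\li)}}$ is a genuine probability distribution, and the respected $\calC_0$-constraints force it to be supported on $\calC_0$. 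Using the bijection $\calC_0 \leftrightarrow [q_1]$, this yields a distribution $\calD_{\li}$ over $[q_1]$, where $\calD_{\li}$ assigns to a symbol $j \in [q_1]$ the mass $\tildeEx{\indi{\zee_{N_L(\li)} = \calC_0(j)}}$.

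Next I would translate the hypothesis $\dis^L(\tildeEx{\cdot},h) \leq \delta_{dec} - \eps$ into a statement about these distributions. Write $h = f^{AEL}_{\calC_0}$ for the (unique) outer codeword $f \in \calC_1$ generating $h$, so that $h_{N_L(\li)} = \calC_0(f(\li))$ for every $\li$. Expanding the indicator over the $q_1$ codewords of $\calC_0$,
\[
\dis^L(\tildeEx{\cdot},h) = \Ex{\li \in L}{\tildeEx{\indi{\zee_{N_L(\li)} \neq h_{N_L(\li)}}}} = \Ex{\li \in L}{\sum_{j \neq f(\li)} \tildeEx{\indi{\zee_{N_L(\li)} = \calC_0(j)}}} = \Ex{\li}{\Ex{j \sim \calD_{\li}}{\indi{f(\li) \neq j}}} \leq \delta_{dec} - \eps .
\]
Hence the collection $\calD = \inbraces{\calD_{\li}}_{\li \in L}$, the code $\calC_1$ (unique decodable from radius $\delta_{dec} \leq \delta_1/2$ by hypothesis), and the slack $\eps$ satisfy exactly the hypotheses of \cref{lem:decoding_from_distributions}.

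Invoking \cref{lem:decoding_from_distributions} then gives that $f$ is the unique outer codeword satisfying the displayed inequality, and that $f$ can be recovered — by sampling $g' \in [q_1]^L$ from $\prod_{\li} \calD_{\li}$ and calling the unique decoder of $\calC_1$ — in time $\calO(q_1 n) + \calT(n)$ with probability at least $\eps/\delta_{dec}$. Re-encoding $f$ through the AEL map (a redistribution of symbols along the $d$-regular expander) recovers $h$ in an additional $\calO(nd)$ time. Treating $d$ and $q_1$ (which depend only on $d$ and $q_0$) as constants, the total running time is $\calO(n) + \calT(n)$, as claimed; and, as in the remark following \cref{lem:decoding_from_distributions}, the sampling step can be made deterministic via threshold rounding (\cref{lem:derandomized_decoding_from_distributions}), which is the form actually used inside \cref{thm:list_decoding_ael}.

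The only delicate point — more bookkeeping than genuine obstacle — is the first step: justifying that the vertex-local marginals of $\tildeEx{\cdot}$ really are honest distributions supported on $\calC_0$, and hence that the translation to $\calD_{\li}$ over $[q_1]$ is legitimate. This is precisely where the AEL-pseudocodeword constraints and the degree bound $t \geq 2d$ are used, via the local-distribution formalism of the preliminaries. Once that identification is in place, the statement is an immediate consequence of \cref{lem:decoding_from_distributions}.
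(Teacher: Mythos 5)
Your proposal is correct and follows essentially the same route as the paper: build the left-vertex local distributions $\calD_{\li}$ over $[q_1]$ from the pseudocodeword, rewrite $\dis^L(\tildeEx{\cdot},h)$ as the expected disagreement of the corresponding outer codeword $\overline{h}$ with $\calD$, and invoke \cref{lem:decoding_from_distributions} for $\calC_1$. The extra care you take in justifying that the local marginals are honest distributions supported on $\calC_0$ is left implicit in the paper (via the local-distribution view of SoS in the preliminaries) but is entirely consistent with it.
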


	\begin{proof}
		First, we use the given pseudocodewords to build a collection of distributions $\calD = \inbraces{\calD_{\li}}_{\li \in L}$, each distribution over $[q_1]$. Recall that $\calC_0$ can be seen as a map from $[q_1]$ to $[q_0]^d$. The ${\li}^{th}$ distribution gives a weight of $\tildeEx{\indi{\zee_{N_L(\li)} = \calC_0(\alpha)}}$ to $\alpha \in [q_1]$. 
		
		Let $\overline{h}$ be the codeword in $\calC_1$ corresponding to the codeword $h$. That is, $\overline{h}$ is such that $\calC_0(\overline{h}_{\li}) = h_{N_L(\li)}$. With the collection of distributions $\calD$ defined, we relate $\dis^L(\tildeEx{\cdot},h)$ to agreement of $\overline{h}$ with $\calD$.
		\begin{align*}
			\dis^L(\tildeEx{\cdot},h) &= \Ex{\li}{\tildeEx{\indi{\zee_{N_L(\li)} \neq h_{N_L(\li)}}}} \\
			&= \Ex{\li}{\tildeEx{ \sum_{\alpha\in [q_1]} \indi{\zee_{N_L(\li)} = \calC_0(\alpha)} \indi{h_{N_L(\li)} \neq \calC(\alpha)} }} \\
			&=\Ex{\li}{\sum_{\alpha\in [q_1]} \indi{h_{N_L(\li)} \neq \calC_0(\alpha)} \tildeEx{ \indi{\zee_{N_L(\li)} = \calC_0(\alpha)}  }} \\
			&=\Ex{\li}{ \Ex{\alpha\sim \calD_{\li}}{ \indi{h_{N_L(\li)} \neq \calC_0(\alpha)}}} \\
			&=\Ex{\li}{ \Ex{\alpha\sim \calD_{\li}}{ \indi{\overline{h}_{\li} \neq \alpha}}}
		\end{align*}
		
		We call \cref{lem:decoding_from_distributions} for the code $\calC_1$ with the collection of distributions $\calD$ and find $\overline{h}$, and therefore $h$, with probability at least $\eps/\delta_{dec}$.
	\end{proof}
	
	To end this section, we note that the rounding from fractional vectors above can be derandomized through a standard method known as threshold rounding. Similar ideas are used to derandomize the classical Generalized Minimum Distance decoding for concatenated codes.
	
	\begin{lemma}\label{lem:derandomized_decoding_from_distributions}
	Let $\calC$ be an $[n,\delta,\rho]_q$ code, which is unique decodable from distance $\delta_{dec}\leq \delta/2$ in time $\calT(n)$. Given a collection of distributions $\calD = \inbraces{\calD_i}_{i\in [n]}$, each of them supported on $[q]$ described as a collection $q$ weights that sum to 1, there is a unique codeword $h$ that satisfies
	\[
		\Ex{i}{\Ex{j\sim \calD_i}{\indi{h_i \neq j}}} \le \delta_{dec}
	\]
	This codeword $h$ can be found in time $\calO(qn)\cdot\calT(n)$ with a deterministic algorithm.
	
	\begin{proof}
		The uniqueness of $h$ is as in proof of \cref{lem:decoding_from_distributions}.
		
		Let the weight on $j \in [q]$ according to $\calD_i$ be $w_{ij}$, so that $\sum_{j\in [q]} w_{ij} = 1$. We replace the randomized rounding of \cref{lem:decoding_from_distributions} by the following process:
		\begin{enumerate}[(i)]
			\item Define the cumulative sums $c_{ij} = \sum_{j' \leq j}{ w_{ij'}}$, so that $c_{iq} = 1$. Define $c_{i0} = 0$.
			\item For each $i\in [n]$, embed $\calD_i$ into the interval $[0,1]$ as $q+1$ points $(c_{i0} = 0, c_{i1}, c_{i2},\cdots , c_{iq} = 1)$.
			\item Choose $\theta \in [0,1]$ uniformly at random.
			\item Build $h'\in [q]^n$ coordinate-wise as follows. For $i\in [n]$, if 
			\[	
				\theta\in \left[c_{i(j-1)}, c_{ij}\right),
			\]
			then $h'_i = j$. This ensures that $\Pr{\theta}{h'_i = j} = w_{ij}$.
		\end{enumerate}
		We show that $h'$ has the same distance from $h$ in expectation as the collection of distributions $\inbraces{\calD_i}_{i\in [n]}$.
		\begin{align*}
			\Ex{\theta \in [0,1]}{\Delta(h',h)} &= \Ex{\theta \in [0,1]}{\Ex{i}{\indi{h'_i \neq h_i}}} \\
			&= \Ex{\theta \in [0,1]}{\Ex{i}{ \indi{\theta \not\in [c_{i(h_i-1)}, c_{ih_i}) }}} \\
			&= \Ex{i}{ \Ex{\theta \in [0,1]}{\indi{\theta \not\in [c_{i(h_i-1)}, c_{ih_i}) }}} \\
			&= \Ex{i}{1 - w_{ih_i}} \\
			&= \Ex{i}{\Ex{j\sim \calD_i}{\indi{h_i \neq j}}} \leq \delta_{dec}
		\end{align*}
		Therefore, rounding according to a random threshold $\theta$ produced an $h'$ that is at most $\delta_{dec}$ distance away from $h$. As the final step to derandomization, note that two thresholds $\theta_1,\theta_2$ produce the exact same $h'$ if there is no point from step (ii) above embedded between $\theta_1$ and $\theta_2$. Total number of points embedded is at most $q\cdot n$, and so it suffices to try at most $\calO(q\cdot n)$ many thresholds to produce all the different $h'$ possible - one of which must be $\delta_{dec}$-close to $h$.
	\end{proof}
\end{lemma}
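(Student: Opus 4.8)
The plan is to replace the independent randomized rounding used in \cref{lem:decoding_from_distributions} by a single one-parameter ``staircase'' (threshold) rounding, and then to exploit the fact that a one-dimensional threshold produces only $\calO(qn)$ distinct candidate words, all of which can be handed to the unique decoder of $\calC$ in turn.

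First I would dispose of uniqueness exactly as in \cref{lem:decoding_from_distributions}: if $h$ and $h'$ are codewords both agreeing with $\calD$ on more than a $1-\delta_{dec}$ fraction in expectation, then from $\indi{h_i\neq h'_i}\le \indi{h_i\neq j}+\indi{h'_i\neq j}$, averaging over $j\sim\calD_i$ and then over $i$ gives $\dis(h,h')\le 2\delta_{dec}\le\delta$, which forces $h=h'$ by the minimum distance of $\calC$. (As in the randomized statement, $\delta_{dec}$ should be read as a decoding radius strictly below $\delta/2$ if one wants genuine uniqueness at the boundary.)

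Next I would set up the rounding. Write $w_{ij}$ for the weight $\calD_i$ assigns to $j\in[q]$, and let $c_{i0}=0$ and $c_{ij}=\sum_{j'\le j}w_{ij'}$, so $c_{iq}=1$; for each coordinate $i$ these partition $[0,1)$ into $q$ intervals. For $\theta\in[0,1)$ define $h^{\theta}\in[q]^n$ by setting $h^{\theta}_i=j$ whenever $\theta\in[c_{i,j-1},c_{ij})$. If $\theta$ is uniform on $[0,1)$ then $\Pr{\theta}{h^{\theta}_i=j}=w_{ij}$, so
\[
\Ex{\theta}{\dis(h^{\theta},h)} ~=~ \Ex{i}{1-w_{i,h_i}} ~=~ \Ex{i}{\Ex{j\sim\calD_i}{\indi{h_i\neq j}}} ~\le~ \delta_{dec},
\]
hence some $\theta$ achieves $\dis(h^{\theta},h)\le\delta_{dec}$, and for that $\theta$ the unique-decoding algorithm of $\calC$ run on the received word $h^{\theta}$ returns $h$.

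Finally I would derandomize the choice of $\theta$: the word $h^{\theta}$ changes only when $\theta$ sweeps past one of the at most $qn$ breakpoints $\{c_{ij}\}$, so it takes at most $\calO(qn)$ distinct values over $[0,1)$. The algorithm sorts the breakpoints, forms these $\calO(qn)$ candidate received words, runs the unique decoder on each (time $\calT(n)$ apiece), and outputs any returned codeword whose expected agreement with $\calD$ is at least $1-\delta_{dec}$; one such codeword exists by the display above, and by uniqueness it is the only one, so $h$ is recovered in total time $\calO(qn)\cdot\calT(n)$. I do not expect a real obstacle here: the proof is essentially a standard derandomization, and the only points needing a little care are the counting argument that threshold rounding has only polynomially many outputs (so exhaustive enumeration is efficient) and the explicit verification step used to select the correct codeword from among the decoder's outputs, together with the borderline case $\delta_{dec}=\delta/2$ in the uniqueness claim.
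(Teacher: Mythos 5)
Your proposal is correct and follows essentially the same route as the paper: the identical cumulative-sum/threshold rounding, the same computation showing the expected distance of the rounded word equals $\Ex{i}{\Ex{j\sim\calD_i}{\indi{h_i\neq j}}}\le\delta_{dec}$, and the same observation that only $\calO(qn)$ breakpoints yield $\calO(qn)$ distinct candidate words to feed to the unique decoder. The small extras you add (explicit verification of the returned codeword and the remark about the borderline case $\delta_{dec}=\delta/2$) are reasonable but do not change the argument.
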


%
\section{List Decoding Concatenated Codes}
\label{sec:concat}

In this section, we will adapt the techniques developed earlier to decode concatenated codes. Concatenation is a useful operation to obtain codes with smaller alphabet size starting from a code over large alphabet. Previous works on list decoding of concatenated codes \cite{GS00, GR08, GuruswamiS02} seem to all rely on list recovery of outer code, with intricate weights to be passed along with inner codewords. We will only use list decodability of outer code.

First, we show that the covering lemma based argument can be used to decode the concatenated code up to the Johnson radius corresponding to product of decoding radius of outer code and the distance of inner code. That is, if the distance of inner code is $\delta_0$, distance of outer code is $\delta_1$, and the decoding radius of outer code is $\delta_{dec}$, we can decode the concatenated code up to radius $\calJ(\delta_{dec} \cdot \delta_0)$. 

Moreover, since concatenated codes do not involve expansion for their distance proof, we do not need to deal with SoS-based pseudocodewords or any low-covariance conditions. In fact, our pseudocodewords will just be local distributions over the inner code for each coordinate of the outer code. Since this set of pseudocodewords can be described as the feasible set corresponding to linear constraints over $\calO(n)$ variables, we can minimize the appropriate norm (which is a convex function) via Ellipsoid method to get a pseudocodeword with the covering property in time $n^{\calO(1)}$.

Note that this is weaker than decoding up to $\calJ(\delta_1\cdot\delta_0)$, which is the Johnson radius corresponding to the true distance of the concatenated code. In \cref{sec::concat_outer_ael}, we will see that by using the decoder of outer code in a white box way, we can get to the Johnson bound $\calJ(\delta_1\cdot\delta_0)$ when the outer code supports list decoding through our Covering Lemma based machinery, like in the case of near-MDS codes of \cref{thm:near_mds_main}.

\subsection{List decoding arbitrary concatenated codes}

Let the $[n,\delta_1,\rho_1]_{q_1}$ outer code be $\calC_1$  and the $[d,\delta_0,\rho_0]_{q_0}$ inner code be $\calC_0$, with $q_1 = q_0^{\rho_0 \cdot d}$. Let the concatenated code be $\calC^*$ with distance at least $\delta = \delta_0\cdot \delta_1$. A codeword $h^*=\calC_0(h)$ of the concatenated code can be seen as a tuple $h^* = (h^*_1,h^*_2,\cdots,h^*_n)$ where each $h^*_i\in \calC_0$, or $h^*_i = \calC_0(h_i)$ for some $h_i\in [q_1]$. Note that not all tuples of this form are codewords, and $(\calC_0(f_1),\calC_0(f_2),\cdots,\calC_0(f_n)) \in \calC^*$ iff $(f_1,f_2,\cdots,f_n)\in \calC_1$. 
	
\begin{definition}
	A pseudocodeword of the concatenated code is a psuedoexpectation operator $\tildeEx{\cdot}$ of degree $d$ over the variables $\zee = \inbraces{Z_{i,j,k}}_{i\in [n],j\in [d],k\in[q_0]}$ that respects the following $d$-local constraints:
	\begin{enumerate}[(i)]
		\item $Z_{i,j,k}^2 ~=~ Z_{i,j,k}$
		\item For every $i\in[n],j\in [d]$, $\sum_{k\in[q_0]} Z_{i,j,k} =1$
		\item	 $\forall i\in [n], \quad (\zee_{i,1},\zee_{i,2},\cdots,\zee_{i,d}) \in \calC_0$.
	\end{enumerate}
\end{definition}

We no longer enforce the constraint for non-negativity of squares of polynomials, and in fact, our pseudoexpectation operators are just a collection of $n$ distributions $\inbraces{\calD_i}_{i\in [n]}$ over $[q_1]$. The weight assigned to $f\in [q_1]$ in distribution $\calD_i$ is $\tildeEx{\indi{\zee_i = \calC_0(f)}}$.

Following is the natural generalization of distances to pseudocodewords.

\begin{definition}[Distance from a pseudocodeword]
The distance of a pseudocodeword $\tildeEx{\cdot}$ and a codeword $h^*$ is defined as
	\[
		\dis(\tildeEx{\cdot},h^*) = \Ex{i\in [n]}{\tildeEx{\dis(\zee_i,h^*_i)}}
	\]
\end{definition}

\begin{lemma}\label{lem:decode_from_pseudo}
	Assume that $\calC_1$ can be list-decoded from radius $\delta_{dec}$ in time $\calT(n)$ with list size $L$.
	
	For any $\eps>0$, there is a deterministic algorithm that given a pseudocodeword $\tildeEx{\cdot}$ outputs the list 
	\[
		\calL(\tildeEx{\cdot},\delta_0\cdot \delta_{dec}) := \inbraces{h^* \in \calC^* \suchthat \dis(h^*,\tildeEx{\cdot}) < \delta_0 \cdot \delta_{dec}}
	\]
	and runs in time $2^{\calO(d)} n^{\calO(1)}+\calO(2^d \cdot n)\cdot \calT(n)$.
\end{lemma}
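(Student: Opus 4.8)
\textbf{Proof proposal for \cref{lem:decode_from_pseudo}.}

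The plan is to reduce to the abstract covering lemma (\cref{lem:covering}), exactly as in the Tanner/AEL decoders, but with a much simpler relaxation polytope since concatenated codes do not rely on expansion for their distance. First I would set up the embedding: for each outer coordinate $i \in [n]$ the pseudocodeword $\tildeEx{\cdot}$ supplies a distribution $\calD_i$ over $[q_1]$, hence via $\calC_0$ a distribution over the $d$ inner symbols; applying $\chi_{q_0}$ symbol-wise and concatenating over $i$ and over the $d$ inner positions gives a vector $\tildeEx{\chi(\zee)} \in \R^{(q_0-1)nd}$. For a true codeword $h^* = \calC_0(h)$ with $h \in \calC_1$, the analogous embedding $\chi(h^*)$ is a genuine unit vector, and the standard identity gives $\ip{\chi(g^*)}{\chi(h^*)} = 1 - \frac{q_0}{q_0-1}\dis(g^*,h^*)$ for any received word $g^*$ (and more generally $\ip{\tildeEx{\chi(\zee)}}{\chi(h^*)}$ is an affine function of $\dis(\tildeEx{\cdot},h^*)$). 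So membership in the list $\calL(\tildeEx{\cdot},\delta_0\delta_{dec})$ translates to a lower bound on this inner product.

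Next I would run the algorithmic covering argument. Since the set of pseudocodewords here is just the feasible region of the linear constraints (i),(ii),(iii) above over $\calO(nd q_0)$ variables, minimizing the convex function $\Psi(\tildeEx{\cdot}) = \norm{\tildeEx{\chi(\zee)}}^2$ subject to $\ip{\tildeEx{\chi(\zee)}}{\chi(g^*)} > \gamma$ can be done via the ellipsoid method in time $n^{\calO(1)}$ (the dependence on $d$ and $q_0$ is through the dimension, giving a $2^{\calO(d)}$ or $\poly(q_0^d)$ factor). Replaying the proof of \cref{lem:abstract_algorithmic_covering} verbatim — take a convex combination with $\PExp^{(h^*)}[\cdot]$ and optimize the mixing parameter $\tee$ — shows the minimizer $\tildeEx{\cdot}$ satisfies $\ip{\tildeEx{\chi(\zee)}}{\chi(h^*)} > \gamma^2$ for every $h^*$ in the list, where I would choose $\gamma$ so that $\gamma^2$ corresponds to distance $\delta = \delta_0\delta_1$ (using that the concatenated code has distance at least $\delta_0\delta_1$, distinct codewords have embedded inner product at most $1 - \frac{q_0}{q_0-1}\delta_0\delta_1$). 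Here the slack $\eps$ is absorbed: starting the covering lemma with $\gamma$ slightly above $\sqrt{\beta}$ (where $\beta$ is the Johnson parameter for distance $\delta_0 \cdot \delta_{dec}$, not $\delta_0\delta_1$) ensures the conclusion is strict.

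Then the crucial point, and where the concatenated case diverges from Tanner/AEL: instead of a spectral/$\eta$-good distance proof, I would directly convert $\tildeEx{\cdot}$ into a single output codeword by a coordinate-wise GMD-style argument. For each outer coordinate $i$, the inner distribution $\calD_i$ together with a threshold (or a random draw) gives a weight $w_{i,\alpha}$ for each $\alpha \in [q_1]$ — concretely $w_{i,\alpha} = \tildeEx{\indi{\zee_i = \calC_0(\alpha)}}$ after soft-decoding each inner distribution; by \cref{lem:decoding_from_distributions} / \cref{lem:derandomized_decoding_from_distributions} applied to $\calC_1$ from radius $\delta_{dec}$, any outer codeword $h$ with $\Ex{i}{\Ex{\alpha\sim\calD_i}{\indi{h_i \neq \alpha}}} \le \delta_{dec}$ is recovered, and there is a subtlety: I actually need the inner distances $\dis(\tildeEx{\cdot},h^*) < \delta_0 \delta_{dec}$ to force $\Ex{i}{\Ex{\alpha\sim\calD_i}{\indi{\overline{h}_i\neq\alpha}}} < \delta_{dec}$, which follows because $\indi{\calC_0(\overline h_i) \neq \calC_0(\alpha)} \ge \delta_0 \indi{\overline h_i \neq \alpha}$ by the inner distance, so $\dis(\tildeEx{\cdot},h^*) = \Ex{i}{\tildeEx{\dis(\zee_i, \calC_0(\overline h_i))}} \ge \delta_0 \Ex{i}{\Ex{\alpha\sim\calD_i}{\indi{\overline h_i \neq \alpha}}}$. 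Running the (list-)decoder of $\calC_1$ on the soft-decoded weights then returns a list of size $L$ containing every $\overline h$ arising from the cover; pushing each back through $\calC_0$, filtering by the distance condition $\dis(h^*,\tildeEx{\cdot}) < \delta_0\delta_{dec}$, and de-duplicating gives $\calL(\tildeEx{\cdot},\delta_0\delta_{dec})$. Each inner soft-decoding is brute force over $\calC_0$ in time $2^{\calO(d)}$, done $n$ times, and the threshold derandomization adds an $\calO(2^d n)$ multiplicative factor on the outer decoder call, yielding the claimed running time $2^{\calO(d)} n^{\calO(1)} + \calO(2^d \cdot n)\cdot \calT(n)$.

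I expect the main obstacle to be the bookkeeping around going from the covering guarantee (an inner-product lower bound on $\tildeEx{\chi(\zee)}$ against each listed codeword) to the per-coordinate soft-decoding hypothesis with the right constant $\delta_0 \delta_{dec}$ — in particular, matching the $\eps$-slack introduced in the covering step with the $\eps$-slack needed by \cref{lem:derandomized_decoding_from_distributions}, and making sure the list produced by the outer decoder is not too large so that the final enumeration stays within the stated time bound. The spectral machinery of \cref{sec:distance,sec:conditioning} is entirely unnecessary here, which is what makes the running time polynomial rather than $n^{\poly(1/\eps)}$.
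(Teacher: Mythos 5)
Your third paragraph is exactly the paper's proof: lower-bound $\dis(\tildeEx{\cdot},h^*) \ge \delta_0\cdot\Ex{i}{\Ex{\alpha\sim\calD_i}{\indi{\overline h_i\neq\alpha}}}$ via the inner code's distance, conclude the outer disagreement is below $\delta_{dec}$, and then run the derandomized threshold rounding of \cref{lem:derandomized_decoding_from_distributions} with the outer list decoder, taking the union of the lists over all thresholds and pruning. The first two paragraphs (constructing the covering pseudocodeword via norm minimization) belong to \cref{lem:concat_covering}, not to this lemma, whose input is already a pseudocodeword — but their inclusion does not affect the correctness of the decoding argument.
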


\begin{proof}
	We use $\tildeEx{\cdot}$ to get local distributions $\calD_i$ over $[q_1]$ for each coordinate $i$. 
	\begin{align*}
		\dis(\tildeEx{\cdot},h^*) = \Ex{i\in [n]}{\tildeEx{\dis(\zee_i,h^*_i)}} &= \Ex{i\in [n]}{\Ex{f\sim \calD_i}{\dis \inparen{ \calC_0(f),h^*_i }}}\\
		&\geq \Ex{i\in [n]}{\Ex{f\sim \calD_i}{\indi{f\neq h_i}\dis \inparen{ \calC_0(f),h^*_i } }}\\
		&\geq \Ex{i\in [n]}{\Ex{f\sim \calD_i}{\indi{f\neq h_i} \delta_0}}\\
		&\geq \Ex{i\in [n]}{\Ex{f\sim \calD_i}{\indi{f\neq h_i}\dis \inparen{ \calC_0(f),h^*_i } }}\\
		&= \delta_0 \cdot \Ex{i\in [n]}{\Ex{f\sim \calD_i}{\indi{f\neq h_i}}}
	\end{align*}
	Any codeword $h^* =\calC_0(h) \in \calL$ must have the property that $\dis(\tildeEx{\cdot},h^*) <\delta_{dec} \cdot \delta_0$, so that
	\[
		\Ex{i}{\Ex{f\sim \calD_i}{\indi{f\neq h_i}}} < \delta_{dec}.
	\]
	Finally, we use \cref{lem:derandomized_decoding_from_distributions} to decode from the above agreement. The only modification is that since we might potentially deal with the \emph{list} decoding algorithm of outer code $\calC_1$, we take a union of all the lists generated by the different calls corresponding to different thresholds, and then prune it finally. For any $h^* = \calC_0(h)\in \calL$, there is some threshold for which the $[q_1]^n$ string generated in \cref{lem:derandomized_decoding_from_distributions} will be at distance $<\delta_{dec}$ from $h$. Therefore, $h$ will be contained in at least one of the lists discovered by the algorithm.
\end{proof}

\begin{lemma}\label{lem:concat_covering}
	For any $\eps>0$, there is an algorithm that given $g\in [q_0]^{nd}$ and $\delta>0$, finds a pseudocodeword $\tildeEx{\cdot}$ such that $\calL (g,\calJ(\delta)-\eps) \subseteq \calL(\tildeEx{\cdot},\delta-\eps_2)$ in time $2^{\calO(d)}\cdot n^{\calO(1)}$ for some $\eps_2>0$.
\end{lemma}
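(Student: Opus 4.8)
The plan is to mirror the algorithmic covering lemmas for Tanner codes (\cref{lem:abstract_algorithmic_covering} and \cref{lem:cover_for_list_tanner}), replacing the SoS relaxation by the much simpler polytope of pseudocodewords of $\calC^*$. For a concatenated code this polytope is just a product of $n$ probability simplices --- one distribution $\calD_i$ over the $\le q_0^d$ codewords of $\calC_0$ for each outer coordinate $i$ --- so it is cut out by $\calO(n)$ linear constraints over $q_0^d \cdot n = 2^{\calO(d)}\cdot n$ variables, and has a trivial separation oracle. Since the objective we minimize will be a convex quadratic with a succinct description, the Ellipsoid method optimizes over it in time $2^{\calO(d)}\cdot n^{\calO(1)}$. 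No conditioning or low-covariance argument is needed here, because the distance proof for concatenation is elementary (\cref{sec:AEL_prelims}): a coordinate of the outer code at which two outer messages differ already forces the two corresponding length-$d$ blocks to differ in a $\ge \delta_0$ fraction of positions, and this plays the role filled by the expander mixing lemma in the Tanner/AEL arguments.

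Concretely, first I would embed the received word via $\chi = \chi_{q_0}$ applied coordinatewise, obtaining a unit vector $u = \chi(g) \in \R^{(q_0-1)nd}$; for any pseudocodeword $\tildeEx{\cdot}$ the vector $\tildeEx{\chi(\zee)}$, with $\tildeEx{\chi(\zee)}(e) = \tildeEx{\chi_{q_0}(\zee_e)}$, is a linear function of $\{\calD_i\}$, and one checks directly that $\ip{\tildeEx{\chi(\zee)}}{\chi(h^*)} = 1 - \frac{q_0}{q_0-1}\dis(\tildeEx{\cdot},h^*)$ for every $h^* \in \calC^*$. Writing $\delta = (1-1/q_0)(1-\beta)$, so that $\calJ(\delta) = (1-1/q_0)(1-\sqrt{\beta})$, and setting $\gamma = \sqrt{\beta} + \frac{q_0}{q_0-1}\eps$, every $h^* \in \calL(g,\calJ(\delta)-\eps)$ satisfies $\ip{u}{\chi(h^*)} > \gamma$. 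I would then solve the convex program: minimize $\Psi(\tildeEx{\cdot}) = \norm{\tildeEx{\chi(\zee)}}^2$ subject to $\ip{\tildeEx{\chi(\zee)}}{u} \ge \gamma$ and $\tildeEx{\cdot}$ being a pseudocodeword of $\calC^*$ (if this program is infeasible, then $\calL(g,\calJ(\delta)-\eps)$ is empty and there is nothing to prove).

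The heart of the argument is the covering property of a minimizer $\tildeEx{\cdot}$, proved exactly as in \cref{lem:abstract_algorithmic_covering}: if some $h^* \in \calC^*$ had $\ip{\tildeEx{\chi(\zee)}}{\chi(h^*)} \le \gamma$, then mixing $\tildeEx{\cdot}$ with the point-mass pseudocodeword $\PExp^{(h^*)}[\cdot]$ --- which is feasible because the blocks of $h^*$ lie in $\calC_0$, and satisfies $\ip{\PExp^{(h^*)}[\chi(\zee)]}{u} = \ip{\chi(h^*)}{u} > \gamma$ whenever $h^* \in \calL(g,\calJ(\delta)-\eps)$ --- and optimizing the mixing weight would strictly decrease $\Psi$, contradicting optimality. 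Hence $\ip{\tildeEx{\chi(\zee)}}{\chi(h^*)} > \gamma^2 > \beta + 2\sqrt{\beta}\,\frac{q_0}{q_0-1}\eps$ for every $h^* \in \calL(g,\calJ(\delta)-\eps)$, and translating back through the identity above gives $\dis(\tildeEx{\cdot},h^*) < \delta - 2\eps\sqrt{1 - \frac{q_0}{q_0-1}\delta}$, i.e. $h^* \in \calL(\tildeEx{\cdot},\delta-\eps_2)$ with $\eps_2 = 2\eps\sqrt{1 - \frac{q_0}{q_0-1}\delta} = \Omega(\eps)$, which is the claimed inclusion.

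I do not expect a genuine obstacle here, since the statement is essentially a transcription of the Tanner-code covering lemma with $q$ replaced by $q_0$ and the spectral distance proof replaced by the trivial concatenation bound; the only mild subtleties are bookkeeping ones, handled just as in \cref{lem:abstract_algorithmic_covering}: verifying that $\Psi$ and the linear constraint have polynomial bit-complexity in $n$ and $q_0^d$ so that the Ellipsoid method applies, and dealing with the strict-versus-closed inequality $\ip{\tildeEx{\chi(\zee)}}{u} > \gamma$ (e.g. optimize over the closed constraint and note that a minimizer still covers, or add a negligible slack), so that the running time is indeed $2^{\calO(d)}\cdot n^{\calO(1)}$.
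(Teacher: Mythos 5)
Your proposal is correct and follows exactly the route the paper intends: the paper only gives a proof sketch for this lemma, deferring to the Tanner-code covering argument (\cref{lem:abstract_algorithmic_covering} and \cref{lem:cover_for_list_tanner}) and noting that the pseudocodeword polytope here is a product of simplices over $2^{\calO(d)}\cdot n$ variables, which is precisely what you spell out. The details you fill in (the $\chi$-embedding identity, norm minimization, the mixing contradiction, and the translation back to distances with $\eps_2 = \Omega(\eps)$) match the paper's intended argument.
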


\begin{proof}[Proof Sketch]
	This is again an algorithmic implementation of the covering lemma, with the distributions over codewords to be relaxed to distributions over pseudocodewords as defined above. Minimizing the appropriate norm while optimizing over the convex set of pseudocodewords gives us this covering property. We omit the details since the argument is very similar to \cref{lem:cover_for_list_tanner}.
	
	Since the above optimization problem minimizes a convex quadratic function with linear constraints on at most $2^d\cdot n$ variables, we can get a running time of $2^{\calO(d)}\cdot n^{\calO(1)}$.
\end{proof}

\begin{theorem}
	Let $\calC_0$ be a binary inner code of blocklength $d$, distance $\delta_0$ and rate $\rho_0$. Also let $\calC_{1}$ be an outer code of blocklength $n$, distance $\delta_1$ and rate $\rho_{1}$ on an alphabet of size $|\calC_0|=2^{\rho_0d}$. Assume that $\calC_1$ can be list-decoded from radius $\delta_{dec}$ in time $\calT(n)$ with list size $L$.
	
	Then the code $\calC^*$ obtained by concatenating $\calC_1$ with $\calC_0$ can be list decoded up to a radius of $\calJ(\delta_{dec}\delta_0)$ in time $2^{\calO(d)}n^{\calO(1)} + \calO(2^d \cdot n) \cdot \calT(n)$.
\end{theorem}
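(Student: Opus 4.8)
The plan is to get the theorem by composing the two ingredients already in hand: \cref{lem:concat_covering}, which produces a single pseudocodeword of $\calC^*$ that covers the whole list, and \cref{lem:decode_from_pseudo}, which extracts that list from any such pseudocodeword by invoking the outer list decoder. Since the distance of a concatenated code does not rely on expansion, there is no conditioning / correlation-reduction step here; everything is a convex optimization followed by a rounding.

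Concretely, fix a received word $g \in [q_0]^{nd}$ and any $\eps > 0$, and set $\delta = \delta_{dec}\cdot\delta_0$. First I would run \cref{lem:concat_covering} with this $\delta$; in time $2^{\calO(d)}n^{\calO(1)}$ it returns a pseudocodeword $\tildeEx{\cdot}$ with $\calL(g, \calJ(\delta_{dec}\delta_0) - \eps) \subseteq \calL(\tildeEx{\cdot}, \delta_{dec}\delta_0 - \eps_2)$ for some $\eps_2 > 0$. As $\delta_{dec}\delta_0 - \eps_2 < \delta_0\delta_{dec}$, every codeword within Johnson radius $\calJ(\delta_{dec}\delta_0) - \eps$ of $g$ lies in $\calL(\tildeEx{\cdot}, \delta_0\delta_{dec})$. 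Next I would hand $\tildeEx{\cdot}$ to the algorithm of \cref{lem:decode_from_pseudo}, which outputs the full list $\calL(\tildeEx{\cdot}, \delta_0\delta_{dec})$ in time $2^{\calO(d)}n^{\calO(1)} + \calO(2^d n)\cdot\calT(n)$, using list decoding of $\calC_1$ from radius $\delta_{dec}$ only as a black box. Finally I would prune, keeping from the returned list only those $h^*$ with $\Delta(g,h^*) < \calJ(\delta_{dec}\delta_0) - \eps$. Correctness is immediate from the two containments, the running time is the claimed $2^{\calO(d)}n^{\calO(1)} + \calO(2^d n)\cdot\calT(n)$, and letting $\eps \to 0$ gives the stated radius $\calJ(\delta_{dec}\delta_0)$.

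The only nontrivial step, and the one the excerpt merely sketches, is \cref{lem:concat_covering}. Here a pseudocodeword is just an $n$-tuple of local distributions over $\calC_0$, the polytope of pseudocodewords is cut out by $\calO(2^d n)$ linear constraints, and the relevant program minimizes $\norm{\tildeEx{\chi_{q_0}(\zee)}}^2$ over it, which the ellipsoid method solves in $2^{\calO(d)}n^{\calO(1)}$ time. The rest mirrors \cref{lem:cover_for_list_tanner} / \cref{lem:abstract_algorithmic_covering}: with $u = \chi_{q_0}(g)$, the one-shot optimality argument yields the quadratic gain $\ip{u}{\chi_{q_0}(h^*)} > \gamma \Rightarrow \ip{\tildeEx{\chi_{q_0}(\zee)}}{\chi_{q_0}(h^*)} > \gamma^2$, which, via the $\chi$-to-distance identity of \cref{thm:johnson_bound}, becomes $\dis(\tildeEx{\cdot}, h^*) < \delta - \eps_2$ for the chosen parameter $\delta$. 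The point to watch is that this is an honest inequality with a strictly positive $\eps_2$, so that \cref{lem:decode_from_pseudo} — whose proof uses the inner distance $\delta_0$ to pass from symbol-space distance to message-space agreement and then calls the outer decoder at radius $\delta_{dec}$ — applies. It is also worth stating explicitly that because that last step uses only list decoding of $\calC_1$ (not list recovery with soft weights), the radius reached is $\calJ(\delta_{dec}\delta_0)$, strictly below the Johnson bound $\calJ(\delta_1\delta_0)$ of the true distance of $\calC^*$; closing that gap when $\calC_1$ is itself decodable through this framework is treated separately in \cref{sec::concat_outer_ael}.
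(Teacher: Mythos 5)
Your proposal is correct and follows exactly the paper's route: apply \cref{lem:concat_covering} to obtain a pseudocodeword covering the list $\calL(g,\calJ(\delta_{dec}\delta_0)-\eps)$, then feed it to \cref{lem:decode_from_pseudo} to extract and prune the list, with the stated running times composing additively and multiplicatively as you describe. The extra detail you supply on the convex program behind \cref{lem:concat_covering} and the remark on why the radius is $\calJ(\delta_{dec}\delta_0)$ rather than $\calJ(\delta_1\delta_0)$ are both consistent with the paper's treatment.
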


\begin{proof}
	Given $g$ such that we wish to find $\calL = \calL(g,\calJ(\delta_{dec}\cdot \delta_0) -\eps)$, we first use \cref{lem:concat_covering} to find a pseudocodeword $\tildeEx{\cdot}$ that covers the list, and then use it via \cref{lem:decode_from_pseudo} to find $\calL$.
\end{proof}

\subsection{List Decoding with outer AEL codes}\label{sec::concat_outer_ael}

In this section, we show that when the near-MDS codes obtained using AEL distance amplification are used for concatenation with smaller alphabet codes, we can list decode the smaller alphabet code up to its Johnson bound. This will not be done by a black-box call to the list decoding/list recovery algorithm of the outer code, but we will crucially use the SoS-based list decoding strategy for the outer code.

To the best of our knowledge, list-decoding to the Johnson bound of the concatenated code has not been achieved for the Reed-Solomon outer code. This shows that while our near-MDS codes via AEL construction match Reed-Solomon codes in terms of list decoding radius, they have some extra desirable features. Of course, the runtime of our algorithms is quite poor compared to the near-linear time Reed-Solomon decoders.

The pseudocodewords for the concatenated code will be concatenations of the pseudocodewords of outer AEL code. Since the covering lemma works irrespective of the code we are working with, we can get a cover for the list of (final) codewords by efficiently optimizing over the pseudocodewords of outer code. Then a simple argument shows that the distance property for outer pseudocodewords translates to distance property for concatenated pseudocodewords.

Let us recall some notation for AEL Codes. Let $\AELC$ be an AEL code determined by an $(n,d,\lambda)$-expander graph $G=(L,R,E)$, an inner code $\calC_0$ of distance $\delta_0$, rate $r_0$, alphabet size $q_0$, and an outer code $\calC_1$ of distance $\delta_1$, rate $r_1$ and alphabet size $q_1 = |\calC_0|$. The code $\AELC$ is of alphabet size $q_0^d$, rate $r_0r_1$ and (designed) distance $\delta_0-\frac{\lambda}{\delta_1}$.

Suppose we concatenate $\AELC$ with an $[d_2,\delta_2,r_2]_{q_2}$ code $\calC_2$ such that $q_0^d = q_2^{r_2\cdot d_2}$, to obtain a new $\left[ nd_2, \inparen{\delta_0-\frac{\lambda}{\delta_1}} \delta_2, r_0r_1r_2\right]_{q_2}$ code $\calC_3$. Note that for the code $\calC_3$, distance is defined for $f_1,f_2\in [q_2]^{nd_2}$ as 
\begin{equation}\label{eqn:dist_for_concat}
	\dis^*(f_1,f_2) = \Ex{\ri\in R,j\in [d_2]}{\indi{f_1(\ri,j) \neq f_2(\ri,j)}}
\end{equation}
which can also be extended to distances between a pseudocodeword $\tildeEx{\cdot}$ and a codeword $h^*\in \calC_3$ as
\[
	\dis^*(\tildeEx{\cdot}, h^*) = \Ex{\ri\in R,j\in [d_2]}{\tildeEx{\indi{\calC_2(\zee_{N_R(\ri)})(j) \neq h^*(r,j) }}}
\]
If the codeword $h^*\in \calC_3$ is obtained by concatenating $h\in \AELC$ with $\calC_2$, we denote $h^* = \calC_2(h)$, and the above distance expression is the same as
\[
	\dis^*(\tildeEx{\cdot}, \calC_2(h)) = \Ex{\ri\in R,j\in [d_2]}{\tildeEx{\indi{\calC_2(\zee_{N_R(\ri)})(j) \neq \calC_2(h_{N_R(\ri)}) (j) }}}
\]
\begin{theorem}
	Let the code $\calC_1$ be decodable from radius $\delta_{dec}$, and $\lambda\leq \kappa \cdot \delta_{dec}$, so that the distance of code $\calC_3$ is at least $\delta_2(\delta_0-\kappa)$. For every $\eps > 0$, the code $\calC_3$ can be list decoded up to the radius $\calJ_{q_2} \inparen{ \inparen{\delta_0-\kappa} \delta_2 } - \eps$ in time $n^{\calO_{q,d,\delta_{dec}}(1/\eps^4)}$.
\end{theorem}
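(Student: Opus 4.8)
The plan is to combine the SoS-based list decoding machinery for the outer AEL code (from \cref{sec:list_decoding_ael}) with the concatenation ideas from \cref{sec:concat}, in the same spirit as the proof of \cref{thm:list_decoding_ael} but now tracking the extra distance loss introduced by the inner code $\calC_2$. First I would set up the right notion of pseudocodeword: rather than local distributions over $\calC_2$ on each block, the pseudocodewords for $\calC_3$ will be degree-$t$ SoS pseudocodewords for the outer code $\AELC$ (respecting the local constraints $\zee_{N_L(\li)} \in \calC_0$), and the distance $\dis^*(\tildeEx{\cdot}, \calC_2(h))$ will be viewed through the $\calC_2$-encoding as written in the statement. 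The embedding map becomes $\chi^* : [q_0]^E \to (\R^{q_2 - 1})^{R \times [d_2]}$ given by applying $\chi_{q_2}$ to $\calC_2(\zee_{N_R(\ri)})(j)$; this is still a $d$-local vector-valued function of $\zee$, so $\tildeEx{\chi^*(\zee)}$ is well-defined and the algorithmic covering lemma (\cref{lem:abstract_algorithmic_covering_ael}, adapted to the embedding $\chi^*$) gives a pseudocodeword $\tildeEx{\cdot}$ that $\gamma^2$-covers every codeword of $\calC_3$ within the $\sqrt{\beta}$-Johnson radius corresponding to the product distance $(\delta_0-\kappa)\delta_2$.

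Next I would run the conditioning step verbatim from \cref{sec:conditioning}: \cref{lem:low_covariance_solution} produces, after conditioning on the symbols in $N_R(V)$ for a random constant-sized $V$, an $\eta$-good pseudocodeword that still has small $\dis^*$ to the fixed target $h^* = \calC_2(h)$ (Markov + law of total expectation, exactly as in \cref{thm:list_decoding_ael}). The crux is then a distance-dichotomy analog of \cref{lem:AEL_amplification} for the concatenated distance. The key observation is the ``concatenation lower bound'' $\dis^*(\tildeEx{\cdot}, \calC_2(h)) \geq \delta_2 \cdot \dis^R(\tildeEx{\cdot}, h)$, which follows because, for each right vertex $\ri$ where $\zee_{N_R(\ri)} \neq h_{N_R(\ri)}$, the two distinct $\calC_2$-codewords $\calC_2(\zee_{N_R(\ri)})$ and $\calC_2(h_{N_R(\ri)})$ differ in at least a $\delta_2$ fraction of the $d_2$ inner coordinates; this is a $d$-local pointwise inequality between local functions, hence respected by $\tildeEx{\cdot}$. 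Combined with \cref{lem:AEL_amplification}, which lower bounds $\dis^R(\tildeEx{\cdot},h)$ in terms of $\dis^L(\tildeEx{\cdot},h)$, and with the covering property translated into an upper bound on $\dis^*$, one solves the resulting inequality to conclude $\dis^L(\tildeEx{\cdot}, h) \leq \delta_{dec} - \Omega(\eps)$ for every $h$ in the list (choosing $\lambda \leq \kappa \delta_{dec}$ and $\eta$ a suitably small constant multiple of $\eps$). From such a pseudocodeword, \cref{lem:unique_decoding_ael} recovers $\overline{h} \in \calC_1$, hence $h$, hence $h^* = \calC_2(h)$, with constant probability.

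Finally I would assemble the success probability and derandomize exactly as in \cref{thm:tanner-decoding} and \cref{thm:list_decoding_ael}: enumerate the constantly many choices of $k$, the $n^{\calO_{q,d}(1/\eps^4)}$ choices of $(V,\beta)$, and use threshold rounding (\cref{lem:derandomized_decoding_from_distributions}) in place of the randomized rounding inside \cref{lem:unique_decoding_ael}; the list size bound $\calO(1/\poly(\eps))$ falls out of the fact that a single run succeeds with probability $\Omega_{q,d,\delta_{dec}}(\poly(\eps))$ for each list element. The running time is dominated by optimizing over degree-$t$ pseudocodewords with $t = \calO_{q,d,\delta_{dec}}(1/\eps^4)$, giving $n^{\calO_{q,d,\delta_{dec}}(1/\eps^4)}$.

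\textbf{Main obstacle.} I expect the delicate point to be the distance-dichotomy for the \emph{concatenated} distance: one must make sure that the inner-code expansion factor $\delta_2$ enters as a clean \emph{pointwise-local} multiplicative inequality $\indi{\zee_{N_R(\ri)} \neq h_{N_R(\ri)}} \leq \tfrac{1}{\delta_2}\dis(\calC_2(\zee_{N_R(\ri)}), \calC_2(h_{N_R(\ri)}))$ (valid since $\calC_2$ has distance $\delta_2$), so that it survives passage through $\tildeEx{\cdot}$, and then that plugging this into the AEL expander-mixing argument still yields a quadratic inequality in $\tau = \sqrt{\Ex{\li}{\tildeEx{X_{\li}(\zee)}} \cdot \Ex{\ri}{\tildeEx{Y_{\ri}(\zee)}}}$ whose two roots are well-separated. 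Getting the $\eps$-bookkeeping right — the covering loss is roughly $2\eps\sqrt{1 - \tfrac{q_2}{q_2-1}(\delta_0-\kappa)\delta_2}$, and this must be balanced against the choice of $\eta$ so that the dichotomy's ``bad'' branch is excluded — is the one genuinely new computation relative to the earlier theorems, but it is routine once the local inequality above is in place.
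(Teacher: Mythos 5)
Your proposal is correct and follows essentially the same route as the paper: the same AEL pseudocodewords, the covering lemma applied to the $\calC_3$-embedding (which remains a linear, hence convex, function of $\tildeEx{\cdot}$), the pointwise-local inequality giving $\dis^*(\tildeEx{\cdot},\calC_2(h)) \geq \delta_2\cdot \dis^R(\tildeEx{\cdot},h)$ chained with \cref{lem:AEL_amplification}, and then conditioning, rounding via \cref{lem:unique_decoding_ael}, and derandomization. The only slight slip is in your "main obstacle" paragraph: for AEL there is no quadratic dichotomy to solve — the bound $\dis^R \geq \delta_0 - \frac{\lambda+\eta}{\dis^L}$ rearranges directly into the upper bound on $\dis^L$, exactly as your main argument already does.
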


\begin{proof}[Proof Sketch]
We consider the same pseudocodewords as we did for AEL codes, and recall that we proved the following distance property for $\eta$-good pseudocodewords in \cref{lem:AEL_amplification}. For any $h\in \AELC$,

\[
	\Ex{\ri}{\tildeEx{\indi{\zee_{N_R(\ri)} \neq h_{N_R(\ri)}}}} \geq \delta_0 - \frac{\lambda+\eta}{\Ex{\li}{\tildeEx{\indi{\zee_{N_L(\li)} \neq h_{N_L(\li)}}}}}
\]
We extend the above distance property to the distance according to final $\calC_3$,
\begin{align*}
	\Ex{\ri\in R,j\in [d_2]}{\tildeEx{\indi{\calC_2(\zee_{N_R(\ri)})(j) \neq \calC_2(h_{N_R(\ri)})(j)}}}
&= \Ex{\ri}{\tildeEx{\dis\inparen{\calC_2(\zee_{N_R(\ri)}) , \calC_2(h_{N_R(\ri)})}}} \\
	&\geq \delta_2 \cdot \Ex{\ri}{\tildeEx{\indi{\zee_{N_R(\ri)} \neq h_{N_R(\ri)}}}} \\
	&\geq \delta_2 \inparen{ \delta_0 - \frac{\lambda+\eta}{\Ex{\li}{\tildeEx{\indi{\zee_{N_L(\li)} \neq h_{N_L(\li)}}}}} }
\end{align*}

That is,
\begin{equation}\label{eqn:concat_ael_distance}
	\dis^*\inparen{\tildeEx{\cdot}, \calC_2(h)} \geq \delta_2 \inparen{ \delta_0 - \frac{\lambda+\eta}{ \dis^L \inparen{\tildeEx{\cdot},h} }}
\end{equation}
The key distance property established, the rest of the argument is same as while decoding AEL codes. Given $g$ to be decoded, let's call the list of codewords at distance less than $\calJ(\delta_2(\delta_0-\kappa)) - \eps$ as $\calL$. We find a pseudocodeword $\tildeEx{\cdot}$ that is close to all the codewords in $\calL$. For any $h^*\in \calL$ such that $h^* = \calC_2(h)$,
\[
	\dis^*(\tildeEx{\cdot}, h^*) = \dis^*(\tildeEx{\cdot}, \calC_2(h)) < \delta_2(\delta_0-\kappa)-\eps_2
\]
where  $\eps_2 = 2\sqrt{1-\frac{q_2}{q_2-1} \delta_2(\delta_0-\kappa)}\cdot \eps = \Theta(\eps)$.

The covering lemma used here minimizes $\ell_2$-norm of the embedding corresponding to the concatenated code $\calC_3$ with the alphabet size $q_2$, while the relaxation for pseudocodeword was defined for $\AELC$ which does not depend on $\calC_2$ at all. However, this embedding according to $\calC_3$ is a linear function of $\tildeEx{\cdot}$, therefore it is still a convex function that is minimized.

By conditioning $\tildeEx{\cdot}$, we obtain another pseudocodeword $\dupPE{\cdot}$ which is $\eta$-good and retains its closeness to $\calC_2(h)$. By choosing $\lambda$ and $\eta$ small enough, \cref{eqn:concat_ael_distance} allows us to conclude that 
\[ 
	\dis^L \inparen{\tildeEx{\cdot},h} = \Ex{\li}{\tildeEx{\indi{\zee_{N_L(\li)} \neq h_{N_L(\li)}}}} < \delta_{dec}-\eps_3
\] 
for some $\eps_3>0$, which is sufficient to find $h$, and therefore $\calC_2(h)$, via \cref{lem:unique_decoding_ael}.

As before, this algorithm can be derandomized by going over all random choices which must discover the entire list $\calL(g,\calJ_{q_2}((\delta_0-\kappa)\delta_2)-\eps)$.
\end{proof}

Note that this argument does not place any restriction on $\delta_0$, and in particular, if we choose $\kappa$ to be much smaller than $\delta_0$, we can decode arbitrarily close to the Johnson radius corresponding to the product bound $\calJ(\delta_0 \cdot\delta_2)$, even for small values of $\delta_0$ like $1/3$. In contrast, the existing list decoding algorithms for concatenated codes via list recovery of outer Reed-Solomon code \cite{GuruswamiS02} only approach this Johnson bound when the outer distance is very close to 1.

In fact, we can also decode up to the Johnson bound of product of distances for any outer code that supports list decoding up to Johnson bound via our Covering Lemma/SoS-based machinery. This also includes Tanner code in particular.
%
\section{List Decoding Codes on Square Cayley Complex}
\label{sec:square_complex}

First, let us set some notation for the quadripartite left-right square Cayley complex on the group $G$ of size $n$ with generator sets $A,B$ of size $d$ each. Let $\calS$ denote the set of squares. The functions $X,Y,U,V$ are 4 bijections from $G\times A \times B$ to $\calS$, with the following property:
\[
	X(g,a,b) = Y(ga,a^{-1},b) = U(bga,a^{-1},b) = V(bg,a,b^{-1})
\]
The set $X(g,\cdot,b)\subseteq \calS$ is defined as $\{X(g,a,b) \suchthat a\in A\}$, and likewise for $X(g,\cdot,\cdot)$, $X(g,a,\cdot)$ and corresponding $Y,U,V$ sets. The sets $X(g,\cdot,\cdot),Y(g,\cdot,\cdot),U(g,\cdot,\cdot)$ and $V(g,\cdot,\cdot)$ should be seen as the analogs of sets $N_L(\li)$ and $N_R(\ri)$ from the Tanner codes.

The code on this square Cayley complex is then defined as 
\begin{align*}
	\calC^{SCC} = \{h \in [q]^{\calS} \suchthat \forall g\in G,\ & h|_{X(g,\cdot,\cdot)} \in C_A\otimes C_B, \\ 
	&h|_{Y(g,\cdot,\cdot)} \in C_A\otimes C_B, \\
	&h|_{U(g,\cdot,\cdot)} \in C_A\otimes C_B, \\
	&h|_{V(g,\cdot,\cdot)} \in C_A\otimes C_B\}
\end{align*}

where $C_A$ and $C_B$ are inner codes of blocklength $d$ each, and $\calC_A\otimes \calC_B$ is their tensor code.

For the code defined by left-right Cayley complex with inner codes $C_A, C_B$ with parameters $(d,\delta_A,r_A)$ and $(d,\delta_B,r_B)$ respectively, the distance of $\calC^{SCC}$ is lower bounded \cite{DELLM22} by \[ \delta = \delta_A \delta_B(\max(\delta_A,\delta_B) - \lambda)\]Note that the distance of tensor code $\calC_A\otimes \calC_B$ is at least $\delta_A \cdot \delta_B$. 

\begin{theorem}\label{thm:square-decoding}
For every $\eps>0$, there is an algorithm based on $\calO_{q,d}(1/\eps^4)$ levels of the SoS-hierarchy that runs in time $n^{\calO_{q,d}(1/\eps^4)}$ and can list decode $\calC^{SCC}$ up to $\calJ_q(\delta)- \eps$.
\end{theorem}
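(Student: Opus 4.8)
The plan is to mimic the Tanner code argument of \cref{thm:tanner-decoding} essentially verbatim, with the bipartite expander replaced by the quadripartite square Cayley complex and with the role of the base code played by the tensor code $\calC_A \otimes \calC_B$. First I would define pseudocodewords for $\calC^{SCC}$ as degree-$t$ pseudoexpectation operators over variables $\zee = \{Z_{s,j}\}_{s \in \calS, j \in [q]}$ respecting the four families of local constraints $\zee_{X(g,\cdot,\cdot)} \in \calC_A \otimes \calC_B$, $\zee_{Y(g,\cdot,\cdot)} \in \calC_A \otimes \calC_B$, and similarly for $U, V$, for all $g \in G$. These are $d^2$-local constraints, so they can be enforced for any $t \geq 2d^2$, which is where the $q^{O(d^2)}$ (rather than $q^{O(d)}$) dependence in the SoS degree comes from. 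The distance from a pseudocodeword to a codeword $h$ is $\dis(\tildeEx{\cdot}, h) = \Ex{s \in \calS}{\tildeEx{\indi{\zee_s \neq h_s}}}$.

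Next I would run the algorithmic covering lemma: the analogue of \cref{lem:abstract_algorithmic_covering} holds unchanged (the proof only uses convexity of the pseudocodeword polytope and the norm-minimization trick, not any structure of the code), giving a pseudocodeword $\tildeEx{\cdot}$ with $\dis(\tildeEx{\cdot}, h) < \delta - \Omega(\eps)$ for every $h \in \calL(g, \calJ_q(\delta) - \eps)$. Then I would run correlation reduction via conditioning: since the complex has four ``sides'', one conditions on the values in $Y(g, \cdot, \cdot)$ for a random $g$ (or on the appropriate pair of sides) and argues, exactly as in \cref{lem:conditioning_reduces_variance} and \cref{lem:low_covariance_solution}, that after $O(q^{O(d^2)}/\eta^2)$ rounds one reaches an $\eta$-good pseudocodeword, where $\eta$-good now means small average pseudo-covariance between the relevant pairs of local neighborhoods. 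The key new ingredient is the \textbf{distance proof for $\eta$-good pseudocodewords}, replacing \cref{lem:distance_from_codeword}: I would follow the $\calC^{SCC}$ distance proof of Dinur et al.\ \cite{DELLM22}, which goes through tensor-code arguments (row/column decoding of $\calC_A \otimes \calC_B$) combined with an expander mixing step for the Cayley graph defined by one generator set. Each step of that proof is either a local statement about $\calC_A \otimes \calC_B$ — enforced by the local constraints and hence valid inside $\tildeEx{\cdot}$ — or a spectral statement certifiable by the PSD matrix $\lambda \Pi - \Pi A \Pi$, hence a low-degree SoS consequence captured by the EML-for-pseudoexpectations lemma (\cref{lem:eml_for_pexp}), with the $\eta$-good property absorbing the cross terms. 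This yields the dichotomy $\dis(\tildeEx{\cdot}, h) \leq O(\eta)$ or $\dis(\tildeEx{\cdot}, h) \geq \delta - O(\eta)$.

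Finally, combining the covering bound ($< \delta - \Omega(\eps)$) with the dichotomy forces the first case for every $h$ in the list (for $\eta$ small enough in terms of $\eps$), after which rounding $h'(s) = \argmax_j \tildeEx{\indi{\zee_s = j}}$ gives $\Delta(h', h) = O(\eta)$ and $h$ is recovered by unique decoding of $\calC^{SCC}$ — which follows from the same distance-certificate machinery (or one can invoke the known linear-time unique decoder for these codes). Derandomizing the choice of conditioning set and the rounding threshold, exactly as in \cref{thm:tanner-decoding}, makes the algorithm deterministic and proves the list-size bound $\calO_{q,d}(1/\poly(\eps))$ as a byproduct. I expect the main obstacle to be verifying that the Dinur et al.\ distance proof — which is more intricate than the one-line expander-mixing argument for \Zemor's Tanner codes, since it involves iterating tensor-code decoding and a ``propagation'' across the two generator directions — is genuinely a \emph{low-degree} SoS proof: one must check that every intermediate object (e.g., the ``robustness'' of the tensor code, the agreement propagation across squares sharing an edge) can be phrased using only $O(1)$-wise reasoning plus the spectral certificate, so that conditioning on constantly many $d^2$-local neighborhoods suffices. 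If some step secretly needs a global argument (as happens in the quantum CSS setting discussed in \cref{sec:quantum}), the degree would blow up; I believe it does not here because the $\calC^{SCC}$ distance proof, unlike the CSS case, has no coset-shift subtlety.
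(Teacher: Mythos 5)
Your skeleton --- pseudocodewords on squares respecting the four families of $d^2$-local tensor-code constraints, the algorithmic covering lemma, conditioning to reach an $\eta$-good solution, a distance dichotomy, and rounding into a unique decoder --- is exactly the paper's plan, and you correctly locate the crux in the SoS distance proof. But that crux is precisely the step you leave as ``transcribe the \cite{DELLM22} distance proof and check it is low-degree,'' so as written the proposal does not actually contain a proof of the dichotomy, and the route you sketch for it is not the one the paper takes.

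What the paper does instead is reduce to the Tanner analysis, run once per generator $b \in B$ (assume WLOG $\delta_A \geq \delta_B$). For each fixed $b$, the squares $\{X(g,a,b) \suchthat g\in G, a\in A\}$ form the edge set of a bipartite expander structure whose local views $X(g,\cdot,b)$ and $Y(g,\cdot,b)$ carry the code $\calC_A$ (a single row of the tensor codeword), so the argument of \cref{lem:distance_from_codeword} gives, for
\[
\tau_b \;=\; \sqrt{\Ex{g}{\tildeEx{\indi{\zee_{X(g,\cdot,b)}\neq h_{X(g,\cdot,b)}}}}}\cdot\sqrt{\Ex{g}{\tildeEx{\indi{\zee_{Y(g,\cdot,b)}\neq h_{Y(g,\cdot,b)}}}}}\,,
\]
the quadratic $\tau_b^2-(\delta_A-\lambda)\tau_b+\eta\geq 0$. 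The dichotomy then comes from an asymmetry between the two bounds on $\dis(\tildeEx{\cdot},h)$: the lower bound $\dis(\tildeEx{\cdot},h)\geq \delta_A\delta_B\,\tau_b$ holds for \emph{every} $b$ (it uses the tensor-code distance $\delta_A\delta_B$ on the whole view $X(g,\cdot,\cdot)$ and then drops to the sub-event on $X(g,\cdot,b)$), whereas the upper bound $\dis(\tildeEx{\cdot},h)\leq \Ex{b}{\tau_b^2+\lambda\tau_b+\eta}$ is only an \emph{average} over $b$. Hence if any single $\tau_b$ lands on the large root of its quadratic, one gets $\dis(\tildeEx{\cdot},h) \geq \delta_A\delta_B(\delta_A-\lambda)-O(\eta)=\delta-O(\eta)$; if all $\tau_b$ land on the small root, the averaged upper bound gives $\dis(\tildeEx{\cdot},h)\leq O(\eta)$. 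Your proposal has no analogue of this per-$b$ decomposition, and it is exactly what lets one avoid SoS-izing the iterated row/column ``agreement propagation'' of the \cite{DELLM22} distance proof, which is the part you flag as potentially problematic.

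A second, smaller omission: the $\eta$-good property needed here is a \emph{conjunction} over all $b\in B$ of the covariance bounds $\Ex{g_1,g_2}{\tildecov(\zee_{X(g_1,\cdot,b)},\zee_{Y(g_2,\cdot,b)})}\leq\eta$, so the conditioning argument requires a union bound over the $d$ choices of $b$, costing an extra factor of $d$ in the SoS degree. Your description of the conditioning step does not account for needing all $d$ conditions to hold simultaneously.
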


\begin{proof}[Proof Sketch]
	We outline the proof by once again focusing on a proof of distance for the appropriate notion of $\eta$-good pseudocodewords, and combining this proof with the covering lemma can be done as in the case of Tanner codes. Assume WLOG that $\delta_A\geq \delta_B$.
	
	Consider the SoS relaxation where variables correspond to squares: $\zee =\inbraces{Z_{s,j}}_{s\in \calS,j\in [q]}$. The SoS-degree of this relaxation is at least $d^2$ so that we can enforce all the inner code constraints by making the SoS relaxation respect such constraints explicitly.
	
	Note that for a fixed $b$, the set of squares $\{X(g,a,b) \suchthat g\in G, a\in A\}$ can be seen as the edges of an expander code, with the inner code as $C_A$. This motivates the following definition.
	
	A pseudocodeword is called $\eta$-good if for all $b \in B$,
	\begin{align}\label{eqn:eta_good_ltc}
		\Ex{g_1,g_2}{\tildeCov{\zee_{X(g_1,\cdot,b)}}{\zee_{Y(g_2,\cdot,b)}}} \leq \eta
	\end{align}
	
	As we proved before, this implies that for any $h\in \calC^{SCC}$, 
	\[
		\tau_b = \sqrt{\Ex{g}{\tildeEx{\indi{\zee_{X(g,\cdot,b)}\neq h_{X(g,\cdot,b)}}}}} \sqrt{\Ex{g}{\tildeEx{\indi{\zee_{Y(g,\cdot,b)}\neq h_{Y(g,\cdot,b)}}}}} 
	\] 
	satisfies
	\[
		\tau_b^2 -(\delta_A-\lambda)\tau_b+\eta \geq 0
	\]
	
	Let's prove upper and lower bounds on $\Ex{s}{\indi{Z_s\neq h_s}}$ in terms of $\tau_b$.
	
	\begin{align*}
		\Ex{s}{\indi{\zee_s \neq h_s}} &~=~ \Ex{g,a,b}{\tildeEx{\indi{\zee_{X(g,a,b)}\neq h_{X(g,a,b)}}}}  \\
		&~=~ \Ex{g}{\tildeEx{\dis(\zee_{X(g,\cdot,\cdot)}, h_{X(g,\cdot,\cdot)}}} \\
		&~\geq~ \Ex{g}{\tildeEx{0\cdot \indi{\zee_{X(g,\cdot,\cdot)} = h_{X(g,\cdot,\cdot)}}} + \delta_A \cdot \delta_B \cdot \indi{\zee_{X(g,\cdot,\cdot)} \neq h_{X(g,\cdot,\cdot)}}} \\
		&~=~ \delta_A\delta_B \Ex{g}{\tildeEx{\indi{\zee_{X(g,\cdot,\cdot )}\neq h_{X(g,\cdot,\cdot)}}}} \\
		&~\geq~ \delta_A\delta_B \Ex{g}{\tildeEx{\indi{\zee_{X(g,\cdot,b )}\neq h_{X(g,\cdot,b)}}}} 
	\end{align*}
	where the last inequality is true for any $b\in B$.
	
	Likewise, $\Ex{s}{\indi{\zee_s\neq h_s}} \geq \delta_A\delta_B \Ex{g}{\tildeEx{\indi{\zee_{Y(g,\cdot,b )}\neq h_{Y(g,\cdot,b)}}}}$. Therefore,
	
	\begin{align*}
		\Ex{s}{\indi{Z_s\neq h_s}} &\geq \delta_A\delta_B \sqrt{\Ex{g}{\tildeEx{\indi{\zee_{X(g,\cdot,b )}\neq h_{X(g,\cdot,b)}}}}} \sqrt{\Ex{g}{\tildeEx{\indi{\zee_{Y(g,\cdot,b )}\neq h_{Y(g,\cdot,b)}}}}} \\
		&= \delta_A\delta_B \tau_b
	\end{align*}
	
	For the upper bound,
	
	\begin{align*}
		\Ex{s}{\indi{Z_s \neq h_s}} &= \Ex{g,a,b}{\tildeEx{\indi{\zee_{X(g,a,b)}\neq h_{X(g,a,b)}}}} \\
		&= \Ex{b}{\Ex{g,a}{\tildeEx{\indi{\zee_{X(g,a,b)}\neq h_{X(g,a,b)}}}}} \\
		&\leq \Ex{b}{\tau_b^2 + \lambda \tau_b + \eta}
	\end{align*}
	
	Importantly, the lower bound works for any $b$, and so if any $\tau_b$ is large, we can conclude good distance. Otherwise, all $\tau_b$ are small, and then the upper bound works well.
	
	Suppose there is some $b$ for which $\tau_b \geq (\delta_A-\lambda) - \frac{2\eta}{\delta_A-\lambda}$. Then, the distance is at least $\delta_A\delta_B ( (\delta_A-\lambda) - \frac{2\eta}{\delta_A-\lambda})$.
	
	If not, then all $\tau_b$ are at most $\frac{2\eta}{\delta_A-\lambda}$. Then the distance,
	\begin{align*}
		\dis(\tildeEx{\cdot}, h) &\leq \Ex{b}{\tau_b^2 + \lambda\tau_b + \eta} \\
		&\leq \Ex{b}{\frac{4\eta^2}{(\delta_A-\lambda)^2} + \frac{\eta(\delta_A+\lambda)}{\delta_A-\lambda}} \\
		&= \frac{4\eta^2}{(\delta_A-\lambda)^2} + \frac{\eta(\delta_A+\lambda)}{\delta_A-\lambda}
	\end{align*}
	
	Putting everything together, if $\lambda\leq \delta_A/3$ and $\eta\leq \delta_A^2/9$, we get that 
	\[ \dis(\tildeEx{\cdot}, h) \geq \delta_A\delta_B(\delta_A-\lambda) - 3 \delta_B \eta \geq \delta -3\eta\]
	 or \[ \dis(\tildeEx{\cdot)}, h) \leq 3\eta\]
	
	Rest of the decoding argument is as before via covering lemma. We only need to show that the $\eta$-good property needed in \cref{eqn:eta_good_ltc} can be obtained by conditioning as before. In case we end up very close to a codeword, we can still unique decode via known unique decoding algorithms \cite{DELLM22} and \cref{lem:decoding_from_distributions}.
	
	The condition in \cref{eqn:eta_good_ltc} is that $d$ different average correlations are small. Recall that we ensured in the Tanner code case that covariance is small with probability $1-\gamma$ where $\gamma$ can be made arbitrarily small with SoS-degree. By a union bound, we can ensure that the covariance is small across all the $d$ bipartite graphs with probability at least $1-d\cdot \gamma$. Thus, by paying an additional factor of $d$ in the SoS-degree, we can ensure low covariance across all $b \in B$ as needed. 
	
	As before, this argument can be derandomized as well.
\end{proof}


\chapter{List Decodable Quantum LDPC Codes} \label{chap:quantum}
%
The area of quantum error correction has had tremendous progress in recent years, particularly in the construction of quantum low-density parity-check (QLDPC) codes. These are codes where membership can be tested via checks acting only on a small number of qubits, which is often an important property for physical implementations. 
Starting with the classic toric code by Kitaev~\cite{Kitaev03}, a sequence of works~\cite{TZ14, EKZ20, KT21, HHOD21, PK22, LZ22} has led to the construction of \emph{asymptotically good} QLDPC codes, with constant rate and constant relative distance. 

Specifically, the above constructions yield a special form of quantum code known as
Calderbank--Shor--Steane (CSS) code, which can be specified by a pair of subspaces $\cx,\cz
\subseteq \F_q^n$ satisfying $\cz^{\perp} \subseteq \cx$ (which also implies $\cx^{\perp} \subseteq
\cz$). In this work, we will always take quantum codes to mean quantum CSS codes, and refer to them
as QLDPC codes when $\cx^{\perp}$ and $\cz^{\perp}$ have generating sets consisting of sparse vectors. The code $\cC =
(\cx, \cz)$ is said to have blocklength $n$ and alphabet size $q$, with the relative distance $\delta$ and the rate $\rho$ defined as,
\begin{align*}
& \delta  ~=~ \frac{1}{n} \cdot \min\inbraces{\abs{c} ~\Big\vert~ c \in (\cx \setminus \cz^{\perp}) \cup (\cz \setminus \cx^{\perp})} 
\\  \text{and} \qquad &\rho ~=~ \frac{1}{n} \cdot \inparen{\dim (\cx) - \dim(\cz^\perp)} \mper
\end{align*}
The codes are said to be \textit{good codes} if both $\delta$ and $\rho$ are constants independent of $n$.
 
The first construction of good QLDPC codes was obtained by the recent breakthrough construction of
Panteleev and Kalachev~\cite{PK22}, which was also closely related to the independent construction
of good locally testable codes by Dinur, Evra, Livne, Lubotzky, and Mozes~\cite{DELLM22}. Subsequent variants of their
constructions by Leverrier and \Zemor~\cite{LZ22, LZ23:decodable} and by Dinur, Hsieh, Lin, and Vidick~\cite{DHLV23} have also obtained simpler descriptions and stronger properties, including \emph{algorithmic} guarantees such as the existence of linear-time (and even parallelizable) unique decoding algorithms, for decoding from errors up to a constant fraction of the blocklength. 
\paragraph{Improved error correction via list decoding.} 
Given the existence of codes with constant relative distance and associated unique decoding algorithms, it is natural to consider what is the \emph{maximum} fraction of errors that can be efficiently corrected, and whether even stronger forms of decoding are possible for such codes. 
%
%
As we saw before, for classical codes, it is possible to correct significantly more than $\delta/2$ fraction
of errors by relaxing the decoding task to \emph{list decoding}~\cite{G01}, where the goal is not to output a single codeword, but possibly a (small) list of codewords within a given error radius $\tau$ of the received word.  
In addition to allowing for recovery from $\tau > \delta/2$ fraction of errors, this is also helpful for several applications in complexity theory~\cite{Trevisan05} where tolerating a larger error radius is significantly more important than recovering a unique codeword.

In the quantum case, no-cloning restrictions forbid duplication of quantum states and require the task of list decoding to be defined more carefully. The appropriate analog in the quantum case is actually the \emph{classical} task of recovering a list of error patterns with at most $\tau$-fraction of errors that can lead to a given received word (or rather, a given syndrome, which corresponds to the output of the parity checks).
The question of constructing list decodable quantum codes (but not necessarily QLDPC codes) was considered by Leung and Smith~\cite{LS08}, and more recently, quantum codes admitting efficient list decoding algorithms were also constructed in the work of Bergamaschi, Golowich and Gunn~\cite{BGG22}.

Another form of decoding for quantum codes, which can go beyond the unique decoding radius $\delta/2$, is where one is still trying to recover a single quantum state, but is allowed to make an exponentially small (in the block-length $n$ of the code) error in the output quantum state. This notion of approximate quantum error-correcting codes (AQECCs) was previously considered by Cr\'{e}pau, Gottesman, and Smith~\cite{CGS05}. 
The work of Bergamaschi, Golowich, and Gunn was also motivated by the construction of AQECCs, and in fact, shows that existing constructions of purity testing codes~\cite{BCGST02} and robust secret sharing schemes~\cite{CDDFS15} can be combined with list decodable quantum codes, to obtain AQECCs (with the same decoding radius).
Thus, constructions of quantum codes with efficient list decoding can be used for both the above forms of error correction, beyond the unique-decoding radius.

%

%
\paragraph{Trading LDPC structure for list decodability.}
The constructions of list decodable codes in~\cite{BGG22} are based on applying and analyzing a
quantum analog of the distance amplification and alphabet reduction procedure of Alon, Edmonds, and
Luby~\cite{AEL95}. 
The quantum analog of the Alon--Edmonds--Luby (AEL) procedure takes a balanced bipartite graph ---
$G=(L,R,E)$ with $\abs{L}=\abs{R}=n$ and degree $d$ --- and two quantum CSS codes: an ``inner code''
$\cC$ in $\F_q^d$ and an ``outer code'' $\cD$ in $\F_{q^k}^n$ (for $k < d$). 
It combines these via concatenation and folding, to construct a new quantum code $\cF$
with improved distance properties.

The construction in \cite{BGG22} applies the AEL procedure using (CSS codes obtained from) folded
Reed--Solomon codes~\cite{GR08} as outer codes, which are known to have optimal list decoding properties as a function of the rate. 
However, since folded Reed--Solomon codes are not LDPC, the list decodable quantum codes and AQECCs in~\cite{BGG22} are not QLDPC codes.
They also consider a variant of their construction, applying the AEL procedure with outer codes
obtained from recent constructions of good (unique-decodable) QLDPC codes. However, the resulting codes can only be list decoded with access to a classical side channel, where one can transfer classical bits without errors. 
Thus, the construction can yield either (capacity-achieving) list decodable quantum codes or QLDPC codes (with optimal unique decoding properties), but not both at the same time. 

While the LDPC property is perhaps more significant for quantum codes, similar bottlenecks in the construction of LDPC codes with good list decoding properties, also arise in classical coding theory. 
List decoding guarantees are often obtained from algebraic constructions over large (polynomial in blocklength) alphabets, which do not necessarily allow for LDPC structure. We saw how to overcome these challenges in \cref{chap:framework}, and in this chapter, we extend those methods to the quantum setting.
\section{Our Results.} 
We construct QLDPC codes that are list decodable up to the Johnson bound, by giving a different analysis of the AEL distance amplification procedure.
In fact, we show that the AEL procedure yields a general method for obtaining list decodable codes, using only \emph{unique-decodability} of the starting code. 
Instantiating this with recent constructions of unique-decodable QLDPC codes, leads to new constructions of list decodable QLDPC codes.

Let $\delta_{\inn}$ and $\delta_{\out}$ denote respectively the distance for the inner and outer
codes used in the AEL procedure.
The analysis in~\cite{BGG22} proved (following the classical analysis) that when $G$
is a $\lambda$-expander, $\cF$ has distance $\delta ~\geq~ \delta_{\inn} - O\inparen{\frac{\lambda}{\delta_{\out}}}$.
However, their proof of list decodability for $\cF$ relied on the list-recoverability of the outer code. 

We show that when the outer code is unique-decodable up to distance $\delta_{\dec}$, the code $\cF$ can be list decoded up to distance $\mathcal{J}\inparen{\delta_{\inn} - \frac{\lambda}{\delta_{\dec}}}$, where $\mathcal{J}(\delta) = 1 - \sqrt{1-\delta}$ denotes the (alphabet-free) Johnson bound.
\begin{theorem}[Informal version of \cref{thm:list_decoding_quantum}]
Let $\cF = (\fx, \fz)$ be the CSS code obtained by applying quantum AEL amplification to the outer CSS code $\cD = (\dx,\dz)$ and inner CSS code $\cC = (\cx,\cz)$ using an $(n,d,\lambda)$--expander graph. 
Let $q$ and $\delta_{\inn}$ be respectively the alphabet size and distance for $\cC$ and let $\cD$ be efficiently unique-decodable from error radius $\delta_{\dec}$. Then, for any $\eps > 0$, $\cF$ can be decoded from error radius $\tau = \mathcal{J}\inparen{\delta_{\inn} - \frac{\lambda}{\delta_{\dec}}} - \eps$, in time $n^{q^{O(d)}/\eps^4}$.
\end{theorem}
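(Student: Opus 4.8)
The plan is to run the general framework of \cref{chap:framework} for the quantum AEL code, with one essential new ingredient to cope with the fact that distances for CSS codes are only defined modulo $\mathcal{F}_Z^{\perp}$. Without loss of generality we argue about the $X$-side, so fix a received syndrome and an arbitrary representative $g$ consistent with it; the object to recover is the list $\calL$ of cosets $h + \mathcal{F}_Z^{\perp}$ (with $h \in \mathcal{F}_X$) such that $\min_{w \in \mathcal{F}_Z^{\perp}} \dis(g, h + w) < \tau$, where $\tau = \calJ(\delta)$ and $\delta \defeq \delta_{\inn} - \lambda / \delta_{\dec}$. First I would introduce degree-$t$ pseudocodewords for $\mathcal{F}_X$ --- pseudoexpectation operators over the symbol variables that respect the $d$-local constraints placing each left-vertex neighborhood in $\mathcal{C}_X$ --- and run the algorithmic covering lemma (\cref{lem:cover_for_list_ael}, via \cref{lem:abstract_algorithmic_covering_ael}) using the embedding $\chiAEL$ of the received word $g$. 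This produces, in time $n^{\bigoh(t)}$, a pseudocodeword $\tildeEx{\cdot}$ whose right-distance to every $h$ appearing in $\calL$ is at most $\delta - \bigomega(\eps)$.

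The heart of the proof is a quantum analogue of \cref{lem:AEL_amplification}: an $\eta$-good pseudocodeword should satisfy $\dis^R(\tildeEx{\cdot}, h + \mathcal{F}_Z^{\perp}) \geq \delta_{\inn} - (\lambda + \eta)/\dis^L(\tildeEx{\cdot}, h + \mathcal{F}_Z^{\perp})$, a dichotomy that either places the pseudocodeword very close to the coset or very far from it. The classical proof compares a left-distance with a right-distance through the expander mixing lemma for pseudoexpectations (\cref{lem:eml_for_pexp}) and the local distance of the inner code; here it breaks because $\mathcal{F}_Z^{\perp}$ itself contains low-weight vectors, so $\dis^L$ and $\dis^R$ to a fixed representative $h$ can be much larger than the true coset distance, and the trivial inequality $\dis^L(\tildeEx{\cdot},h) \geq \min_w \dis^L(\tildeEx{\cdot}, h+w)$ is not a low-degree statement. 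To repair this I would first decompose $\mathcal{F}_Z^{\perp}$ as the sum of (i) the subspace spanned by the explicit \emph{AEL generators} introduced by the amplification, and (ii) the subspace obtained by lifting the generators of the base-code dual $\mathcal{C}_Z^{\perp}$ through the concatenation-and-folding map of the construction. Given a pseudocodeword and a fixed target $h$, I would then define \emph{explicitly} a \emph{partial minimizer} $h'' = h''(h, \tildeEx{\cdot})$ by subtracting off only the component along the AEL generators that reduces the left-distance --- crucially choosing $h''$ by a formula rather than as the true closest point --- so that both $\dis^L(\tildeEx{\cdot},h) \geq \dis^L(\tildeEx{\cdot}, h'')$ and the right-distance lower bound for $h''$ admit sum-of-squares proofs of degree $q^{\bigoh(d)}/\eps^4$. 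Because the AEL generators are highly structured and explicit, degree-$q^{\bigoh(d)}/\eps^4$ SoS can carry out this reasoning with no appeal to algebra; the remaining base-code component is absorbed separately, since the low-weight vectors of $\mathcal{C}_Z^{\perp}$ are exactly the coset shifts that the unique decoder of $\mathcal{D}$ already tolerates within its radius $\delta_{\dec}$.

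A design point that keeps the algorithm efficient is that it never computes any $h''$: there are many partial minimizers, one per choice of target, and instead the algorithm manipulates only a quantity $\Gamma(\cdot)$ that is invariant under shifts by $\mathcal{F}_Z^{\perp}$, so that $\Gamma(h) = \Gamma(h'')$, with $h''$ appearing only inside the analysis that certifies the dichotomy. With the dichotomy established, I would proceed exactly as in \cref{sec:decoding}: condition the covering pseudocodeword on a constant-size random vertex set (\cref{sec:conditioning}) to obtain an $\eta$-good solution $\dupPE{\cdot}$ still close to a fixed target coset, use the distance dichotomy to force $\dis^L(\dupPE{\cdot}, h + \mathcal{F}_Z^{\perp}) < \delta_{\dec} - \bigomega(\eps)$, round the local distributions (\cref{lem:decoding_from_distributions}, derandomized via \cref{lem:derandomized_decoding_from_distributions}) to a word within the base-code unique-decoding radius, and finish with one call to the efficient unique decoder of $\mathcal{D}$; this recovers $h$ up to $\mathcal{F}_Z^{\perp}$. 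Enumerating over the constantly many random choices (the conditioning level $k$, the vertex set and local assignment, and the rounding threshold) gives a deterministic algorithm, and combining the SoS degree $t = q^{\bigoh(d)}/\eps^4$ with the running time $n^{\bigoh(t)}$ and the polynomial decoder time yields the claimed bound $n^{q^{\bigoh(d)}/\eps^4}$.

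I expect the construction of the partial minimizer to be the main obstacle. One must pick $h''$ by an explicit recipe --- not as the genuine closest point of $h + \mathcal{F}_Z^{\perp}$, whose optimality is not an SoS-provable fact --- such that the reduction inequality $\dis^L(\tildeEx{\cdot},h) \geq \dis^L(\tildeEx{\cdot},h'')$ and the quantitative lower bound $\dis^R(\tildeEx{\cdot},h'') = \bigomega(\dis^L(\tildeEx{\cdot},h''))$ both survive as low-degree certificates \emph{simultaneously}, while keeping $h''$ inside the base-code coset so the unique decoder can complete the recovery. Verifying that the explicit AEL generators carry enough combinatorial structure for degree-$q^{\bigoh(d)}/\eps^4$ SoS, and that the interaction between the AEL-generator subspace and the lifted base-code subspace introduces no new algebraic obstruction (a phenomenon known to be a genuine source of SoS lower bounds for some quantum LDPC codes \cite{HL22}), is the delicate technical core of the argument.
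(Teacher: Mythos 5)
Your proposal follows essentially the same route as the paper: decompose $\mathcal{F}_Z^{\perp}$ into the inner-code part $\F_q^n \otimes \mathcal{C}_Z^{\perp}$ and the lifted outer part $\wphix(\mathcal{D}_Z^{\perp})$, handle the former with an explicitly defined coset-preserving, monotone partial minimizer that appears only in the analysis (the paper's $\partmin(\zee,h)$, with the shift-invariant quantity being the local inversion map $\unconx$), handle the latter via the outer unique decoder, and otherwise run the covering-lemma, conditioning, and rounding pipeline of \cref{chap:framework}. This matches the paper's proof of \cref{thm:list_decoding_quantum} in both structure and the key technical ingredient, so there is nothing substantive to add.
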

For applications, one chooses $\lambda$ to be sufficiently small (and $q,d$ to be sufficiently large constants), so that the distance is $\delta \approx \delta_{\inn}$, and the list decoding radius is $\tau \approx \mathcal{J}(\delta_{\inn}) \approx \mathcal{J}(\delta) > \delta/2$.
Thus, the above codes can be list decoded up to the Johnson bound, which is strictly greater than the unique decoding radius.
When the (constant-sized) inner code $\cx,\cz \in \F_q^d$ is chosen to be an optimal CSS code matching the Singleton bound, we get $\delta \approx \delta_{\inn} \approx 1/2$ and $\tau \approx \mathcal{J}(\delta_{\inn}) \approx 0.293 > 0.25$.
It is not known how to construct QLDPC codes, or indeed even classical LDPC codes, which can be efficiently list decoded beyond the Johnson bound. 
\paragraph{List decoding from distance proofs.}
Our results are based on the framework developed in \cref{chap:framework} for list decoding of (classical) codes up to
the Johnson bound, using \enquote{covering lemmas} and \enquote{proofs of distance} that can be captured by convex relaxations in the SoS hierarchy. 
The covering lemma ensures that the solution to the SoS convex relaxation captures sufficient
information about each element of the list. Given the covering, the list elements are then isolated
using the fact that different codewords are sufficiently far from each other.
The SoS framework requires expressing the proof of distance for the code being decoded, in terms of inequalities obtained via non-negativity of sum-of-squares of low-degree polynomials (in formal variables corresponding to codeword symbols). 

For the case of codes obtained via the AEL distance amplification procedure, the framework in \cref{chap:framework} gives a reduction --- using convex relaxations in the SoS hierarchy --- from the task of list decoding the resulting codes to that of unique-decoding the outer code.
Our result can be viewed as a quantum analog of this reduction.

\subsection{Challenges in the Quantum setting} 
Extending the above framework and reduction for quantum codes faces some important bottlenecks. There are two key aspects to this generalization: one is that the notion of distance itself is different for quantum codes, and the second is that we need a proof that is expressible as sum-of-squares of low-degree polynomials.
%

For a classical code $C \subseteq \F_q^n$, the distance between $u,v \in C$ is defined as the Hamming distance $\Delta(u,v)$ between these vectors, which is easily expressible as a low-degree polynomial in (a real-valued embedding of) the coordinates of $u$ and $v$. 
For a CSS quantum code $\cF = (\fx,\fz)$ the distance between (say) $u,v \in \fx$ is defined as the distance between \emph{cosets} of $\fz^{\perp}$ containing $u$ and $v$ i.e. $\min_{w \in {\cF^\perp_Z}} \Delta(u, v-w)$. It is this minimization over the $w \in \fz^{\perp}$ that is difficult to capture in terms of low-degree polynomials over real variables (which are the objects appearing in solutions to SoS relaxations).

At this point, the reader may wonder how distances for QLDPC codes are proved in the first place. The proof of~\cite{BGG22} uses the relation between the unique decoding radii and avoids having to deal with $\fz^{\perp}$ explicitly. While it is an easy argument that the unique-decoding radius is half of the distance, this fact is not obviously captured as a statement in terms of low-degree polynomials.

When trying to prove distance between $u$ and $v$ in $\fx$, another technique is to replace $u$ by $u'$, where $u'$ is the closest element of $u+\fz^{\perp}$ to $v$. Clearly $\Delta(u,v)\geq \Delta(u',v)$. The optimality with respect to closeness to $v$ comes with additional structural properties for $u'$, and these structural properties are then used to prove a classical-like distance between $u'$ and $v$. For example, the distance between $u'$ and $v$ may be captured as a quadratic inequality $\Delta(u'v) \cdot (\Delta(u',v)-\delta)\geq 0$.

This technique was used in the proof of \cite{LZ22}. Interestingly, we expect that the proof of the quadratic inequality above is again a low-degree SoS proof since it uses spectral expansion, but the trivial statement $\Delta(u,v)\geq \Delta(u',v)$ may not have a low-degree SoS proof!

We note that minimization over cosets such as $\fz^{\perp}$ is of course implicit or explicit in
distance proofs of most QLDPC codes, including~\cite{BGG22}. The only difference in our case is
that we need to explicitly understand the part that can be captured by SoS relaxations.
To this end, we give a slightly more explicit linear algebraic description of the quantum codes
constructed by the AEL procedure in \cref{sec:ael}. We then use this description to give a new proof 
of distance amplification for quantum AEL.

%
%
The more explicit description of the concatenation and folding operations involved in AEL allows us to show that the space $\fz^{\perp}$ for such codes can be decomposed as $\fz^{\perp} = \cF_{\out} + \cF_{\inn}$, where $\cF_{\out}$ is a linear image of the space $\dz^{\perp}$ obtained from the outer code, and $\cF_{\inn} = \F_q^n \otimes \cz^{\perp}$ is obtained from $\cz^{\perp} \subseteq \F_q^d$ corresponding to the inner code.

We carry out the minimization over $\cF_{\inn}$ using an intermediate object we call a ``partial
minimizer''. Suppose we wish to prove distance between $u$ and $v$. As explained above, one approach would be to replace $u$ by $u'=\argmin_{u'\in u+\cF_{\inn}} \Delta(u',v)$, but in this case $\Delta(u,v)\geq \Delta(u',v)$ may not be a low-degree SoS proof. Partial minimizer can be seen as another $u''\in u+\cF_{\inn}$ such that $\Delta(u,v)\geq \Delta(u'',v)$ is a low-degree SoS statement (and in particular, is true), and that $u''$ has sufficient structure to still carry out a low-degree SoS distance proof, as one could have with $u'$.

When decoding, we use an SoS version of this partial minimizer to argue distance for pseudocodewords. Note that this partial minimizer is an object only needed for analysis, and the algorithm need not compute it explicitly. We then ``round'' the SoS solution to an element of $\F_{q^k}^n$, which is the ambient space for the outer code $\dx$, and the minimization over (the linear image of) $\dz^{\perp}$ is carried out implicitly by the (efficient) unique-decoder for the outer code.

\paragraph{Related work.} As mentioned earlier, our work is closely related to that of Bergamaschi, Golowich, and Gunn~\cite{BGG22} which considered AEL amplification in the context of quantum codes. Their results on distance amplification of QLDPC codes already yield codes with distance $\delta \approx 1/2$ and thus with unique-decoding radius $\delta/2 \approx 1/4$.
Our work is motivated by obtaining list decodability for such codes, beyond the unique-decodability threshold. 
The quantum version of AEL has also since been used in \cite{WLH23, GG23}.

Our techniques are based on the framework from \cref{chap:framework}, which uses SoS in the context of list decoding classical codes obtained via the AEL distance amplification procedure, up to the Johnson bound. Similar techniques, which can be seen as an SoS implementation of a distance proof, were also used by Richelson and Roy~\cite{RR23} for decoding the $\epsilon$-balanced code construction of Ta-Shma~\cite{TS17}. Both of these can be seen as instances of the more general ``proofs to algorithms'' paradigm used in the application of the sum-of-squares method to several statistical and combinatorial problems~\cite{FKP19}.
The use of spectral algorithms and semidefinite programming for list decoding was also used in the earlier works of Guruswami and Indyk~\cite{GI03} and Dinur \etal~\cite{DHKLNTS19}.

Another important work in understanding the limitations of SoS for reasoning about quantum codes, is the lower bound construction by Hopkins and Lin~\cite{HL22}. 
While their work shows that SoS relaxations cannot distinguish between $\fx$ and $\fz^{\perp}$ for some quantum codes constructed using expanders, our work can be viewed as proving that this is indeed possible with access to a decoder for the outer code, when $\cF = (\fx,\fz)$ is obtained via the AEL construction.

\section{Preliminaries}
\label{sec:prelims}

Throughout the paper, we will work with $\F_q$-vector spaces with the standard basis (unless specified otherwise). The spaces will be equipped with the Hamming metric and the canonical bilinear form $\angles{u,v} = \sum_i u_iv_i$, with respect to this basis. We define, $V^\perp = \Set{w\mid \ip{v}{w} = 0,\, \forall\, v \in V}$. We will also work with $\R$-vector spaces equipped with the expectation inner product and norm.

\subsection{Quantum CSS codes and list decoding}
%
%
Calderbank--Shor, and Steane (independently) showed that a pair of subspaces over $\F_q$ define a
quantum code if they satisfy a certain orthogonality condition. This construction is known as the CSS construction, and CSS codes form a subclass of \textit{stabilizer codes}.    
 



\begin{definition}[CSS Codes] Let $\cx,\cz \subseteq \F_q^n$ be $\F_q$-linear subspaces such that $\cz^\perp \subseteq \cx$. Then, $\cC = (\cx, \cz)$ defines a $[[n,k,\delta n]]_{q}$ quantum code, where 
\[ 
k =  \dim (\cx) - \dim(\cz^\perp), \;\; \delta = \frac{1}{n} \min \braces[\big]{ \abs{v}\;\big\vert\; v \in (\cx\setminus \cz^\perp) \cup (\cz\setminus \cx^\perp)}.
\]
 The CSS code is \textit{low-density parity check} (LDPC) if there exist (row and column)-sparse parity check matrices, $\hx, \hz$, over $\F_q$ such that $\cx = \ker \hx$ and $\cz = \ker \hz$.
\end{definition}

 

\paragraph{Vector space CSS codes}
The works~\cite{BGG22, GG23} show that the CSS construction can be generalized to the setup where the $\F_q$-linear subspaces have coefficients in a vector space $\F_q^b$.  This generalization enables the extension of the classical coding theoretic operation of folding to the quantum setup. We will work with this general definition throughout the paper. 

\paragraph{Folding} \textit{Folding} is an element-wise syntactic operation in which vectors in $\F_q^{bn}$ are viewed as vectors in $\parens*{\F_q^b}^n$. Formally, let $V \subseteq \F_q^{bn}$ be a vector space over $\F_q$. For a vector $v \in V$, denote \[\fold(v) = (v^{(1)}, \cdots, v^{(n)}) \in \parens*{\F_q^b}^n \text{ where } v^{(i)} = (v_{(i-1)b+1},\cdots, v_{ib})\in \F_q^b.\] 


\begin{definition}[Vector Space CSS code] Let $d$ be a positive integer and let $\cC = (\cx,\cz)$ be a $[[nb,k,\delta nb]]_q$ CSS code. Then, $\fold(\cC)$ defines a $[[n,k/b,\delta' n]]_{q,b}$ vector space CSS code wherein, 
 \[ 
\delta' = \frac{1}{n} \min \braces[\big]{ \abs*{\fold(v)}\;\big\lvert\; v \in (\cx\setminus \cz^\perp) \cup (\cz\setminus \cx^\perp)}.
\]
The weight, $\abs*{c}$, now is the Hamming weight over the alphabet $q^b$, \ie $|\fold(v)| = \abs{\braces{i \mid v^{(i)} \neq 0}}$. 	
\end{definition}

We will use the notation $[[n,k/b,\delta' n]]_{q,b}$ to specify that the code consists of $\F_q$-linear spaces that have been folded in blocks of size $b$. Unless specified otherwise, the folding will be done sequentially to the coordinates according to the fixed basis. {We will use the shortened notation $[[n,k]]$ when we do not require to address the distance of the code.  We will drop the notation $\fold(\cC)$ when we explicitly mention that a code is a vector space CSS code.}

\begin{remark} 	
 The \enquote{folded dimension} changes to $k/b$ to make it consistent with the classical notion. However,  both dimension and blocklength change by the same factor $b$, so that the rate is unchanged. Moreover, the dimensions as $\F_q$-subspaces remain unchanged after folding. 
\end{remark}

\paragraph{List decoding} We now formalize the notion of list decoding for quantum CSS codes and the folded codes. This is inspired by the classical definition but there are a couple of crucial changes in the quantum setup, (i) the input is no longer a corrupted codeword but a \textit{syndrome}, and (ii) the output is required to be a list \textit{containing} the list of possible errors. The first constraint is necessitated by the no-cloning theorem whereas the second relaxation is required as pruning the list can be hard, unlike in the classical case.

\begin{definition}[List of codewords]
Let $\cC = (\cx,\cz)$ be a vector space CSS code over $\parens*{\F_q^d}^n$. Let $B(g,\tau)$ denote a ball of fractional radius $\tau$, in the Hamming metric over $q^d$, around a vector $g$. For any pair of vectors $\subs{g}{X}, \subs{g}{Z} \in \parens*{\F_q^d}^n$, we define the following lists as lists of cosets of codewords, 
\begin{align*}
\calL_X (\subs{g}{X}, \tau) ~&=~ \braces[\Big]{\subs{h}{X} + \cz^\perp\, \mid \,\subs{h}{X} \in \cx,\,\; B(\subs{g}{X},\tau) \cap \subs{h}{X} + \cz^\perp \neq \emptyset},\\
\calL_Z (\subs{g}{Z}, \tau) ~&=~ \braces[\Big]{\subs{h}{Z} + \cx^\perp\, \mid \,\subs{h}{Z} \in \cx,\,\; B(\subs{g}{Z},\tau) \cap \subs{h}{Z} + \cx^\perp \neq \emptyset},\\
\calL (\subs{g}{X}, \subs{g}{Z},\tau) ~&=~ \calL_X (\subs{g}{X},\tau)\, \times\, \calL_Z (\subs{g}{Z},\tau).
\end{align*}
\end{definition}


We also define $\calL_e (\subs{g}{X}, \subs{g}{Z},\tau)$ as a list of cosets of errors, which is just the codeword list $\calL(\subs{g}{X}, \subs{g}{Z},\tau)$ shifted by $\subs{g}{X}$ and $\subs{g}{Z}$ respectively. When decoding from syndromes, all information about the original codeword is lost since the syndrome only depends on the error pattern. Therefore, one can only hope to output a list of errors rather than a list of codewords. In other words, since syndrome is invariant to translation by codewords, so should the output of a list decoding algorithm, and $\calL_e (\subs{g}{X}, \subs{g}{Z},\tau)$ is the translation-invariant version of $\calL (\subs{g}{X}, \subs{g}{Z},\tau)$.
\begin{align*}
\calL_e (\subs{g}{X}, \subs{g}{Z},\tau) ~&=~ \braces[\Big]{\parens[\big]{\subs{h}{X} -\subs{g}{X}  + \cz^\perp, \subs{h}{Z} - \subs{g}{Z}+ \cx^\perp}\, \big\vert \, \parens[\big]{\subs{h}{X} + \cz^\perp, \subs{h}{Z} +\cx^\perp} \in \calL (\subs{g}{X}, \subs{g}{Z},\tau)}\\[3pt]
~&=~ \parens*{\calL_X (\subs{g}{X},\tau) -\subs{g}{X}} \times \parens*{\calL_Z (\subs{g}{Z},\tau)-\subs{g}{Z}}\\[4pt]
~&=~\calL_e (\subs{g}{X}', \subs{g}{Z}',\tau) \qquad \text{for any}\qquad  \subs{g}{X}- \subs{g}{X}' \in \cx, \, \subs{g}{Z}- \subs{g}{Z}' \in \cz.
\end{align*} 
The last equality follows as the codes $\cx, \cz$ are linear and the Hamming metric is translation-invariant.


\begin{definition}[List Decodable vector space CSS codes, also in~\cite{BGG22}]
A quantum CSS code $\cC$ is $(\tau,L)$-\textit{list decodable} if for every $(\subs{g}{X}, \subs{g}{Z})$, the list size is bounded, \ie $\abs{\calL (\subs{g}{X}, \subs{g}{Z},\tau)} \leq L$.

 Fix a pair of parity check matrices $(\hx,\hz)$. We say that a code is \textit{efficiently list decodable} upto fractional radius $\tau$ if given $(\hx\subs{g}{X}, \hz\subs{g}{Z})$ such that $\subs{g}{X},\subs{g}{Z} \in B(0,\tau)$, there exists a $\poly(n)$-time algorithm that outputs a list that contains $\calL_e(\subs{g}{X}, \subs{g}{Z},\tau)$.  
\end{definition}

 \begin{observation} Assume that for any $(\subs{g}{X}, \subs{g}{Z})$, one can output the lists $\calL_X (\subs{g}{X}, \tau)$ and $\calL_Z (\subs{g}{Z}, \tau)$ in $\poly(n)$-time.  Then, the quantum CSS code is efficiently list decodable. 
 \end{observation}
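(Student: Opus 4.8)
The statement to prove is the Observation: if for every pair $(\subs{g}{X}, \subs{g}{Z})$ one can produce the lists $\calL_X(\subs{g}{X},\tau)$ and $\calL_Z(\subs{g}{Z},\tau)$ in $\poly(n)$ time, then the vector space CSS code is efficiently list decodable. The plan is essentially an unpacking of the definitions: the notion of ``efficiently list decodable'' is phrased in terms of syndromes $(\hx \subs{g}{X}, \hz \subs{g}{Z})$ and requires outputting a list containing $\calL_e(\subs{g}{X}, \subs{g}{Z},\tau)$, whereas the hypothesis gives us the coset lists $\calL_X, \calL_Z$ themselves. So the work is to (i) recover suitable representatives $\subs{g}{X}, \subs{g}{Z}$ from the syndromes, (ii) invoke the hypothesized subroutines, and (iii) assemble and shift the outputs into $\calL_e$.

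First I would address the syndrome-to-representative step. Given $\hx \subs{g}{X}$, since $\hx$ is a fixed parity check matrix for $\cx$, we can solve the linear system $\hx w = \hx \subs{g}{X}$ over $\F_q$ in $\poly(n)$ time (e.g.\ Gaussian elimination) to obtain \emph{some} vector $\subs{g}{X}'$ with $\hx \subs{g}{X}' = \hx \subs{g}{X}$, equivalently $\subs{g}{X} - \subs{g}{X}' \in \ker \hx = \cx$; similarly obtain $\subs{g}{Z}'$ with $\subs{g}{Z} - \subs{g}{Z}' \in \cz$. The key point, already recorded in the displayed chain of equalities just before the Observation, is that $\calL_e(\subs{g}{X}, \subs{g}{Z}, \tau) = \calL_e(\subs{g}{X}', \subs{g}{Z}', \tau)$ whenever $\subs{g}{X} - \subs{g}{X}' \in \cx$ and $\subs{g}{Z} - \subs{g}{Z}' \in \cz$ — this follows from linearity of $\cx, \cz$ and translation-invariance of the Hamming metric. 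Hence it suffices to produce a list containing $\calL_e(\subs{g}{X}', \subs{g}{Z}', \tau)$, and we may work entirely with the recovered $\subs{g}{X}', \subs{g}{Z}'$ even though the true error-inducing word is unknown.

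Next I would apply the hypothesis to $\subs{g}{X}'$ and $\subs{g}{Z}'$: in $\poly(n)$ time we obtain $\calL_X(\subs{g}{X}', \tau)$ and $\calL_Z(\subs{g}{Z}', \tau)$ as explicit lists of cosets (each coset represented, say, by one representative vector, which we can store since $n$ and $\poly(n)$ list size make this feasible). Then form the shifted lists $\calL_X(\subs{g}{X}', \tau) - \subs{g}{X}'$ and $\calL_Z(\subs{g}{Z}', \tau) - \subs{g}{Z}'$ — concretely, subtract $\subs{g}{X}'$ from each stored representative (and similarly for $Z$) — and output their Cartesian product. By the middle equality in the pre-Observation display, this Cartesian product is exactly $\calL_e(\subs{g}{X}', \subs{g}{Z}', \tau)$, which by the previous paragraph equals $\calL_e(\subs{g}{X}, \subs{g}{Z}, \tau)$; in particular the output contains $\calL_e(\subs{g}{X}, \subs{g}{Z}, \tau)$, as required, and the list size is $\abs{\calL_X} \cdot \abs{\calL_Z}$, which is $\poly(n)$ since each factor is produced by a $\poly(n)$-time algorithm and hence has $\poly(n)$ size. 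The total running time is the sum of a Gaussian elimination step and two calls to the hypothesized subroutine plus a linear-time post-processing, all $\poly(n)$.

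There is no real obstacle here — the statement is a definitional reconciliation — but the one place to be careful is the translation-invariance argument: one must confirm that the lists $\calL_X, \calL_Z$ depend on the input vector only through its coset modulo $\cx$ (resp.\ $\cz$), so that replacing $\subs{g}{X}$ by $\subs{g}{X}'$ genuinely does not change $\calL_e$. This is immediate from the identity $B(\subs{g}{X}', \tau) \cap (\subs{h}{X} + \cz^\perp) = \big( B(\subs{g}{X}, \tau) \cap (\subs{h}{X} + \subs{g}{X}' - \subs{g}{X} + \cz^\perp) \big) + (\subs{g}{X}' - \subs{g}{X})$ together with $\subs{g}{X}' - \subs{g}{X} \in \cx$ and $\cz^\perp \subseteq \cx$, which shows the non-emptiness condition is preserved up to re-indexing the coset by an element of $\cx$; when one subtracts $\subs{g}{X}'$ (resp.\ $\subs{g}{X}$) at the end, the two shifted lists coincide. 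I would state this as a short lemma and then the Observation follows in a couple of lines.
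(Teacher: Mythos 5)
Your proposal is correct and follows essentially the same route as the paper: use Gaussian elimination on the syndromes to recover representatives $\subs{g}{X}', \subs{g}{Z}'$ with $\subs{g}{X}-\subs{g}{X}'\in\cx$ and $\subs{g}{Z}-\subs{g}{Z}'\in\cz$, invoke the assumed list decoders on these, shift by the representatives, and output the Cartesian product, using the translation-invariance identity $\calL_e(\subs{g}{X},\subs{g}{Z},\tau)=\calL_e(\subs{g}{X}',\subs{g}{Z}',\tau)$ from the display preceding the Observation. The extra care you take in justifying that the coset lists depend only on the input modulo $\cx$ (resp.\ $\cz$) is a welcome elaboration of a step the paper leaves implicit, but it is the same argument.
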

 \begin{proof}
Given, $(\hx\subs{g}{X}, \hz\subs{g}{Z})$ one use Gaussian elimination to compute $(\subs{g}{X}', \subs{g}{Z}')$ such that $\subs{g}{X} - \subs{g}{X}' \in \cx$ and $\subs{g}{Z} - \subs{g}{Z}' \in \cz$. By the assumption we can compute $\calL_X (\subs{g}{X}',\tau)$ and $\calL_Z (\subs{g}{Z}',\tau)$, and therefore output, 
$\calL_e (\subs{g}{X}, \subs{g}{Z},\tau) = \parens[Bigg]{\calL_X (\subs{g}{X}',\tau) -\subs{g}{X}'} \times \parens[Bigg]{\calL_Z (\subs{g}{Z}',\tau)-\subs{g}{Z}'}$.
 \end{proof} 
 
 In summary, we can reduce the task of list decoding the classical codes $\fold(\ex)$ and $\fold(\ez)$ but \textit{upto cosets}. Moreover, our decoder will be symmetric and henceforth we will focus on the task of $X$-decoding.

 \begin{summary}[List decoding vector space codes]\label{sum:list_dec_vector_space}
 Let $\fold(\cE)$ be a vector space CSS code over $\parens*{\F_q^d}^n$ with given parity-check matrices $(\hx,\hz)$. The task of efficient list decoding $\fold(\cE)$ upto radius $\tau$ reduces to the following two tasks,
 \begin{itemize}
 	\item \textup{\textbf{X-decoding}}: Given as input $\subs{g}{X} \in \parens*{\F_q^d}^n$, output a list, $\calL_X'$ of cosets of codewords such that, 
 	\[ \calL_X (\subs{g}{X}, \tau)~\subseteq~ \calL_X' ~\subseteq~ \fold(\ex)/\fold(\ez^\perp).\]   
 	\item \textup{\textbf{Z-decoding}}: Given as input $\subs{g}{Z} \in \parens*{\F_q^d}^n$, output a list, $\calL_Z'$ of cosets of codewords such that,
 	\[ \calL_Z (\subs{g}{Z}, \tau)~\subseteq~ \calL_Z' ~\subseteq~ \fold(\ez)/\fold(\ex^\perp).\] 
 \end{itemize}
 	
 \end{summary}

\subsection{Duality preserving maps} 

To generalize the notion of concatenation to quantum CSS codes, we will need the notion of duality-preserving maps. These are needed to properly define the concatenated code such that the orthogonality constraint, $\cz^\perp\subseteq \cx$, is satisfied, thereby defining a quantum CSS code. This has been used in earlier works, for example, see~\cite{Ham08}. 

{A bilinear map $\ip{\cdot}{\cdot} :V \times W \to \F_q $ over $\F_q$-vector spaces $V,W$, is \textit{non-degenerate} if for any non-zero $x \in V$, the map $\ip{x}{\cdot} : W\to \F_q$ is not identically zero. Similarly,  for a non-zero $y \in W$, the map $\ip{\cdot}{y} : V\to \F_q$ is not the zero map.
}
\begin{definition}[Dual Systems and Basis]\label{def:dual}
	Let $V,W$ be $\F_q$-vector spaces of equal dimension. Let $\ip{\cdot}{\cdot}$ be a non-degenerate bilinear map $V \times W \to \F_q$. Then, $(V,W,\ip{\cdot}{\cdot})$ defines a \textit{dual system}. A basis $\{v_1,\cdots, v_b\}$ of $V$, and $\{w_1,\cdots, w_b\}$ of $W$ is said to be \textit{dual} if $\ip{v_i}{w_j} = 
	\delta_{ij}$ for all $i,j \in [b]$.\end{definition}

{The spaces $V,W$ are called \textit{dual spaces} as the bilinear map gives an isomorphism $W \to V^*$ defined as $w \mapsto \ip{w}{\cdot}$, which also proves the existence of such basis. }This map is injective as the bilinear map is non-degenerate. Now, one can use the canonical dual basis of $V^*$. Using a dual basis, one can construct \textit{duality-preserving maps} which are what we will need. 
%
\begin{definition}[Duality preserving map]\label{def:dual_maps}
	Let $(V_1,V_2, \ip{\cdot}{\cdot}_V)$ and $(W_1,W_2, \ip{\cdot}{\cdot}_W)$ be two dual systems. A pair of linear maps $(\phi_1,\phi_2)$ where $\phi_i: V_i \to W_i $ are \textit{duality preserving} if,
	\begin{equation}
   \ip{u}{v}_V = \ip{\phi_1(u)}{\phi_2(v)}_{W}  \; \; \forall\, u,v \in V.
   \end{equation}
\end{definition}

\begin{claim}\label{claim:dual_map}
	Let $(V_1, V_2)$ and be $(W_1, W_2)$ dual systems along with their pairs of dual bases. Then, $\phi_i: V_i \mapsto W_i$ that acts as identity with respect to these dual bases is a duality-preserving map.  
\end{claim}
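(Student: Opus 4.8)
\textbf{Proof proposal for Claim \ref{claim:dual_map}.}

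The plan is to simply unwind the definitions and use bilinearity of the two forms. Fix dual bases $\{v_1,\dots,v_b\}$ of $V_1$ and $\{v_1^*,\dots,v_b^*\}$ of $V_2$ with $\ip{v_i}{v_j^*}_V = \delta_{ij}$, and dual bases $\{w_1,\dots,w_b\}$ of $W_1$ and $\{w_1^*,\dots,w_b^*\}$ of $W_2$ with $\ip{w_i}{w_j^*}_W = \delta_{ij}$ (these exist by \cref{def:dual} and the isomorphism $W \to V^*$ discussed after it; note both dual systems have the same dimension $b$, so the index sets match up). Define $\phi_1\colon V_1 \to W_1$ as the linear extension of $v_i \mapsto w_i$ and $\phi_2\colon V_2 \to W_2$ as the linear extension of $v_i^* \mapsto w_i^*$; these are well defined precisely because they are specified on a basis, and this is the meaning of ``$\phi_i$ acts as the identity with respect to the dual bases.''

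The one computation to carry out is the following. Given arbitrary $u \in V_1$ and $v \in V_2$, expand $u = \sum_i a_i v_i$ and $v = \sum_j b_j v_j^*$ in the chosen bases. Then by bilinearity of $\ip{\cdot}{\cdot}_V$,
\[
\ip{u}{v}_V ~=~ \sum_{i,j} a_i b_j \ip{v_i}{v_j^*}_V ~=~ \sum_i a_i b_i .
\]
On the other side, $\phi_1(u) = \sum_i a_i w_i$ and $\phi_2(v) = \sum_j b_j w_j^*$ by linearity of $\phi_1,\phi_2$, so bilinearity of $\ip{\cdot}{\cdot}_W$ gives
\[
\ip{\phi_1(u)}{\phi_2(v)}_W ~=~ \sum_{i,j} a_i b_j \ip{w_i}{w_j^*}_W ~=~ \sum_i a_i b_i .
\]
Comparing the two expressions yields $\ip{u}{v}_V = \ip{\phi_1(u)}{\phi_2(v)}_W$ for all $u,v$, which is exactly the condition in \cref{def:dual_maps}, so $(\phi_1,\phi_2)$ is duality preserving.

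There is essentially no hard step here; the only thing to be careful about is the bookkeeping — making sure the dual basis of $V_2$ is indexed so that $v_i^*$ pairs with $v_i$ (and likewise on the $W$ side), and that $\phi_2$ is defined using the dual basis of $V_2$ rather than the primal one, so that the Kronecker deltas line up. If one wanted a basis-free phrasing, one could instead say: $\phi_1$ is the composite $V_1 \xrightarrow{\sim} V_2^* \xrightarrow{\sim} W_2^* \xrightarrow{\sim} W_1$ induced by the two nondegenerate pairings together with $\phi_2^*$, but the explicit coordinate argument above is the cleanest way to present it.
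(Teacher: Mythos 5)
Your proof is correct and takes essentially the same route as the paper: reduce to basis vectors via bilinearity and observe that both pairings give Kronecker deltas. The paper simply states this reduction in one line, whereas you write out the expansion of arbitrary $u,v$ explicitly; the content is identical.
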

\begin{proof}
Since the condition is bilinear, it suffices to prove it for any pair of basis vectors of $V_1, V_2$. 
\begin{equation*}
   \ip{u_i}{v_j}_V = \delta_{ij} =    \ip{x_i}{y_j}_V = \ip{\phi_1(u_i)}{\phi_2(y_j)}_{W}.\qedhere
   \end{equation*} 
\end{proof}


{ We now define two different dual systems we work with. Both use the canonical bilinear form over $\F_q^n$, albeit for different $n$. The first is the space $(\F_q^b,\F_q^b)$ which forms a dual system with the elementary basis as a dual basis. The second system will be subspaces of a CSS code. We now prove that this second system forms a dual system.
}


   \begin{lemma}\label{lem:compl}Let $\cC = (\cx, \cz)$ be a CSS code over $\F_q^d$ of dimension $k$, and let $\cx = \subs{W}{X} \oplus \cz^{\perp}$ and $\cz = \subs{W}{Z}  \oplus \cx^{\perp}$ respectively. Then, the canonical bilinear form over $\F_q$ is non-degenerate over $\subs{W}{X} \,\times \,\subs{W}{Z}$. 
 Therefore, there exists a pair of $\F_q$--linear isomorphisms $\phix: \F_{q}^{b} \to W_X$, $\phiz: \F_{q}^{b} \to W_Z$ such that, 
\[
\ip{x}{y}_{\F_q^b} = \ip{\phix(x)}{\phiz(y)}_{\F_q^d}  \; \; \forall\, x,y \in \F_{q^b} \mper
\]
\end{lemma}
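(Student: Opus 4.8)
The statement has two parts: first, that the canonical bilinear form on $\F_q^d$ restricts to a non-degenerate pairing on $W_X \times W_Z$; second, that this gives duality-preserving isomorphisms $\phix, \phiz$ from $\F_q^b$. The second part is an immediate consequence of the first together with \cref{def:dual} and \cref{claim:dual_map}: once we know $(W_X, W_Z, \ip{\cdot}{\cdot}_{\F_q^d})$ is a dual system, it has a pair of dual bases, and $(\F_q^b, \F_q^b)$ with the elementary basis is a dual system too, so the maps sending one dual basis to the other (extended $\F_q$-linearly) are duality-preserving isomorphisms by \cref{claim:dual_map}. Note $\dim_{\F_q} W_X = \dim_{\F_q} W_Z = k =: b$ here, which is why both are isomorphic to $\F_q^b$. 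So the whole content is in establishing non-degeneracy.

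For non-degeneracy, I would argue as follows. Take a nonzero $x \in W_X$; I must produce $y \in W_Z$ with $\ip{x}{y} \neq 0$. Since $W_X \subseteq \cx$ and $W_X \cap \cz^\perp = 0$ (by the direct sum decomposition $\cx = W_X \oplus \cz^\perp$), we have $x \in \cx \setminus \cz^\perp$. Now $\cz^\perp = (\cz)^\perp$, so $x \notin (\cz)^\perp$ means there exists $z \in \cz$ with $\ip{x}{z} \neq 0$. Decompose $z = y + c$ with $y \in W_Z$ and $c \in \cx^\perp$ using $\cz = W_Z \oplus \cx^\perp$. Then $\ip{x}{c} = 0$ since $x \in \cx$ and $c \in \cx^\perp$, so $\ip{x}{y} = \ip{x}{z} \neq 0$. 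This shows the map $\ip{x}{\cdot}\colon W_Z \to \F_q$ is not identically zero. The symmetric argument (swapping the roles of $X$ and $Z$, using $\cz^\perp \subseteq \cx$ iff $\cx^\perp \subseteq \cz$ and the analogous direct sums) handles nonzero $y \in W_Z$, giving full non-degeneracy.

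Once non-degeneracy is in hand, I would state explicitly: pick any pair of dual bases $\{v_1,\dots,v_b\}$ of $W_X$ and $\{w_1,\dots,w_b\}$ of $W_Z$ (existence guaranteed by the dual-system structure as in the remarks after \cref{def:dual}), let $\{e_1,\dots,e_b\}$ be the standard basis of $\F_q^b$, and define $\phix(e_i) = v_i$, $\phiz(e_i) = w_i$, extended $\F_q$-linearly. These are isomorphisms by construction, and $\ip{x}{y}_{\F_q^b} = \ip{\phix(x)}{\phiz(y)}_{\F_q^d}$ for all $x, y$ follows by bilinearity from $\ip{e_i}{e_j} = \delta_{ij} = \ip{v_i}{w_j}$, exactly as in the proof of \cref{claim:dual_map}.

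The only mild subtlety — and the step I would be most careful about — is the bookkeeping of which orthogonal complement is taken inside which ambient space, and making sure the direct-sum hypotheses $\cx = W_X \oplus \cz^\perp$ and $\cz = W_Z \oplus \cx^\perp$ are used in the right places (in particular that $W_X \cap \cz^\perp = 0$ is what converts "$x \in \cx$, $x \neq 0$" into "$x \in \cx \setminus \cz^\perp$"). There is no real analytic or algebraic obstacle here; it is a short linear-algebra argument, and the main risk is a sign/index slip rather than a conceptual gap.
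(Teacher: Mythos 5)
Your proof is correct and follows essentially the same route as the paper's: the paper also reduces everything to non-degeneracy of the restricted form and then invokes \cref{claim:dual_map}, proving non-degeneracy by showing that any $v \in W_Z$ orthogonal to all of $W_X$ is also orthogonal to $\cz^\perp$, hence lies in $\cx^\perp \cap W_Z = \{0\}$ — which is just the contrapositive of your witness-producing argument, using the same two direct-sum decompositions and the same appeal to symmetry. No gaps.
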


\begin{proof} The space $(\F_q^k,\F_q^k)$ with the canonical bilinear form is a dual system with the elementary basis as a dual basis.  Therefore, if $(\subs{W}{X}, \subs{W}{Z} )$ equipped with the canonical bilinear form is a dual system, then~\cref{claim:dual_map} yields a duality-preserving isomorphism as needed. To show this, we only need to prove the non-degeneracy of the canonical form, and we will do so for one component as the argument is symmetric. 

Let $v \in \subs{W}{Z}$ be such that $\ip{u}{v} = 0$ for all $u \in \subs{W}{X} $. Since, $v \in  \subs{W}{Z} \subseteq \cz$, we have $\ip{w}{v} = 0$ for all $w \in \cz^\perp$. But, $\cx = \subs{W}{X} \oplus \cz^\perp$ and therefore, $v \in \cx^\perp$. Since, $\subs{W}{Z} \cap \cx^\perp = \{0\}$, $v$ must be $0$. 
\end{proof}

\paragraph{Changing base field} We now see that one can view a CSS code over $\F_{q^b}$ as a vector space CSS code over $\F_q^b$. To do this we first 
equip $\F_{q^b}$ with the \textit{trace form}.

  \begin{definition}[Trace Map]
	Let $\F_{q^b}$ be a degree $k$-extension of $\F_q$. The \textit{trace map}\footnote{We will drop the subscript as we will not work with multiple extensions.} is defined as 
	\[\mathsf{Tr}_{\,\F_{q^b}/\F_q} :\F_{q^b} \to \F_q, \;\;  x \mapsto x + x^q + x^{q^2} + \cdots + x^{q^b}.\]
{	The trace map  is $\F_q$-linear as for any $a \in \F_q$, $a^q = a$. }
\end{definition} 

It is a well-known fact that the trace map defines a non-degenerate $\F_q$-bilinear map over $\F_{q^b} \times \F_{q^b}$ defined as $\ip{x}{y}_\mathsf{Tr} = \mathsf{Tr} (xy)$. Therefore, $(\F_{q^b} ,\F_{q^b}, \ip{\cdot}{\cdot}_\mathsf{Tr} )$ forms a dual system, and so from~\cref{claim:dual_map}, there exists a pair of maps $\phix, \phi_z: \F_{q^b} \to \F_q^b$, that is duality preserving. We denote by $(\wphix, \wphiz)$, the map obtained on $\F_{q^b}^n$ obtained by applying $(\phix, \phiz)$ on each coordinate.


\begin{lemma}\label{lem:field_downgrade}
	Let $\cC$ be a $[[n,k,\delta n]]_{q^b}$ CSS code over $\F_{q^b}$, and let $(\phix,\phiz)$ be the duality preserving map defined above. Then $\cC' = \parens{\wphix(\cx), \wphiz(\cz)}$ is a $[[n,k, \delta n]]_{q, b}$ vector space CSS code, where  
\end{lemma}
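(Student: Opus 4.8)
\textbf{Proof plan for \cref{lem:field_downgrade}.}

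The statement asserts that applying the duality-preserving map $(\wphix,\wphiz)$ coordinatewise to a CSS code $\cC = (\cx,\cz)$ over $\F_{q^b}$ produces a vector space CSS code $\cC' = (\wphix(\cx), \wphiz(\cz))$ over $\F_q^b$ with the same blocklength $n$, the same dimension $k$, and the same distance $\delta n$. The plan is to verify each of the three defining properties of a vector space CSS code in turn: (1) the orthogonality condition $\wphiz(\cz)^\perp \subseteq \wphix(\cx)$ (equivalently $\wphix(\cx)^\perp \subseteq \wphiz(\cz)$), which is what makes $\cC'$ a legitimate quantum CSS code at all; (2) that the $\F_q$-dimensions of $\wphix(\cx)$ and $\wphiz(\cz)$ are preserved, so that the folded dimension works out to $k$; and (3) that the minimum weight (now measured as Hamming weight over the alphabet $q^b$, after folding in blocks of size $b$) is unchanged.

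First I would handle the orthogonality. The key point is that $(\phix,\phiz)$ is duality-preserving: $\ip{x}{y}_{\mathsf{Tr}} = \ip{\phix(x)}{\phiz(y)}_{\F_q^b}$ for all $x,y \in \F_{q^b}$. Summing over coordinates, this lifts to $\ip{u}{v}_{\mathsf{Tr}} = \ip{\wphix(u)}{\wphiz(v)}_{\F_q^{bn}}$ for all $u,v \in \F_{q^b}^n$. Since the trace form $\ip{\cdot}{\cdot}_{\mathsf{Tr}}$ is non-degenerate, $\F_q$-orthogonal complements taken with respect to it coincide with the $\F_{q^b}$-orthogonal complements used to define $\cz^\perp$ in the original code (one should note here that $\cz^\perp$ in the CSS code over $\F_{q^b}$ is the $\F_{q^b}$-orthogonal complement, but because of how the trace form relates the two, $\F_q$-orthogonality under $\mathsf{Tr}$ restricted appropriately recovers it — I would spell this compatibility out carefully). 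Concretely: for $w \in \wphiz(\cz)^\perp$ (the $\F_q$-orthogonal complement in $\F_q^{bn}$), write $w = \wphix(u)$ for some $u$ (using that $\wphix$ is an isomorphism onto its image, and extending the claim to show its image is all of $\F_q^{bn}$ or at least contains what we need — actually $\phix: \F_{q^b} \to \F_q^b$ is an $\F_q$-linear isomorphism, so $\wphix$ is a bijection $\F_{q^b}^n \to \F_q^{bn}$). Then $\ip{\wphix(u)}{\wphiz(v)} = 0$ for all $v \in \cz$ forces $\ip{u}{v}_{\mathsf{Tr}} = 0$ for all $v \in \cz$, i.e. $u \in \cz^{\perp_{\mathsf{Tr}}} = \cz^\perp \subseteq \cx$, hence $w = \wphix(u) \in \wphix(\cx)$. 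This gives $\wphiz(\cz)^\perp \subseteq \wphix(\cx)$.

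For the dimension: $\phix$ and $\phiz$ are $\F_q$-linear isomorphisms $\F_{q^b} \xrightarrow{\sim} \F_q^b$, so $\wphix$ and $\wphiz$ are $\F_q$-linear isomorphisms $\F_{q^b}^n \xrightarrow{\sim} \F_q^{bn}$; in particular $\dim_{\F_q}\wphix(\cx) = \dim_{\F_q}\cx = bk$ (viewing $\cx$ as an $\F_q$-space, using that an $[[n,k]]_{q^b}$ code has $\dim_{\F_{q^b}}\cx - \dim_{\F_{q^b}}\cz^\perp = k$ and multiplying by $b$), and similarly for $\wphiz(\cz)^\perp$. Hence the folded dimension $\frac{1}{b}\bigl(\dim_{\F_q}\wphix(\cx) - \dim_{\F_q}\wphiz(\cz)^\perp\bigr) = k$, matching the $\F_{q^b}$-dimension of the original code — this is exactly the remark following the vector space CSS code definition that folding preserves rate. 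For the distance: after folding $\wphix(v) \in \F_q^{bn}$ into blocks of size $b$, the $i$-th block is $\phix(v_i) \in \F_q^b$, which is zero iff $v_i = 0$ (since $\phix$ is injective). So $|\fold(\wphix(v))| = |\{i : v_i \neq 0\}| = |v|$, the Hamming weight of $v$ over $\F_{q^b}$; the same holds for $\wphiz$. Since $\wphix,\wphiz$ are bijections carrying $\cx\setminus\cz^\perp$ to $\wphix(\cx)\setminus\wphiz(\cz)^\perp$ and likewise on the $Z$ side, the minimum folded weight equals $\delta n$.

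I expect the main obstacle to be purely bookkeeping rather than conceptual: making the interplay between the $\F_{q^b}$-bilinear structure on the original code, the $\F_q$-trace form, and the $\F_q$-canonical form on $\F_q^b$ precise enough that the orthogonality inclusion is airtight. In particular one must be careful that "$\cz^\perp$" in the hypothesis is the $\F_{q^b}$-linear dual, and argue that under $\wphiz$ this maps onto the $\F_q$-linear dual $\wphiz(\cz)^\perp$ in $\F_q^{bn}$ — this is where non-degeneracy of the trace form and the duality-preserving property of $(\phix,\phiz)$ together do the work, via \cref{claim:dual_map}. Everything else (dimension count, weight preservation) follows from $\phix,\phiz$ being injective $\F_q$-linear maps between spaces of equal $\F_q$-dimension.
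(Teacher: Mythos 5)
Your proposal is correct and follows essentially the same route as the paper's proof: pull back an element of $\wphiz(\cz)^\perp$ through the isomorphism $\wphix$, use the duality-preserving property to land in $\cz^\perp\subseteq\cx$, and note that $\wphix,\wphiz$ are $\F_q$-isomorphisms preserving both dimension and (block-)support. The one place you are more careful than the paper — checking that the trace-dual of the $\F_{q^b}$-subspace $\cz$ coincides with its canonical $\F_{q^b}$-dual — is a genuine point worth spelling out, and it does go through because $\cz$ is closed under $\F_{q^b}$-scalar multiplication and the trace form is non-degenerate.
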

\begin{proof}
To see that it is a CSS code, let $u  = \wphix(u') \in \wphiz(\cz)^\perp$. The vector $u$ can be written this way as $\wphix$ is an isomorphism. Then, for any $v \in \cz$,
\begin{align*}
 \ip{u'}{v} = \ip{\wphiz(u')}{\wphiz(v)} = 0. 	
\end{align*}
Thus, $u' \in \cz^\perp$ and therefore, $\wphiz(\cz)^\perp \subseteq \wphix(\cz^\perp) \subseteq \wphix(\cx)$ which is the CSS condition.

	Since $\cx$ is a $\F_{q^b}$- subspace, $\dim(\phix(\cx))$ over $\F_q$ is $b\cdot \dim(\cx)$. Thus, the dimension of the new (unfolded code) is $\dim(\phix(\cx)) - \dim(\phix(\cz)^\perp) = b\cdot k$ which becomes $k$ after folding. 
	The distance is unchanged as for $v = (v_1,\cdots, v_n) \in \F_{q^b}^n$ the inital weight is, 
	\[\abs{v} = \abs{\{ i \mid v_i\neq 0 \}}= = \abs{\{ i \mid \phix(v_i) \neq 0 \}} = \abs{\phix(v)}.\]
\end{proof}

\section{Concatenated Codes and AEL Amplification}
\label{sec:ael}

In this section, we formalize the quantum CSS generalizations of two operations on classical codes -- \textit{code concatenation} and AEL \textit{distance amplification} (\cite{AEL95}). We define it purely in linear algebraic terms but the recent work~\cite{BGG22} also formalizes this using the stabilizer code framework.

\subsection{Concatenation of CSS Codes}

To define a concatenation of CSS codes, one needs a pair of outer code and inner code that are compatible with respect to some parameters. Additionally in the quantum case, we need a pair of duality-preserving maps as defined in~\cref{lem:compl}.


\begin{definition}[Concatenated CSS Codes]
Given the following objects,	
\begin{itemize}
    \item \textbf{Outer Code} --- Let $\cD = (\dx, \dz)$ be a $[[n, k_{\out}, \delta_{\out}]]_{q,b_\out}$ vector space CSS code.
    \item \textbf{Inner Code} --- Let $\cC = (\cx, \cz)$ be a $[[d, k_\inn, \delta_{\inn}]]_{q,b_\inn}$ vector space CSS code that $b_\inn k_\inn = b_\out$. Let $\cx = W_X \oplus \cz^{\perp}$ and $\cz = W_Z \oplus \cx^{\perp}$ respectively (as $\F_q^{db_\inn}$-subspaces). 
    \item \textbf{Duality preserving maps} --- Let $(\phix,\phiz)$ be duality preserving maps as in~\cref{lem:compl} from $\F_{q}^{k}$ to $\subs{W}{X}$ and $\subs{W}{Z}$ respectively. Extend these maps to $(\F_{q}^k)^n$ by applying it to each coordinate {of the folded code}, and call the extended map $(\wphix, \wphiz)$.
\end{itemize}
\vspace{-5pt}One defines the concatenated CSS Code, $\cE = \cD \circ_\varphi \cC$ as $(\ex, \ez)$ with
    \begin{align*}
        \ex &=\wphix( \dx) + \mathbb{F}_{q}^n \otimes \cz^{\perp} \subseteq \F_q^{ndb_\inn}\\
        \ez &= \wphiz(\dz) + \mathbb{F}_{q}^n \otimes \cx^{\perp}\subseteq \F_q^{ndb_\inn}
    \end{align*}
\end{definition}

We now give an explicit description of the dual spaces that will prove that the concatenation operation defines a CSS code and also be useful in proving the distance of the final code. 

%
%

\begin{proposition}\label{prop:concat_dual}
    For the above definition of $(\ex, \ez)$, the dual spaces can be computed as follows,
    \begin{align*}
        \ex^{\perp} &= \wphiz(\dx^{\perp}) + \mathbb{F}_{q}^n \otimes \cx^{\perp} \\
        \ez^{\perp} &= \wphix(\dz^{\perp}) + \mathbb{F}_{q}^n \otimes \cz^{\perp}
    \end{align*}
    Therefore, $\cE$ is an $[[nd, {k_{\out}\cdot k_\inn}]]_{q, b_\inn}$ vector space CSS code.
    {Moreover, if $\cD$ is an LDPC code, and $d b_\inn$ is a constant, then $\cE$ is an LDPC code.}
\end{proposition}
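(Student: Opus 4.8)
The plan is to carry out everything inside the ambient $\F_q$-space $\F_q^{ndb_\inn}$, viewed with its natural block structure $\F_q^{ndb_\inn} \cong (\F_q^{db_\inn})^n$ indexed by the $n$ outer coordinates, the canonical bilinear form being the sum of the $n$ block-wise canonical forms on $\F_q^{db_\inn}$. I will repeatedly use four facts: the inner CSS conditions $\cz^\perp \subseteq \cx$, $\cx^\perp \subseteq \cz$ and the outer ones $\dz^\perp \subseteq \dx$, $\dx^\perp \subseteq \dz$; the decompositions $\cx = W_X \oplus \cz^\perp$ and $\cz = W_Z \oplus \cx^\perp$ (giving $W_X \cap \cz^\perp = W_Z \cap \cx^\perp = \{0\}$ and $W_X \subseteq \cx$, $W_Z \subseteq \cz$); the injectivity of $\wphix,\wphiz$ (as $\phix,\phiz$ are isomorphisms onto $W_X,W_Z$); and the duality-preserving identity $\ip{\wphix(x)}{\wphiz(y)} = \ip{x}{y}$ for all $x,y \in (\F_q^{b_\out})^n$, obtained by summing the defining identity of $(\phix,\phiz)$ over the $n$ blocks (this identity being symmetric). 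I will prove the formula for $\ex^\perp$; the one for $\ez^\perp$ is obtained verbatim after exchanging the roles of $X$ and $Z$.

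First I would show the asserted space is contained in $\ex^\perp$ by expanding $\ip{\wphix(x) + b}{\wphiz(x') + b'}$, with $x \in \dx$, $b \in \F_q^n \otimes \cz^\perp$, $x' \in \dx^\perp$, $b' \in \F_q^n \otimes \cx^\perp$, into four terms: $\ip{\wphix(x)}{\wphiz(x')} = \ip{x}{x'} = 0$ since $x' \in \dx^\perp$; $\ip{\wphix(x)}{b'} = 0$ block-by-block since each block of $\wphix(x)$ lies in $W_X \subseteq \cx$ and each block of $b'$ lies in $\cx^\perp$; symmetrically $\ip{b}{\wphiz(x')} = 0$ using $W_Z \subseteq \cz$; and $\ip{b}{b'} = 0$ block-by-block because $\cz^\perp \perp \cx^\perp$, which is precisely the inner CSS condition $\cz^\perp \subseteq \cx$. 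Then I would upgrade this inclusion to an equality by a dimension count. All the relevant sums are direct: $\wphix(\dx) \cap (\F_q^n \otimes \cz^\perp) \subseteq W_X^n \cap (\cz^\perp)^n = \{0\}$, and similarly $\wphiz(\dx^\perp) \cap (\F_q^n \otimes \cx^\perp) = \{0\}$ and $\wphix(\dz^\perp) \cap (\F_q^n \otimes \cz^\perp) = \{0\}$. Hence $\dim_{\F_q} \ex = \dim_{\F_q} \dx + n \dim_{\F_q} \cz^\perp$ and $\dim_{\F_q}\bigl(\wphiz(\dx^\perp) + \F_q^n \otimes \cx^\perp\bigr) = \dim_{\F_q}\dx^\perp + n\dim_{\F_q}\cx^\perp$; plugging in $\dim_{\F_q}\dx + \dim_{\F_q}\dx^\perp = nb_\out$, $\dim_{\F_q}\cx + \dim_{\F_q}\cx^\perp = db_\inn$ and $\dim_{\F_q}\cx = b_\out + \dim_{\F_q}\cz^\perp$, these two dimensions sum to $ndb_\inn$, which forces the above inclusion to be an equality.

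With the $\perp$-formulas in hand, the remainder is bookkeeping. The CSS condition for $\cE$ follows since $\ez^\perp = \wphix(\dz^\perp) + \F_q^n \otimes \cz^\perp \subseteq \wphix(\dx) + \F_q^n \otimes \cz^\perp = \ex$ (using $\dz^\perp \subseteq \dx$), and symmetrically $\ex^\perp \subseteq \ez$; thus $\cE$ is a vector space CSS code in $\F_q^{ndb_\inn}$ folded in blocks of $b_\inn$, \ie of folded blocklength $nd$. For the folded dimension, directness gives $\dim_{\F_q}\ez^\perp = \dim_{\F_q}\dz^\perp + n\dim_{\F_q}\cz^\perp$, so $\dim_{\F_q}\ex - \dim_{\F_q}\ez^\perp = \dim_{\F_q}\dx - \dim_{\F_q}\dz^\perp = k_\out b_\out$, and dividing by $b_\inn$ (and using $b_\out = b_\inn k_\inn$) yields folded dimension $k_\out k_\inn$. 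Finally, for the LDPC claim I would take, as a parity-check matrix for $\ex$, the image under $\wphiz$ of each row of a sparse check matrix $\hx^{\cD}$ of $\cD$ (whose rows span $\dx^\perp$), together with, for each $i \in [n]$, a fixed basis of $\cx^\perp$ planted on block $i$; its row span is $\wphiz(\dx^\perp) + \F_q^n \otimes \cx^\perp = \ex^\perp$, so its kernel is $\ex$. Row-sparsity holds because an outer row has $O(1)$ nonzero $\F_q$-entries, hence touches $O(1)$ blocks, each mapped by $\phiz$ into $\le db_\inn = O(1)$ coordinates, and each planted check is supported on $\le db_\inn$ coordinates; column-sparsity holds because each coordinate of $\cE$ lies in at most $db_\inn$ planted checks and, by column-sparsity of $\hx^{\cD}$, in at most $O(b_\out) = O(1)$ of the transported outer checks. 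Using $\hz^{\cD}$ and $\cz^\perp$ instead gives the sparse check matrix for $\ez$.

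The content is elementary linear algebra; the one place where care is genuinely needed — and where a rushed argument would go wrong — is keeping the three nested ambient spaces straight ($\F_q^{nb_\out}$ for the outer code, $\F_q^{db_\inn}$ for the inner code, and the combined $\F_q^{ndb_\inn}$ with its block structure), making sure the bilinear forms are the matching block/tensor forms so that the duality-preserving identity and the block-wise orthogonality computations compose correctly, and consistently tracking folded versus unfolded $\F_q$-dimensions through the counting step.
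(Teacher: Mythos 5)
Your proposal is correct and follows essentially the same route as the paper: the containment is established by the identical four-term expansion of the inner product (three terms killed by block-wise orthogonality, the first by the duality-preserving property), and equality follows from the same dimension count using directness of the sums. Your write-up is somewhat more explicit than the paper's on the CSS inclusion $\ez^\perp \subseteq \ex$ and on column-sparsity in the LDPC claim, but these are elaborations of the same argument, not a different one.
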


\begin{proof} We prove the first equation as the proofs are symmetric.  Define, \[\mathcal{U}_X := \wphiz(\dx^{\perp}) + \mathbb{F}_{q}^n \otimes \cx^{\perp}.\] 

Since $\wphix$ is an $\F_q$-linear isomorphism $\dim(\wphix(U)) = \dim(U)$. Moreover, $\mathrm{im}(\wphiz)$ and $\cx^\perp$ are disjoint and thus their dimension add up.    Thus,
 \begin{align*}
 	\dim(\mathcal{U}_X) ~&=~ \dim(\dx^\perp) + n\, \dim(\cx^\perp) = nb_\out-\dim(\dx) + n (db_\inn-\dim(\cx)), \\
 	\dim(\ex) ~&=~ \dim(\dx) + n(\dim(\cz^\perp)) = \dim(\dx) + n(\dim(\cx) -k_\inn ) .
 \end{align*}
 
 Therefore, $\dim(\mathcal{U}_X) = ndb_\inn - \dim(\ex) =  \dim(\ex^\perp)$ and it suffices to show that $\,\mathcal{U}_X\subseteq \ex^\perp$.
 
Let $\alpha \in \mathcal{U}_X$ and $\beta \in \ex$. Using the definition of the spaces, we express them as,
\begin{align*}
 \alpha ~&=~ (\phiz(u^{(1)}) + x_1,\,\cdots\,, \phiz(u^{(n)}) + x_n ) \in \mathcal{U}_X \; \text{ where } x_i \in \cx^\perp,\, u =(u^{(1)},\cdots, u^{(n)}) \in\dx^\perp,\\   
 \beta ~&=~ (\phix(v^{(n)}) + z_1,\,\cdots\,, \phix(v^{(n)}) + z_n ) \in \ex  \;\;\, \text{ where } z_i \in \cz^\perp,\,  v = (v^{(1)},\cdots, v^{(n)}) \in\dx.
\end{align*}

Computing the inner product we get four kinds of terms,
\[
\ip{\alpha}{\beta} = \sum_{i=1}^n \parens[\Big]{\ip{\phiz\parens[\big]{u^{(i)}}}{\phix\parens[\big]{v^{(i)}}} + \ip{\phiz\parens[\big]{u^{(i)}}}{z_i} + \ip{x_i}{\phix\parens[\big]{v^{(i)}}} + \ip{x_i}{z_i} }  \]

Each of the last three terms is zero as, by definition, they belong to orthogonal spaces.
We are then left with the first term which can be calculated using the duality-preserving property,
\vspace{-0.5em}
\begin{align*}
   \ip{\alpha}{\beta} ~&=~ \sum_{i=1}^n \angles[\Big]{\phiz\parens[\big]{u^{(i)}}\;, \;\phix\parens[\big]{v^{(i)}}}_{\F_q^{db_\inn}} \\
   ~&=~ \sum_{i=1}^n  \ip{u^{(i)}}{v^{(i)}}_{\F_q^{b_\out}} &&\text{(Duality Preserving)}\\
   ~&=~  \ip{u}{v}_{\F_q^{nb_\out}} = 0 && (u \in \dx^\perp, v \in \dx). 
\end{align*}

This proves that $\mathcal{U}_X = \ex^\perp$. Moreover, if $\dx = V_X \oplus \dz^\perp$, the proof implies that $\ex = \phi_X(V_X) \oplus \ez^\perp$. Recall that for a $[[n,k_{\out}]]_{q,b_\out}$ vector space code, the dimension of $V_X$ as a $\F_q$-subspace is $k_{\out}\cdot b_\out$.
   Thus, the dimension of the CSS code $\cE$ is, \[\dim(\ex) -\dim(\ez^\perp) =  \dim(\phi_X(V_X)) = \dim(V_X) = k_{\out}\cdot b_\out = k_{\out}\cdot k_\inn b_\inn \quad .\]
Folding this $\cE$ into blocks of size $b_\inn$, we get a $[[nd, k_\out \cdot k_\inn]]_{q, b_\inn}$ vector space CSS code.    
The LDPC property follows since the generators of $\ex^\perp$ are comprised of generators of $\dz^\perp$ mapped by $\wphix$, and generators corresponding to $\cz^{\perp}$. The former are sparse if $\cD$ is LDPC, and the latter has weight at most $db_\inn$ which is a constant.
\end{proof}


\subsection{AEL Amplification and Folding CSS Codes}

The distance of the concatenated code $\cE$  can be amplified by the AEL procedure using a $d$-regular bipartite expander, $G = (L,R, E)$.
The graph $G$ is chosen such that the size of $L$ and $R$ match the blocklength of the outer code, and the degree matches the blocklength of the inner code. 

 The AEL procedure is a three-step process --- (i) concatenate the outer code $\cD$ with inner code $\cC$ to obtain $\cE$, (ii) shuffle the symbols of concatenated code via edges on a bipartite expander graph $G$, and (iii) collect $d$-symbols on the right vertices and fold them back to produce the final code, $\cF$. 

In this subsection, we will formally state the AEL procedure and set up some useful notation. We start by restating the definitions of the concatenated codes (and their duals) in a manner that will be convenient when working with AEL, and later, with sum-of-squares proofs.



%
\paragraph{Concatenated Codes and AEL} To simplify notation, will use $\Sigma$ to denote $\F_q^{b_\inn}$ as the concatenated code lies inside the space $(\F_q^{b_{\inn}})^{nd}  = \Sigma^E$.  We view the codewords, $z \in \ex$ (or $\ez$), as an assignment of $\Sigma$-values to the edges. Denote by $z_\li \in \Sigma^d$, the restriction of the vector $z$ to the neighborhood, $N(\li)$, of vertex $\li \in L$. We will use similarly use $z_\ri$, to denote restriction to the neighborhood of vertex $\ri \in R$.  
	The concatenated code $\cE = \cD\circ_\varphi \cC$ and its duals can be explicitly described as,
\begin{align*}
 	\ex ~&=~ \Set*{ x \mid x_{\li} =  \phix(u_\li) + z_\li, \text{ for a unique  } z_\li \in \cz^\perp \text{ and } u \in \dx },\\[2.5pt]
 		\ex^\perp ~&=~ \Set*{ x \mid x_{\li} =  \phix(u_\li) + z_\li, \text{ for a unique  } z_\li \in \cx^\perp \text{ and } u \in \dx^\perp },\\[2.5pt]
 		\ez ~&=~ \Set*{ x \mid x_{\li} =  \phix(u_\li) + z_\li, \text{ for a unique  } z_\li \in \cx^\perp \text{ and } v \in \dz },\\[2.5pt]
 		\ez^\perp ~&=~ \Set*{ x \mid x_{\li} =  \phix(u_\li) + z_\li, \text{ for a unique  } z_\li \in \cz^\perp \text{ and } u \in \dx }.
 \end{align*}

 Uniqueness follows as $\mathrm{im}(\phix) = \subs{W}{X}$ is disjoint from $\cz^\perp$ (and similarly for $\phiz$). The concatenated code $\cE$ is folded using the partitions induced by the neighborhoods of the right vertices. Explicitly, the folded code $\cF = \fold(\cE) =  (\fold(\ex), \fold(\ez))$ is given by,
 \begin{align*}
 	\fx:=~ \fold(\ex) ~&=~ \Set[\Big]{\fold(z_{\ri_1}, \cdots , z_{\ri_n}) \big\vert \, z \in \ex,\; \ri_i \in R} \subseteq \parens*{\Sigma^d}^n  \\
 	\fz:=~ \fold(\ez) ~&=~ \Set[\Big]{\fold(z_{\ri_1}, \cdots , z_{\ri_n})  \big\vert\, z \in \ez,\; \ri_i \in R} \subseteq \parens*{\Sigma^d}^n 
 	\end{align*} 

\begin{proposition}[AEL Procedure]\label{cor:ael_rate}
Let $\cD$ be a $[[n, k_\out]]_{q, b_\out}$ CSS code and $\cC$ be a $[[d, k_\inn]]_{q, b_\inn}$ CSS code.\\[2.5pt]
Then, the AEL code, $\cF = (\fx, \fz)$ defines a $[[n, \frac{k_\out \cdot k_\inn}{d}, \delta_R\cdot n]]_{q, db_\inn}$ CSS code where \[\delta_R\cdot n = \min \braces[\big]{\, \abs[\big]{\braces{i \mid z_{r_i}\neq 0}} \;\big\vert\;  z \in (\ex\setminus \ez^\perp) \cup (\ez\setminus \ex^\perp)}. \] 
\end{proposition}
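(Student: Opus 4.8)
The plan is to deduce the proposition from \cref{prop:concat_dual} together with the single observation that the steps of the AEL procedure following concatenation --- redistributing the edge-symbols of $\cE$ along $G$ and then collecting the $d$ symbols meeting each right vertex into one folded block --- amount to nothing more than an $\F_q$-linear permutation $\pi$ of the $ndb_{\inn}$ standard coordinates of $\cE$, followed by a purely syntactic regrouping of those coordinates into alphabet blocks of size $db_{\inn}$. Since a coordinate permutation preserves the canonical bilinear form, preserves the $\F_q$-dimension of every subspace, and carries Hamming weight over $\Sigma = \F_q^{b_{\inn}}$ on the edge set $E$ to folded Hamming weight over $\F_q^{db_{\inn}}$ on $R$, all three claims --- the CSS property, the dimension, and the distance --- will follow by unwinding definitions, with no further work of substance.

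First I would record that $\cF = (\fx,\fz) = (\fold(\ex),\fold(\ez))$, where the folding partitions the coordinates of $E$ according to the neighborhoods $N(\ri)$ for $\ri \in R$ (keeping the $b_{\inn}$ coordinates within each edge together), and let $\pi$ denote the associated coordinate permutation of $\F_q^{ndb_{\inn}}$. Because $\angles{\pi(u),\pi(v)} = \angles{u,v}$ for all $u,v$, one has $\fz^{\perp} = (\pi(\ez))^{\perp} = \pi(\ez^{\perp})$, and \cref{prop:concat_dual} gives $\ez^{\perp} \subseteq \ex$; hence $\fz^{\perp} = \pi(\ez^{\perp}) \subseteq \pi(\ex) = \fx$, so $\cF$ is a CSS code. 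It has $\abs{R} = n$ blocks, each built from $d$ symbols of $\Sigma = \F_q^{b_{\inn}}$, so its alphabet is $\F_q^{db_{\inn}}$ and its blocklength is $n$. For the dimension, $\pi$ is an $\F_q$-isomorphism, so $\dim_{\F_q}\fx - \dim_{\F_q}\fz^{\perp} = \dim_{\F_q}\ex - \dim_{\F_q}\ez^{\perp}$, and this equals $k_{\out}k_{\inn}b_{\inn}$ by the dimension count in the proof of \cref{prop:concat_dual} (using $b_{\out} = k_{\inn}b_{\inn}$); dividing by the folding block size $db_{\inn}$ gives the folded dimension $\frac{k_{\out}k_{\inn}}{d}$.

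For the distance I would simply unpack its definition for a vector-space CSS code, namely $\delta_R \cdot n = \min\set{\abs{\fold(v)} : v \in (\fx \setminus \fz^{\perp}) \cup (\fz \setminus \fx^{\perp})}$. Since $\pi$ is a bijection, $\fx \setminus \fz^{\perp} = \pi(\ex \setminus \ez^{\perp})$ and $\fz \setminus \fx^{\perp} = \pi(\ez \setminus \ex^{\perp})$, and if $v = \fold(z_{\ri_1},\dots,z_{\ri_n})$ is the image of $z$ under $\pi$, then the $i$-th folded block of $v$ is precisely $z_{\ri_i}$, so $\abs{\fold(v)} = \abs{\set{i : z_{\ri_i} \neq 0}}$. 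Substituting yields $\delta_R \cdot n = \min\set{\abs{\set{i : z_{\ri_i} \neq 0}} : z \in (\ex \setminus \ez^{\perp}) \cup (\ez \setminus \ex^{\perp})}$, which is exactly the quantity in the statement. I expect essentially no obstacle here: the only step deserving genuine care is the very first one --- verifying that the redistribute-and-collect operation really is a single coordinate permutation of $\F_q^{ndb_{\inn}}$, since that is where the combinatorics of $G$ enters, and a sloppy identification of which edge-coordinate falls into which right-vertex block could quietly break the transport of orthogonality (and thus of the CSS property) from $\cE$ to $\cF$. Finally, the actual amplification estimate $\delta_R \geq \delta_{\inn} - \lambda/\delta_{\out}$ is a separate statement --- the quantum analogue of \cref{thm:ael_distance} --- and not part of this proposition, so I would not treat it here.
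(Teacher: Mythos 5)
Your proposal is correct and matches the paper's (implicit) treatment: the paper states this proposition without a separate proof, regarding it as immediate from \cref{prop:concat_dual} together with the observation that the redistribute-and-fold step is a coordinate permutation followed by regrouping into blocks of size $db_{\inn}$, which preserves the CSS condition and the $\F_q$-dimension and turns edge-weight into the stated right-vertex weight. Your writeup simply makes these routine verifications explicit, and you are right that the quantitative bound $\delta_R \geq \delta_{\inn} - \lambda/\delta_{\out}$ belongs to \cref{thm:ael_final}, not here.
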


The key property of the amplified code, $\cF$, is that its distance is significantly better than $\cE$. We will prove a lower bound on $\delta_R$ in the next section (\cref{thm:ael_final}). Before that, we define two notions that we use later: \textit{local inversion maps}, and a couple different distance metrics over $\Sigma^{dn}$.

\paragraph{Local Inversion} The uniqueness of decomposition of the local codeword, $x_\li$, let us define an inverse to the maps $\phix$ and $\phiz$.
\begin{definition}[Local Inversion Maps]\label{def:local_inversion}
Let $\cE = \cD\circ_\varphi \cC$ be the concatenated code as above. Then one defines local inversion maps,
\begin{align*}
 \unconx :\cx \rightarrow \F_q^{b_{\out}},\;\; x_\li ~&=~ \phix(u_\li) + z_\li \mapsto u_\li, \\  
 \unconz :\cz \rightarrow \F_q^{b_{\out}},\;\; x_\li ~&=~ \,\phiz(v_\li) + z_\li \mapsto v_\li.
\end{align*}
	
\end{definition}

\paragraph{Distance metrics for $\cE$} Using the graph structure, we can fold the code $\cE$ using the left or right vertices. Moreover, we can define a define a metric for the set of $\cz^\perp$ cosets which is needed for the quantum notion of distance. 
 \begin{align*}
 	\Delta_L(z,h) ~&=~ \Ex{\ell\sim L}{\indi{ z_\li \neq h_{\li}}},  \\
 	\Delta_{L, \cz^\perp}(z,h) ~&=~ \Ex{\ell\sim L}{\indi{ z_\li \not\in h_{\li} + \cz^\perp }}, \\
 	\Delta_{R}(z,h) ~&=~\Ex{r\sim R}{\indi{ z_{r} \neq h_{r}}}.
 \end{align*}
  Now we can reinterpret AEL procedure as changing the metric from the initial $\Delta_L(\cdot, \cdot)$ on the concatenated code $\cE$ to  $\Delta_R(\cdot, \cdot)$. This change is crucial as this is where the pseudorandom properties of the graph (expansion) come in, and imply that the distance between codewords under the $\Delta_R$ metric is much larger than the initial distance under the $\Delta_L$ metric.
  
  \paragraph{Decoders for AEL}
  As defined in \cref{sum:list_dec_vector_space}, a list decoder up to radius $\tau$ for the code $\cF$ should take as input a string $g\in (\Sigma^d)^R$, and output a list of cosets of $\fz^\perp$ containing 
  \begin{align*}
      \calL(g,\tau) ~&=~ \braces[\big]{ \fold(h)+\fz^\perp \suchthat h\in \ex,\, \Delta_R(g,h) <\tau}\\
      ~&\cong~ \braces[\big]{ h+\ez^\perp \suchthat h\in \ex,\, \Delta_R(g,h) <\tau} .
  \end{align*}

  The second set is merely the unfolded version of the first and it is equivalent to work with either. We will work with the latter to simplify notation.

\section{Distance Proofs}\label{sec:distance_quantum}

In this section, we prove that the folded AEL code, $\fold(\cD \circ_\varphi \cC)$ has large (fractional) distance, $\delta_R$, that can be made arbitrarily close to that of the inner code $\delta_\inn$ by picking a good enough expander.

\subsection{Partial Minimizer}
As mentioned before, one of the key objects we will use in the distance proof is a partial minimizer, which does not change the coset, but gets closer to the codeword we are measuring distance from. For a fixed codeword $h \in \ex$, and any vector $z \in \F_q^E$ we define the partial minimizer to be a new vector $\partmin(z,h)$    
\[	\partmin(z,h)_{\li} = \begin{cases}
		z_{\li} \;\; \text { if } \;\;  z_{\li} \not\in\,  h_{\li} + \cz^\perp \\[6pt]
		h_{\li} \;\; \text { if } \;\; z_{\li} \in\,  h_{\li} + \cz^\perp 
	\end{cases}
		\]
We observe that the partial minimizer satisfies two key properties,
\begin{align} 
&\text{ (\small{Coset-preserving}) } \;\;\; &&\partmin(z,h)_{\li} \in\, z_{\li} + \cz^\perp \;\; \forall\, \li \in L  \label{eqn:coset}\\[4pt]
	&\text{ (\small{Monotone}) } \;\;\;	&&\Delta(\partmin(z,h)_\ri, h_\ri) \leq \Delta(z_\ri, h_\ri) \;\; \forall \, \ri \in\, R \label{eqn:monotone}
\end{align}
 
\subsection{Distance proof}

We next prove for any two codewords in $\ex$ that do not share the same coset of $\ez^\perp$, their distance in the $\Delta_R(\cdot, \cdot)$ metric is almost as large as $\delta_{\inn}$.

\begin{lemma}[Distance proof of AEL]\label{lem:zfc_dist}
Let $z$ and $h$ be two non-equivalent codewords in $\ex$, \ie $z \not\in h + \ez^\perp$. Then,
\[\Delta_{R}(z,h) \geq \delta := \delta_\inn - \frac{\lambda}{\Delta_{L, \cz^\perp}(z,h)}  \] 	
 where $\Delta_{L,\cz^\perp} (z,h) = \Ex{\li \in L}{\one \braces[\big]{z_{\li} \not\in  h_{\li} + \cz^\perp}} $. 
\end{lemma}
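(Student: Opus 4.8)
The plan is to run the classical Alon--Edmonds--Luby distance argument (\cref{thm:ael_distance}), using the partial minimizer $\partmin(z,h)$ to convert the ``coset'' left-distance $\Delta_{L,\cz^\perp}(z,h)$ into an honest Hamming left-distance to which the inner code's distance can be applied. Put $z' := \partmin(z,h)$ and set
\[
S_L ~:=~ \Set{\li \in L \mid z_\li \notin h_\li + \cz^\perp}, \qquad S_R ~:=~ \Set{\ri \in R \mid z'_\ri \neq h_\ri}\mper
\]
By the piecewise definition of $\partmin$ we have $z'_\li = z_\li$ for $\li \in S_L$ and $z'_\li = h_\li$ for $\li \notin S_L$; in particular $\abs{S_L} = n\cdot\Delta_{L,\cz^\perp}(z,h)$. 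Moreover $S_L \neq \emptyset$, since $S_L = \emptyset$ would force $z - h \in \F_q^n \otimes \cz^\perp \subseteq \ez^\perp$ (using \cref{prop:concat_dual}), contradicting $z \notin h + \ez^\perp$. Finally $\abs{S_R} = n\cdot\Delta_R(z',h)$, and the monotonicity property \cref{eqn:monotone} of $\partmin$ gives $\Delta_R(z',h) \leq \Delta_R(z,h)$.

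First I would handle the inner (left) side. Since $z,h \in \ex$ and every local restriction of an $\ex$-codeword lies in $\cx$, for $\li \in S_L$ we have $z'_\li - h_\li = z_\li - h_\li \in \cx$, and $z_\li - h_\li \notin \cz^\perp$ by the definition of $S_L$; thus $z'_\li - h_\li \in \cx \setminus \cz^\perp$, so by the inner distance $\delta_\inn$ at least $\delta_\inn d$ of the $d$ edges incident to $\li$ satisfy $z'(e) \neq h(e)$. Summing over $\li \in S_L$, the number of edges with $z'(e) \neq h(e)$ is at least $\delta_\inn d \abs{S_L}$. On the other hand, for $\li \notin S_L$ we have $z'_\li = h_\li$, so no such edge is incident to $\li$; and an edge $e$ with $z'(e) \neq h(e)$ and right endpoint $\ri$ forces $z'_\ri \neq h_\ri$, i.e.\ $\ri \in S_R$. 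Hence all edges with $z'(e) \neq h(e)$ run between $S_L$ and $S_R$.

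Next I would apply the bipartite expander mixing lemma (\cref{lem:eml_bipartite}) with $f = \one_{S_L}$, $g = \one_{S_R}$; since $\abs{E} = nd$, the number of $S_L$--$S_R$ edges is at most $d\,\frac{\abs{S_L}\abs{S_R}}{n} + \lambda d\sqrt{\abs{S_L}\abs{S_R}}$. Comparing with the lower bound $\delta_\inn d\abs{S_L}$ and dividing by $d\abs{S_L} > 0$,
\[
\delta_\inn ~\leq~ \frac{\abs{S_R}}{n} + \lambda\sqrt{\frac{\abs{S_R}}{\abs{S_L}}} ~=~ \Delta_R(z',h) + \lambda\sqrt{\frac{\Delta_R(z',h)}{\Delta_{L,\cz^\perp}(z,h)}}\mper
\]
Since $\Delta_R(z',h) \leq \Delta_R(z,h) \leq 1$ and $\Delta_{L,\cz^\perp}(z,h) \leq 1$, the square-root term is at most $1/\Delta_{L,\cz^\perp}(z,h)$, and substituting and rearranging gives $\Delta_R(z,h) \geq \delta_\inn - \lambda/\Delta_{L,\cz^\perp}(z,h)$, as desired.

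The only ingredient beyond the classical proof is the substitution $z \rightsquigarrow z'$, and its validity rests precisely on the two recorded properties of the partial minimizer: coset-preservation \cref{eqn:coset}, which keeps $z'_\li - h_\li$ inside $\cx \setminus \cz^\perp$ on $S_L$ (so the inner distance applies) and makes $\abs{S_L} = n\,\Delta_{L,\cz^\perp}(z,h)$; and monotonicity \cref{eqn:monotone}, which ensures the switch does not increase the right-distance to $h$. The step I expect to require the most care is the bookkeeping among the three relevant quantities --- $\Delta_{L,\cz^\perp}(z,h)$, the honest left Hamming distance of $z'$ from $h$, and $\Delta_R$ --- together with the edge-versus-right-block folding conventions; once the partial minimizer is in hand there is no remaining conceptual obstacle, the argument being a direct adaptation of \cref{thm:ael_distance}.
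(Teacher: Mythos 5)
Your proof is correct and follows essentially the same route as the paper: introduce the partial minimizer, lower-bound the edge disagreements by $\delta_\inn$ times the coset left-distance via the inner code, upper-bound them by the expander mixing lemma, and invoke coset-preservation and monotonicity to return to $z$. The only difference is cosmetic — you phrase the mixing lemma as a count of $S_L$--$S_R$ edges with error $\lambda d\sqrt{\abs{S_L}\abs{S_R}}$ and then relax, whereas the paper works with indicator expectations and bounds the error term by $\lambda$ directly.
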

\begin{proof}
	We will lower bound and upper bound the same quantity $\Delta_E(\partmin(z,h), h)$. Note that by \cref{eqn:coset}, we have $\Delta_{L,\cz^\perp} (z,h) = \Delta_{L,\cz^\perp} (\partmin(z,h),h)$.
	
	\begin{align*}
			\Ex{e\in E}{\partmin(z,h)_e \neq h_e} ~&=~ \Ex{\li\in L}{{\dist{\partmin(z,h)_{\li} , h_{\li} }}} \\
			~&\geq~ \Ex{\li\in L}{\indi{\partmin(z,h)_{\li} \not\in  h_{\li} + \cz^\perp} \cdot \dist{\partmin(z,h)_{\li},h_{\li} }} \\
			~&\geq~ \Ex{\li\in L}{\indi{\partmin(z,h)_{\li} \not\in  h_{\li} + \cz^\perp} \cdot \delta_\inn} \\
			~&=~ \delta_\inn \cdot \Delta_{L,\cz^\perp} (\partmin(z,h),h) \\
			~&=~ \delta_\inn \cdot \Delta_{L,\cz^\perp} (z,h)&& (\text{Using \cref{eqn:coset}}) .
	\end{align*}

	For the upper bound, we will use the expander mixing lemma~(\cref{lem:eml_bipartite}).
	\begin{align*}
		\Ex{e\in E}{{\indi{\partmin(z,h)_e \neq h_e}}} ~&\leq~ \Ex{(\li, \ri)\, \in\, E}{{ \indi{\partmin(z,h)_{\li} \neq h_{\li} } \cdot \indi{\partmin(z,h)_{\ri} \neq h_{\ri} }}} \\[4pt]
		~&\leq~ \Ex{\li, \ri}{{ \indi{\partmin(z,h)_{\li} \neq h_{\li} } \cdot \indi{\partmin(z,h)_{\ri} \neq h_{\ri} }}} + \lambda &&\text{(\cref{lem:eml_bipartite})}\\[4pt]
		~&=~\Delta_{L} (\partmin(z,h),h)\cdot \Delta_R(\partmin(z,h),h) + \lambda  \\[4pt]
		~&=~ \Delta_{L,\cz^\perp} (\partmin(z,h),h)\cdot \Delta_R(\partmin(z,h),h) + \lambda &&\text{(Definition of $\partmin$)} \\[4pt]
		~&=~ \Delta_{L,\cz^\perp} (z,h)\cdot \Delta_R(\partmin(z,h),h) + \lambda &&\text{(\cref{eqn:coset})} \\[4pt]
		~&\leq~ \Delta_{L,\cz^\perp} (z,h)\cdot \Delta_R(z,h) + \lambda &&\text{(\cref{eqn:monotone})}
\end{align*}
	 Comparing the two sides, we get  
	\begin{align*}
		\Delta_{R}(z,h)\cdot \Delta_{L, \cz^\perp}(z,h) + \lambda ~\geq~ \delta_\inn \cdot \Delta_{L, \cz^\perp}(z,h)
	\end{align*}
	Since $z\not\in h+ \ez^\perp$, there exists at least one vertex $\li \in L$ such that $z_\li \not \in h_\li + \cz^\perp$ and thus, $\Delta_{L,\cz^\perp} (z,h) > 0$.  Dividing by it gives the result.	
\end{proof}

We now deduce that using AEL machinery amplifies the distance of the base outer code. 

\begin{theorem}[AEL distance]\label{thm:ael_final}
Let $\cD$ be a $[[n, k_\out, \delta_\out\cdot n]]_{q, b_\out}$ vector space CSS code and $\cC$ be a $[[d, k_\inn, \delta_{\inn}\cdot d]]_{q, b_\inn}$ vector space CSS code.
Let $\cF$ be the AEL code obtained by using an $(n,d,\lambda)$-expander. Then, $\cF$ is a $[[n, \frac{k_\out \cdot k_\inn}{d}, \delta_R\cdot n]]_{q, db_\inn}$ vector space CSS code where $\delta_R \geq \delta_{\inn} - \frac{\lambda}{\delta_{\out}}$.
\end{theorem}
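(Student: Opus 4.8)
The plan is to split the statement into two parts: that $\cF$ is a vector space CSS code with the stated blocklength, alphabet and dimension, and that $\delta_R \geq \delta_\inn - \lambda/\delta_\out$. The first part requires no new work: \cref{prop:concat_dual} already shows $\cE = \cD \circ_\varphi \cC$ is a vector space CSS code (in particular $\ez^\perp \subseteq \ex$) of blocklength $nd$ and $\F_q$-dimension $k_\out k_\inn b_\inn$, and \cref{cor:ael_rate} records that folding along the right-vertex neighbourhoods turns this into $\cF$ with blocklength $n$, alphabet $\F_q^{db_\inn}$, dimension $k_\out k_\inn/d$, and distance $\delta_R n = \min\{\,|\{i : z_{\ri_i}\neq 0\}| : z \in (\ex\setminus\ez^\perp)\cup(\ez\setminus\ex^\perp)\,\}$. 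So the only thing left is to lower-bound this minimum. Since the construction is symmetric under interchanging the $X$- and $Z$-data (replace $\phix,\dx,\dz^\perp,\cz^\perp$ throughout by $\phiz,\dz,\dx^\perp,\cx^\perp$), I would fix an arbitrary $z \in \ex\setminus\ez^\perp$ and bound $\Delta_R(z,0) = |\{i:z_{\ri_i}\neq 0\}|/n$ from below.

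The first step is to apply \cref{lem:zfc_dist} with reference codeword $h = 0$. Since $0 \in \ex$ and $z \notin 0 + \ez^\perp$, the lemma gives
\[
\Delta_R(z,0) ~\geq~ \delta_\inn - \frac{\lambda}{\Delta_{L,\cz^\perp}(z,0)}\mcom
\]
and $\Delta_{L,\cz^\perp}(z,0) = \Ex{\li\in L}{\indi{z_\li\notin\cz^\perp}}$ is strictly positive because $\F_q^n\otimes\cz^\perp\subseteq\ez^\perp$, so $z\notin\ez^\perp$ forces some $z_\li\notin\cz^\perp$. Thus the whole theorem reduces to proving $\Delta_{L,\cz^\perp}(z,0) \geq \delta_\out$.

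The second step — the only genuinely new computation — is to translate this into a statement about the \emph{outer} code. Using the explicit description of $\ex$ from \cref{sec:ael}, write $z$ in its unique form $z_\li = \phix(u_\li) + w_\li$ with $w_\li\in\cz^\perp$ for each $\li\in L$ and $u = (u_\li)_{\li\in L} \in \dx$; this decomposition is unique because $\mathrm{im}\,\phix = W_X$ and $\cz^\perp$ meet only in $0$ (via \cref{lem:compl}, which underlies the concatenation). That same triviality gives $z_\li\in\cz^\perp \iff \phix(u_\li)\in\cz^\perp \iff u_\li = 0$, hence $\Delta_{L,\cz^\perp}(z,0) = |u|/n$, the fractional Hamming weight of $u$ over $\F_q^{b_\out}$. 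Now \cref{prop:concat_dual} says $\ez^\perp = \wphix(\dz^\perp) + \F_q^n\otimes\cz^\perp$ with $\dz^\perp\subseteq\dx$, and matching this against the unique decomposition of $z$ shows $z\in\ez^\perp \iff u\in\dz^\perp$. Therefore our hypothesis $z\notin\ez^\perp$ forces $u\in\dx\setminus\dz^\perp$, and the distance of the CSS code $\cD$ yields $|u|/n\geq\delta_\out$. Substituting back, $\Delta_R(z,0)\geq\delta_\inn - \lambda/\delta_\out$; minimizing over $z$ (and over the symmetric $Z$-side) gives $\delta_R\geq\delta_\inn-\lambda/\delta_\out$, completing the proof.

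I expect essentially no obstacle here, because the substantive content is all inside \cref{lem:zfc_dist}: the partial-minimizer trick replacing $z$ by $\partmin(z,h)$, which preserves the $\cz^\perp$-cosets (hence $\Delta_{L,\cz^\perp}$), does not increase $\Delta_R$, and makes the edge-distance amenable to the bipartite expander mixing lemma via the pointwise bound $\indi{z_\li\neq h_\li}\cdot\indi{z_\ri\neq h_\ri}$. Granting that lemma, the one place I would be careful is the bookkeeping of the $\F_q$-linear decomposition $z = \wphix(u)+w$ under the folding/unfolding conventions of \cref{sec:ael}, so that the coset condition ``$z\notin\ez^\perp$'' is faithfully converted into ``$u\notin\dz^\perp$'' before invoking $\delta_\out$; everything else is a direct chaining of the two cited propositions with \cref{lem:zfc_dist}.
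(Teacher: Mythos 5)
Your proposal is correct and follows essentially the same route as the paper: cite \cref{cor:ael_rate} for the dimension, then reduce the distance bound to \cref{lem:zfc_dist} applied with $h=0$. The only difference is that you spell out the step the paper leaves implicit — that $z\notin\ez^\perp$ forces $u\in\dx\setminus\dz^\perp$ via the unique decomposition $z_\li=\phix(u_\li)+w_\li$, whence $\Delta_{L,\cz^\perp}(z,0)=|u|/n\geq\delta_\out$ — and your bookkeeping there is accurate.
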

\begin{proof}
	\cref{cor:ael_rate} gives the dimension of the code. To compute the distance, we observe that  $\delta_R = \min_v \disR{v}{0}$ where the min is over non-trivial codewords. The result now follows from the above distance bound~\cref{lem:zfc_dist}. 
\end{proof}


\section{SoS Proof of Distance}\label{sec:sos_distance}



To be able to use SoS notation freely, we start by extending the definition of partial minimizer to be a vector-valued local function so that can be used inside pseudoexpectations. For a fixed codeword $h$, we define
\[
	\partmin(\zee,h)_{\li} = \begin{cases}
		\zee_{\li} \;\; \text { if } \;\;  \zee_{\li} \not\in\,  h_{\li} + \cz^\perp \\[6pt]
		h_{\li} \;\; \text { if } \;\; \zee_{\li} \in\,  h_{\li} + \cz^\perp 
	\end{cases}
\]
More explicitly, $\partmin(\zee,h)_{\li}$ is shorthand for $\indi{\zee_{\li} \not\in h_{\li} + \cz^{\perp}} \cdot \zee_{\li} + \indi{\zee_{\li} \in h_{\li} + \cz^{\perp}} \cdot h_{\li}$.
%
\begin{align} 
	&\text{ (\small{Coset-preserving}) } \;\;\; &&\partmin(\zee,h)_{\li} \in\, \zee_{\li} + \cz^\perp \iff \unconx(\partmin(\zee,h)_{\li}) = \unconx(\zee_{\li}) \;\; \forall\, l \in L  \label{eqn:coset_sos}\\[4pt]
	&\text{ (\small{Monotone}) } \;\;\;	&&\indi{ \partmin(\zee, h)_r \neq h_r} ~\leq~ \indi{\zee_r \neq h_r} \;\; \forall \, r \in\, R \label{eqn:monotone_sos}
\end{align}
Recall that $\unconx$ is the local inversion map defined in \cref{def:local_inversion}. We emphasize that the partial minimizer is importantly defined in a way that the monotonicity property is true \emph{locally}.

Next, we extend the distance proof of last section to pseudocodewords. We use the same definition of pseudocodewords as defined for classical AEL codes, and refer the reader to \cref{sec:sos_and_codes} for details.
We also recall the following non-convex property from \cref{def:eta_good} that allows us to prove a distance property similar to \cref{lem:zfc_dist} for pseudocodewords.

\begin{definition}[Restatement of \cref{def:eta_good}]\label{def:eta_good_duplicate}
	A pseudocodeword is $\eta$-good if,
	\[
		\Ex{\li,\ri}{ \tildeCov{ \zee_{\li}}{\zee_{\ri} } } \leq \eta
	\]
\end{definition}

Now we prove a distance bound analogous to \cref{lem:zfc_dist} for pseudocodewords that satisfy the $\eta$-good property from \cref{def:eta_good_duplicate}. This is the key statement we need to make the framework from \cref{chap:framework} applicable to the quantum AEL setting.

\begin{lemma}[Distance proof]\label{lem:sos_ael_distance_quantum}
Let $\tildeEx{\cdot}$ be an $\eta$-good pseudocodeword and $h \in \Sigma^E$ be such that $\fold(h)$ is a codeword in $ \fx$ and $\distLperp{\tildeEx{\cdot}} > 0$. Then,
\[\distR{\tildeEx{\cdot}} ~\geq~ \delta_\inn - \frac{\lambda + 
\eta}{\distLperp{\tildeEx{\cdot}}}  \mper\] 	
\end{lemma}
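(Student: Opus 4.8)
The plan is to mirror the non-SoS distance proof of \cref{lem:zfc_dist}, but now inside pseudoexpectation brackets, using the $\eta$-good property (\cref{def:eta_good_duplicate}) in place of the trivial factorization of the indicator over a product of left and right events, and using the SoS expander mixing lemma (\cref{lem:eml_for_pexp}) in place of the bipartite EML. The key technical device is to run the argument through the partial minimizer $\partmin(\zee,h)$, so that the coset-preserving property \cref{eqn:coset_sos} and the local monotonicity property \cref{eqn:monotone_sos} can be invoked as \emph{local} (hence SoS-representable) statements. Concretely, I would define the $d$-local functions $X_{\li}(\zee) = \indi{\partmin(\zee,h)_{\li} \neq h_{\li}} = \indi{\zee_{\li} \not\in h_{\li}+\cz^{\perp}}$ (the last equality by definition of $\partmin$) and $Y_{\ri}(\zee) = \indi{\partmin(\zee,h)_{\ri} \neq h_{\ri}}$, and work with $\tau = \sqrt{\Ex{\li}{\tildeEx{X_{\li}(\zee)}}}\cdot\sqrt{\Ex{\ri}{\tildeEx{Y_{\ri}(\zee)}}}$. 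Note $\Ex{\li}{\tildeEx{X_{\li}(\zee)}} = \distLperp{\tildeEx{\cdot}}$ by \cref{eqn:coset_sos}.

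For the lower bound I would write, exactly as in \cref{lem:distance_from_codeword},
\[
\Ex{e}{\tildeEx{\indi{\partmin(\zee,h)_e \neq h_e}}} = \Ex{\li}{\tildeEx{\dist{\partmin(\zee,h)_{\li}, h_{\li}}}} \geq \delta_\inn \cdot \Ex{\li}{\tildeEx{X_{\li}(\zee)}},
\]
using that the inner code $\cC$ has distance $\delta_\inn$ and that pseudocodewords respect the local constraint $\zee_{N_L(\li)} \in \cz^{\perp}$-coset structure (more precisely, the constraint forcing $\zee_{N_L(\li)} \in \cx$, so whenever $\partmin(\zee,h)_{\li}\neq h_{\li}$ it is a nonzero-coset representative and differs from $h_{\li}$ in at least $\delta_\inn$ fraction of positions). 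Taking the geometric mean of this with the analogous right-side inequality gives $\Ex{e}{\tildeEx{\indi{\partmin(\zee,h)_e \neq h_e}}} \geq \delta_\inn \cdot \tau$. For the upper bound I would use the pointwise domination $\indi{\partmin(\zee,h)_e \neq h_e} \leq X_{\li}(\zee)\cdot Y_{\ri}(\zee)$ (valid for $e = (\li,\ri)$ since if the edge value differs, both endpoint-restrictions differ), then apply \cref{lem:eml_for_pexp} to pass from $\Ex{\li\sim\ri}{}$ to $\Ex{\li,\ri}{}$ at the cost of $\lambda\sqrt{\Ex{\li}{\tildeEx{X_{\li}^2}}}\sqrt{\Ex{\ri}{\tildeEx{Y_{\ri}^2}}} = \lambda\tau$ (using $X_{\li}^2 = X_{\li}$, $Y_{\ri}^2 = Y_{\ri}$), and finally apply the $\eta$-good lemma to replace $\Ex{\li,\ri}{\tildeEx{X_{\li}Y_{\ri}}}$ by $\Ex{\li,\ri}{\tildeEx{X_{\li}}\tildeEx{Y_{\ri}}} + \eta = \tau^2 + \eta$. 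This yields $\delta_\inn\tau \leq \tau^2 + \lambda\tau + \eta$, and dividing by $\tau = \distLperp{\tildeEx{\cdot}}\cdot(\text{geometric-mean factor})$ — more carefully, dividing the chain $\delta_\inn\distLperp{\tildeEx{\cdot}} \leq \distLperp{\tildeEx{\cdot}}\cdot\distR{\tildeEx{\cdot}} + \lambda + \eta$ by $\distLperp{\tildeEx{\cdot}} > 0$ — gives the claimed bound $\distR{\tildeEx{\cdot}} \geq \delta_\inn - \frac{\lambda+\eta}{\distLperp{\tildeEx{\cdot}}}$.

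The one subtlety I would flag as the main obstacle is justifying the monotonicity step $\Delta_R(\partmin(\zee,h),h) \leq \Delta_R(\zee,h)$, i.e. \cref{eqn:monotone_sos}, \emph{inside} the pseudoexpectation: we need $\tildeEx{\indi{\partmin(\zee,h)_{\ri}\neq h_{\ri}}} \leq \tildeEx{\indi{\zee_{\ri}\neq h_{\ri}}}$, which follows from the claim (after \cref{def:constraints_on_sos}) that a local function can be replaced by a dominating local function inside $\tildeEx{\cdot}$ — but this requires that the inequality $\indi{\partmin(\zee,h)_{\ri}\neq h_{\ri}} \leq \indi{\zee_{\ri}\neq h_{\ri}}$ hold \emph{pointwise as local functions on $[q]^{N_R(\ri)}$ alone}. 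This is exactly why \cref{eqn:monotone_sos} was phrased locally: changing $\zee_{\li}$ to $h_{\li}$ only decreases disagreement with $h$ on every edge, in particular on edges incident to $\ri$, so whenever $\partmin$ disagrees with $h$ at $\ri$, so does $\zee$. I would make sure this domination is genuinely a statement about the restriction to $N_R(\ri)$ (each coordinate of $\partmin(\zee,h)_{\ri}$ is either a coordinate of $\zee_{\ri}$ or the matching coordinate of $h_{\ri}$, so it can only agree with $h_{\ri}$ in more places), and then the monotone replacement inside $\tildeEx{\cdot}$ is licensed. Everything else is a routine transcription of \cref{lem:zfc_dist} combined with the $\eta$-good machinery already developed for classical AEL in \cref{lem:AEL_amplification}; indeed the final inequality chain is formally identical to that proof with $\indi{\zee_{N_L(\li)}\neq h_{N_L(\li)}}$ replaced by $X_{\li}$ and the coset-distance bookkeeping tracked via \cref{eqn:coset_sos}.
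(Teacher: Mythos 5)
Your proposal is correct and follows essentially the same route as the paper's proof: run the two-sided bound of \cref{lem:zfc_dist} on $\Ex{e}{\tildeEx{\indi{\partmin(\zee,h)_e \neq h_e}}}$, using the coset-preserving identity to turn the left factor into $\distLperp{\tildeEx{\cdot}}$, the SoS expander mixing lemma plus the $\eta$-good property for the upper bound, and the locally-phrased monotonicity of $\partmin$ (licensed by the dominating-local-function claim) to replace the right factor by $\distR{\tildeEx{\cdot}}$ before dividing by $\distLperp{\tildeEx{\cdot}} > 0$. The subtlety you flag about monotonicity needing to hold pointwise on the restriction to $N_R(\ri)$ is exactly the point the paper emphasizes when defining $\partmin$ for SoS.
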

\begin{proof}
	We will closely follow the proof of~\cref{lem:zfc_dist} and similarly, will lower bound and upper bound the quantity $\Ex{e\in E}{~\tildeEx{~\indi{\partmin(\zee, h)_e \neq h_e}}}$. 
\begin{align*}
			\Ex{e\in E}{~\tildeEx{~\indi{\partmin(\zee, h)_e \neq h_e}}} ~&=~ \Ex{\li\in L}{\tildeEx{\dist{\partmin(\zee, h)_{\li} , h_{\li} }}} \\
			~&\geq~ \Ex{\li\in L}{\tildeEx{\indi{\partmin(\zee, h)_{\li} \not\in h_{\li} + \cz^\perp} \cdot \dist{\partmin(\zee, h)_{\li} , h_{\li} }}} \\
			~&\geq~ \Ex{\li\in L}{\tildeEx{\indi{\partmin(\zee, h)_{\li} \not\in h_{\li} + \cz^\perp} \cdot \delta_{\inn} }} \\
			~&=~ \delta_{\inn} \cdot \Ex{\li\in L}{\tildeEx{\indi{\partmin(\zee, h)_{\li} \not\in h_{\li} + \cz^\perp} }} \\
			~&=~ \delta_{\inn} \cdot \Ex{\li\in L}{\tildeEx{\indi{\zee_{\li} \not\in h_{\li} + \cz^\perp} }} && (\text{Using \cref{eqn:coset_sos}}) \\
			~&=~ \delta_{\inn} \cdot \distLperp{\tildeEx{{\cdot}}}
	\end{align*}

For the upper bound, we will use the expander mixing lemma~(\cref{lem:eml_for_pexp}).
	\begin{align*}
		& \qquad \Ex{e\in E}{~\tildeEx{~\indi{\partmin(\zee, h)_e \neq h_e}}} \\
		~&\leq~ \Ex{\li \sim \ri}{~\tildeEx{~\indi{\partmin(\zee,h)_{\li} \neq h_{\li} } \cdot \indi{\partmin(\zee,h)_{\ri} \neq h_{\ri} }}} \\[4pt]
		~&\leq~ \Ex{\li, \ri}{\tildeEx{ \indi{\partmin(\zee, h)_{\li} \neq h_{\li} } \cdot \indi{\partmin(\zee, h)_{\ri} \neq h_{\ri} }}} + \lambda &&\text{( \cref{lem:eml_for_pexp} )}\\[4pt]
		~&\leq~ \Ex{\li, \ri}{\tildeEx{ \indi{\partmin(\zee, h)_{\li} \neq h_{\li} }} \cdot \tildeEx{\indi{\partmin(\zee, h)_{\ri} \neq h_{\ri} }}} + \lambda +\eta &&\text{( \cref{def:eta_good_duplicate} )}\\[4pt]
		~&=~ \Ex{\li}{\tildeEx{ \indi{\partmin(\zee, h)_{\li} \neq h_{\li} }}} \cdot \Ex{\ri}{\tildeEx{\indi{\partmin(\zee, h)_{\ri} \neq h_{\ri} }}} + \lambda +\eta \\[4pt]
		~&=~ \Ex{\li}{\tildeEx{ \indi{\zee_{\li} \not\in h_{\li} + \cz^{\perp}}}} \cdot \Ex{\ri}{\tildeEx{\indi{\partmin(\zee, h)_{\ri} \neq h_{\ri} }}} + \lambda +\eta &&\text{ (Definition of $\partmin$)} \\[4pt]
		~&\leq~ \Ex{\li}{\tildeEx{ \indi{\zee_{\li} \not\in h_{\li} + \cz^{\perp}}}} \cdot \Ex{\ri}{\tildeEx{\indi{\zee_{\ri} \neq h_{\ri} }}} + \lambda +\eta &&\text{( \cref{eqn:monotone_sos} )} \\[4pt]
		~&=~\distLperp{\tildeEx{\cdot}} \cdot \distR{\tildeEx{\cdot}}  + \lambda + \eta\;.
\end{align*}

Comparing the two sides, we get  
	\begin{align*}
		\distLperp{\tildeEx{\cdot}} \cdot \distR{\tildeEx{\cdot}} + \lambda + \eta  ~\geq~ \delta_{\inn} \cdot \distLperp{\tildeEx{{\cdot}}} \; .
	\end{align*}
	Dividing by $\distLperp{\tildeEx{{\cdot}}}$ gives us the result.
	\end{proof}

\section{List Decoding Algorithm}
In this section, we combine the algorithmic covering, correlation rounding and the distance proof to give a list decoding algorithm. This part of the chapter mostly follows the details from \cref{chap:framework}, and so we omit some proofs and only mention the relevant lemmas.

For any $f\in \Sigma^E$, recall that $f_r \in  \Sigma^d$ is the restriction of $f$ to the edge-neighborhood $N(\ri)$ of a vertex $r \in R$. Suppose the inner code is defined over alphabet $\Sigma = \F_q^{b_{\inn}}$, so that codewords can be seen as members of $\Sigma^E$. Let us denote the size of $\Sigma$ by $s \defeq q^{b_{\inn}}$.

We first start with an algorithmic covering lemma that states that an SoS solution that is close to every codeword in the list can be found in polynomial time. As in \cref{chap:framework}, such a pseudocodeword can be found by maximizing an entropy proxy. This part of the algorithm is analytic and does not care about the code being classical vs quantum.
\begin{lemma}[Algorithmic Covering Lemma]\label{lem:algo_covering_quantum}
	Let $g\in \Sigma^E$, $\alpha \in (0,1)$ and $\eps \in (0,1-\alpha)$. 
	There exists a pseudocodeword $\tildeEx{\cdot}$ of degree $t\geq d$ such that for every $h \in \fx$ such that $\distR{g} < 1-\alpha - \eps$, it holds that
	\[
		\distR{\tildeEx{\cdot}} < 1-\alpha^2 - 2 \alpha \eps \mper
	\]
	Moreover, such a pseudocodeword can be found in time $n^{\calO(t)}$.
\end{lemma}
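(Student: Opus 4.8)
The plan is to mirror the proof of the algorithmic covering lemma for classical AEL codes (\cref{lem:cover_for_list_ael}, which itself reduces to \cref{lem:abstract_algorithmic_covering_ael}), adapting only the trivial bookkeeping needed because distances are now measured by $\distR{\cdot}$ over the folded alphabet $\Sigma^d$. The key point, which we should highlight up front, is that this half of the argument is purely analytic: it never refers to the coset structure of $\fz^\perp$, the partial minimizer, or any quantum-specific feature. It only uses that the set of degree-$t$ AEL pseudocodewords is convex and optimizable in time $n^{\calO(t)}$, that $\chiAEL$ is a $d$-local vector-valued function, and the abstract Covering \cref{lem:covering}. So the proof is essentially a citation with parameters tracked.

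Concretely, first I would set up the embedding: apply $\chi_{s^d}: \Sigma^d \to \R^{s^d-1}$ pointwise over $R$ to get $\chiAEL: \Sigma^E \to (\R^{s^d-1})^R$ as in the AEL section of \cref{chap:framework}, so that for $f_1, f_2$ with $\distR{f_1, f_2} = (1-\tfrac{1}{s^d})(1-\beta)$ we have $\ip{\chiAEL(f_1)}{\chiAEL(f_2)} = \beta$. Write $1 - \alpha = (1-\tfrac{1}{s^d})(1-\beta)$ — actually it is cleaner to directly track $\alpha$: if $\distR{g} < 1 - \alpha - \eps$ then $\ip{\chiAEL(g)}{\chiAEL(h)} > \alpha + \Omega(\eps)$ after absorbing the $(1-1/s^d)$ factor into the constant, and squaring (via the abstract covering lemma applied with $u = \chiAEL(g)$) gives a pseudocodeword $\tildeEx{\cdot}$ with $\ip{\tildeEx{\chiAEL(\zee)}}{\chiAEL(h)} > \alpha^2 + 2\alpha\eps$ for every $h$ in the list. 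Then translate back to distances using $\distR{\tildeEx{\cdot}, h} = \Ex{\ri}{\tildeEx{\indi{\zee_{N_R(\ri)} \neq h_{N_R(\ri)}}}}$ and $\ip{\chiAEL(\cdot)}{\chiAEL(\cdot)} = 1 - \tfrac{s^d}{s^d-1}\distR{\cdot}$ to obtain $\distR{\tildeEx{\cdot}, h} < 1 - \alpha^2 - 2\alpha\eps$. The convex program to solve is exactly the one in the AEL table of \cref{chap:framework}: minimize $\Psi(\tildeEx{\cdot}) = \ip{\tildeEx{\chiAEL(\zee)}}{\tildeEx{\chiAEL(\zee)}}$ subject to $\ip{\tildeEx{\chiAEL(\zee)}}{\chiAEL(g)} > \alpha + \Omega(\eps)$ and $\tildeEx{\cdot}$ being a degree-$t$ AEL pseudocodeword; the optimality-of-minimum-norm argument (choosing the convex combination $\PExp_{\tee} = (1-\tee)\tildeEx{\cdot} + \tee\PExp^{(h)}$ and showing $\Psi$ strictly decreases unless the covering inequality holds) goes through verbatim since $h$, being a genuine codeword, embeds as a valid pseudocodeword.

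I expect no real obstacle here — the substantive work for the quantum case lies entirely in \cref{lem:sos_ael_distance_quantum} (the SoS distance proof via the partial minimizer), not in this covering lemma. The only thing to be mildly careful about is a cosmetic mismatch between the ``coset list'' $\calL_X$ in \cref{sum:list_dec_vector_space} and the statement above: the lemma is phrased for $h \in \fx$ with $\distR{g, h}$ small, and multiple such $h$ in the same $\fz^\perp$-coset are all simultaneously covered by the same $\tildeEx{\cdot}$, which is exactly what we want when we later feed $\tildeEx{\cdot}$ into the outer decoder. So in the writeup I would simply state: ``This is identical to the proof of \cref{lem:cover_for_list_ael}, substituting alphabet $\Sigma^d$ of size $s^d$ and noting that the embedding $\chiAEL$ and the convex set of AEL pseudocodewords are exactly as in \cref{sec:list_decoding_ael}; the only input used is convexity of the pseudocodeword polytope, $d$-locality of $\chiAEL$, and \cref{lem:covering}, none of which is affected by the CSS structure,'' followed by the two-line parameter chase above, set inside a \begin{proof} \dots \end{proof} block, with the detailed quadratic optimization over $\tee$ omitted by reference.
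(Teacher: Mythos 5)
Your proposal is correct and matches the paper's intent exactly: the thesis omits the proof of \cref{lem:algo_covering_quantum} precisely because, as you observe, the covering step is purely analytic and carries over verbatim from \cref{lem:cover_for_list_ael} and \cref{lem:abstract_algorithmic_covering_ael}, with the only change being the alphabet $\Sigma^d$ and the use of the alphabet-free Johnson bound (so that inner products equal agreements and $(\alpha+\eps)^2 \geq \alpha^2 + 2\alpha\eps$ gives the stated constants, the leftover $\eps^2$ absorbing the $O(1/s^d)$ slack from the embedding). Your remark that the coset structure of $\fz^{\perp}$ plays no role here, and that all codewords of a coset are covered simultaneously, is exactly the right observation.
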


Next, we mention a lemma that says that the $\eta$-good property can be obtained by random conditioning. This technique is common in algorithmic applications of Sum-of-Squares, and first appeared in \cite{BRS11}. It was adapted for decoding in \cref{chap:framework} as \cref{lem:low_covariance_solution}.

\begin{lemma}[Restatement of \cref{lem:low_covariance_solution}]\label{lem:conditioning_eta_good_quantum}
	Let $\eta>0$ be arbitrarily small. Given any pseudocodeword $\tildeEx{\cdot}$ of degree $\geq 2d\left(\frac{s^{3d}}{\eta^2}+1\right)$, there exists an integer $u^* \leq s^{3d}/\eta^2 $ such that 
	\[
		\Ex{r_1,r_2,\cdots,r_{u^*}}{\Ex{\li,\ri}{\tildecov\brackets*{\zee_{\li},\zee_{\ri} \vert \zee_{r_1},\zee_{r_2},\cdots,\zee_{r_{u^*}}}}} \leq \eta \mper
	\]
\end{lemma}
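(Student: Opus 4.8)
The plan is to observe that this lemma, though it appears in the quantum chapter, is a statement purely about degree-$t$ pseudoexpectation operators on the edge variables $\zee$ and their iterated conditionings; it is completely insensitive to whether the underlying code is classical or quantum, and in particular it makes no reference to the CSS structure, the partial minimizer, or the maps $\varphi$. So I would carry over the proof of \cref{lem:low_covariance_solution} essentially verbatim, the only bookkeeping change being that the classical alphabet size $q$ is replaced by $s = q^{b_{\inn}}$: a local neighborhood $N(\li)$ or $N(\ri)$ consists of $d$ edges, each labeled from $\Sigma = \F_q^{b_\inn}$, so the local assignments $\zee_{\li}$ and $\zee_{\ri}$ each range over $s^d$ values, which is exactly what produces the $s^{3d}$ in the statement.

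First I would restate the single-step variance-reduction fact (the analogue of \cref{lem:conditioning_reduces_variance}, itself a consequence of Lemma 5.2 of \cite{BRS11}): if $\Ex{\li,\ri}{\tildeCov{\zee_{\li}}{\zee_{\ri}}} > \eta$, then for a uniformly random $\ri \in R$,
\[
\Ex{\ri \in R}{\Ex{\li}{\tildeVar{\zee_{\li} \vert \zee_{\ri}}}} ~<~ \Ex{\li}{\tildeVar{\zee_{\li}}} - \frac{\eta^2}{s^{3d}} \mper
\]
The derivation is the same chain used in \cref{lem:conditioning_reduces_variance}: the \cite{BRS11} lemma gives a pointwise drop of $\tfrac{1}{s^{d}}\sum_{\alpha,\beta} (\tildecov(\zee_{\li,\alpha},\zee_{\ri,\beta}))^2 / \tildeVar{\zee_{\ri,\beta}}$; bounding $\tildeVar{\zee_{\ri,\beta}} \le 1$ and applying a power-mean inequality over the at most $s^{2d}$ pairs $(\alpha,\beta)$ turns this into $\tfrac{1}{s^{3d}}(\Ex{\li,\ri}{\tildeCov{\zee_{\li}}{\zee_{\ri}}})^2$ after one further use of Jensen's inequality over the choice of $(\li,\ri)$.

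Next I would run the standard potential argument. Set $\Phi_k(\tildeEx{\cdot}) = \Ex{r_1,\cdots,r_k}{\Ex{\li}{\tildeVar{\zee_{\li} \vert \zee_{r_1},\cdots,\zee_{r_k}}}}$. Since pseudovariances lie in $[0,1]$ (as shown in the preliminaries) and are non-increasing under conditioning, we get the monotone chain $1 \ge \Phi_0 \ge \Phi_1 \ge \cdots \ge 0$, so among the first $\ceil{s^{3d}/\eta^2}$ steps there is some index $u^* \le s^{3d}/\eta^2$ at which $\Phi_{u^*}(\tildeEx{\cdot}) - \Phi_{u^*+1}(\tildeEx{\cdot}) \le \eta^2/s^{3d}$. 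Applying the contrapositive of the single-step fact to the conditioned operator $\tildeEx{\cdot \vert \zee_{r_1},\cdots,\zee_{r_{u^*}}}$, averaged over the choice of $r_1,\cdots,r_{u^*}$ and over the sampled local assignment, yields exactly $\Ex{r_1,\cdots,r_{u^*}}{\Ex{\li,\ri}{\tildecov\brackets*{\zee_{\li},\zee_{\ri} \vert \zee_{r_1},\cdots,\zee_{r_{u^*}}}}} \le \eta$. A final line handles degrees: conditioning on $u^*+1 \le s^{3d}/\eta^2 + 1$ vertices is a $d(u^*+1)$-local event, so the hypothesis that $\tildeEx{\cdot}$ has SoS-degree $\ge 2d(s^{3d}/\eta^2 + 1)$ leaves degree $\ge 2d$ at every stage invoked, which is precisely what is needed for all the covariances and variances above to be well defined.

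I do not expect a genuine obstacle here: the only point requiring care is exactly the assertion that the lemma is code-agnostic, i.e.\ confirming that nothing in the vector-space CSS setup (the folding, the duality-preserving maps, the partial minimizer) enters the argument — and indeed none of it does, since $\Phi_k$ and the \cite{BRS11} step involve only the formal variables $\zee$. The mildly fiddly part is tracking the alphabet size $s = q^{b_\inn}$ through the constants $s^d, s^{2d}, s^{3d}$ and verifying that the stated degree bound $2d(s^{3d}/\eta^2 + 1)$ is exactly what the iterated conditioning consumes.
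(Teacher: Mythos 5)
Your proposal is correct and follows essentially the same route as the paper: the lemma is indeed proved only once, in the classical setting (\cref{lem:low_covariance_solution} via \cref{lem:conditioning_reduces_variance}), and the quantum chapter invokes it verbatim with the alphabet size $q$ replaced by $s = q^{b_{\inn}}$, exactly as you describe. Your single-step variance-reduction bound, the potential function $\Phi_k$, the pigeonhole over $\lceil s^{3d}/\eta^2\rceil$ steps, and the degree accounting all match the paper's argument.
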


Finally, we give the statement of the main decoding theorem, that combines \cref{lem:sos_ael_distance_quantum}, \cref{lem:algo_covering_quantum} and \cref{lem:conditioning_eta_good_quantum} in exactly the same way as in \cref{chap:framework}. Further, we remark that this algorithm can also be derandomized as in \cref{chap:framework}, with the caveat that the output list cannot be pruned as in the classical case.

\begin{theorem}[List Decoding quantum AEL amplification]\label{thm:list_decoding_quantum}
	Let $(\fx, \fz)$ be the code obtained by applying quantum AEL amplification to the outer code $(\dx,\dz)$ and inner code $(\cx,\cz)$ of distance $\delta_{\inn}$ using an $(n,d,\lambda)$--expander graph. The inner code $(\cx,\cz)$ is over an alphabet $\Sigma$ of size $s$, so that $(\fx,\fz)$ is over alphabet $\Sigma^d$ of size $s^d$.

Suppose the code $(\dx, \dz)$ can be unique-decoded from radius $\delta_{\dec}$ in time $\calT(n)$. Assume that $\lambda < \delta_{\dec}$. 
	
	Then for any $\gamma > 0$, there exists an algorithm based on $s^{\calO(d)}/\eps^4$ levels of the SoS hierarchy that given $g\in \Sigma^E$, runs in time $\ln(1/\gamma) \cdot \brackets*{n^{s^{\calO(d)}/\eps^4} + \calT(n)}$ and produces a list $\calL'$ of cosets that contains the list of cosets $\calL \defeq \calL \parens*{g, \calJ\parens*{\delta_{\inn} - \frac{\lambda}{\delta_{\dec}}} - \eps }$ with probability at least $1-\gamma$. The size of $\calL'$ is at most $\widetilde{\Omega}_{s,d} \parens*{ \frac{\ln(1/\gamma)}{\eps^6} }$.
\end{theorem}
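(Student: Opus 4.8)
\textbf{Proof plan for \cref{thm:list_decoding_quantum}.} The plan is to assemble the three ingredients exactly as in the classical AEL decoding proof (\cref{thm:list_decoding_ael}), with \cref{lem:sos_ael_distance_quantum} replacing \cref{lem:AEL_amplification} as the distance proof, and to track the coset-level bookkeeping that is special to the quantum setting. By \cref{sum:list_dec_vector_space} it suffices to solve the $X$-decoding task; so fix an input $g \in \Sigma^E$ (identifying the folded ambient space $(\Sigma^d)^n$ with $\Sigma^E$ as in \cref{sec:ael}), and let $\tau = \calJ\parens*{\delta_\inn - \frac{\lambda}{\delta_\dec}} - \eps$. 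Set $\delta := \delta_\inn - \frac{\lambda}{\delta_\dec}$, write $1 - \alpha := \delta$ so that $\calJ(\delta) = 1 - \alpha$ translates (via $\calJ(\delta) = 1-\sqrt{1-\delta}$) into $\alpha^2 = 1 - \delta$, and note $\tau = (1-\alpha) - \eps$. First I would invoke the algorithmic covering \cref{lem:algo_covering_quantum} with this $\alpha$ and $\eps$ to obtain, in time $n^{\calO(t)}$, a degree-$t$ pseudocodeword $\tildeEx{\cdot}$ (for $t = 2d\parens*{s^{3d}/\eta^2 + 2}$, $\eta$ to be fixed) such that every $h \in \fx$ with $\distR{g} < \tau$ satisfies $\distR{\tildeEx{\cdot}} < 1 - \alpha^2 - 2\alpha\eps = \delta - \eps_2$ where $\eps_2 := 2\alpha\eps = \Theta(\eps)$.

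Next I would fix a single coset representative $h \in \fx$ with $\distR{\tildeEx{\cdot}} \le \delta - \eps_2$ and aim to recover $h$ up to $\ez^\perp$. Run the correlation-rounding step: by \cref{lem:conditioning_eta_good_quantum} there is $u^* \le s^{3d}/\eta^2$ such that conditioning on a random $\zee_{r_1},\dots,\zee_{r_{u^*}}$ makes the average covariance at most $\eta$; as in the proof of \cref{thm:tanner-decoding}, two applications of Markov's inequality over the random conditioning $(V,\beta)$ show that with probability $\Omega(\eps_2/\delta)$ the conditioned operator $\dupPE{\cdot}$ is simultaneously $\eta$-good (for $\eta$ a small multiple of $\eps_2^2\delta_\dec/\delta$) and still satisfies $\distR{\dupPE{\cdot}} \le \delta - \eps_2/2$. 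Here is the one genuinely quantum point: to apply the distance \cref{lem:sos_ael_distance_quantum} I need $\distLperp{\dupPE{\cdot}} > 0$, and then the lemma gives $\distR{\dupPE{\cdot}} \ge \delta_\inn - \frac{\lambda+\eta}{\distLperp{\dupPE{\cdot}}}$. Combining with the upper bound $\distR{\dupPE{\cdot}} \le (\delta_\inn - \frac{\lambda}{\delta_\dec}) - \eps_2/2$ and rearranging (exactly as in \cref{thm:list_decoding_ael}, using $\lambda < \delta_\dec$ so $\kappa := \lambda/\delta_\dec < 1$) yields $\distLperp{\dupPE{\cdot}} \le \delta_\dec - \Omega(\delta_\dec \eps_2/\delta_\inn)$. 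If instead $\distLperp{\dupPE{\cdot}} = 0$ then $h$ and the pseudocodeword already agree in every left neighborhood up to $\cz^\perp$, which only helps and is handled as a degenerate case.

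The final step is to convert the bound $\distLperp{\dupPE{\cdot}} < \delta_\dec - \eps_3$ into an actual coset. This is where I would use the local inversion map $\unconx$ of \cref{def:local_inversion} together with the coset-preserving property \cref{eqn:coset_sos}: the conditioned pseudoexpectation induces, for each $\li \in L$, a local distribution over $\cx$-symbols, which $\unconx$ pushes forward to a distribution $\calD_\li$ over $\F_q^{b_\out}$ (the outer-code alphabet), and $\distLperp{\dupPE{\cdot}} = \Ex{\li}{\Ex{\alpha \sim \calD_\li}{\one\{\unconx(h)_\li \ne \alpha\}}}$ because the partial-minimizer correction does not change the $\unconx$-image. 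Feeding $\{\calD_\li\}$ into the (derandomized) rounding \cref{lem:derandomized_decoding_from_distributions} for the outer code $\cD$ and calling its unique decoder from radius $\delta_\dec$ recovers the outer codeword $\unconx(h) \in \dx$ up to $\dz^\perp$, and applying $\wphix$ recovers $h$ up to $\wphix(\dz^\perp) \subseteq \ez^\perp$ — which is exactly a coset in $\fx/\fz^\perp$. Iterating the three steps over all $h$ in the list (each discovered with probability $\Omega_{s,d}(\eps^5)$ or so), taking a union of all coset-outputs, and repeating $\ln(1/\gamma)$ times for the success probability, gives $\calL'$ with $\abs{\calL'} = \widetilde\Omega_{s,d}(\ln(1/\gamma)/\eps^6)$; derandomization of the choices of $u^*$, of $(V,\beta)$, and of the rounding threshold proceeds verbatim as in \cref{thm:tanner-decoding}, with the sole caveat — noted already in \cref{sum:list_dec_vector_space} — that we cannot prune $\calL'$ down to exactly $\calL$ since membership in a Hamming ball is not checkable at the level of cosets. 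I expect the main obstacle to be purely notational rather than conceptual: carefully verifying that every distance quantity ($\distR{\cdot}$, $\distLperp{\cdot}$, $\distL{\cdot}$) behaves correctly under the folding/unfolding identification and under conditioning, and that the partial minimizer's local monotonicity \cref{eqn:monotone_sos} is what makes the expander-mixing step of \cref{lem:sos_ael_distance_quantum} go through inside $\tildeEx{\cdot}$ — this is the step that has no classical analogue and where a sign or coset error would be easy to make.
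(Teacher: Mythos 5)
Your proposal is correct and follows essentially the same route as the paper's proof: algorithmic covering lemma, correlation reduction via random conditioning with two Markov arguments and a union bound, the partial-minimizer-based SoS distance lemma to bound $\distLperp{\dupPE{\cdot}}$, and rounding through $\unconx$ to a word decodable by the outer code's unique decoder, with $\ln(1/\gamma)$ repetitions and the coset-level caveat that $\calL'$ cannot be pruned to exactly $\calL$. The only blemish is the notational slip "$1-\alpha := \delta$" (it should be $1-\alpha := \calJ(\delta)$, i.e.\ $\alpha = \sqrt{1-\delta}$), which your subsequent identity $\alpha^2 = 1-\delta$ already corrects.
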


\begin{proof}
	The decoding algorithm is presented as \cref{algo:ael-decoding-quantum}. 
	Recall that $\calL = \calL \parens*{g, \calJ\parens*{\delta_{\inn} - \frac{\lambda}{\delta_{\dec}}} - \eps }$ is the list of all cosets in $\ex/\ez^{\perp}$ that intersect the Hamming ball $\calB\parens*{g,\calJ\parens*{\delta_{\inn} - \frac{\lambda}{\delta_{\dec}}} - \eps}$.


\begin{figure}[!ht]
\begin{algorithm}{List Decoding}{$g \in \Sigma^E$, $\gamma\in (0,1)$}{List of cosets $\calL' \subseteq \ex / \ez^{\perp}$ that contains $\calL = \calL \parens*{g, \calJ\parens*{\delta_{\inn} - \frac{\lambda}{\delta_{\dec}}} - \eps }$}\label{algo:ael-decoding-quantum}
\begin{itemize}
\item Pick $\eta = \frac{\eps^2 \delta_{\dec}}{16\delta_{\inn}}$ and $M = \frac{\ln(1 / \gamma)}{p \ln p}$, where $p = \frac{ \parens*{\delta_{\dec}}^2 \cdot \eps^6}{2^{12} \cdot s^{3d} \cdot \delta_{\inn}^4}$.
\item Use \cref{lem:algo_covering_quantum} to obtain a pseudocodeword of degree $t \geq 2d(\frac{s^{3d}}{\eta^2} + 1)$ such that for every $h\in \ex$ which satisfies $\distR{g} < \calJ\parens*{\delta_{\inn} - \frac{\lambda}{\delta_{\dec}}} - \eps$, it holds that
\[
	\distR{\tildeEx{\cdot}} ~<~ \delta_{\inn} - \frac{\lambda}{\delta_{\dec}} - \eps
\]
%
%
%
\item Initialize $\calL' = \emptyset$.
\item Repeat $M$ times:
\begin{enumerate}[(i)]
	\item Choose $u$ uniformly at random from $\{1,2,\cdots , \frac{s^{3d}}{\eta^2}\}$.
	\item Choose a random subset $U \subseteq R$ of size $u$, and let $N(U)$ denote the edge neighborhood of $U$. Sample a random assignment $\sigma$ for $N(U)$ using the local distribution for this set of edges. That is, $\sigma$ is chosen with probability $\tildeEx{\indi{\zee_U = \sigma}}$.
	\item Condition $\tildeEx{\cdot}$ on $\zee_U = \sigma$, and let $\dupPE{\cdot} = \condPE{\cdot}{\zee_U = \sigma}$ be the conditioned pseudocodeword of degree $2d\parens*{\frac{s^{3d}}{\eta^2}+ 1} - 2\cdot d \cdot u \geq 2d$.
	\item Generate $y \in (\F_{q}^{b_{\out}})^L$ by independently sampling $\dupPE{\phi_X^{-1} (\zee_{\li})}$. That is, for $w \in \F_q^{b_{\out}}$, the probability that $y_{\li} = w$ is $\dupPE{\indi{\phi_X^{-1}(\zee_{\li}) = w}}$.
	\item Call the $X$-decoder of $(\dx,\dz)$ on $y$. If a coset $\dxz + \dz^{\perp}$ is found close to $y$,
	then add to $\calL'$ the coset $\wphix(\dxz) + \ez^{\perp}$.
\end{enumerate}
\item Prune $\calL'$ to only include one representative per coset, via Gaussian elimination.
\item Return $\calL'$.
\end{itemize}
\vspace{5pt}
\end{algorithm}
\end{figure}

We now argue that the probability of all the cosets in $\calL$ being included in $\calL'$ is at least $1 - \gamma$. Fix such a coset in $\calL$ and let $h \in \Sigma^E$ be the nearest codeword to $g$ from this coset $h+\ez^\perp$, so that $\distR{g} < \calJ\parens*{\delta_{\inn} - \frac{\lambda}{\delta_{\dec}}} - \eps$. \cref{lem:algo_covering_quantum} implies that $\tildeEx{\cdot}$ satisfies
\begin{align}\label{eqn:agreement_before_conditioning_quantum}
	\distR{\tildeEx{\cdot}} ~<~ \delta_{\inn} - \frac{\lambda}{\delta_{\dec}} - \eps\;.
\end{align}
To be able to use the distance proof of \cref{lem:sos_ael_distance_quantum}, we need the pseudocodeword to be $\eta$-good. \cref{lem:conditioning_eta_good_quantum} shows that this can be obtained by random conditioning. In particular, there exists a $u^*\in \{1,2,\cdots ,\frac{s^{3d}}{\eta^2} \}$ such that,
\begin{equation}\label{eq:conditioning_eta_good_quantum}
		\Ex{r_1,r_2,\cdots,r_{u^*}}{\Ex{\li,\ri}{\tildecov\brackets*{\zee_{\li},\zee_{\ri} \vert \zee_{r_1},\zee_{r_2},\cdots,\zee_{r_{u^*}}}}} ~\leq~ \eta\;.
\end{equation}
Note that we picked $\eta = \frac{\eps^2 \delta_{\dec}}{16\delta_{\inn}}$ in \cref{algo:ael-decoding-quantum}, so that $u^*$ is bounded by a constant independent of $n$.

Suppose $u$ is chosen in step (i) of \cref{algo:ael-decoding-quantum} to be $u^*$, which happens with probability at least $\frac{\eta^2}{s^{3d}}$. Rewriting \cref{eq:conditioning_eta_good_quantum} with $U$ to denote the randomly chosen set $\{r_1,r_2,\cdots ,r_u\}$ and $N(U)\sub E$ to denote the set of all edges incident on $U \sub R$, we get
\begin{align*}
	\Ex{\substack{U\subseteq R \\ |U| = u}}{\Ex{\li,\ri}{\tildecov\brackets*{\zee_{\li},\zee_{\ri} \vert \zee_{N(U)}}}} ~\leq~ \eta, \\
\Ex{\substack{U\subseteq R, |U| = u \\ \sigma \sim \zee_{N(U)}}}{\Ex{\li,\ri}{\tildecov\brackets*{\zee_{\li},\zee_{\ri} \vert \zee_{N(U)} = \sigma}}} ~\leq~ \eta,
\end{align*}
where $\sigma \sim \zee_{N(U)}$ is used to denote that $\sigma \in \Sigma^{N(U)}$ is sampled according to the local distribution induced on $N(U)$ by $\tildeEx{\cdot}$.

That is, on average, we end up with an $\eta$-good pseudocodeword. We actually picked $\eta$ to be much smaller than the bound on average covariance we will be needing, so that the probability (over conditionings) of obtaining a weaker low covariance becomes very close to 1. This is needed to be able to take a union bound with some other low-probability events we will see soon. A simple application of Markov's inequality shows that the probability of obtaining an $\frac{\eps\delta_{\dec}}{4}$-good pseudocodeword is at least $1-\frac{\eps}{4\delta_{\inn}}$.
\begin{align}\label{eqn:eta_good_quantum}
\Pr{U,\sigma}{\Ex{\li,\ri}{\tildecov\brackets*{\zee_{\li},\zee_{\ri} \vert \zee_U = \sigma}} > \frac{\eps \delta_{\dec}}{4}} ~\leq~ \frac{4\eta}{\eps \delta_{\dec}} ~\leq~ \frac{\eps}{4\delta_{\inn}}\;.
\end{align}

Therefore, we started with a pseudocodeword that is close to $h$ (\cref{eqn:agreement_before_conditioning_quantum}), and then condition it to make it $\frac{\eps\delta_{\dec}}{4}$-good. We must also argue that this conditioned pseudocodeword is still close to $h$, at least with some probability. This probability cannot be made too large, and this is why we needed to ensure that the low average correlation property holds with probability close to 1, so that both of these hold simultaneously with some positive probability. To do this, we use the law of total expectation and another application of Markov's inequality,
\begin{align}
	\Ex{U,\sigma}{\distR{\condPE{\cdot}{\zee_U = \sigma}}} ~&=~ \Delta_R(\tildeEx{\cdot},h)\;, \\
	~&<~ \delta_{\inn} - \frac{\lambda}{\delta_{\dec}} - \eps\;, && (\text{Using }\cref{eqn:agreement_before_conditioning_quantum}) \\
	\implies \Pr{U,\sigma}{\distR{\condPE{\cdot}{\zee_U = \sigma}} \leq \delta_{\inn} - \frac{\lambda}{\delta_{\dec}} - \frac{\eps}{2}} ~&\geq~ \frac{\eps/2}{\delta_{\inn} - \frac{\lambda}{\delta_{\dec}} - \frac{\eps}{2}} ~\geq~ \frac{\eps}{2\delta_{\inn}}\;.\label{eqn:agreement_after_conditioning_quantum}
\end{align}

Using a union bound over \cref{eqn:eta_good_quantum} and \cref{eqn:agreement_after_conditioning_quantum}, we get
\begin{align}\label{eqn:union_bound_quantum}
	\Pr{V,\sigma}{\Ex{\li,\ri}{\tildecov\brackets*{\zee_{\li},\zee_{\ri} \vert \zee_U = \sigma}} ~\leq~ \frac{\eps \delta_{\dec}}{4} \text{ and } \distR{\condPE{\cdot}{\zee_U = \sigma}} \leq \delta_{\inn} - \frac{\lambda}{\delta_{\dec}} - \frac{\eps}{2}} ~\geq~ \frac{\eps}{4\delta_{\inn}}\;.
\end{align}

Suppose such a conditioning pair $(U,\sigma)$ is chosen in \cref{algo:ael-decoding-quantum}, and this happens with probability at least $\frac{\eps}{4\delta_{\inn}}$.

We define $\dupPE{\cdot} = \condPE{\cdot}{\zee_U = \sigma}$ as in \cref{algo:ael-decoding-quantum}, and let us call the corresponding covariance operator as $\dupCov\brackets*{\cdot}$. Rewriting \cref{eqn:union_bound_quantum}, we see that $\dupPE{\cdot}$ satisfies the following two properties:
\begin{gather*}
	\Ex{\li,\ri}{\dupCov\brackets*{\zee_{\li},\zee_{\ri}}} ~\leq~ \frac{\eps \delta_{\dec}}{4} \;, \\
	\distR{\dupPE{\cdot}} ~\leq~ \delta_{\inn} - \frac{\lambda}{\delta_{\dec}} - \frac{\eps}{2}\;.
\end{gather*}

Using the SoS distance proof for AEL from \cref{lem:sos_ael_distance_quantum} for $\dupPE{\cdot}$, we can use the above upper bound on $\distR{\dupPE{\cdot}}$ to deduce an upper bound on $\distLperp{\dupPE{\cdot}}$.
\begin{align*}
	&\distR{\dupPE{\cdot}} ~\geq~ \delta_{\inn} - \frac{\lambda + \eps \delta_{\dec} / 4}{\distLperp{\dupPE{\cdot}}} \\
	\implies \quad & \quad \frac{\lambda}{\delta_{\dec}} + \frac{\eps}{2} ~\leq~ \frac{\lambda + \eps \delta_{\dec}/4}{\distLperp{\dupPE{\cdot}}} \\
	\implies \quad & \distLperp{\dupPE{\cdot}} ~\leq~ \delta_{\dec} - \frac{\eps \delta_{\dec}}{4\parens*{\frac{\lambda}{\delta_{\dec}} + \frac{\eps}{2}}} ~\leq~ \delta_{\dec} - \frac{\eps \delta_{\dec}}{4 \delta_{\inn}}\;.
\end{align*}

We next wish to show that $y\in (\F_q^{b_{\out}})^L$ obtained by rounding $\tildeEx{\cdot}$ in step (iv) of \cref{algo:ael-decoding-quantum} can be used to find the coset $h+\ez^{\perp}$. Let $\dxh$ be the codeword in $\dx \sub (\F_q^{b_{\out}})^L$ corresponding to $h \in \ex$. In other words, $\dxh_{\li} = \unconx(h_{\li})$. 

Let $\dist{y, \dxh}$ denote the normalized Hamming distance between $y$ and $\dxh$, viewed as strings of length $n$ over the alphabet $\F_q^{b_{\out}}$. On average, $y$ satisfies
\begin{align*}
	\Ex{y}{ \dist{y, \dxh}} &= \Ex{y}{ \Ex{\li \in L}{\indi{y_{\li} \neq \dxh_{\li}}}} \\
	&= \Ex{\li \in L}{~\Ex{y}{ \indi{y_{\li} \neq \dxh_{\li}}}} \\
	&= \Ex{\li \in L}{~\Ex{y_{\li}}{ \indi{y_{\li} \neq \dxh_{\li}}}} \\
	&= \Ex{\li \in L}{~ \sum_{w}{ \dupPE{\indi{\unconx(\zee_{\li}) = w}} \indi{w \neq \dxh_{\li}}}} \\
	&= \Ex{\li \in L}{~ \dupPE{\indi{\unconx(\zee_{\li}) \neq \dxh_{\li}}}} \\
	&= \Ex{\li \in L}{~ \dupPE{\indi{\unconx(\zee_{\li}) \neq \unconx(h_{\li})}}} \\
	&= \Ex{\li \in L}{~ \dupPE{\indi{\zee_{\li} \not\in h_{\li} + \cz^{\perp}}}}\\
	&= \distLperp{\dupPE{\cdot}} ~\leq~ \delta_{\dec} - \frac{\eps \delta_{\dec}}{4 \delta_{\inn}} \mper
\end{align*}

Using Markov's inequality,
\begin{align*}
	\Pr{y}{ \dist{y, \dxh} \leq \delta_{\dec}} \geq \frac{\eps}{4\delta_{\inn}} \mper
\end{align*}

Suppose a $y$ is found in \cref{algo:ael-decoding-quantum} such that $\dist{y, \dxh} \leq \delta_{\dec}$, which happens with probability at least $\frac{\eps}{4\delta_{\inn}}$. Then the $X$-decoder of the $(\dx,\dz)$ code must return the coset $\dxh+ \dz^{\perp}$. Let $\dxz$ be a coset representative of $\dxh+\dz^\perp$ returned by the $X$-decoder of $(\dx,\dz)$. Using $\ez^{\perp} = \wphix(\dz^{\perp}) + \mathbb{F}_{q}^n \otimes \cz^{\perp}$ from \cref{prop:concat_dual}, 
\begin{align*}
	\dxh - \dxz ~&\in~ \dz^{\perp}\;, \\
	\wphix(\dxh - \dxz) ~&\in~ \ez^\perp\;, \\
	\wphix(\dxh) - \wphix(\dxz) ~&\in~ \ez^\perp\;, \\
	h - \wphix(\dxz) ~&\in~ \ez^\perp + \F_q^n \otimes \cz^\perp = \ez^\perp\;.
\end{align*}

Therefore, \cref{algo:ael-decoding-quantum} adds the coset $\wphix(\dxz)+\ez^\perp = h+\ez^\perp$ to $\calL'$. In conclusion, if the following three events happen, the coset $h + \ez^{\perp}$ is added to the list $\calL'$.
\begin{enumerate}
\item $u=u^*$ is chosen, which happens with probability at least $\frac{\eta^2}{s^{3d}}$.
\item The pair $(U,\sigma)$ to condition on is chosen such that \cref{eqn:union_bound_quantum} holds. Conditioned on previous event, this happens with probability at least $\frac{\eps}{4\delta_{\inn}}$.
\item A $y$ is generated so that $\dist{g,\dxh} \leq \delta_{\dec}$. Conditioned on above two events, this happens with probability at least $\frac{\eps}{4\delta_{\inn}}$.
\end{enumerate}

Therefore, in any iteration, the coset $h + \ez^{\perp}$ is added to the list $\calL'$ with probability at least,
\[
	p ~=~ \frac{\eta^2}{s^{3d}} \cdot \frac{\eps}{4\delta_{\inn}} \cdot \frac{\eps}{4\delta_{\inn}} ~=~ \frac{ \parens*{\delta_{\dec}}^2 \cdot \eps^6}{4096 \cdot s^{3d} \cdot \delta_{\inn}^4} ~=~ \Omega_{s,d,\delta_{\dec}} (\eps^6) \mper
\]

Note that this immediately implies an upper bound of $1/p$ on the list size, although we can get better list sizes (combinatorially) by appealing to the covering lemma and approximate \Caratheodory theorem.

Finally, we show that with enough repetitions, $\calL'$ must contain the entire list $\calL$ with high probability. The probability that a coset in $\calL$ does not get added to $\calL'$ in $M$ iterations is
\[
	(1-p)^M ~\leq~ e^{-p \cdot M}
\]

With a union bound over the entire list, which is of size at most $1/p$, the probability that any coset in $\calL$ is not present in $\calL'$ is at most
\[
	\frac{1}{p} \cdot e^{-p \cdot M} ~=~ e^{-p M \ln p} ~\leq~ \gamma \;\text{ if } M = \frac{\ln(1 / \gamma)}{p \ln p}. \qedhere
\]
\end{proof}

\subsection{Near-MDS Quantum LDPC Codes list decodable up to Johnson bound}\label{sec:near_mds_quantum}

In this section, we show how instantiating the AEL amplification with unique decodable asymptotically good QLDPC codes leads to QLDPC codes near the (quantum) Singleton bound that can be list decoded up to the Johnson bound. 

\begin{theorem}[Near-MDS Codes decodable upto Johnson bound]\label{thm:near_mds_main_quantum}
	For any $0<\rho<1$, and for any $\nfrac{1}{2} >\eps_1, \eps_2>0$, there is an infinite family of quantum LDPC codes $(\fx,\fz)$ with the following properties:
	\begin{enumerate}[(i)]
		\item The rate of the code is at least $\rho$ and distance is at least $\frac{1-\rho-\eps_1}{2}$.
		\item The code is over an alphabet of size $2^{\calO(\eps_1^{-6}\log(1/\eps_1))}$.
		\item The code of blocklength $n$ can be list decoded from radius $\calJ(\frac{1-\rho - \eps_1}{2})-\eps_2$ in time $n^{\calO_{\eps_1}(1/\eps_2^4)}$.
	\end{enumerate}
\end{theorem}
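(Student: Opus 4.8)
The plan is to instantiate the quantum AEL list decoding theorem (\cref{thm:list_decoding_quantum}) with a near-optimal constant-size inner CSS code and a high-rate, efficiently unique-decodable QLDPC outer code, exactly paralleling the classical \cref{thm:near_mds_main}. First I would dispose of the degenerate regime: if $\eps_1 \ge 2(1-\rho)$ then $\frac{1-\rho-\eps_1}{2} \le 0$ and all three conclusions are vacuous, so from now on assume $\eps_1 < 2(1-\rho)$, which ensures $\rho_0 := \rho/(1-\eps_1/2) \in (0,1)$.

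Next I would fix the three ingredients. For the \emph{inner code} $\cC = (\cx,\cz)$, take a CSS code of blocklength $d$ and rate $\rho_0$ that meets the quantum Singleton bound up to an additive $O(1/d)$, so that $\delta_{\inn} \ge \frac{1-\rho_0}{2}$; such codes exist (e.g.\ quantum Reed--Solomon / quantum MDS codes over a field $\F_Q$ with $Q = \Theta(d)$), and since $d$ will be a constant, one can be found by exhaustive search. Using \cref{lem:field_downgrade} and \cref{lem:compl} I would view $\cC$ as a vector space CSS code over $\F_q^{b_\inn}$ with $q^{b_\inn} = Q$, equipped with the duality-preserving maps required for concatenation, and set $b_\out := b_\inn k_\inn$ so that $\cD \circ_\varphi \cC$ is well-defined. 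For the \emph{outer code} $\cD$, take an efficiently unique-decodable QLDPC code of rate $1-\eps_1/2$ over $\F_q$, with relative distance $\Omega(\eps_1^2)$ and a polynomial-time unique decoder from radius $\delta_{\dec} = \Omega(\eps_1^2)$, folded into blocks of size $b_\out$ (folding into constant-size blocks preserves the rate, can only increase the relative distance, and preserves the decoding radius over the larger alphabet). For the \emph{expander}, take an $(n,d,\lambda)$-Ramanujan graph with $\lambda \le \eps_1 \delta_{\dec}/4 = \Theta(\eps_1^3)$, for which degree $d = \Theta(1/\lambda^2) = \Theta(\eps_1^{-6})$ suffices; note $\lambda < \delta_{\dec}$ as required by \cref{thm:list_decoding_quantum}.

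With $\cF = \fold(\cD \circ_\varphi \cC)$ the resulting quantum AEL code, the verification is routine. By \cref{cor:ael_rate} its rate is $(1-\eps_1/2)\rho_0 = \rho$, giving the rate claim in (i). By \cref{thm:ael_final} its relative distance satisfies $\delta_R \ge \delta_{\inn} - \lambda/\delta_{\out} \ge \delta_{\inn} - \lambda/\delta_{\dec} \ge \frac{1-\rho_0}{2} - \frac{\eps_1}{4} \ge \frac{1-\rho-\eps_1}{2}$, using $\delta_{\dec} \le \delta_{\out}$, the choice of $\lambda$, and $\frac{1-\rho_0}{2} \ge \frac{1-\rho-\eps_1/2}{2}$; this gives the distance claim in (i). By \cref{prop:concat_dual}, together with $d\,b_\inn = O(1)$ and the fact that folding into constant-size blocks preserves sparsity of parity checks, $\cF$ is QLDPC. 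Its alphabet has size $s^d$ with $s = q^{b_\inn} = Q = \Theta(d)$, i.e.\ $2^{\Theta(d \log d)} = 2^{\Theta(\eps_1^{-6}\log(1/\eps_1))}$, giving (ii). Finally, applying \cref{thm:list_decoding_quantum} to $\cF$ with outer decoding radius $\delta_{\dec}$ (and any fixed constant success probability) produces a decoder from radius $\calJ(\delta_{\inn} - \lambda/\delta_{\dec}) - \eps_2 \ge \calJ\!\big(\frac{1-\rho-\eps_1}{2}\big) - \eps_2$ (by monotonicity of $\calJ$), running in time $n^{s^{O(d)}/\eps_2^4} = n^{O_{\eps_1}(1/\eps_2^4)}$ since $s^{O(d)} = 2^{O(\eps_1^{-6}\log(1/\eps_1))}$ is a constant; this is (iii).

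These verifications and the folding reductions are bookkeeping; the single genuine import is the existence of the high-rate outer code, namely a QLDPC code of rate $1-\eps_1$ with relative distance $\Omega(\eps_1^2)$ admitting an efficient unique decoder from an $\Omega(\eps_1^2)$-fraction of errors. Classically this role is filled by the Guruswami--Indyk codes \cite{GI05}; in the quantum setting I would extract it from the recent decodable good QLDPC constructions \cite{LZ23:decodable, DHLV23} by instantiating their classical building blocks with high-rate LDPC codes and tracking how the rate and unique-decoding radius propagate through the construction. Pinning down this parameter dependence --- and arranging that such a code is available over the prime power $q$ and the block structure we need, which can be done by taking $q = 2$ and folding --- is the main obstacle, and is the step I would treat with the most care.
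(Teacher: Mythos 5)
Your proposal is correct and follows essentially the same route as the paper: a quantum Reed--Solomon (Singleton-bound) inner code downgraded to a vector-space CSS code via \cref{lem:field_downgrade}, the folded high-rate unique-decodable QLDPC codes of \cite{LZ23:decodable} as the outer code with $\delta_{\dec}=\Omega(\eps_1^2)$, an expander with $\lambda=\Theta(\eps_1^3)$ and $d=\Theta(\eps_1^{-6})$, and then \cref{cor:ael_rate}, \cref{thm:ael_final}, \cref{prop:concat_dual} and \cref{thm:list_decoding_quantum} for the rate, distance, LDPC and list-decoding claims respectively. The ingredient you flag as the main obstacle — the rate-$(1-\eps_1)$ outer QLDPC code with an efficient unique decoder from radius $\Omega(\eps_1^2)$ — is exactly what the paper imports from \cite{LZ23:decodable}, so your treatment matches the paper's at the same level of detail.
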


\begin{proof}
We show how to instantiate \cref{thm:list_decoding_quantum} to obtain such codes. 
We will use the AEL distance amplification based on a family of $(n,d,\lambda)$-expanders, with $d$ and $\lambda$ to be chosen later.

The inner code $\cC = (\cx,\cz)$ is chosen to be a quantum Reed-Solomon code of rate $\rho_0 \defeq \frac{\rho}{1-\eps_1}$ and distance $\frac{1-\rho_0}{2}$. Since the inner code has blocklength $d$, the alphabet size of $\cC$ must be at least $d$, and we choose $\cC$ to be defined over a finite field $\F_{2^{b_{\inn}}}$, where $b_{\inn} \leq 1+\log_2 d$. Using \cref{lem:field_downgrade}, we will view $\cC$ as a vector space CSS code with alphabet $\F_2^{b_{\inn}}$.

For the outer code $\cD = (\dx,\dz)$, we fold the binary QLDPC codes of \cite{LZ23:decodable} into blocks of size $b_{\out}$, with $b_{\out} = b_{\inn}\cdot \rho_0 d$ so that concatenation is well defined. The starting binary code from \cite{LZ23:decodable} is chosen to have rate $1-\eps_1$ that can be unique decoded from radius $\delta_{\dec}=\Omega(\eps_1^2)$, and these properties are preserved after folding.

Let $\lambda = \eps_1 \delta_{\dec},\;$ so that $\lambda \leq \Omega(\eps_1^3)$ and $d=\calO(1/\eps_1^6)$. The rate of the AEL-amplified code $(\fx,\fz)$ is $(1-\eps_1) \rho_0 = \rho$, and the distance is at least
\begin{align*}
	\frac{1-\rho_0}{2} - \frac{\lambda}{\delta_1} &~\geq~ \frac{1}{2} \parens*{1-\frac{\rho}{1-\eps_1}} - \frac{\lambda}{\delta_{\dec}} \\
	&~\geq~ \frac{1}{2} \parens*{1-\rho -2\eps_1 \rho} -\eps_1 \\
	&~\geq~ \frac{1-\rho}{2} - 2\eps_1
\end{align*}

The final code is defined over alphabet $\F_2^{b_{\inn}\cdot d}$, which is of size at most $ 2^{(1+\log d))d}  = 2^{ \calO\inparen{\eps_1^{-6}\log(1/\eps_1)}}$. Since AEL amplification preserves the LDPC property of the outer code, our final code $(\fx,\fz)$ is also LDPC. For list decodability, we use \cref{thm:list_decoding_quantum} to claim that the above code can be list decoded from $\calJ(\frac{1-\rho}{2}-2\eps_1)-\eps_2$ in time $n^{\calO_{q,d,\delta_{\dec}}(1/\eps_2^4)} = n^{\calO_{\eps_1}(1/\eps_2^4)}$. The claimed parameters can be obtained by replacing $\eps_1$ by $\eps_1/4$. 
\end{proof}

\chapter{Fast Decoding of Ta-Shma's Code via Regularity Lemmas}\label{chap:regularity}
%
A binary code $\calC \subseteq \F_2^N$ is said to be $\eps$-balanced if any two distinct codewords $x,y \in \calC$ satisfy $\Delta(x,y) \in [\nfrac{(1-\eps)}{2}, \nfrac{(1+\eps)}{2}]$, where $\Delta(x,y)$ denotes the relative distance between the two codewords. 
Finding explicit and optimal constructions of such codes, and indeed of codes where the distances are at least $\nfrac{(1-\eps)}{2}$ is a central problem in coding theory~\cite{G10:ICM,G09Survey}, with many applications to the theory of pseudorandomness~\cite{Vadhan12}. 
Recently, Ta-Shma~\cite{TS17} gave a breakthrough construction of (a family of) explicit $\eps$-balanced codes, with near-optimal rates, for arbitrarily small $\eps > 0$. 
For the case of codes with distance at least $\nfrac{(1-\eps)}{2}$, the existential rate-distance tradeoffs established by Gilbert~\cite{G52} and Varshamov~\cite{V57}, prove the existence of codes with rate $\Omega(\eps^2)$, while McEliece \etal~\cite{MRRW77} prove an upper bound of $O(\eps^2 \log(1/\eps))$ on the rate. 
On the other hand, Ta-Shma's result yields an \emph{explicit} family of codes with rate $\Omega(\eps^{2+o(1)})$.


%
\vspace{-10 pt}
\paragraph{Decoding algorithms.}
The near-optimal $\eps$-balanced codes of Ta-Shma~\cite{TS17} (which we will refer as Ta-Shma codes) were not known to be efficiently decodable at the time of their discovery. 
In later work, polynomial-time unique decoding algorithms for (a slight modification of) these codes were developed in~\cite{JQST20} (building on~\cite{AJQST20}) using the Sum-of-Squares (SoS) hierarchy of semidefinite programming (SDP) relaxations. 
For unique decoding of codes with rates $\Omega(\eps^{2+\alpha})$ (when $\alpha > 0$ is an arbitrarily small constant) these results yield algorithms running in time $N^{O_{\alpha}(1)}$.
These algorithms also extend to the case when $\alpha$ is a vanishing function of $\eps$, and to the problem of list decoding within an error radius of $\nfrac12 - \eps'$ (for $\eps'$ larger than a suitable function of $\eps$) with running time $N^{O_{\eps, \eps',\alpha}(1)}$. 
However, the $O_{\alpha}(1)$ exponent of $N$ obtained in the unique decoding case is quite large even for a fixed constant $\alpha$ (say $\alpha = 0.1$), and the exponent in the list decoding case grows with the parameter $\eps$.

\section{Our Results}
In this work, we use a different approach based on new weak regularity lemmas (for structures identified by the SoS algorithms), resulting in near-linear time algorithms for both the above tasks. The algorithms below work in time $\tilde{O}_{\eps}(N)$ for $\eps$-balanced Ta-Shma codes with rates $\Omega(\eps^{2+\alpha})$, even when $\alpha$ is a (suitable) vanishing function of $\eps$.

\begin{restatable}[Near-linear Time Unique Decoding]{theorem}{TheoMainUniqueDec}\label{theo:main}
  For every $\epsilon > 0$ sufficiently small, there are explicit binary linear Ta-Shma codes
  $\Cc_{N,\epsilon,\alpha} \subseteq \mathbb{F}_2^N$ for infinitely many values $N \in \mathbb{N}$ with
  \begin{enumerate}[(i)]
   \item distance at least $1/2 - \epsilon/2$ (actually $\epsilon$-balanced),
   \item rate $\Omega(\epsilon^{2 + \alpha})$ where $\alpha = O(1/(\log_2(1/\epsilon))^{1/6})$, and
   \item an $r(\epsilon) \cdot \tilde{O}(N)$ time unique decoding algorithm that that decodes within radius $1/4 - \eps/4$ and works with high probability,
  \end{enumerate}
  where $r(\epsilon) = \exp(\exp(\polylog(1/\epsilon)))$.
\end{restatable}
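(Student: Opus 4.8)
The plan is to instantiate the regularity-lemma-based decoding framework (the efficient version of Corollary~\ref{cor:weak_johnson} for Ta-Shma's direct-sum codes). Recall from the overview that true codewords of the Ta-Shma code are (up to a pseudorandom sparsification by expander walks) the $k$-fold tensor power of base codewords of a code $\Cc_0$, and Ta-Shma's construction guarantees that for $\eps$-balanced codes of rate $\eps^{2+\alpha}$ one can take $k = \polylog(1/\eps)$ walk steps on an explicit expander-walk (s-wide replacement product) structure, with base code $\Cc_0$ that is (say) linearly unique-decodable. The three ingredients to assemble are: (1) a \emph{weak regularity lemma} in the sense of Lemma~\ref{lem:regularity}, where the distinguisher family $\calF$ is taken to be rank-$1$ tensor products of $t$ many $\pm 1$ strings (for $t$ a suitable function of $1/\eps$) rather than codewords; (2) an \emph{efficient gradient oracle} for this family, i.e., an algorithm that approximately maximizes $\ip{g - h}{\otimes_j x_j}$ over $\pm 1$ strings $x_1, \dots, x_t$; and (3) a \emph{decoder from the rank-$1$ structure}, which simply runs the base-code unique decoder on each factor string $x_j$ and checks the resulting candidate codewords.

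**First I would** set up the regularity decomposition. Embed $\F_2^N \hookrightarrow \R^N$ via $0 \mapsto 1, 1 \mapsto -1$ as in the proof of Corollary~\ref{cor:weak_johnson}, and run gradient descent on $\Gamma(h) = \sup_{f \in \calF} \ip{g - h}{f}$ with the enlarged family $\calF$ of rank-$1$ $t$-tensors. Since $g$ itself is feasible, $1/\eta^2$ steps suffice to reach $h = \sum_{i} c_i f_i$ with $\Gamma(h) \le \eta$, $c_i > 0$, $\sum_i c_i \le 1/\eta$, and $k \le 1/\eta^2$ terms; we take $\eta \approx \eps^{1/k}$ (this is the lossy parameter forced by the iterative 2-wise cut-norm argument, and is why we only get unique decoding / gentle list decoding rather than the Johnson bound). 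The key point, exactly as in Corollary~\ref{cor:weak_johnson}: for any $f^* \in \Cc$ with $\ip{g}{f^*}$ large enough (here, large enough to be within unique-decoding radius $1/4 - \eps/4$, i.e. $\ip{g}{f^*} > \eps$-ish), we get $\ip{h}{f^*} = \ip{\sum_i c_i f_i}{f^*} > 0$, hence some factor in some $f_i = \otimes_j x_j^{(i)}$ correlates non-trivially with the corresponding tensor structure of $f^*$; unwinding the tensor, some base-code-level factor string $x_j^{(i)}$ agrees with the base codeword underlying $f^*$ beyond the base code's unique decoding radius. Running the $\Cc_0$ unique decoder on each of the $\le t / \eta^2$ factor strings and lifting via the direct-sum structure recovers $f^*$.

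**The hard part will be** step (2), implementing the gradient oracle in near-linear time: approximately solving the $k$-wise cut-norm / tensor-maximization problem $\max_{x_1, \dots, x_t \in \{\pm 1\}^n} \ip{g - h}{\otimes_j x_j}$, which is NP-hard in general. The resolution exploits that Ta-Shma's hypergraph is not arbitrary but is generated by walks on an expander (the s-wide replacement product): one processes the $t$ tensor coordinates one at a time, at each step reducing to a \emph{2-wise} cut-norm problem (current partial assignment on one side, remaining walk on the other), solved via the Alon--Naor SDP rounding~\cite{AN04}, and using an iterative dense-model / energy-increment argument à la \cite{ReingoldTTV08} to control the accumulation of error across the $t$ steps — this is the source of the $\eta \approx \eps^{1/k}$ loss. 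To get \emph{near-linear} rather than merely polynomial time, I would replace the vanilla Alon--Naor SDP by the near-linear-time SDP solver of \cite{LeeP20}; the per-step cost is $\tilde O(N)$ and there are $\poly(t, 1/\eta) = \exp(\polylog(1/\eps))$ steps and descent iterations, giving total time $r(\eps) \cdot \tilde O(N)$ with $r(\eps) = \exp(\exp(\polylog(1/\eps)))$ as claimed.

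**Finally**, I would verify the parameter bookkeeping: Ta-Shma's codes with the stated rate $\Omega(\eps^{2+\alpha})$, $\alpha = O((\log_2(1/\eps))^{-1/6})$, require walk length $k = \Theta(\polylog(1/\eps))$ and base code over blocklength $\poly(1/\eps)$ with $\eps_0$-balanced, linear-time decodable $\Cc_0$ (e.g. an expander code), whose unique-decoding radius comfortably absorbs the $\approx \eps^{1/k}$-level agreement guaranteed above as long as $k$ and the base bias are chosen compatibly (Appendix~\ref{appendix:ta-shma} has the precise construction). The ``with high probability'' qualifier comes from the randomized rounding inside the Alon--Naor / cut-norm step; it can be amplified by independent repetitions of the gradient oracle and taking the best. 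Assembling these pieces yields the unique decoding within radius $1/4 - \eps/4$ in time $r(\eps)\cdot\tilde O(N)$, proving Theorem~\ref{theo:main}.
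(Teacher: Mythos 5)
Your overall architecture — weak regularity against the family of rank-$1$ $\pm1$ tensors, an iterated $2$-wise cut-norm gradient oracle via Alon--Naor made near-linear with the Lee--Padmanabhan solver, and a final reduction to the unique decoder of the base code — is the paper's approach (\cref{theo:eff_weak_reg} together with \cref{thm:direct-sum-decoding}), and your parameter bookkeeping for Ta-Shma's construction matches \cref{appendix:ta-shma}.

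However, there is a genuine gap in your extraction step (3). You claim that from $\ip{h}{\chi_z^{\otimes k}} \gtrsim \beta$ it follows that \emph{some individual factor string} $f_{i,j}$ of some term $c_i\, f_{i,1}\otimes\cdots\otimes f_{i,k}$ agrees with $\chi_z$ beyond the unique-decoding radius of $\calC_0$. What the decomposition actually gives is only that some term has $\abs{c_i}\prod_j \abs{\ip{f_{i,j}}{\chi_z}} \geq \beta/p$ with $p = O(k^2/\beta^2)$ terms and $\abs{c_i} = O(k/\beta)$, hence each factor satisfies only $\abs{\ip{f_{i,j}}{\chi_z}} \geq \Omega(\beta^4/k^3)$ — a quantity that tends to $0$, whereas unique decoding of $\calC_0$ from radius $(1-\eps_0)/4$ requires correlation at least $1/2+\Omega(\eps_0)$. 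The factors $f_{i,j}$ are arbitrary $\pm1$ strings with no reason to individually lie near the base codeword, so running the base decoder on each of them fails. The paper's fix is essential and different: form the factor $\calB$ generated by \emph{all} functions in the decomposition, use $\ip{h}{\chi_z^{\otimes k}} = \ip{h}{(\Ex{\chi_z|\calB})^{\otimes k}}$ and Cauchy--Schwarz to get $\smallnorm{\Ex{\chi_z|\calB}}^2 \geq (\beta/2)^{2/k}$ (\cref{lem:large-norm}), invoke parity sampling to ensure $(\beta/2)^{2/k} \geq 1/2 + 2\eps_0$ for the chosen $k$, then brute-force over all discretized $\calB$-measurable $\fbar$ and randomly round $\chi$ with $\Ex{\chi}=\fbar$ (\cref{clm:sampling}); only the rounded string is guaranteed to reach correlation $1/2+\eps_0$ with $\chi_z$ and thus to be uniquely decodable. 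This enumeration over $(1/\eta)^{2^{O(pk)}}$ measurable functions — not the number of gradient-descent iterations — is also the true source of the doubly-exponential factor $r(\eps)$ in the running time.
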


We can also obtain list decoding results as in~\cite{JQST20}, but now in near-linear time.
\begin{restatable}[Near-linear Time Gentle List Decoding]{theorem}{TheoMainGentleListDec}\label{theo:gentle_list_decoding}
  For every $\epsilon > 0$ sufficiently small, there are explicit binary linear Ta-Shma codes $\Cc_{N,\epsilon,\alpha} \subseteq \mathbb{F}_2^N$
  for infinitely many values $N \in \mathbb{N}$ with
  \begin{enumerate}[(i)]
   \item distance at least $1/2 - \epsilon/2$ (actually $\epsilon$-balanced),
   \item rate $\Omega(\epsilon^{2 + \alpha})$ where $\alpha = O(1/(\log_2(1/\epsilon))^{1/6})$, and
   \item an $r(\epsilon) \cdot \tilde{O}(N)$ time list decoding algorithm that decodes
         within radius $1/2 - 2^{-\Theta((\log_2(1/\epsilon))^{1/6})}$ and works with high probability,
  \end{enumerate}
  where $r(\epsilon) = \exp(\exp(\poly(1/\epsilon)))$.
\end{restatable}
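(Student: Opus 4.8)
The plan is to instantiate the general framework already sketched in the excerpt: Theorem~\ref{theo:gentle_list_decoding} should follow by combining (i) the weak regularity lemma for the relevant family of distinguishers, (ii) the weak Johnson bound argument (Corollary~\ref{cor:weak_johnson}) that bounds list sizes in the large-distance regime, and (iii) an \emph{efficient} implementation of the regularity decomposition tailored to the pseudorandom hypergraph underlying Ta-Shma's construction. The starting observation is that a Ta-Shma codeword is (roughly) the $k$-fold direct sum of a base codeword restricted to the expander-walk hypergraph; embedding $\F_2^N$ into $\R^N$ via $0 \mapsto 1, 1 \mapsto -1$, the family of distinguishers $\calF$ should be taken to be the set of rank-$1$ tensors $v^{(1)} \otimes \cdots \otimes v^{(k)}$ of $\pm 1$ strings (not required to lie in the base code), which as noted in \cref{sec:regularity} simultaneously (a) is rich enough that the regularity decomposition $g = h + (g-h)$ with $h = \sum_{i} c_i f_i$ certifies $\eps$-indistinguishability from the true codewords, and (b) admits an efficient gradient oracle.

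First I would make the combinatorial claim precise: if $\calC$ is $\eps$-balanced with bias $\eps$, then running the regularity \cref{lem:regularity} with parameter $\eta \approx \sqrt{\eps}$ (or rather $\eta \approx 2^{-\Theta((\log_2(1/\eps))^{1/6})}$, dictated by the error the algorithmic oracle introduces) against $\calF$ yields $h = \sum_{i=1}^k c_i f_i$ with $k = O(1/\eta^2)$, and then for every $f^* \in \calL(g, 1/2 - \eta')$ with $\eta'$ slightly larger than $\eta$ we get $\ip{h}{f^*}$ bounded below, hence $\ip{f_i}{f^*}$ large for some $i$; decoding $f_i$ in the base code recovers a candidate codeword, and pruning against $g$ gives the list. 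This is exactly the weak-Johnson-bound reasoning of \cref{cor:weak_johnson}, but now the $f_i$ need not be codewords, so I must invoke the base-code (unique) decoder on each factor of each rank-$1$ tensor $f_i$ — this is the step where the parameter $\delta_{dec}$ of the base code enters, and where the $\approx \eps^{1/k}$ loss (rather than $\sqrt{\eps}$) is incurred, explaining the final radius $1/2 - 2^{-\Theta((\log_2(1/\eps))^{1/6})}$ rather than something closer to the Johnson bound.

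Next I would assemble the near-linear runtime. The key technical ingredient is implementing the gradient oracle for $\Gamma(h) = \sup_{f \in \calF} \ip{g-h}{f}$, i.e.\ solving the $k$-wise cut-norm problem over the expander-walk hypergraph. A black-box $k$-wise cut-norm solver is unavailable, so I would use the structure of the hypergraph: Ta-Shma's walks let one peel off one coordinate at a time, so that a combination of the $2$-wise cut-norm SDP of Alon--Naor~\cite{AN04} with an iterative dense model / energy-increment argument in the style of Reingold--Tulsiani--Trevisan--Vadhan~\cite{ReingoldTTV08} reduces the $k$-wise problem to $O_k(1)$ applications of $2$-wise cut norm, at the cost of the aforementioned $\eps^{1/k}$-type error. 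Each $2$-wise cut-norm SDP is then solved in near-linear time via the Lee--Padmanabhan-style~\cite{LeeP20} SDP solver applied to \cite{AN04}, and the regularity loop runs for only $O(1/\eta^2)$ iterations (a constant depending on $\eps$), so the whole decoder runs in $r(\eps)\cdot\tilde O(N)$ time with $r(\eps) = \exp(\exp(\poly(1/\eps)))$ absorbing the tower-type blowup from the dense model iteration and the $k$ dependence. Finally, I would verify the code parameters — distance $1/2 - \eps/2$ and rate $\Omega(\eps^{2+\alpha})$ with $\alpha = O(1/(\log_2(1/\eps))^{1/6})$ — are inherited verbatim from Ta-Shma's construction (recalled in \cref{appendix:ta-shma}), since the decoder does not modify the code.

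The main obstacle, I expect, is the algorithmic $k$-wise cut-norm / dense-model step: controlling the error accumulation across the $k$ peeling rounds so that the final indistinguishability parameter is $2^{-\Theta((\log_2(1/\eps))^{1/6})}$ (which must still beat $1/2$ to give a nontrivial list-decoding radius), while simultaneously keeping every SDP call in near-linear time and the total iteration count bounded by a function of $\eps$ alone. A secondary subtlety is that the regularity decomposition's $h$ has coefficients $c_i > 0$ with $\sum_i c_i \le 1/\eta$, so the ``support'' of $h$ may be large; I would need the argument (as in \cref{cor:weak_johnson}) that each $f^*$ in the list must literally be one of the $k \le 1/\eta^2$ factors arising in $h$ after base-decoding, which gives both the list-size bound and termination. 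Once these are in place, Theorem~\ref{theo:main} follows as the special case where the target radius $1/4 - \eps/4$ is well inside the gentle list-decoding radius, so the list has size $1$ and no base-code list decoding (only unique decoding) is needed.
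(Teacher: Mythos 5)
Your proposal reproduces the architecture of the paper's actual argument (which is essentially the overview in \cref{sec:regularity}): reduce to the direct-sum structure over the splittable walk collection $\tuples{k}$, run an efficient weak regularity decomposition against $\calF = \SCUT^{\otimes k}$ implemented via the Alon--Naor $2$-wise cut-norm oracle with fast SDP solvers, and then hand candidates to the base code's unique decoder. The parameter bookkeeping (radius $1/2 - 2^{-\Theta((\log_2(1/\epsilon))^{1/6})}$ forced by the splittability $\tau$ and the walk length $k$, doubly-exponential constant $r(\epsilon)$, rate and distance inherited from Ta-Shma) also matches.

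However, there is a genuine gap in the extraction step. You propose to recover each list element by base-decoding the individual factors $v^{(1)},\ldots,v^{(k)}$ of the rank-$1$ tensors $f_i$ appearing in $h=\sum_i c_i f_i$, invoking the weak-Johnson reasoning of \cref{cor:weak_johnson} that ``each $f^*$ in the list must literally be one of the factors.'' That reasoning only works when the distinguishers are themselves codewords; here they are arbitrary $\pm 1$ tensors. Quantitatively: from $\ip{h}{\chi_z^{\otimes k}} \geq \beta$ with $p = O(k^2/\beta^2)$ terms one only gets $\prod_t \lvert\ip{f_{j_t}}{\chi_z}\rvert \geq \Omega(\beta^2/k^2)$ for some term $j$, hence each factor has bias only $\Omega(\beta^2/k^2) = o(1)$ with $\chi_z$. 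Unique decoding of the constant-bias base code needs agreement $3/4 + \Omega(\epsilon_0)$, and even its Johnson bound requires constant advantage over $1/2$, so decoding a single factor recovers nothing. The paper's proof instead forms the factor $\calB$ generated by \emph{all} $pk$ functions in the decomposition, proves via Cauchy--Schwarz that $\norm{\E[\chi_z|\calB]}^2 \geq (\beta/2)^{2/k} \geq 1/2 + 2\epsilon_0$ (\cref{lem:large-norm}) --- this is where the $k$-th-root loss and hence the gentle radius actually arise --- and then brute-forces over all discretized $\calB$-measurable functions, sampling a $\pm 1$ rounding of each and unique-decoding it (\cref{clm:sampling}, \cref{algo:direct-sum-decoding}). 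This enumeration over $(6/\epsilon_0)^{2^{O(k^3/\beta^2)}}$ candidates is what produces $r(\epsilon)$, and without it (or some substitute) your algorithm has no way to locate the list elements. A secondary omission: the overall $\tilde O(N)$ bound also requires choosing the base code to be linear-time unique decodable (Guruswami--Indyk, \cref{cor:base_code}) and handling the symbol-multiplicity requirement of Ta-Shma's construction (\cref{claim:dec_replication_lifting}); ``inherited verbatim'' glosses over both.
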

While~\cref{theo:gentle_list_decoding} yields a list decoding radius close to $1/2$, we remark that the above tradeoff between the list decoding radius and rate, is far from the state-of-the-art of $1/2-\epsilon$ radius with rate $\Omega(\epsilon^3)$ of Guruswami and Rudra~\cite{GR08}.
Considering a three way trade-off involving distance, rate, and list-decoding radius, ~\cref{theo:gentle_list_decoding} can be seen as close to optimal with respect to the first two parameters, and quite far off with respect to the third one. 
Finding an algorithm for codes with optimal tradeoffs in all three parameters, is a very interesting open problem.
Another interesting problem is understanding the optimal dependence of the ``constant'' factors $r(\eps)$ in the running times. We have not tried to optimize these factors in our work.
%

%
%

\vspace{-10 pt}
\paragraph{Direct-Sum Codes and ``Structured Pseudorandomness''.}
Ta-Shma's code construction can be viewed as a special case of ``distance amplification via direct-sum", an operation with several applications in coding and complexity theory~~\cite{ABNNR92, IW97, GI01, ImpagliazzoKW09, DinurS14, DDGEKS15, Chan16, DinurK17, A02:icm}.
Given a (say) linear code $\Cc_0 \subseteq \F_2^n$ and a collection of tuples $W \subseteq [n]^k$, we define it's ``direct-sum lifting" as $\Cc = \dsum_W(\Cc_0) \subseteq \F_2^{\abs{W}}$ where
\[
\dsum_{W}(\Cc_0) ~\defeq~ \left\{(z_{i_1} + \cdots + z_{i_k})_{(i_1,\ldots,i_k) \in W} ~\mid~ z \in \Cc_0 \right\} \mper
\]
It is easy to see that if $\Cc_0$ is $\eps_0$-balanced for a constant $\eps_0$,
then taking $W=[n]^k$ results in $\dsum_W(\Cc_0)$ being $\eps$-balanced with $\eps = \eps_0^k$ (though with vanishing rate). 
A standard sampling argument shows that a random $W \subseteq [n]^k$ with $\abs{W} = O(n/\eps^2)$ also suffices, while yielding rate $\Omega(\eps^2)$.
Rozenman and Wigderson \cite{RW08} suggested a derandomization of this argument using a ``pseudorandom" $W$ constructed from the collection of all length-$(k-1)$ walks on a suitable expander graph. While this result can be shown to achieve a rate of $\Omega(\eps^{4+o(1)})$, Ta-Shma achieves a rate of $\Omega(\eps^{2 + o(1)})$ using a carefully constructed \emph{sub-collection} of walks on an expander with a special form.

The above results show that pseudorandomness can be used to amplify distance, since the collections $W$ above behave \emph{like} a random $W$. However, finding decoding algorithms for such codes requires understanding properties of these collections which are \emph{unlike} a random $W$, since random collections yield codes with (essentially) random generator matrices, where we do not expect efficient algorithms.

Our results can be viewed as showing that when the collection $W$ satisfies a form of ``structured multi-scale pseudorandomness" property
\footnote{As discussed later, there are several notions of ``structured pseudorandom'' for (ordered and unordered) hypergraphs. We describe splittability here, since this is the one directly relevant for our algorithmic applications. }
called \emph{splittability} (identified in previous work), it can be exploited for algorithm design.
One can think of splittability as capturing properties of the complete set $[n]^k$, which are not present in a (sparse) random $W \subseteq [n]^k$. For the case of $k=4$, when $W = [n]^4$, if we consider a graph between pairs $(i_1,i_2)$ and $(i_3, i_4)$, which are connected when $(i_1,\ldots, i_4) \in W$, then this defines an expanding (complete) graph when $W = [n]^4$. On the other hand, for a random $W$ of size $O(n)$, such a graph is a matching with high probability. 
Splittability requires various such graphs defined in terms of $W$ to be expanders.
\begin{definition}[Splittability, informal]
Given $W \subseteq [n]^k$ and $a,b \in [k]$, let $W[a,b] \subseteq [n]^{b-a+1}$ denote the tuples obtained by considering $(i_a, \ldots, i_b)$ for every $(i_1, \ldots, i_k) \in W$. 
We say $W$ can be $\tau$-split at position $t$, if the bipartite graph with vertex sets $W[1,t]$ and $W[t+1,k]$, edge-set $W$, and (normalized) biadjacency matrix
%
%
$\Ess_t \in \R^{W[1,t] \times W[t+1,k]}$,  is an expander satisfying $\sigma_2(\Ess_t) \leq \tau$. 
We say that $W$ is $\tau$-splittable if for all $1 \leq a \leq t < b \leq k$, $W[a,b]$ can be $\tau$-split at position $t$. 
\end{definition}
Note that when $k=2$, this coincides with the definition of (bipartite) graph expansion. It is also easy to show that collections of length-$(k-1)$ walks on a graph with second singular value $\lambda$, satisfy the above property with $\tau=\lambda$. 
The sub-collections used by Ta-Shma can also be shown to splittable (after a a slight modification) and we recall this proof from~\cite{JQST20} in \cref{appendix:ta-shma}.
%
%

The key algorithmic component in our decoding results, is a general \emph{list decoding} result for codes constructed via direct-sum operations, which reduces the task of list decoding for $\dsum_W(\Cc_0)$ to that of unique decoding for the code $\Cc_0$, when $W$ is $\tau$-splittable for an appropriate $\tau$. 
The splittability property was identified and used in previous work~\cite{AJQST20, JQST20}, for the analysis of SoS based algorithms, which obtained the above reduction in $N^{O_{\eps}(1)}$ time. 
Regularity based methods also allow for near-linear time algorithms in this  general setting of direct-sum codes, with a simpler and more transparent proof (and improved dependence of the list decoding radius on $\tau$ and $k$).
%
%
%
%
\begin{theorem}[List Decoding Direct Sum (informal version of~\cref{thm:direct-sum-decoding})]
Let $\calC_0 \subseteq \F_2^n$ be an $\epsilon_0$-balanced linear code, which is unique-decodable to distance $\nfrac{(1-\eps_0)}{4}$ in time $\calT_0$.
Let $W \subseteq [n]^k$ be a $\tau$-splittable collection of tuples.
Let $\calC = \dsum_W(\calC_0)$ be $\eps$-balanced, and let $\beta$ be such that
\[
\beta ~\gg~ \max\inbraces{\sqrt{\eps}, ~\inparen{\tau \cdot k^3}^{1/2}, ~\inparen{\frac12+2\eps_0}^{k/2}} \mper
\]
Then, there exists a randomized algorithm, which given $\tilde{y} \in \F_2^{W}$, recovers the list
$$
\calL_{\beta}(\tilde{y}) \defeq \inbraces{y \in \calC ~|~ \Delta(\tilde{y}, y) \leq \nfrac12 - \beta},
$$
with probability at least $1 - o(1)$, in time $\tilde{O}(C_{\beta,k,\eps_0} \cdot (\abs{W}+\calT_0))$, where $C_{k,\beta,\eps_0}$ only depends on $k$,
$\beta$ and $\eps_0$.
\end{theorem}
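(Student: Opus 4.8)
The plan is to combine the weak regularity lemma (\cref{lem:regularity}) with the splittability of $W$ to obtain a structured approximation to the received word, and then extract base codewords from the pieces of this approximation by unique decoding of $\calC_0$. First I would set up the embedding: identify $\F_2$ with $\{-1,1\}$, so that a codeword $\dsum_W(z)$ of $\calC$, for $z \in \calC_0 \subseteq \{-1,1\}^n$, becomes the restriction to $W$ of the rank-one tensor $z^{\otimes k}$ (its value on $w=(i_1,\dots,i_k)$ is $z_{i_1}\cdots z_{i_k}$); write $\chi(y)$ for this embedding. Under the inner product $\langle a,b\rangle = \Ex{w\in W}{a_w b_w}$ every $\chi(y)$ is a unit vector, and $\Delta(\tilde y, y)\le \tfrac12-\beta$ is equivalent to $\langle \chi(\tilde y),\chi(y)\rangle \ge 2\beta$. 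Let $\calF$ be the family of restrictions to $W$ of all rank-one tensors $x^{(1)}\otimes\cdots\otimes x^{(k)}$ with $x^{(j)}\in\{-1,1\}^n$; these are all unit vectors, and the crucial point is that $\chi(y)\in\calF$ for every $y\in\calC$. Applying \cref{lem:regularity} to $g:=\chi(\tilde y)$ with family $\calF$ and parameter $\eta=\Theta(\beta)$ produces $h=\sum_{i=1}^K c_i f_i$ with $f_i=(x_i^{(1)}\otimes\cdots\otimes x_i^{(k)})|_W\in\calF$, $c_i>0$, $\sum_i c_i\le 1/\eta$, $K\le 1/\eta^2=O(1/\beta^2)$, and $\langle g-h,f\rangle\le\eta$ for all $f\in\calF$.

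The substantive work, and the only place the multi-scale structure of $W$ is used, is making this decomposition \emph{efficient}: the gradient oracle must, given the current iterate $h'$, return $f\in\calF$ whose objective $\langle g-h',f\rangle$ is within a constant factor of $\sup_{f\in\calF}\langle g-h',f\rangle$ up to an additive error $O(\poly(k)\cdot\tau)$, in near-linear time. I would implement this by recursively splitting $W$ at the $O(\log k)$ positions guaranteed by $\tau$-splittability: at each split the residual becomes a two-dimensional array whose cut-norm-optimal rectangle is found (approximately) by the SDP-based algorithm of Alon and Naor~\cite{AN04}, made to run in near-linear time using~\cite{LeeP20}; an iterative dense-model argument in the spirit of~\cite{ReingoldTTV08} then replaces the left and right vectors obtained this way by genuine products of $\{-1,1\}$-strings, and each of the $O(\log k)$ levels degrades the objective by only $O(\poly(k)\tau)$. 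Running the gradient descent of \cref{lem:regularity} for $O(1/\beta^2)$ steps therefore costs $\tilde O_{k,\beta}(|W|)$; if the cut-norm oracle carries a Grothendieck-type constant loss, one simply runs the regularity lemma with the corresponding slack, changing the decomposition size and final guarantee only by constants.

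For the covering and extraction step, fix $y=\dsum_W(z)\in\calL_\beta(\tilde y)$. Since $\chi(y)\in\calF$, the regularity guarantee gives $\langle h,\chi(y)\rangle \ge \langle g,\chi(y)\rangle-\eta \ge 2\beta-\eta$, and as $\sum_i c_i\le 1/\eta$ some index $i$ has $\langle f_i,\chi(y)\rangle \ge \eta(2\beta-\eta)=\Omega(\beta^2)$. Now $\langle f_i,\chi(y)\rangle=\Ex{(i_1,\dots,i_k)\in W}{\prod_j (x_i^{(j)}z)_{i_j}}$, and a splittability ``counting lemma'' — proved by peeling off one coordinate at a time along the splits of $W$ and invoking expander mixing (\cref{lem:eml_bipartite}) at each step — gives $\bigl|\langle f_i,\chi(y)\rangle - \prod_{j=1}^k\langle x_i^{(j)},z\rangle\bigr|\le O(\poly(k)\tau)$, which is $o(\beta^2)$ by the hypothesis $\beta\gg(\tau k^3)^{1/2}$. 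Hence $\prod_{j=1}^k\langle x_i^{(j)},z\rangle=\Omega(\beta^2)>0$, and by $\beta\gg(\tfrac12+2\eps_0)^{k/2}$ we get $\prod_{j=1}^k |\langle x_i^{(j)},z\rangle| > (\tfrac12+2\eps_0)^k$. Since a product of $k$ numbers in $[-1,1]$ all of absolute value at most $\tfrac12+2\eps_0$ has absolute value at most $(\tfrac12+2\eps_0)^k$, some factor satisfies $|\langle x_i^{(j)},z\rangle|>\tfrac12+2\eps_0>\tfrac{1+\eps_0}{2}$; taking $w$ to be $x_i^{(j)}$ or its bit-complement according to the sign gives $\Delta(w,z)<\tfrac{1-\eps_0}{4}$, so the unique decoder of $\calC_0$ on input $w$ returns $z$. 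The algorithm therefore runs the unique decoder of $\calC_0$ on all $2kK=O(k/\beta^2)$ strings $x_i^{(j)}$ and their complements, collects the resulting base codewords into a set $S$ (of size $O(k/\beta^2)$), computes $\dsum_W(z')$ for each $z'\in S$, and outputs exactly those with $\Delta(\tilde y,\dsum_W(z'))\le\tfrac12-\beta$. The argument above shows $\calL_\beta(\tilde y)$ is contained in this output, the distance test shows it equals it, and $\beta\gg\sqrt\eps$ (with $\calC$ being $\eps$-balanced) ensures via \cref{cor:weak_johnson} that the list is polynomially bounded, consistent with what the algorithm produces. Total time is $\tilde O_{k,\beta}(|W|)+O(k/\beta^2)(\calT_0+O(|W|))=\tilde O(C_{k,\beta,\eps_0}(|W|+\calT_0))$; the only randomness is in the cut-norm SDP rounding, so amplification by repetition boosts the success probability to $1-o(1)$.

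The hard part, and where I expect the bulk of the technical effort to go, is the near-linear-time gradient oracle of the second paragraph: realizing a $k$-wise cut-norm-type maximization over rank-one tensors using only $\tau$-splittability, by reducing it to $O(\log k)$ rounds of two-dimensional cut norm plus a dense-model step that preserves the product structure of the extracted vectors, all while controlling the accumulated $O(\poly(k)\tau)$ error and keeping the running time $\tilde O_k(|W|)$. Everything else — the embedding, \cref{lem:regularity}, the covering inequality, the elementary ``product $\le (\tfrac12+2\eps_0)^k$'' pigeonhole, and the reduction to unique decoding of $\calC_0$ — is routine.
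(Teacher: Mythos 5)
Your overall architecture (weak regularity against $k$-split $\pm1$ tensors, splittability to control the error, unique decoding of $\calC_0$ from the pieces) matches the paper's, and your extraction step is actually a genuinely different and valid route from the one the paper takes. The paper does \emph{not} decode the individual tensor factors of the decomposition: it observes that $\ip{h}{\chi_z^{\otimes k}}$ depends only on the conditional average $\Ex{\chi_z|\calB}$ over the factor $\calB$ generated by the $pk$ functions in the decomposition, lower-bounds $\norm{\Ex{\chi_z|\calB}}^2 \ge (\beta/2)^{2/k}$ by Cauchy--Schwarz, brute-forces over all discretized $\calB$-measurable functions ($(1/\eps_0)^{2^{O(pk)}}$ of them), and randomly rounds to land inside the unique-decoding ball. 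Your route --- pigeonhole over the terms using $\sum_\ell c_\ell \le O(1/\eta)$ to get one term with $\prod_j \ip{x^{(j)}}{z} = \Omega(\beta^2)$, transfer from $\mu_k$ to $\mu_1^{\otimes k}$ via the iterated splittable mixing lemma, then pigeonhole over the $k$ factors using $\beta^2 \gg (\tfrac12+2\eps_0)^k$ --- is correct as far as I can see, needs only $O(pk)$ unique-decoding calls instead of a doubly-exponential enumeration, and is deterministic given the decomposition. That is a real improvement in the constant $C_{k,\beta,\eps_0}$ over the paper's proof.

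The genuine gap is in the step you yourself flag as the hard part: the near-linear-time correlation oracle with only an \emph{additive} $O(\poly(k)\cdot\tau)$ loss. Your sketch --- recursively split $W$, solve a $2$-wise cut-norm problem at each split, and then further decompose the left/right vectors --- is precisely the ``naive'' recursion, and the assertion that ``each of the $O(\log k)$ levels degrades the objective by only $O(\poly(k)\tau)$'' is exactly what fails there. When you refine a factor $\one_{S_{\ell_1}}$ of the first-level decomposition to accuracy $\error_2$, the error it contributes to the $k$-tensor approximation is multiplied by $\sum_{\ell_1}|c_{\ell_1}| = O(1/\error_1)$; iterating forces $\error_j = \Theta(\error^{2^j})$ and hence a splittability requirement $\tau \le (O(\beta))^{2^{\Theta(k)}}$, which is far weaker than the claimed $\tau \ll \beta^2/k^3$ and is useless for Ta-Shma's parameters. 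The paper's fix is not a dense-model argument on the factors: at level $t+1$ it re-runs the abstract regularity lemma from scratch on the \emph{entire} current approximant $h_t$ against $(t+1)$-split functions, and implements the oracle for this special input by enumerating over the $\sigma$-algebra generated by the $O(t/\error^2)$ existing $[n]$-factors (discretized to accuracy $\eta$, costing $(2t)^{2^{O(1/\error^2)}}$ configurations) and making a single Alon--Naor matrix cut-norm call on the induced $W[t+1]\times W[t+2,k]$ array per configuration. Without this (or some other) mechanism for keeping the per-level loss additive, the theorem with the stated dependence $\beta \gg (\tau k^3)^{1/2}$ does not follow from your argument. (A smaller point: the Alon--Naor oracle returns a cut in $W[1,t]\times W[t+1,k]$, whose left side is a subset of $t$-tuples, not a product of subsets of $[n]$; ``replacing it by a genuine product of $\pm1$ strings'' is not something a dense model theorem provides, and is the reason the paper keeps the first $t$ coordinates fixed via the $\sigma$-algebra rather than re-splitting the output of the matrix oracle.)
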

%

\vspace{-10 pt}
\paragraph{Splittable Regularity.}
The technical component of our results is a novel understanding of splittable structures, via weak regularity lemmas.
This provides a different way of exploiting ``structured pseudorandomness" properties in hypergraphs, which may be of interest beyond applications considered here.

For the case of graphs (\ie $k=2)$, several weak regularity lemmas are known which can be applied to (say) dense subgraphs of an expanding graph~\cite{ReingoldTTV08, TrevisanTV09, CCFA09, BV20}. 
As in the Frieze-Kannan~\cite{FK96:focs} weak regularity lemma for dense graphs, these lemmas decompose the adjacency matrix $A_{W'}$ of a subgraph $W' \subseteq W$, as a weighted sum of a small number of cut matrices ($\one_{S_{\ell}}\one_{T_{\ell}}^{\trans}$ for $S_{\ell},T_{\ell} \subseteq [n]$), such that one can use this decomposition to count the number of edges between any subsets $S, T \subseteq [n]$ \ie
\[
\abs{\one_S^{\trans} \inparen{A_{W'} - \sum_{\ell} c_{\ell} \cdot \one_{S_{\ell}}\one_{T_{\ell}}^{\trans}}\one_T} ~\leq~ \eps\cdot \abs{W} \mper
\]
This can be thought of as computing an ``approximation'' of $A_{W'}$ using a small number of cut matrices $\one_{S_j}\one_{T_j}^{\trans}$, which is ``indistinguishable'' by any cut matrix $\one_{S}\one_T^{\trans}$. 

More generally, one can think of the above results as approximating any function $g: W \to [-1,1]$ (with $g = \one_{W'}$ in the example above) with respect to a family of "split" functions $\calF \subseteq \inbraces{f:[n] \to [-1,1]}^{\otimes 2}$, where the approximation itself is a sum of a small number of of functions from $\calF$ \ie for all $f_1, f_2 \in \calF$
\[
\abs{\ip{g - \sum_{\ell} c_{\ell} \cdot f_{\ell, 1} \otimes f_{\ell, 2}~}{f_1 \otimes f_2}} ~\leq~ \eps \cdot \abs{W} \mper
\]
Our regularity lemma for splittable $W \subseteq [n]^k$, extends the above notion of approximation, using $k$-wise split functions of the form
$f_1 \otimes \cdots \otimes f_k$. 
We obtain near-linear time weak regularity decompositions for classes of $k$-wise cut functions of the form
\[
\CUT^{\otimes k} \coloneqq \set{\pm \one_{S_1} \otimes \cdots \otimes \one_{S_k} \mid S_1,\ldots,S_k \subseteq [n]},
\]
and also for signed version of these $k$-wise cut functions
$$
\SCUT^{\otimes k} \coloneqq \set{\pm \sset_{S_1} \otimes \cdots \otimes \sset_{S_k} \mid S_1,\ldots,S_k \subseteq [n]},
$$
where $\sset_{S} = (-1)^{\one_{S}}$. 
For our decoding results, we will use $\SCUT^{\otimes k}$. Our near-linear time weak regularity
decomposition result is given next.

\begin{theorem}[Efficient Weak Regularity (informal version of~\cref{theo:eff_weak_reg})]
  Let $W \subseteq [n]^k$ and let $\calF$ be either $\CUT^{\otimes k}$ or $\SCUT^{\otimes k}$.
  Suppose $g \in \mathbb{R}^{[n]^k}$ is supported on $W$ and has bounded norm.
  For every $\error > 0$,
  if $W$ is $\tau$-splittable with $\tau=O(\error^2/k^3)$, then
  we can find $h = \sum_{\ell=1}^{p} c_\ell \cdot f_\ell$ in $\widetilde{O}_{k,\error}(\abs{W})$ time,
  where $p = O(k^2/\error^2)$, $f_{\ell} \in \calF$ and $c_{\ell} \in \mathbb{R}$,
  such that $h$ is a good approximator to $g$ in the following sense
  $$
  \max_{f \in \calF}\quad \ip{g - h}{f} ~\le~ \error \cdot \abs{W},
  $$
  where the inner product is over the counting measure on $[n]^k$.
\end{theorem}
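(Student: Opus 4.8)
The plan is to prove the efficient weak regularity lemma by running the abstract gradient-descent–style regularity argument (the proof of \cref{lem:regularity}) in the inner product space $\R^{[n]^k}$ with the counting measure, using $\calF = \CUT^{\otimes k}$ or $\SCUT^{\otimes k}$ as the family of distinguishers, and then showing that each step of that argument can be implemented in near-linear time when $W$ is $\tau$-splittable. Recall that the non-algorithmic argument proceeds as follows: start with $h_0 = 0$, and at step $\ell$, if $\max_{f \in \calF} \ip{g - h_\ell}{f} > \error \cdot \abs{W}$, pick a witnessing $f_{\ell+1} \in \calF$ and update $h_{\ell+1} = h_\ell + c_{\ell+1} f_{\ell+1}$ with an appropriate step size; a potential/energy argument (monotone decrease of $\norm{g - h_\ell}^2$) bounds the number of steps by $p = O(k^2/\error^2)$ since $g$ has bounded norm and the elements of $\calF$ have norm $\Theta(\sqrt{\abs{W}})$ after accounting for the support of $g$ on $W$. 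So the combinatorial content is standard; the whole game is the algorithmic implementation of the step oracle.

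The key steps, in order, are: (1) State the abstract regularity iteration precisely in the $[n]^k$ counting-measure inner product space, fixing normalizations so that $\abs{W}$ appears in the right places, and verify the potential argument gives $p = O(k^2/\error^2)$ iterations. (2) Reduce the oracle call — given the current $h = \sum_{\ell} c_\ell f_\ell$ with $f_\ell = \pm \sset_{S_1^{(\ell)}} \otimes \cdots \otimes \sset_{S_k^{(\ell)}}$ (or the $\CUT$ analogue), either certify $\max_{f \in \calF} \ip{g - h}{f} \le \error \abs{W}$ or produce a witnessing $f$ — to a $k$-wise cut-norm–type maximization problem for the function $g - h$, which is supported on $W$. (3) Show that the splittability of $W$ lets us decompose this $k$-wise maximization into a sequence of $2$-wise (bipartite) cut-norm problems: at a split position $t$, the bilinear form between $W[1,t]$ and $W[t+1,k]$ has (normalized) biadjacency operator $\Ess_t$ with $\sigma_2(\Ess_t) \le \tau$, so up to an error controlled by $\tau$ we may replace the constraint "the tuple lies in $W$" by an unconstrained product over the two halves; recursing on each half and carefully tracking that there are $O(k)$ splits, each contributing error $O(\tau \cdot k)$ or so, we accumulate total error $O(\tau k^2)$, which is why we need $\tau = O(\error^2/k^3)$ (the extra factor of $k$ and the square come from the potential/step-size bookkeeping and from the fact that $\norm{g-h}$ can be as large as $O(k/\error) \cdot \sqrt{\abs{W}}$). (4) Invoke the near-linear-time $2$-wise cut-norm approximation algorithm (the SDP-based algorithm of \cite{AN04} sped up via \cite{LeeP20}, already cited in \cref{sec:regularity}) as the base case of this recursion, and check that the total running time is $\widetilde{O}_{k,\error}(\abs{W})$ since there are $O(k^2/\error^2)$ outer iterations, each doing $O(k)$ cut-norm computations on objects of size $O(\abs{W})$. (5) Assemble: output $h = \sum_{\ell=1}^p c_\ell f_\ell$; by construction it satisfies $\max_{f \in \calF}\ip{g-h}{f} \le \error \abs{W}$.

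The main obstacle I expect is Step (3): making the reduction from the $k$-wise cut-norm problem on the sparse structured set $W$ to a bounded number of dense $2$-wise cut-norm problems, while controlling how the splittability error $\tau$ propagates through the $O(k)$ levels of recursion and through the iterative "dense model"-style replacement of $W$-constrained sums by unconstrained tensor sums (this is where the \cite{ReingoldTTV08} iterative dense model theorem enters, as hinted in the overview). Concretely, each time we use $\sigma_2(\Ess_t) \le \tau$ to swap a constrained sum for a product of marginal sums, we incur an additive error proportional to $\tau$ times the product of $\ell_2$ norms of the two halves of the test function; since the test functions are $\pm 1$-valued on $W$ this is $\tau \abs{W}$ per swap, but the number of swaps and the way errors compound across nested splits must be bounded carefully to land at total error $O(\tau k^{3/2})$ or $O(\tau k^2)$, which then forces the stated requirement $\tau = O(\error^2/k^3)$. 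A secondary subtlety is that the witnessing test function $f$ returned by the approximate $2$-wise oracle is only an approximate maximizer, so the certified bound at termination is $\error \abs{W}$ rather than an exact cut-norm bound — this is fine for the statement but must be threaded consistently through the potential argument so that the iteration count does not blow up. The rest (Steps 1, 2, 4, 5) is essentially bookkeeping combined with black-box invocation of results already stated in the excerpt.
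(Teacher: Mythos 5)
Your Steps (1), (2), (4), (5) match the paper's skeleton, but Step (3) --- the implementation of the correlation oracle --- is where the argument breaks, and it is the entire content of the theorem. You propose to split the $k$-wise maximization at a position $t$, solve a bipartite cut-norm problem between $W[1,t]$ and $W[t+1,k]$, and then \emph{recurse on each half}, claiming the errors accumulate additively to $O(\tau k^2)$. Two problems. First, the witness returned by the Alon--Naor oracle on the bipartite problem is an arbitrary $\pm 1$ function on $W[1,t]$, not a product function $f_1\otimes\cdots\otimes f_t$; converting it into product functions is exactly what forces the recursion, and in that recursion the errors do \emph{not} add. The inner decomposition's error gets multiplied by the $\ell_1$-norm of the outer decomposition's coefficients, which is $O(1/\error_1)$; iterating over $k$ levels forces $\error_j = \Theta(\error^{2^j})$, hence $\tau \le (O(\error))^{2^k}$ and a decomposition with $(O(1/\error))^{2^k}$ terms. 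This is precisely the ``naive'' scheme the paper works out in \cref{lemma:naive_eff_weak_reg} to motivate why a different argument is needed; it does not yield $\tau = O(\error^2/k^3)$ or $p = O(k^2/\error^2)$.

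The paper's actual proof (\cref{lemma:eff_weak_reg_induction} together with \cref{lemma:algo_weak_reg_step}) never decomposes the oracle's witness. It maintains a chain $h_0 = g, h_1, \ldots, h_{k-1}$, where $h_t$ is a short linear combination of $t$-split functions approximating $(n/d)^t g$ under the hybrid measure $\nu_t$, and produces $h_{t+1}$ by re-running the abstract regularity algorithm \emph{from scratch} on $h_t$ with target class $\calF_{t+1}$. The correlation oracle for that step exploits the explicit product structure of both $h_t$ and the current iterate: the first $t$ coordinates of any test function enter only through $O(1/\error^2)$ inner products $\ip{g_{\ell,j}}{f_j}_{\mu_1}$, which are brute-forced over a $\sigma$-algebra of $2^{O(1/\error^2)}$ atoms with an $\eta$-discretization, after which the remaining two blocks reduce to a \emph{single} matrix cut-norm computation. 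The per-level error is then genuinely additive --- one application of the splittable mixing lemma to a function with coefficient $\ell_1$-norm $O(k/\error)$ costs $\tau\cdot O(k/\error)\le \error'$ --- and running $k$ levels at accuracy $\error' = \error/k$ is what produces the constraint $\tau = O(\error^2/k^3)$. Without this hybrid induction your route cannot reach the claimed parameters.
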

We note that an existential version of the above theorem follows known abstract versions of the Frieze-Kannan regularity lemma~\cite{TrevisanTV09, BV20}, via a relatively simple use of splittability. 
However, making a black-box application of known regularity lemmas algorithmic, requires computing a form of "tensor cut-norm", which is believed to be hard to even approximate in general\footnote{Strictly speaking, we only need to approximate this for ``splittable" tensors. It is possible that one could use existing regularity lemmas black box, and use splittability to design a fast algorithm for tensor cut-norm. In our proof, we instead choose to use the matrix cut-norm algorithms as black-box, and use splittability to modify the proof of the regularity lemma.} (unlike the matrix case). 
The nontrivial component of the result above, is obtaining a regularity lemma which allows for a \emph{near-linear time computation}, while still achieving parameters close to the existential version.

\paragraph{Related Work.}
As discussed above, the decoding results in this paper, were derived earlier using algorithms based on the SoS hierarchy~\cite{AJQST20,JQST20}, though with significantly larger running times (and somewhat worse dependence on parameters). 
A common thread in the SoS algorithms is to relate the task of decoding, to that of solving instances of constraint satisfaction problems with $k$ variables in each constraint ($k$-CSPs).
The original weak regularity lemma of Frieze and Kannan~\cite{FK96:focs} was indeed motivated by the question of approximately solving $k$-CSPs on dense structures (see also \cite{KV09:spectral}).
Several extensions of the Frieze-Kannan lemma are known, particularly for various families of sparse pseudorandom graphs~\cite{KR02:regularity, ReingoldTTV08, TrevisanTV09, OGT15, BV20}.
Oveis-Gharan and Trevisan~\cite{OGT15} also proved a new weak regularity lemma for ``low threshold-rank" graphs, which was used to obtain approximation algorithms for some 2-CSPs, where the previously known algorithms used the SoS hierarchy~\cite{BRS11, GuruswamiS11}.
Our work can be viewed as an extension of these ideas to the case of $k$-CSPs.

Ideas based on regularity lemmas, were also employed in the context of list decoding of Reed-Muller codes, by Bhowmick and Lovett~\cite{BL18}.
They use analogues of the abstract weak regularity lemma~\cite{TrevisanTV09} and the \Szemeredi regularity lemma over finite fields, but these are only used to prove bounds on the list size, rather than in the algorithm itself. 
On the other hand, our decoding algorithm crucially uses the decomposition obtained via our weak regularity lemma for (real-valued functions on) splittable structures.

In general, expansion phenomena have a rich history of interaction with coding theory (\eg~\cite{GI01,GSurvey04,GI05,RWZ20}) including to the study of linear (or near-linear) time decoding backing to the seminal work of Sipser and Spielman~\cite{SS96}. The codes in~\cite{SS96} were good codes, though not near optimal in terms of distance-rate trade-off.
Several other notions of ``structured pseudorandomness'' for hypergraphs (referred to as high-dimensional expansion) have also been considered in literature, which also have connections to the decoding of good codes. In particular, the notion of ``double sampler'' was used to obtain algorithms for the list decoding for direct-product codes~\cite{DHKLNTS19}. 
The notions of local spectral expansion~\cite{DinurK17}, cosystolic expansion~\cite{EvraK16}, and multilayer agreement samplers~\cite{DDHR20:ltc}, are also used to connect structured pseudorandomness to the design of locally testable codes.
The notion of splittability was also studied for unordered hypergraphs in terms of ``complement walks'' by Dinur and Dikstein~\cite{DD19}, and in terms of ``swap walks'' in~\cite{AJT19}, for high-dimensional expanders defined via local spectral expansion.

In this chapter, unlike the rest of this thesis, we will deal exclusively with binary codes. The results mentioned here were later extended to deal with codes (and $k$-CSPs) over larger alphabets by Jeronimo \cite{Jer23}.



\section{A Technical Overview}

We now give a more detailed overview of some of the technical components of our proof.
\paragraph{Splittability.}
The key structural property used for our algorithmic and structural results, is the ``structured pseudorandomness" of ordered hypergraphs $W \subseteq [n]^k$, which we call \emph{splittability}. 
The canonical example one can think of for this case, is a collection  of all length-$(k-1)$ walks on a (say) $d$-regular expander graph $G$ on $n$ vertices. Note that this satisfies $\abs{W [a,b]} = d^{b-a} \cdot n$, where $W[a,b]$ represents the collection of sub-tuples with coordinates between indices $a$ and $b$ \ie portions of the walks between the $a^{th}$ and $b^{th}$ step.
We will restrict our discussion in this paper only to \emph{$d$-regular collections} $W \subseteq [n]^k$ satisfying $\abs{W[a,b]} = d^{b-a} \cdot n$.

We briefly sketch why the collection of length-3 walks (\ie the case $k=4$) is splittable.
Recall that splittability requires various graphs with sub-tuples to be expanding, and in particular consider the graph between $W[1,2]$ and $W[3,4]$, with edge-set $W[1,4]$.
If $E(G)$ is the set of edges in $G$ included with both orientations, then
note that $W[1,2] = W[3,4] = E(G)$, and $(i_1,i_2), (i_3,i_4)$ are connected iff $(i_2,i_3) \in E(G)$. If $M \in \R^{W[1,2]\times W[3,4]}$ denotes the biadjacency matrix of the bipartite graph $H$ on $W[1,2] \times W[3,4]$, then up to permutations of rows and columns, we can write $M$ as $\matr A_{G} \otimes \matr J_{d}/d$, where $\matr J_d$ denotes the $d \times d$ all-1s matrix and $\matr A_{G}$ is the normalized adjacency matrix of $G$, since each tuple $(i_2, i_3) \in E(G)$ contributes $d^2$ edges in $H$ (for choices of $i_1$ and $i_4$). 
Thus $\sigma_2(M) = \sigma_2(\matr A_G)$, which is small if $G$ is an expander. A similar argument also works for splits in other positions, and for longer walks.

The above argument can also be extended to show that the sub-collections of walks considered by Ta-Shma (after a slight modification) are splittable, though the structure and the corresponding matrices are more involved there (see \cref{appendix:ta-shma}). 
%
%

\paragraph{Regularity for graphs and functions.}
We first consider an analytic form of the Frieze-Kannan regularity lemma (based on~\cite{TrevisanTV09}). 
Let $g: \calX \to [-1,1]$ be any function on a finite space $\calX$ with an associated probability measure $\mu$, and let $\cF \subseteq \inbraces{f: \calX \to [-1,1]}$ be any class of functions closed under negation. Say we want to construct a ``simple approximation/decomposition'' $h$, which is indistinguishable from $g$, for all functions in $f$ \ie 
\[
\ip{g-h}{f}_{\mu} ~=~ \Ex{x \sim \mu}{\inparen{g(x)-h(x)} \cdot f(x)} ~\leq~ \delta \qquad \forall f \in \cF \mper
\]
We can view the regularity lemma as saying that such an $h$ can always be constructed as a sum of $1/\delta^2$ functions from $\cF$. 
Indeed, we can start with $h^{(0)} = 0$, and while there exists $f_{\ell}$ violating the above condition, we update $h^{(\ell+1)} = h^{(\ell)} + \delta \cdot f_{\ell}$. 
The process must stop in $1/\delta^2$ steps, since $\smallnorm{g-h^{(\ell)}}^2$ can be shown to decrease by $\delta^2$ in every step.
\[
\norm{g-h^{(\ell)}}_{\mu}^2 - \norm{g-h^{(\ell+1)}}_{\mu}^2
~=~ 2\delta \cdot \ip{g-h^{(\ell)}}{f_{\ell}}_{\mu} - \delta \cdot \norm{f_{\ell}}_{\mu}^2
~\geq~ \delta^2 \mper
\]
In fact, the above can be seen as gradient descent for minimizing the convex function $F(h) = \sup_{f \in \cF} \ip{g-h}{f}_{\mu}$. 
Taking $\calX = [n]^2$ with $\mu$ as uniform on $[n]^2$, $g = \one_{E(G)}$ for a (dense) graph $G$, and $\cF$ as all functions (cut matrices) of the form $\pm \one_S\one_T^{\trans}$ yields the weak regularity lemma for graphs, since we get $h = \sum_{\ell} c_{\ell} \cdot f_{\ell} = \sum_{\ell} c_{\ell} \cdot \one_{S_{\ell}}\one_{T_{\ell}}^{\trans}$ such that
\[
\ip{g-h}{f}_{\mu} \leq \delta \quad\forall f \in \cF
\quad \Leftrightarrow \quad
\frac{1}{n^2} \cdot \abs{E_G(S,T) - \sum_{\ell} c_{\ell} \abs{S_{\ell} \cap S} \abs{T_{\ell} \cap T}} \leq \delta \quad\forall S,T \subseteq [n] \mper
\]
Note that the inner product in the above analytic argument can be chosen to be according to any measure on $\calX$, and not just the uniform measure. 
In particular, taking $W \subseteq [n]^2$ to be the edge-set of a (sparse) $d$-regular expander with second singular value (say) $\lambda$, and $\mu = \mu_2$ to be uniform over $W$, we obtain the regularity lemma for subgraphs of expanders.
In this case, after obtaining the approximation with respect to $\mu$, one shows using the expander mixing lemma that if $\ip{g-h}{f}_{\mu_2} \leq \delta$, then $\ip{g- \inparen{\nfrac{d}{n}} \cdot h}{f}_{\mu_1 \otimes \mu_1} \leq (\nfrac{d}{n}) \cdot \delta'$, where $\mu_1$ denotes the uniform measure on $[n]$ and $\delta' = \delta + \lambda$. This gives a sparse regularity lemma, since for $G \subseteq W$ and $g = \one_G$,
{\small
\[
\ip{g-\inparen{\frac{d}{n}}h}{f}_{\mu_1^{\otimes 2}} \leq \frac{d}{n} \cdot \delta' \quad\forall f \in \cF
~~~\Leftrightarrow~~~
\abs{E_G(S,T) - \sum_{\ell} c_{\ell} \cdot \frac{d}{n}\abs{S_{\ell} \cap S} \abs{T_{\ell} \cap T}} \leq \delta' \cdot nd \quad\forall S,T \mper
\]
}
The \emph{algorithmic step} in the above proofs, is finding an $f_{\ell}$ such that $\ip{g-h}{f_{\ell}} > \delta$. For the function class $\calF$ corresponding to cut matrices, this corresponds to solving a problem of the form $\max_{S,T} \abs{\one_S^{\trans} M \one_T}$ for an appropriate matrix $M$ at each step. This equals the cut-norm and can be (approximately) computed using the SDP approximation algorithm of Alon and Naor~\cite{AN04}. Moreover, this can be implemented in near-linear time in the sparsity of $M$, using known fast, approximate SDP solvers of Lee and Padmanabhan~\cite{LeeP20}
or of Arora and Kale~\cite{AK07} (see \cref{subsec:matrix_corr_oracle} for details).

\paragraph{Splittable regularity.} 
For our regularity lemma, the class $\cF$ comprises of ``$k$-split functions'' of the form $f_1 \otimes \cdots \otimes f_k$, where for each $f_{t}$ can be thought of as $\one_{S_t}$ (or $(-1)^{\one_{S_t}}$) for some $S_t \subseteq [n]$. An argument similar to the one above, with the measure $\mu_k$ uniform on $W \subseteq [n]^k$, can yield an \emph{existential version} of the splittable regularity lemma, similar to the one for expander graphs (we now transition from $\mu_k$ to $\mu_1^{\otimes k}$ using a simple generalization of the expander mixing lemma to splittable collections). 
However, the algorithmic step in the above procedure, requires computing
\[
\max_{f_1, \ldots, f_k \in \cF} \ip{g-h}{f_1 \otimes \cdots \otimes f_k}
\]
Unfortunately, such an algorithmic problem is hard to even approximate in general, as opposed to the 2-split case for graphs.
Another approach is to first compute an approximation of a given $g: W \to [-1,1]$, in terms of 2-split functions of the form $f_1 \otimes f_2$, where $f_1:W[1,t] \to [-1,1]$ and $f_2: W[t+1,k] \to [-1,1]$, and then inductively approximate $f_1$ and $f_2$ in terms of 2-split functions, and so on. 
Such an induction does yield an algorithmic regularity lemma, though naively approximating the component functions $f_1$ and $f_2$ at each step, leads to a significantly lossy dependence between the final error, the splittability parameter $\tau$, and $k$.

We follow a hybrid of the two approaches above. 
We give an inductive argument, which at step $t$, approximates $g$ via $h_t$ which is a sum of $t$-split functions. 
However, instead of simply applying another 2-split to each term in the decomposition $h_t$ to compute $h_{t+1}$, we build an approximation for \emph{all of $h_t$} using the regularity argument above from scratch.  
We rely on the special structure of $h_t$ to solve the algorithmic problem $\max_{f_1, \ldots, f_{t+1}}\ip{h_{t} - h_{t+1}}{f_1 \otimes \cdots \otimes f_{t+1}}$, reducing it to a matrix cut-norm computation\footnote{Strictly speaking, we also need to be careful about the bit-complexity of our matrix entries, to allow for near-linear time computation. However, all the entries in matrices we consider will have bit-complexity $O_{k,\delta}(\log n)$.}.
This yields near-optimal dependence of the error on $\tau$ and $k$, needed for our coding applications.
\paragraph{Decoding direct-sum codes using regularity.}
We now consider the problem of decoding, from a received, possibly corrupted, $\tilde{y} \in \F_2^W$, to obtain the closest $y \in \dsum_W(\Cc_0)$ (or a list) \ie finding $\argmin_{z_0 \in \Cc_0} \Delta(\tilde{y}, \dsum_W(z_0))$. 
Let $g: [n]^k \to \pmone$ be defined as $g(i_1,\ldots,i_k) = (-1)^{\tilde{y}_{(i_1,\ldots,i_k)}}$ if $(i_1,\ldots,i_k) \in W$ and 0 otherwise. Also, for any $z \in \F_2^n$, define the function $\chi_z$ as $\chi_z(i) = (-1)^{z_i}$. As before, let $\mu_1$ denote the uniform measure on $[n]$. 
Using that $g$ is 0 outside $W$, and that $\abs{W} =d^{k-1} \cdot n$, we get
\begin{align*}
1 - 2\cdot \Delta(\tilde{y}, \dsum_{W}(z)) 
&~=~ \Ex{(i_1,\ldots,i_k) \in W}{g(i_1,\ldots,i_k) \cdot \chi_z(i_1) \cdots \chi_z(i_k)} \\
&~=~ \inparen{\frac{n}{d}}^{k-1} \cdot  \Ex{(i_1,\ldots,i_k) \in [n]^k}{g(i_1,\ldots,i_k) \cdot \chi_z(i_1) \cdots \chi_z(i_k)} \\
&~=~ \inparen{\frac{n}{d}}^{k-1} \cdot \ip{g}{\chi_z^{\otimes k}}_{\mu_1^{\otimes k}} \mper
\end{align*}
At this point, we modify the problem in three ways.
First, instead of restricting the optimization to $z_0 \in \Cc_0$, we widen the search to all $z \in \F_2^n$. We will be able to show that because of the pseudorandom (distance amplification) properties of $W$, a good (random) solution $z$ found by our algorithm, will be within the unique decoding radius of $\Cc_0$ (with high probability).
Secondly, using the fact that for splittable $W$, the function $g$ has an approximation $h = \sum_{\ell=1}^p c_{\ell} \cdot f_{\ell,1} \otimes \cdots \otimes f_{\ell,k}$ given by the regularity lemma, we can restrict our search to $z$ which (approximately) maximize the objective
\[
\ip{h}{\chi_z^{\otimes k}}_{\mu_1^{\otimes k}} 
~=~ 
\sum_{\ell=1}^p c_{\ell} \cdot \prod_{t \in [k]} \ip{f_{\ell,t}~}{\chi_z}_{\mu_1}
\]
Finally, instead of searching for $\chi_z: [n] \to \pmone$, we further widen the search to  $\fbar: [n] \to [-1,1]$. A random ``rounding'' choosing each $\chi_z(i)$ independently so that $\Ex{\chi_z} = \fbar$ should preserve the objective value with high probability.
We now claim that the resulting search for functions $\fbar$ maximizing $\ip{h}{\fbar^{\otimes k}}_{\mu_1^{\otimes k}}$, can be solved via a simple brute-force search. Note that the objective only depends on the inner products with a finite number of functions $\inbraces{f_{\ell, t}}_{\ell \in [p], t \in [k]}$ with range $\pmone$. Partitioning the space $[n]$ in $2^{pk}$ ``atoms'' based on the values of these functions, we can check that it suffices to search over $\fbar$, which are constant on each atom. Moreover, it suffices to search the values in each atom, up to an appropriate discretization $\eta$, which can be done in time $O\inparen{(\nfrac{1}{\eta})^{2^{pk}}}$. 

For the problem of list decoding $\tilde{y}$ up to radius $\nfrac12 - \beta$, we show that each $z_0 \in \Cc_0$, such that $\dsum_W(z_0)$ is in the list, there must be an $\fbar$ achieving a large value of $\ip{h}{\fbar^{\otimes k}}_{\mu_1^{\otimes k}}$ which then yields a $z$ within the unique decoding radius of $z_0$. Since we enumerate over all $\fbar$, this recovers the entire list. Details of the decoding algorithm are given in \cref{sec:abstract_dec}.



\section{Preliminaries}

We now introduce some notation. The asymptotic notation
$\widetilde{O}(r(n))$ hides polylogarithmic factors in $r(n)$.

\subsection{Codes}

We briefly recall some standard code terminology.
Given $z,z' \in \F_2^n$, recall that the relative
Hamming distance between $z$ and $z'$
is \text{$\Delta(z,z') \coloneqq \abs{\set{i \mid z_i\ne
z_i'}}/n$}. A binary code is any subset $\Cc \subseteq \F_2^n$.
The distance of $\Cc$ is defined as $\Delta(\Cc)
\coloneqq \min_{z\ne z'} \Delta(z,z')$ where $z,z' \in \Cc$. We say
that $\Cc$ is a linear code if $\Cc$ is a linear subspace of
$\mathbb{F}_2^n$. The rate of $\Cc$ is $\log_2(\abs{\Cc})/n$, or
equivalently $\dim(\Cc)/n$ if $\Cc$ is linear.

\begin{definition}[Bias]
  The \emph{bias} of a word $z \in \F_2^n$ is defined as $\bias(z) \coloneqq \abs{\E_{i \in [n]} (-1)^{z_i}}$.
  The bias of a code $\Cc$ is the maximum bias of any non-zero codeword in $\Cc$.
\end{definition}

\begin{definition}[$\epsilon$-balanced Code]
  A binary code $\Cc$ is \emph{$\epsilon$-balanced} if $\bias(z+z') \le \epsilon$ for every pair of distinct $z,z' \in \Cc$.
\end{definition}

\begin{remark}
 For linear binary code $\Cc$, the condition $\bias(\Cc) \leq \epsilon$ is equivalent to $\Cc$ being an $\epsilon$-balanced code.
\end{remark}

\subsection{Direct Sum Lifts}

Starting from a code $\Cc \subseteq \F_2^n$, we amplify its distance by considering the \textit{direct sum lifting} operation based on a collection $W(k) \subseteq [n]^k$.
The direct sum lifting maps each codeword of $\Cc$ to a new word in $\F_2^{|W(k)|}$ by taking the $k$-XOR of its entries on each element of $W(k)$.

\begin{definition}[Direct Sum Lifting]
  Let $W(k) \subseteq [n]^k$.
  For $z \in \F_2^n$, we define the \emph{direct sum lifting} as $\dsum_{W(k)}(z) = y$ such that
  $y_{(i_1,\dots,i_k)} = \sum_{j =1}^k z_{i_j}$ for all $(i_1,\dots,i_k) \in W(k)$.
  The direct sum lifting of a code $\Cc \subseteq \F_2^n$ is
  $$
  \dsum_{W(k)}(\Cc) = \{\dsum_{W(k)}(z) \mid z \in \mathcal C\}.
  $$
  We will omit $W(k)$ from this notation when it is clear from context.
\end{definition}

\begin{remark}
  We will be concerned with collections $W(k) \subseteq [n]^k$ arising
  from length-$(k-1)$ walks on expanding structures (mostly arising from
  Ta-Shma's direct sum construction~\cite{TS17}).
\end{remark}

We will be interested in cases where the direct sum lifting reduces
the bias of the base code; in~\cite{TS17}, structures with such a
property are called \emph{parity samplers}, as they emulate the reduction in
bias that occurs by taking the parity of random samples.

\begin{definition}[Parity Sampler]
  A collection $W(k) \subseteq [n]^k$ is called an \emph{$(\epsilon_0, \epsilon)$-parity sampler} if for all $z \in \F_2^{n}$
  with $\bias(z) \leq \epsilon_0$, we have $\bias(\dsum_{W(k)}(z)) \leq \epsilon$.
\end{definition}

\subsection{Splittable Tuples}

We now formally define the \emph{splittability} property for a collection of tuples
$W(k) \subseteq [n]^k$. For $1 \le a \le b \le k$, we define $W[a,b] \subseteq [n]^{(b-a+1)}$ as
$$
W[a,b] \coloneqq \set{(i_a,i_{a+1},\dots,i_b) \mid (i_1,i_2,\dots,i_k) \in W(k)}.
$$
We will work with $d$-regular tuples in the following sense.
\begin{definition}[Regular tuple collection]
  We say that $W(k) \subseteq [n]^k$ is $d$-regular if 
  for every $1 \le a \le b \le k$, we have
  \begin{itemize}
    \item $\abs{W[a,b]} =  d^{b-a} \cdot n$,
    \item $W[a] = [n]$.
  \end{itemize}
\end{definition}
A collection $W(k)$ being $d$-regular is analogous to a graph being
$d$-regular. 
\begin{example}
  The collection $W(k)$ of all length-$(k-1)$ walks on a $d$-regular connected graph $G=([n],E)$
  is a $d$-regular collection of tuples.
\end{example}

The space of functions $\mathbb{R}^{W[a,b]}$ is endowed with an inner
product associated to the uniform measure $\mu_{[a,b]}$ on $W[a,b]$.
We use the shorthand $\mu_{b}$ for $\mu_{[1,b]}$.

\begin{definition}[Splitable tuple collection]\label{def:split}
  Let $\tau > 0$.
  We say that a collection $W(k) \subseteq [n]^k$ is $\tau$-\emph{splittable} if it is $d$-regular
  and either $k = 1$ or for every $1 \le a \le t < b \le k$ we have
  \begin{itemize}
    \item the \emph{split} operator $\swap{W[a,s]}{W[t+1,b]} \in \mathbb{R}^{W[a,t] \times W[t+1,b]}$ defined as
          $$
          \left(\swap{W[a,t]}{W[t+1,b]}\right)_{(i_a,\dots,i_t),(i_{t+1},\dots,i_k)} \coloneqq \frac{\one\left[ (i_a,\dots,i_t, i_{t+1},\dots i_k) \in W[a,b]\right]}{d^{k-t}}
          $$
          satisfy $\sigma_2(\swap{W[a,t]}{W[t+1,b]}) \le \tau$, where $\sigma_2$ denotes the second largest singular value.
  \end{itemize}
\end{definition}

\begin{example}
  The collection $W(k)$ of all length-$(k-1)$ walks on a $d$-regular a
  graph $G=([n],E)$ whose normalized adjacency matrix has second
  largest singular value at most $\tau$ is a collection of $\tau$-splittable tuples
  as shown in~\cite{AJQST20}.
\end{example}

\begin{example}
  The collection $W(k)$ of tuples arising (from a slight modification) of the
  direct sum construction of Ta-Shma~\cite{TS17} is a
  $\tau$-splittable as shown in~\cite{JQST20}. Precise parameters are recalled
  later as~\cref{fact:ta-shma_splittable_tuples} of~\cref{appendix:ta-shma}.
\end{example}


\subsection{Factors}
It will be convenient to use the language of factors, to search the decompositions identified by regularity lemmas, for relevant codewords.
This concept (from ergodic theory) takes a rather simple form in our finite settings: it is just a partition of base set $\calX$, with an associated operation of averaging functions defined on $\calX$, separately over each piece.
\begin{definition}[Factors and measurable functions]
Let $\calX$ be a finite set. A {\deffont factor} $\calB$ is a partition of the set $\calX$, and the subsets of the partition are referred to as \deffont{atoms} of the factor. A function $f: \calX \to \calR$ is said to {\deffont measurable} with respect to $\calB$ ($\calB$-measurable) if $f$ is constant on each atom of $\calB$.
\end{definition}

\begin{definition}[Conditional averages]
If $f: \calX \to \R$ is a function, $\mu$ is a measure on the space $\calX$, and $\calB$ is a factor, then we define the {\deffont conditional average function} $\Ex{f|\calB}$ as
\[
\Ex{f|\calB}(x) ~\defeq~ \Ex{y \sim \mu|\calB(x)}{f(y)} \mcom
\]
where $\calB(x)$ denotes the atom containing $x$. Note that the function $\Ex{f|\cB}$ is measurable with respect to $\cB$.
\end{definition}
We will need the following simple observation regarding conditional averages.
\begin{proposition}\label{prop:measurable-inner-product}
  Let $h: \calX \to \R$ be a $\calB$-measurable function, and let $f: \calX \to \R$ be any function. Then, for any measure $\mu$ over $\calX$, we have
  \[
    \ip{h}{f}_{\mu} ~=~ \ip{h}{\Ex{f|\calB}}_{\mu} \mper
  \]
\end{proposition}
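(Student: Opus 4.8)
The statement to prove is \cref{prop:measurable-inner-product}: for a $\calB$-measurable function $h$, any function $f$, and any measure $\mu$ on $\calX$, we have $\ip{h}{f}_{\mu} = \ip{h}{\Ex{f|\calB}}_{\mu}$.

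The plan is to expand both inner products as sums (or expectations) over atoms of the factor $\calB$, and use the fact that $h$ is constant on each atom to pull it out of the inner average. First I would write $\ip{h}{f}_\mu = \sum_{x \in \calX} \mu(x) \, h(x) f(x)$ and group the sum according to which atom $B \in \calB$ contains $x$, giving $\sum_{B \in \calB} \sum_{x \in B} \mu(x) h(x) f(x)$. Since $h$ is $\calB$-measurable, $h(x)$ equals a common value, call it $h_B$, for all $x \in B$, so the inner sum becomes $h_B \sum_{x \in B} \mu(x) f(x)$. Writing $\mu(B) = \sum_{x \in B} \mu(x)$, this is $h_B \cdot \mu(B) \cdot \Ex{y \sim \mu|B}{f(y)} = h_B \cdot \mu(B) \cdot \Ex{f|\calB}(x)$ for any $x \in B$ (assuming $\mu(B) > 0$; atoms of measure zero contribute nothing to either side and can be discarded). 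Reversing the grouping, $h_B \cdot \Ex{f|\calB}(x) = h(x) \Ex{f|\calB}(x)$ on $B$, so summing back over all $x$ with weights $\mu(x)$ recovers exactly $\sum_{x} \mu(x) h(x) \Ex{f|\calB}(x) = \ip{h}{\Ex{f|\calB}}_\mu$.

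An even cleaner way to phrase the same argument, which I would likely use in the writeup, is via the tower property of conditional expectation: $\ip{h}{f}_\mu = \Ex{x\sim\mu}{h(x)f(x)} = \Ex{x\sim\mu}{\Ex{h \cdot f \mid \calB}(x)}$, and since $h$ is $\calB$-measurable it factors out of the conditional average, $\Ex{h\cdot f\mid\calB} = h \cdot \Ex{f\mid\calB}$, giving $\Ex{x\sim\mu}{h(x)\Ex{f|\calB}(x)} = \ip{h}{\Ex{f|\calB}}_\mu$. Both phrasings amount to the same two-line computation.

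There is essentially no obstacle here; this is a routine identity. The only minor point to be careful about is the treatment of atoms of measure zero under $\mu$ (the definition of $\Ex{f|\calB}$ on such atoms is vacuous, but they also carry zero weight in every $\mu$-inner product, so the identity is unaffected), and the implicit assumption that the measure $\mu$ is compatible with the factor in the sense that conditional averages are well-defined wherever they are used. I would state the proof in two or three sentences using the tower-property phrasing.
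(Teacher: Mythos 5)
Your proposal is correct and matches the paper's proof: the paper also writes $\ip{h}{f}_\mu$ as an iterated expectation $\Ex{x\sim\mu}{\Ex{y\sim\mu|\calB(x)}{h(y)f(y)}}$, pulls the constant value of $h$ out of the inner conditional average, and identifies the result with $\ip{h}{\Ex{f|\calB}}_\mu$. Your sum-over-atoms phrasing and the tower-property phrasing are both just expansions of this same two-line computation, so there is nothing to add.
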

\begin{proof}
By definition of the $\calB$-measurability, $h$ is constant on each atom, and thus we can write $h(x)$ as $h(\calB(x))$. 
  \begin{align*}
    \ip{h}{f}_{\mu} ~=~ \Ex{x \sim \mu}{h(x) \cdot f(x)}
    &~=~ \ExpOp_{x \sim \mu}\Ex{y \sim \mu | \calB(x)}{h(y) \cdot f(y)} \\
    &~=~ \Ex{x \sim \mu}{h(\calB(x)) \cdot \Ex{y \sim \mu | \calB(x)}{f(y)}} \\
    &~=~ \Ex{x \sim \mu}{h(x) \cdot \Ex{f|\calB}(x)} ~=~ \ip{h}{\Ex{f|B}}_{\mu} \mper \qedhere 
  \end{align*}
\end{proof}
The factors we will consider will be defined by a finite collection of functions appearing in a regularity decomposition.
\begin{definition}[Function factors]
Let $\calX$ and $\calR$ be finite sets, and let $\cF_0 = \inbraces{f_1, \ldots, f_r: \calX \to \calR}$ be a finite collection of functions. We consider the factor $\cB_{\cF_0}$ defined by the functions in $\cF_0$, as the factor with atoms $\inbraces{x ~|~ f_1(x) = c_1, \ldots, f_r(x) = c_r}$ for all $(c_1,\ldots, c_r) \in \cR^r$.
\end{definition}

\begin{remark}\label{remark:factor_sigma_algebra_comp}
Note that when the above function are indicators for sets \ie each $f_j = \one_{S_j}$ for some $S_j \subseteq \calX$, then the function factor $\calB_{\calF_0}$ is the same as the $\sigma$-algebra generated by these sets.
Also, given the functions $f_1,\ldots,f_r$ as above, the function factor $\calB_{\calF_0}$ can be computed in time $O(\abs{\calX} \cdot \abs{\calR}^{r})$.
\end{remark}

\subsection{Functions and Measures}
We describe below some classes of functions, and spaces with associated measures, arising in our proof. 
The measures we consider are either uniform on the relevant space, or are products of measures on its component spaces.

\paragraph{Function classes.} 
Let $S \subseteq [n]$. We define $\sset_S \colon [n] \to \set{\pm 1}$
as $\sset_S(i) \coloneqq (-1)^{\one_{i \in S}}$ (we observe that as
defined $\sset_S$ is not a character\footnote{Strictly speaking
  $\sset_S$ is not a character but by identifying the elements of
  $[n]$ with those of a canonical basis of $\F_2^n$ it becomes a
  character for $\F_2^n$.}).  We need the following two collection of
functions for which algorithmic results will be obtained.

\begin{definition}[CUT functions]
We define the set of $0/1$ \emph{CUT} cut functions as
$$
\CUT^{\otimes k} \coloneqq \set{\pm \one_{S_1} \otimes \cdots \otimes \one_{S_k} \mid S_1,\ldots,S_k \subseteq [n]},
$$
and defined the set of $\pm 1$ \emph{CUT} functions as
$$
\SCUT^{\otimes k} \coloneqq \set{\pm \sset_{S_1} \otimes \cdots \otimes \sset_{S_k} \mid S_1,\ldots,S_k \subseteq [n]}.
$$
\end{definition}

We will use a higher-order version of cut norm.
\begin{definition}\label{def:higher_order_cut_norm}
  Let $g \in \mathbb{R}^{[n]^k}$, the $k$-tensor cut norm is
  $$
  \norm{g}_{\square^{\otimes k}} \coloneqq \max_{f \in \textup{CUT}^{\otimes k}} \ip{g}{f},
  $$
  where the inner product is over the counting measure on $[n]^k$. 
\end{definition}

Some of our results hold for more general class of functions.

\begin{definition}[$t$-split functions]\label{def:abstract_tensor_spaces}
  Suppose $W(k)$ is a regular collection of $k$-tuples.
  For $t \in \set{0,\dots,k-1}$, we define a generic class of tensor product functions $\calF_t$ as
  $$
  \calF_t \subseteq \left\{\pm f_1 \otimes \cdots \otimes f_t \otimes f_{t+1} \mid f_j \subseteq \mathbb{R}^{W[1]} \text{ for } i \le t, f_{t+1} \subseteq \mathbb{R}^{W[t+1,k]}, \norm{f_j}_{\infty} \le 1\right\}.
  $$
  To avoid technical issues, we assume that each $\calF_t$ is finite.   
\end{definition}

Fixing some $\calF \subseteq \mathbb{R}^{\calX}$, we define the set of
functions that are linear combinations of function from $\calF$ with
coefficients of bounded support size and bounded $\ell_{1}$-norm as
follows
$$
\calH(R_0,R_1,\calF) \coloneqq \left\{ \sum_{\ell=1}^p c_{\ell} \cdot f_{\ell} \mid p \le R_0, \sum \abs{c_{\ell}} \le R_1, f_{\ell} \in \calF \right\}.
$$

\paragraph{Measures and inner products.}
Recall that $\mu_1 \coloneqq \mu_{[1,1]}$ is the uniform measure on $W[1]$
(equivalently uniform measure on $W[i]$ since $W(k)$ is regular) and
$\mu_{[t+1,k]}$ is the uniform measure on $W[t+1,k]$. 
We define following measure $\nu_t$ as
$$
\nu_t \coloneqq \left(\mu_1\right)^{\otimes t} \otimes \left(\mu_{[t+1,k]}\right).
$$
Note that $\nu_0$ is the equal to $\mu_k$ and $\nu_{k-1}$ is equal to $\mu_1^{\otimes k}$.
We will need to consider inner products of functions according to various measures defined above, which we will denote as $\ip{\cdot}{\cdot}_{\mu}$ for the measure $\mu$. 
When a measure is not indicated, we take the inner product $\ip{f}{g}$ to be according to the counting measure on the domains of the functions $f$ and $g$.
%



\section{Weak Regularity for Splittable Tuples}

We will show how functions supported on a (possibly) sparse splittable
collection of tuples $\tuples{k} \subseteq [n]^k$ admit weak regular
decompositions in the style of Frieze and
Kannan~\cite{FK96:focs}. In~\cref{subsec:abstract_reg_lemma}, we start
by showing an abstract regularity lemma for functions that holds in
some generality and does not require splittability. Next,
in~\cref{subsec:split_mixing_lemma}, we show that splittable
collections of tuples satisfy suitable (simple) generalizations of the
expander mixing lemma for graphs which we call splittable mixing
lemma. By combining this abstract weak regularity decomposition with
splittable mixing lemmas, we obtain \emph{existential} decomposition
results for splittable tuples
in~\cref{subsec:existential_weak_reg_split}. Then, we proceed to make
these existential results not only algorithmic but near-linear time
computable in~\cref{subsec:eff_reg}. These algorithmic results will
rely on fast cut norm like approximation algorithms tailored to our
settings and this is done in~\cref{subsec:matrix_corr_oracle}. As
mentioned previously, this last step borrows heavily from known
results~\cite{AN04,AK07,LeeP20}.

\subsection{Abstract Weak Regularity Lemma}\label{subsec:abstract_reg_lemma}

We now show a weak regularity decomposition lemma for functions that
works in some generality and does not require splittability.
We now fix some notation for this section.
Let $\calX$ be a finite set endowed with a probability measure $\mu$.
Let $\mathbb{R}^{\calX}$ be a Hilbert space endowed with inner product
$\ip{f}{g}_{\mu} = \E_{\mu} \left[f \cdot g\right]$ and associated
norm $\norm{\cdot}_{\mu} = \sqrt{\ip{\cdot}{\cdot}_{\mu}}$.
Let $\calF \subseteq \set{f\colon \calX \to \mathbb{R} \mid
  \norm{f}_\mu \le 1}$ be a finite collection of functions such that
$\calF = - \calF$.

In a nutshell, given any $g \in \mathbb{R}^{\calX}$, the abstract weak
regularity lemma will allow us to find an approximator $h$, with
respect to the semi-norm $g-h \mapsto \max_{f \in \calF} \ip{g-h}{f}$,
which is a linear combinations of a certain \emph{small} number of
functions from $\calF$ (where this number depends only on the
approximation accuracy and the norm $\norm{g}_{\mu}$). This means that $g$ and
$h$ have approximately the same correlations with functions from
$\calF$. We will produce $h$ in an iterative procedure, where at each
step an oracle of the following kind (\cf~\cref{def:correlation_oracle}) is invoked.

\begin{definition}[Correlation Oracle]\label{def:correlation_oracle}
  Let $1 \ge \delta \ge \delta' > 0$ be accuracy parameters and $B > 0$.
  We say that $\calO_{\mu,B}$ is a $(\delta,\delta')$-correlation oracle for $\calF$ if
  given $h \in \mathbb{R}^{\calX}$ with $\norm{h}_{\mu}^2 =O(B)$ if there exists $f \in \calF$ with $\ip{h}{f} \ge \delta$,
  then $\calO_{\mu,B}$ returns some $f' \in \calF$ with $\ip{h}{f'} \ge \delta'$.
\end{definition}

More precisely, our abstract weak regularity decomposition is as
follows.

\begin{lemma}[Abstract Weak Regularity]\label{lemma:abstract_weak_reg}
  Let $\calO_{\mu,B}$ be a $(\delta,\delta')$-correlation oracle for $\calF$
  with $\error \ge \error' > 0$. Let $g \colon \calX \to \mathbb{R}$ 
  satisfy $\norm{g}^2_{\mu} \le B$. Then, we can find $h = \sum_{\ell=1}^p c_\ell \cdot f_\ell \in \calH(B/(\error')^2, B/\error',\calF)$
  with $f_\ell \in \calF$, $c_\ell \in [\error'/(1+\error'/\sqrt{B})^p,\error']$ and $\norm{h}_{\mu}^2 \le B$  such that
  $$
  \max_{f \in \mathcal{F}} \quad \ip{g-h}{f}_\mu ~\le~ \error.
  $$
  Furthermore, if $\calO_{\mu,B}$ runs in time $\calT_{\calO_{\mu,B}}$, then $h$ can be computed in
  $$
  \widetilde{O}\left(\poly(B, 1/\error') \cdot (\calT_{\calO_{\mu,B}}+ \abs{\supp(\mu)})\right)
  $$
  time, where $\supp(\mu)$ is the support of $\mu$.
  The function $h$ is constructed in~\cref{algo:abstract_weak_reg} as the final function
  in a sequence of approximating functions $h^{(\ell)} \in \calH(B/(\error')^2, B/\error',\calF)$.
\end{lemma}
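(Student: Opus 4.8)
The plan is to prove the Abstract Weak Regularity Lemma (\cref{lemma:abstract_weak_reg}) by the standard "energy increment" / gradient-descent argument, but being careful to track (i) the support size and $\ell_1$-norm of the approximator, (ii) the bit complexity / magnitude of coefficients, and (iii) the running time. First I would set $h^{(0)} = 0$, which trivially lies in $\calH(B/(\error')^2, B/\error', \calF)$ and has $\norm{h^{(0)}}_\mu^2 = 0 \le B$. At step $\ell$, given $h^{(\ell)}$ with $\norm{h^{(\ell)}}_\mu^2 \le B$, I query the correlation oracle $\calO_{\mu,B}$ on $g - h^{(\ell)}$; note $\norm{g - h^{(\ell)}}_\mu^2 = O(B)$ by the triangle inequality since both $g$ and $h^{(\ell)}$ have squared $\mu$-norm at most $B$. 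If the oracle reports that no $f \in \calF$ has $\ip{g - h^{(\ell)}}{f}_\mu \ge \error$, we stop and output $h = h^{(\ell)}$; by the oracle's contract this certifies $\max_{f \in \calF}\ip{g-h}{f}_\mu < \error$, as desired. Otherwise the oracle returns $f_{\ell+1} \in \calF$ with $\ip{g - h^{(\ell)}}{f_{\ell+1}}_\mu \ge \error'$, and we set $h^{(\ell+1)} = h^{(\ell)} + \error' \cdot f_{\ell+1}$.

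**The energy argument.** The key step is the potential-decrease computation. Since $\norm{f_{\ell+1}}_\mu \le 1$,
\[
\norm{g - h^{(\ell)}}_\mu^2 - \norm{g - h^{(\ell+1)}}_\mu^2
~=~ 2\error' \cdot \ip{g - h^{(\ell)}}{f_{\ell+1}}_\mu - (\error')^2 \norm{f_{\ell+1}}_\mu^2
~\ge~ 2(\error')^2 - (\error')^2 ~=~ (\error')^2.
\]
Since $\norm{g - h^{(0)}}_\mu^2 = \norm{g}_\mu^2 \le B$ and the potential is nonnegative, the process terminates after at most $p \le B/(\error')^2$ steps, giving the bound on the support size. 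The $\ell_1$-norm of the coefficient vector is $p \cdot \error' \le B/\error'$, and all coefficients are exactly $\error'$ at this stage, so $h \in \calH(B/(\error')^2, B/\error', \calF)$. For the norm bound $\norm{h}_\mu^2 \le B$: one needs $\norm{h^{(\ell+1)}}_\mu^2 \le B$ for each $\ell$; this does \emph{not} follow immediately from the energy argument alone, so I would argue it directly — since $\norm{g - h^{(\ell+1)}}_\mu^2 \le \norm{g - h^{(\ell)}}_\mu^2 \le \cdots \le \norm{g}_\mu^2 \le B$, and... actually the cleanest route is to observe that $h^{(\ell)}$ is the output of running the procedure on $g$ and that $\|h^{(\ell)}\|_\mu \le \|g\|_\mu + \|g - h^{(\ell)}\|_\mu$ gives only $2\sqrt B$; to get the sharp bound $\norm{h}_\mu^2 \le B$ I would instead note that each update moves $h^{(\ell)}$ toward $g$ in the sense that the angle condition $\ip{g - h^{(\ell)}}{f_{\ell+1}}_\mu \ge \error'$ keeps the iterates inside the ball of radius $\sqrt B$ around $0$ — this needs a short convexity/projection-type argument, and it is the one slightly delicate point in the bookkeeping.

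**Coefficient magnitude and running time.** For the claimed range $c_\ell \in [\error'/(1+\error'/\sqrt B)^p, \error']$: as written above all coefficients equal $\error'$, so the lower bound is automatic; the more refined statement (with the shrinking lower bound) would be relevant only if one merges repeated occurrences of the same $f_\ell$ and rescales, or if the oracle is used in a multiplicative-weights fashion — I would either leave coefficients at $\error'$ (making the stated interval trivially satisfied since $\error'/(1+\error'/\sqrt B)^p \le \error'$) or, if the downstream applications need consolidation, collect like terms and note each resulting coefficient is a nonnegative integer multiple of $\error'$ bounded by $p\error' \le B/\error'$, which still lies in the interval. For the running time: the loop runs $p = O(B/(\error')^2)$ times; each iteration performs one oracle call costing $\calT_{\calO_{\mu,B}}$, plus the cost of forming $g - h^{(\ell)}$ and evaluating/updating it on $\supp(\mu)$, which is $O(\abs{\supp(\mu)})$ per step (storing $h^{(\ell)}$ as its list of at most $p$ summands, or better, maintaining $g - h^{(\ell)}$ explicitly on the support). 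This yields total time $\widetilde O(\poly(B, 1/\error') \cdot (\calT_{\calO_{\mu,B}} + \abs{\supp(\mu)}))$.

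**Main obstacle.** I expect the only real subtlety — beyond routine bookkeeping — to be establishing the sharp invariant $\norm{h^{(\ell)}}_\mu^2 \le B$ at every step (as opposed to the weaker $O(B)$ that suffices for invoking the oracle). The fix is to observe that the iterates are the gradient-descent trajectory for the convex functional $F(h) = \max_{f \in \calF}\ip{g-h}{f}_\mu$ (a supremum of affine functions, hence convex), whose minimum over the $\mu$-ball of radius $\sqrt B$ is attained at $h = g$ with value $0$; a standard argument shows the descent steps, each of size $\error'$ in a descent direction $f_{\ell+1}$ with $\|f_{\ell+1}\|_\mu \le 1$, never leave a ball containing both the start $0$ and the target $g$. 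Everything else — termination count, membership in $\calH(\cdot,\cdot,\calF)$, the correlation guarantee on the output, and the time bound — follows directly from the energy inequality displayed above and the definition of the correlation oracle.
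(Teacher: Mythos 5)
Your overall scheme — energy increment driven by the correlation oracle, termination after $B/(\error')^2$ steps, and the resulting bounds on support size and $\ell_1$-norm — matches the paper's argument. But there is a genuine gap exactly at the point you flag as "the one slightly delicate point": the invariant $\norm{h^{(\ell)}}_{\mu}^2 \le B$. Your proposed fix (that the unconstrained descent iterates "never leave a ball containing both the start $0$ and the target $g$") does not hold. What the energy decrease gives you is $\norm{g-h^{(\ell)}}_\mu^2 \le \norm{g}_\mu^2 \le B$, i.e.\ the iterates stay in the ball of radius $\sqrt{B}$ \emph{centered at $g$}. That ball contains points of norm up to $2\sqrt{B}$ (e.g.\ $2g$ when $\norm{g}_\mu^2=B$), so without further mechanism you only get $\norm{h^{(\ell)}}_\mu^2 \le 4B$, which is not the statement being proved — and the sharp bound is actually used downstream (e.g.\ to get $\norm{h}_{\mu_1^{\otimes k}}\le 2$ in the decoding analysis and to control the norm across measure changes via splittability).

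The paper closes this by changing the update rule: it sets $h^{(\ell)} = \Pi\bigl(h^{(\ell-1)} + \error' f_\ell\bigr)$ where $\Pi$ is the projection onto the convex ball $\{g' : \norm{g'}_\mu^2 \le B\}$, so the invariant holds by construction. The price is that one must re-verify the energy decrease survives the projection; this follows from the Pythagorean-type inequality $\norm{y-x}_\nu^2 \ge \norm{y-\Pi(x)}_\nu^2 + \norm{\Pi(x)-x}_\nu^2$ for $y$ in the convex body (the paper's \cref{fact:proj_convex_body}, applied with $y=g$). The projection is also the reason the lemma states the coefficient range $c_\ell \in [\error'/(1+\error'/\sqrt{B})^p, \error']$ rather than $c_\ell = \error'$: each projection is a rescaling of $h$ by a factor in $[(1+\error'/\sqrt{B})^{-1},1]$ (since $\norm{h^{(\ell-1)}}_\mu + \error'\norm{f_\ell}_\mu \le \sqrt{B}(1+\error'/\sqrt{B})$), and compounding over $p$ steps gives the stated lower bound. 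Your alternative of "leaving coefficients at $\error'$" is therefore not available once the projection is in place; the rest of your bookkeeping (oracle contract at termination, running time) is fine.
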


The proof is based on \cref{algo:abstract_weak_reg}. 
\begin{figure}
\begin{algorithm}{Regularity Decomposition Algorithm}{$g \colon \calX \to \mathbb{R}$}{$h = \sum_{\ell=1}^p c_\ell \cdot f_\ell$}\label{algo:abstract_weak_reg}
     \begin{itemize}
      \item Let $\Pi$ be the projector onto the convex ball $\set{g' \in \mathbb{R}^{\calX} \mid \norm{g'}^2_{\mu} \le B}$.
      \item Let $\ell=0$ and $h^{(\ell)} = 0$
      \item While $\max_{f \in \mathcal{F}} \ip{g-h^{(\ell)}}{f}_\mu \ge \error$: \begin{itemize}
            \item $\ell = \ell + 1$
            \item Let $f_\ell \in \calF$ be such that $\ip{g-h^{(\ell-1)}}{f_\ell}_\mu \ge \error'$ \quad   \emph{(Correlation Oracle $\calO_{\mu,B}$ Step)}
            \item Let $c_\ell = \error'$
            \item $h^{(\ell)} = \Pi (h^{(\ell-1)} + c_\ell \cdot f_\ell)$
          \end{itemize}
      \item Let $p = \ell$
      \item return $h = \sum_{\ell=1}^p c_\ell \cdot f_\ell$
     \end{itemize}
\end{algorithm}
\end{figure}
Before getting into the proof, we will need the following general fact about projections onto a convex body.
\begin{fact}[Implicit in Lemma 3.1 of~\cite{B15}]\label{fact:proj_convex_body}
    Let $\mathcal{Y}$ be a compact convex body in a finite dimensional Hilbert space $\mathcal{V}$
    equipped with inner product $\ip{\cdot}{\cdot}_{\nu}$ and associated norm $\norm{\cdot}_{\nu}$.
    Let $\Pi_{\mathcal{Y}}$ be projector onto $\mathcal{Y}$. Then, for $y \in \mathcal{Y}$ and $x \in \mathcal{V}$,
    we have
    $$
    \norm{y-x}_{\nu}^2 ~\ge~ \norm{y - \Pi_{\mathcal{Y}}(x)}_{\nu}^2 ~+~ \norm{\Pi_{\mathcal{Y}}(x) - x}_{\nu}^2.
    $$
\end{fact}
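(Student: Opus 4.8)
The plan is to derive the stated inequality from the first-order (variational) optimality condition satisfied by the metric projection onto a convex set, which is the standard route to such "Pythagorean" decompositions and uses nothing beyond bilinearity, positive-definiteness of $\ip{\cdot}{\cdot}_{\nu}$, and convexity.

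First I would note that $\Pi_{\mathcal{Y}}(x)$ is well-defined. Since $\mathcal{Y}$ is nonempty and compact, and $z \mapsto \norm{z - x}_{\nu}^2$ is continuous and strictly convex on the finite-dimensional space $\mathcal{V}$, this function attains a unique minimum over $\mathcal{Y}$; write $\pi \coloneqq \Pi_{\mathcal{Y}}(x)$ for that minimizer. Next I would establish the variational inequality
\[
\ip{x - \pi}{y - \pi}_{\nu} ~\le~ 0 \qquad \text{for all } y \in \mathcal{Y}.
\]
This follows from convexity of $\mathcal{Y}$: for $t \in (0,1]$ the point $\pi + t(y-\pi)$ lies in $\mathcal{Y}$, so by optimality of $\pi$ we have $\norm{\pi + t(y-\pi) - x}_{\nu}^2 \ge \norm{\pi - x}_{\nu}^2$. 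Expanding the left-hand side by bilinearity and cancelling $\norm{\pi - x}_{\nu}^2$ from both sides gives $2t \ip{\pi - x}{y - \pi}_{\nu} + t^2 \norm{y - \pi}_{\nu}^2 \ge 0$; dividing by $t > 0$ and letting $t \to 0^+$ yields $\ip{\pi - x}{y - \pi}_{\nu} \ge 0$, which is the claimed inequality.

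Finally I would expand the target quantity using bilinearity,
\[
\norm{y - x}_{\nu}^2 ~=~ \norm{(y - \pi) + (\pi - x)}_{\nu}^2 ~=~ \norm{y - \pi}_{\nu}^2 ~+~ 2\,\ip{y - \pi}{\pi - x}_{\nu} ~+~ \norm{\pi - x}_{\nu}^2,
\]
and observe that $\ip{y - \pi}{\pi - x}_{\nu} = -\ip{y - \pi}{x - \pi}_{\nu} \ge 0$ by the variational inequality (applied with this $y \in \mathcal{Y}$). Dropping this nonnegative cross term and recalling $\pi = \Pi_{\mathcal{Y}}(x)$ gives exactly $\norm{y - x}_{\nu}^2 \ge \norm{y - \Pi_{\mathcal{Y}}(x)}_{\nu}^2 + \norm{\Pi_{\mathcal{Y}}(x) - x}_{\nu}^2$.

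There is no real obstacle here: the only point worth flagging is that the statement is phrased for a general finite-dimensional Hilbert space with a possibly non-Euclidean inner product $\ip{\cdot}{\cdot}_{\nu}$, but every step above uses only bilinearity, positive-definiteness, and the convexity of $\mathcal{Y}$ (with compactness used solely to guarantee that the projection exists), so the proof goes through verbatim. I would also remark in passing that closedness and convexity of $\mathcal{Y}$ already suffice for existence and uniqueness of $\pi$, so the boundedness/compactness hypothesis is not essential — but since the lemma as used (for $\mathcal{Y} = \set{g' : \norm{g'}_{\mu}^2 \le B}$) is stated with compactness, I will simply keep that hypothesis.
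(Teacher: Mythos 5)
Your proof is correct: the variational inequality $\ip{x-\Pi_{\mathcal{Y}}(x)}{y-\Pi_{\mathcal{Y}}(x)}_{\nu}\le 0$ followed by the Pythagorean expansion is exactly the standard argument, and it is the same one underlying the cited source (the paper itself does not prove this fact but simply invokes Lemma 3.1 of \cite{B15}, which proceeds this way). Your side remark that closedness and convexity already suffice for existence and uniqueness of the projection in finite dimensions is also accurate.
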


\begin{proof}[Proof of~\cref{lemma:abstract_weak_reg}]
  We will show that the norm of $\norm{g - h^{(\ell)}}_{\mu}$ strictly decreases as the algorithm
  progresses. Computing we obtain
  \begin{align*}
    \norm{g - h^{(\ell)}}_{\mu}^2 ~=~& \norm{g - \Pi (h^{(\ell-1)} + c_\ell \cdot f_\ell)}_{\mu}^2\\
                             ~\le~ & \norm{g - (h^{(\ell-1)} + c_\ell \cdot f_\ell)}_{\mu}^2 ~- \norm{(h^{(\ell-1)} + c_\ell \cdot f_\ell)-\Pi (h^{(\ell-1)} + c_\ell \cdot f_\ell)}_{\mu}^2\\    
                             ~\le~ & \norm{g - (h^{(\ell-1)} + c_\ell \cdot f_\ell)}_{\mu}^2\\
                             ~=~& \norm{g - h^{(\ell-1)}}_{\mu}^2 ~+~ c_\ell^2 \cdot \norm{f_\ell}_{\mu}^2 ~-~ 2 c_\ell \cdot \ip{g-h^{(\ell-1)}}{f_\ell}_{\mu}\\
                             ~\le~& \norm{g - h^{(\ell-1)}}_{\mu}^2 ~-~ (\error')^2
  \end{align*}
  where the first inequality is due to \cref{fact:proj_convex_body}, and the last inequality follows from $c_\ell=\error'$,  the bound $\norm{f_\ell}_{\mu} \le 1$ and
  $$
  \ip{g-h^{(\ell-1)}}{f_\ell}_{\mu} \ge \error'.
  $$
  Since $\norm{g}^2_{\mu} \le B$ and $\norm{g - h^{(\ell)}}^2_{\mu}$ decreases by at least $(\error')^2$ in each iteration,
  we conclude that the algorithm halts in at most $p \le B/(\error')^2$ steps.

  By construction each $c_\ell$ is initialized to $\error'$ and can not increase (it can only decrease due to projections).
  Thus, we obtain $\sum_{\ell=1}^p \abs{c_\ell} \le p \cdot \error' \le B/\error'$. Also by construction
  at termination $\norm{h}^2_{\mu} \le B$. It remains to show that $c_\ell \ge \error'/(1+\error'/\sqrt{B})^p$.
  Note that the projection $\Pi (h^{(\ell-1)} + c_\ell \cdot f_\ell)$ at each iteration  either does
  nothing to the coefficients $c_\ell$'s or scales them by a factor of at most $(1+\error'/\sqrt{B})$
  since $\norm{h^{(\ell-1)}}_{\mu} + \norm{c_\ell \cdot f_\ell}_{\mu} \le \sqrt{B}(1+\error'/\sqrt{B})$. This
  readily implies the claimed lower bound on the coefficients $c_\ell$'s at termination. Moreover,
  we have $h^{(\ell)} \in \calH(B/(\error')^2, B/\error',\calF)$ also by construction.

  \noindent \textbf{Running Time:} The decomposition algorithm calls the correlation oracle at most $p+1$ times.
  Since the coefficients $c_\ell$ always lie in $[\error'/(1+\error'/\sqrt{B})^p,\error'] \subseteq [\error'/\exp(p \error'/\sqrt{B}),\error']$,
  the bit complexity is $C = O(p \error'/\sqrt{B})$ and computing the projection (which amounts to computing $h^{(\ell)}/\norm{h^{(\ell)}}_{\mu}$
  if $\norm{h^{(\ell)}}_{\mu}^2 > B$) takes at most $\widetilde{O}(p^2 \cdot \poly(C) \cdot \abs{\supp(\mu)})$. Then the total
  running time is at most
  $$
  \widetilde{O}(p (\calT_{\calO_{\mu,B}} + p^2 \cdot \poly(C) \cdot \abs{\supp(\mu)})) ~=~ \widetilde{O}\left(\poly(B, 1/\error') \cdot (\calT_{\calO_{\mu,B}}+ \abs{\supp(\mu)})\right),
  $$
  concluding the proof.
\end{proof}

  If we are only interested in an existential version of~\cref{lemma:abstract_weak_reg}, we can always
  use a trivial existential $(\error,\error)$-correlation oracle. However, to obtain weak regularity decompositions
  efficiently in our settings, we will later use efficient $(\error,\error')$-correlation oracle with $\error'=\Omega(\error)$.
  
  As shown in \cref{chap:intro}, such an existential regularity lemma can also be used to prove a weak form of Johnson bound. Improvements in the regularity lemma for Ta-Shma codes therefore could have implications for constructing binary codes closer to list decoding capacity, which is interesting even if algorithmic questions are ignored.

\subsection{Splittable Mixing Lemma}\label{subsec:split_mixing_lemma}

A splittable collection of tuples gives rise to several expanding
split operators (see~\cref{def:split}). This allows us to show that a
splittable collection satisfies some higher-order analogues of the
well known expander mixing lemmas for graphs (\cf
\cite[Section 2.4]{HLW06}) as we make precise next.

\begin{lemma}[Splittable Mixing Lemma]\label{lemma:splittable_mixing}
  Suppose $\tuples{k} \subseteq [n]^k$ is a $\tau$-splittable
  collection of tuples. For every $t \in \set{0,\dots,k-2}$ and
  every $f,f' \in \calF_{t+1}$, we have
  $$
  \abs{\ip{f'}{f}_{\nu_{t+1}} ~-~ \ip{f'}{f}_{\nu_t}} ~\le~ \tau.
  $$
\end{lemma}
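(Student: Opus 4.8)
The plan is to reduce the claim, by a telescoping argument, to a single application of the expander mixing-type inequality for one split operator, using the $\tau$-splittability hypothesis at position $t+1$. First I would unpack the two measures $\nu_{t+1} = \mu_1^{\otimes (t+1)} \otimes \mu_{[t+2,k]}$ and $\nu_t = \mu_1^{\otimes t} \otimes \mu_{[t+1,k]}$; they differ only in how the block of coordinates indexed $t+1,\dots,k$ is sampled. Under $\nu_t$ the whole suffix $(i_{t+1},\dots,i_k)$ is drawn from the uniform measure $\mu_{[t+1,k]}$ on $W[t+1,k]$, while under $\nu_{t+1}$ the single coordinate $i_{t+1}$ is drawn independently uniform from $W[t+1] = [n]$ (using $d$-regularity so that the marginal of $\mu_{[t+1,k]}$ on the first coordinate is uniform) and the tail $(i_{t+2},\dots,i_k)$ is still drawn from $\mu_{[t+2,k]}$. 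Since both $f$ and $f'$ lie in $\calF_{t+1}$, each of them factors as $\pm g_1 \otimes \cdots \otimes g_{t+1}$ with $g_1,\dots,g_t \in \mathbb{R}^{W[1]}$ and $g_{t+1} \in \mathbb{R}^{W[t+1,k]}$, all bounded by $1$ in $\ell_\infty$. The first $t$ coordinates are sampled identically ($\mu_1^{\otimes t}$) under both measures, so after taking expectations over $i_1,\dots,i_t$ the difference $\ip{f'}{f}_{\nu_{t+1}} - \ip{f'}{f}_{\nu_t}$ becomes a constant (of absolute value at most $1$, by the $\ell_\infty$ bounds) times the difference of two expectations of the single product $(g_{t+1} \cdot g'_{t+1})$, evaluated under $\mu_{[t+1,k]}$ versus under $\mu_1 \otimes \mu_{[t+2,k]}$.

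The heart of the argument is then to bound
\[
\abs{\Ex{(i_{t+1},\dots,i_k) \sim \mu_{[t+1,k]}}{\phi(i_{t+1},\dots,i_k)} - \Ex{i_{t+1}\sim\mu_1,\,(i_{t+2},\dots,i_k)\sim\mu_{[t+2,k]}}{\phi(i_{t+1},\dots,i_k)}}
\]
for $\phi = g_{t+1}\cdot g'_{t+1}$ bounded by $1$. Here I would invoke $\tau$-splittability of $W[t+1,k]$ at the split point $t+1$ (i.e. the pair $(a,t,b) = (t+1, t+1, k)$ in \cref{def:split}): the split operator $\swap{W[t+1,t+1]}{W[t+2,k]} \in \mathbb{R}^{W[t+1]\times W[t+2,k]}$ has $\sigma_2 \le \tau$. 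Writing the first expectation as a bilinear form $\ip{\mathbf{1}_{[n]}/?}{\Ess \cdot \psi}$ where $\psi$ encodes $\phi$ along the tail and $\Ess$ is this biadjacency-type operator, and noting that the second expectation is exactly the "product" term obtained by replacing $\Ess$ by its top rank-one component $J/n$ (which is legitimate because $d$-regularity makes both one-sided marginals of $\Ess$ uniform), the difference is governed by $\sigma_2(\Ess) \le \tau$ applied to the mean-zero parts of the relevant vectors, whose norms are at most $1$ since $\phi$ and the indicator are bounded by $1$. This is precisely the standard expander-mixing-lemma manipulation, now for the normalized operator $\Ess$, and yields the bound $\tau$. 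One must be slightly careful that $\phi$ may not be a product across the $t+1 \mid t+2,\dots,k$ cut, but the mixing inequality only needs a bounded vector on each side of a single split, so this is not an issue.

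I expect the main obstacle — really the only delicate point — to be bookkeeping the measures correctly: verifying that $d$-regularity indeed forces the marginals of $\mu_{[t+1,k]}$ and of the split operator $\Ess$ to be uniform on each side, so that the "expectation under the product measure" genuinely equals the rank-one contribution $\langle \cdot, (J/n)\cdot\rangle$ that gets subtracted off in the mixing lemma. Once that is pinned down, the telescoping over the first $t$ coordinates is immediate (they are sampled identically), and the remaining step is a verbatim instance of the expander mixing lemma for a single $\tau$-bounded operator. I would also note that the case $k=1$ is vacuous and that the hypothesis $t \le k-2$ guarantees $t+2 \le k$, so the tail block $W[t+2,k]$ and the split at position $t+1$ both make sense.
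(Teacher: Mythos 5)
Your high-level strategy is the same as the paper's: the first $t$ coordinates are sampled identically under $\nu_t$ and $\nu_{t+1}$, so they contribute a factor bounded by $1$, and the remaining discrepancy should be a single application of the expander-mixing inequality for the split operator $\swap{W[t+1]}{W[t+2,k]}$. However, there is a genuine gap in the key step, caused by an off-by-one error in the tensor structure. You factor $f,f'\in\calF_{t+1}$ as $\pm g_1\otimes\cdots\otimes g_{t+1}$ with $g_{t+1}\in\R^{W[t+1,k]}$ — but that is the shape of $\calF_t$. A function in $\calF_{t+1}$ has $t+2$ factors, $f_1\otimes\cdots\otimes f_{t+1}\otimes f_{t+2}$, where $f_{t+1}\in\R^{W[1]}$ depends on coordinate $t+1$ alone and $f_{t+2}\in\R^{W[t+2,k]}$ depends on the tail. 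Consequently the product $f f'$ restricted to coordinates $t+1,\dots,k$ genuinely factors across the cut as $(f_{t+1}f_{t+1}')\otimes(f_{t+2}f_{t+2}')$, and this is precisely what makes the final step legitimate.

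Having lost that factorization, you are left with a general bounded $\phi$ on $W[t+1,k]$, you correctly notice it "may not be a product across the $t+1\mid t+2,\dots,k$ cut," and then you assert this is not an issue. That assertion is false: the bound $\sigma_2(\Ess)\le\tau$ controls only bilinear forms $\ip{u}{(\Ess-\matr J/N)v}$ against \emph{rank-one} test functions $u\otimes v$. For a non-product bounded $\phi$ the quantity $\E_{\mu_{[t+1,k]}}[\phi]-\E_{\mu_1\otimes\mu_{[t+2,k]}}[\phi]$ is a cut-norm-type discrepancy that is not controlled by $\sigma_2$; e.g.\ for $k=2$, $t=0$ and $\phi$ equal to $1$ on $W[1,2]$ and $0$ elsewhere, the difference is $1-d/n$, close to $1$, for an arbitrarily good sparse expander. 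The fix is simply to use the correct definition of $\calF_{t+1}$, after which $\phi=(f_{t+1}f_{t+1}')\otimes(f_{t+2}f_{t+2}')$ is a product of two $\ell_\infty$-bounded functions on the two sides of the split and the mixing inequality applies verbatim — which is exactly the paper's proof.
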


\begin{proof}
  Let $f = f_1 \otimes \cdots \otimes f_t \otimes f_{t+1} \otimes f_{t+2}$
  and $f' = f_1' \otimes \cdots \otimes f_t' \otimes f_{t+1}' \otimes f_{t+2}'$.
  We have
  \begin{align*}
    \Big\vert \ip{f'}{f}_{\nu_{t+1}} &- \ip{f'}{f}_{\nu_t} \Big\vert\\
    ~=~& \abs{ \prod_{i=1}^{t} \ExpOp_{\mu_1} f_i f_i' } \cdot \abs{ \ExpOp_{\mu_1 \otimes \mu_{[t+2,k]}} f_{t+1}f_{t+1}' \otimes f_{t+2}f_{t+2}' -
                 \ExpOp_{\mu_{[t+1,k]}} f_{t+1}f_{t+1}' \otimes f_{t+2}f_{t+2}'}\\
    ~\le~& \abs{ \ExpOp_{\mu_1 \otimes \mu_{[t+2,k]}} f_{t+1}f_{t+1}' \otimes f_{t+2}f_{t+2}' -  \ExpOp_{\mu_{[t+1,k]}} f_{t+1}f_{t+1}' \otimes f_{t+2}f_{t+2}' }.
  \end{align*}
  Let $f_{t+1}'' =  f_{t+1}f_{t+1}'$ and $f_{t+2}'' = f_{t+2}f_{t+2}'$. Note that
  $$
  \ExpOp_{\mu_1 \otimes \mu_{[t+2,k]}} f_{t+1}'' \otimes f_{t+2}'' - \ExpOp_{\mu_{[t+1,k]}} f_{t+1}'' \otimes f_{t+2}'' = \ip{f_{t+1}''}{\left(\frac{\matr J_{\textup{rec}}}{\abs{W[t+2,k]}}  - \swap{W[t+1]}{W[t+2,k]}\right) f_{t+2}''}_{\mu_1},
  $$
  where $\matr J_{\textup{rec}}$ is the (rectangular) $\abs{W[t+1]} \times \abs{W[t+2,k]}$ all ones matrix.
  Using the $\tau$-splittability assumption, we have the following bound on the largest singular value
  $$
  \sigma\left(\frac{\matr J_{\textup{rec}}}{\abs{W[t+2,k]}}  - \swap{W[t+1]}{W[t+2,k]}\right) \le \sigma_2\left(\swap{W[t+1]}{W[t+2,k]}\right) \le \tau.
  $$
  Then
  $$
  \abs{ \ExpOp_{\mu_1 \otimes \mu_{[t+2,k]}} f_{t+1}f_{t+1}' \otimes f_{t+2}f_{t+2}' -  \ExpOp_{\mu_{[t+1,k]}} f_{t+1}f_{t+1}' \otimes f_{t+2}f_{t+2}' } \le \tau,
  $$
  concluding the proof.
\end{proof}

We can iterate the preceding lemma to obtain the following.

\begin{lemma}[Splittable Mixing Lemma Iterated]\label{lemma:splittable_mixing_iterated}
  Suppose $\tuples{k} \subseteq [n]^k$ is a $\tau$-splittable
  collection of tuples. For every $f=f_1\otimes \cdots \otimes f_k \in \calF_{k-1}$, we have
  $$
  \abs{ \ExpOp_{\nu_{0}} f ~-~ \ExpOp_{\nu_{k-1}} f } ~\le~ (k-1)\cdot\tau.
  $$
\end{lemma}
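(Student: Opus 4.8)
The plan is to prove Lemma~\ref{lemma:splittable_mixing_iterated} by a straightforward telescoping argument, using Lemma~\ref{lemma:splittable_mixing} as a black box. The key observation is that $\nu_0 = \mu_k$ and $\nu_{k-1} = \mu_1^{\otimes k}$, so the quantity $\ExpOp_{\nu_0} f - \ExpOp_{\nu_{k-1}} f$ can be written as a telescoping sum
\[
\ExpOp_{\nu_0} f ~-~ \ExpOp_{\nu_{k-1}} f ~=~ \sum_{t=0}^{k-2} \left( \ExpOp_{\nu_t} f ~-~ \ExpOp_{\nu_{t+1}} f \right).
\]
By the triangle inequality, it then suffices to bound each of the $k-1$ summands $\abs{\ExpOp_{\nu_t} f - \ExpOp_{\nu_{t+1}} f}$ by $\tau$.

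The bound on the individual term is exactly what Lemma~\ref{lemma:splittable_mixing} provides, once we set it up correctly. First I would check the bookkeeping on membership in the function classes: given $f = f_1 \otimes \cdots \otimes f_k \in \calF_{k-1}$, we need to view $f$ as an element of $\calF_{t+1}$ in order to apply Lemma~\ref{lemma:splittable_mixing} at level $t$. Concretely, we group the last $k - (t+1)$ tensor factors $f_{t+2} \otimes \cdots \otimes f_k$ into a single function on $W[t+2,k]$ (using that for a regular collection these restriction spaces are well-defined, and $\norm{\cdot}_\infty$ bounds are preserved since each $\norm{f_i}_\infty \le 1$ implies the product has $\norm{\cdot}_\infty \le 1$). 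Then $f_1 \otimes \cdots \otimes f_{t+1} \otimes (f_{t+2} \otimes \cdots \otimes f_k) \in \calF_{t+1}$, and applying Lemma~\ref{lemma:splittable_mixing} with $f' = f$ gives $\abs{\ip{f}{f}_{\nu_{t+1}} - \ip{f}{f}_{\nu_t}} \le \tau$. I would need to note that $\ip{f}{\mathbf{1}}_{\nu} = \ExpOp_\nu f$, or more directly, reread off from the statement that Lemma~\ref{lemma:splittable_mixing} with one argument taken to be the all-ones function yields precisely $\abs{\ExpOp_{\nu_{t+1}} f - \ExpOp_{\nu_t} f} \le \tau$ (the all-ones function on each coordinate lies in the relevant class). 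Summing over $t = 0, \ldots, k-2$ and applying the triangle inequality yields the claimed bound $(k-1)\cdot \tau$.

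I do not anticipate a genuine obstacle here — this is a routine iteration of the previous lemma. The only mild subtlety, and the step I would be most careful about, is the class-membership argument: making sure that regrouping tensor factors $f_{t+2} \otimes \cdots \otimes f_k$ into a single factor on $W[t+1,k]$ (respectively $W[t+2,k]$) genuinely lands inside the generic class $\calF_{t+1}$ as defined in Definition~\ref{def:abstract_tensor_spaces}, and that the expectation $\ExpOp_{\nu_t} f$ matches the inner product form $\ip{f'}{f}_{\nu_t}$ appearing in Lemma~\ref{lemma:splittable_mixing} when one argument is the constant function $1$. Both are immediate from the definitions, so the proof is short; the bulk of the writeup is just displaying the telescoping sum and invoking the triangle inequality.
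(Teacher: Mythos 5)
Your proof is correct and is essentially the paper's own argument: a telescoping sum over $t=0,\dots,k-2$, the triangle inequality, and an application of the splittable mixing lemma to each consecutive pair of measures with one argument taken to be the constant all-ones function (the paper phrases the class-membership bookkeeping as restricting $f$ to the support of $\nu_t$, which is the same regrouping of tensor factors you describe). Just make sure the final write-up uses $\ip{f}{\mathbf{1}}_{\nu_t}$ rather than the momentary $\ip{f}{f}_{\nu_t}$ slip, as you yourself note.
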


\begin{proof}
  Let $1 \in \calF_{k-1}$ be the constant $1$ function.
  Note that for any $t \in \set{0,\dots,k-1}$ the restriction of any
  $f' \in \calF_{k-1}$ to the support of $\nu_{t}$ which we denote by
  $f'\vert_t$ belongs to $\calF_t$. It is immediate that $\ip{f}{1}_{\nu_{t}} = \ip{f\vert_t}{1}_{\nu_{t}}$.
  Computing we obtain
  \begin{align*}
    \abs{ \ExpOp_{\nu_{0}} f ~-~ \ExpOp_{\nu_{k-1}} f } ~=~    \abs{ \ip{f}{1}_{\nu_0} ~-~ \ip{f}{1}_{\nu_{k-1}} }
                                              ~\le~  &  \sum_{i=0}^{k-2} \abs{ \ip{f}{1}_{\nu_i} ~-~ \ip{f}{1}_{\nu_{i+1}} }\\
                                              ~=~  &  \sum_{i=0}^{k-2} \abs{ \ip{f\vert_t}{1\vert_t}_{\nu_i} ~-~ \ip{f\vert_{t+1}}{1\vert_{t+1}}_{\nu_{i+1}} }\\
                                             ~\le~ & \sum_{i=0}^{k-2} \tau, \qquad \qquad \textup{(By~\cref{lemma:splittable_mixing})}
  \end{align*}
  finishing the proof.
\end{proof}

In~\cref{subsec:eff_reg}, we will need two corollaries of the
splittable mixing lemma which we prove now.

\begin{restatable}{claim}{ClaimIndistinguishability}\label{claim:indistinguishability_under_measure_extension}
  Let $\tuples{k} \subseteq [n]^k$ be a $\tau$-splittable collection of tuples.
  Let $t \in \set{0,\dots,k-2}$ and $h_{t+1} \in \calH(R_0,R_1,\calF_{t+1})$.
  For every $f \in \calF_{t+1}$, we have
  $$
  \abs{ \ip{h_{t+1}}{f}_{\nu_{t+1}} ~-~ \ip{h_{t+1}}{f}_{\nu_t} } ~\le~ \tau \cdot R_1.
  $$  
\end{restatable}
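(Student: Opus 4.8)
The claim is an almost immediate consequence of the Splittable Mixing Lemma (\cref{lemma:splittable_mixing}) together with linearity of the inner product. The plan is to expand $h_{t+1}$ according to its membership in $\calH(R_0,R_1,\calF_{t+1})$, apply the mixing lemma termwise, and collect the error via the triangle inequality and the $\ell_1$-norm bound on the coefficients.

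First I would unfold the definition: since $h_{t+1} \in \calH(R_0,R_1,\calF_{t+1})$, we may write $h_{t+1} = \sum_{\ell=1}^{p} c_\ell \cdot f_\ell$ with $p \le R_0$, each $f_\ell \in \calF_{t+1}$, and $\sum_{\ell=1}^p \abs{c_\ell} \le R_1$. By bilinearity of $\ip{\cdot}{\cdot}_{\nu_{t+1}}$ and $\ip{\cdot}{\cdot}_{\nu_t}$ in the first argument, for any fixed $f \in \calF_{t+1}$ we get
\[
  \ip{h_{t+1}}{f}_{\nu_{t+1}} - \ip{h_{t+1}}{f}_{\nu_{t}}
  ~=~ \sum_{\ell=1}^{p} c_\ell \cdot \left( \ip{f_\ell}{f}_{\nu_{t+1}} - \ip{f_\ell}{f}_{\nu_{t}} \right).
\]
Next, for each index $\ell$ both $f_\ell$ and $f$ lie in $\calF_{t+1}$, so \cref{lemma:splittable_mixing} (applicable since $W(k)$ is $\tau$-splittable and $t \in \set{0,\dots,k-2}$) yields $\abs{ \ip{f_\ell}{f}_{\nu_{t+1}} - \ip{f_\ell}{f}_{\nu_{t}} } \le \tau$. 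Combining this with the triangle inequality gives
\[
  \abs{ \ip{h_{t+1}}{f}_{\nu_{t+1}} - \ip{h_{t+1}}{f}_{\nu_{t}} }
  ~\le~ \sum_{\ell=1}^{p} \abs{c_\ell} \cdot \abs{ \ip{f_\ell}{f}_{\nu_{t+1}} - \ip{f_\ell}{f}_{\nu_{t}} }
  ~\le~ \tau \cdot \sum_{\ell=1}^{p} \abs{c_\ell}
  ~\le~ \tau \cdot R_1,
\]
which is exactly the claimed bound.

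There is no real obstacle here: the only point requiring a small check is that the linear expansion of $h_{t+1}$ uses functions from the same class $\calF_{t+1}$ that $f$ belongs to, so that the hypotheses of \cref{lemma:splittable_mixing} are met for every cross term; this is guaranteed directly by the definition of $\calH(R_0,R_1,\calF_{t+1})$. The parameter $R_0$ (the bound on the number of terms) plays no role in the estimate — only the $\ell_1$-bound $R_1$ on the coefficients matters — which is consistent with the form of the statement.
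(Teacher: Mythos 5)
Your proposal is correct and follows exactly the paper's own argument: expand $h_{t+1}$ as a linear combination of functions in $\calF_{t+1}$, apply the Splittable Mixing Lemma (\cref{lemma:splittable_mixing}) to each cross term, and conclude via the triangle inequality and the $\ell_1$-bound $\sum_\ell \abs{c_\ell} \le R_1$. Nothing further is needed.
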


\begin{proof}
  Since $h_{t+1} \in \calH(R_0,R_1,\calF_{t+1})$, we can write $h_{t+1} = \sum_{\ell} c_{\ell} \cdot f_{\ell}$,
  where $f_{\ell} \in \calF_{t+1}$ and $\sum_{\ell} \abs{c_{\ell}} \le R_1$. By the splittable
  mixing lemma, \cf~\cref{lemma:splittable_mixing}, we have
  $$
  \abs{ \ip{h_{t+1}}{f}_{\nu_{t+1}} ~-~ \ip{h_{t+1}}{f}_{\nu_t} } ~\le~  \sum_{\ell} \abs{c_{\ell} } \cdot \abs{ \ip{f_{\ell}}{f}_{\nu_{t+1}} ~-~ \ip{f_{\ell}}{f}_{\nu_t} } ~\le~ \tau \cdot R_1. \qedhere
  $$
\end{proof}

\begin{restatable}{claim}{ClaimNormIndistinguishability}\label{claim:norm_indistinguishability_under_measure_extension}
  Let $\tuples{k} \subseteq [n]^k$ be a $\tau$-splittable
  collection of tuples.
  Let $t \in \set{0,\dots,k-2}$ and $h_{t+1} \in \calH(R_0,R_1,\calF_{t+1})$.
  Then
  $$
  \abs{ \norm{h_{t+1}}_{\nu_{t+1}}^2 ~-~ \norm{h_{t+1}}_{\nu_t}^2 } ~\le~ \tau \cdot R_1^2.
  $$
\end{restatable}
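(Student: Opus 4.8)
The statement to prove is \ClaimNormIndistinguishability: for a $\tau$-splittable collection $W(k)$, a split level $t \in \{0,\dots,k-2\}$, and $h_{t+1} \in \calH(R_0,R_1,\calF_{t+1})$, we have $\bigl|\,\norm{h_{t+1}}_{\nu_{t+1}}^2 - \norm{h_{t+1}}_{\nu_t}^2\,\bigr| \le \tau \cdot R_1^2$.

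\medskip

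\textbf{Plan.} The plan is to mimic the proof of the companion \ClaimIndistinguishability, except that now both slots of the inner product carry a function from $\calH(R_0,R_1,\calF_{t+1})$ rather than just one. First I would expand $h_{t+1} = \sum_{\ell} c_\ell f_\ell$ with $f_\ell \in \calF_{t+1}$ and $\sum_\ell |c_\ell| \le R_1$. Writing $\norm{h_{t+1}}_{\nu}^2 = \ip{h_{t+1}}{h_{t+1}}_\nu = \sum_{\ell,\ell'} c_\ell c_{\ell'} \ip{f_\ell}{f_{\ell'}}_\nu$ and taking the difference for $\nu = \nu_{t+1}$ versus $\nu = \nu_t$, the triangle inequality gives
\[
\bigl|\, \norm{h_{t+1}}_{\nu_{t+1}}^2 - \norm{h_{t+1}}_{\nu_t}^2 \,\bigr| ~\le~ \sum_{\ell,\ell'} |c_\ell|\,|c_{\ell'}| \cdot \bigl|\, \ip{f_\ell}{f_{\ell'}}_{\nu_{t+1}} - \ip{f_\ell}{f_{\ell'}}_{\nu_t} \,\bigr|.
\]
The second step is to bound each term $\bigl|\ip{f_\ell}{f_{\ell'}}_{\nu_{t+1}} - \ip{f_\ell}{f_{\ell'}}_{\nu_t}\bigr| \le \tau$ directly by the Splittable Mixing Lemma (\cref{lemma:splittable_mixing}), which applies precisely because $f_\ell, f_{\ell'} \in \calF_{t+1}$ and $t \in \{0,\dots,k-2\}$. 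The third step is the arithmetic: the bound becomes $\tau \cdot \bigl(\sum_\ell |c_\ell|\bigr)\bigl(\sum_{\ell'} |c_{\ell'}|\bigr) \le \tau R_1^2$, which is exactly the claim.

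\medskip

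\textbf{Obstacle.} There is no serious obstacle here; the only thing to be a little careful about is that \cref{lemma:splittable_mixing} is stated for $f, f' \in \calF_{t+1}$ with $t$ in the stated range, and the functions $f_\ell$ appearing in the decomposition of $h_{t+1}$ indeed lie in $\calF_{t+1}$ by definition of $\calH(R_0,R_1,\calF_{t+1})$, so the hypothesis is met termwise. One should also note that unlike \ClaimIndistinguishability, which is linear in $h_{t+1}$ and hence loses one factor of $R_1$, here the quadratic dependence on $h_{t+1}$ produces the $R_1^2$ factor — this is the expected (and optimal) scaling. If one prefers, the whole argument can be phrased more compactly by viewing $\norm{\cdot}_\nu^2$ as a bilinear form and applying \cref{lemma:splittable_mixing} with $f=f'=h_{t+1}$ after rescaling, but the double-sum presentation above keeps the bookkeeping transparent and requires nothing beyond what is already proved in \cref{subsec:split_mixing_lemma}.
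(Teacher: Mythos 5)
Your proof is correct and is essentially identical to the paper's: the paper also expands $h_{t+1} = \sum_\ell c_\ell f_\ell$, bounds the difference of squared norms by the double sum $\sum_{\ell,\ell'} |c_\ell||c_{\ell'}| \cdot \bigl|\ip{f_\ell}{f_{\ell'}}_{\nu_{t+1}} - \ip{f_\ell}{f_{\ell'}}_{\nu_t}\bigr|$, and applies \cref{lemma:splittable_mixing} termwise to conclude the bound $\tau \cdot R_1^2$. No gaps.
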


\begin{proof}
  Since $h_{t+1} \in \calH(R_0,R_1,\calF_{t+1})$, we can write $h_{t+1} = \sum_{\ell} c_{\ell} \cdot f_{\ell}$,
  where $f_{\ell} \in \calF_{t+1}$ and $\sum_{\ell} \abs{c_{\ell}} \le R_1$. By the splittable
  mixing lemma, \cf~\cref{lemma:splittable_mixing}, we have
  $$
  \abs{ \ip{h_{t+1}}{h_{t+1}}_{\nu_{t+1}} ~-~ \ip{h_{t+1}}{h_{t+1}}_{\nu_t} } ~\le~  \sum_{\ell,\ell'} \abs{c_{\ell}}\cdot \abs{c_{\ell'} } \cdot \abs{\ip{f_{\ell}}{f_{\ell'}}_{\nu_{t+1}} ~-~ \ip{f_{\ell}}{f_{\ell'}}_{\nu_t} } ~\le~ \tau \cdot R_1^2.
  $$ 
\end{proof}

\subsection{Existential Weak Regularity Decomposition}\label{subsec:existential_weak_reg_split}

Using the abstract weak regularity
lemma,~\cref{lemma:abstract_weak_reg}, together splittable mixing
lemmas of~\cref{subsec:split_mixing_lemma}, we can obtain
(non-constructive) existential weak regularity decompositions for
splittable structures.

\begin{lemma}[Existential Weak Regularity for Splittable Tuples]\label{lemma:existential_weak_reg_split}
  Let $\tuples{k} \subseteq [n]^k$ be a $\tau$-splittable structure.
  Let $g \in \mathbb{R}^{W[1]^k}$ be supported on $\tuples{k}$ with $\norm{g}_{\mu_k} \le 1$.  
  Let $\calF=\calF_{k-1}$ (\cf~\cref{def:abstract_tensor_spaces}) be arbitrary.  
  For every $\error > 0$,
  if $\tau \le O(\error^2/(k-1))$,
  then there exists $h \in \mathbb{R}^{W[1]^k}$
  supported on $O(1/\error^2)$ functions in $\calF$ such that
  $$
  \max_{f \in \calF}~ \ip{g - h}{f} ~\le~ \error \cdot \abs{\tuples{k}},
  $$
  where the inner product is over the counting measure on $W[1]^k$.
\end{lemma}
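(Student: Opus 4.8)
The plan is to derive the statement from the Abstract Weak Regularity Lemma (\cref{lemma:abstract_weak_reg}), applied with the probability measure $\mu_k = \nu_0$ on $\tuples{k}$, followed by two changes of measure: first from $\nu_0$ to $\nu_{k-1} = \mu_1^{\otimes k}$ using the splittable mixing lemmas of \cref{subsec:split_mixing_lemma}, and then from $\nu_{k-1}$ to the counting measure on $[n]^k$. Every $f = \pm f_1 \otimes \cdots \otimes f_k \in \calF = \calF_{k-1}$ satisfies $\norm{f}_\infty \le 1$, hence $\norm{f}_{\mu_k} \le 1$, and $\calF = -\calF$, so the hypotheses of \cref{lemma:abstract_weak_reg} hold with $B = 1$. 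Invoking that lemma with accuracy $\error/2$ and the trivial existential $(\error/2,\error/2)$-correlation oracle produces
\[
h' ~=~ \sum_{\ell=1}^{p} c_\ell \cdot f_\ell ~\in~ \calH\!\left(4/\error^2,\, 2/\error,\, \calF\right), \qquad p \le 4/\error^2, \quad \sum_{\ell} \abs{c_\ell} \le 2/\error,
\]
with $\max_{f \in \calF} \ip{g - h'}{f}_{\mu_k} \le \error/2$.

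The function we return is $h \coloneqq (d/n)^{k-1} \cdot h'$, a combination of the same $p = O(1/\error^2)$ functions of $\calF$. Because $g$ is supported on $\tuples{k}$ and $\abs{\tuples{k}} = d^{k-1}n$, one has $\ip{g}{f}_{\nu_{k-1}} = (d/n)^{k-1} \ip{g}{f}_{\mu_k}$ for every $f$, while $\ip{h}{f}_{\nu_{k-1}} = (d/n)^{k-1} \ip{h'}{f}_{\nu_{k-1}}$ by linearity, so
\[
\ip{g - h}{f}_{\nu_{k-1}} ~=~ \left(\frac{d}{n}\right)^{k-1} \left( \ip{g - h'}{f}_{\nu_0} ~+~ \ip{h'}{f}_{\nu_0} ~-~ \ip{h'}{f}_{\nu_{k-1}} \right).
\]
The first term in the parentheses is at most $\error/2$. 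For the remaining difference I would telescope over levels $t = 0, \dots, k-2$ and apply \cref{claim:indistinguishability_under_measure_extension} at each level with $R_1 = 2/\error$; this rests on the splittable mixing lemma \cref{lemma:splittable_mixing}, whose proof handles products of $\ell_\infty$-bounded tensor factors, so the restrictions of the $f_\ell$ and $f$ to $\supp(\nu_t)$ cause no difficulty. This gives $\abs{\ip{h'}{f}_{\nu_0} - \ip{h'}{f}_{\nu_{k-1}}} \le 2(k-1)\tau/\error$. Multiplying through by $n^k$ to pass to the counting measure and using $n^k (d/n)^{k-1} = \abs{\tuples{k}}$ yields
\[
\ip{g - h}{f} ~\le~ \abs{\tuples{k}} \cdot \left( \frac{\error}{2} ~+~ \frac{2(k-1)\tau}{\error} \right),
\]
so the desired bound $\error \cdot \abs{\tuples{k}}$ holds as soon as $\tau \le \error^2 / (4(k-1)) = O(\error^2/(k-1))$.

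The computations here are routine; the one point that needs care is the normalization. The reason the scalar $(d/n)^{k-1}$ must be inserted is precisely that $g$ is supported on the sparse set $\tuples{k}$ while the approximant $h'$, a combination of full tensor functions, is not — so the identities relating $\mu_k$, $\nu_{k-1}$, and the counting measure all carry this factor, and one must track it consistently through the telescoping. The case $k=1$ is immediate, since then $\nu_0 = \nu_{k-1} = \mu_1$ and the claim is exactly \cref{lemma:abstract_weak_reg}.
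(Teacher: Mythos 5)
Your proposal is correct and follows essentially the same route as the paper's proof: apply \cref{lemma:abstract_weak_reg} under $\mu_k$ with a trivial existential oracle, rescale the approximant by $(d/n)^{k-1} = \abs{\tuples{k}}/\abs{W[1]}^k$, and transfer from $\nu_0$ to $\nu_{k-1}$ via the splittable mixing machinery. The only differences are bookkeeping ones — you telescope through \cref{claim:indistinguishability_under_measure_extension} with $R_1 = 2/\error$ instead of applying \cref{lemma:splittable_mixing_iterated} to each product $f_\ell \cdot f$ term by term, and you split the error budget as $\error/2 + \error/2$ — which amount to the same underlying argument.
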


\begin{proof}
  Apply the weak regularity~\cref{lemma:abstract_weak_reg}, with parameters $\error$ and $\error'$ equal to $\error$,
  collection $\calF$, input function $g$, measure $\mu=\mu_k$ (\ie uniform measure on $\tuples{k}$)
  and a non-explicit correlation oracle based on the existential guarantee. This yields
  $h = \sum_{\ell=1}^p c_\ell \cdot f_\ell \in \calH(1/\error^2,1/\error,\calF)$ where
  $$
  \max_{f \in \calF} \quad \ip{g-h}{f}_{\mu_k} ~\le~ \error.
  $$
  Let $f \in \calF$.
  We claim that $h' = h \cdot \abs{\tuples{k}}/\abs{W[1]}^k$ satisfies the conclusion of the current lemma.
  For this, we bound
  \begin{align*}
    \abs{\abs{\tuples{k}} \ip{g - h}{f}_{\mu_k} ~-~ \ip{g - h'}{f}}  ~\le~ & \abs{\abs{\tuples{k}} \ip{g}{f}_{\mu_k} - \ip{g}{f}} ~+~\\
                                                              & \sum_{\ell=1}^p \abs{c_\ell} \cdot \abs{ \abs{\tuples{k}}\ip{f_\ell}{f}_{\mu_k} ~-~  \frac{\abs{\tuples{k}}}{\abs{W[1]}^k} \ip{f_\ell}{f}}.
  \end{align*}
  The first term in the RHS above is zero since
  $$
  \abs{\tuples{k}} \ip{g}{f}_{\mu_k} ~=~ \sum_{\ess \in \tuples{k}} g(\ess) \cdot f(\ess) ~=~ \ip{g}{f}, 
  $$
  where in the second equality we used that $g$ is supported on $\tuples{k}$.
  Suppose that $f= f_1 \otimes \cdots \otimes f_k$ and $f_\ell = f_{\ell,1} \otimes \cdots \otimes f_{\ell,k}$.
  Set $f_\ell' = (f_1\cdot f_{\ell,1}) \otimes \cdots \otimes (f_k\cdot f_{k,1})$ where $(f_j\cdot f_{j,1})$ is
  the pointwise product of $f_j$ and $f_{j,1}$. Note that
  \begin{align*}
         \ip{f_\ell}{f}_{\mu_k} ~=~ \ExpOp_{\nu_0} \left[f_\ell'\right] \quad \text{ and } \quad
         \frac{\ip{f_\ell}{f}}{\abs{W[1]}^k}  ~=~ \ExpOp_{\nu_{k-1}}\left[ f_\ell'\right],
  \end{align*}
  where we recall that $\mu_k$ is equal to $\nu_0$ and $\mu_1^{\otimes k}$ is equal to $\nu_{k-1}$.
  Moreover, $f_\ell'$ is the tensor product of $k$ functions in $\mathbb{R}^{X[1]}$ of $\ell_{\infty}$-norm at most $1$.
  By the splittable mixing lemma (\cf~\cref{lemma:splittable_mixing_iterated}), we have
  $$
  \abs{\ExpOp_{\nu_0}\left[f_\ell'\right] ~-~ \ExpOp_{\nu_{k-1}}\left[f_\ell'\right]} ~\le~ (k-1) \cdot \tau.
  $$
  Hence, we obtain
  \begin{align*}
    \abs{\abs{\tuples{k}} \ip{g - h}{f}_{\mu_k} ~-~ \ip{g - h'}{f}} ~\le~ &  \sum_{\ell=1}^p \abs{c_\ell} \cdot \abs{\tuples{k}} \cdot \abs{\ExpOp_{\nu_0 }\left[ f_\ell' \right] ~-~ \ExpOp_{\nu_{k-1}}\left[ f_\ell' \right]}\\
    ~\le~ &\sum_{\ell=1}^p \abs{c_\ell} \cdot (k-1) \cdot \tau \cdot \abs{\tuples{k}} ~\le~ \error \cdot \abs{\tuples{k}},
  \end{align*}
  from which the lemma readily follows.
\end{proof}

\subsection{Efficient Weak Regularity Decomposition}\label{subsec:eff_reg}

The goal of this section is to prove an efficient version of weak
regularity that can be computed in near-linear time. We obtain
parameters somewhat comparable to those parameters of the existential
weak regularity in~\cref{lemma:existential_weak_reg_split} above with a mild
polynomial factor loss of $\Theta(1/k^2)$ on the splittability
requirement.

\begin{restatable}{theorem}{EffWeakReg}[Efficient Weak Regularity]\label{theo:eff_weak_reg}
  Let $\tuples{k} \subseteq [n]^k$ be a $\tau$-splittable collection of tuples.
  Let $g \in \mathbb{R}^{W[1]^k}$ be supported on $\tuples{k}$ with $\norm{g}_{\mu_k} \le 1$. 
  Suppose $\calF$ is either $\CUT^{\otimes k}$ or $\SCUT^{\otimes k}$.
  For every $\error > 0$,
  if $\tau \le \error^2/(k^3 \cdot 2^{20})$, then
  we can find $h = \sum_{\ell=1}^{p} c_\ell \cdot f_\ell$ with $p = O(k^2/\error^2)$,
  $c_1,\dots,c_p \in \mathbb{R}$ and functions $f_1,\dots,f_p \in \calF$, such
  that $\norm{h}_{\mu_1^{\otimes k}} \le 2$ and $h$ is a good approximator to $g$ in
  the following sense
  $$
  \max_{f \in \calF}\quad \ip{g - \inparen{\frac{d}{n}}^{k-1}  h}{f} ~\le~ \error \cdot \abs{\tuples{k}},
  $$
  where the inner product is over the counting measure on $W[1]^k$.
  Furthermore, $h$ can be found in $\widetilde{O}(2^{2^{\widetilde{O}(k^2/\error^2)}} \cdot \abs{\tuples{k}})$ time.
\end{restatable}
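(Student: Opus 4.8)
The plan is to prove \cref{theo:eff_weak_reg} by induction on the split position $t$, constructing at step $t$ an approximator $h_t$ to $g$ that is a small linear combination of $t$-split functions (functions in $\calF_t$), and which is $\error_t$-indistinguishable from $g$ with respect to all of $\calF_t$ under the measure $\nu_t$. The base case $t=0$ is trivial: $\calF_0$ consists of functions on $W[1,k]$, and since $g$ itself is supported on $\tuples{k}$, we can take $h_0 = g$ (this is a ``$0$-split'' approximator with perfect accuracy). The inductive step is the heart of the argument: given $h_t \in \calH(R_0, R_1, \calF_t)$ with $\max_{f \in \calF_t} \ip{g - h_t}{f}_{\nu_t} \le \error_t$, I want to produce $h_{t+1} \in \calH(R_0', R_1', \calF_{t+1})$ with a comparable guarantee with respect to $\calF_{t+1}$ under $\nu_{t+1}$. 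The key point (following the ``hybrid'' approach described in the technical overview) is that I do \emph{not} simply apply another $2$-split to each term of $h_t$; instead I run the abstract regularity procedure of \cref{lemma:abstract_weak_reg} from scratch with input function $h_t$, measure $\nu_{t+1}$, and target class $\calF_{t+1}$, and crucially I exploit the structure of $h_t$ to implement the correlation oracle efficiently via matrix cut-norm.

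First I would set up the inductive step carefully. By \cref{claim:indistinguishability_under_measure_extension} and \cref{claim:norm_indistinguishability_under_measure_extension}, passing from $\nu_t$ to $\nu_{t+1}$ changes inner products with $\calF_{t+1}$ functions by at most $\tau R_1$ and norms by at most $\tau R_1^2$; as long as $R_1, R_0$ stay bounded by $\poly(k/\error)$ and $\tau \le \error^2/(k^3 2^{20})$, these perturbations are absorbable. So it suffices to approximate $h_t$ (restricted to the support of $\nu_{t+1}$) by a $(t+1)$-split function $h_{t+1}$ under $\nu_{t+1}$. Now $h_t = \sum_\ell c_\ell f_{\ell,1} \otimes \cdots \otimes f_{\ell,t} \otimes f_{\ell,t+1}$, where the last factor $f_{\ell,t+1}$ lives on $W[t+1,k]$. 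To split the $(t{+}1)$-st block I run \cref{lemma:abstract_weak_reg} on $h_t$ with the class $\calF_{t+1}$, which means at each iteration I must solve $\max_{f_1 \otimes \cdots \otimes f_{t+1}} \ip{h_t - h^{(\ell)}}{f_1 \otimes \cdots \otimes f_{t+1}}_{\nu_{t+1}}$. Because both $h_t$ and $h^{(\ell)}$ are linear combinations of bounded numbers of split functions whose first $t$ factors range over a finite alphabet of values, I can partition $W[1]$ into $2^{O(R_0)}$ atoms on which all relevant first-block factors are constant, reducing the optimization over $(f_1,\dots,f_t)$ to a finite search; for each fixed choice, the residual optimization over the last factor pair $(f_t \text{ on } W[1], f_{t+1} \text{ on } W[t+1,k])$ is exactly a matrix cut-norm problem on a matrix whose entries have bit-complexity $O_{k,\error}(\log n)$ and whose sparsity is $O(\abs{\tuples{k}})$. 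This I would solve approximately in near-linear time using the fast approximate SDP machinery (\cref{subsec:matrix_corr_oracle}, building on \cite{AN04, AK07, LeeP20}), giving an $(\error', \Omega(\error'))$-correlation oracle. Iterating this for $t = 0, 1, \dots, k-1$ and tracking the accumulated errors $\error_t = O(t \cdot \error/k) + O(t \cdot \tau \cdot \poly(k/\error))$, I reach $h_{k-1} \in \calH(O(k^2/\error^2), O(k/\error), \calF_{k-1})$, a $(k-1)$-split (hence genuine rank-$1$-tensor) approximator with the stated support size $p = O(k^2/\error^2)$ and $\norm{h_{k-1}}_{\nu_{k-1}} = \norm{h_{k-1}}_{\mu_1^{\otimes k}} \le 2$. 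Finally I would convert the $\nu_{k-1}$-guarantee into the stated counting-measure guarantee exactly as in the proof of \cref{lemma:existential_weak_reg_split}: since $g$ is supported on $\tuples{k}$ and $\abs{\tuples{k}} = d^{k-1} n$, the rescaling by $(d/n)^{k-1}$ converts $\ip{g - h_{k-1}}{f}_{\mu_1^{\otimes k}} \le \error'$ into $\ip{g - (d/n)^{k-1} h_{k-1}}{f} \le \error \cdot \abs{\tuples{k}}$ for all $f \in \calF$, using the splittable mixing lemma (\cref{lemma:splittable_mixing_iterated}) to control the measure change on the $h_{k-1}$ side.

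For the running time, \cref{lemma:abstract_weak_reg} runs the oracle $O(1/(\error')^2)$ times per split, and the correlation-oracle brute force over first-block atom assignments costs $2^{2^{\widetilde O(k^2/\error^2)}}$ (a tower because at step $t$ the number of atoms is already doubly exponential in the support size of $h_t$, and $h_t$ itself is built from the previous round's functions), times near-linear matrix cut-norm calls of cost $\widetilde O(\abs{\tuples{k}})$; multiplying through the $k$ rounds and the per-round iteration count gives the claimed $\widetilde{O}(2^{2^{\widetilde{O}(k^2/\error^2)}} \cdot \abs{\tuples{k}})$ bound. The main obstacle I anticipate is precisely the bookkeeping in the inductive step: ensuring that the perturbation bounds from \cref{claim:indistinguishability_under_measure_extension} and \cref{claim:norm_indistinguishability_under_measure_extension} compose across all $k-1$ rounds without blowing up $R_0, R_1$ (so that $\tau = O(\error^2/k^3)$ really suffices), and verifying that the correlation oracle built from the matrix cut-norm algorithm genuinely supplies an $(\error', \error')$-oracle in the sense of \cref{def:correlation_oracle} --- i.e., that the constant-factor loss of the Alon--Naor rounding, combined with the finite atom-discretization, still leaves $\error' = \Omega(\error)$ after all $k$ rounds. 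Managing the bit-complexity of the intermediate matrices (keeping it at $O_{k,\error}(\log n)$ so that cut-norm calls remain near-linear) is the other delicate point, but it follows the pattern already used in \cref{lemma:abstract_weak_reg}'s running-time analysis.
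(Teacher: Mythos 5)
Your proposal follows essentially the same route as the paper's proof: the same induction over split positions with $h_0 = g$, the same "rebuild from scratch" step that runs the abstract regularity lemma on $h_t$ against $\calF_{t+1}$ (the paper's \cref{lemma:algo_weak_reg_step} inside \cref{lemma:eff_weak_reg_induction}), the same atom/brute-force reduction of the correlation oracle to a matrix cut-norm call, and the same use of \cref{claim:indistinguishability_under_measure_extension} and \cref{claim:norm_indistinguishability_under_measure_extension} to absorb the $\nu_t \to \nu_{t+1}$ measure change. The only cosmetic differences are that the paper gets the measure-change term for $g$ itself exactly zero by $d$-regularity (rather than "absorbing" it), and the final passage to the counting measure is immediate since $\nu_{k-1} = \mu_1^{\otimes k}$; neither affects correctness.
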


\noindent \textbf{Warm-up:} We first sketch a simpler naive
algorithmic weak regularity decompoistion for $\CUT^{\otimes k}$ whose
parameters are much worse than the existential parameters
of~\cref{lemma:existential_weak_reg_split}, but it can be computed in
near-linear time. The fast accumulation of errors will explain our
motivation in designing the efficient algorithm
underlying~\cref{theo:eff_weak_reg}. The reader only interested in the
latter is welcome to skip ahead.

\begin{lemma}[Naive Efficient Weak Regularity]\label{lemma:naive_eff_weak_reg}
  Let $W' \subseteq \tuples{k}$ where $\tuples{k}$ is $\tau$-splittable.
  Let $\calF$ be either $\CUT^{\otimes k}$ or $\SCUT^{\otimes k}$.
  For every $\error > 0$, if $\tau \le (O(\error))^{2^k}$,
  then we can find $h$ supported on $(O(1/\error))^{2^k}$
  functions of $\calF$ such that
  $$
  \max_{f \in \calF}\quad \ip{\one_{W'} - h}{f} ~\le~ (k-1) \cdot \error \cdot \abs{\tuples{k}},
  $$
  where the inner product is over the counting measure on $W[1]^k$.
  Furthermore, this can be done in time $\tilde{O}_{\error}(\abs{\tuples{k}})$.
\end{lemma}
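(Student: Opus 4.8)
\textbf{Proof plan for \cref{lemma:naive_eff_weak_reg}.}

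The plan is to apply the abstract weak regularity lemma (\cref{lemma:abstract_weak_reg}) recursively, peeling off one tensor coordinate at a time, and at each level use a \emph{matrix} cut-norm approximation algorithm (which runs in near-linear time) as the correlation oracle rather than a genuine $k$-tensor cut-norm computation. Concretely, I would proceed by downward induction on the ``split depth'': starting from $g^{(0)} = \one_{W'}$ supported on $\tuples{k}$ with $\norm{g^{(0)}}_{\mu_k}\le 1$, at stage $t$ I have an approximator $h_t$ that is a linear combination of $t$-split functions (products $f_1\otimes\cdots\otimes f_t\otimes f_{t+1}$ with $f_{t+1}\in\mathbb{R}^{W[t+1,k]}$ and $f_1,\dots,f_t\in\mathbb{R}^{W[1]}$), and I want to replace each ``tail'' factor $f_{t+1}$ by a short sum of $2$-split functions $\one_{S}\otimes(\text{function on }W[t+2,k])$. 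Reinterpreting $g^{(t)}$ (the current tail residual) as a matrix with rows indexed by $W[t+1]$ and columns by $W[t+2,k]$, finding the best $\pm\one_{S_1}\otimes\one_{S_2}$ correlation is exactly the matrix cut-norm problem on this (sparse) matrix, which by the Alon--Naor SDP~\cite{AN04}, made fast via~\cite{LeeP20} (as described in \cref{subsec:matrix_corr_oracle}), gives an $(\error,\error')$-correlation oracle with $\error'=\Omega(\error)$ running in time $\tilde O(\abs{\tuples{k}})$. Feeding this oracle into \cref{lemma:abstract_weak_reg} with $B=1$ produces a decomposition of $g^{(t)}$ into $O(1/\error^2)$ two-split pieces; iterating over $t=1,\dots,k-1$ and expanding multiplicatively yields $h$ supported on $(O(1/\error))^{2^k}$ functions of $\calF$.

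The error analysis is where the bad dependence $\tau \le (O(\error))^{2^k}$ comes from, and I would track it carefully. Two sources of error compound. First, each application of the abstract lemma is performed with respect to the measure $\nu_t$ (uniform on the relevant split of $W$), but the guarantee we ultimately want is against the product/counting measure; converting between $\nu_{t}$ and $\nu_{t+1}$ costs $\tau\cdot R_1$ (or $\tau\cdot R_1^2$ for norms) by \cref{claim:indistinguishability_under_measure_extension} and \cref{claim:norm_indistinguishability_under_measure_extension}, and since the coefficient budget $R_1$ of the intermediate decompositions roughly squares at each of the $k$ levels (each tail factor is replaced by a sum of $O(1/\error^2)$ terms), the admissible $\tau$ must shrink doubly-exponentially in $k$ to keep these transfer errors below $\error$. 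Second, the correlation-oracle error $\error'$ at each level must be small enough that after summing $k-1$ levels the total is at most $(k-1)\error$, which is straightforward once the measure-transfer issue is controlled. I would organize this as: (i) prove the single-level statement---given $h_t\in\calH(R_0,R_1,\calF_t)$ one can find $h_{t+1}\in\calH(R_0',R_1',\calF_{t+1})$ with $\max_{f\in\calF_{t+1}}\ip{h_t-h_{t+1}}{f}_{\nu_{t+1}}\le\error$ in near-linear time, using the matrix oracle and \cref{lemma:abstract_weak_reg}; (ii) use the splittable mixing lemmas to pass from $\nu_{t+1}$-closeness back to $\nu_t$-closeness at a cost of $\tau R_1'$; (iii) telescope the $k-1$ levels, collecting both the $\error'$ errors and the $\tau$-transfer errors.

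The main obstacle is precisely the \textbf{multiplicative blow-up of the coefficient/support budget across levels}, which forces $\tau$ to be as small as $(O(\error))^{2^k}$; this is the weakness that motivates \cref{theo:eff_weak_reg}, where instead of recursively re-decomposing each tail factor in isolation one re-runs the regularity argument on \emph{all of} $h_t$ from scratch and exploits its special structure to keep the budget growing only polynomially. For the present lemma, though, the naive telescoping suffices: since $k$ is a fixed constant (thought of as $O_\eps(1)$ in the Ta-Shma application), $(O(1/\error))^{2^k}$ is still a $\error$-dependent constant, and the near-linear running time $\tilde O_\error(\abs{\tuples{k}})$ follows because we invoke the fast matrix cut-norm routine a constant number of times and all intermediate functions have $O_{k,\error}(\log n)$-bit coefficients, so computing the relevant function factors and conditional averages (\cref{remark:factor_sigma_algebra_comp}, \cref{prop:measurable-inner-product}) costs only $\tilde O_{k,\error}(\abs{\tuples{k}})$.
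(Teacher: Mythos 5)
Your plan is essentially the paper's proof: a naive level-by-level recursion that applies the near-linear-time matrix cut-norm regularity decomposition (Alon--Naor plus fast SDP solvers) to each factor at each level, with a telescoping error analysis in which the multiplicative blow-up of the coefficient/support budget across the $k-1$ levels forces the per-level accuracies to decay like $\Theta(\error^{2^j})$ and hence forces $\tau \le (O(\error))^{2^k}$. The only difference is cosmetic orientation --- the paper peels off the last coordinate and recursively re-decomposes the head factors $\one_{S_{\ell_1,\ldots,\ell_j}}$ on $W[1,k-j]$, whereas you peel from the front and re-decompose the tail factors on $W[t+1,k]$ --- and both correctly identify this budget blow-up as the weakness that \cref{theo:eff_weak_reg} is designed to avoid.
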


\begin{proofsketch}
  In this sketch, our goal is to show the fast accumulation of errors when applying the
  weak regularity decomposition for matrices. For simplicity, we assume that this
  can be done in near-linear time on the number of non-zero entries of the matrix.
  Precise details and much better parameters are given in the proof
  of~\cref{theo:eff_weak_reg}.
  
  Applying the matrix regularity decomposition to $\one_{W'}$, viewed a matrix in $\mathbb{R}^{W[1,k-1] \times W[k]}$ supported on $W[1,k]$, with accuracy parameter $\error_1 > 0$,
  we get in $\tilde{O}_{\error_1}(\abs{W[1,k]})$ time
  $$
  \norm{\one_{W'} ~-~ \frac{d}{n} \sum_{\ell_1=1}^{p_1} c_{\ell_1} \cdot \one_{S_{\ell_1}} \otimes \one_{T_{\ell_1}}}_{\square} ~\le~ \error_1 \cdot \abs{W[1,k]},
  $$
  where $p_1 = O(1/\error_1^2)$ and $\sum_{\ell_1} \abs{c_{\ell_1}} \le O(1/\error_1)$.

  In turn, for each $\one_{S_{\ell_1}}$ viewed a matrix in $\mathbb{R}^{W[1,k-2] \times W[k-1]}$ supported on $W[1,k-1]$, we apply the matrix regularity decomposition with
  accuracy parameter $\error_2 > 0$ getting  in $\tilde{O}_{\error_2}(\abs{W[1,k-1]})$ time
  $$
  \norm{\one_{S_{\ell_1}} ~-~ \frac{d}{n} \sum_{\ell_2=1}^{p_2} c_{\ell_2,\ell_1} \cdot \one_{S_{\ell_2,\ell_1}} \otimes \one_{T_{\ell_2,\ell_1}}}_{\square} ~\le~ \error_2 \cdot \abs{W[1,k-1]},
  $$
  where $p_2 = O(1/\error_2^2)$ and $\sum_{\ell_2} \abs{c_{\ell_2,\ell_1}} \le O(1/\error_2)$.
  Continuing this process inductively with accuracy parameters $\error_3,\dots, \error_{k-1}$, we obtain
  $$
  h \coloneqq \left(\frac{d}{n}\right)^{k-1} \sum_{\ell_1}^{p_1} \cdots \sum_{\ell_{k-1}=1}^{p_{k-1}} c_{\ell_1} \ldots c_{\ell_1,\ldots,\ell_{k-1}} \cdot \one_{T_{\ell_1,\ldots,\ell_{k-1}}} \otimes \cdots \otimes \one_{T_{\ell_1}},
  $$
  in time $\widetilde{O}_{\error_1,\ldots,\error_{k-1}}(\abs{\tuples{k}})$.
  We show that $h$ is close in $k$-tensor cut norm (\cf~\cref{def:higher_order_cut_norm}) to $\one_{W'}$.
  Computing we have
\begin{align*}
  &\norm{\one_{W'} - h}_{\square^{\otimes k}} \le \\
  & \quad \sum_{j=0}^{k-2} \sum_{\ell_1=1}^{p_1} \cdots \sum_{\ell_{j}=1}^{p_j} \abs{c_{\ell_1} \ldots c_{\ell_1,\ldots,\ell_{j}}} \cdot \\
  & \qquad\qquad\qquad \norm{\one_{S_{\ell_1,\ldots,\ell_{j}}} ~-~ \left(\frac{d}{n}\right)^{k-j-1} \sum_{\ell_{j+1}=1}^{p_{j+1}} c_{\ell_1,\ldots,\ell_{j+1}} \cdot \one_{S_{\ell_1,\ldots,\ell_{j+1}}} \otimes \one_{T_{\ell_1,\ldots,\ell_{j+1}}} }_{\square^{\otimes k-j}} \cdot \\
 & \qquad\qquad\qquad \left(\frac{d}{n}\right)^{j} \cdot \norm{\one_{T_{\ell_1,\ldots,\ell_{j}}} \otimes \cdots \otimes \one_{T_{\ell_1}}}_{\square^{\otimes j}}  \\
 & \le \quad \sum_{j=0}^{k-2} \sum_{\ell_1=1}^{p_1} \cdots \sum_{\ell_{j}=1}^{p_j} d^j \cdot \abs{c_{\ell_1} \ldots c_{\ell_1,\ldots,\ell_{j}}} \cdot \\
 & \qquad\qquad \qquad \norm{\one_{S_{\ell_1,\ldots,\ell_{j}}} ~-~ \left(\frac{d}{n}\right)^{k-j-1} \sum_{\ell_{j+1}=1}^p c_{\ell_1,\ldots,\ell_{j+1}} \cdot \one_{S_{\ell_1,\ldots,\ell_{j+1}}} \otimes \one_{T_{\ell_1,\ldots,\ell_{j+1}}} }_{\square}  \\
 & \le \quad \sum_{j=0}^{k-2} \sum_{\ell_1=1}^{p_1} \cdots \sum_{\ell_{j}=1}^{p_j} d^j \cdot \abs{c_{\ell_1} \ldots c_{\ell_1,\ldots,\ell_{j}}} \cdot \error_{j+1} \cdot \abs{W[1,k-j]} \\
 & \le \quad \abs{\tuples{k}} \sum_{j=0}^{k-2} \error_{j+1}  \prod_{\ell=1}^{j} O(1/\error_\ell).
\end{align*}
By setting $\error_j = \Theta(\error^{2^{j}})$, the LHS becomes at
most $(k-1) \cdot \error \cdot \abs{\tuples{k}}$.
\end{proofsketch}

We now proceed to prove our main result in this section,
namely~\cref{theo:eff_weak_reg}. Fist, we establish some extra
notation now. Let $\tuples{k}$ be a $d$-regular collection of
tuples. Most of our derivations which are existential hold for a
generic $\calF_t$ (\cf~\cref{def:abstract_tensor_spaces}).  However,
we only derive near-linear time algorithmic results when $\calF_t$ is
either the CUT functions
$$
\calF_t^{0/1} \coloneqq \left\{\pm \one_{S_1} \otimes \cdots \otimes \one_{S_t} \otimes \one_{T} \mid S_j \subseteq W[1], T \subseteq W[t+1,k] \right\},
$$
or ``signed'' CUT functions
$$
\calF_t^{\pm 1} \coloneqq \left\{\pm \sset_{S_1} \otimes \cdots \otimes \sset_{S_t} \otimes \sset_{T} \mid S_j \subseteq W[1], T \subseteq W[t+1,k] \right\},
$$
where above  we recall that for $S \subseteq [n]$, we have $\sset_{S}(i) = (-1)^{\one_{i \in S}}$ for $i \in [n]$.
Observe that the condition $S_j \subseteq W[1]$ is equivalent to $S_j \subseteq W[i]$ since
$\tuples{k}$ is $d$-regular. 

For quick reference, we collect the notation needed in our algorithmic weak regularity decomposition in the following table.
\begin{center}
  \fbox{\begin{minipage}{35em}
$\calF_t \coloneqq \left\{\pm f_1 \otimes \cdots \otimes f_t \otimes f_{t+1} \mid f_j \subseteq \mathbb{R}^{W[1]} \text{ for } i \le t, f_{t+1} \subseteq \mathbb{R}^{W[t+1,k]}, \norm{f_j}_{\infty} \le 1\right\}$
\vskip 0.3cm      
$\calF_t^{0/1} \coloneqq \left\{\pm \one_{S_1} \otimes \cdots \otimes \one_{S_t} \otimes \one_{T} \mid S_j \subseteq W[1], T \subseteq W[t+1,k] \right\} \subseteq \calF_t$
\vskip 0.3cm
$\calF_t^{\pm 1} \coloneqq \left\{\pm \sset_{S_1} \otimes \cdots \otimes \sset_{S_t} \otimes \sset_{T} \mid S_j \subseteq W[1], T \subseteq W[t+1,k] \right\} \subseteq \calF_t$
\vskip 0.3cm    
$\calH(R_0,R_1,\calF) \coloneqq \left\{ \sum_{\ell=1}^p c_{\ell} \cdot f_{\ell} \mid p \le R_0, \sum \abs{c_{\ell}} \le R_1, f_{\ell} \in \calF \right\}$
\vskip 0.3cm
$\mu_1$ is the uniform distribution on $W[1]$
\vskip 0.3cm
$\mu_{[t+1,k]}$ is the uniform distribution on $W[t+1,k]$
\vskip 0.3cm
$\nu_t \coloneqq \left(\mu_1\right)^{\otimes t} \otimes \left(\mu_{[t+1,k]}\right)$
\end{minipage}}
\end{center}

Our main result of this section, namely, the near-linear time weak
regularity decomposition~\cref{theo:eff_weak_reg}, can be readily
deduced from~\cref{lemma:eff_weak_reg_induction} below.

\begin{lemma}[Efficient Weak Regularity Induction]\label{lemma:eff_weak_reg_induction}
  Let $\tuples{k} \subseteq [n]^k$ be a $\tau$-splittable $d$-regular collection of tuples.
  Let $g \in \calF_0$ and $t \in \set{0,\ldots,k-1}$ with $\norm{g}_{\mu_k} \le 1$.
  For every $\error > 0$,
  if $\tau \le \error^2/(k\cdot 2^{18})$, then
  there exists $h_t \in \calH(O(1/\error^2),2^{8} (1+1/k)^t/\error,\calF_t)$
  with $\norm{h_t}_{\nu_t}^2 \le (1+1/k)^t$ 
  such that
  $$
  \max_{f \in \calF_t} \quad \ip{g - \left(\frac{d}{n}\right)^t h_t}{f}_{\nu_t} \le~ 2 \cdot \left(\frac{d}{n}\right)^t \cdot t \cdot \error.
  $$
  Furthermore, the function $h_t$ can be found in $\widetilde{O}((2t)^{2^{O(1/\error^2)}} \cdot \abs{\tuples{k}})$ time.
\end{lemma}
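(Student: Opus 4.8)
## Proof Proposal for Lemma (Efficient Weak Regularity Induction)

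The plan is to prove the statement by induction on $t$, following the hybrid approach sketched in the technical overview. The base case $t=0$ is trivial: take $h_0 = g$ (which lies in $\calF_0$ by hypothesis), so that $\ip{g - h_0}{f}_{\nu_0} = 0$ for all $f$, and $\norm{h_0}_{\nu_0}^2 = \norm{g}_{\mu_k}^2 \le 1 = (1+1/k)^0$. For the inductive step, assume we have found $h_t \in \calH(O(1/\error^2), 2^8(1+1/k)^t/\error, \calF_t)$ with $\norm{h_t}_{\nu_t}^2 \le (1+1/k)^t$ satisfying the stated approximation guarantee at level $t$. The key move is \emph{not} to apply a fresh $2$-split to each of the $O(1/\error^2)$ summands of $h_t$ (which, as the warm-up Lemma~\ref{lemma:naive_eff_weak_reg} shows, causes doubly-exponential error accumulation), but instead to build a level-$(t+1)$ approximator $h_{t+1}$ for \emph{all of} $h_t$ at once, by running the abstract regularity procedure of \cref{lemma:abstract_weak_reg} from scratch with the measure $\nu_{t+1}$, the target function $h_t$ (restricted to the support of $\nu_{t+1}$), and the class $\calF_{t+1}$.

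Concretely, first I would pass from $\nu_t$ to $\nu_{t+1}$: using \cref{claim:indistinguishability_under_measure_extension} and \cref{claim:norm_indistinguishability_under_measure_extension} with the splittability bound $\tau \le \error^2/(k\cdot 2^{18})$, the correlations $\ip{h_t}{f}_{\nu_{t+1}}$ and the norm $\norm{h_t}_{\nu_{t+1}}^2$ differ from their $\nu_t$-counterparts by at most $\tau R_1$ and $\tau R_1^2$ respectively, where $R_1 = 2^8(1+1/k)^t/\error$. Since $(1+1/k)^t \le (1+1/k)^k \le e$, these error terms are $O(\tau/\error) = O(\error/(k \cdot 2^{10}))$ and $O(\tau/\error^2) = O(1/(k\cdot 2^{10}))$ — small enough that $\norm{h_t}_{\nu_{t+1}}^2 \le (1+1/k)^t + 1/(k\cdot 2^{10}) \le (1+1/k)^{t+1}$ with room to spare, and the correlation with $g$ is only mildly disturbed. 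Second, I apply \cref{lemma:abstract_weak_reg} to $h_t$ with measure $\nu_{t+1}$, bound $B = (1+1/k)^{t+1} \le e$, accuracy $\error' = \Omega(\error)$, and an \emph{efficient} $(\error,\error')$-correlation oracle for $\calF_{t+1}$ — this oracle is exactly what \cref{subsec:matrix_corr_oracle} provides by reducing the maximization $\max_{f \in \calF_{t+1}} \ip{h_t - h_{t+1}}{f}_{\nu_{t+1}}$ to a matrix cut-norm computation (feasible because $h_t - h_{t+1}$ at each iteration is a linear combination of $(t+1)$-split functions, so the inner product factors, and the residual "last coordinate" problem is an ordinary matrix cut-norm approximable in near-linear time via \cite{AN04,LeeP20}). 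This yields $h_{t+1} \in \calH(O(1/\error^2), O(e/\error), \calF_{t+1})$ with $\norm{h_{t+1}}_{\nu_{t+1}}^2 \le B \le (1+1/k)^{t+1}$ and $\max_{f}\ip{h_t - h_{t+1}}{f}_{\nu_{t+1}} \le \error$; one must check the coefficient bound $R_1 = 2^8(1+1/k)^{t+1}/\error$ is respected, which follows since $B/\error' = O(e/\error) \le 2^8(1+1/k)^{t+1}/\error$.

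Third, I would assemble the final guarantee by a telescoping/triangle-inequality argument transporting everything back to the counting measure: for any $f \in \calF_{t+1}$, write
\[
\ip{g - (d/n)^{t+1} h_{t+1}}{f}_{\nu_{t+1}} \le \ip{g - (d/n)^t h_t}{f}_{\nu_{t+1}} + (d/n)^t\ip{h_t - (d/n) h_{t+1}}{f}_{\nu_{t+1}}.
\]
Wait — I need to be careful with the $(d/n)$ scaling and which measure each term lives on; the correct bookkeeping is to note $\nu_{t+1}$ restricted appropriately relates to $\nu_t$ by the split operator, and that $f \in \calF_{t+1}$ when "collapsed" gives a function in $\calF_t$ testable against the inductive hypothesis. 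The first term is bounded by $2(d/n)^t t\error$ by the inductive hypothesis (after the measure-change bound costs an extra $O(\tau R_1) \le \error$), and the second is bounded by $(d/n)^t \cdot \error$ from the abstract regularity output plus the measure-change between $\nu_t$ and $\nu_{t+1}$; together with the $(d/n)^{t}$ versus $(d/n)^{t+1}$ normalization handled via the regularity lemma's built-in $d/n$ rescaling (as in the sparse graph case), these sum to at most $2(d/n)^{t+1}(t+1)\error$. The running time is $\widetilde{O}(T_t + \poly(1/\error)\cdot(T_{\text{oracle}} + |\tuples{k}|))$ where $T_t$ is the level-$t$ cost; since the oracle itself recursively may invoke level-$t$ structure, unrolling the recursion gives the stated $\widetilde{O}((2t)^{2^{O(1/\error^2)}} |\tuples{k}|)$ — the tower arises because each of the $O(1/\error^2)$ iterations of \cref{algo:abstract_weak_reg} at level $t+1$ manipulates objects built from $O(1/\error^2)$ pieces at level $t$.

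The main obstacle I anticipate is the second step: engineering the efficient correlation oracle for $\calF_{t+1}$ and verifying that the residual optimization genuinely reduces to matrix cut-norm \emph{in near-linear time with controlled bit-complexity}. The subtlety is that $h_t - h_{t+1}$ is a sum of tensor-product functions but with potentially many terms, and one must organize the maximization over $\calF_{t+1} = \{\pm f_1 \otimes \cdots \otimes f_{t+1} \otimes f_{t+2}\}$ so that fixing the "inner" factors $f_1, \ldots, f_{t+1}$ (ranging over a function factor $\calB$ of bounded size determined by the finitely many $0/1$ or $\pm1$ functions appearing in the decomposition) leaves a genuine matrix cut-norm problem in the last factor — this is where \cref{prop:measurable-inner-product} and the function-factor machinery of \cref{subsec:eff_reg}'s preliminaries get used, and where the tower-type blowup in the running time is incurred. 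Getting the splittability threshold $\tau \le \error^2/(k \cdot 2^{18})$ to propagate correctly through $k$ levels of induction (so that the final lemma's $\tau \le \error^2/(k^3 \cdot 2^{20})$ in \cref{theo:eff_weak_reg} suffices) requires tracking constants carefully but is routine once the inductive invariant $\norm{h_t}_{\nu_t}^2 \le (1+1/k)^t$ is in place — that invariant is precisely what keeps the per-level measure-change errors $O(\tau/\error^2)$ summable to $o(1)$ over all $k$ levels.
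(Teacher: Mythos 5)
Your proposal is correct and follows essentially the same route as the paper: induction on $t$, with each step re-running the abstract weak regularity lemma on \emph{all} of $h_t$ against $\calF_{t+1}$ (rather than splitting term-by-term), transferring between $\nu_t$ and $\nu_{t+1}$ via the splittable mixing claims with the $(1+1/k)^t$ norm invariant, and implementing the correlation oracle by brute-forcing a discretized function factor over the first $t$ coordinates and solving a matrix cut-norm instance in the remaining two factors. The only cosmetic differences are that the paper runs the regularity step under $\nu_t$ and defers the measure change on $\ip{h_{t+1}}{f}$ to the very end (its term (iv)), and the $(d/n)$ bookkeeping you flagged as delicate is resolved by the exact identity $\ip{g}{f}_{\nu_t} = (n/d)\,\ip{g}{f}_{\nu_{t+1}}$, which holds with equality (no error term) because $g$ is supported on the $d$-regular collection $W(k)$.
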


We restate~\cref{theo:eff_weak_reg} below and then prove it
assuming~\cref{lemma:eff_weak_reg_induction}.

\EffWeakReg*

\begin{proof}
  Set $\calF_t = \calF_t^{0/1}$ if $\calF = \CUT^{\otimes k}$ or set $\calF_t = \calF_t^{\pm 1}$ if $\calF = \SCUT^{\otimes k}$.
  We apply~\cref{lemma:eff_weak_reg_induction} with $t=k-1$, accuracy $\error$ as $\error/(2k)$ and input function $g$. This gives
  $h_{t} = \sum_{\ell=1}^{p} c_\ell' \cdot f_\ell \in \calH(O(k^2/\error^2),O(k/\error),\calF_t)$ 
  such that
  \begin{equation}\label{eq:reg_induction_guarantee}
    \max_{f \in \calF_t} \quad \ip{g - \left(\frac{d}{n}\right)^t h_t}{f}_{\nu_{t}} ~\le~ 2 \cdot \left(\frac{d}{n}\right)^t \cdot t \cdot \error.
  \end{equation}
  Note that $\nu_t =\nu_{k-1} = \mu_1^{\otimes k}$ is the uniform measure on $W[1]^k$. Since $\tuples{k}$ is $d$-regular, $\abs{\tuples{k}} = \abs{W[1]}^k \cdot  (d/n)^{k-1}$.
  Set $h =  \cdot h_t$. Then the guarantee in~\cref{eq:reg_induction_guarantee} becomes
  $$
  \max_{f \in \calF}\quad \ip{g - \inparen{\frac{d}{n}}^{k-1} h}{f} ~\le~ \error \cdot \abs{\tuples{k}} ,
  $$
  where the inner product is under the counting measure. By~\cref{lemma:eff_weak_reg_induction},
  we have $\norm{h_t}_{\nu_t}^2 \le (1+1/k)^t \le e$, so $\norm{h_t}_{\nu_t} \le 2$. Then
  $\norm{h}_{\mu_1^{\otimes k}} \le 2$.
  The running time follows from \cref{lemma:eff_weak_reg_induction} completing the proof.
\end{proof}


We now prove~\cref{lemma:eff_weak_reg_induction} above assuming the
following algorithmic result which we prove later.

\begin{restatable}{lemma}{AlgoWeakRegStep}[Algorithmic Weak Regularity Step]\label{lemma:algo_weak_reg_step}
  Let $\error > 0$ and $t \in \set{0,\ldots,k-2}$.
  Let $h_t \in \calH(O(B/\error^2),O(B/\error),\calF_t)$ with $\norm{h_t}^2_{\nu_t} \le B$.
  Then there exists $h_{t+1} \in \calH(O(B/\error^2), 2^8 B/\error,\calF_{t+1})$
  with $\norm{h_{t+1}}_{\nu_t}^2 \le~ B$
  such that
  $$
  \max_{f \in \calF_{t+1}} \quad \ip{h_t - h_{t+1}}{f}_{\nu_t} ~\le~ \error.
  $$
  Furthermore, each $h_{t+1}$ can be found in time $\widetilde{O}((2t)^{2^{O(1/\error^2)}} \cdot \abs{\tuples{k}})$.
\end{restatable}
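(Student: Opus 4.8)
The plan is to obtain this lemma as a direct corollary of the abstract weak regularity lemma, \cref{lemma:abstract_weak_reg}, applied with input function $g \defeq h_t$, measure $\mu \defeq \nu_t = \mu_1^{\otimes t}\otimes\mu_{[t+1,k]}$, function family $\calF \defeq \calF_{t+1}$ (understood as the CUT family $\calF_{t+1}^{0/1}$ or the signed-CUT family $\calF_{t+1}^{\pm 1}$, matching the class in which $h_t$ is expressed), and accuracy parameters $\error$ and $\error'$ with $\error' = \Omega(\error)$. Each function in $\calF_{t+1}$ has $\nu_t$-norm at most $1$, and $\norm{h_t}_{\nu_t}^2 \le B$ by hypothesis, so \cref{lemma:abstract_weak_reg} produces $h_{t+1} = \sum_{\ell=1}^{p} c_\ell f_\ell \in \calH(B/(\error')^2, B/\error', \calF_{t+1})$ with $\norm{h_{t+1}}_{\nu_t}^2 \le B$ and $\max_{f \in \calF_{t+1}} \ip{h_t - h_{t+1}}{f}_{\nu_t} \le \error$, which is exactly the stated conclusion. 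Choosing $\error' \ge \error/2^8$ turns the bounds on the support and coefficients into $p = O(B/\error^2)$ and $\sum_\ell \abs{c_\ell} \le 2^8 B/\error$, so $h_{t+1} \in \calH(O(B/\error^2), 2^8 B/\error, \calF_{t+1})$. Since \cref{lemma:abstract_weak_reg} runs in time $\widetilde{O}(\poly(B,1/\error')\cdot(\calT_{\calO_{\nu_t,B}} + \abs{\supp(\nu_t)}))$ and $\abs{\supp(\nu_t)} \le \abs{\tuples{k}}$, all that remains is to build an efficient $(\error,\error')$-correlation oracle $\calO_{\nu_t,B}$ for $\calF_{t+1}$ with $\error' = \Omega(\error)$ running in time $\widetilde{O}((2t)^{2^{O(1/\error^2)}} \cdot \abs{\tuples{k}})$.

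Constructing this oracle is the main content. On a call, the input is a function $h = h_t - h^{(\ell)}$, i.e. a linear combination of $q = O(B/\error^2)$ functions from $\calF_t \cup \calF_{t+1}$ with $\norm{h}_{\nu_t}^2 = O(B)$, and we must return $f' \in \calF_{t+1}$ with $\ip{h}{f'}_{\nu_t} \ge \error'$ whenever some $f \in \calF_{t+1}$ satisfies $\ip{h}{f}_{\nu_t} \ge \error$. I would write the unknown $f = \pm\, g_1 \otimes \cdots \otimes g_t \otimes g_{t+1} \otimes g_{t+2}$ and treat the first $t$ coordinates and the last split ($W[t+1]$ against $W[t+2,k]$) separately. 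For the first $t$ coordinates, let $\calB_j$ be the factor on $W[1]$ generated by the coordinate-$j$ tensor factors occurring in the fixed function $h_t$ (there are at most $q$ of them, so $\calB_j$ has at most $2^{q}$ atoms). A greedy atom-rounding argument shows the optimal $f$ can be taken with each $g_j$ ($j \le t$) measurable with respect to $\calB_j$: holding the other factors fixed, $\ip{h}{f}_{\nu_t}$ is affine in (the indicator underlying) $g_j$ with coefficient vector constant on $\calB_j$-atoms, because $h$ depends on coordinate $j$ only through $\calB_j$-measurable functions — and this is non-circular, since inductively every constituent of $h^{(\ell)}$ returned by the oracle is itself chosen $\calB_j$-measurable in coordinate $j$, so $h$ inherits this; \cref{prop:measurable-inner-product} makes the same point on the partition side. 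We then enumerate over these (finitely many) configurations of the first $t$ coordinates together with the global sign. For each configuration, folding $g_1,\dots,g_t$ into the coefficients reduces the residual problem to maximizing $\ip{M}{g_{t+1}\otimes g_{t+2}}$ for an explicit matrix $M \in \mathbb{R}^{W[t+1]\times W[t+2,k]}$ supported on $W[t+1,k]$ with entries of bit-complexity $O_{k,\error}(\log n)$ — a cut-norm computation, which the near-linear-time approximate cut-norm routine of \cref{subsec:matrix_corr_oracle} (Alon--Naor rounding driven by the fast SDP solvers of~\cite{LeeP20,AK07}) solves to within a universal constant factor in time $\widetilde{O}(\abs{W[t+1,k]}) = \widetilde{O}(\abs{\tuples{k}})$. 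Returning the best candidate over all configurations yields $f'$ with $\ip{h}{f'}_{\nu_t} = \Omega(\error)$.

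For the running time: constructing the factors $\calB_j$, the matrices $M$, and doing the foldings each cost $\widetilde{O}(\abs{\tuples{k}})$; there are $O(B/\error^2)$ oracle calls; and each call enumerates configurations of the first $t$ coordinates and invokes the near-linear cut-norm routine once per configuration. The remaining step — and the principal obstacle — is to organize this enumeration so that its size is $(2t)^{2^{O(1/\error^2)}}$ (treating $B$ as an absolute constant, as in the application to \cref{lemma:eff_weak_reg_induction}), rather than the naive $2^{t\cdot 2^{O(1/\error^2)}}$ one gets by listing all products of $\calB_j$-atom unions: the key is that $h$ depends on $(i_1,\dots,i_t)\in W[1,t]$ only through the joint "activation signature" of the $q = 2^{O(1/\error^2)}$ terms, which takes at most $2^{q}$ values, so one should search over admissible states attached to these $2^{q}$ signature classes rather than over all coordinatewise configurations. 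Keeping this bound tight, and keeping every matrix entry at $O_{k,\error}(\log n)$ bits so that the fast SDP solvers genuinely run in near-linear time, are the delicate points. By contrast, correctness of the reduction (atom-rounding together with the constant-factor cut-norm guarantee) and the derivation of the stated parameters $\calH(O(B/\error^2), 2^8 B/\error, \calF_{t+1})$, $\norm{h_{t+1}}_{\nu_t}^2 \le B$ from \cref{lemma:abstract_weak_reg} are routine.
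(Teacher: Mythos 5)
Your high-level route is the same as the paper's: invoke \cref{lemma:abstract_weak_reg} with $\mu=\nu_t$, $\calF=\calF_{t+1}$, $\error'=\Theta(\error)$, and reduce everything to building a fast $(\error,\Omega(\error))$-correlation oracle that handles the first $t$ coordinates by finite enumeration and the last split $W[t+1]\times W[t+2,k]$ by the Alon--Naor cut-norm routine with fast SDP solvers. The derivation of the parameters $\calH(O(B/\error^2),2^8B/\error,\calF_{t+1})$, $\norm{h_{t+1}}_{\nu_t}^2\le B$, and the approximation guarantee from the abstract lemma is correct, and your atom-rounding observation (the objective is affine in each $f_j$ with atom-constant weights, so a measurable maximizer exists) is sound. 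The gap is exactly where you place it, and your proposed fix does not close it: enumerating $\calB_j$-measurable functions costs $2^{t\cdot 2^{O(1/\error^2)}}$, and the ``signature class'' idea cannot repair this as stated, because the search is constrained to \emph{product} test functions $f_1\otimes\cdots\otimes f_t$. The pointwise value of $h$ on $W[1,t]$ indeed factors through the activation signature, but the achievable vectors $(\gamma_\ell,\gamma'_\ell)_\ell$ with $\gamma_\ell=\prod_j\ip{g_{\ell,j}}{f_j}_{\mu_1}$ are coupled across coordinates by the product structure, so one cannot attach independent states to signature classes; that step, which you flag as the ``principal obstacle,'' is precisely the missing content of the lemma.

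The paper's resolution is to enumerate \emph{numerical profiles} rather than functions: the objective depends on $(f_1,\dots,f_t)$ only through the inner products $\ip{g_{\ell,j}}{f_j}_{\mu_1}$ and $\ip{g'_{\ell,j}}{f_j}_{\mu_1}$, which are in turn determined by the overlap fractions of $f_j$ with the atoms of $\Sigma_j$, the $\sigma$-algebra generated by the coordinate-$j$ sets of \emph{both} $h_t$ and the current approximator $h^{(\ell)}$ (so no measurability invariant is needed). These fractions are enumerated up to accuracy $\eta=O(\error^2/t)$ per atom, giving $\prod_{j=1}^t(1/\eta)^{\abs{\Sigma_j}}\le(2t)^{2^{O(1/\error^2)}}$ configurations; for each configuration one forms the matrix $\matr A=\sum_\ell\bigl(c_\ell\gamma_\ell\, g_{\ell,t+1}+c'_\ell\gamma'_\ell\, g'_{\ell,t+1}\otimes g'_{\ell,t+2}\bigr)$ (after truncating and rounding entries to keep bit-complexity $O_{k,\error}(\log n)$) and runs the Alon--Naor oracle once. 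Two quantitative points you omit are essential here: the discretization error $4t\eta\sum_\ell(\abs{c_\ell}+\abs{c'_\ell})$ must be at most $\error/4$, which works only because the $\ell_1$-norm of coefficients is $O(1/\error)$ (this is why $\eta$ must scale like $\error^2/t$ and why exact enumeration of realizable fractions, which would be $n^{\Omega(1)}$ many, is avoided); and the constant multiplicative loss $\AN\ge 2^{-4}$ of Alon--Naor, combined with these additive losses, is what fixes $\error'=\error/2^8$ and hence the coefficient bound $2^8B/\error$ in the statement. Without the discretized-profile enumeration and this error accounting, the claimed configuration count and the final correlation guarantee are not established.
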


\begin{proof}[Proof of~\cref{lemma:eff_weak_reg_induction}]
  We will prove the lemma with the following simple equivalent conclusion
  \begin{align*}
   \ip{g - \left(\frac{d}{n}\right)^t h_t}{f}_{\nu_t} ~\le~ 2\cdot \left(\frac{d}{n}\right)^t \cdot t \cdot \error \quad\qquad \iff \quad\qquad \ip{ \left(\frac{n}{d}\right)^t g -  h_t}{f}_{\nu_t} ~\le~  2\cdot t \cdot \error,
  \end{align*}
  which we will prove holds for every $f \in \calF_t$.
  The base case $t=0$ follows immediately by setting $h_0 = g$.
  Let $t \in \set{0,\ldots,k-2}$. Since $h_t \in \calH(O(1/\error^2),2^8(1+1/k)^t/\error,\calF_t)$,
  invoking~\cref{lemma:algo_weak_reg_step} with accuracy parameter $\error$ and input function
  $h_t$, we obtain $h_{t+1} \in \calH(O(1/\error^2),2^8(1+1/k)^{t+1}/\error,\calF_{t+1})$ satisfying
  \begin{equation}\label{eq:algo_step_guarantee}
    \max_{f \in \calF_{t+1}} ~\ip{h_t - h_{t+1}}{f}_{\nu_t} ~\le~ \error.
  \end{equation}
  Let $f \in \calF_{t+1}$.  We will show that $h_{t+1}$ satisfies the conclusion of the lemma.
  Expanding we have
  \begin{align*}
    \ip{ \left(\frac{n}{d}\right)^{t+1} g -  h_{t+1}}{f}_{\nu_{t+1}} ~=~ & \underbrace{\ip{ \left(\frac{n}{d}\right)^t g -  h_t}{f}_{\nu_t}}_{(i)}
    ~+~ \left( \frac{n}{d}\right)^t \cdot \underbrace{\left(\frac{n}{d} \ip{g}{f}_{\nu_{t+1}} ~-~ \ip{g}{f}_{\nu_t} \right)}_{(ii)}\\
    &+~ \underbrace{\ip{h_t - h_{t+1}}{f}_{\nu_t}}_{(iii)}
    ~+~ \underbrace{\ip{h_{t+1}}{f}_{\nu_t} ~-~ \ip{h_{t+1}}{f}_{\nu_{t+1}}}_{(iv)}.
  \end{align*}
  We will bound each of the terms in RHS above.
  
  \noindent \textbf{Term (i):} Suppose $f = f_{1} \otimes \cdots \otimes f_{t+1} \otimes f_{t+2} \in \calF_{t+1}$.
  Let $f' = f_{1} \otimes \cdots \otimes f_t \otimes f'_{t+1}$, where $f_{t+1}' = (f_{t+1} \otimes f_{t+2})\vert_{W[t+2,k]}$,
  so that $f' \in \calF_t$. Using the induction hypothesis, we have
  $$
  \ip{ \left(\frac{n}{d}\right)^t g -  h_t}{f}_{\nu_t} ~=~ \ip{ \left(\frac{n}{d}\right)^t g -  h_t}{f'}_{\nu_t} ~\le~ 2 \cdot t \cdot \error.
  $$

  \noindent \textbf{Term (ii):} Since $g \in \calF_0$, it is supported on $\tuples{k}$ and so we have
  \begin{align*}
    \ip{g}{f}_{\nu_t} ~=~& \frac{1}{\abs{W[1]}^t \abs{W[t+1,k]}} \sum_{\ess \in \tuples{k}} g(\ess) \cdot f(\ess)\\
                    ~=~& \frac{n}{d} \cdot \frac{1}{\abs{W[1]}^{t+1} \abs{W[t+2,k]}} \sum_{\ess \in \tuples{k}} g(\ess) \cdot f(\ess) ~=~ \frac{n}{d} \cdot \ip{g}{f}_{\nu_{t+1}}.
  \end{align*}
  where the second equality follows from $\abs{W[t+1,k]} = d\cdot \abs{W[t+2,k]}$
  by the $d$-regular assumption.

  \noindent \textbf{Term (iii):} By~\cref{eq:algo_step_guarantee}, we have $\ip{h_t - h_{t+1}}{f}_{\nu_t} ~\le~ \error$.

  \noindent \textbf{Term (iv):} For notional convenience, set $R_1= 2^8(1+1/k)^{t+1}/\error$.
  Since $h_{t+1} \in \calH(\infty,R_1,\calF_{t+1})$ and the splittability parameter $\tau$ satisfies $\tau \le \error^2/(k\cdot 2^{18})$,
  from \cref{claim:indistinguishability_under_measure_extension} we obtain
  $$
  \ip{h_{t+1}}{f}_{\nu_t} - \ip{h_{t+1}}{f}_{\nu_{t+1}} ~\le~ \tau \cdot R_1 ~\le~ \error.
  $$
  Putting everything together yields
  \begin{align*}
    \ip{ \left(\frac{n}{d}\right)^{t+1} g -  h_t}{f}_{\nu_{t+1}} ~\le~ & \underbrace{2 \cdot t \cdot \error}_{(i)}
    + \left( \frac{n}{d}\right)^t \cdot \underbrace{0}_{(ii)}
    + \underbrace{\error}_{(iii)}
    + \underbrace{\error}_{(iv)}
    ~\le~ 2 \cdot (t+1) \cdot \error,
  \end{align*}
  concluding the claimed inequality.

  Now we use the bound $\norm{h_{t+1}}_{\nu_t}^2 \le
  \norm{h_{t}}_{\nu_t}^2$ from~\cref{lemma:algo_weak_reg_step} together with
  the splittability assumption $\tau \le \error^2/(k\cdot 2^{18})$
  to bound the norm $\norm{h_{t+1}}_{\nu_{t+1}}^2$ under the new
  measure $\nu_{t+1}$. Under these assumptions and
  using~\cref{claim:norm_indistinguishability_under_measure_extension}
  we get
  \begin{align*}
    \abs{ \norm{h_{t+1}}_{\nu_{t+1}}^2 - \norm{h_{t+1}}_{\nu_t}^2 } ~\le~ \tau \cdot R_1^2 ~\le~ & \frac{\error^2}{k \cdot 2^{18}} \cdot \frac{2^{16}(1+1/k)^{2(t+1)}}{\error^2}\\
                                                                                            ~\le~ & \frac{(1+1/k)^{t}}{k}.
  \end{align*}
  where we used the bounds on $\tau$, $R_1$ and $(1+1/k)^{(t+2)} \le 4$ for $0 \le t \le k-2$.
  From the previous inequality and the induction hypothesis $\norm{h_{t}}_{\nu_t}^2 \le (1+1/k)^{t}$,
  we finally get $\norm{h_{t+1}}_{\nu_{t+1}}^2 \le (1+1/k)^{t+1}$ as desired.
\end{proof}

We now show a near-linear time weak regularity decomposition for
special functions of the form $h_t \in
\calH(O(1/\error^2),O(1/\error),\calF_t)$ that admit a tensor product
structure. The goal is to design a correlation oracle that exploits
the special tensor product structure of the function $(h_t - h_{t+1}^{(\ell)})$,
where $h_{t+1}^{(\ell)}$ is the $\ell$th approximator of $h_t$ in the
abstract weak regularity algorithm (\cf~\cref{algo:abstract_weak_reg}).

\AlgoWeakRegStep*

Our correlation oracle for higher-order tensors will make calls to a
correlation oracle for matrices~\cref{theo:alon_naor_oracle} (\ie
$2$-tensors) stated below. This matrix oracle is presented
in~\cref{subsec:matrix_corr_oracle} and it follows from a simple
combination of a matrix cut norm approximation algorithm by Alon and
Naor~\cite{AN04} with known fast SDP solvers for sparse matrices such
as those by Lee and Padmanabhan~\cite{LeeP20} and Arora and
Kale~\cite{AK07}.

\begin{restatable*}{theorem}{TheoAlonNaorOracle}[Alon--Naor Correlation Oracle]\label{theo:alon_naor_oracle}
  Let $\calF$ be either $\CUT^{\otimes 2}$ or $\SCUT^{\otimes 2}$ and
  $\mu$ be the uniform measure supported on at most $m$ elements of $[n'] \times [n']$.
  There exists an algorithmic $(\delta, \AN \cdot \delta)$-correlation oracle $\calO_{\mu,B}$
  running in time $\calT_{\calO_{\mu,B}} = \tilde{O}\left(\poly(B/\delta) \cdot (m+n')\right)$,
  where $\AN \ge 1/2^4$ is an approximation ratio constant. 
\end{restatable*}

\begin{proof}
  We will apply the abstract weak regularity lemma, \cf \cref{lemma:abstract_weak_reg},
  with $\calF = \calF_{t+1}$, $\error$, $\error' = \error/2^8$ and $\mu = \nu_t$.
  This will result in a function from $\calH(O(B/\error^2), 2^8 B/\error,\calF_{t+1})$.
  
  \noindent \textbf{Correlation oracle task:}
  To make this application take near-linear time, we need to specify a correlation oracle $\calO_{\nu_t}=\calO_{\nu_t,O(1)}$
  and now we take advantage of the special tensor structure in our setting. We want an oracle that given
  \begin{align*}
    h_t =& \sum_{\ell=1}^p c_{\ell} \cdot g_{\ell},\quad g_{\ell} \in \calF_t,\qquad g_{\ell} = g_{\ell,1} \otimes \cdots \otimes g_{\ell,t} \otimes \underbrace{g_{\ell,t+1}}_{\in \R^{W[t+1,k]}} \text{ and}\\
    h_{t+1} =& \sum_{\ell=1}^p c_{\ell}' \cdot g_{\ell}',\quad  g_{\ell}' \in \calF_{t+1},\quad g_{\ell}' = g_{\ell,1}' \otimes \cdots \otimes g_{\ell,t}' \otimes \underbrace{g_{\ell,t+1}'}_{\in \R^{W[1]}} \otimes \underbrace{g_{\ell,t+2}'}_{\in \R^{W[t+2,k]}},
  \end{align*}  
  if there exists
  $$
  f = f_1 \otimes \cdots \otimes f_t \otimes \underbrace{f_{t+1}}_{\in \R^{W[1]}} \otimes \underbrace{f_{t+2}}_{\in \R^{W[t+2,k]}} \in \calF_{t+1}
  $$
  satisfying 
  $$
  \ip{h_t - h_{t+1}}{f}_{\nu_t} ~\ge~ \error,
  $$
  for some $f \in \calF_{t+1}$, finds $f' \in \calF_{t+1}$ in near-linear time such that
  $$
  \ip{h_t - h_{t+1}}{f'}_{\nu_t} ~\ge~ \error' ~=~ \frac{\error}{2^8}.
  $$
  Here, $h_{t+1}$ is the current approximator of $h_t$ 
  in the abstract weak regularity algorithm and, by~\cref{lemma:abstract_weak_reg},
  $h_{t+1}  \in \calH(O(1/\error^2),2^8(1+1/k)^{t+1}/\error,\calF_{t+1})$.
  Expanding $\ip{h_t - h_{t+1}}{f}_{\nu_t}$ we get
  \begin{align*}
    \ip{h_t - h_{t+1}}{f}_{\nu_t} ~=~ &\sum_{\ell=1}^p c_{\ell} \underbrace{\prod_{j=1}^{t} \ip{g_{\ell,j}}{f_j}_{\mu_1}}_{\gamma_{\ell}} \cdot \ip{g_{\ell,t+1}}{f_{t+1} \otimes f_{t+2}}_{\mu_{[t+1,k]}} ~-~\\
&    \sum_{\ell=1}^p c_{\ell}' \underbrace{ \prod_{j=1}^{t} \ip{g_{\ell,j}'}{f_j}_{\mu_1}}_{\gamma'_{\ell}} \cdot \ip{g_{\ell,t+1}' \otimes g_{\ell,t+2}'}{f_{t+1} \otimes f_{t+2}}_{\mu_{[t+1,k]}},
  \end{align*}
  where we define $\gamma_{\ell} \coloneqq \prod_{j=1}^t \ip{g_{\ell,j}}{f_j}_{\mu_1}$
  and $\gamma'_{\ell} \coloneqq \prod_{j=1}^t \ip{g_{\ell,j}'}{f_j}_{\mu_1}$ for $\ell \in [p]$, $j \in [t]$.
  Suppose $g_{\ell,j} = f_{S_{\ell,j}}$ and $g_{\ell,j}' = f_{S_{\ell,j}'}$ for $\ell \in [p]$,
  $j \in [t]$, where $f_{S_{\ell,j}},f_{S_{\ell,j}'}$ are either $\one_{S_{\ell,j}},\one_{S_{\ell,j}'}$
  or $\sset_{S_{\ell,j}},\sset_{S_{\ell,j}'}$ depending on $\calF_t$ being $\calF_t^{0/1}$ or $\calF_t^{\pm 1}$,
  respectively.
  
  \noindent \textbf{Sigma-algebra brute force:} Now for each $j \in [t]$, we form the $\sigma$-algebra $\Sigma_j$ generated
  by $\set{S_{\ell,j},S_{\ell,j}'}_{\ell \in [p]}$ which can be done in 
  $2^{p} \cdot \widetilde{O}(\abs{W[1]})$ time by~\cref{remark:factor_sigma_algebra_comp}
  and yields
  at most $2^{p}$ atoms. Hence, the generation of all these $\sigma$-algebras takes
  at most $t \cdot 2^{p} \cdot \widetilde{O}(\abs{W[1]})$ time. Suppose $f_j = f_{S_j}$ for
  some $S_j \subseteq W[1]$. Let $\eta > 0$ be an approximation parameter to be specified
  shortly.  For each atom $\sigma_{j'} \in \Sigma_j$, we enumerate over all possible
  values for the ratio $\abs{\sigma_{j'} \cap S_j}/\abs{\sigma_{j'}}$ up to
  accuracy $\eta$. More precisely, if $\abs{\sigma_{j'}} \ge 1/\eta$,
  we consider the values
  $$
  0,1\cdot \eta,2\cdot \eta,\ldots,\lfloor 1/\eta \rfloor \cdot \eta,
  $$
  and we consider $0,1/\abs{\sigma_{j'}},2/\abs{\sigma_{j'}},\ldots,\abs{\sigma_{j'}}/\abs{\sigma_{j'}}$
  otherwise. Let $\abs{\Sigma_j}$ denote the number of atoms in $\Sigma_j$.
  This enumeration results in $\prod_{j=1}^t (1/\eta)^{\abs{\Sigma_j}}$ configurations which
  allows us to approximate any realizable values for $\ip{g_{\ell,j}}{f_j}_{\mu_1}$ 
  within additive error at most $4\cdot \eta$ since either
  $$
  \ip{g_{\ell,j}}{f_j}_{\mu_1} ~=~ \E_{\mu_1}\left[ \one_{S_{\ell,j}} \cdot  \one_{S_j} \right] = \frac{\abs{S_{\ell,j} \cap S_j}}{\abs{W[1]}} = \frac{1}{\abs{W[1]}} \sum_{\sigma_{j'} \subseteq S_{\ell,j}} \abs{\sigma_{j'} \cap S_j} \quad \text{or}
  $$  
  \begin{align*}
    \ip{g_{\ell,j}}{f_j}_{\mu_1} ~=~ \E_{\mu_1}\left[ \chi_{S_{\ell,j}} \cdot  \chi_{S_j} \right] &~=~ \frac{\abs{W[1]}  - 2(\abs{S_{\ell,j}} + \abs{S_j} - 2\abs{S_{\ell,j} \cap S_j})}{\abs{W[1]}}\\
                                                                                        &~=~ \frac{\abs{W[1]}  - 2(\abs{S_{\ell,j}} + \sum_{\sigma_{j'}} \abs{\sigma_{j'} \cap S_j} - 2 \sum_{\sigma_{j'} \subseteq S_{\ell,j}} \abs{\sigma_{j'} \cap S_j})}{\abs{W[1]}},
  \end{align*}
  according to $\calF_{t+1}$.
  We can approximate $\ip{g_{\ell,j}'}{f_j}_{\mu_1}$ similarly.
  In turn, we can approximate each of the realizable values in $\set{\gamma_{\ell},\gamma'_{\ell}}_{\ell \in [p]}$ within
  additive error $4\cdot t \cdot \eta$ by some configuration of fractional value assignment to the atoms of each $\sigma$-algebra.  
  
  \noindent \textbf{Invoking the matrix correlation oracle:}
  Let $\matr A \coloneqq \sum_{\ell} \left(c_{\ell} \cdot \gamma_{\ell} \cdot g_{\ell,t+1} + c_{\ell}' \cdot \gamma'_{\ell} \cdot g_{\ell,t+1}' \otimes g_{\ell,t+2}'\right)$.
  We conveniently view $\matr A$ as a \emph{sparse} matrix of dimension $\abs{W[t+1]}\times \abs{W[t+2,k]}$ with at most $\abs{W[t+1,k]}$ non-zeros entries.
  Define  $\varphi_{\matr A}(f_{t+1},f_{t+2}) \coloneqq \ip{\matr A}{f_{t+1} \otimes f_{t+2}}_{\mu_{[t+1,k]}}$. Define
  \begin{equation}\label{eq:cut_norm_like}
    \OPT(\matr A) \coloneqq \max_{f_{t+1},f_{t+2}}\quad \varphi_{\matr A}(f_{t+1},f_{t+2}),
  \end{equation}
  where $f_{t+1},f_{t+2}$ range over valid $f_{S_{t+1}},f_{S_{t+2}}$ (again according to kind of $\calF_{t+1}$ we have).
  In the computation of $\OPT(\matr A)$,
  we have incurred so far an additive error of at most
  $$
  4 \cdot t \cdot \eta \cdot \sum_{\ell} (\abs{c_{\ell}} + \abs{c_{\ell}'}).
  $$
  Let $\widetilde{\matr A}$ be obtained from $\matr A$ by zeroing out all entries of absolute value smaller than $\error/8$.
  Note that $\OPT(\widetilde{\matr A}) \ge \OPT(\matr A) - \error/8$ and the absolute value of the entries of $\widetilde{\matr A}$ lie
  $[\error/8, O(1/\error)]$. For each entry of $\matr A$, we compute a rational approximation $\pm P/Q$ where $Q = \Theta(1/\error)$
  and $P \in [1,O(1/\error)]$ obtaining $\widetilde{\matr A}'$ such that
  $$
  \OPT(\widetilde{\matr A}') ~\ge~ \OPT(\widetilde{\matr A}) - \error/8 ~\ge~ \OPT(\widetilde{\matr A}) ~\ge~ \OPT(\matr A) - \error/4.
  $$
  Using~\cref{theo:alon_naor_oracle} with accuracy parameter $\error/4$ and input matrix $\widetilde{\matr A}'$, we obtain
  in $\calT_{\matr A} \coloneqq \widetilde{O}(\poly(1/\error) \cdot \abs{W[t+1,k]})$ time, with an extra additive error of $\error/4$
  and a multiplicative guarantee of $\AN$, a $2$-tensor $\tilde{f}_{t+1} \otimes \tilde{f}_{t+2}$ satisfying 
  $$
  \varphi_{\widetilde{\matr A}}(\tilde{f}_{t+1},\tilde{f}_{t+2}) ~\ge~ \AN \cdot \left(\OPT(\matr A) ~-~ 2 \cdot \frac{\error}{4}  ~-~ 4 \cdot t \cdot \eta \cdot \sum_{\ell} (\abs{c_{\ell}} + \abs{c_{\ell}'})\right).
  $$
  Since $h_t \in \calH(O(1/\error^2),2^8\cdot (1+1/k)^t/\error,\calF_t)$ and $h_{t+1} \in \calH(O(1/\error^2),2^8\cdot (1+1/k)^{t+1}/\error,\calF_{t+1})$,
  we have $\sum_{\ell} (\abs{c_{\ell}} + \abs{c_{\ell}'}) \le 2^{10}/\error$ and
  $p = O(1/\error^2)$.
  By choosing $\eta \le O(\error^2/t)$ appropriately, we can bound 
  $$
  4 \cdot t \cdot \eta \cdot \sum_{\ell} (\abs{c_{\ell}} + \abs{c_{\ell}'}) ~\le~ 4 \cdot t \cdot \frac{2^{10}}{\error} \cdot \eta ~\le~ \frac{\error}{4}.
  $$
  Hence, $\varphi_{\widetilde{\matr A}}(\tilde{f}_{t+1},\tilde{f}_{t+2}) \ge \AN \cdot \error/4$ since we are under the assumption that
  $\OPT(\matr A) \ge \error$.

  \noindent \textbf{Running Time:} First, observe that with our choices of parameters the total number of configurations $m_{\textup{config}}$
  is at most
  $$
  m_{\textup{config}} ~\le~ \prod_{j=1}^t (1/\eta)^{\abs{\Sigma_j}} ~\le~ \left( \frac{t}{\error^2} \right)^{2^{p}} ~\le~ (2t)^{2^{O(1/\error^2)}},
  $$
  so that the correlation oracle $\calO_{\nu_t}$ takes time at most
  $$
  m_{\textup{config}} \cdot \calT_{\matr A} ~\le~ (2t)^{2^{O(1/\error^2)}} \cdot \widetilde{O}(\poly(1/\error) \cdot \abs{W[t+1,k]}) ~=~ \widetilde{O}((2t)^{2^{O(1/\error^2)}} \cdot \abs{W[t+1,k]}).
  $$
  Using the running time of the oracle $\calO_{\nu_t}$, the total running time of the weak regularity
  decomposition follows from~\cref{lemma:abstract_weak_reg} which concludes the proof.
\end{proof}

\subsection{Near-linear Time Matrix Correlation Oracles}\label{subsec:matrix_corr_oracle}

The main result of this section,~\cref{theo:alon_naor_oracle} below,
is a near-linear time correlation oracle for $\CUT^{\otimes 2}$ and
$\SCUT^{\otimes 2}$. We combine the constant factor approximation
algorithms of Alon--Naor~\cite{AN04} for $\norm{\matr A}_{\infty \to
  1}$ and $\norm{\matr A}_{\square}$ based on semi-definite
programming (SDP) with the faster SDP solvers for sparse matrices such
as those by Lee and Padmanabhan \cite{LeeP20} and by Arora and
Kale~\cite{AK07}. We point out that these SDP solvers provide additive
approximation guarantees which are sufficient for approximating
several CSPs, \eg MaxCut, but they do not seem to provide non-trivial
multiplicative approximation guarantees for $\norm{\matr A}_{\infty
  \to 1}$ or $\norm{\matr A}_{\square}$ in general. Since in our
applications of computing regularity decomposition we are only
interested in additive approximations, those solvers provide
non-trivial sufficient approximation guarantees for $\norm{\matr
  A}_{\infty \to 1}$ or $\norm{\matr A}_{\square}$ in our settings.

\TheoAlonNaorOracle

\cref{theo:alon_naor_oracle} is a simple consequence of the following
theorem.

\begin{theorem}\label{theorem:cut_like_alg}
  Let $\matr A \in \mathbb{R}^{n \times n}$ be a matrix of integers with at most $m$ non-zero entries.
  Let $\delta \in (0,2^{-5}]$ be an accuracy parameter. Suppose that
  $$
  \OPT \coloneqq \max_{x_i,y_i \in \set{\pm 1}} \sum_{i,j=1}^n \matr A_{i,j} x_i y_j ~\ge~ \delta \cdot m.
  $$
  Then, with high probability,\ie $o_n(1)$, we we can find in $\tilde{O}\left(\poly(\norm{\matr A}_{\infty}/\delta) \cdot (m+n)\right)$ time
  vectors $\tilde{x},\tilde{y} \in \set{\pm 1}^n$ such that
  $$
  \sum_{i,j=1}^n \matr A_{i,j} \tilde{x}_i \tilde{y}_j ~\ge~ \frac{1}{4} \cdot \OPT,
  $$
  and find sets $\tilde{S},\tilde{T} \subseteq [n]$ such that
  $$
  \abs{\sum_{i \in \tilde{S}, j\in \tilde{T}} \matr A_{i,j}} ~\ge~ \frac{1}{2^4} \cdot \norm{\matr A}_{\square},
  $$
  where $\norm{\matr A}_{\square}$ is the cut norm of $\matr A$.
\end{theorem}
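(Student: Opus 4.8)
The plan is to combine the SDP-based constant-factor approximation of Alon and Naor~\cite{AN04} for the $\infty\to 1$ norm $\OPT = \max_{x,y\in\{\pm1\}^n}\sum_{i,j}\matr A_{i,j}x_iy_j$ with a near-linear time \emph{approximate} SDP solver for sparse instances, such as the multiplicative-weights methods of Lee and Padmanabhan~\cite{LeeP20} or Arora and Kale~\cite{AK07}. Recall the natural vector relaxation
$$
\SDP ~=~ \max\ \Bigl\{ \textstyle\sum_{i,j} \matr A_{i,j}\ip{u_i}{v_j} ~:~ \norm{u_i}=\norm{v_j}=1\ \forall i,j \Bigr\},
$$
which satisfies $\OPT \le \SDP \le K_G\cdot\OPT$ (Grothendieck's inequality, $K_G<2$), and whose optimal or near-optimal Gram vectors can be rounded to sign vectors $\tilde x,\tilde y\in\set{\pm1}^n$ achieving $\sum_{i,j}\matr A_{i,j}\tilde x_i\tilde y_j \ge \rho\cdot\SDP$ in expectation, for an absolute constant $\rho$ bounded away from $0$, via the Alon--Naor rounding. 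Since $\rho\ge 1/K_G>1/2$, a single rounding beats $\OPT/4$ in expectation with constant slack, and since the rounded objective is an integer in $[-m\norm{\matr A}_\infty, m\norm{\matr A}_\infty]$ bounded above by $\OPT$, a Markov argument gives constant success probability for achieving $\ge\OPT/4$; taking the best of $O(\log n)$ independent repetitions then boosts this to probability $1-o_n(1)$.

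The one nontrivial point is the running time. First I would invoke the sparse SDP solver on the Grothendieck relaxation of $\matr A$ to obtain, in time $\tilde O(\poly(\norm{\matr A}_\infty/\delta)\cdot(m+n))$, Gram vectors achieving value at least $\SDP-\gamma$ for an additive slack $\gamma$ which these solvers allow us to drive below $\delta\cdot m/100$ at a cost of a $\poly(\norm{\matr A}_\infty/\delta)$ overhead. Because the hypothesis gives $\OPT\ge\delta\cdot m$, hence $\SDP\ge\delta\cdot m$, this additive slack is only a $1/100$ multiplicative loss, so the rounded solution still satisfies $\sum_{i,j}\matr A_{i,j}\tilde x_i\tilde y_j \ge \rho(1-1/100)\SDP \ge \OPT/4$. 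Bit-complexity stays controlled throughout: $\matr A$ has integer entries of absolute value at most $\norm{\matr A}_\infty$, the solver's width is governed by $\norm{\matr A}_\infty$, and the Gaussian-projection rounding needs only $O(\log n)$-bit randomness per coordinate, so all arithmetic is on numbers of bit-length $\poly(\log n,\log\norm{\matr A}_\infty)$; the $O(\log n)$ repetitions and taking the best cost only an $\tilde O(m+n)$ factor.

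Finally I would translate the sign-vector guarantee into the cut-norm guarantee. Writing $\tilde x_i=\one_{i\in S_+}-\one_{i\in S_-}$ and $\tilde y_j=\one_{j\in T_+}-\one_{j\in T_-}$ and expanding $\sum_{i,j}\matr A_{i,j}\tilde x_i\tilde y_j$ into the four block sums over $S_{\pm}\times T_{\pm}$, one of these four blocks $\tilde S\times\tilde T$ must satisfy $\abs{\sum_{i\in\tilde S, j\in\tilde T}\matr A_{i,j}}\ge\tfrac14\sum_{i,j}\matr A_{i,j}\tilde x_i\tilde y_j$, and it is found by inspecting the four candidates. Combining with the standard comparison $\norm{\matr A}_\square \le \norm{\matr A}_{\infty\to1}=\OPT$, we get $\abs{\sum_{i\in\tilde S, j\in\tilde T}\matr A_{i,j}}\ge\OPT/16\ge\norm{\matr A}_\square/2^4$ (and note $\norm{\matr A}_\square\ge\OPT/4\ge\delta m/4$ via $\norm{\matr A}_{\infty\to1}\le4\norm{\matr A}_\square$, so the additive slack is a controlled multiplicative loss here too). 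The main obstacle I anticipate is confirming that the off-the-shelf sparse-SDP solvers apply to the Grothendieck relaxation with the claimed near-linear dependence on $m+n$ and the $\poly(\norm{\matr A}_\infty/\delta)$ overhead --- in particular that they return a usable Gram (vector) solution and not merely the optimum value, and that their purely \emph{additive} error guarantees suffice, which they do precisely because the hypothesis $\OPT\ge\delta m$ converts an $O(\delta m)$ additive error into a small multiplicative one.
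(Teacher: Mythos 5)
Your plan follows the same route as the paper: the Alon--Naor SDP relaxation of $\norm{\matr A}_{\infty\to 1}$ (the paper phrases it via the bipartite doubling $\matr C = \tfrac12\bigl(\begin{smallmatrix}0 & \matr A\\ \matr A^{\dag} & 0\end{smallmatrix}\bigr)$ with constraints $X_{ii}\le 1$, which is what the sparse solvers accept), solved additively to error $\gamma\sum_{i,j}\abs{\matr C_{i,j}}$ with $\gamma = \poly(\delta/\norm{\matr A}_\infty)$ so that the hypothesis $\OPT\ge\delta m$ converts the additive loss into a $(1-\delta)$ multiplicative one, followed by Gaussian-sign rounding and the standard four-block decomposition for the cut norm. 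Your quantitative accounting is right (the paper in fact only uses the simple rounding with ratio $\tfrac{2}{\pi}(2-\tfrac{\pi}{2})\approx 0.27$, not the Krivine-type $1/K_G$ guarantee you invoke, but either clears $1/4$ after the $(1-\delta)$ loss), and the reverse-Markov boosting and the $\OPT/16\ge\norm{\matr A}_\square/2^4$ step are exactly as in the paper.

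The genuine gap is the step you yourself flag and then leave open: that the sparse SDP solver hands you \emph{usable Gram vectors} in near-linear time. This is not off-the-shelf, and it is where the paper does its real work. The Lee--Padmanabhan solver returns the solution only implicitly as $\widetilde{\matr X} = \matr S\cdot\exp(\matr Y)\cdot\matr S$ with $\matr Y$ sparse; the Gram vectors are the rows of $\matr S\exp(\matr Y/2)$, and computing $\exp(\matr Y/2)$ explicitly is far too slow. The paper's fix is to apply Johnson--Lindenstrauss with $d=O(\log n/\gamma^2)$ random Gaussian vectors $u$ and to approximate each matrix-vector product $\exp(\matr Y/2)u$ by a truncated Taylor series, which is only valid because the solver additionally guarantees $\norm{\matr Y}_{\textup{op}}\le O(\log(n)/\gamma)$; this costs another additive $\gamma$ in the objective, which must be absorbed into the error budget. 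Separately, the solver only enforces $X_{ii}\le 1$ while the Alon--Naor rounding identity requires vectors on the unit sphere, so one must lift $u_i\mapsto u_i\oplus\sqrt{1-\norm{u_i}^2}\,e_i$ before rounding (harmless for the bipartite $\matr C$ since it has zero diagonal blocks). Neither repair is difficult, but without them the running-time claim — the actual content of the theorem beyond Alon--Naor — is unproved.
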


The proof of the preceding theorem will rely on the following result
which encapsulates the known sparse SDP solvers~\cite{AK07,LeeP20}.
For concreteness, we will rely on~\cite{LeeP20} although the guarantee
from~\cite{AK07} also suffice for us.

\begin{restatable}{lemma}{SparseSDPSolverWrapper}[Sparse SDP Solver Wrapper based on~\cite{LeeP20} and partially on \cite{AK07}]\label{lemma:sparse_sdp_solver_gram_matrix} 
  Let $\matr C \in \mathbb{R}^{n\times n}$ be a matrix with at most $m$ non-zero entries.
  For every accuracy $\gamma > 0$, with high probability we can find in time $\widetilde{O}((m+n)/\poly(\gamma))$
  vectors $u_1,\ldots,u_n \in \mathbb{R}^n$ in the unit ball (\ie $\norm{u_i} \le 1$) such that that the
  matrix $\widetilde{\matr X}_{i,j} \coloneqq \ip{u_i}{u_j}$ satisfies
  $$
  \Tr\left(\matr C \cdot \widetilde{ \matr X}\right) ~\ge~ \max_{\matr X \succeq 0, X_{i,i} \le 1} \Tr\left(\matr C \cdot \matr X\right) - \gamma \sum_{i,j} \abs{\matr C_{i,j}}.
  $$
\end{restatable}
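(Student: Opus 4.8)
The statement is a repackaging of known near-linear-time solvers for \emph{MaxCut-type} semidefinite programs, so the plan is to reduce the SDP $\max_{\matr X \succeq 0,\ \matr X_{i,i} \le 1} \Tr(\matr C \matr X)$ to the exact form handled by the matrix multiplicative weights method of Arora and Kale~\cite{AK07}, as sped up to near-linear time for sparse instances by Lee and Padmanabhan~\cite{LeeP20}, and then to post-process the output so that it becomes an \emph{exactly} feasible Gram matrix of vectors in $\mathbb{R}^n$.

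First I would normalize. Replacing $\matr C$ by $(\matr C + \matr C^{\trans})/2$ changes neither the feasible region nor $\Tr(\matr C \matr X)$ for symmetric $\matr X$, so we may assume $\matr C$ is symmetric; and dividing $\matr C$ by $\sum_{i,j}\abs{\matr C_{i,j}}$ scales both $\OPT \coloneqq \max_{\matr X \succeq 0,\ \matr X_{i,i}\le 1}\Tr(\matr C\matr X)$ and the target additive error by the same factor. Hence it suffices to find, under the normalization $\sum_{i,j}\abs{\matr C_{i,j}} = 1$, a Gram matrix $\widetilde{\matr X}$ of unit-ball vectors with $\Tr(\matr C \widetilde{\matr X}) \ge \OPT - \gamma$. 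In this normalization $\norm{\matr C}_{\mathrm{op}} \le \sum_{i,j}\abs{\matr C_{i,j}} = 1$, so the width parameter governing the iteration count of the multiplicative-weights scheme for this SDP is $O(1)$, which is exactly what forces the running time to depend only on the sparsity and $\poly(1/\gamma)$.

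Next I would invoke the sparse solver. Running the Arora--Kale primal--dual scheme in the Lee--Padmanabhan near-linear-time implementation on this instance for $\poly(1/\gamma)$ iterations --- each iteration performing a constant number of sparse matrix--vector products together with the random dimension reduction used to represent the current primal iterate --- yields, with probability $1 - o_n(1)$, vectors $\widehat u_1,\ldots,\widehat u_n \in \mathbb{R}^{D}$ with $D = O(\log n / \gamma^2)$ such that $\widehat{\matr X}_{i,j} \coloneqq \ip{\widehat u_i}{\widehat u_j}$ satisfies $\widehat{\matr X}_{i,i} \le 1 + \gamma$ for every $i$ and $\Tr(\matr C \widehat{\matr X}) \ge \OPT - \gamma$; the only randomness, and hence the only source of failure, is in those internal random projections. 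Setting $u_i \coloneqq \widehat u_i / \sqrt{1+\gamma}$ and padding with zero coordinates so that $u_i \in \mathbb{R}^n$, we get $\norm{u_i} \le 1$ exactly, while $\Tr(\matr C \widetilde{\matr X}) \ge (1+\gamma)^{-1}(\OPT - \gamma) \ge \OPT - O(\gamma)$ because $\OPT \le \sum_{i,j}\abs{\matr C_{i,j}} = 1$; absorbing the constant by running the solver with accuracy $\gamma / C_0$ for a suitable constant $C_0$ completes this step. All numbers produced are rationals of bit-complexity $\poly(\log n, \log\norm{\matr C}_\infty, 1/\gamma)$, so the arithmetic does not spoil the near-linear running time.

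The main obstacle is essentially bookkeeping: matching the precise guarantee of the cited solvers to the exact normalization in the statement. The points to check are (i) that the solver is used in the form with constraints $\matr X_{i,i} \le 1$ rather than $\matr X_{i,i} = 1$ --- passing between the two is a routine homogenization, and is trivial in all instances we apply this to, where $\matr C$ has zero diagonal; (ii) that the additive error of the sparse MaxCut solvers, usually stated relative to the total weight, is indeed $O(\sum_{i,j}\abs{\matr C_{i,j}})$, which holds here since each constraint matrix is a coordinate projector and the normalized objective has entrywise $\ell_1$-norm $1$; and (iii) that one uses the variant of the solver that explicitly maintains the low-dimensional \emph{vector} representation of the primal iterate (and not merely certifies the optimal value), which both~\cite{AK07} and~\cite{LeeP20} provide. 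None of these is conceptually deep, but each must be stated carefully for the wrapper to be airtight.
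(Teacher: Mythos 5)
Your overall plan — invoke the sparse Arora--Kale/Lee--Padmanabhan solver as a black box and post-process its output into exactly feasible unit-ball vectors — is the same route the paper takes, and the normalization and rescaling steps you describe are fine (the paper handles the slight infeasibility of the diagonal entries the same way, by normalizing the offending vectors). However, there is one substantive gap hiding in your point (iii). You assert that the solvers "explicitly maintain the low-dimensional vector representation of the primal iterate," so that the Gram vectors $\widehat u_i$ come for free. That is not what the cited guarantee provides: the Lee--Padmanabhan theorem (as the paper states it) returns the primal solution only \emph{implicitly}, as $\widetilde{\matr X} = \matr S \cdot \exp(\matr Y) \cdot \matr S$ for a sparse symmetric $\matr Y$ and diagonal $\matr S$, together with the bound $\norm{\matr Y}_{\textup{op}} \le O(\log(n)/\gamma)$. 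Extracting actual Gram vectors from this representation in near-linear time is the real content of the paper's proof: one approximates the matrix--vector products $\exp(\matr Y/2)u$ for $O(\log(n)/\gamma^2)$ random Gaussian vectors $u$ via a truncated Taylor series (Arora--Kale's Lemma 6), where the truncation length — and hence the running time — is controlled precisely by the operator-norm bound on $\matr Y$, and then appeals to Johnson--Lindenstrauss to argue the resulting low-dimensional vectors reproduce all pairwise inner products up to $\gamma$, costing an additional additive $\gamma\sum_{i,j}\abs{\matr C_{i,j}}$ in the objective. Without this step your proof does not actually produce the vectors $u_1,\ldots,u_n$ the lemma promises, so you should either carry out this extraction explicitly or cite a version of the solver that genuinely outputs the vectors (and verify its running time).
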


\begin{proof}[Proof of~\cref{theorem:cut_like_alg}]
  We now implement the strategy mentioned above of combing
  the approximation algorithms of Alon--Naor~\cite{AN04} with the near-linear time sparse SDP solvers.
  We still need to argue that this 
  indeed leads to the claimed approximation guarantees while being computable in
  near-linear time overall. We point out that Alon--Naor actually give a constant
  factor SDP based approximation algorithm for $\norm{\matr A}_{\infty \to 1}$ 
  from which a constant factor approximation algorithm  for
  $\norm{\matr A}_{\square}$ can be readily deduced from
  in near-linear time incurring an extra $1/4$ factor approximation loss\footnote{In Section 5.4 of
    Alon--Naor~\cite{AN04}, there is a transformation avoiding any loss in the approximation ratio.   
    Since constant factors are not asymptotically important for us, we rely on the simpler transformation
    which loses a factor of $1/4$. It simply consists in choosing  $\widetilde{S} \in \set{\set{i\mid \widetilde{x}_i=1} ,\set{i \mid \widetilde{x}_i=-1 }}$
    and $\widetilde{T} \in \set{\set{j\mid \widetilde{y}_j=1} ,\set{j \mid \widetilde{y}_j=-1 }}$ maximizing $\one_{\widetilde{S}}^t \matr A \one_{\widetilde{T}}$, which can
    be done in near-linear time given as input $\widetilde{x},\widetilde{y}$.}.
  Using the matrix $\matr A$, we set
  $$
  \matr C \coloneqq \frac{1}{2} \begin{pmatrix}
                                  0 & \matr A\\
                                  \matr A^{\dag} & 0     
                                \end{pmatrix}.
  $$
  The SDP relaxation of Alon--Naor for $\norm{\matr A}_{\infty \to 1}$ becomes 
  \begin{align*}
                            \max &\quad \Tr(\matr C \cdot \matr X) && \eqqcolon \SDP^* \\
                    \text{s.t.}  &\quad \matr X_{i,i} ~\le~ 1 && \forall i \in [2n]\\
                                 &\quad \matr X ~\succeq~ 0,
  \end{align*}
  except for the constraints $X_{i,i} \le 1$ which they instead take
  to be $X_{i,i} = 1$. This technical difference will play a (small)
  role in the rounding of this SDP since Alon--Naor analysis relies on
  Gram vectors of $\matr X$ being on the unit sphere. Moreover, 
  we will be solving this SDP within only a weak additive approximation
  guarantee\footnote{This may not be sufficient to obtain $X_{i,i} \approx 1$
    by an extremality argument}. Although these technical differences need to
  be handled, this will be simple to do.
  
  Applying the solver of~\cref{lemma:sparse_sdp_solver_gram_matrix} with
  accuracy parameter $\gamma = \delta^2/\norm{\matr A}_{\infty}$ to the above SDP, we obtain in $\widetilde{O}(\poly(\norm{\matr A}_{\infty}/\delta)\cdot (m+n))$
  time vectors $u_1,\ldots,u_{2n} \in \mathbb{R}^{2n}$ in the unit ball so that the
  matrix $\widetilde{\matr X}_{i,j} \coloneqq \ip{u_i}{u_j}$ satisfy
  $$
  \Tr\left(\matr C \cdot \widetilde{ \matr X}\right) ~\ge~ \max_{\matr X \succeq 0, X_{i,i} \le 1} \Tr\left(\matr C \cdot \matr X\right) ~-~ \delta^2 \cdot m.
  $$
  By assumption, we have $\SDP^* \coloneqq \max_{\matr X \succeq 0, X_{i,i} \le 1} \Tr\left(\matr C \cdot \matr X\right) \ge \OPT \ge \delta \cdot m$, in which case
  the above guarantee becomes
  $$
  \Tr\left(\matr C \cdot \widetilde{ \matr X}\right) ~\ge~ (1-\delta) \cdot \SDP^*.
  $$  

  To obtain diagonal entries equal to $1$ in our SDP solution we simply consider the
  new SDP solution $\widetilde{ \matr X}' = \widetilde{ \matr X} + \Lambda$,
  where $\Lambda$ is the diagonal matrix defined as $\Lambda_{i,i} \coloneqq 1-\widetilde{ \matr X}_{i,i}$.
  Gram vectors $u_1',\ldots,u_{2n}'$ of $\widetilde{ \matr X}'$ can be obtained in near-linear
  time from $u_1,\dots,u_{2n}$ and $\Lambda$ by setting
  $$
  u_i' \coloneqq u_i \oplus \sqrt{\Lambda_{i,i}} \cdot e_{i} ~\in~ \mathbb{R}^{2m} \oplus \mathbb{R}^{2m},
  $$
  where $e_i \in \mathbb{R}^{2m}$ has a one at the $i$th position and zero everywhere else.
  Observe that for our particular $\matr C$, we have
  $$
  \Tr\left(\matr C \cdot \widetilde{ \matr X}' \right) ~=~ \Tr\left(\matr C \cdot \widetilde{ \matr X}\right).
  $$

  We now proceed to round $\widetilde{ \matr X}'$ according to the rounding scheme of
  Alon--Naor \cite{AN04} (\cf Section 5.1) which was chosen because it is simple enough to easily afford a
  near-linear time computation while providing a $\approx 0.27 \ge 1/4$ approximation guarantee~\footnote{Alon--Naor \cite{AN04}
    have a more sophisticated rounding scheme that achieves $0.56 \ge 1/2$ approximation. In our applications, it is important
    to have a constant factor approximation, but the distinction between $1/2$ and the weaker $1/4$ factor approximation guarantee
    is not asymptotically relevant.}
  This rounding consists in sampling a Gaussian vector $g \sim N(0,\matr I_d)$ and setting
  $\widetilde{x}_i \coloneqq \sgn{\ip{u_i'}{g}}$ and $\widetilde{y}_{i+n} \coloneqq \sgn{\ip{u_{i+n}'}{g}}$
  for $i \in [n]$. To analyze the approximation guarantee, the following identity is used.
  \begin{fact}[Alon--Naor~\cite{AN04}, \cf Eq.\ 5]\label{fact:alon_naor_identity}
  Let $u,w \in \mathbb{R}^d$ be unit vectors in $\ell_2$-norm.
  Then
  {\footnotesize
  \begin{align*}
    \frac{\pi}{2} \cdot \E \left[\sgn{\ip{u}{g}}\sgn{\ip{w}{g}}\right] = \ip{u}{w} + \E\left[\left(\ip{u}{g} - \sqrt{\frac{\pi}{2}} \sgn{\ip{u}{g}} \right) \left(\ip{w}{g} - \sqrt{\frac{\pi}{2}} \sgn{\ip{w}{g}} \right)\right],
  \end{align*}
  }
  where the expectations are taken with respect to a random Gaussian vector $g \sim N(0,\matr I_d)$.
  \end{fact}
  Using~\cref{fact:alon_naor_identity}, the expected value of the rounding, \ie
  $$
  \E \left[ \sum_{i,j} \matr A_{i,j} \sgn{\ip{u_{i}'}{g}}\sgn{\ip{u_{j+n}'}{g}}\right],
  $$
  becomes
  {\footnotesize
  \begin{align*}
    \frac{2}{\pi} \cdot \sum_{i,j} \matr A_{i,j} \ip{u_i'}{u_{j+n}'} + \frac{2}{\pi} \cdot \sum_{i,j} \matr A_{i,j} \E\left[\left(\ip{u_i'}{g} - \sqrt{\frac{\pi}{2}} \sgn{\ip{u_i'}{g}} \right) \left(\ip{u_{j+n}'}{g} - \sqrt{\frac{\pi}{2}} \sgn{\ip{u_{j+n}'}{g}} \right)\right],
  \end{align*}}
  As in Alon--Naor~\cite{AN04}, we will use the fact that $\ip{u_i'}{g} - \sqrt{\frac{\pi}{2}} \sgn{\ip{u_i'}{g}}$ and $\ip{u_{j+n}'}{g} - \sqrt{\frac{\pi}{2}} \sgn{\ip{u_{j+n}'}{g}}$
  are themselves vectors on a Hilbert space with norm squared $\pi/2-1$. Then, in our setting we obtain 
  \begin{align*}
    \E \left[ \sum_{i,j} \matr A_{i,j} \sgn{\ip{u_{i}'}{g}}\sgn{\ip{u_{j+n}'}{g}}\right] ~\ge~ & \frac{2}{\pi} (1-\delta) \cdot \SDP^* - \left(1-\frac{2}{\pi}\right) \cdot \SDP^*\\
                                                                                     ~\ge~ & \frac{2}{\pi} \left(2 - \frac{\pi}{2} -\delta\right) \cdot \SDP^*\\
                                                                                     ~\ge~ & \left(\frac{1}{4} + \Omega(1)\right) \cdot \SDP^* && \text{(Since $\delta \le 2^{-5}$)}\\
                                                                                     ~\ge~ & \left(\frac{1}{4} + \Omega(1)\right) \cdot \OPT,                                                                                        
  \end{align*}
  as claimed. By standard techniques, this guarantee on the expected value of the rounded solution
  can be used to give with high probability a guarantee of $1/4 \cdot \OPT$ (namely, by repeating
  this rounding scheme $O(\poly(1/\gamma)\cdot \log(n))$ times).
\end{proof}

We now proceed to establish the sparse SDP solver wrapper claimed
in~\cref{lemma:sparse_sdp_solver_gram_matrix}. For concreteness, we
will use the following sparse SDP solver result of Lee--Padmanabhan
\cite{LeeP20}. The analogous result of Arora--Kale~\cite{AK07} with
slightly worse parameters also suffices for our purposes, but the main
result of~\cite{LeeP20} is stated in more convenient form.

\begin{theorem}[Adapted from Theorem 1.1 of~\cite{LeeP20}]\label{theo:fast_sdp}
  Given a matrix $\matr C \in \mathbb{R}^{n\times n}$ with $m$ non-zero entries, parameter
  $\gamma \in (0,1/2]$, with high probability, in time $\widetilde{O}((m+n)/\gamma^{3.5})$,
  it is possible to find a symmetric matrix $\matr Y \in \mathbb{R}^{n\times n}$ with $O(m)$ non-zero
  entries and diagonal matrix $\matr S \in \mathbb{R}^{n\times n}$ so that $\widetilde{X} = \matr S \cdot \exp \matr Y \cdot \matr S$
  satisfies
  \begin{itemize}
    \item $\widetilde{X} \succeq 0$,
    \item $\widetilde{X}_{i,i} \le 1$ for every $1\le i \le n$, and
    \item $\Tr(\matr C \cdot \widetilde{X}) ~\ge~ \max_{\matr X \succeq 0, X_{i,i} \le 1} \Tr\left(\matr C \cdot \matr X\right) - \gamma \sum_{i,j} \abs{\matr C_{i,j}}$.
  \end{itemize}
  Furthermore, we have $\norm{\matr Y}_{\textup{op}} \le O(\log(n)/\gamma)$ (\cf Lemma C.2.3 of~\cite{LeeP20}).
\end{theorem}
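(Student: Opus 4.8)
The plan is to obtain Theorem~\ref{theo:fast_sdp} directly from the main algorithm of Lee--Padmanabhan~\cite{LeeP20}, so that the bulk of the work is quoted rather than redone. Their Theorem~1.1 solves SDPs of the form $\max \Tr(\matr C \matr X)$ subject to $\matr X \succeq 0$ together with diagonal constraints, and crucially outputs an approximately optimal solution in the \emph{factored form} $\widetilde{\matr X} = \matr S \cdot \exp(\matr Y) \cdot \matr S$, with $\matr S$ diagonal and $\matr Y$ symmetric with $O(m)$ nonzero entries, in time $\widetilde{O}((m+n)/\gamma^{3.5})$. The task is then only to check the translation between their statement and ours, which breaks into three small points.

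First I would handle the normalization of the additive error. The guarantee in~\cite{LeeP20} is stated against a width parameter (essentially $\norm{\matr C}_{\textup{op}}$ or $\max_i \sum_j \abs{\matr C_{i,j}}$); rescaling $\matr C$ and adjusting the accuracy parameter by a polylogarithmic amount converts this into the slack $\gamma \sum_{i,j}\abs{\matr C_{i,j}}$ required here (and later needed for the cut-norm rounding in the proof of~\cref{theorem:cut_like_alg}). Second, the diagonal constraints: \cite{LeeP20} works with $\matr X_{i,i}=1$, whereas we want the relaxed $\matr X_{i,i}\le 1$. I would pass from one to the other by the standard psd-shift / slack-coordinate trick, checking that the relaxed optimum is attained at a matrix with all diagonal entries equal to $1$ and that the factored form and the sparsity of $\matr Y$ survive this reduction. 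Third, the operator-norm bound $\norm{\matr Y}_{\textup{op}} = O(\log(n)/\gamma)$ is already explicit in their analysis (their Lemma~C.2.3), so it can be cited verbatim; this is precisely the property that later allows one to read off unit-ball Gram vectors of $\widetilde{\matr X}$ via a low-dimensional (Johnson--Lindenstrauss-type) approximation of $\exp(\matr Y)$ in the proof of~\cref{lemma:sparse_sdp_solver_gram_matrix}.

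The main obstacle I expect is not conceptual but bookkeeping: ensuring that a \emph{single} run of the Lee--Padmanabhan solver, at an appropriately chosen accuracy and with the standard success-probability amplification, simultaneously yields all three guarantees ($\widetilde{\matr X}\succeq 0$, $\widetilde{\matr X}_{i,i}\le 1$, and the additive-error bound against $\sum_{i,j}\abs{\matr C_{i,j}}$), while keeping the number of nonzeros of $\matr Y$ at $O(m)$ after the $\{=1\}\to\{\le 1\}$ reduction and keeping the running time $\widetilde{O}((m+n)/\gamma^{3.5})$. Since every ingredient is already present in~\cite{LeeP20} (or, as noted in the excerpt, in the slightly weaker solver of Arora--Kale~\cite{AK07}), the proof is short, consisting of a citation plus the rescaling-and-relaxation argument above.
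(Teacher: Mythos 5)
Your proposal matches the paper's treatment: the paper gives no proof of this statement at all, importing it directly from Theorem 1.1 of \cite{LeeP20} and noting in a remark that the only difference is the extra bound $\norm{\matr Y}_{\textup{op}} \le O(\log(n)/\gamma)$, which is cited verbatim from their Lemma C.2.3. Your additional bookkeeping (rescaling the additive error and the $\matr X_{i,i}=1$ versus $\matr X_{i,i}\le 1$ translation) is harmless but, on the paper's reading of \cite{LeeP20}, unnecessary, since the solver's guarantee is already stated in the form used here.
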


\begin{remark}
  We observe that~\cref{theo:fast_sdp} differs from Theorem 1.1 of~\cite{LeeP20} only by
  an additional bound on $\norm{\matr Y}_{\textup{op}}$.
  This bound is important in analyzing the error when approximating (matrix)
  exponential of $\matr Y$.
\end{remark}

We now show how we can approximate the Gram vectors of the SDP
solution of~\cref{theo:fast_sdp}. We rely on part of the analysis in
Arora--Kale~\cite{AK07}.

\begin{claim}\label{claim:gram_vector_from_sdp_sol}
  Let $\matr C \in \mathbb{R}^{n\times n}$ be a matrix with at most $m$ non-zero entries and $\gamma \in (0,1/2]$.
  Suppose $\widetilde{X} = \matr S \cdot \exp \matr Y \cdot \matr S$ satisfy the conclusions of~\cref{theo:fast_sdp}
  given $\matr C \in \mathbb{R}^{n\times n}$ and accuracy $\gamma$. 
  Then with high probability we can find in $\widetilde{O}(\poly(1/\gamma)\cdot (m+n))$ time approximate Gram vectors
  $u_1,\ldots,u_n \in \mathbb{R}^n$ such that $\widetilde{X}'_{i,j} \coloneqq \ip{u_i}{u_j}$ satisfy
  \begin{itemize}
    \item $\widetilde{X}_{i,i}' \le 1$ for every $1 \le i \le n$, and
    \item $\Tr(\matr C \cdot \widetilde{X}') ~\ge~ \Tr\left( \matr C \cdot \widetilde{\matr X} \right) - \gamma \sum_{i,j} \abs{\matr C_{i,j}}$.
  \end{itemize}  
\end{claim}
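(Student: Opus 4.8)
The plan is to exploit the special form $\widetilde X = \matr S \exp(\matr Y)\matr S$ handed to us by~\cref{theo:fast_sdp}, and in particular the bound $\norm{\matr Y}_{\textup{op}} \le O(\log(n)/\gamma)$, to produce approximate Gram vectors via a truncated matrix exponential followed by a Johnson--Lindenstrauss sketch. First I would observe that since $\matr Y$ is symmetric, $\widetilde X = B^{\top}B$ with $B \coloneqq \exp(\matr Y/2)\matr S$, so the exact Gram vectors are $u_i^{*} \coloneqq \matr S_{i,i}\exp(\matr Y/2)e_i$, and they satisfy $\norm{u_i^{*}}^2 = \widetilde X_{i,i} \le 1$; note that $\norm{\exp(\matr Y/2)}_{\textup{op}}$ and each scalar $\matr S_{i,i}$ are bounded by $e^{\norm{\matr Y}_{\textup{op}}/2} \le n^{O(1/\gamma)}$, since $\widetilde X$ is PSD with diagonal at most $1$.

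Next I would replace $\exp(\matr Y/2)$ by its degree-$L$ Taylor polynomial $P(\matr Y)$; the standard Taylor-remainder estimate gives $\norm{\exp(\matr Y/2) - P(\matr Y)}_{\textup{op}} \le \epsilon_1$ as soon as $L = O(\norm{\matr Y}_{\textup{op}} + \log(1/\epsilon_1))$, and each evaluation $P(\matr Y)v$ costs $L$ sparse matrix--vector products with $\matr Y$ (which has $O(m)$ nonzeros), i.e.\ $\widetilde{O}(L(m+n))$ time by Horner's rule. To avoid the $\widetilde{O}(nLm)$ cost of computing all $n$ columns, I would then sketch: sample a suitably scaled random sign matrix $\matr Q \in \R^{k\times n}$ with $k = O(\log(n)/\epsilon_2^2)$, set $u_i \coloneqq \matr S_{i,i}\matr Q P(\matr Y)e_i$ (padded with zeros into $\R^n$), and compute all of them simultaneously as the columns of $(P(\matr Y)\matr Q^{\top})^{\top}\matr S$ --- that is, $k$ Horner evaluations and a diagonal rescaling, $\widetilde{O}(kL(m+n))$ time total. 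A union bound over the $n^2$ pairs makes $\abs{\ip{u_i}{u_j} - \matr S_{i,i}\matr S_{j,j}\ip{P(\matr Y)e_i}{P(\matr Y)e_j}} \le \epsilon_2$ hold for all $i,j$ with high probability (using $\matr S_{i,i}\norm{P(\matr Y)e_i} \le \norm{u_i^{*}} + \matr S_{i,i}\epsilon_1 \le 2$ for our choice of $\epsilon_1$).

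For the error bookkeeping, telescoping $P(\matr Y)^2 - \exp(\matr Y) = (P(\matr Y) - \exp(\matr Y/2))P(\matr Y) + \exp(\matr Y/2)(P(\matr Y) - \exp(\matr Y/2))$ and using the $n^{O(1/\gamma)}$ norm bounds above yields $\abs{(P(\matr Y)^2)_{i,j} - (\exp\matr Y)_{i,j}} \le \epsilon_1 n^{O(1/\gamma)}$, so $\widetilde X'_{i,j} \coloneqq \ip{u_i}{u_j}$ satisfies $\abs{\widetilde X'_{i,j} - \widetilde X_{i,j}} \le \epsilon_1 n^{O(1/\gamma)} + \epsilon_2 \eqqcolon \epsilon$ for all $i,j$. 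Choosing $\epsilon_1 = \gamma\, n^{-O(1/\gamma)}$ and $\epsilon_2 = \gamma/8$ forces $\epsilon \le \gamma/4$ while keeping $L = \widetilde{O}(1/\gamma)$ (its dependence on $\log\sum_{i,j}\abs{\matr C_{i,j}} = O(\log m + \log\norm{\matr C}_{\infty})$ being merely logarithmic) and $k = \widetilde{O}(1/\gamma^2)$. Since $\widetilde X'_{i,i} \le \widetilde X_{i,i} + \epsilon \le 1 + \gamma/4$, rescaling $u_i \mapsto u_i/\sqrt{1+\gamma/4}$ restores $\widetilde X'_{i,i} \le 1$; as $\widetilde X'$ is a genuine Gram matrix with diagonal at most $2$, its entries are bounded by $2$, so both the approximation step and the rescaling step change $\Tr(\matr C\widetilde X')$ by at most $2\epsilon\sum_{i,j}\abs{\matr C_{i,j}}$, giving $\Tr(\matr C\widetilde X') \ge \Tr(\matr C\widetilde X) - \gamma\sum_{i,j}\abs{\matr C_{i,j}}$, with total running time $\widetilde{O}(kL(m+n)) = \widetilde{O}(\poly(1/\gamma)(m+n))$.

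The main obstacle I expect is exactly this interplay of norms: $\exp(\matr Y/2)$ and the entries $\matr S_{i,i}$ can each be as large as $n^{O(1/\gamma)}$, so a crude operator-norm approximation of the matrix exponential would need to be exponentially accurate. What makes the argument work is that $\log(n^{O(1/\gamma)}) = \widetilde{O}(1/\gamma)$, so demanding this accuracy inflates the truncation degree $L$ (and hence the runtime) by only a $\poly(1/\gamma,\log n)$ factor; one must be slightly careful to push the $n^{O(1/\gamma)}$ factors entirely into the choice of $\epsilon_1$ (and thus into $L$) so that they do not reappear in the final additive error, and to invoke~\cref{theo:fast_sdp} precisely for the operator-norm bound on $\matr Y$ that enables this.
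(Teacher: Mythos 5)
Your proof follows essentially the same route as the paper's: take the rows of $\matr S\cdot \exp(\matr Y/2)$ as the exact Gram vectors, Johnson--Lindenstrauss-sketch them down to $\widetilde{O}(\poly(1/\gamma))$ dimensions, and compute the sketch via a truncated Taylor expansion of the matrix exponential whose degree is controlled by the bound $\norm{\matr Y}_{\textup{op}} \le O(\log(n)/\gamma)$ from \cref{theo:fast_sdp}. The only differences are cosmetic (sign matrices versus Gaussians for the sketch) or in your favor: your explicit tracking of the $n^{O(1/\gamma)}$ factors from $\norm{\exp(\matr Y/2)}_{\textup{op}}$ and $\matr S_{i,i}$, absorbed into the truncation accuracy $\epsilon_1$ at only logarithmic cost in $L$, is a more careful version of the paper's choice of $\delta \le \poly(\gamma/n)$.
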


\begin{proof}
  Since $\widetilde{X} = (\matr S \cdot \exp( \matr Y/2)) (\matr S \cdot \exp( \matr Y/2))^t$, the rows of
  $\matr S \cdot \exp( \matr Y/2)$ can be taken as Gram vectors $u_1,\dots,u_n \in \mathbb{R}^n$ of $\widetilde{X}$.
  If we knew the rows of $\exp( \matr Y/2)$, we could readily recover these Gram vectors since $\matr S$ is diagonal.
  As observed in Arora--Kale~\cite{AK07}, computing $\exp( \matr Y/2)$ may be computationally expensive,
  so instead one can approximate the matrix-vector product $\exp( \matr Y/2) u$ using $d=O(\log(n)/\gamma^2)$
  random Gaussian vectors $u \sim N(0,I_n)$. By the Johnson--Lindenstrauss Lemma and scaling by $\sqrt{n/d}$,
  with high probability we obtain vectors $\widetilde{u}_1,\ldots,\widetilde{u}_n$ satisfying for every $i,j \in [n]$ say
  $$
  \abs{\ip{u_i}{u_j} - \ip{\widetilde{u}_i}{\widetilde{u}_j}} ~\le~ \frac{\gamma}{6}.
  $$
  In particular, whp $\norm{\widetilde{u}_i}_2^2 \le 1 + \gamma/6$. Thus, by normalizing the vectors $\widetilde{u}_i$ with
  $\norm{\widetilde{u}_i}_2 > 1$ to have $\ell_2$-norm one the preceding approximation deteriorates to
  $$
  \abs{\ip{u_i}{u_j} - \ip{\widetilde{u}_i}{\widetilde{u}_j}} ~\le~ \gamma/2.
  $$  
  To compute each the matrix-vector product $\exp( \matr Y/2) u$ in $\widetilde{O}(\poly(1/\gamma) \cdot (m+n))$,
  we rely on the following lemma.    
  \begin{lemma}[Arora--Kale~\cite{AK07}, \cf Lemma 6]\label{lemma:matrix_exp_vec_mul}
    Let $\calT_{\matr Y}$ be the time needed to compute the matrix-vector product $\matr Y u$. Then the vector
    $v \coloneqq \sum_{i=0}^k \matr Y^i u/(i!)$ can be computed in $O(k\cdot \calT_{\matr Y})$ time and if
    $k \ge \max\set{e^2\cdot\norm{\matr Y}_{\textup{op}},\ln(1/\delta)}$, then $\norm{\exp(\matr Y)u -v}_2 \le \delta$.
  \end{lemma}
  By noting that $\norm{\matr Y}_{\textup{op}} \le O(\log(n)/\gamma)$ and the time $\calT_{\matr Y}$ (\cf~\cref{lemma:matrix_exp_vec_mul})
  $\matr Y u$ is $\widetilde{O}((m+n)/\gamma)$, applying~\cref{lemma:matrix_exp_vec_mul} with say $\delta \le \poly(\gamma/n)$ 
  we can approximate each $\exp( \matr Y/2) u$ in time $\widetilde{O}((m+n)/\gamma)$. Therefore, the total running is $\widetilde{O}(\poly(1/\gamma) \cdot (m+n))$
  as claimed. Then the actual Gram vectors still satisfy
  $$
  \abs{\ip{u_i}{u_j} - \ip{\widetilde{u}_i}{\widetilde{u}_j}} ~\le~ \gamma.
  $$
  Hence, we get  
  $$
  \Tr(\matr C \cdot \widetilde{X}') ~\ge~ \Tr\left( \matr C \cdot \widetilde{\matr X} \right) - \gamma \sum_{i,j} \abs{\matr C_{i,j}},
  $$
  concluding the proof.
\end{proof}

We are ready to prove~\cref{lemma:sparse_sdp_solver_gram_matrix} which
is restated below for convenience.

\SparseSDPSolverWrapper*

\begin{proof}[Proof of~\cref{lemma:sparse_sdp_solver_gram_matrix}]
  Follows by combining the SDP solution $\widetilde{X}$ of~\cref{theo:fast_sdp} with
  the fast approximate Gram vector computation of~\cref{claim:gram_vector_from_sdp_sol}, the latter
  yielding another approximated SDP solution $\widetilde{X}'$. In both of these computations, we use accuracy parameter $\gamma/2$
  so that
  \begin{align*}
    \Tr(\matr C \cdot \widetilde{X}') ~\ge~& \Tr\left( \matr C \cdot \widetilde{\matr X} \right) - \frac{\gamma}{2} \sum_{i,j} \abs{\matr C_{i,j}}\\
                                      ~\ge~ &  \max_{\matr X \succeq 0, X_{i,i} \le 1} \Tr\left(\matr C \cdot \matr X\right) - \frac{\gamma}{2} \sum_{i,j} \abs{\matr C_{i,j}} - \frac{\gamma}{2} \sum_{i,j} \abs{\matr C_{i,j}}.
  \end{align*}
  Moreover, each step takes $\widetilde{O}(\poly(1/\gamma)\cdot (m+n))$ which concludes the proof.
\end{proof}

\section{Regularity Based Decoding}\label{sec:abstract_dec}

\subsection{List Decoding of Direct-Sum Codes}
We now develop list-decoding algorithms for direct-sum codes, using the regularity lemmas obtained in the previous section. We will prove the following theorem.
\begin{theorem}\label{thm:direct-sum-decoding}
Let $\calC_0 \subset \F_2^n$ be a code with $\bias(\calC_0) \leq \eps_0$, which is unique-decodable to distance $\nfrac{(1-\eps_0)}{4}$ in time $\calT_0$. Let $W \subseteq [n]^k$ be a $d$-regular, $\tau$-splittable collection of tuples, and let $\calC = \dsum_W(\calC_0)$ be the corresponding direct-sum lifting of $\calC_0$ with $\bias(\calC) \leq \eps$. Let $\beta$ be such that
\[
\beta ~\geq~ \max\inbraces{\sqrt{\eps}, ~\inparen{2^{20} \cdot \tau \cdot k^3}^{1/2}, ~2 \cdot \inparen{\frac12+2\eps_0}^{k/2}} \mper
\]
Then, there exists a randomized algorithm, which given $\tilde{y} \in \F_2^{W}$, recovers the list $\calL_{\beta}(\tilde{y}) \defeq \inbraces{y \in \calC ~|~ \Delta(\tilde{y}, y) \leq \nfrac12 - \beta}$ with probability $1-o(1)$, in time $\tilde{O}(C_{\beta,k,\eps_0} \cdot (\abs{W}+\calT_0))$, where $C_{k,\beta,\eps_0} = (\nfrac{6}{\eps_0})^{2^{O(\nfrac{k^3}{\beta^2})}}$. 
\end{theorem}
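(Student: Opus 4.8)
The plan is to implement the three-step strategy sketched in the technical overview: (i) use the efficient weak regularity lemma (\cref{theo:eff_weak_reg}) with the function class $\SCUT^{\otimes k}$ to replace the received word with a low-complexity approximator $h$; (ii) reduce the decoding objective to a finite-dimensional optimization over functions that are measurable with respect to the factor generated by the $O(k^2/\beta^2)$ cut functions appearing in $h$; and (iii) brute-force over a fine discretization of that factor, round each candidate $\fbar \colon [n] \to [-1,1]$ to an actual $\chi_z$, and feed $z$ into the unique decoder for $\calC_0$. First I would set up notation: given $\tilde y \in \F_2^W$, define $g \in \R^{[n]^k}$ by $g(i_1,\dots,i_k) = (-1)^{\tilde y_{(i_1,\dots,i_k)}}$ on $W$ and $0$ off $W$; then for any $z \in \F_2^n$ one has the identity $1 - 2\Delta(\tilde y, \dsum_W(z)) = (n/d)^{k-1} \ip{g}{\chi_z^{\otimes k}}_{\mu_1^{\otimes k}}$, exactly as in the overview. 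Note $\norm{g}_{\mu_k} = 1$ and $W$ is $\tau$-splittable with $\tau \le \beta^2/(2^{20} k^3)$, so \cref{theo:eff_weak_reg} applies with accuracy parameter $\error = \beta/4$ (say), producing $h = \sum_{\ell=1}^p c_\ell f_\ell$ with $p = O(k^2/\beta^2)$, each $f_\ell = \pm \sset_{S_{\ell,1}} \otimes \cdots \otimes \sset_{S_{\ell,k}} \in \SCUT^{\otimes k}$, $\norm{h}_{\mu_1^{\otimes k}} \le 2$, computable in time $\tilde O(2^{2^{\tilde O(k^2/\beta^2)}} \cdot |W|)$, such that $\max_{f \in \SCUT^{\otimes k}} \ip{g - (d/n)^{k-1} h}{f} \le (\beta/4)|W|$. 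In particular, for every $y = \dsum_W(z_0)$ in the list $\calL_\beta(\tilde y)$, the vector $f = \chi_{z_0}^{\otimes k} \in \SCUT^{\otimes k}$ witnesses $\ip{g}{f} \ge 2\beta |W| \cdot (d/n)^{k-1} / \big((n/d)^{k-1}\big)$... — more carefully, rewriting in the $\mu_1^{\otimes k}$ inner product, $\ip{g}{\chi_{z_0}^{\otimes k}}_{\mu_1^{\otimes k}} = (d/n)^{k-1}\langle g, \chi_{z_0}^{\otimes k}\rangle = (d/n)^{k-1}\cdot|W|\cdot(n/d)^{k-1}\,\Ex{W}{\cdots} \ge 2\beta$, hence $\ip{h}{\chi_{z_0}^{\otimes k}}_{\mu_1^{\otimes k}} \ge 2\beta - \beta/4 \ge \beta$.

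Next I would carry out the reduction to the factor. Let $\calB$ be the function factor generated by $\{\sset_{S_{\ell,t}} : \ell \in [p],\, t \in [k]\}$; it has at most $2^{pk}$ atoms and is computable in time $2^{pk}\tilde O(n)$ by \cref{remark:factor_sigma_algebra_comp}. Since each $f_\ell$ is a tensor of $\calB$-measurable functions, by \cref{prop:measurable-inner-product} applied coordinatewise the objective $\fbar \mapsto \ip{h}{\fbar^{\otimes k}}_{\mu_1^{\otimes k}} = \sum_\ell c_\ell \prod_t \ip{\sset_{S_{\ell,t}}}{\fbar}_{\mu_1}$ depends only on $\Ex{\fbar \mid \calB}$. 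So it suffices to search over $\calB$-measurable $\fbar \colon [n]\to[-1,1]$, i.e.\ over one value in $[-1,1]$ per atom; discretizing each atom-value to a grid of spacing $\eta = \Theta(\beta / (pk \cdot \sum_\ell|c_\ell|))$ (using $\sum_\ell |c_\ell| = O(k/\beta)$ from $\calH$-membership) changes the objective by at most, say, $\beta/4$. This is where the constant $C_{k,\beta,\eps_0} = (6/\eps_0)^{2^{O(k^3/\beta^2)}}$ comes from: there are at most $(1/\eta)^{2^{pk}} = (6/\eps_0)^{2^{O(k^3/\beta^2)}}$ configurations, and for each we get a candidate vector $\fbar$; I then round by sampling each bit $z_i$ independently with $\Ex{(-1)^{z_i}} = \fbar(i)$ and observe that $\Ex{}{\ip{h}{\chi_z^{\otimes k}}_{\mu_1^{\otimes k}}} = \ip{h}{\fbar^{\otimes k}}_{\mu_1^{\otimes k}}$, with concentration good enough (a few independent repetitions, union-bounded over the $O(1/p\eta)$-sized list) to ensure some rounding achieves $\ip{h}{\chi_z^{\otimes k}}_{\mu_1^{\otimes k}} \ge \beta - \beta/4$, hence $\ip{g}{\chi_z^{\otimes k}}_{\mu_1^{\otimes k}} \ge \beta - \beta/2$, i.e.\ $\Delta(\tilde y, \dsum_W(z)) \le 1/2 - \beta/4$ in the $W$-metric.

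Finally I would close the loop using the parity-sampler / splittability property to argue that such a $z$ is within the unique-decoding radius of the intended $z_0 \in \calC_0$. The key point is: if $\Delta(\tilde y, \dsum_W(z_0)) \le 1/2 - \beta$ and $\Delta(\tilde y, \dsum_W(z)) \le 1/2 - \beta/4$ for the recovered $z$ (not a priori a codeword), then by the triangle inequality $\dsum_W(z)$ and $\dsum_W(z_0)$ agree on a $1/2 + \Omega(\beta)$ fraction of $W$; since $W$ is a $\tau$-splittable $d$-regular collection, the splittable mixing lemma (\cref{lemma:splittable_mixing_iterated}) shows that $\langle \chi_z^{\otimes k}, \chi_{z_0}^{\otimes k}\rangle_{\mu_1^{\otimes k}}$ is within $(k-1)\tau$ of $\langle \chi_z^{\otimes k}, \chi_{z_0}^{\otimes k}\rangle_{\mu_k}$, i.e.\ $\prod$ over coordinates of $\Ex{\mu_1}{\chi_z \chi_{z_0}} = \bias(z + z_0)^k$; combined with $\beta \ge 2(1/2 + 2\eps_0)^{k/2}$ and $\tau \le \beta^2/(2^{20}k^3)$, this forces $\bias(z + z_0) > \ldots$ to be bounded away from $1/2+2\eps_0$ in the appropriate direction, which after a short computation yields $\Delta(z, z_0) < (1-\eps_0)/4$. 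Then one call to the unique decoder for $\calC_0$ recovers $z_0$, hence $y = \dsum_W(z_0)$. Running over all configurations and all decoded codewords, pruning to those actually within distance $1/2 - \beta$ of $\tilde y$, recovers the full list. The total time is $\tilde O(C_{k,\beta,\eps_0}) \cdot (\tilde O(|W|) + \calT_0)$ plus the regularity-lemma cost $\tilde O(2^{2^{\tilde O(k^2/\beta^2)}}|W|)$, which is absorbed into $C_{k,\beta,\eps_0}$.

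\textbf{Main obstacle.} The delicate step is the discretization/rounding bookkeeping in the middle: one must choose $\eta$ and the number of independent rounding trials so that (a) the number of configurations stays at $(6/\eps_0)^{2^{O(k^3/\beta^2)}}$ — this is what pins down $p = O(k^2/\beta^2)$ needing $\tau \le O(\beta^2/k^3)$, matching the splittability hypothesis — and (b) the cumulative error (regularity error $\beta/4$ + discretization error $\beta/4$ + rounding-concentration slack) stays below the $\beta$ budget with the stated list-decoding radius $1/2 - \beta$, while simultaneously the final $\bias(z+z_0)$ bound survives the $(k-1)\tau$ loss from the splittable mixing lemma. Getting all three slacks to fit inside $\beta$ with the clean thresholds $\beta \ge \sqrt{\eps}$, $\beta \ge (2^{20}\tau k^3)^{1/2}$, $\beta \ge 2(1/2+2\eps_0)^{k/2}$ is the part that requires care; everything else is a direct assembly of the cited lemmas.
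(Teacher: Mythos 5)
Your setup (the function $g$, the correlation identity, the application of \cref{theo:eff_weak_reg} with $\SCUT^{\otimes k}$, the factor $\calB$, and the enumeration over discretized $\calB$-measurable $\fbar$) matches the paper's proof. The gap is in your final step, where you connect the rounded word $\tilde z$ back to the list codeword $z_0$. You argue that the rounded $z$ satisfies $\Delta(\tilde y,\dsum_W(z)) \le \tfrac12-\tfrac{\beta}{4}$ and $z_0$ satisfies $\Delta(\tilde y,\dsum_W(z_0))\le \tfrac12-\beta$, ``hence by the triangle inequality $\dsum_W(z)$ and $\dsum_W(z_0)$ agree on a $\tfrac12+\Omega(\beta)$ fraction of $W$.'' This is false: the triangle inequality only gives $\Delta(\dsum_W(z),\dsum_W(z_0)) \le 1-\tfrac{5\beta}{4}$, i.e.\ agreement at least $\tfrac{5\beta}{4}$, and two words each at distance roughly $\tfrac12$ from $\tilde y$ can disagree on almost all of $W$. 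Consequently $\Ex{w\sim W}{(-1)^{\dsum_W(z+z_0)_w}}$ is only bounded below by $-1+\tfrac{5\beta}{2}$, the splittable mixing lemma gives no useful lower bound on $\bias(z+z_0)$, and the ``parity sampling in reverse'' step collapses. (Even in the best case there is a sign ambiguity: large $\bias(z+z_0)$ could mean $z$ is close to the complement of $z_0$.)

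The paper closes the loop entirely in the base space, and this is where the hypothesis $\beta \ge 2(\tfrac12+2\eps_0)^{k/2}$ is actually used. For a list element $z_0$, consider the enumerated $\fbar$ closest to the conditional expectation $\Ex{\chi_{z_0}|\calB}$. Since every coordinate function of $h$ is $\calB$-measurable, $\beta \le \ip{h}{\chi_{z_0}^{\otimes k}}_{\mu_1^{\otimes k}} = \ip{h}{(\Ex{\chi_{z_0}|\calB})^{\otimes k}}_{\mu_1^{\otimes k}} \le \norm{h}_{\mu_1^{\otimes k}}\cdot\norm{\Ex{\chi_{z_0}|\calB}}_{\mu_1}^{k}$, so $\norm{\Ex{\chi_{z_0}|\calB}}_{\mu_1}^2 \ge (\beta/2)^{2/k} \ge \tfrac12+2\eps_0$ (\cref{lem:large-norm}). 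Then $\Ex{\chi}{\ip{\chi}{\chi_{z_0}}_{\mu_1}} = \ip{\fbar}{\Ex{\chi_{z_0}|\calB}}_{\mu_1} \ge \norm{\Ex{\chi_{z_0}|\calB}}_{\mu_1}^2-\eta$ (\cref{clm:sampling}), and Hoeffding applied to this degree-one quantity gives $\ip{\chi}{\chi_{z_0}}_{\mu_1}\ge \tfrac12+\eps_0$ with high probability, i.e.\ $\Delta(\tilde z,z_0)\le \tfrac14-\tfrac{\eps_0}{2}$, inside the unique-decoding radius. This direct base-space argument is what must replace your triangle-inequality step; it also explains the coarse discretization $\eta\approx\eps_0/2$ (hence the base $6/\eps_0$ in $C_{k,\beta,\eps_0}$, rather than your $\beta$-dependent grid) and avoids having to prove concentration for the degree-$k$ polynomial $\ip{h}{\chi_z^{\otimes k}}$ that your rounding step would require.
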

To obtain the decoding algorithm, we first define a function $g: [n]^k \to \pmone$ supported on $W$ as
\[
g(i_1, \ldots, i_k) ~\defeq~ 
\begin{cases}
(-1)^{\tilde{y}_{(i_1,\ldots,i_k)}} & ~\text{if}~ (i_1,\ldots,i_k) \in W \\
0 & ~\text{otherwise}
\end{cases}
\]
For each $z \in \F_2^n$, we also consider the similar function $\chi_z: [n] \to \pmone$ defined as $\chi_z(i) = (-1)^{z_i}$. We first re-state the decoding problem in terms of the functions $g$ and $\chi_z$.
\begin{claim}\label{clm:decoding-correlation}
Let $z \in \F_2^n$, and let the functions $g$ and $\chi_z$ be as above. Then,
\[
\Delta(\tilde{y}, \dsum_W(z)) \leq \frac12 - \beta 
\quad \Leftrightarrow \quad
\ip{g}{\chi_z^{\otimes k}}_{\mu_k} 
= 
\inparen{\frac{n}{d}}^{k-1} \cdot \ip{g}{\chi_z^{\otimes k}}_{\mu_1^{\otimes k}}
\geq~
2\beta \mper
\]
\end{claim}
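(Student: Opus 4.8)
The plan is to verify \cref{clm:decoding-correlation} by a direct computation that re-expresses the normalized Hamming distance as a correlation of $g$ with the rank-one tensor $\chi_z^{\otimes k}$. First I would unwind the definition of the direct-sum lifting: for any tuple $(i_1,\ldots,i_k)\in W$ we have $\dsum_W(z)_{(i_1,\ldots,i_k)} = z_{i_1}+\cdots+z_{i_k}$ over $\F_2$, and hence $(-1)^{\dsum_W(z)_{(i_1,\ldots,i_k)}} = \chi_z(i_1)\cdots\chi_z(i_k) = \chi_z^{\otimes k}(i_1,\ldots,i_k)$. Consequently, for each $(i_1,\ldots,i_k)\in W$ the product $g(i_1,\ldots,i_k)\cdot\chi_z^{\otimes k}(i_1,\ldots,i_k)$ equals $(-1)^{\tilde{y}_{(i_1,\ldots,i_k)} + \dsum_W(z)_{(i_1,\ldots,i_k)}} = 1 - 2\cdot\one[\tilde{y}_{(i_1,\ldots,i_k)} \neq \dsum_W(z)_{(i_1,\ldots,i_k)}]$.

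Next, since $\mu_k = \mu_{[1,k]}$ is the uniform measure on $W$ and $g$ restricted to $W$ is $\pm 1$-valued, averaging the previous pointwise identity over $(i_1,\ldots,i_k)\sim\mu_k$ gives $\ip{g}{\chi_z^{\otimes k}}_{\mu_k} = 1 - 2\,\Delta(\tilde{y},\dsum_W(z))$. Rearranging, $\Delta(\tilde{y},\dsum_W(z)) \le \frac12 - \beta$ is equivalent to $\ip{g}{\chi_z^{\otimes k}}_{\mu_k} \ge 2\beta$, which is the substantive equivalence asserted in the claim.

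It then remains to establish the stated change of measure, $\ip{g}{\chi_z^{\otimes k}}_{\mu_k} = (n/d)^{k-1}\,\ip{g}{\chi_z^{\otimes k}}_{\mu_1^{\otimes k}}$. Here I would use that $\mu_1$ is uniform on $W[1] = [n]$, so $\mu_1^{\otimes k} = \nu_{k-1}$ is the uniform measure on $[n]^k$, together with the fact that $g$ is supported on $W$. Both inner products are then, up to their respective normalizing constants, equal to the same sum $\sum_{(i_1,\ldots,i_k)\in W} g(i_1,\ldots,i_k)\,\chi_z^{\otimes k}(i_1,\ldots,i_k)$: the first is this sum divided by $\abs{W}$, the second is this sum divided by $n^k$. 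Invoking the $d$-regularity of $W$, which gives $\abs{W} = \abs{W[1,k]} = d^{k-1}\cdot n$, the ratio of normalizations is exactly $n^k/(d^{k-1}n) = (n/d)^{k-1}$, which proves the identity and completes the claim.

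I do not anticipate a genuine obstacle here, as the argument is a short chain of substitutions; the only point demanding care is bookkeeping between the two measures — $\mu_k$, uniform on the sparse set $W$, versus $\mu_1^{\otimes k}$, uniform on all of $[n]^k$ — and correctly using $d$-regularity for the normalization factor $\abs{W} = d^{k-1}n$. Once that is handled as above, the claim follows immediately.
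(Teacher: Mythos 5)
Your proof is correct and is essentially identical to the paper's: both compute $\ip{g}{\chi_z^{\otimes k}}_{\mu_k} = 1 - 2\,\Delta(\tilde{y},\dsum_W(z))$ pointwise from the sign identity $(-1)^{\dsum_W(z)} = \chi_z^{\otimes k}$, and both obtain the change-of-measure factor $(n/d)^{k-1}$ from the support of $g$ together with the $d$-regularity bound $\abs{W} = d^{k-1}n$. No issues.
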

\begin{proof}
We have
\begin{align*}
\Delta(\tilde{y}, \dsum_W(z))
&~=~ \Ex{(i_1, \ldots, i_k) \sim W}{\indicator{\tilde{y}_{(i_1,\ldots,i_k)} ~\neq~ z_{i_1} + \cdots + z_{i_k} \mod 2}} \\
&~=~ \Ex{(i_1, \ldots, i_k) \sim \mu_k}{\frac{1-g(i_1,\ldots, i_k) \cdot \prod_{t \in [k]} \chi_z(i_t)}{2}}
~=~ \frac12 - \frac12 \cdot \ip{g}{\chi_z^{\otimes k}}_{\mu_k}    \mper
\end{align*}
Finally, using the fact that $g$ is only supported on $W$, and $\abs{W} = d^{k-1} \cdot n$ by $d$-regularity, we have $\ip{g}{f}_{\mu_k} = (\nfrac{n}{d})^{k-1} \cdot \ip{g}{f}_{\mu_1^{\otimes k}}$ for any function $f:[n]^k \to \R$.
\end{proof}
Note that each element of the list $\calL_{\beta}(\tilde{y})$ must be equal to $\dsum_W(z)$ for some $z \in \calC_0$. 
Thus, to search for all such $z$, we will consider the decomposition $h$ of the function $g$, given by \cref{theo:eff_weak_reg} with respect to the class of functions $\calF = \SCUT^{\otimes k}$. Since the functions $\chi_z^{\otimes k}$ belong to $\calF$, it will suffice to only consider the inner product $\ip{h}{\chi_z^{\otimes k}}_{\mu_1^{\otimes k}}$. 

Also, since the approximating function $h$ is determined by a small number of functions, say $\inbraces{f_1, \ldots, f_r: [n] \to \pmone}$, it will suffice to (essentially) consider only the functions measurable in the factor $\calB$ determined by $f_1, \ldots, f_r$. Recall that the factor $\calB$ is simply a partition of $[n]$ in $2^r$ pieces according to the values of $f_1, \ldots, f_r$. Also, since any $\calB$-measurable function is constant on each piece, it is completely specified by $\abs{\calB}$ real values. We will only consider functions taking values in $[-1,1]$, and discretize this space to an appropriate accuracy $\eta$, to identify all relevant $\calB$-measurable functions with the set $\inbraces{0, \pm \eta, \pm 2\eta, \ldots, \pm 1}^{\abs{\calB}}$.
The decoding procedure is described in the following algorithm.

\begin{algorithm}{List Decoding}{$\tilde{y} \in \F_2^W$}{List $\calL \subseteq \calC$}\label{algo:direct-sum-decoding}
\begin{itemize}
\item Obtain the approximator $h$ given by~\cref{theo:eff_weak_reg} for $\calF = \SCUT^{\otimes k}$, $\delta = \beta$, and the function $g:[n]^k \to \pmone$ defined as
\[
g(i_1, \ldots, i_k) ~\defeq~ 
\begin{cases}
(-1)^{\tilde{y}_{(i_1,\ldots,i_k)}} & ~\text{if}~ (i_1,\ldots,i_k) \in W \\
0 & ~\text{otherwise}
\end{cases}
\]
%
%
\item Let $h$ be of the form $h = \sum_{j=1}^p c_j \cdot f_{j_1} \otimes \cdots \otimes f_{j_k}$, with each $f_{j_t}: [n] \to \pmone$. Let $\calB$ be the factor determined by the functions $\inbraces{f_{j_t}}_{j \in [p], t \in [k]}$.
\item Let $\calL = \emptyset$ and let $\eta = 1/\lceil(2/\eps_0)\rceil$.
 For each $\calB$-measurable function $\fbar$ given by a value in $D_{\eta} \defeq \inbraces{0, \pm \eta, \pm 2\eta, \ldots, \pm 1}$ for every atom of $\calB$:
%
  \begin{itemize}
  \item Sample a random function $\chi:[n] \to \pmone$ by independently sampling $\chi(i) \in \pmone$ for each $i$, such that $\Ex{\chi(i)} = \fbar(i)$. Take $\tilde{z} \in \F_2^n$ to be such that $\chi = \chi_{\tilde{z}}$.
  \item If there exists $z \in \calC_0$ such that
    \[
      \Delta(\tilde{z},z) ~\leq~ \frac{(1-\eps_0)}{4}
      \quad \text{and} \quad
      \Delta(\tilde{y}, \dsum_W(z)) ~\leq~ \frac12 - \beta \mcom
    \]
    then $\calL \leftarrow \calL \cup \{\dsum_{W}(z)\}$.
  \end{itemize}
\item Return $\calL$.
\end{itemize}
\end{algorithm}
Note that by our choice of the $\beta$ in \cref{thm:direct-sum-decoding}, we have that $\tau \leq \beta^2/(2^{20} k^3)$. Thus, we can indeed apply \cref{theo:eff_weak_reg} to obtain the function $h$ as required by the algorithm. To show that the algorithm can recover the list, we will need to show that for each $z$ such that $\dsum_W(z) \in \calL_{\beta}$, the sampling procedure finds a $\tilde{z}$ close to $z$ with significant probability. To analyze this probability, we first prove the following claim.
\begin{claim}\label{clm:sampling}
Let $z \in \F_2^n$ and let $\fbar:[n] \to D_{\eta}$ be a minimizer of $\smallnorm{\Ex{\chi_z|\calB} - \fbar}_{\infty}$ among all $\calB$-measurable functions in $D_{\eta}^{\abs{\calB}}$. Then, over the random choice of $\chi$ such that $\Ex{\chi} = \fbar$, we have
  \[
  \Ex{\chi}{\ip{\chi}{\chi_z}_{\mu_1}} ~=~ \ip{\fbar}{\chi_z}_{\mu_1} ~\geq~ \norm{\Ex{\chi_z|\calB}}_{\mu_1}^2 - \eta \mper   
  \]
\end{claim}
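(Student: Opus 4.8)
\textbf{Proof proposal for Claim~\ref{clm:sampling}.}
The plan is to handle the two assertions separately, the equality by linearity of expectation and the inequality by splitting $\fbar$ into the conditional average plus a small discretization error. For the equality, I would expand the inner product over the counting-measure-normalized uniform distribution $\mu_1$ on $[n]$ and swap the two expectations:
\[
\Ex{\chi}{\ip{\chi}{\chi_z}_{\mu_1}} ~=~ \Ex{\chi}{\Ex{i\sim\mu_1}{\chi(i)\cdot\chi_z(i)}} ~=~ \Ex{i\sim\mu_1}{\Ex{\chi}{\chi(i)}\cdot\chi_z(i)} ~=~ \Ex{i\sim\mu_1}{\fbar(i)\cdot\chi_z(i)} ~=~ \ip{\fbar}{\chi_z}_{\mu_1},
\]
using only that each $\chi(i)$ has mean $\fbar(i)$ (no independence is needed here).

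For the inequality, first I would observe that $\Ex{\chi_z|\calB}$ is a $\calB$-measurable function taking values in $[-1,1]$ (it is a conditional average of the $\pm1$-valued function $\chi_z$). Hence by Proposition~\ref{prop:measurable-inner-product},
\[
\ip{\Ex{\chi_z|\calB}}{\chi_z}_{\mu_1} ~=~ \ip{\Ex{\chi_z|\calB}}{\Ex{\chi_z|\calB}}_{\mu_1} ~=~ \norm{\Ex{\chi_z|\calB}}_{\mu_1}^2 .
\]
Next I would decompose $\fbar = \Ex{\chi_z|\calB} + \bigl(\fbar - \Ex{\chi_z|\calB}\bigr)$ and estimate the error term by Cauchy--Schwarz:
\[
\abs{\ip{\fbar - \Ex{\chi_z|\calB}}{\chi_z}_{\mu_1}} ~\le~ \norm{\fbar - \Ex{\chi_z|\calB}}_{\mu_1}\cdot\norm{\chi_z}_{\mu_1} ~\le~ \norm{\fbar - \Ex{\chi_z|\calB}}_{\infty},
\]
since $\norm{\chi_z}_{\mu_1}=1$ and the $\mu_1$-norm is dominated by the $\ell_\infty$-norm. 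Combining the last two displays with the equality already proven gives $\ip{\fbar}{\chi_z}_{\mu_1} \ge \norm{\Ex{\chi_z|\calB}}_{\mu_1}^2 - \norm{\fbar - \Ex{\chi_z|\calB}}_{\infty}$, so it remains to show the discretization error is at most $\eta$.

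The one point requiring (mild) care — and the closest thing to an obstacle — is verifying $\norm{\fbar - \Ex{\chi_z|\calB}}_{\infty}\le\eta$. Here I would use that $\fbar$ is by definition the $\ell_\infty$-closest $\calB$-measurable function with values in the grid $D_\eta = \{0,\pm\eta,\dots,\pm1\}$, which (since $\Ex{\chi_z|\calB}$ is constant on each atom of $\calB$) amounts to rounding the atom-value of $\Ex{\chi_z|\calB}$ to the nearest grid point. Because $\eta = 1/\lceil 2/\eps_0\rceil$ so that $1/\eta\in\mathbb{Z}$, the grid $D_\eta$ partitions $[-1,1]$ into intervals of length $\eta$; every value in $[-1,1]$ lies within $\eta/2 \le \eta$ of some grid point, hence the rounding error on each atom is at most $\eta$, giving $\norm{\fbar - \Ex{\chi_z|\calB}}_{\infty}\le\eta$ and completing the proof. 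Everything here is routine once this bookkeeping is in place; no deeper tool beyond Proposition~\ref{prop:measurable-inner-product} and Cauchy--Schwarz is needed.
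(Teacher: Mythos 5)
Your proposal is correct and follows essentially the same route as the paper: the equality by linearity, Proposition~\ref{prop:measurable-inner-product} to pass to the conditional average, and the observation that the $D_\eta$-grid guarantees $\smallnorm{\Ex{\chi_z|\calB}-\fbar}_\infty\le\eta$. The only (cosmetic) difference is that the paper applies the proposition on the $\fbar$ side, writing $\ip{\fbar}{\chi_z}_{\mu_1}=\ip{\fbar}{\Ex{\chi_z|\calB}}_{\mu_1}$ and finishing with a pointwise estimate, whereas you apply it to $\Ex{\chi_z|\calB}$ and control the difference $\fbar-\Ex{\chi_z|\calB}$ via Cauchy--Schwarz; both give the same bound.
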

\begin{proof}
By linearity of the inner product, we have
\[
    \Ex{\chi}{\ip{\chi}{\chi_z}_{\mu_1}}
    ~=~ \ip{\Ex\chi}{\chi_z}_{\mu_1}
    ~=~ \ip{\fbar}{\chi_z}_{\mu_1}
    ~=~ \ip{\fbar}{\Ex{\chi_z|\calB}}_{\mu_1} \mcom
\]
where the last equality used \cref{prop:measurable-inner-product} and the fact that $\fbar$ is $\calB$-measurable.  Since $\Ex{\chi_z|\calB}$ takes values in $[-1,1]$ and $\fbar$ is the minimizer over all functions in $D_{\eta}^{\abs{\calB}}$, we must have $\smallnorm{\Ex{\chi_z|\calB} - \fbar}_{\infty} \leq \eta$. Using this pointwise bound, we get
\begin{align*}
  \ip{\fbar}{\Ex{\chi_z|\calB}}_{\mu_1}
  &~=~ \Ex{i \sim \mu_1}{\fbar(i) \cdot \Ex{\chi_z|\calB}(i)} \\
  &~\geq~ \Ex{i \sim \mu_1}{\inparen{\Ex{\chi_z|\calB}(i)}^2 - \eta \cdot \abs{\Ex{\chi_z|\calB}(i)}}
  ~\geq~  \norm{\Ex{\chi_z|\calB}}_{\mu_1}^2 - \eta \mper \qedhere
\end{align*}
\end{proof}
We next show that when $z \in \F_2^n$ is such that $\ip{g}{\chi_z^{\otimes k}}$ is large, then the norm of the conditional expectation $\Ex{\chi_z|\calB}$ is also large, and hence the sampling procedure finds a $\tilde{z}$ close to $z$. When we have a $z \in \calC_0$ with such a property, we can use $\tilde{z}$ to recover  $z$ using the unique decoding algorithm for $\calC_0$.
\begin{lemma}\label{lem:large-norm}
  Let $z \in \F_2^n$ be such that
  \[
    \ip{g}{\chi_z^{\otimes k}}_{\mu_k}  =
    \inparen{\frac{n}{d}}^{k-1} \cdot \ip{g}{\chi_z^{\otimes k}}_{\mu_1^{\otimes k}}
    \geq~ 2\beta \mper
  \]
  Then, we have $\smallnorm{\Ex{\chi_z|\calB}}_{\mu_1}^2 \geq (\beta/2)^{2/k}$.
\end{lemma}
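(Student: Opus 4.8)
The plan is to chain together three inequalities. First, since $h$ is the approximator to $g$ produced by \cref{theo:eff_weak_reg} with $\calF = \SCUT^{\otimes k}$ and $\delta = \beta$, and since $\chi_z^{\otimes k} \in \SCUT^{\otimes k}$, the approximation guarantee gives $\ip{g - (d/n)^{k-1} h}{\chi_z^{\otimes k}} \le \beta \cdot \abs{W}$ under the counting measure, which (dividing by $\abs{W} = d^{k-1}n$ and using $d$-regularity as in \cref{clm:decoding-correlation}) translates to
\[
\ip{g}{\chi_z^{\otimes k}}_{\mu_k} - \ip{h}{\chi_z^{\otimes k}}_{\mu_1^{\otimes k}} ~\le~ \beta \mper
\]
Combined with the hypothesis $\ip{g}{\chi_z^{\otimes k}}_{\mu_k} \ge 2\beta$, this yields $\ip{h}{\chi_z^{\otimes k}}_{\mu_1^{\otimes k}} \ge \beta$.

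Second, I would pass from $\chi_z$ to its conditional expectation $\Ex{\chi_z|\calB}$. Since $h = \sum_{j=1}^p c_j \cdot f_{j_1} \otimes \cdots \otimes f_{j_k}$ with each $f_{j_t}$ being $\calB$-measurable (as $\calB$ is the factor \emph{generated} by the $\inbraces{f_{j_t}}$), the tensor $h$ is measurable with respect to $\calB^{\otimes k}$. Therefore, applying \cref{prop:measurable-inner-product} coordinatewise — i.e. replacing $\chi_z$ by $\Ex{\chi_z|\calB}$ in each of the $k$ tensor slots, one slot at a time, and using that $h$ is constant on atoms of $\calB$ in each coordinate — gives
\[
\ip{h}{\chi_z^{\otimes k}}_{\mu_1^{\otimes k}} ~=~ \ip{h}{(\Ex{\chi_z|\calB})^{\otimes k}}_{\mu_1^{\otimes k}} ~\ge~ \beta \mper
\]
(I should be slightly careful here: the single-coordinate identity of \cref{prop:measurable-inner-product} is stated for functions on one space, so I would spell out the hybrid argument swapping one slot at a time, noting at each step that the partially-swapped tensor $h$ against the remaining $\chi_z$'s is still $\calB$-measurable in the slot being swapped.)

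Third, I would bound the left-hand side by a power of $\smallnorm{\Ex{\chi_z|\calB}}_{\mu_1}$. Writing $\bar g := \Ex{\chi_z|\calB}$, note $\smallnorm{\bar g}_\infty \le 1$ since $\chi_z$ takes values in $\{\pm 1\}$ and conditional averaging cannot increase the sup-norm. By \cref{theo:eff_weak_reg} we also have $\smallnorm{h}_{\mu_1^{\otimes k}} \le 2$. Then by Cauchy–Schwarz,
\[
\beta ~\le~ \ip{h}{\bar g^{\otimes k}}_{\mu_1^{\otimes k}} ~\le~ \norm{h}_{\mu_1^{\otimes k}} \cdot \norm{\bar g^{\otimes k}}_{\mu_1^{\otimes k}} ~=~ \norm{h}_{\mu_1^{\otimes k}} \cdot \norm{\bar g}_{\mu_1}^{k} ~\le~ 2 \cdot \norm{\bar g}_{\mu_1}^{k} \mper
\]
Rearranging gives $\smallnorm{\bar g}_{\mu_1}^2 \ge (\beta/2)^{2/k}$, which is exactly the claim. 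The main obstacle I anticipate is not any single estimate but getting the measure bookkeeping right in the second step — making sure the factor $\calB$ is defined so that $h$ really is $\calB^{\otimes k}$-measurable (it is, since $\calB$ is generated by all the $f_{j_t}$), and that the hybrid swap is valid at each stage; everything else is a direct application of the regularity theorem's norm bound plus Cauchy–Schwarz.
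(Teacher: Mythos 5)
Your proposal is correct and follows essentially the same route as the paper's proof: regularity guarantee to get $\ip{h}{\chi_z^{\otimes k}}_{\mu_1^{\otimes k}} \geq \beta$, then the conditional-expectation swap, then Cauchy--Schwarz with $\norm{h}_{\mu_1^{\otimes k}} \leq 2$. For the middle step the paper simply expands $\ip{h}{\chi_z^{\otimes k}}_{\mu_1^{\otimes k}} = \sum_j c_j \prod_t \ip{f_{j_t}}{\chi_z}_{\mu_1}$ and applies \cref{prop:measurable-inner-product} to each one-dimensional factor, which sidesteps the hybrid bookkeeping you were worried about; otherwise the arguments are identical.
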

\begin{proof}
Let $h$ be the approximating function obtained by applying~\cref{theo:eff_weak_reg} to $g$ with approximation error $\delta = \beta$.
 Note that we have $\norm{h}_{\mu_1^{\otimes k}} \leq 2$, and for any $f \in \SCUT^{\otimes k}$,
    \[
      \inparen{\frac{n}{d}}^{k-1} \cdot \ip{g - \inparen{\frac{d}{n}}^{k-1} \cdot h~}{f}_{\mu_1^{\otimes k}} ~\leq~ \delta \mper
    \]
    Using $f = \chi_z^{\otimes k}$ and $\delta = \beta$, we get
    \[
      \ip{h}{\chi_z^{\otimes k}}_{\mu_1^{\otimes k}} ~\geq~ 2\beta - \delta ~\geq~ \beta \mper
    \]
    Using \cref{prop:measurable-inner-product}, and the fact that $\calB$ is defined so that all functions in the decomposition of $h$ are (by definition) $\calB$-measurable, we have
    \begin{align*}
      \ip{h}{\chi_z^{\otimes k}}_{\mu_1^{\otimes k}}
      ~=~ \sum_{j=1}^p c_j \prod_{t = 1}^k \ip{f_{j_t}}{\chi_z}_{\mu_1}
      ~=~ \sum_{j=1}^p c_j \prod_{t = 1}^k \ip{f_{j_t}}{\Ex{\chi_z|\calB}}_{\mu_1}
      ~=~ \ip{h}{\inparen{\Ex{\chi_z|\calB}}^{\otimes k}}_{\mu_1^{\otimes k}} \mper
    \end{align*}
    Combining the above with Cauchy-Schwarz, we get
    \[
      \beta ~\leq~ \ip{h}{\chi_z^{\otimes k}}_{\mu_1^{\otimes k}}
      ~\leq~ \norm{h}_{\mu_1^{\otimes k}} \cdot \norm{\inparen{\Ex{\chi_z|\calB}}^{\otimes k}}_{\mu_1^{\otimes k}}
      ~=~ \norm{h}_{\mu_1^{\otimes k}} \cdot \norm{\Ex{\chi_z|\calB}}_{\mu_1}^{k} \mper
    \]
  Using $\norm{h}_{\mu_1^{\otimes k}} \leq 2$ then gives $\smallnorm{\Ex{\chi_z|\calB}}_{\mu_1}^2 \geq (\beta/2)^{2/k}$.
  \end{proof}
Using the above results, we can now complete the analysis of the algorithm.
\begin{proof}[Proof of \cref{thm:direct-sum-decoding}]
We first argue that for any codeword $z \in \calC_0$ such that $\dsum_W(z) \in \calL_{\beta}$, sampling a random function $\chi$ (with $\Ex{\chi} = \fbar$ for an appropriate $\fbar$) finds a $\tilde{z}$ close to $z$ with significant probability.
Let $\fbar \in D_{\eta}^{\calB}$ be the minimizer of $\smallnorm{\chi_z - \fbar}_{\infty}$, for such a $z \in \calC_0$. We have by \cref{clm:sampling} that $\Ex{\chi}{\ip{\chi}{\chi_z}_{\mu_1}} \geq \smallnorm{\Ex{\chi_z|\calB}}_{\mu_1}^2 - \eta$.
Since $\Delta(\tilde{y},\dsum_W(z)) \leq \nfrac12 - \beta$, we have by \cref{clm:decoding-correlation} that $\ip{g}{\chi_z^{\otimes k}}_{\mu_k} \geq 2\beta$. Thus, by \cref{lem:large-norm}, we have that $\smallnorm{\Ex{\chi_z|\calB}}_{\mu_1}^2 \geq (\nfrac{\beta}{2})^{2/k}$. Combining these, and using the lower bound on $\beta$, we get that
\[
\Ex{\chi}{\ip{\chi}{\chi_z}_{\mu_1}}
~~\geq~~ \inparen{\frac{\beta}{2}}^{2/k} - \eta
~~\geq~~ \frac12 + 2\eps_0 - \eta
~~\geq~~ \frac12 + \frac{3\eps_0}{2} \mper
\]
Since $\ip{\chi}{\chi_z}_{\mu_1}$ is the average of $n$ independent (not necessarily identical) random variables $\inbraces{\chi(i)\cdot \chi_z(i)}_{i \in [n]}$ in the range $[-1,1]$, we get by Hoeffding's inequality that
\[
\Pr{\chi}{\ip{\chi}{\chi_z}_{\mu_1} \leq \frac12 + \eps_0}
~\leq~ \Pr{\chi}{\abs{\ip{\chi}{\chi_z}_{\mu_1} - \Ex{\chi}{\ip{\chi}{\chi_z}_{\mu_1}}} \geq \frac{\eps_0}{2}}
~\leq~ \exp\inparen{-\eps_0^2 \cdot n/8} \mper
\]

Thus, given a good sample $\chi$ satisfying $\ip{\chi}{\chi_z}_{\mu_1} \geq \nfrac12 + \eps_0$, we can recover the above $z \in \calC_0$ such that $\dsum_{W}(z) \in \calL_{\beta}$, via the unique decoding algorithm for $\calC_0$. Also, given the right $\fbar$, we sample a good $\chi$ with probability at least $1 - \exp(-\eps_0^2 \cdot n/8)$. 
A union bound then gives
\[
\Pr{\calL = \calL_{\beta}} ~~\geq~~ 1 ~-~ \abs{\calL_{\beta}} \cdot \exp(-\eps_0^2 \cdot n/8) \mper
\]
Using $\beta \geq \sqrt{\eps}$, we get that $\abs{\calL_{\beta}} \leq (\nfrac{1}{\eps})$ by the Johnson bound, which yields the desired probability bound.
%
\paragraph{Running time.} Using \cref{theo:eff_weak_reg}, the decomposition $h$ can be computed in time $\tilde{O}(C_{\beta,k} \cdot \abs{W})$. Given the functions $f_1, \ldots, f_r$ forming the decomposition $h$, the factor $\calB$ can be computed in time $O(2^r \cdot n)$. For a chosen $\fbar$ in the sampling step, a sample $\chi$ can be computed in time $O(n)$, and the decoding problem for the corresponding $\tilde{z}$ can be solved in time $\calT_0$. Also, the distance $\Delta(\tilde{y},\dsum_W(z))$ can be computed in time $O(\abs{W})$. Since the total number of sampling steps is at most $(\nfrac{3}{\eta})^{\abs{B}}$ and the number of functions in the decomposition $h$ is $O(\nfrac{k^3}{\beta^2})$ from \cref{theo:eff_weak_reg}, we get that the total number of sampling steps is $(\nfrac{6}{\eps_0})^{2^{O(\nfrac{k^3}{\beta^2})}}$. Thus, the total running time is bounded by $\tilde{O}(C_{k,\beta,\eps_0} \cdot (\abs{W} + \calT_0))$, where $C_{k,\beta,\eps_0} = (\nfrac{6}{\eps_0})^{2^{O(\nfrac{k^3}{\beta^2})}}$. 
\end{proof}

\subsection{List Decoding of Direct-Product Codes}
\mnote{Maybe include direct-product in intro/prelims?}
We now show that a slight modification of the above algorithm for direct-sum codes can also be used for list decoding direct-product codes. For $W \subseteq [n]^k$ and $z \in \F_2^n$, the lifting $\dprod_W(z) \in (\F_2^k)^{W}$ is defined as
\[
\dprod_W(z) ~\defeq~ y \quad\text{s.t.}\quad y_{i_1,\ldots,i_k} = (z_{i_1},\ldots, z_{i_k}) \quad\forall (i_1,\ldots,i_k) \in W
\]
As before, $\dprod_{W}(\Cc_0) = \inbraces{\dprod_{W}(z) ~|~ z \in \Cc_0}$.
Since $\dprod_{W}(\Cc_0)$ is a code over alphabet $\F_2^k$, the distance is now close to 1. We prove the following theorem.
\begin{theorem}\label{thm:direct-product-decoding}
Let $\calC_0 \subset \F_2^n$ be a code with $\bias(\calC_0) \leq \eps_0$, which is unique-decodable to distance $\nfrac{(1-\eps_0)}{4}$ in time $\calT_0$. Let $W \subseteq [n]^k$ be a $d$-regular, $\tau$-splittable collection of tuples, and let $\calC = \dprod_W(\calC_0)$ be the corresponding direct-product lifting of $\calC_0$ with $\Delta(\calC) \geq 1-\eps$. Let $\beta$ be such that
\[
\beta ~\geq~ \max\inbraces{ \sqrt{\eps}, ~\inparen{2^{24} \cdot \tau \cdot k^3}^{1/2}, ~8 \cdot \inparen{\frac12+2\eps_0}^{k/6}, 2\cdot e^{-k/54} } \mper
\]
Then, there exists a randomized algorithm, which given $\tilde{y} \in (\F_2^k)^{W}$, recovers the list $\calL_{\beta}(\tilde{y}) \defeq \inbraces{y \in \calC ~|~ \Delta(\tilde{y}, y) \leq 1 - \beta}$ with probability $1-o(1)$, in time $\tilde{O}(C_{\beta,k,\eps_0} \cdot (\abs{W}+\calT_0))$, where $C_{k,\beta,\eps_0} = (\nfrac{6}{\eps_0})^{2^{O(\nfrac{k^3\log k}{\beta^2})}}$. 
\end{theorem}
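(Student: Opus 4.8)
The plan is to reduce the list decoding of the direct-product code $\dprod_W(\calC_0)$ to the direct-sum decoder of \cref{thm:direct-sum-decoding}, run on a family of \emph{partial-XOR projections} of the received word $\tilde y$. The starting observation is the inclusion--exclusion identity for the exact-match indicator: for a tuple $\bar{i}=(i_1,\dots,i_k)\in W$ and $z\in\F_2^n$, writing $\chi_z(i)=(-1)^{z_i}$,
\[
  \indi{\tilde y_{\bar{i}} = (z_{i_1},\dots,z_{i_k})}
  ~=~ 2^{-k}\sum_{S\subseteq[k]} (-1)^{\sum_{t\in S}\tilde y_{\bar{i},t}}\prod_{t\in S}\chi_z(i_t)\mper
\]
Averaging over $\bar{i}\sim W$, and defining $g^{(S)}\in\R^{[n]^k}$ to equal $(-1)^{\sum_{t\in S}\tilde y_{\bar{i},t}}$ on $W$ and $0$ elsewhere, and $v_z(S) \defeq \ip{g^{(S)}}{\, (\bigotimes_{t\in S}\chi_z)\otimes(\bigotimes_{t\notin S}\one)\,}_{\mu_k}$ (this tensor lies in $\SCUT^{\otimes k}$), we get $1-\Delta(\tilde y,\dprod_W(z)) = 2^{-k}\sum_{S}v_z(S)$, with $v_z(\emptyset)=1$.

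First I would locate, for each $z$ with $\dprod_W(z)\in\calL_\beta(\tilde y)$, a \emph{large} subset $S$ on which $v_z(S)$ is large. Sampling $S$ by including each $t\in[k]$ independently with probability $\tfrac12$ gives $\Ex{S}{v_z(S)} = \Pr{\bar{i}}{\tilde y_{\bar{i}}=(z_{i_1},\dots,z_{i_k})}\ge\beta$, because a block that disagrees with $z$ in a nonempty coordinate set $B(\bar{i})$ contributes $\Ex{S}{(-1)^{|S\cap B(\bar{i})|}}=0$. A Chernoff bound gives $\Pr{S}{|S|\ge k/3}\ge 1-e^{-\Theta(k)}$, so conditioning on $|S|\ge k/3$ keeps the expectation at least $\beta - e^{-\Theta(k)}\ge\beta/2$ using the hypothesis $\beta\ge 2e^{-k/54}$; hence there exists $S$ with $|S|\ge k/3$ and $v_z(S)\ge\beta/2$. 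The algorithm enumerates all $\le 2^k$ subsets $S$ with $|S|\ge k/3$.

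For each enumerated $S$ I would then re-run \cref{algo:direct-sum-decoding} on $g^{(S)}$, with the tensor $\chi_z^{\otimes k}$ everywhere replaced by the asymmetric tensor $f^{(z)}_S = (\bigotimes_{t\in S}\chi_z)\otimes(\bigotimes_{t\notin S}\one)$ and the final consistency check replaced by $\Delta(\tilde y,\dprod_W(z))\le 1-\beta$. The only genuinely new ingredient is the analog of \cref{lem:large-norm}: applying \cref{theo:eff_weak_reg} to $g^{(S)}$ with error $\beta/4$ and class $\SCUT^{\otimes k}$ (legitimate since $\beta\ge(2^{24}\tau k^3)^{1/2}$) yields $h$ with $\norm{h}_{\mu_1^{\otimes k}}\le2$ and $\ip{h}{f^{(z)}_S}_{\mu_1^{\otimes k}}\ge v_z(S)-\beta/4\ge\beta/4$; since each factor of each term of $h$ is measurable in the factor $\calB$ it generates, \cref{prop:measurable-inner-product} lets me replace $\chi_z$ by $\Ex{\chi_z|\calB}$ inside this inner product, and Cauchy--Schwarz with $\norm{\one}_{\mu_1}=1$ gives $\beta/4 \le 2\,\norm{\Ex{\chi_z|\calB}}_{\mu_1}^{|S|}$, i.e. $\norm{\Ex{\chi_z|\calB}}_{\mu_1}^2 \ge (\beta/8)^{2/|S|}\ge(\beta/8)^{6/k}$, which is $\ge\tfrac12+2\eps_0$ precisely because $\beta\ge 8(\tfrac12+2\eps_0)^{k/6}$. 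From here the remaining steps mirror the proof of \cref{thm:direct-sum-decoding}: \cref{clm:sampling} and a Hoeffding bound show that for the best $\calB$-measurable discretization $\fbar$ of $\Ex{\chi_z|\calB}$, a random $\chi_{\tilde z}$ with $\Ex{\chi_{\tilde z}}=\fbar$ satisfies $\ip{\chi_{\tilde z}}{\chi_z}_{\mu_1}\ge\tfrac12+\eps_0$ with probability $1-e^{-\Omega(\eps_0^2 n)}$, so $\Delta(\tilde z,z)\le(1-\eps_0)/4$ and the unique decoder of $\calC_0$ returns $z$; a union bound over $\calL_\beta(\tilde y)$ (of size $\poly(n,2^k)$ by the Johnson bound over alphabet $2^k$, using $\beta\ge\sqrt\eps$ and $\Delta(\calC)\ge1-\eps$) gives success with probability $1-o(1)$. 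Note that $W$ enters only through the already-established direct-sum machinery for the \emph{full} $k$-tuple collection, so no splittability facts about non-contiguous projections of $W$ are needed. The running time, per subset, is the $\widetilde O\big((6/\eps_0)^{2^{O(k^3/\beta^2)}}(|W|+\calT_0)\big)$ of \cref{thm:direct-sum-decoding} (one regularity decomposition plus an enumeration over $|\calB|\le2^{O(k^3/\beta^2)}$ atoms); multiplying by $2^k$ and a $\poly(k)$ slack yields the claimed bound $\widetilde O(C_{k,\beta,\eps_0}(|W|+\calT_0))$.

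I expect the main difficulty to be the bookkeeping in the previous paragraph rather than anything conceptual: one must push the entire chain of inequalities of \cref{thm:direct-sum-decoding} through with the asymmetric tensor $f^{(z)}_S$, checking that the $k-|S|$ all-ones factors are harmless (they contribute $1$ to every norm and to every relevant inner product), and --- crucially --- that the constraint $|S|\ge k/3$, which is forced by the Cauchy--Schwarz exponent $2/|S|$ and is exactly why the hypotheses involve $(\tfrac12+2\eps_0)^{k/6}$ and $e^{-k/54}$ rather than $(\tfrac12+2\eps_0)^{k/2}$, is simultaneously compatible with all the other requirements. A secondary point is to choose the discretization parameter and the Hoeffding window uniformly over the $2^k$ subsets, so that the failure probability survives the union bound over $S$ in addition to the union bound over the list.
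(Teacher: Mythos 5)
Your proposal is correct and follows essentially the same route as the paper: the identity $1-\Delta(\tilde y,\dprod_W(z))=2^{-k}\sum_S v_z(S)$ is exactly the paper's \cref{clm:dprod-correlation}, the Chernoff argument producing an $S$ with $|S|\ge k/3$ and $v_z(S)\ge\beta/2$ is the paper's subsequent claim, and the enumeration over all large subsets followed by the regularity decomposition of $g^{(S)}$ against the asymmetric tensor $(\bigotimes_{t\in S}\chi_z)\otimes(\bigotimes_{t\notin S}\one)\in\SCUT^{\otimes k}$ is precisely \cref{algo:direct-product-decoding}. Your bookkeeping of the hypotheses (the exponent $k/6$, the $2e^{-k/54}$ term, and the $2^{24}$ in the splittability requirement) matches the paper's parameter choices, so there is nothing further to add.
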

\mnote{Will need to check the conditions on $\beta$ and the running time.}
As in the case of direct-sum decoding, we will apply regularity to function supported on $W$, taking values in $\pmone$. Let $K \subseteq [k]$ and $z \in \F_2^n$. We define the functions $g^{(K)}, \chi_z^{(K)}: [n]^k \to \pmone$ as 
\begin{align*}
g^{(K)}(i_1, \ldots, i_k) &~\defeq~ 
\begin{cases}
\prod_{t \in K}(-1)^{\tilde{y}_{(i_1,\ldots,i_k), t}} & ~\text{if}~ (i_1,\ldots,i_k) \in W \\
0 & ~\text{otherwise}
\end{cases} \mper \\
\chi_z^{(K)}(i_1,\ldots,i_k) &~\defeq~ \prod_{t \in K} (-1)^{z_{i_t}} \mper
\end{align*}
We can now state the decoding problem in terms of the correlation of these functions. 
\begin{claim}\label{clm:dprod-correlation}
Let $z \in \F_2^n$, and let the functions $g^{(K)}$ and $\chi_z^{(K)}$ be as above. Then,
\[
1 - \Delta(\tilde{y}, \dprod_W(z)) ~=~
\Ex{K \subseteq [k]}{\ip{g^{(K)}}{\chi_z^{(K)}}_{\mu_k}}
~=~ 
\inparen{\frac{n}{d}}^{k-1} \cdot \Ex{K \subseteq [k]}{\ip{g^{(K)}}{\chi_z^{(K)}}_{\mu_1^{\otimes k}}}
\mper
\]
\end{claim}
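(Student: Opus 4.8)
The plan is to prove Claim~\ref{clm:dprod-correlation} by exactly mimicking the proof of Claim~\ref{clm:decoding-correlation}, replacing the single‑coordinate computation by a coordinate‑wise Fourier expansion over $\F_2^k$. First I would rewrite the left‑hand side purely combinatorially: since the alphabet of $\dprod_W(\calC_0)$ is $\F_2^k$, we have $1 - \Delta(\tilde{y}, \dprod_W(z)) = \Ex{(i_1,\ldots,i_k)\sim W}{\indicator{\tilde{y}_{(i_1,\ldots,i_k)} = (z_{i_1},\ldots,z_{i_k})}}$, i.e.\ the fraction of tuples of $W$ on which the received $\F_2^k$‑symbol agrees with the symbol of $\dprod_W(z)$.

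The one substantive step is the identity, valid for all $a,b\in\F_2^k$,
\[
\indicator{a=b} ~=~ \prod_{t=1}^{k}\frac{1 + (-1)^{a_t+b_t}}{2} ~=~ \frac{1}{2^k}\sum_{K\subseteq[k]}\prod_{t\in K}(-1)^{a_t+b_t}\mper
\]
Applying this with $a = \tilde{y}_{(i_1,\ldots,i_k)}$ and $b = (z_{i_1},\ldots,z_{i_k})$ and interchanging the (finite) sums, one gets $1 - \Delta(\tilde{y},\dprod_W(z)) = \Ex{K\subseteq[k]}{\Ex{(i_1,\ldots,i_k)\sim\mu_k}{\bigl(\prod_{t\in K}(-1)^{\tilde{y}_{(i_1,\ldots,i_k),t}}\bigr)\bigl(\prod_{t\in K}(-1)^{z_{i_t}}\bigr)}}$, where $K$ is uniform over the $2^k$ subsets of $[k]$. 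On the support of $\mu_k$, which is exactly $W$, the first product is $g^{(K)}(i_1,\ldots,i_k)$ and the second is $\chi_z^{(K)}(i_1,\ldots,i_k)$; hence the inner expectation equals $\ip{g^{(K)}}{\chi_z^{(K)}}_{\mu_k}$, which gives the first equality.

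For the second equality I would use, as in Claim~\ref{clm:decoding-correlation}, that $g^{(K)}$ vanishes outside $W$ and $\abs{W} = d^{k-1}n$ by $d$‑regularity: for any $f:[n]^k\to\R$,
\[
\ip{g^{(K)}}{f}_{\mu_k} ~=~ \frac{1}{\abs{W}}\sum_{\mathbf{i}\in [n]^k} g^{(K)}(\mathbf{i})\,f(\mathbf{i}) ~=~ \frac{n^k}{\abs{W}}\,\ip{g^{(K)}}{f}_{\mu_1^{\otimes k}} ~=~ \inparen{\frac{n}{d}}^{k-1}\ip{g^{(K)}}{f}_{\mu_1^{\otimes k}}\mper
\]
Taking $f = \chi_z^{(K)}$ and averaging over $K$ completes the claim. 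There is no real obstacle here: the whole statement is a routine identity, and the only ``content'' is the character expansion of the equality indicator over $\F_2^k$. Its purpose is simply to set up the subsequent step, where the regularity decomposition of~\cref{theo:eff_weak_reg} will be applied simultaneously to the $2^k$ functions $\{g^{(K)}\}_{K\subseteq[k]}$, so that the search over factor‑measurable functions recovers the direct‑product list just as it did for direct sums.
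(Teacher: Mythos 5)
Your proof is correct and is essentially the same as the paper's: the key step in both is the character identity $\indicator{a=b} = \Ex{K\subseteq[k]}{\prod_{t\in K}(-1)^{a_t+b_t}}$ (the paper writes the average over $K$ as an average over indicator bits $j_1,\ldots,j_k$ and works from the right-hand side toward the indicator, whereas you expand the indicator from the left, but the computation is identical). The measure-change for the second equality is likewise the same argument used for \cref{clm:decoding-correlation}.
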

\begin{proof}
The second equality follows from the fact that $g^{(K)}$ is supported on $W$ and from $d$-regularity, as in the case of direct sum. We focus on the proving the first equality.
\begingroup
\allowdisplaybreaks
\begin{align*}
\mathbb{E}_{K\subseteq [k]} \left[ \ip{g^{(K)}}{\chi_z^{(K)}}_{\mu_k} \right] &= 
\mathbb{E}_{K\subseteq [k]} \left[ \mathbb{E}_{(i_1,i_2,\cdots ,i_k) \sim W} \left[ g^{(K)}(i_1,i_2,\cdots ,i_k) \cdot \chi_z^{(K)}(i_1,i_2,\cdots ,i_k)\right]  \right] \\
&= \mathbb{E}_{(i_1,i_2,\cdots ,i_k) \sim W} \left[ \mathbb{E}_{K\subseteq [k]} \left[ g^{(K)}(i_1,i_2,\cdots ,i_k) \cdot \chi_z^{(K)}(i_1,i_2,\cdots ,i_k)\right]  \right]\\
&= \mathbb{E}_{(i_1,i_2,\cdots ,i_k) \sim W} \left[ \mathbb{E}_{K\subseteq [k]} \left[ \prod_{t\in K} (-1)^{\tilde{y}_{(i_1,i_2,\cdots ,i_k),t}} \cdot (-1)^{z_{i_t}} \right]  \right]\\
&= \mathbb{E}_{(i_1,i_2,\cdots ,i_k) \sim W} \left[ \mathbb{E}_{j_1,j_2,\cdots ,j_k \sim \{0,1\}} \left[ \prod_{t\in [k]} (-1)^{\left( \tilde{y}_{(i_1,i_2,\cdots ,i_k),t} + z_{i_t}\right) \cdot j_t} \right]  \right]\\
&= \mathbb{E}_{(i_1,i_2,\cdots ,i_k) \sim W} \left[ \prod_{t\in [k]} \left( \mathbb{E}_{j_t \sim \{0,1\}} \left[ (-1)^{\left( \tilde{y}_{(i_1,i_2,\cdots ,i_k),t} + z_{i_t}\right) \cdot j_t} \right]  \right) \right]\\
&= \mathbb{E}_{(i_1,i_2,\cdots ,i_k) \sim W} \left[ \prod_{t\in [k]} \left( \frac{1}{2}\cdot 1 + \frac{1}{2}\cdot (-1)^{\mathbb{1}_{\tilde{y}_{(i_1,i_2,\cdots ,i_k),t} \neq z_{i_t}}} \right) \right]\\
&= \mathbb{E}_{(i_1,i_2,\cdots ,i_k) \sim W} \left[ \prod_{t\in [k]} \left( \mathbb{1}_{\tilde{y}_{(i_1,i_2,\cdots ,i_k),t} = z_{i_t}} \right) \right]\\
&= \mathbb{E}_{(i_1,i_2,\cdots ,i_k) \sim W} \left[ \mathbb{1}_{\tilde{y}_{(i_1,i_2,\cdots ,i_k)} = (z_{i_1},z_{i_2},\cdots ,z_{i_k})} \right]\\
&= 1- \mathbb{P}_{(i_1,i_2,\cdots ,i_k) \sim W} \left[ \tilde{y}_{(i_1,i_2,\cdots ,i_k)} \neq (z_{i_1},z_{i_2},\cdots,z_{i_k}) \right] = 1-\Delta(\tilde{y},\dprod_W(z))
\end{align*}
\endgroup
\end{proof}
Using Chernoff bounds, we can prove the following corollary.
\begin{claim}
Let $\Delta(\tilde{y}, \dprod_W(z)) \leq 1-\beta$ and let $\beta \geq 2\cdot e^{-k/54}$. Then there exists $K \subseteq [k]$ with $\abs{K} \geq k/3$ such that
\[
\ip{g^{(K)}}{\chi_z^{(K)}}_{\mu_k} 
~=~ 
\inparen{\frac{n}{d}}^{k-1} \cdot \ip{g^{(K)}}{\chi_z^{(K)}}_{\mu_1^{\otimes k}}
~\geq~
\frac{\beta}{2}
\]
\end{claim}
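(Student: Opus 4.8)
The plan is to combine the identity of \cref{clm:dprod-correlation} with an elementary averaging argument over subsets $K \subseteq [k]$, using a Chernoff bound to control the contribution of small subsets.

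First I would rewrite the hypothesis: by \cref{clm:dprod-correlation} and $\Delta(\tilde{y},\dprod_W(z)) \le 1-\beta$, we have $\Ex{K \subseteq [k]}{\ip{g^{(K)}}{\chi_z^{(K)}}_{\mu_k}} = 1 - \Delta(\tilde{y},\dprod_W(z)) \ge \beta$, where $K$ ranges uniformly over subsets of $[k]$. Next I would record the trivial bound $\ip{g^{(K)}}{\chi_z^{(K)}}_{\mu_k} \le 1$ valid for every $K$: on each $(i_1,\dots,i_k) \in W$ the pointwise product $g^{(K)}\cdot\chi_z^{(K)}$ equals $\prod_{t \in K}(-1)^{\tilde{y}_{(i_1,\dots,i_k),t} + z_{i_t}} \in \set{-1,1}$, and $\mu_k$ is a probability measure supported on $W$, so the inner product lies in $[-1,1]$.

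Then I would split the expectation over $K$ according to whether $\abs{K} < k/3$ or $\abs{K} \ge k/3$. Since $\abs{K}$ is distributed as $\mathrm{Bin}(k,1/2)$, Hoeffding's inequality gives $\Pr{\abs{K} < k/3} \le \exp(-k/18) \le \exp(-k/54) \le \beta/2$, the last step using the hypothesis $\beta \ge 2e^{-k/54}$. Suppose, towards a contradiction, that $\ip{g^{(K)}}{\chi_z^{(K)}}_{\mu_k} < \beta/2$ for every $K$ with $\abs{K} \ge k/3$. Bounding the small-$K$ terms by $1$ and the large-$K$ terms by $\beta/2$ yields
\[
\beta ~\le~ \Ex{K}{\ip{g^{(K)}}{\chi_z^{(K)}}_{\mu_k}} ~<~ \Pr{\abs{K} < k/3} ~+~ \frac{\beta}{2}\,\Pr{\abs{K} \ge k/3} ~\le~ \frac{\beta}{2} + \frac{\beta}{2} ~=~ \beta ,
\]
a contradiction, where the strict middle inequality holds because $\Pr{\abs{K} \ge k/3} > 0$. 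Hence there exists $K$ with $\abs{K} \ge k/3$ and $\ip{g^{(K)}}{\chi_z^{(K)}}_{\mu_k} \ge \beta/2$. Finally, the displayed equality $\ip{g^{(K)}}{\chi_z^{(K)}}_{\mu_k} = (n/d)^{k-1}\ip{g^{(K)}}{\chi_z^{(K)}}_{\mu_1^{\otimes k}}$ follows exactly as in the proof of \cref{clm:dprod-correlation}, since $g^{(K)}$ is supported on $W$ and $\abs{W} = d^{k-1}n$ by $d$-regularity.

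There is no serious obstacle here: once the pointwise bound and the tail bound are in place, the statement is a one-line averaging inequality. The only mild care needed is in the constants — checking that the Chernoff tail for $\abs{K} < k/3$ is at most $\beta/2$ under the hypothesis $\beta \ge 2e^{-k/54}$ (comfortable, since Hoeffding already yields the stronger $e^{-k/18}$), and ensuring the averaging inequality is strict so as to reach a contradiction.
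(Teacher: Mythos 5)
Your proof is correct and follows essentially the same route as the paper's: rewrite the hypothesis as $\Ex{K}{\ip{g^{(K)}}{\chi_z^{(K)}}_{\mu_k}} \geq \beta$ via the preceding correlation identity, split the expectation over $K$ by size, control $\Pr{\abs{K} < k/3}$ by a Chernoff/Hoeffding tail bound absorbed into $\beta/2$ using $\beta \ge 2e^{-k/54}$, and derive a contradiction if all large-$K$ correlations were below $\beta/2$. Your version is in fact slightly tidier, since you bound each term of the split directly (using the pointwise bound $\abs{\ip{g^{(K)}}{\chi_z^{(K)}}_{\mu_k}} \le 1$) rather than passing through conditional expectations as the paper does.
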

\begin{proof}
From Chernoff bound, we get that 
\[
\mathbb{P}_{K\subseteq [k]} [|K|\leq k/3] = \mathbb{P}_{X_i \in \{0,1\} \forall i\in [k]} [\sum_{i\in [k]} X_i \leq (1-\frac{1}{3})\frac{k}{2}] \leq e^{-\frac{1}{3} \left( \frac{1}{3} \right)^2 \frac{k}{2}} = e^{-k/54}
\].

Suppose the claim is not true. Then $\ip{g^{(K)}}{\chi_z^{(K)}}_{\mu_k}< \frac{\beta}{2}$ for all $K\subseteq [k]$ with $|K|\geq k/3$. Then,
\begin{align*}
\mathbb{E}_{K\subseteq [k]} [\ip{g^{(K)}}{\chi_z^{(K)}}_{\mu_k}] & =\ \mathbb{E}_{K\subseteq [k]} [\ip{g^{(K)}}{\chi_z^{(K)}}_{\mu_k} \: \Big \vert \: |K| <k/3] \cdot \mathbb{P}_{K\subseteq [k]} [|K|<k/3] \\
&\quad + \mathbb{E}_{K\subseteq [k]} [\ip{g^{(K)}}{\chi_z^{(K)}}_{\mu_k} \:\Big\vert\: |K|\geq k/3] \cdot \mathbb{P}_{K\subseteq [k]} [|K|\geq k/3]\\
&\leq \mathbb{P}_{K\subseteq [k]} [|K|<k/3] + \mathbb{E}_{K\subseteq [k]} [\ip{g^{(K)}}{\chi_z^{(K)}}_{\mu_k} \:\Big\vert\: |K|\geq k/3] \\
&< \mathbb{P}_{K\subseteq [k]} [|K|\leq k/3] + \beta/2 \\
&\leq e^{-k/54} + \beta/2 \leq \beta/2+\beta/2 =\beta
\end{align*}

which is a contradiction, as
\[
\mathbb{E}_{K\subseteq [k]} [\ip{g^{(K)}}{\chi_z^{(K)}}_{\mu_k}] = 1 - \Delta(\tilde{y}, \dprod_W(z)) \geq \beta .
\]
\end{proof}
\mnote{We will apply the direct-sum decoding algorithm to $g^{(K)}$. The lower bound on $\abs{K}$ will be needed for parity sampling, though the choices of $k/3$ and $\beta/2$ are arbitrary.} 
As before, we will consider decompositions $h^{(K)}$ of the functions $g^{(K)}$, given by \cref{theo:eff_weak_reg} with respect to the class of functions $\calF = \SCUT^{\otimes k}$, since the functions $\chi_z^{(K)}$ also belong to $\calF$. 
The only change to the algorithm is the fact that now we consider all sufficiently large $K \subseteq [k]$.

\begin{algorithm}{List Decoding of Direct-Product }{$\tilde{y} \in \F_2^W$}{List $\calL \subseteq \calC$}\label{algo:direct-product-decoding}
\begin{itemize}
\item Let $\calL = \emptyset$. For each $K \subseteq [k]$, with $\abs{K} \geq k/3$:
\begin{itemize}
\item Obtain the approximator $h$ given by~\cref{theo:eff_weak_reg} for $\calF = \SCUT^{\otimes k}$, $\delta = \beta/4$, and the function $g^{(K)}:[n]^k \to \pmone$ defined as
\[
g^{(K)}(i_1, \ldots, i_k) ~\defeq~ 
\begin{cases}
\prod_{t \in K}(-1)^{\tilde{y}_{(i_1,\ldots,i_k), t}} & ~\text{if}~ (i_1,\ldots,i_k) \in W \\
0 & ~\text{otherwise}
\end{cases} 
\]
%
%
\item Let $h$ be of the form $h = \sum_{j=1}^p c_j \cdot f_{j_1} \otimes \cdots \otimes f_{j_k}$, with each $f_{j_t}: [n] \to \pmone$. Let $\calB$ be the factor determined by the functions $\inbraces{f_{j_t}}_{j \in [p], t \in [k]}$.
\item Let $\eta = 1/\lceil(2/\eps_0)\rceil$.
 For each $\calB$-measurable function $\fbar$ given by a value in $D_{\eta} \defeq \inbraces{0, \pm \eta, \pm 2\eta, \ldots, \pm 1}$ for every atom of $\calB$:
%
  \begin{itemize}
  \item Sample a random function $\chi:[n] \to \pmone$ by independently sampling $\chi(i) \in \pmone$ for each $i$, such that $\Ex{\chi(i)} = \fbar(i)$. Take $\tilde{z} \in \F_2^n$ to be such that $\chi = \chi_{\tilde{z}}$.
  \item If there exists $z \in \calC_0$ such that
    \[
      \Delta(\tilde{z},z) ~\leq~ \frac{(1-\eps_0)}{4}
      \quad \text{and} \quad
      \Delta(\tilde{y}, \dprod_W(z)) ~\leq~ 1 - \beta \mcom
    \]
    then $\calL \leftarrow \calL \cup \{\dsum_{W}(z)\}$.
  \end{itemize}
\end{itemize}
\item Return $\calL$.
\end{itemize}
\end{algorithm}
%



\section{Near-linear Time Decoding of Ta-Shma's Codes}\label{sec:concrete_dec}

We now proceed to prove our main result, namely~\cref{theo:main},
which establishes a near-linear time \emph{unique} decoding algorithm
for Ta-Shma's codes~\cite{TS17}. It will follow from the
regularity based list decoding algorithm for direct sum
codes,~\cref{thm:direct-sum-decoding}, applied to the decoding of a
slight modification of Ta-Shma's construction from~\cite{JQST20} that
yields a splittable collection of tuples for the direct sum.

\TheoMainUniqueDec*

We now state the properties and guarantees needed in our work of this
slightly modified version of Ta-Shma's direct sum construction of near
optimal $\epsilon$-balanced codes. To make the decoding task more
transparent, we will additionally require the base code in Ta-Shma's
construction have the following technical property.

\begin{definition}\label{def:sym_mult}
  We say that a code has symbol multiplicity $m \in \mathbb{N}$ if it can be obtained from
  another code by repeating each symbol of its codeword $m$ times.
\end{definition}


\begin{restatable*}{theorem}{TaShmaConsFact}[Ta-Shma's Codes (implicit in~\cite{TS17})]\label{fact:ta-shma_splittable_tuples}
  Let $c > 0$ be an universal constant. For every $\epsilon > 0$ sufficiently small, there exists    $k=k(\epsilon)$ satisfying $\Omega(\log(1/\epsilon)^{1/3}) \le k \le O(\log(1/\epsilon))$,
  $\epsilon_0 = \epsilon_0(\epsilon) > 0$, and positive integer
  $m=m(\epsilon) \le (1/\epsilon)^{o(1)}$ such that Ta-Shma's construction yields
  a collection of $\tau$-splittable tuples $W=\tuples{k} \subseteq [n]^k$ satisfying:
  \begin{itemize}
    \item[(i)] For every linear $\epsilon_0$-balanced code $\Cc_0 \subseteq \F_2^{n}$ with symbol multiplicity $m$, the direct sum code $\dsum_{W}(\Cc_0)$ is: \begin{itemize}
                  \item[(i.1)] $\epsilon$-balanced (parity sampling).
                  \item[(i.2)] if $\Cc_0$ has rate $\Omega(\epsilon_0^c/m)$,  then $\dsum_{W}(\Cc_0)$
                                has rate $\Omega(\epsilon^{2+o(1)})$    (near optimal rate)
                 \end{itemize}
    \item[(ii)] $\tau ~\le~ \exp(-\Theta(\log(1/\epsilon)^{1/6}))$  (splittability).
    \item[(iii)] $W$ is constructible in $\poly(\abs{W})$ time (explicit construction).                     
  \end{itemize}
\end{restatable*}

Ta-Shma's construction is based on a generalization of the zig-zag
product of Reingold, Vadhan and Wigderson~\cite{RVW00}. To make the
exposition more self-contained, we recall the slight modification
from~\cite{JQST20} in~\cref{appendix:ta-shma}, but it is not
exhaustive exposition. The interested reader is referred to
Ta-Shma~\cite{TS17} for the original construction for aspects not
covered here.

Ta-Shma's code construction requires an $\epsilon_0$-balanced base
code $\Cc_0 \subseteq \F_2^n$ whose distance will be amplified by
taking the direct sum with a carefully chosen collection of tuples $W$
yielding an $\epsilon$-balanced code $\Cc=\dsum_W(\Cc_0)$. Since we
our goal is to achieve near-linear time encoding and decoding of
$\Cc$, we take an ``off-the-shelf'' base code $\Cc_0$ that is linear
time encodable and decodable (near-linear time also suffices). A
convenient choice is the linear binary code family of
Guruswami--Indyk~\cite{GI05} that can be encoded and decoded in linear
time. The rate versus distance trade-off is at the so-called Zyablov
bound. In particular, it yields codes of distance $1/2-\epsilon_0$
with rate $\Omega(\epsilon_0^3)$, but for our applications rate
$\poly(\epsilon_0)$ suffices (or with some extra steps even
any rate depending only on $\epsilon_0$ suffices,
see~\cref{remark:choice_of_base_code}). We will
use~\cref{cor:base_code} implicit in~\cite{GI05}.

\begin{restatable*}{corollary}{BaseCodeGI}[Implicit in Guruswami--Indyk~\cite{GI05}]\label{cor:base_code}
   For every $\epsilon_0 > 0$, there exists a family of $\epsilon_0$-balanced binary
   linear codes $\Cc_0 \subseteq \F_2^{n}$ of rate $\Omega(\epsilon_0^3)$ which can be encoded in 
   $O_{\epsilon_0}(n)$ time and can be decoded in $O(\exp(\poly(1/\epsilon_0)) \cdot n)$
   time from up to a fraction $1/4-\epsilon_0$ of errors.
   Furthermore, every code in the family is explicitly specified given a binary
   linear code of blocklength $\poly(1/\epsilon_0)$ which can be constructed in
   probabilistic $O(\poly(1/\epsilon_0))$ or deterministic $2^{O(\poly(1/\epsilon_0))}$ time.
\end{restatable*}

We first prove the (gentle) \emph{list} decoding result of Ta-Shma's codes. 
\TheoMainGentleListDec*

\begin{proof}
 We start by dealing with a simple technical issue of making the base code
 in Ta-Shma's construction have the required symbol multiplicity.
 Let $\Cc_0' \subseteq \F_2^{n'}$ be an $\epsilon_0$-balanced code
 from~\cref{cor:base_code} which we will use to obtain a base code in
 Ta-Shma's construction where $\epsilon_0 > 0$ is a suitable value
 prescribed by this construction.

 Ta-Shma's construction then takes $\Cc_0' \subseteq \F_2^{n'}$ and
 forms a new code $\Cc_0 \subseteq \F_2^n$ by repeating each codeword symbol $m \le (1/\epsilon)^{o(1)}$ times.
 By~\cref{claim:dec_replication_lifting}, $\Cc_0$ is an $\epsilon_0$-balanced code that can be unique decoded within
 the same (fractional) radius of $\Cc_0'$ in time $\calT_0(n) = r \cdot \calT_0'(n') + \widetilde{O}(r^2 \cdot n')$,
 where $\calT_0(n)'$ is the running time of an unique decoder for $\Cc_0'$. Since by~\cref{cor:base_code}
 $\calT_0(n') = O(\exp(\poly(1/\epsilon_0)) \cdot n')$ and $\epsilon_0 \gg \epsilon$, the decoding time of $\Cc_0$ becomes $\calT_0(n)= O(\exp(\poly(1/\epsilon)) \cdot n)$.

 Let $W=\tuples{k}$ be a collection of tuples from Ta-Shma's construction~\cref{fact:ta-shma_splittable_tuples} so that
 $\Cc = \dsum_W(\Cc_0)$ is $\epsilon$-balanced, $\tau \le \exp(-\Theta(\log(1/\epsilon)^{1/6}))$ and $k=\Omega(\log(1/\epsilon)^{1/3})$.
 We will invoke our list decoding algorithm~\cref{thm:direct-sum-decoding} whose list decoding
 radius $1/2-\beta$ has to satisfy
 \[
  \beta ~\geq~ \max\inbraces{\sqrt{\eps}, ~\inparen{2^{20} \cdot \tau \cdot k^3}^{1/2}, ~2 \cdot \inparen{\frac12+2\eps_0}^{k/2}} \mper
 \]
 Using our values of $\tau$ and $k$ together with the fact that $\eps_0 < 1$ is  bounded away form $1$ by a constant amount gives
 \[
  \beta ~\geq~ \max\inbraces{\sqrt{\eps}, ~, ~\exp(-\Theta((\log(1/\epsilon))^{1/6})), \exp(-\Theta((\log(1/\epsilon))^{1/3}))} \mper
 \]
 Hence, we can take $\beta = \exp(-\Theta(\log(1/\epsilon)^{1/6}))$. Now, we compute the list decoding
 running proving a (crude) upper bound on its dependence on $\epsilon$. By~\cref{thm:direct-sum-decoding},
 the list decoding time
 $$
 \tilde{O}(C_{\beta,k,\eps_0} \cdot (\abs{W}+\calT_0(n))), 
 $$
 where $C_{k,\beta,\eps_0} = (\nfrac{6}{\eps_0})^{2^{O(\nfrac{k^3}{\beta^2})}}$. For our choices of parameters,
 this decoding time can be (crudely) bounded by $\tilde{O}(\exp(\exp(\poly(1/\epsilon))) \cdot N)$.
\end{proof}

The gentle \emph{list} decoding theorem above readily implies our main
result for \emph{unique} decoding if we are only interested in
$\widetilde{O}_{\epsilon}(N)$ decoding time without a more precise
dependence on $\epsilon$. We prove our main result,~\cref{theo:main},
for \emph{unique} decoding making more precise the dependence of the
running time on $\epsilon$.

\begin{proof}{Proof of~\cref{theo:main}}
  We proceed as in the proof of~\cref{theo:gentle_list_decoding}
  expect that we take $\beta=1/4$ in the list decoding radius $1/2-\beta$
  so that by performing list decoding we can recover all codewords in
  the unique decoding radius of the corrupted codeword regardless of
  the bias of the code $\Cc_{N,\epsilon,\alpha}$.

  We now recompute the running time.
  By~\cref{thm:direct-sum-decoding},
  the list decoding time
  $$
  \tilde{O}(C_{\beta,k,\eps_0} \cdot (\abs{W}+\calT_0(n))), 
  $$
  where $C_{k,\beta,\eps_0} = (\nfrac{6}{\eps_0})^{2^{O(\nfrac{k^3}{\beta^2})}}$. For our choices of parameters,
  this decoding time can be (crudely) bounded by $\tilde{O}(\exp(\exp(\polylog(1/\epsilon))) \cdot N)$.
\end{proof}

\subsection{Choosing the Base Code}

We now describe the (essentially) ``off-the-shelf'' base codes from
Guruswami and Indyk~\cite{GI05} which we use in Ta-Shma's
construction. We will need to prove that balanced codes can be easily
obtained from~\cite{GI05}. The argument is quite simple and borrows
from standard considerations related to the Zyablov and
Gilbert--Varshamov bounds.

\BaseCodeGI

\begin{theorem}[Guruswami--Indyk~\cite{GI05}, \cf Theorem 5]\label{theo:guruswami_indyk_cor}
   For every $\gamma > 0$ and for every $0 < R < 1$, there exists a family of binary
   linear \emph{concatenated} codes of rate $R$, which can be encoded in linear time and can be decoded
   in linear time from up to a fraction $e$ of errors, where
   \begin{equation}\label{eq:half_zyablov_bound}
      e ~\ge~ \max_{R < r < 1} \frac{(1-r-\gamma) \cdot H^{-1}_2(1-R/r)}{2}.
   \end{equation}
   $H^{-1}_2(x)$ is defined as the unique $\rho$ in the range $0 \le \rho \le 1/2$ satisfying $H_2(\rho) =x$.
   Every code in the family is explicitly specified given a constant sized binary linear code which
   can be constructed in probabilistic $O(\log(1/\gamma)R^{-1}/\gamma^4)$ or deterministic $2^{O(\log(1/\gamma)R^{-1}/\gamma^4)}$ time~\footnote{Note that
   dependence $\log(1/\gamma)R^{-1}/\gamma^4$ is slightly worse than that claimed in~\cite{GI05}, but not qualitatively relevant here nor in~\cite{GI05}.}.
\end{theorem}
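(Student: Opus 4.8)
The plan is to recall the standard concatenation construction underlying Theorem~5 of~\cite{GI05}; since every ingredient is already available there, the ``proof'' amounts to assembling the pieces and doing the parameter accounting. Fix $\gamma > 0$, $0 < R < 1$, and an auxiliary parameter $r$ with $R < r < 1$, to be optimized over at the end. First I would take as outer code an $\F_{q_{\out}}$-linear code from the linear-time near-MDS family of~\cite{GI05} (the same family used as $\calC_1$ in~\cref{thm:near_mds_main}): rate $r$, relative distance at least $1 - r - \gamma$, over an alphabet $q_{\out} = 2^{k_{\inn}}$ with $k_{\inn} = O(\log(1/\gamma)\cdot r^{-1}/\gamma^4)$, linear-time encodable, and equipped with a linear-time decoder that corrects every pattern of $e_{\out}$ fractional errors and $s_{\out}$ fractional erasures satisfying $2e_{\out} + s_{\out} < 1 - r - \gamma$. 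As inner code I would take a binary linear code $\calC_{\inn} \subseteq \F_2^{n_{\inn}}$ of dimension $k_{\inn}$ and rate $R/r$ whose relative distance is at least $H_2^{-1}(1 - R/r) - o(1)$; such a code exists by the Gilbert--Varshamov bound for linear codes, and since $n_{\inn}, k_{\inn} = O_{R,\gamma}(1)$ a good $\calC_{\inn}$ can be produced by Varshamov's greedy construction deterministically, or by sampling a uniformly random $k_{\inn}\times n_{\inn}$ generator matrix (which has the required distance with high probability) probabilistically. The stated construction-time bounds then follow from $k_{\inn} = O(\log q_{\out})$ and the quoted dependence of $q_{\out}$ on $R$ and $\gamma$. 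Concatenating, $\calC = \calC_{\out} \circ \calC_{\inn}$ is binary linear of rate $r\cdot(R/r) = R$ and relative distance at least $(1-r-\gamma)\cdot H_2^{-1}(1 - R/r)$; it is explicitly specified once the constant-sized $\calC_{\inn}$ is fixed, and encoding is linear time (outer encoding by assumption, inner encoding $O(1)$ per block).

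The substantive step is the linear-time decoder correcting a fraction $e \le \tfrac12(1-r-\gamma)H_2^{-1}(1-R/r)$ of errors, which I would carry out in the Generalized Minimum Distance (GMD) / errors-and-erasures style. Given a received word, for each of the $n_{\out}$ inner blocks decode by brute force ($O(1)$ time per block) to the nearest codeword of $\calC_{\inn}$, recording its Hamming distance to the received block as a confidence score. For a threshold $\theta$, convert the outer positions whose score exceeds $\theta$ into erasures and keep the decoded symbol elsewhere; then run the linear-time outer errors-and-erasures decoder. A counting argument over the choice of $\theta$ --- exactly the classical GMD analysis, and in the same spirit as~\cref{lem:derandomized_decoding_from_distributions} --- shows that if the overall fraction of corrupted bits is at most $e$, then for a suitable (indeed random) $\theta$ the induced outer pattern satisfies $2e_{\out} + s_{\out} < 1-r-\gamma$, so the outer decoder recovers the message. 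Derandomize by trying all $O(n_{\inn})$ distinct score values; since $n_{\inn} = O_{R,\gamma}(1)$, the total running time stays $O_{R,\gamma}(n)$ where $n$ is the blocklength of $\calC$. Finally, maximize the resulting radius over $r \in (R,1)$ to obtain~\eqref{eq:half_zyablov_bound}.

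The main obstacle is that every component must be genuinely \emph{linear time}: in particular, that the near-MDS outer family of~\cite{GI05} supports \emph{simultaneous} error-and-erasure correction in linear time (not merely error-only decoding), and that the GMD threshold search plus the bookkeeping of turning inner-decoding outputs into outer symbols or erasures introduces no extra logarithmic factor. Both are supplied by~\cite{GI05}, so the work here is limited to verifying that the rate is exactly $R$, that the distance of the concatenation yields precisely the half-Zyablov quantity after optimizing over $r$, and that the inner-code construction-time bound matches the claimed dependence on $\gamma$ and $R$.
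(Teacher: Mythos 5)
Your proposal is correct, but note that the paper does not prove this statement at all: it is imported verbatim as Theorem~5 of~\cite{GI05}, and the only argument the thesis supplies in this vicinity is the short derivation of \cref{cor:base_code} from it (choosing $r=\Theta(\epsilon_0)$, $R=\Theta(\epsilon_0^3)$ and imposing balancedness on the inner codes). Your reconstruction --- concatenating the linear-time near-MDS outer family of~\cite{GI05} with a GV-bound inner code and decoding by GMD with errors-and-erasures outer decoding, then optimizing over $r$ --- is exactly the argument in the cited source, and the parameter accounting (rate $R$, half-Zyablov radius, constant-sized inner code found greedily or at random) all checks out.
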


As stated the codes in~\cref{theo:guruswami_indyk_cor} are not
necessarily balanced. We will see shortly that this can be easily
achieved by choosing balanced inner codes in the concatenated code
construction of Guruswami--Indyk~\cite{GI05}. To compute bounds on the
parameters, we will use the following property about binary entropy.

\begin{fact}[\cite{GRS23},\cf Lemma 3.3.9 abridged]\label{fact:inverse_of_bin_entropy}
  Let $H_2^{-1}$ be the inverse of the restriction of $H_2$ to
  $[0,1/2]$ (where $H_2$ is bijective).  For every small enough
  $\epsilon > 0$,
  $$
  H_2^{-1}(x-\epsilon^2/C_2) ~\ge~ H_2^{-1}(x) - \epsilon,
  $$
  where $C_2$ is a constant.
\end{fact}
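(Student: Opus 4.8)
This is Fact 3.3.9 (abridged) about the inverse binary entropy function, stated as \cref{fact:inverse_of_bin_entropy}:

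$$H_2^{-1}(x - \epsilon^2/C_2) \geq H_2^{-1}(x) - \epsilon$$

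for every small enough $\epsilon > 0$, where $C_2$ is a constant.

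Let me think about how to prove this...

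The plan is to reduce this statement about the inverse binary entropy function to a one-variable convexity estimate for $H_2$ itself. Writing $p := H_2^{-1}(x) \in [0,\tfrac12]$, so that $x = H_2(p)$, and noting that $H_2$ is an increasing bijection from $[0,\tfrac12]$ onto $[0,1]$, the claimed inequality $H_2^{-1}(x-\epsilon^2/C_2) \ge p - \epsilon$ is — whenever $p \ge \epsilon$, so that $p-\epsilon$ lies in $[0,\tfrac12]$ — equivalent to
$$
H_2(p) - H_2(p-\epsilon) ~\ge~ \frac{\epsilon^2}{C_2}, \qquad \epsilon \le p \le \tfrac12 .
$$
(When $p < \epsilon$ the desired inequality is trivial, since its left-hand side is nonnegative while $p - \epsilon < 0$; throughout I assume $x \ge \epsilon^2/C_2$ so that $H_2^{-1}$ is evaluated inside its domain, or else adopt the convention $H_2^{-1}(y) = 0$ for $y \le 0$.)

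To prove the displayed estimate, first I would record the pointwise derivative bound $H_2'(t) = \log_2\!\frac{1-t}{t} \ge \frac{1-2t}{\ln 2}$ for $t \in (0,\tfrac12]$, which follows since $g(t) := \ln\!\frac{1-t}{t} - (1-2t)$ satisfies $g(\tfrac12) = 0$ and $g'(t) = 2 - \tfrac{1}{t(1-t)} \le 2 - 4 < 0$ on $(0,\tfrac12]$ (using $t(1-t)\le \tfrac14$), hence $g \ge 0$ there. Integrating this bound gives, for $\epsilon \le p \le \tfrac12$,
$$
H_2(p) - H_2(p-\epsilon) ~=~ \int_{p-\epsilon}^{p} H_2'(t)\,dt ~\ge~ \frac{1}{\ln 2}\int_{p-\epsilon}^{p}(1-2t)\,dt ~=~ \frac{\epsilon(1-2p) + \epsilon^2}{\ln 2} ~\ge~ \frac{\epsilon^2}{\ln 2},
$$
the final step using $p \le \tfrac12$. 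Thus $C_2 = \ln 2$ (or any larger constant) works, and combined with the reduction above this proves the fact.

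I do not expect a genuine obstacle here; the only point worth a remark is the degenerate behavior of $H_2$ at the endpoints — in particular, $H_2^{-1}$ fails to be Lipschitz near $x = 1$, having a vertical tangent there, which is precisely why the perturbation is $\epsilon^2$ on the input side but only $\epsilon$ on the output side. The integration argument handles this automatically: near $p = \tfrac12$ the entire gain $\epsilon^2/\ln 2$ comes from the quadratic term, while away from $\tfrac12$ it is dominated by the linear term $\epsilon(1-2p)/\ln 2$. The boundary case $p - \epsilon = 0$ causes no trouble, since $H_2$ is continuous at $0$ and $H_2'$ is (improperly) integrable near $0$, so the integral identity still holds.
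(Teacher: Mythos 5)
Your proof is correct. Note that the paper itself offers no proof of this statement: it is recorded as a \emph{Fact} imported from \cite{GRS23} (Lemma 3.3.9 there), so there is no in-paper argument to compare against. Your reduction to the equivalent statement $H_2(p)-H_2(p-\epsilon)\ge \epsilon^2/C_2$ for $p=H_2^{-1}(x)$, followed by integrating the elementary derivative bound $H_2'(t)\ge (1-2t)/\ln 2$ on $(0,1/2]$, is a clean self-contained derivation (yielding the explicit constant $C_2=\ln 2$), and your handling of the edge cases ($p<\epsilon$, the domain of $H_2^{-1}$ near $0$, and the improper integrability of $H_2'$ at $t=0$) is sound.
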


\begin{proof}[Proof of~\cref{cor:base_code}]
  To achieve a final binary code of rate $R$, Guruswami and
  Indyk~\cite{GI05} concatenate an outer code of rate $r > R$ and
  distance $1-r-\gamma$ (over a non-binary alphabet of size $O_{\gamma}(1)$)
  with an inner binary linear code of rate $R/r$
  at the GV bound whose distance $\rho \in [0,1/2]$ satisfy $R/r =
  1-H_2(\rho)$ (since it is at the GV bound), or equivalently $\rho =
  H_2^{-1}(1-R/r)$.
  By choosing $\gamma= \Theta(\epsilon_0)$ and $R= \Theta(\epsilon_0^3)$ in~\cref{theo:guruswami_indyk_cor},
  the decoding error $e$ can be lower bounded by letting $r = \Theta(\epsilon_0)$ so
  that~\cref{fact:inverse_of_bin_entropy} implies that~\cref{eq:half_zyablov_bound} becomes
  $$
  e ~\ge~ \max_{R < r < 1} \frac{(1-r-\gamma) \cdot H^{-1}_2(1-R/r)}{2} ~\ge~ \frac{1}{4} - \epsilon_0.
  $$  
  To obtain codes that are $\epsilon_0$-balanced, we require that the
  inner codes used in this code concatenation not only
  lie on the Gilbert--Varshamov bound but are also balanced. It is
  well known that with high probability a random
  binary linear code at the GV bound designed to have minimum distance
  $1/2-\gamma/2$ also has maximum distance at most $1/2+\gamma/2$, \ie the
  code is $\gamma$-balanced. Therefore, we assume that our inner codes
  are balanced.

  For our concrete choices of parameters, $\rho =
  1/2-\Theta(\epsilon_0)$ and we also require the inner code to be
  $\Theta(\epsilon_0)$-balanced. Note that any non-zero codeword
  of the concatenated is obtained as follows: each
  of the $\ge (1-r-\gamma)$ non-zero symbols of the outer codeword
  is replaced by an inner codeword of bias  bias $\Theta(\epsilon_0)$
  and the remaining $\le r + \gamma$ zero symbols are mapped to zero
  (since the inner code is linear). Hence, the bias of the concatenated
  codeword is at most
  $$
  (1-r-\gamma) \cdot \Theta(\epsilon_0) ~+~ 1 \cdot (r + \gamma),
  $$
  which can be taken to be $\epsilon_0$ by suitable choices of hidden constants.
\end{proof}

\begin{remark}\label{remark:choice_of_base_code}
   Guruswami--Indyk~\cite{GI05} codes have several nice properties
   making them a convenient choice for base codes in Ta-Shma's construction,
   but they are not crucial here.
   We observe that for our purposes we could have started with any family of
   good binary linear codes admitting near-linear time encoding and decoding.
   From this family, we could boost its distance
   using a simpler version of Ta-Shma's construction (rounds I and II of~\cite{JQST20}[Section 8])
   and our near-linear time decoder~\cref{thm:direct-sum-decoding} for direct sum.
   This would result in an alternative family of linear binary
   $\epsilon_0$-balanced codes of rate $\Omega(\epsilon_0^{2+\alpha})$, for some arbitrarily
   small constant $\alpha >0$, that can be encoded and decoded in near-linear time.
   We also point out that for these base codes  any rate $\poly(\epsilon_0)$ suffices
   our purposes.
\end{remark}

To handle the technical requirement of a base code in Ta-Shma's
construction having a symbol multiplicity property
(\cf~\cref{def:sym_mult}), we use the following observation.

\begin{claim}\label{claim:dec_replication_lifting}
  Let $\Cc_0 \subseteq \F^n_2$ be an $\epsilon_0$-balanced linear code of dimension $D_0$.
  Suppose that $\Cc_0$ is uniquely decodable within (fractional) radius $\delta_0 \in (0,1]$ in time $\calT_0(n)$.
  Let $m \in \mathbb{N}$ and $\Cc \subseteq \F^{m\cdot n}_2$ be the code formed by replicating $m$ times each
  codeword from $\Cc_0$, \ie
  $$
  \Cc \coloneqq \set{z_1\cdots z_m \in \F^{m\cdot n}_2 \mid z_1=\cdots=z_m \in \Cc_0}.
  $$
  Then, $\Cc$ is an $\epsilon_0$-balanced linear code of dimension $D_0$ that
  can be uniquely decoded within (fractional) radius $\delta_0$ in time $m\cdot \calT_0(n) + \widetilde{O}(m^2 \cdot n)$.
\end{claim}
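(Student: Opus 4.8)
The plan is to verify the structural properties — linearity, dimension, and the $\epsilon_0$-balanced condition — which are all immediate, and then to give a simple blockwise reduction to the assumed decoder for $\Cc_0$.

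For the structural part, I would observe that the $m$-fold replication map $\phi \colon \F_2^n \to \F_2^{m\cdot n}$, $z \mapsto (z,\dots,z)$, is $\F_2$-linear and injective, so $\Cc = \phi(\Cc_0)$ is a linear code with $\dim \Cc = \dim \Cc_0 = D_0$. Replication preserves the fraction of coordinates equal to $1$, so for every nonzero $z \in \Cc_0$ we have $\bias(\phi(z)) = \bias(z) \le \epsilon_0$, and hence $\Cc$ is $\epsilon_0$-balanced; equivalently $\Delta(\phi(z_1),\phi(z_2)) = \Delta(z_1,z_2)$ for all $z_1,z_2 \in \F_2^n$, so $\Cc$ has the same relative distance as $\Cc_0$ and in particular admits at most one codeword within (fractional) radius $\delta_0$ of any word.

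For the decoding guarantee I would use the following algorithm: on input $\tilde y \in \F_2^{m\cdot n}$, split it into blocks $\tilde y = (\tilde y_1,\dots,\tilde y_m)$ with $\tilde y_j \in \F_2^n$, run the assumed radius-$\delta_0$ unique decoder of $\Cc_0$ on each $\tilde y_j$ to obtain a candidate $z^{(j)} \in \Cc_0$ (or failure), and for each successful candidate output $\phi(z^{(j)})$ whenever $\Delta(\tilde y, \phi(z^{(j)})) \le \delta_0$. Correctness is the crux: if $z \in \Cc_0$ satisfies $\Delta(\tilde y, \phi(z)) \le \delta_0$ then, since $\Delta(\tilde y, \phi(z)) = \tfrac1m \sum_{j=1}^m \Delta(\tilde y_j, z)$, some block $\tilde y_{j^\ast}$ satisfies $\Delta(\tilde y_{j^\ast}, z) \le \delta_0$, so the radius-$\delta_0$ decoder for $\Cc_0$ applied to $\tilde y_{j^\ast}$ must return $z$ (the unique codeword of $\Cc_0$ within radius $\delta_0$ of $\tilde y_{j^\ast}$); hence $z = z^{(j^\ast)}$ appears among the candidates and passes the verification step. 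Since $\Cc$ has the same relative distance as $\Cc_0$, at most one candidate can pass the check, so the output is well-defined and equals the unique codeword within radius $\delta_0$ of $\tilde y$.

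Finally, for the running time: splitting $\tilde y$ costs $O(m\cdot n)$, the $m$ calls to the $\Cc_0$ decoder cost $m \cdot \calT_0(n)$ in total, and computing $\Delta(\tilde y, \phi(z^{(j)}))$ for each of the at most $m$ candidates costs $O(m\cdot n)$ each, hence $O(m^2\cdot n)$ overall; summing gives $m \cdot \calT_0(n) + \widetilde O(m^2 \cdot n)$ as claimed. The only conceptually substantive step is the averaging observation that forces one block to lie within $\delta_0$ of the true codeword, which is what reduces the whole decoding task to the per-block decoder; everything else is routine bookkeeping, so I do not anticipate a genuine obstacle here.
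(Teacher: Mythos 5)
Your proposal is correct and follows essentially the same route as the paper's proof: the structural properties are noted as immediate, the key step is the same averaging observation that $\Delta(\tilde y,\phi(z)) = \frac1m\sum_j \Delta(\tilde y_j, z) \le \delta_0$ forces some block to lie within the unique decoding radius of $\Cc_0$, and the algorithm (decode each block, verify each candidate against the full word) and running-time accounting match. Nothing further is needed.
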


\begin{proof}
  The only non-immediate property is the unique decoding guarantees of $\Cc$.
  Given $\tilde{y} \in \F_2^{m\cdot n}$ within $\delta_0$ (relative) distance of $\Cc$.
  Let $\beta_i$ be the fraction of errors in the $i$th $\F_2^n$ component $\tilde{y}$.
  By assumption $\E_{i \in [m]} \beta_i \le \delta_0$, so there is at least one of such component
  that can be correctly uniquely decoded. 
  We issue unique decoding calls for $\Cc_o$ on each component $i \in [m]$. For each successful
  decoding say $z \in \Cc_0$, we let $y = z\ldots z \in \F_2^{m\cdot n}$ and check whether
  $\Delta(\tilde{y},y) \le \delta_0$ returning $y$ if this succeeds. Finally, observe that
  this procedure indeed takes at most the claimed running time.
\end{proof}




\chapter{Making AEL Amplification Achieve List Decoding Capacity} \label{chap:capacity}
What is the optimal error correction radius for a given rate $R \in (0,1)$? It is not difficult to use the Singleton bound to show that the error correction radius, even with list decoding, cannot be more than $1-R$, and random codes of rate $R$ are list decodable upto radius $1-R-\eps$. However, an explicit code family with such a strong error correction guarantee remained elusive until the works of Parvaresh and Vardy \cite{PV05} and Guruswami and Rudra \cite{GR08}, who showed that the folded Reed-Solomon codes provide such guarantees for a large enough folding parameter.

Since then, a number of codes that achieve list decoding capacity have been discovered, including some with better alphabet size, list size and/or decoding time \cite{GW11, Kop15, KMRZS17, GX22, KRSW23, GHKS24}. However, almost all of these continue to rely upon the interpolation based techniques, and are therefore based on algebra. 

In this chapter, we obtain new codes achieving list decoding capacity, based on non-algebraic properties such as spectral expansion. Before going into our results, we mention a few reasons why such codes are of interest.

\begin{enumerate}
\item We would like new techniques for studying list decodability other than interpolation in the hope that the new techniques will enjoy additional flexibility. There are numerous examples where expanders and other combinatorial operations have been used to replace algebra \cite{BSS04, Dinur07, Meir13}. As examples for such flexibility, one would like such codes to have features such as the LDPC property, linear-time unique decodability, etc. Moreover, a graph-based code achieving capacity would open up an avenue towards achieving list decoding capacity with truly linear-time decoding.
\item While the covering lemma of \cref{chap:framework} works for any code, are there properties of a specific code that allow for bounded list size beyond the Johnson bound? We know very few techniques for ensuring list decodability beyond the Johnson bound, and most of these are quite different from the Johnson bound argument \cite{GGR09, BL18}. Also, such codes often tend to involve significant random components, or are far from optimal rate-distance tradeoffs. One notable exception is the argument by Parvaresh and Vardy \cite{PV05}, which was crucial to the result of Guruswami and Rudra \cite{GR08}, that replaces bivariate interpolation by multivariate interpolation to get a smooth improvement in decoding radius. A combinatorial argument might offer more insights into decodability beyond Johnson bound, somewhat similar to how our covering lemma from \cref{chap:framework} provides a combinatorial explanation to Guruswami-Sudan list decoder for RS codes \cite{GS99}.
\item Understanding how to improve upon the Johnson bound via combinatorial arguments instead of multivariate interpolation might have implications for constructing codes achieving list decoding capacity over \emph{binary} alphabet. For binary alphabet, Ta-Shma codes \cite{TS17} achieve near-optimal tradeoff between rate and distance, however the best we know about their list decodability is the radius guaranteed by Johnson bound. Improving their list decoding radius all the way up to their distance would improve upon all existing explicit code constructions in terms of rate vs list decoding radius tradeoff.
\end{enumerate}

\section{Our Results}
Our main result is that the AEL amplification, when its inner code is chosen to be a capacity achieving code with constant list sizes and expansion is strong enough, is list decodable up to capacity with constant alphabet size and constant list size. This means that these codes have rate $R$ and the list size up to decoding radius $1-R-\eps$ is bounded by a constant dependent only $\eps$ and independent of the blocklength. However, unlike algebraic codes where the proof of list size being bounded often comes with a natural polynomial time algorithm, we do not yet know an efficient algorithm that would decode up to $1-R-\eps$. Since they are just AEL codes based on very strong spectral expanders, they are still linear-time unique decodable \cite{GI05} and polynomial time list decodable up to $1-\sqrt{R}-\eps$ (using results from \cref{chap:framework}).

Above guarantees are most useful when the decoding radius is close to $1-R$. We show that the AEL amplification can also be adapted to get explicit codes with list size 2 up to decoding radius $\frac{2}{3}(1-R)$. Such "higher order MDS" property for lists of size 2 was not known for explicit codes with constant sized alphabets. We note that $\frac{2}{3}(1-R) \geq 1-\sqrt{R}$ for $R\in [\frac{1}{4},1]$, and so for high rate codes, this already beats the Johnson bound!

Unfortunately, as we try to decode upto $\frac{k}{k+1}(1-R)$ for $k=3,4,\cdots$, the list size (and alphabet size) blows up rather quickly, and these codes are nowhere close to higher order MDS (that is, list size $k$) for $k>2$. The final dependence of list size, alphabet size and the degree of the graph is a tower function of height $\poly(1/\eps)$. However, we expect these parameters to improve to more reasonable functions of $\eps$ with a better proof technique.

\subsection{Overview of Techniques}
We briefly recall the Guruswami-Sudan approach to list decoding RS codes \cite{GS99}, and the subsequent improvement by Parvaresh and Vardy \cite{PV05} to rate vs list decoding radius tradeoff. Given a received word $g$, the Guruswami-Sudan algorithm learns a bivariate polynomial $Q_1(Y,X)$. We then argue that for any codeword $f(X)\in \calL(g,1-2\sqrt{R})$, it must hold that $Q_1(f(X),X)=0$ as a polynomial in $X$. Therefore, $f(X)$ can be found as a factor of the form $Y-f(X)$ using bivariate polynomial factorization algorithms for $Q_1$.

The change in \cite{PV05} (and also considered earlier by Coppersmith and Sudan \cite{CS03}) is to consider interleaved RS codes, and to interpolate to a multivariate polynomial from interleaved received words. We restrict our attention to 2-interleavings for simplicity, and in this case one interpolates to a polynomial $Q_2(Y_1, Y_2, X)$ in 3 variables. An analogous argument now shows that for any $f_1 \odot f_2 \in \calL(g,1-3R^{2/3})$, it must be the case that $Q_2(f_1(X), f_2(X), X)=0$ as a polynomial in $X$.

However, at this point, we run into the key difference between bivariate and multivariate cases. While the number of codewords $f(X)$ that satisfy $Q_1(f(X),X)=0$ is immediately bounded by the degree of $Y$ in $Q_1$, the number of codewords $f_1 \odot f_2$ that satisfy $Q_2(f_1(X),f_2(X),X)=0$ need not be constant, or even polynomial. In fact, if the RS code being interleaved has exponential list sizes at a certain radius, then so much the interleaved code. Therefore, polynomial list sizes at the decoding radius $1-3R^{2/3}$ would have strong implications for the open problem of whether RS codes have small list sized beyond the Johnson bound.

Nevertheless, \cite{PV05} impose a fixed algebraic condition between $f_1, f_2$, the codewords being interleaved, and this allows them to cut down the list size to constant. Let us ignore this preconditioning step for now, and focus on what we can learn from $Q_2(f_1(X),f_2(X),X)=0$. While this is not sufficient to extract all $(f_1,f_2)$ pairs, it does decrease the number of choices from $|\calC|^2$ to $\calO(|C|)$, using the Schwartz-Zippel lemma. 

We show that such a mild decrease in list size holds for interleaving of arbitrary codes. In fact, our proof follows the exact same structure as the proof of Schwartz-Zippel, despite the fact that there are no multivariate polynomials when dealing with interleaving of general codes! The argument can be extended to higher order interleaving as well. 

We can adapt these arguments so that starting from any near-MDS code $\calC$, one gets a code $\calC'$ which has a list size of $|\calC'|^{\eps}$ up to a decoding radius of $1-R-\eps$. Then, an argument of Rudra and Wootters \cite{RW15} about random subcodes shows that a random relationship between the codewords being interleaved can bring down the list size to near-optimal $\calO(1/\eps)$, while causing a negligible loss in rate. This answers an open question from \cite{RW15} about list decodability of randomly interleaved codes.

For our applications, we however wish to find this relationship among interleaved codewords explicitly. For RS codes, the Parvaresh-Vardy condition $f_2(X) = f_1(X)^d \mod E(X)$, for some irreducible $E(X)$ and $d$ large enough, is such a relationship among interleaved codewords. However, it is not clear what this relationship should be when the code being interleaved is obtained via AEL.

Faced with this obstruction, we instead use the fact that AEL can be seen as a sparsification of interleaving itself. Therefore, in some sense, the AEL amplification procedure has interleaving built-in! We use this connection to redo the combinatorial analog of Schwartz-Zippel on AEL instead of interleaved codes, and this leads to our capacity achieving codes.
\subsection{Future Work}
Our work leaves open several questions. Two natural questions that arise for these codes are an efficient algorithm to decode upto $1-R-\eps$, and better list sizes to avoid the tower-type dependence on $1/\eps$. Of special interest would be a generalization of known linear-time unique decoders to the list decoding setting, just as the algorithm of \cite{Gur11} can be seen as a generalization of the Berlekamp-Welch algorithm for unique decoding RS codes.

Secondly, one wonders if there is a combinatorial explanation to the excellent coding theoretic performance of codes based on polynomials over finite fields. For example, are there formal connections between our combinatorial analog of Schwartz-Zippel and the argument of Guruswami and Xing \cite{GX13} for list decoding RS codes evaluated on a subfield? Can we find an explanation for the differing behavior of RS codes and folded RS codes when it comes to list decoding radius? Can this help with an explicit evaluation set for RS codes so that they are list decodable up to their distance?

A broader question is whether other applications of algebra in pseudorandomness can also be replaced by expander graphs. Some of the applications indeed go via capacity achieving codes, but maybe we should look closer at the applications of Schwarz-Zippel lemma as well as other algebraic primitives for combinatorial statements hiding underneath.

There are also some key differences between our combinatorial bound on list size vs the multivariate polynomial of \cite{PV05}. In particular, the polynomial $Q_2(Y_1,Y_2,X)$ is a succinct (polynomial sized) object that contains all the information about the exponential sized list. Can we find a similar object that works for general codes? The analogous object for bivariate polynomials turned out to be the degree-1 marginals of a distribution over codewords, as seen in \cref{chap:framework}. Of course, the received word itself is such an object, but we would like it to be more structured, such as the low-degree trivariate polynomial $Q_2$, and this object could help us determine explicit pre-conditioning on interleaved codewords to get small list sizes.
\section{Inspiration from Schwartz-Zippel Lemma}

In this section, we present our combinatorial argument that yields the same list size bound as interpolation combined with Schwartz-Zippel lemma. These results can also be seen as a hierarchy of Johnson bounds. We start with the simplest case of order-2 interleaving of a code.
\begin{lemma}
    Let $\calC$ be a code over large alphabet $[q]$ with rate $R$ and distance $\Delta$. Let $\calC^{\odot 2}$ be the 2-interleaved code. Then, for any $g = g_1 \odot g_2 \in ([q]^2)^n$,
    \[
        \abs{\calL(g,1-(1-\Delta)^{3/4})} \leq 2(q-1)n |\calC l.
    \]
\end{lemma}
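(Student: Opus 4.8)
The plan is to imitate, entirely combinatorially, the proof of the Schwartz--Zippel lemma, with the roles of the two ``variables'' played by the two codewords $f_1, f_2 \in \calC$ whose interleaving $f_1 \odot f_2$ appears in the list. Write $\gamma \defeq 1-\Delta$, and for $f \in \calC$ let $S_i(f) \defeq \{\, j \in [n] : f(j) = g_i(j)\,\}$ be the agreement set of $f$ with $g_i$. Then $f_1 \odot f_2 \in \calL\!\left(g, 1-\gamma^{3/4}\right)$ is exactly the statement $|S_1(f_1) \cap S_2(f_2)| > \gamma^{3/4} n$. In particular any $f_1$ contributing to the list has $|S_1(f_1)| > \gamma^{3/4} n$, so the first step is to partition the contributing $f_1$'s according to $\sigma_1 \defeq |S_1(f_1)|/n$ into the range $\sigma_1 > \gamma^{1/2}$ (the analogue of ``leading coefficient nonzero'') and the range $\gamma^{3/4} < \sigma_1 \le \gamma^{1/2}$ (the analogue of the recursive branch).

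First I would dispose of the case $\sigma_1 > \gamma^{1/2}$. Here $\Delta(f_1,g_1) = 1-\sigma_1 < 1-\gamma^{1/2}$, and since $q$ is large the $q$-ary Johnson radius of $\calC$ is $\calJ_q(\Delta) \approx 1-\sqrt{1-\Delta} = 1-\gamma^{1/2}$, so $f_1$ lies in the Johnson ball of $g_1$ for $\calC$. By \cref{thm:johnson} there are at most $(q-1)n$ such $f_1$, and each can be paired with at most $|\calC|$ choices of $f_2$, contributing at most $(q-1)n\cdot|\calC|$ codewords to the list.

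The case $\gamma^{3/4} < \sigma_1 \le \gamma^{1/2}$ is where the Schwartz--Zippel ``restriction'' step is mimicked. Fix such an $f_1$ and consider the punctured code $\calC|_{S_1(f_1)}$ obtained by restricting every codeword of $\calC$ to the coordinates in $S_1(f_1)$. Since two distinct codewords of $\calC$ differ in at least $\Delta n$ positions while $[n]\setminus S_1(f_1)$ has only $(1-\sigma_1)n < \Delta n$ positions (using $\sigma_1 > \gamma^{3/4} > \gamma$), the projection $\calC \to \calC|_{S_1(f_1)}$ is injective and $\calC|_{S_1(f_1)}$ has relative distance at least $1 - \gamma/\sigma_1$. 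A pair $(f_1,f_2)$ in the list forces $f_2|_{S_1(f_1)}$ to agree with $g_2|_{S_1(f_1)}$ on more than $\gamma^{3/4} n$ of the $\sigma_1 n$ coordinates, i.e.\ to be at relative distance $< 1 - \gamma^{3/4}/\sigma_1$ from it; and a one-line computation using $\sigma_1 \le \gamma^{1/2}$ gives $1-\gamma^{3/4}/\sigma_1 \le 1-\sqrt{\gamma/\sigma_1} \approx \calJ_q\!\left(1-\gamma/\sigma_1\right)$, which by monotonicity of $\calJ_q$ is at most the Johnson radius of $\calC|_{S_1(f_1)}$. Applying \cref{thm:johnson} to this punctured code (of blocklength $\le n$) and using injectivity of the projection, there are at most $(q-1)n$ valid $f_2$ for each fixed $f_1$; summing over the at most $|\calC|$ relevant $f_1$'s gives a second batch of at most $(q-1)n\cdot|\calC|$ codewords, for a total of $2(q-1)n|\calC|$.

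The main obstacle I expect is bookkeeping rather than conceptual: aligning the exponent inequalities ($\gamma^{3/4}$ versus $\gamma^{1/2}$, and $\gamma^{3/4}/\sigma_1$ versus $\sqrt{\gamma/\sigma_1}$) with the strict/non-strict conventions in the definition of $\calL(g,\eta)$ and in the two parts of \cref{thm:johnson}, and quantifying ``large $q$'' precisely so that replacing the exact radius $\calJ_q(\cdot)$ by its large-alphabet limit $1-\sqrt{1-\delta}$ is legitimate — this is exactly what makes the clean threshold $1-(1-\Delta)^{3/4}$ (rather than a messier $\calJ_q$-flavored bound) correct. I would also expect the same scheme to generalize: conditioning on $f_1,\dots,f_{s-1}$ and restricting to their common agreement set should, by the same induction that drives Schwartz--Zippel, yield $\bigl|\calL\!\left(g,\, 1-(1-\Delta)^{1-1/(2s)}\right)\bigr| \le s\,(q-1)\,n\,|\calC|^{s-1}$ for $s$-fold interleaving, and it is worth recording this hierarchy explicitly.
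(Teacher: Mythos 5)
Your proof is correct and is essentially the paper's own argument: the same split into a ``large agreement'' case handled by the Johnson bound on $\calC$ itself and a ``middle range'' case handled by the Johnson bound on the code restricted (punctured) to the agreement set, with the only immaterial difference that you condition on $f_1$'s agreement with $g_1$ where the paper conditions on $f_2$'s agreement with $g_2$. Your exponent check $\gamma^{3/4}/\sigma_1 \ge \sqrt{\gamma/\sigma_1} \iff \sigma_1 \le \gamma^{1/2}$ is precisely the paper's condition $\tfrac{3}{4}-\alpha \le \tfrac{1}{2}(1-\alpha) \iff \alpha \ge \tfrac12$, and your closing remark about the $s$-fold generalization is exactly the lemma the paper proves next.
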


\begin{proof}
    Let
    \[
        \calL = \inbraces{f_1 \odot f_2\in \calC^{\odot 2} : \agr(f_1 \odot f_2, g_1\odot g_2) > (1-\Delta)^{3/4}\cdot n}
    \]
    We wish to prove that $|\calL|$ is at most $2(q-1)n| \calC |$.

    First, consider the codewords $f_1 \odot f_2 \in \calL$ such that $\agr(f_2, g_2) > \sqrt{1-\Delta} \cdot n$. There can be at most $|\calC| \times (q-1)n =(q-1)n| \calC |$ many such codewords by Johnson bound.

    For the remaining codewords in the list, it must be that $(1-\Delta)^{3/4}\cdot n < agr(f_2,g_2) \leq (1-\Delta)^{1/2} \cdot n$. Fix such an $f_2 \in \calC$, and let $S \sub [n]$ be the set of indices where $g_2$ and $f_2$ agree. Then, if $(f_1,f_2)\in \calL$, then $f_1$ and $g_1$ must agree on $>((1-\Delta)^{3/4} \cdot n$ positions even when restricted to the set of indices $S$. Let $|S| = (1-\Delta)^{\alpha} \cdot n$, with $\alpha \in [\frac{1}{2},\frac{3}{4})$.

    We will show that the number of $f_1\in \calC$ such that $f_1 \odot f_2 \in \calL$ is at most $(q-1)|S|$ via another appeal to Johnson bound. When the code is restricted to the set $S$, two distinct codewords $f_1,f'_1$ can agree on at most $(1-\Delta)n = (1-\Delta)^{1-\alpha} \cdot |S|$ coordinates, and any $f_1$ such that $f_1 \odot f_2\in \calL$ agrees with $g_1$ in $>(1-\Delta)^{3/4}\cdot n = (1-\Delta)^{3/4-\alpha} \cdot |S|$ coordinates. There can only be $(q-1)|S|$ many such $f_1$ codewords (for a fixed $f_2$) if 
    \[
        \frac{3}{4}-\alpha \leq \frac{1}{2}(1-\alpha) \implies \alpha\geq 1/2
    \]
    which is true since $\alpha \in [\frac{1}{2},\frac{3}{4})$. 
    
    Thus for every $f_2$ such that $(1-\Delta)^{3/4}\cdot n < \agr(f_2,g_2) \leq (1-\Delta)^{1/2} \cdot n$, there can be only $(q-1)|S| \leq (q-1)n$ many $f_1$ such that $(f_1,f_2)\in \calL$.
\end{proof}

Next we generalize this claim to higher order interleavings via induction.
\begin{lemma}
    Let $\calC$ be a code over large alphabet $[q]$ with rate $R$ and distance $\Delta$. Let $\calC^{\odot t}$ be the order-$t$ interleaved code. Then, for any $g = g_1\odot \cdots \odot g_t\in ([q]^t)^n$,
    \[
        \abs{\calL(g,1-(1-\Delta)^{1-\frac{1}{2^t}})} \leq t\cdot (q-1)n| \calC |^{t-1}
    \]
\end{lemma}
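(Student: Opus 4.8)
The plan is to prove the statement by induction on the interleaving order $t$, generalizing the argument already carried out for $t=2$ in the previous lemma. The base case $t=1$ is just the $q$-ary Johnson bound (\cref{thm:johnson_bound}), which gives $\abs{\calL(g,1-(1-\Delta)^{1/2})} \leq (q-1)n+1 \leq (q-1)n$, matching the claimed bound $t\cdot(q-1)n\abs{\calC}^{t-1}$ with $t=1$. For the inductive step, assume the bound holds for $\calC^{\odot (t-1)}$ with decoding radius $1-(1-\Delta)^{1-\frac{1}{2^{t-1}}}$, and consider $g = g_1 \odot \cdots \odot g_t$ and the list $\calL = \calL(g, 1-(1-\Delta)^{1-\frac{1}{2^t}})$.

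The dichotomy I would set up splits codewords $f_1 \odot \cdots \odot f_t \in \calL$ according to the agreement of the \emph{last coordinate block} $f_t$ with $g_t$. First I would handle the codewords where $\agr(f_t,g_t) > (1-\Delta)^{1/2}\cdot n$: for each such $f_t$ (of which there are at most $(q-1)n+1$ by Johnson) the projection onto the first $t-1$ blocks must lie in the list $\calL(g_1\odot\cdots\odot g_{t-1}, 1-(1-\Delta)^{1-1/2^t})$; since $1-1/2^t \geq 1-1/2^{t-1}$ is \emph{false}, this needs care — actually $1 - 1/2^t > 1 - 1/2^{t-1}$, so the agreement required on the prefix is \emph{larger} than what the inductive hypothesis handles, hence the inductive list is a superset and the bound $(t-1)(q-1)n\abs{\calC}^{t-2}$ applies. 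This contributes at most $((q-1)n+1)\cdot(t-1)(q-1)n\abs{\calC}^{t-2}$, which I would bound crudely by $(t-1)(q-1)n\abs{\calC}^{t-1}$ using $(q-1)n+1 \leq \abs{\calC}$ (true for codes of any positive rate and $n$ large). For the remaining codewords, $(1-\Delta)^{1-1/2^t}\cdot n < \agr(f_t,g_t) \leq (1-\Delta)^{1/2}\cdot n$; fix such an $f_t$, let $S$ be the agreement set with $\abs{S} = (1-\Delta)^{\alpha} n$ for $\alpha \in [1/2, 1-1/2^t)$. Restricting all of $[n]$ to $S$, the code $\calC^{\odot(t-1)}$ restricted to $S$ has normalized distance at least $1 - (1-\Delta)^{1-\alpha}$ (since the restriction of $\calC$ to $S$ has that distance), and any prefix $f_1\odot\cdots\odot f_{t-1}$ with $f_1\odot\cdots\odot f_t\in\calL$ agrees with $g_1\odot\cdots\odot g_{t-1}$ restricted to $S$ on more than $(1-\Delta)^{1-1/2^t - \alpha}\cdot\abs{S}$ coordinates. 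I would apply the inductive hypothesis to this restricted code, which requires checking that the required agreement exceeds the inductive threshold: namely $1 - (1-\Delta)^{(1-1/2^t-\alpha)} \leq 1 - \bigl((1-\Delta)^{1-\alpha}\bigr)^{1-1/2^{t-1}}$, i.e. $1-1/2^t-\alpha \geq (1-\alpha)(1-1/2^{t-1})$. Expanding, this is $1 - 1/2^t - \alpha \geq 1 - 1/2^{t-1} - \alpha + \alpha/2^{t-1}$, i.e. $-1/2^t \geq -1/2^{t-1} + \alpha/2^{t-1}$, i.e. $\alpha \leq 1/2$. Combined with $\alpha \geq 1/2$ this forces $\alpha = 1/2$ exactly — so in fact one should reorganize so that the "remaining" range is $\alpha \in (1-1/2^{t-1}, 1-1/2^t)$ coming from the first split being at threshold $(1-\Delta)^{1-1/2^{t-1}}$ rather than $(1-\Delta)^{1/2}$, matching the structure of the $t=2$ proof (where the split is at $(1-\Delta)^{1/2} = (1-\Delta)^{1-1/2^1}$).

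Thus the correct organization: split on whether $\agr(f_t,g_t) > (1-\Delta)^{1-1/2^{t-1}}\cdot n$ or not. In the first case the prefix lies in the inductive list (threshold works since $1-1/2^t > 1-1/2^{t-1}$ so we ask for more agreement, still covered), giving $\abs{\calL_{\text{Johnson on }f_t}} \leq (q-1)n+1$ choices of $f_t$ times $(t-1)(q-1)n\abs{\calC}^{t-2}$ prefixes, but here Johnson bound on $f_t$ against $g_t$ at radius $1-(1-\Delta)^{1-1/2^{t-1}}$ requires $(1-\Delta)^{1-1/2^{t-1}}$-type agreement which is \emph{below} $\sqrt{1-\Delta}$, so I instead bound the number of such $f_t$ by... this is the delicate point. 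In the second case $(1-\Delta)^{1-1/2^t} < \agr(f_t,g_t)/n \leq (1-\Delta)^{1-1/2^{t-1}}$, so $\abs{S} = (1-\Delta)^\alpha n$ with $\alpha \in [1-1/2^{t-1}, 1-1/2^t)$, and the verification $\alpha \leq 1 - 1/2^{t-1}$... again forces $\alpha$ to the endpoint. I would resolve this by being slightly more generous: use the inductive hypothesis on the prefix \emph{without} restriction when $f_t$ has large agreement, and with restriction otherwise, tuning so the recursion multiplies the list bound by a factor $(q-1)n$ at each level. \textbf{The main obstacle} I anticipate is precisely pinning down the right two-way (or possibly three-way) split of the agreement range for $f_t$ so that in each regime either the Johnson bound applies directly to $f_t$ or the restricted-code inductive hypothesis applies to the prefix, with the exponent arithmetic $1 - 1/2^t$ versus $(1-\Delta)^{1-\alpha}$ raised to $1 - 1/2^{t-1}$ working out; the $t=2$ case shows the arithmetic is tight, so the induction must thread this needle carefully, likely by choosing the split point at $(1-\Delta)^{1-1/2^{t-1}}$ and verifying the inequality $1-1/2^t - \alpha \geq (1-\alpha)(1-1/2^{t-1})$ holds throughout $\alpha \in [1-1/2^{t-1}, 1-1/2^t]$ — which I would recheck cleanly, since my scratch computation above suggests it may only hold at an endpoint and the statement might actually need the slightly weaker radius $1-(1-\Delta)^{1-1/2^t}$ to be interpreted with the right rounding, exactly as in the clean $t=2$ argument where $\alpha$ genuinely ranges over $[1/2,3/4)$ and the final inequality $\tfrac34 - \alpha \leq \tfrac12(1-\alpha)$ reduces to $\alpha \geq 1/2$, satisfied on the whole range. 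Once the split is chosen correctly the bookkeeping of the multiplicative constants ($t$ copies of $(q-1)n$ and $\abs{\calC}^{t-1}$ from the $t-1$ free coordinate blocks) is routine.
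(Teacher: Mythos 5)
Your high-level plan --- induction on $t$, splitting on $\agr(f_t,g_t)$, restricting to the agreement set $S$, and applying the inductive hypothesis to the code restricted to $S$ --- is exactly the paper's argument, and the split point $(1-\Delta)^{1/2}$ you start with is the right one. Two direction errors stop you from closing it. First, in the large-agreement case $\agr(f_t,g_t) > \sqrt{1-\Delta}\cdot n$ you try to apply the inductive hypothesis to the prefix $f_1\odot\cdots\odot f_{t-1}$. The prefix is only guaranteed agreement $(1-\Delta)^{1-1/2^{t}}\cdot n$ with $g_1\odot\cdots\odot g_{t-1}$, which is \emph{below} the inductive threshold $(1-\Delta)^{1-1/2^{t-1}}\cdot n$ (larger exponent, base less than one, hence smaller agreement); the inductive list is therefore a \emph{subset}, not a superset, of the list the prefix is known to lie in, so the inductive bound does not apply to it. The paper simply bounds the prefix trivially by $\abs{\calC}^{t-1}$ here; with at most $(q-1)n$ choices of $f_t$ from the Johnson bound this gives the first term $\abs{\calC}^{t-1}\cdot(q-1)n$, and no induction is needed in this case.

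Second, and this is what derails the rest of your write-up: in the restricted case you flip the inequality when passing from agreement fractions to exponents. What must be checked is that the prefix's agreement fraction on $S$, namely $(1-\Delta)^{1-1/2^{t}-\alpha}$, exceeds the inductive threshold $\bigl((1-\Delta)^{1-\alpha}\bigr)^{1-1/2^{t-1}}$; since $0<1-\Delta<1$, comparing exponents \emph{reverses} the inequality, giving $1-\tfrac{1}{2^{t}}-\alpha \leq (1-\alpha)\bigl(1-\tfrac{1}{2^{t-1}}\bigr)$, which simplifies to $\alpha \geq \tfrac12$ --- the opposite of the $\alpha\leq\tfrac12$ you derived. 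This holds on the entire range $\alpha\in[\tfrac12,\,1-\tfrac{1}{2^{t}})$, exactly consistent with your own $t=2$ sanity check, so the argument does not degenerate to the endpoint $\alpha=\tfrac12$ and no reorganization or three-way split is needed. With these two fixes the count is $\abs{\calC}^{t-1}(q-1)n + \abs{\calC}\cdot(t-1)(q-1)n\abs{\calC}^{t-2} = t\cdot(q-1)n\abs{\calC}^{t-1}$, as claimed.
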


\begin{proof}
    The statement is true for $t=1$ by the Johnson bound. Now we assume it is true for $t-1$, and prove it for $t$.

    First, there are at most $| \calC |^{t-1}\cdot (q-1)n$ codewords $f_1 \odot f_2 \odot \cdots\odot f_t$ such that $\agr(f_t,g_t) > \sqrt{1-\Delta}\cdot n$.

    Now, fix $f_t$ to be such that $\frac{\agr(f_t, g_t)}{n} \in ((1-\Delta)^{1-1/2^t},(1-\Delta)^{1/2}]$. Let $S\sub [n]$ be the set of agreement indices between $f_t$ and $g_t$, with $|S| = (1-\Delta)^{\alpha}\cdot n$. Any two codewords of $C^{\odot t-1}$, say $f_1\odot \cdots \odot f_{t-1}$ and $f'_1\odot \cdots \odot f'_{t-1}$, agree in at most $(1-\Delta)n = (1-\Delta)^{1-\alpha}\cdot |S|$ coordinates. Also, if $f_1 \odot \cdots \odot f_t \in \calL$, then $g_1 \odot \cdots \odot g_{t-1}$ and $f_1 \odot \cdots \odot f_{t-1}$ agree on at least $(1-\Delta)^{1-1/2^t}\cdot n = (1-\Delta)^{1-1/2^t-\alpha} \cdot |S|$ coordinates in $S$. By the inductive hypothesis applied to $C^{\odot t-1}$, $f_t$ appears in at most $(t-1)\cdot (q-1)n | \calC |^{t-2}$ codewords in $\calL$ if the following is satisfied:
    \begin{align*}
        1-\frac{1}{2^t}-\alpha &\leq (1-\frac{1}{2^{t-1}})(1-\alpha) \\
        \iff 2^t-1-2^t \alpha &\leq (2^t-2)(1-\alpha) = 2^t-2-2^t\alpha+2\alpha \\
        \iff 1\leq 2\alpha
    \end{align*}
    The above holds because $\alpha \in [\frac{1}{2},1-\frac{1}{2^t})$. 
    
    Thus, for every $f_t$ we fixed above, there are at most $(t-1)\cdot (q-1)n | \calC |^{t-2}$ many $f_1 \odot \cdots  \odot f_{t-1}$ such that $f_1 \odot \cdots \odot f_t \in \calL$. The total number of codewords in $\calL$ therefore is,
    \begin{align*}
    	| \calC |^{t-1}\cdot (q-1)n &+ \abs{ \{f_t\in \calC \suchthat \agr(f_t,g_t) \in ((1-\Delta)^{1-1/2^t},(1-\Delta)^{1/2}] \} } \times (t-1)\cdot (q-1)n | \calC |^{t-2} \\
    	&\leq | \calC |^{t-1}\cdot (q-1)n + | \calC | \times (t-1)\cdot (q-1)n | \calC |^{t-2} \\
    	&= t | \calC |^{t-1}\cdot (q-1)n
    \end{align*}
\end{proof}

The version above roughly corresponds to interpolating with individual degrees of $Y_i$ bounded. We next prove a version that corresponds to interpolating to a multivariate polynomial such that the \emph{total degree} of $Y$ variables is at most 1. This is the version used by Guruswami in \cite{Gur11}, building on the exposition by \cite{Vadhan12}. Roughly speaking, this corresponds to using unique decoding instead of list decoding in the proof above.

\begin{lemma}
	Let $\calC$ be a code over alphabet $[q]$ with rate $R$ and distance $\Delta$. Let $\calC^{\odot t}$ be the order-$t$ interleaved code. Then, for any integer $s \in[1, t]$ and $g = g_1\odot \cdots \odot g_t\in ([q]^t)^n$,
    \[
        \abs{\calL(g,\frac{s}{s+1}\Delta)} \leq |\calC|^{s-1}
    \]
\end{lemma}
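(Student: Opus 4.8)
The statement is the ``total-degree-1'' analogue of the previous two lemmas, so the strategy is to mimic the bottom-up, double-counting argument used for the individual-degree versions, but now tracking a single parameter $s$ instead of a product structure, and crucially replacing an appeal to list decoding (Johnson bound) at the inner step by an appeal to \emph{unique decoding}. Concretely, I would prove the more refined claim: a line (or more generally, the relevant low-dimensional family of interleaved codewords agreeing with $g$ on a large set) can meet the Hamming ball of radius $\frac{s}{s+1}\Delta$ in at most $s$ points --- this is exactly the ``line vs.\ Hamming ball'' observation sketched in Section~\ref{sec:folded_rs} of the introduction. The key geometric fact is that for a $1$-dimensional affine family inside $\calC^{\odot t}$, the index set $[n]$ splits into $\overline{S}$, where all members agree, and $S$ with $|S| \ge \Delta n$, where any two distinct members differ in \emph{every} coordinate; hence restricted to $S$ the agreement sets with $g$ are pairwise disjoint, and if there were $s+1$ codewords in the list one of them would have agreement at most $\frac{|S|}{s+1}$ on $S$, i.e.\ disagreement at least $\frac{s}{s+1}|S| \ge \frac{s}{s+1}\Delta n$, contradicting list membership.

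The plan is then to induct on $s$ (equivalently on the ``dimension'' of the affine family of interleaved codewords that remains after fixing agreements). First I would set up the base case $s=1$, which is just unique decoding: $\frac{1}{2}\Delta < \Delta/2$ is strictly inside the unique decoding radius, so $|\calL(g,\frac12\Delta)| \le 1 = |\calC|^{0}$. For the inductive step from $s-1$ to $s$, fix the last interleaved coordinate function $g_t$, and consider a codeword $f_1\odot\cdots\odot f_t$ in the list $\calL(g,\frac{s}{s+1}\Delta)$. Split on the size of $\agr(f_t,g_t)$: if it is large (say above a threshold to be tuned, analogous to the $\sqrt{1-\Delta}$ threshold in the earlier proofs but now calibrated to $\frac{s-1}{s}$ vs $\frac{s}{s+1}$), bound the number of such codewords by $|\calC|\cdot$(list size for $\calC^{\odot(t-1)}$ at the induced radius), using the inductive hypothesis with parameter $s-1$; if $\agr(f_t,g_t)$ is small, then on the agreement set $S$ of $f_t$ with $g_t$ the residual problem is a list-decoding instance for $\calC^{\odot(t-1)}$ restricted to $S$, where the restricted code still has distance $1$ on $S$ in the relevant sense --- and here one checks the arithmetic inequality (exactly the $1 \le 2\alpha$-type computation from the previous lemma, but with $\frac{s}{s+1}$ and $\frac{s-1}{s}$ in place of $1-\frac{1}{2^t}$ and $1-\frac{1}{2^{t-1}}$) that lets the inductive hypothesis apply. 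Summing the two contributions should give $|\calC|^{s-1}$, after verifying the threshold can be chosen so that the ``large agreement'' bucket contributes a $|\calC|^{s-2}$-type term dominated by the main term.

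The main obstacle I anticipate is getting the arithmetic of the thresholds and the inductive exponent to close \emph{exactly} at $|\calC|^{s-1}$ rather than, say, $s\cdot|\calC|^{s-1}$ or $\poly(q,n)\cdot|\calC|^{s-1}$. In the individual-degree lemmas the proof tolerated polynomial and factor-of-$t$ slack, but here the target is a clean $|\calC|^{s-1}$ with no $n$ or $q$ dependence, which forces the inductive step to be essentially lossless --- meaning I cannot afford to route through the Johnson bound (which costs a factor $(q-1)n$) and must instead route the inner recursion through unique decoding of $\calC$ on the restricted coordinate set, exploiting that the residual code has \emph{full} distance ($=1$) there. So the delicate point is verifying that when $f_t$ has small agreement with $g_t$, the induced instance is genuinely a unique-decoding (or appropriately-parametrized lower-$s$ list-decoding) instance and not merely a Johnson-bound instance; this is where the parameter $\frac{s}{s+1}$ is tuned precisely so that $\frac{s}{s+1}\Delta$, when rescaled by $1/|S|$ with $|S|\approx \Delta n$ and combined with the loss from fixing $f_t$, lands at $\frac{s-1}{s}$ times the distance of the restricted code. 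I would double-check this boundary computation carefully, since an off-by-one in the exponent or a misplaced factor there is the most likely failure mode, and it is also the place where the hypothesis $s \le t$ (ensuring enough interleaved coordinates remain to recurse on) gets used.
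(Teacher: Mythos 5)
Your overall skeleton is the right one — induct so that fixing one interleaved coordinate $f_i$ costs a factor $\abs{\calC}$ and drops the parameter from $s$ to $s-1$ on the code restricted to the agreement set $S$ of $f_i$ with $g_i$, with the boundary computation reducing to $\frac{|S|}{n} \le 1 - \frac{\Delta}{s+1}$. But there is a genuine gap in how you handle the codewords whose $t$-th component has \emph{large} agreement with $g_t$. Your plan splits the list into a large-agreement bucket and a small-agreement bucket and bounds each by (number of choices of $f_t$) $\times$ (inductive bound $\abs{\calC}^{s-2}$); since each bucket then costs at least $\abs{\calC}^{s-2}$, the two contributions can only sum to something strictly larger than $\abs{\calC}^{s-1}$, and no tuning of the threshold fixes this — the target bound has no room for an additive error term. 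The underlying problem is that the restriction arithmetic genuinely fails when $\Delta(f_t,g_t) < \frac{\Delta}{s+1}$ (then $|S| > n(1-\frac{\Delta}{s+1})$ and the induced radius on $S$ does not drop to $\frac{s-1}{s}$ of the restricted distance), so for those $f_t$ you have no valid recursive step at parameter $s-1$, and recursing at parameter $s$ instead already saturates $\abs{\calC}^{s-1}$ for a single $f_t$.

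The paper's proof avoids the per-codeword case split entirely by choosing the coordinate $i$ adaptively, via a preliminary dichotomy. If \emph{every} $g_i$ has some codeword within distance $\frac{\Delta}{s+1}$, then by the triangle inequality each coordinate list $\calL(g_i,\frac{s}{s+1}\Delta)$ has size at most $1$, so the whole interleaved list has size at most $1$ and you are done. Otherwise there is a coordinate $i$ such that \emph{no} codeword is within $\frac{\Delta}{s+1}$ of $g_i$; fixing that $i$, every $f_i$ in the coordinate-$i$ list automatically satisfies $\Delta(f_i,g_i)\ge\frac{\Delta}{s+1}$, i.e.\ $|S|\le n(1-\frac{\Delta}{s+1})$, so the restriction-and-recurse step applies uniformly to all $\abs{\calC}$ choices of $f_i$ and the count is exactly $\abs{\calC}\cdot\abs{\calC}^{s-2}$ with no second bucket. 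This adaptive choice of $i$ (rather than always the last coordinate) together with the ``all coordinates have a near codeword $\Rightarrow$ list size $1$'' branch is the missing idea; the rest of your argument, including the base case and the role of $s\le t$, matches the paper.
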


\begin{proof}
	Once again, we rely on induction on $t$. The case of $t=1$ is just unique decoding. We now assume the statement for $t-1$ and prove it for $t$.
	
	The case $s=1$ is again unique decoding, and henceforth we assume $s\geq 2$. Consider any $i\in [t]$ such that the closest codeword in $\calC$ to $g_i$ is at a distance at least $\frac{\Delta}{s+1}$ from $g_i$. If no such $i$ exists, then every $g_i$ has a codeword $\frac{\Delta}{s+1}$ close to it, which precludes the existence of another codeword in $\calL(g_i,\frac{s}{s+1}\Delta)$, and then $|\calL(g,\frac{s}{s+1}\Delta)| \leq 1$. So, we may assume that such an $i$ exists.
	
	We fix any $f_i \in \calL(g_i,\frac{s}{s+1}\Delta)$, and there are at most $|\calC|$ choices here. Let $S \sub [n]$ be the agreement set between $f_i$ and $g_i$, with $n(1-\frac{s}{s+1}\Delta) \leq |S|\leq n(1-\frac{\Delta}{s+1})$. Consider the code $\calC'$ obtained from $\calC$ by restricting the coordinate set to $S$. The distance $\Delta'$ of $\calC'$ is at least $\frac{\Delta n - (n-|S|)}{|S|}$. Let $\calC'^{\odot t-1}$ be the order-$(t-1)$ interleaving of $\calC'$, and let $g' = (g_1 \odot \cdots \odot g_{i-1} \odot g_{i+1} \odot \cdots \odot g_t)\Big\vert_S$ be the modification of $g$ where $g_i$ is removed and the coordinate set is restricted to $S$. Therefore, $g' \in ([q]^{t-1})^S$ and can be a received word for $\calC'^{\odot t-1}$.
	
	Recall that $f_i$ is fixed, and $S$ is defined based on $f_1$. 
	\begin{claim}\label{claim:interleave}
	For any $f_1,\cdots,f_{i-1},f_{i+1},\cdots,f_t$ such that $f = f_1\odot \cdots \odot f_t \in \calL(g,\frac{s}{s+1}\Delta)$, it holds that $f' = f_1\Big\vert_S \odot \cdots \odot f_{i-1}\Big\vert_S\odot f_{i+1}\Big\vert_S \odot \cdots \odot f_t\Big\vert_S$ belongs to $\calL(g',\frac{s-1}{s}\Delta')$. 
	\end{claim}
	Note that $f \in \calC^{\odot t}, f'\in \calC'^{\odot t-1}$ and $\calL(g',\frac{s-1}{s}\Delta')$ is defined using the code $\calC'^{\odot t-1}$. By the inductive hypothesis, the number of such $f'$ is at most $|\calC'|^{s-2} = |\calC|^{s-2}$, and so the total number of codewords in $\calL(g,\frac{s}{s+1}\Delta)$ is at most $|\calC| \times |\calC|^{s-1} = |\calC|^{s-1}$.
	\begin{proof}[Proof of \cref{claim:interleave}]
		The distance between $f$ and $g$ is at most $\frac{s}{s+1}\Delta \cdot n$, so that the distance between $f\Big\vert_S$ and $g\Big\vert_S$ is at most $\frac{s}{s+1}\Delta \cdot n - (n-|S|)$, since every coordinate in $\overline{S}$ has a disagreement between $f$ and $g$ due to $f_i$ and $g_i$. Further, removing $f_i \Big\vert_S$ and $g_i \Big\vert_S$ from $f \Big\vert_S$ and $g\Big\vert_S$ respectively, we get that the distance between $f'$ and $g'$ is also at most $\frac{s}{s+1}\Delta \cdot n - (n-|S|)$. We will be done if
		\begin{align*}
			\frac{s}{s+1}\Delta \cdot n - (n-|S|) \leq \frac{s-1}{s} \Delta' \cdot |S|
		\end{align*}
		Using the lower bound $\Delta' \geq \frac{\Delta n - (n-|S|)}{|S|}$, it suffices to prove
		\begin{align*}
			& \frac{s}{s+1}\Delta \cdot n - (n-|S|) \leq \frac{s-1}{s} \inparen{\Delta n - (n-|S|)} \\
			\iff & \frac{1}{s(s+1)} \Delta - \frac{1}{s}(1-\frac{|S|}{n})\leq 0 \\
			\iff & \frac{|S|}{n} \leq 1-\frac{\Delta}{s+1}
		\end{align*}
		which is true since $f_i$ and $g_i$ have distance at least $\frac{\Delta}{s+1}$.
	\end{proof}
\end{proof}
We note that similar ideas were also used in the work of \cite{GGR09} on list decodability of interleaved codes. There, the error locations after fixing a codeword $f_i$ in position $i$ were treated as erasures for the rest of the codewords. That is, every location where $f_i$ and $g_i$ disagree can be treated as an erasure for the entire interleaved code, if we insist on codewords with a fixed $f_i$. We will use this viewpoint as it simplifies some of the exposition when dealing with AEL.
\section{From interleaved codes to AEL}

Suppose the outer code has distance $\delta_{out}$, and the inner code has rate $R$, distance $1-R$ and has list size 2 up to the decoding radius $\frac{2}{3}(1-R)$. Random Reed-Solomon codes can be used as an inner code with these properties, as shown by \cite{BGM23}. Suppose $\lambda \leq \eps \cdot \delta_{out}$ so that distance of AEL code is at least $1-R - \eps$. We actually choose $\lambda$ to be even smaller, $\lambda \leq \eps \cdot \delta_{out} / C$ for a large constant $C$.

Suppose we are given $g\in (\F_q^d)^R$ such that 
\[
    \calL = \inbraces{ h\in \calC \suchthat \Delta_R(g,h) < \frac{2}{3}(1-R) - \eps}
\]
We wish to prove an upper bound on $|\calL|$. In fact, we will be proving that $|\calL| \leq 2$.

Given $g$, we obtain corresponding received words $\inbraces{g_{\ell}}_{\ell \in L}$, where each $g_{\ell} \in \F_q^d$, for left vertices. Let $\calL_{\ell}$ be the list of (inner) codewords around $g_{\ell}$ at distance $\frac{2}{3}(1-R)$. By assumption, $|\calL_{\ell}| \leq 2$ for all $\ell \in L$. We divide the set of left vertices $L$ into $L_0, L_1, L_2$ based on the list sizes obtained via this local decoding procedure.

For any $h\in \calL$, let $S_h \sub L$ be the set of vertices where $h_{\ell}$ and $g_{\ell}$ differ in more than $\frac{2}{3}(1-R)d$ coordinates.

\begin{claim}
    For every $h\in \calL$, $\frac{|S_h|}{n} \leq \delta_{out}/C$.
\end{claim}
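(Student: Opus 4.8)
The plan is to pass to the edge-level description of AEL words and to use the bipartite expander mixing lemma to show that a large $S_h$ would force disagreements between $g$ and $h$ on essentially all right vertices, contradicting $h \in \calL$. First I would identify both the received word $g$ and the codeword $h$ with vectors $x,y \in \F_q^E$ via the fixed edge ordering, so that for every left vertex $\ell$ the local views satisfy $x|_{N_L(\ell)} = g_\ell$ and $y|_{N_L(\ell)} = h_\ell$, and let $B = \set{e\in E : x_e \neq y_e}$ be the set of disagreeing edges. By the definition of $S_h$, every $\ell \in S_h$ contributes more than $\frac{2}{3}(1-R)\,d$ edges to $B$, so the number of $B$-edges incident to $S_h$ exceeds $\frac{2}{3}(1-R)\,d\,|S_h|$. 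On the right side, let $A = \set{r\in R : N_R(r)\cap B = \emptyset}$ be the set of right vertices on which $g$ and $h$ agree coordinatewise; then $\Delta_R(g,h) = 1 - |A|/n$, and the hypothesis $h\in\calL$ gives $|A|/n > 1 - \frac{2}{3}(1-R) + \eps$ (in particular $A\neq\emptyset$). We may also assume $S_h\neq\emptyset$, since otherwise the claim is immediate.

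Next I would estimate $e(S_h,A)$, the number of edges between $S_h$ and $A$, in two ways. Since no edge of $B$ is incident to a vertex of $A$, every edge from $S_h$ to $A$ lies outside $B$; combined with the lower bound on the number of $B$-edges incident to $S_h$, this yields $e(S_h,A) < \big(1-\frac{2}{3}(1-R)\big)\,d\,|S_h|$. On the other hand, applying the bipartite expander mixing lemma (\cref{lem:eml_bipartite}) with the indicator functions $\one_{S_h}$ and $\one_A$ gives $e(S_h,A) \ge \frac{d\,|S_h|\,|A|}{n} - \lambda\,d\sqrt{|S_h|\,|A|}$. Comparing the two bounds and dividing by $d\,|S_h|$ produces $\frac{|A|}{n} - \lambda\sqrt{|A|/|S_h|} < 1 - \frac{2}{3}(1-R)$. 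Substituting $\frac{|A|}{n} > 1 - \frac{2}{3}(1-R) + \eps$ and using $|A|\le n$, this rearranges to $\eps < \lambda\sqrt{n/|S_h|}$, i.e. $|S_h|/n < \lambda^2/\eps^2$. Finally, since we have chosen $\lambda \le \eps\,\delta_{out}/C$, this gives $|S_h|/n < \delta_{out}^2/C^2 \le \delta_{out}/C$, which is the claim.

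The argument is short once the setup is in place; the only care needed is in translating cleanly between the right-folded view $g\in(\F_q^d)^R$, the edge view $\F_q^E$, and the left-folded view, and in noting that $\Delta_R(g,h)$ counts exactly the right vertices incident to at least one disagreeing edge. I do not expect a genuine obstacle: the one idea that must be spotted is to bound the edges from $S_h$ into the set $A$ of fully-agreeing right vertices, after which the mixing lemma and the definition of $\calL$ do the rest. (Note that the argument in fact uses nothing about $h$ beyond $\Delta_R(g,h) < \frac{2}{3}(1-R)-\eps$, and nothing about $\calC_0$ beyond the numerical value $\frac{2}{3}(1-R)$ of its list-decoding radius; the list structure $L_0,L_1,L_2$ is not needed here and will be used only in the subsequent steps.)
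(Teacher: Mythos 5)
Your proof is correct and takes essentially the same route as the paper: both arguments count edges between $S_h$ and the right-side error/agreement structure and invoke the bipartite expander mixing lemma. The only (cosmetic) difference is that the paper upper-bounds $E(S_h,T_h)$ where $T_h$ is the set of right vertices hit by error edges from $S_h$, obtaining $|S_h|/n \le \lambda/\eps$, whereas you lower-bound the edges into the complementary agreement set $A$, obtaining the quantitatively weaker but still sufficient $|S_h|/n < \lambda^2/\eps^2$.
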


\begin{proof}
    Let $T_h \sub R$ be the set of right vertices that are touched by at least one "error edge" from $S_h$. Note that $T_h$ vertices are errors between $g$ and $h$, and so $|T_h| \leq \frac{2}{3}(1-R) - \eps_2$. By the AEL argument,
    \begin{align*}
        |S_h| \frac{2}{3}(1-R) d \leq E(S_h, T_h) &\leq \frac{d}{n} |S_h|\cdot |T_h| + \lambda d n \\
        \frac{2}{3}(1-R) - \frac{\lambda}{|S_h|/n} &\leq |T_h| \leq \frac{2}{3}(1-R) - \eps \\
        \frac{\lambda}{|S_h|/n} &\geq \eps \\
        \frac{|S_h|}{n} &\leq \frac{\lambda}{\eps} \leq \frac{\delta_{out}}{C}
    \end{align*}
\end{proof}

\begin{observation}
    $L_0 \sub S_h$ for any $h \in \calL$.
\end{observation}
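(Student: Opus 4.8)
The plan is to prove the contrapositive: if $\ell \in L$ satisfies $\ell \notin S_h$, then the local list $\calL_\ell$ is non-empty, and hence $\ell \notin L_0$. Taking the contrapositive then gives $L_0 \subseteq S_h$ for every $h \in \calL$.

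The key point is simply that for any AEL codeword $h$ (and $h \in \calL \subseteq \calC = \AELC$ is such a codeword) and any left vertex $\ell$, the restriction $h_{N_L(\ell)} = h_\ell$ is by construction a codeword of the inner code $\calC_0$. Now suppose $\ell \notin S_h$. By the definition of $S_h$ this means $h_\ell$ and $g_\ell$ differ in at most $\frac{2}{3}(1-R)d$ coordinates, i.e. $\Delta(h_\ell, g_\ell) \leq \frac{2}{3}(1-R)$. Thus $h_\ell$ is an inner codeword lying within the decoding radius $\frac{2}{3}(1-R)$ of $g_\ell$, so $h_\ell \in \calL_\ell$ and in particular $\abs{\calL_\ell} \geq 1$. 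Therefore $\ell \in L_1 \cup L_2$, i.e. $\ell \notin L_0$, which is exactly what we wanted.

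I expect no substantial obstacle here: the observation just records that a left vertex in $L_0$ — one where the inner decoder applied to $g_\ell$ produces no candidate at radius $\frac{2}{3}(1-R)$ — must necessarily see its block $h_\ell$ disagree with $g_\ell$ on more than a $\frac{2}{3}(1-R)$ fraction of positions. The only point requiring care is fixing a consistent convention for open versus closed Hamming balls across the two definitions; I would take $\calL_\ell$ to be the set of inner codewords at distance \emph{at most} $\frac{2}{3}(1-R)$ from $g_\ell$, which matches the ``more than $\frac{2}{3}(1-R)d$'' threshold used to define $S_h$, so that $L \setminus S_h$ is precisely the set of left vertices $\ell$ with $h_\ell \in \calL_\ell$. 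Together with the preceding claim that $\abs{S_h}/n \leq \delta_{out}/C$, this observation is what will let us control $\abs{L_0}$ in the subsequent list-size argument.
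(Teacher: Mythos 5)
Your proof is correct and is exactly the intended argument (the paper states this as an unproved observation precisely because the contrapositive is immediate: $\ell\notin S_h$ means $\Delta(h_\ell,g_\ell)\le\frac{2}{3}(1-R)$, and $h_\ell\in\calC_{in}$ since $h$ is an AEL codeword, so $\calL_\ell\neq\emptyset$). Your remark about fixing the open/closed ball convention so that the "$\le\frac{2}{3}(1-R)d$" threshold in $\calL_\ell$ matches the "more than $\frac{2}{3}(1-R)d$" threshold in $S_h$ is the right reading and the only point needing care.
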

\begin{lemma}
    If $|L_2| \leq \frac{\delta_{out}}{4}$, then $|\calL| \leq 1$.
\end{lemma}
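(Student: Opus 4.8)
The plan is to show that if the number of left vertices with two nearby inner codewords is small, then we can essentially commit to a single inner codeword at almost every left vertex, pin down a single outer codeword, and conclude the list has at most one element. Concretely, suppose $\abs{L_2} \leq \delta_{out}/4$ (in the normalized sense $\abs{L_2}/n \leq \delta_{out}/4$). I would argue by contradiction: assume there are two distinct codewords $h, h' \in \calL$. For every left vertex $\ell \notin S_h$, we know that $h_{N_L(\ell)}$ lies within distance $\frac{2}{3}(1-R)$ of $g_\ell$, hence $h_{N_L(\ell)} \in \calL_\ell$; similarly for every $\ell \notin S_{h'}$ we have $h'_{N_L(\ell)} \in \calL_\ell$. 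By the claim above, $\abs{S_h}/n, \abs{S_{h'}}/n \leq \delta_{out}/C$, so for all but a $2\delta_{out}/C$ fraction of left vertices, both local restrictions lie in the (size $\leq 2$) list $\calL_\ell$.

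Next I would separate the two cases according to $\abs{\calL_\ell}$. For $\ell \in L_1$, the list $\calL_\ell$ is a single codeword, so $h_{N_L(\ell)} = h'_{N_L(\ell)}$ whenever $\ell \notin S_h \cup S_{h'}$. For $\ell \in L_2$, the two restrictions could disagree, but there are only $\abs{L_2} \leq \delta_{out} n/4$ such vertices. And $\ell \in L_0$ forces $\ell \in S_h$ (and $\ell \in S_{h'}$) by the observation, so those vertices are already accounted for. Collecting: $h$ and $h'$, viewed as the corresponding outer codewords $\bar h, \bar h' \in \calC_1$ (via the local inversion of the inner encoding, which is injective since $\abs{\calC_0} = q_1$), differ on at most
\[
    \abs{L_0} + \abs{L_2} + \abs{S_h} + \abs{S_{h'}} \;\leq\; \frac{\delta_{out} n}{4} + \frac{2\delta_{out} n}{C}
\]
left vertices, where I use $L_0 \subseteq S_h$ to fold $\abs{L_0}$ into $\abs{S_h}$ and keep the count honest; picking $C$ large enough (say $C \geq 16$) this is strictly less than $\delta_{out} n/2 < \delta_{out} n$. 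But $\bar h, \bar h'$ are distinct codewords of the outer code $\calC_1$, which has distance $\delta_{out}$, so they must differ in at least $\delta_{out} n$ coordinates — a contradiction. Hence $\calL$ contains at most one codeword.

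The main subtlety — and the step I would be most careful about — is the bookkeeping that turns ``$h$ and $h'$ agree locally wherever the local list is a singleton and neither vertex is exceptional'' into a genuine bound on the Hamming distance between the \emph{outer} codewords. One has to make sure the inner encoding map $\calC_0 : [q_1] \to [q_0]^d$ is injective (it is, by definition of AEL, since $q_1 = \abs{\calC_0}$), so that $h_{N_L(\ell)} = h'_{N_L(\ell)}$ forces $\bar h_\ell = \bar h'_\ell$, and that the normalization of all the ``bad sets'' ($L_0, L_2, S_h, S_{h'}$) is consistent — everything measured as a fraction of $\abs{L} = n$. Once the constant $C$ in the hypothesis $\lambda \leq \eps \cdot \delta_{out}/C$ is chosen to dominate the $\delta_{out}/4$ slack from $\abs{L_2}$ plus the two $\delta_{out}/C$ contributions from $S_h, S_{h'}$, the contradiction with the outer distance is immediate. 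I expect no real obstacle beyond this accounting; the expander mixing argument doing the heavy lifting is already packaged in the preceding claim.
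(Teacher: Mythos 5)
Your proof is correct and is essentially the paper's argument: both rest on the observation that on $L_1$ minus the small bad sets the singleton local lists pin down the codeword, so two list elements can differ only on $L_2 \cup S_h \cup S_{h'}$, a set of measure well below $\delta_{out}$. The paper phrases this by constructing one center string and invoking the outer distance via the triangle inequality, whereas you compare two codewords directly, but the counting and the conclusion are the same.
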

\begin{proof}
    Pick an element from every list in $L_1$ and $L_2$ arbitrarily, and any local codeword in $L_0$. The distance of string constructed thus from any codeword in the list is at most $|L_2| + |S_h| \leq \frac{\delta_{out}}{2}\cdot n$, and therefore there is only one codeword in the list.
\end{proof}
Henceforth, we assume that $|L_2| > \frac{\delta_{out}}{4}$.
For any $h \in \calL$, define $h' \in \calC_{in}^L$ as follows:
\[
    h'_{\ell} = \begin{cases}
        h_{\ell} \qquad \ell \not\in S_h \\
        f^* \qquad \ell \in L_0 \\
        f_{\ell} \qquad \ell \in S_h \backslash L_0, \text{ where } f_{\ell} \in \calL_{\ell} \text{ (arbitrarily)}
    \end{cases}
\]
This ensures that 
\[
    h' \in \prod_{\ell\in L_0} \inbraces{f^*} \times \prod_{\ell\in L_1} \calL_{\ell} \times \prod_{\ell \in L_2} \calL_{\ell}
\]
Note also that since $h$ and $h'$ differ only on $S_h$, $\Delta_L(h,h') \leq |S_h| \leq \frac{\delta_{out}}{C} \cdot n$. Thus there is a one-to-one mapping between $h$ and $h'$.
\begin{lemma}\label{lem:low_agreement}
For any $h_1, h_2 \in \calL$, if it holds that $h'_1$ and $h'_2$ agree on at least $3n\delta_{out}/C$ coordinates in $L_2$, then $h_1 = h_2$.
\end{lemma}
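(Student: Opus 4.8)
The plan is to prove the contrapositive: assuming $h_1 \neq h_2$, show that $h'_1$ and $h'_2$ can agree on only few coordinates of $L_2$. First I would isolate two facts that do not involve $L_2$. On $L_0 \cup L_1$ the two modified words already agree: for $\ell \in L_0$ both $(h'_1)_\ell$ and $(h'_2)_\ell$ are the fixed codeword $f^*$, and for $\ell \in L_1$ the list $\calL_\ell$ is a singleton, so whether $\ell \in S_{h_i}\setminus L_0$ (in which case $(h'_i)_\ell$ is picked from $\calL_\ell$) or $\ell \notin S_{h_i}$ (in which case $(h'_i)_\ell = (h_i)_\ell \in \calL_\ell$), we get $(h'_1)_\ell = (h'_2)_\ell$. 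Second, $h_i$ and $h'_i$ differ only on $S_{h_i}$, and by the earlier Claim $|S_{h_i}| \le n\lambda/\eps \le n\delta_{out}/C$.

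Next I would pass to the outer code. Let $\bar h_i \in \calC_1$ be the outer codeword with $\calC_0((\bar h_i)_\ell) = (h_i)_{N_L(\ell)}$; since $h_1 \neq h_2$ we have $\bar h_1 \neq \bar h_2$, so by the distance $\delta_{out}$ of $\calC_1$ the two agree on at most $(1-\delta_{out})n$ coordinates. If $A \subseteq L_2$ is the set where $h'_1$ and $h'_2$ agree on $L_2$, then combining the two facts above, $h_1$ and $h_2$ agree as local left-views on all of $(L_0 \cup L_1 \cup A)\setminus(S_{h_1}\cup S_{h_2})$, a set of size at least $|L_0|+|L_1|+|A|-2n\delta_{out}/C$. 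Comparing with $(1-\delta_{out})n$ and using $|L_0|+|L_1| = n-|L_2|$ gives $|A| \le |L_2| - \delta_{out}n + 2n\delta_{out}/C$, which yields $|A| < 3n\delta_{out}/C$ provided $|L_2|$ is not much larger than $\delta_{out}n$.

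The main obstacle is exactly this last implication when $|L_2|$ is large, where the crude counting is vacuous. To handle it I would bring in the list-$2$ structure of the inner code together with the (strong, indicator-metric) hypothesis $\Delta_R(g,h_i) < \tfrac{2}{3}(1-R)-\eps$. On one hand, at each $\ell \in L_2$ the two codewords of $\calL_\ell$ are both within $\tfrac{2}{3}(1-R)$ of $g_\ell$, hence within $\tfrac{4}{3}(1-R)$ of each other, so on $L_2 \setminus A$ the words $h'_1, h'_2$ disagree on at most a $\tfrac{4}{3}(1-R)$-fraction of the $d$ edges at each such vertex rather than on all of them; this lets me bound the edge distance $\dis(h'_1,h'_2)$, hence (using $\dis \ge \delta_0 \cdot \dis^L$ for the concatenation) $\dis^L(h_1,h_2)$ in terms of $|L_2\setminus A|$. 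On the other hand, $\Delta_R(g,h_i) < \tfrac{2}{3}(1-R)-\eps$ forces a constant fraction of right vertices to carry the exact local view of $h_i$, and then the expander-mixing argument underlying AEL pins down $h_i$ on all but an $O(\lambda^2 n)$-fraction of left vertices; feeding this back should drive $\dis^L(h_1,h_2)$ below $\delta_{out}$, contradicting $\bar h_1 \neq \bar h_2$. Making the two-sided expansion bookkeeping actually beat $\delta_{out}$ — which is where the smallness of $\lambda$ relative to $\eps\delta_{out}$ and the choice of the constant $C$ get used — is the delicate part, and is where I expect to spend the most effort.
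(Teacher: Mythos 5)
There is a genuine gap in the hard case you correctly identify (when $|L_2|$ is much larger than $\delta_{out}n$), and the route you sketch for it would not close. Two problems: first, the claim that $\Delta_R(g,h_i) < \tfrac{2}{3}(1-R)-\eps$ "pins down $h_i$ on all but an $O(\lambda^2 n)$-fraction of left vertices" cannot be right, because $\tfrac{2}{3}(1-R)$ exceeds the unique decoding radius $\tfrac{1}{2}(1-R-\eps)$ of the AEL code — there really can be two distinct $h_1\neq h_2$ in $\calL$, so $g$ alone determines neither of them; the whole content of the lemma is that the \emph{additional} agreement on $L_2$ forces equality. Second, your use of the list-2 structure goes in the wrong direction: bounding the distance between the two elements of $\calL_\ell$ by $\tfrac{4}{3}(1-R)$ only gives an upper bound on $\dis(h'_1,h'_2)$ of roughly $\tfrac{4}{3}(1-R)\cdot|L_2\setminus A|/n$, which after dividing by the inner distance $1-R$ leaves $\dis^L \lesssim \tfrac{4}{3}|L_2\setminus A|/n$; this is below $\delta_{out}$ only under an assumption on $|L_2\setminus A|$ that you have no way to establish.

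The idea you are missing is the one the paper uses, and it exploits membership in $L_2$ in the opposite way: if $\ell\in L_2$ then \emph{every} element of $\calL_\ell$, in particular the common value $h_\ell$ of $h_1$ and $h_2$ on the agreement set $U\subseteq L_2\setminus(S_{h_1}\cup S_{h_2})$ (which has $|U|\ge n\delta_{out}/C$ after discarding $S_{h_1}\cup S_{h_2}$), must satisfy $\Delta(h_\ell,g_\ell)\ge \tfrac{1}{3}(1-R)d$ — otherwise the inner distance $1-R$ would leave no room for a second list element and $\ell$ would lie in $L_1$. Each $\ell\in U$ therefore contributes $\tfrac{1}{3}(1-R)d$ error edges that are errors for \emph{both} $h_1$ and $h_2$, and a single expander-mixing application gives a set $V(U)\subseteq R$ of shared error locations with $|V(U)|/n \ge \tfrac{1}{3}(1-R)-\eps$. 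The conclusion is then a triangle inequality on the right: $\Delta_R(h_1,h_2)\le \Delta_R(g,h_1)+\Delta_R(g,h_2)-|V(U)|/n \le 2\bigl(\tfrac{2}{3}(1-R)-\eps\bigr)-\bigl(\tfrac{1}{3}(1-R)-\eps\bigr)=1-R-\eps$, contradicting the distance of $\AELC$ unless $h_1=h_2$. Note this never passes through the outer code's distance or through $\dis^L$ at all, so your case split on the size of $|L_2|$ is unnecessary.
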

\begin{proof}
    If $h'_1$ and $h'_2$ agree on at least $3n\delta_{out}/C$ coordinates in $L_2$, it follows that they agree on at least $n\delta_{out}/C$ coordinates in $L_2 \backslash (S_{h_1} \cup S_{h_2})$. Let this agreement set in $L_2 \backslash (S_{h_1} \cup S_{h_2})$ be called $U$, with $|U| \geq n\delta_{out}/C$.

    By definition of $h'_1$ and $h'_2$, it also follows that $h_1$ and $h_2$ agree on $U$.

    Now we observe that for any $\ell \in U$,  $\Delta(h_{\ell}, g_{\ell}) \geq \frac{1}{3}(1-R)d$, otherwise $\ell$ would not be in $L_2$ at all. This means that fixing $h_{\ell}$ for $\ell \in U$ allows us to fix at least $\frac{1}{3}(1-R)d$ vertices on the right as error locations. Call this set of error location vertices as $V(\ell)$, where $V(\ell) \sub R$.

    Extending this argument to the entire set $U$, we may fix the union of $V(\ell)$ over all $\ell \in U$, which we call $V(U)$, to be error locations. By the standard AEL argument,
    \begin{align*}
        \frac{1}{3}(1-R)d |U| \leq |E(U,V(U))| \leq \frac{d}{n} |U| \cdot |V(U)| + \lambda d n \\
        \implies \frac{|V(U)|}{n} \geq \frac{1}{3}(1-R) - \frac{\lambda}{|U|/n} \geq \frac{1}{3}(1-R) - \eps
    \end{align*}

	Therefore, $h_1$ and $h_2$ share $\frac{1}{3}(1-R) - \eps$ fraction of error locations. We claim that the only way this can happen is if $h_1=h_2$. If not, then the distance between $h_1$ and $h_2$ is at most
	\[
		\frac{2}{3}(1-R)-\eps + \frac{2}{3}(1-R)-\eps - \frac{1}{3}(1-R) + \eps = 1-R-\eps
	\]
	contradicting the fact that the distance of $\AELC$ is $>1-R-\eps$.
%
\end{proof}

\cref{lem:low_agreement} shows that for any two distinct $h_1$ and $h_2$, their corresponding $h'_1$ and $h'_2$ differ in at least $|L_2| - \frac{3n\delta_{out}}{C}$ coordinates of $L_2$. That is, when restricted to $L_2$, the pairwise fractional distance between these strings is at least
\[
    1 - \frac{3n\delta_{out}}{C \cdot |L_2|} \geq 1 - \frac{12}{C}
\]
    which leaves room for only 2 such strings due to \cref{thm:strong_plotkin} when $C > 48$.

\section{Plotkin bound}

We will be needing this version of the Plotkin bound for the recursive structure of the proof. Since we couldn't find a bound in this form in the literature, we include a full proof.

\begin{theorem}\label{thm:plotkin}
Let $\calC \sub [K]^n$ be a collection of strings such that no $L$ strings in $\calC$ agree on a set of size $\beta \cdot n$. 
If $\beta \leq \frac{1}{K^{2(L-1)}}$, then $|\calC| \leq (L-1)^2 \cdot (1+K^{L-1}) $.
\end{theorem}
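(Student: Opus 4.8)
The plan is to prove the statement through a double‑counting argument over $L$‑subsets of codewords combined with a convexity estimate; this route in fact yields a bound that is far stronger than the one claimed (essentially linear in $K$ and $L$ rather than exponential in $L$), so there will be a great deal of slack to absorb the routine bookkeeping. First I would dispose of degenerate cases: if $\beta n < 1$ then even the empty set has size $\ge \beta n$, so no $L$ distinct codewords can exist and $\abs{\calC} \le L-1$; the case $L=1$ is vacuous; and if $L=2$ the hypothesis says exactly that $\calC$ has relative distance $> 1-\beta \ge 1 - 1/K^2$, so $\abs{\calC} \le K \le (L-1)^2(1+K^{L-1})$ by the classical Plotkin bound (\cref{thm:strong_plotkin}). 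Hence from now on assume $L \ge 3$ and $\beta n \ge 1$.

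Next, write $M = \abs{\calC}$ and, for a coordinate $i \in [n]$ and a symbol $a \in [K]$, let $n_{i,a}$ be the number of codewords of $\calC$ taking value $a$ at position $i$, so $\sum_a n_{i,a} = M$ for every $i$. I would count in two ways the pairs consisting of an unordered $L$‑subset $S \subseteq \calC$ together with a coordinate $i$ at which all members of $S$ agree. By hypothesis every such $S$ agrees on fewer than $\beta n$ coordinates, so this count is strictly less than $\binom{M}{L}\cdot \beta n$; on the other hand it equals $\sum_i \sum_a \binom{n_{i,a}}{L}$, since the $L$‑subsets agreeing at a fixed $i$ are precisely those contained in a single symbol class. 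Averaging over $i$ gives a coordinate $i^*$ with $\sum_a \binom{n_{i^*,a}}{L} < \beta \binom{M}{L}$. Since $x \mapsto \binom{x}{L}$ is discretely convex on $\{0,1,2,\dots\}$ (the difference $\binom{x+1}{L}-\binom{x}{L}=\binom{x}{L-1}$ is nonnegative and nondecreasing), the left‑hand sum is minimized by the most balanced split of $M$ into $K$ parts, so $\sum_a \binom{n_{i^*,a}}{L} \ge K\binom{\floor{M/K}}{L}$ and therefore $K\binom{\floor{M/K}}{L} < \beta\binom{M}{L}$. When $K \mid M$, setting $m = M/K$ and cancelling common factors reduces this to $\prod_{j=1}^{L-1} \frac{m-j}{Km-j} < \beta \le K^{-2(L-1)}$.

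The heart of the proof is then an elementary estimate: if $m \ge (L-1)(1+1/K)$, then for every $j$ with $1 \le j \le L-1$ we have $m \ge j(1+1/K)$, which rearranges to $K^2(m-j) \ge Km-j$, i.e. $\frac{m-j}{Km-j} \ge \frac{1}{K^2}$ (the numerators being positive since $m > L-1 \ge j$); multiplying these $L-1$ inequalities gives $\prod_{j=1}^{L-1} \frac{m-j}{Km-j} \ge K^{-2(L-1)} \ge \beta$, contradicting the inequality above. Hence $m < (L-1)(1+1/K)$, so $M = Km < (L-1)(K+1)$. For a general $M$ with $K \nmid M$, I would apply this conclusion to any subcollection of $\calC$ of size $K\floor{M/K}$ (which inherits the agreement hypothesis), obtaining $K\floor{M/K} < (L-1)(K+1)$ and hence $M < (L-1)(K+1) + K \le L(K+1)$. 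Since $L \ge 3$ forces $(L-1)^2 K^{L-1} \ge 4K^2$, a direct comparison shows $L(K+1) < (L-1)^2 K^{L-1} \le (L-1)^2(1+K^{L-1})$, which finishes the argument.

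The main obstacle — modest as it is — is purely the verification that the clean bound $M < (L-1)(K+1)$ coming from the divisible case survives the passage to $K \nmid M$ and the small values of $L$ without overshooting the stated target $(L-1)^2(1+K^{L-1})$; the enormous slack between $O(LK)$ and $(L-1)^2 K^{L-1}$ makes this entirely routine, and the only genuinely code‑theoretic input, used solely for $L=2$, is the classical Plotkin bound already recorded as \cref{thm:strong_plotkin}.
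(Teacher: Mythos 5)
Your argument is correct, and it reaches the stated bound by a route that is recognizably in the same family as the paper's — a double count over $L$-tuples and coordinates followed by a convexity estimate — but differs in execution in ways that matter. The paper counts \emph{ordered} $L$-tuples (with repetition) against coordinates of \emph{disagreement}, lower-bounds via the hypothesis, upper-bounds via the power-mean inequality $\sum_j m_{ij}^L \ge M^L/K^{L-1}$, and then uses the lossy approximation $M(M-1)\cdots(M-L+1)/M^L \ge 1-(L-1)^2/M$ to solve a linear inequality in $1/M$, landing exactly on $(L-1)^2(1+K^{L-1})$. You instead count unordered $L$-subsets against coordinates of \emph{agreement}, use discrete convexity of $x\mapsto\binom{x}{L}$ in place of the power mean, and — the real gain — keep the falling factorials intact so that the termwise estimate $\frac{m-j}{Km-j}\ge K^{-2}$ (valid once $m\ge j(1+1/K)$) multiplies up to exactly $K^{-2(L-1)}$ and forces $M=O(LK)$, far below the exponential-in-$L$ target. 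The price is a bit of case bookkeeping (divisibility of $M$ by $K$, the separate $L=2$ case via \cref{thm:strong_plotkin}, since the final comparison $L(K+1)\le (L-1)^2(1+K^{L-1})$ needs $L\ge 3$), all of which you handle correctly; the only genuinely degenerate corners left implicit are $K=1$ and $M<L$, both trivial. What your version buys is a quantitatively much sharper form of the lemma — polynomial rather than exponential dependence on $L$ — which, if propagated through \cref{thm:main_technical}, would slightly improve the constants in the tower-type list-size bounds, though it would not change their overall shape since the tower comes from the recursion itself.
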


\begin{proof}
	Let $M = |\calC|$. Any $L$ strings in $\calC$ must differ in at least $(1-\beta)n$ positions. We will lower bound and upper bound the total number of positions where all possible $L$-tuples of strings differ.
	\begin{align*}
		M(M-1)(M-2)\cdots (M-L+1) (1-\beta)n &\leq \sum_{c_1, c_2, \cdots ,c_L \in \calC} \sum_{i=1}^n \inparen{1- \one_{ (c_1)_i = (c_2)_i = \cdots = (c_L)_i}} \\
		&= \sum_{i=1}^n \sum_{c_1, c_2, \cdots ,c_L \in \calC} \inparen{1- \one_{ (c_1)_i = (c_2)_i = \cdots = (c_L)_i}} \\
		&= \sum_{i=1}^n M^L - m_{i,1}^L - m_{i,2}^L - \cdots - m_{i,K}^L \\
		&\leq \sum_{i=1}^n M^L - K\cdot \inparen{\frac{m_{i,1} + m_{i,2} + \cdots + m_{i,K}}{K}}^L \\
		&= n \cdot M^L \inparen{1-\frac{1}{K^{L-1}}}
	\end{align*}
	
	Crudely bounding,
	\begin{align*}
		\inparen{1-\frac{L-1}{M}}^{L-1} (1-\beta) \leq 1-\frac{1}{K^{L-1}} \\
		\inparen{1-\frac{(L-1)^2}{M}} (1-\beta) \leq 1-\frac{1}{K^{L-1}}
	\end{align*}
	If $\beta \leq \frac{1}{K^{2(L-1)}}$, this can be simplified to
	\begin{align*}
		1-\frac{(L-1)^2}{M}\leq \frac{1}{1+\frac{1}{K^{L-1}}} \\
		\frac{1}{1+K^{L-1}}\leq \frac{(L-1)^2}{M} \\
		M \leq (L-1)^2 \cdot (1+K^{L-1})
	\end{align*}
\end{proof}

\section{List decoding up to Capacity}

\begin{definition}[$(\beta,\gamma)$-expander]
	Let $\beta,\gamma\in (0,1)$. A $d$-regular bipartite graph $G(L,R,E)$ with $|L|=|R|=n$ is a $(\beta,\gamma)$-expander if the following is true for every $\alpha \in (\gamma,1)$ and for every $S\sub L$ with $|S|\geq \beta\cdot n$: if for every vertex in $S$, $\alpha d$ edges among its neighborhood are colored red, then at least $(\alpha-\gamma)n$ vertices in $R$ have one or more red edges incident on it.
\end{definition}

\begin{proposition}
	The complete bipartite graph $K_{n,n}$ is a $(\frac{1}{n},0)$-expander.
\end{proposition}

\begin{proposition}
	The bipartite spectral expander with second largest normalized singular value $\lambda$ is a $(\beta,\gamma)$-expander if $\lambda \leq \beta \gamma$.
\end{proposition}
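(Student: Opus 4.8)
The plan is to derive this directly from the bipartite expander mixing lemma (\cref{lem:eml_bipartite}), mimicking the standard argument that spectral expansion implies vertex expansion. Fix $\alpha \in (\gamma,1)$ and $S \subseteq L$ with $|S| \geq \beta n$, and suppose that for every vertex of $S$ at least $\alpha d$ of its incident edges are colored red. Let $T \subseteq R$ be the set of right vertices incident to at least one red edge; the goal is to show $|T| \geq (\alpha - \gamma)n$. First I would observe that every red edge lies between $S$ and $T$, so the total number $E(S,T)$ of edges between $S$ and $T$ satisfies $E(S,T) \geq \alpha d\, |S|$.

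Next I would apply \cref{lem:eml_bipartite} with $f = \one_S$ and $g = \one_T$. Writing $s = |S|/n$ and $t = |T|/n$, and using $\Ex{\li\in L}{\one_S} = \Ex{\li\in L}{\one_S^2} = s$, $\Ex{\ri\in R}{\one_T} = \Ex{\ri\in R}{\one_T^2} = t$, and $\Ex{(\li,\ri)\in E}{\one_S(\li)\one_T(\ri)} = E(S,T)/(nd) \geq \alpha s$, the mixing lemma gives
\[
\alpha s - s t \;\leq\; \frac{E(S,T)}{nd} - st \;\leq\; \lambda \sqrt{st},
\]
that is, $s(\alpha - t) \leq \lambda \sqrt{st}$.

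Then I would argue by contradiction. Suppose $t < \alpha - \gamma$. Then $\alpha - t > \gamma > 0$, so $s\gamma < s(\alpha - t) \leq \lambda \sqrt{st} \leq \lambda\sqrt{s}$, where the last step uses $t \leq 1$. Dividing by $\sqrt{s} > 0$ and squaring yields $s\gamma^2 < \lambda^2$, hence $s < \lambda^2/\gamma^2 \leq \beta^2 \leq \beta$, where the middle inequality uses the hypothesis $\lambda \leq \beta\gamma$ and the last uses $\beta \leq 1$. This contradicts $s = |S|/n \geq \beta$. Therefore $t \geq \alpha - \gamma$, which is exactly the claimed conclusion, and hence the graph is a $(\beta,\gamma)$-expander.

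The only minor points of care are bookkeeping rather than substance: the integrality of $\alpha d$ (reading ``$\alpha d$ red edges'' as $\lceil \alpha d\rceil$ only strengthens $E(S,T) \geq \alpha d\,|S|$, so the argument is unaffected), and the degenerate case $S = \emptyset$, which is vacuous. There is no real obstacle here — the proposition is essentially a one-line consequence of the mixing lemma once the touched-set $T$ is identified correctly, and the preceding proposition about the complete bipartite graph $K_{n,n}$ is just the special case $\lambda = 0$.
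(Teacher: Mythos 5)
Your proof is correct, and since the paper states this proposition without supplying any proof, your argument fills that gap with exactly the intended route: apply the bipartite expander mixing lemma (\cref{lem:eml_bipartite}) to $\one_S$ and the touched set $\one_T$, get $s(\alpha-t)\leq \lambda\sqrt{st}$, and conclude $t\geq \alpha-\gamma$ from $\lambda\leq\beta\gamma$ and $s\geq\beta$. This matches how the paper itself uses EML in the same chapter (e.g.\ the bound on $|S_h|$ via $E(S_h,T_h)\leq \frac{d}{n}|S_h||T_h|+\lambda dn$, which is the crude $\sqrt{st}\leq 1$ version of your inequality and yields $s\gamma<\lambda\leq\beta\gamma$ even more directly), so there is nothing to correct.
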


\begin{theorem}\label{thm:main_technical}
Suppose $k\geq 1$ is an integer. Additionally, suppose for every $\eta>0$, there exists an (inner) code $\calC_{in}$ that is list-decodable (with erasures) up to $\Delta - \frac{\eta}{2}$ with list size $M(\eta)$.

For every $\eps >0$, there exists a $\beta = \frac{1}{16\cdot (\tow_{(1+M(\eps))^3}(k-1))^2} \cdot \delta_{out}$ such that if $\AELC$ is a code obtained by passing $\calC_{out}$ through a $(\beta,\eps/6)$-expander, then for any received word $g\in(\Sigma_{in}^d \cup \{?\})^R$ with $s$ fraction of erasures, the number of codewords $h\in \AELC$ that satisfy $\frac{k+1}{k}\Delta(g,h)+s<\Delta-\eps$ is bounded by $\tow_{(1+M(\eps))^3}(k-1)$.
\end{theorem}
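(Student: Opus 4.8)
The plan is to prove \cref{thm:main_technical} by induction on $k$, mirroring the combinatorial Schwartz--Zippel argument and the list-size-$2$ analysis for AEL developed above, but with the recursion carried on $k$ rather than on an interleaving order. Write $T_j := \tow_{(1+M(\eps))^3}(j-1)$ for the target bound, so $T_1 = 1$, $T_2 = (1+M(\eps))^3$, and $T_{j-1} < T_j$; since a $(\beta,\eps/6)$-expander is automatically a $(\beta',\eps/6)$-expander for any $\beta' \ge \beta$, the graph with threshold $\beta = \delta_{out}/(16\,T_k^2)$ also works for the threshold $\delta_{out}/(16\,T_{k-1}^2)$ required at level $k-1$, so $\AELC$ satisfies the theorem's hypotheses for every $j \le k$. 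The base case $k=1$ is unique decoding with erasures: if $h_1,h_2\in\AELC$ both satisfy $2\,\dis(g,h_i)+s < \Delta-\eps$, then $\dis(h_1,h_2)\le \dis(g,h_1)+\dis(g,h_2)+s < \Delta-\eps$, which is below the distance of $\AELC$ guaranteed by \cref{thm:ael_distance} (using $\lambda \le \beta\cdot\eps/6 < \eps\,\delta_{out}$), forcing $h_1=h_2$ and $T_1=1$; alternatively one can take $k=2$ as the base, as in the list-$2$ argument above, finishing with \cref{thm:strong_plotkin}.

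For the inductive step, fix $g$ with erasure fraction $s$ and let $\calL=\{h\in\AELC : \tfrac{k+1}{k}\dis(g,h)+s<\Delta-\eps\}$. First I would unfold $g$ to the edges and, at each left vertex $\ell$, run the erasure list-decoder of $\calC_{in}$ at radius $\Delta-\eps/2$, obtaining a list $\calL_\ell$ with $|\calL_\ell|\le M(\eps)$, and split $L$ into $L_0$ (empty list), $L_1$ (one codeword), $L_{\ge 2}$ (at least two). The three structural facts are: (a) for every $h\in\calL$ the bad set $S_h := \{\ell : h_\ell\notin\calL_\ell\}$ has $|S_h|<\beta n$, since otherwise colouring the $\ge(\Delta-\eps/2)d$ error-or-erased edges at each $\ell\in S_h$ red and applying the defining property of a $(\beta,\eps/6)$-expander would give $\dis(g,h)+s\ge\Delta-\tfrac{2}{3}\eps>\Delta-\eps$, contradicting $h\in\calL$; (b) a gap argument: for $\ell\in L_{\ge 2}$ and $h\in\calL$ with $\ell\notin S_h$, comparing $h_\ell$ with a second codeword of $\calL_\ell$ via the triangle inequality shows $h_\ell$ disagrees with $g_\ell$ on at least $\theta d$ (non-erased) edges, where $\theta := \delta_{in}-\Delta+\eps/2$; and (c) rounding $h\mapsto h'$ by $h'_\ell=h_\ell$ off $S_h$ and $h'_\ell\in\calL_\ell$ arbitrary on $S_h$, the map $h\mapsto h'|_{L_{\ge 2}}$ is injective on $\calL$, because two codewords agreeing on $L_{\ge 2}$ agree with their $h'$-values off $S_{h_1}\cup S_{h_2}$ and on $L_1\setminus(S_{h_1}\cup S_{h_2})$ (unique local codeword) while $L_0\subseteq S_{h_i}$, hence agree outside a set of size $<2\beta n<\delta_{out} n$, hence are equal as outer codewords.

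The core recursive step is then: no $T_{k-1}+1$ of the strings $\{h'|_{L_{\ge 2}} : h\in\calL\}$ agree on a subset $U\subseteq L_{\ge 2}$ with $|U|=(T_{k-1}+2)\beta n$. Indeed on $U' := U\setminus\bigcup_m S_{h_{j_m}}$, of size $\ge \beta n$, all the $h_{j_m}$ genuinely share local codewords each disagreeing with $g_\ell$ on $\ge\theta d$ edges by (b); declaring erased the $\ge(\theta-\eps/6)n$ right vertices reached from $U'$ by such edges (each a genuine error of \emph{every} $h_{j_m}$) yields $g'$ with erasure fraction $s'$, and a short computation — valid once $\theta-\eps/6\ge\dis(g,h_{j_m})/k$ — shows $\tfrac{k}{k-1}\dis(g',h_{j_m})+s'<\Delta-\eps$, placing all $T_{k-1}+1$ codewords in the list of the $(k-1)$-instance, contradicting the inductive hypothesis. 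Applying the Plotkin-type bound \cref{thm:plotkin} to $\{h'|_{L_{\ge 2}}\}$ over the effective alphabet $\bigcup_\ell\calL_\ell$ of size $K=M(\eps)+1$ with $L=T_{k-1}+1$ — the condition $(T_{k-1}+2)\beta\le K^{-2(L-1)}$ being exactly what $\beta=\delta_{out}/(16\,T_k^2)$ and $T_k=(1+M(\eps))^{3T_{k-1}}$ are designed for — gives $|\calL|\le (T_{k-1})^2\big(1+(M(\eps)+1)^{T_{k-1}}\big)\le (1+M(\eps))^{3T_{k-1}}=T_k$ by injectivity, closing the induction.

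The step I expect to be the main obstacle is making all the budgets close \emph{simultaneously at every level of the recursion}: the error/erasure accounting that turns the $\tfrac{k+1}{k}$-slack at level $k$ into $\tfrac{k}{k-1}$-slack at level $k-1$ forces a quantitative relation of the form $\delta_{in}\ge\tfrac{j+1}{j}\Delta-\eps/3$ at each level $j\le k$, whose binding instance ($j$ small) must still be compatible with $\Delta$ close enough to $\delta_{in}$ that the final AEL radius $\tfrac{k}{k+1}(\Delta-\eps)$ approaches capacity as $k\to\infty$; pinning down the exact admissible regime for $(\Delta,\delta_{in},k)$ — or replacing the full descent by a cleverer one-level reduction (e.g.\ terminating at $k=2$ via the distance of $\AELC$ as in the list-$2$ analysis, so only the weakest constraint is ever invoked) — is the crux. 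A secondary but genuine technical burden is threading the erasure parameter $s$ correctly through the local erasure-list-decoder and the recursion, and keeping the right-vertex-level and edge-level notions of distance consistent; these are routine given care.
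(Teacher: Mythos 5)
Your architecture matches the paper's proof almost exactly: induction on $k$, local list decoding at the left vertices, the rounding map $h\mapsto \tilde h$ into the product of local lists, a lemma asserting that no $\tow_{(1+M(\eps))^3}(k-2)+1$ of the rounded strings can agree on $\approx\tow_{(1+M(\eps))^3}(k-2)\cdot\beta n$ coordinates of $L_{>1}$ (proved by erasing the common error set $V(U)$ and invoking the inductive hypothesis at level $k-1$), and finally the Plotkin-type bound with $\beta$ tuned to the tower. However, the issue you flag in your last paragraph as ``the main obstacle'' is a genuine gap, not a technicality, and it traces back to a wrong choice at the start: you define the local list $\calL_{\li}$ as the erasure-list at plain radius $\Delta-\eps/2$. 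With that definition, the triangle-inequality gap you extract at a vertex of $L_{\ge 2}$ is only $\theta=\delta_{in}-\Delta+\eps/2$, which in the intended instantiation $\delta_{in}=\Delta$ is merely $\eps/2$ --- nowhere near the $\approx\frac{\Delta-s}{k+1}$ worth of fresh common errors needed to convert the $\frac{k+1}{k}$-condition at level $k$ into the $\frac{k}{k-1}$-condition at level $k-1$. Your proposed fix (requiring $\delta_{in}\ge\frac{j+1}{j}\Delta-\eps/3$ at every level) imposes a hypothesis the theorem does not have and which is false for the capacity-achieving inner codes one wants to use.

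The resolution is to define the local lists by the \emph{same} $\frac{k+1}{k}$-scaled condition as the global list: $\calL_{\li}=\{h_{\li}\in\calC_{in} : \frac{k+1}{k}\Delta(g_{\li},h_{\li})+s_{\li}<\Delta-\frac{\eps}{2}\}$. This is still contained in the erasure-decoding ball of radius $\Delta-\eps/2$, so $|\calL_{\li}|\le M(\eps)$, but now the presence of a second codeword $h'_{\li}\in\calL_{\li}$ gives, via $\Delta(h_{\li},h'_{\li})\ge\Delta-s_{\li}$,
\[
\Delta(g_{\li},h_{\li})~\ge~(\Delta-s_{\li})-\tfrac{k}{k+1}\bigl(\Delta-s_{\li}-\tfrac{\eps}{2}\bigr)~=~\frac{\Delta-s_{\li}}{k+1}+\frac{k}{k+1}\cdot\frac{\eps}{2}\,,
\]
which is exactly the budget that makes the level-$(k-1)$ inequality close after erasing $V(U)$ (with the $\eps/6$ losses from the expander absorbed by the $\frac{k\eps}{2(k+1)}$ surplus). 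Two consequences for your write-up: (i) your streamlined bound on $S_h$ via a single ``error-or-erased edges'' colouring no longer applies verbatim, since failure of the scaled condition does not translate into a single red-edge count --- you need to control separately the set $T$ of left vertices with $s_{\li}>s+\eps/6$ and the set $S_h$ of left vertices with $\Delta(g_{\li},h_{\li})>\frac{k}{k+1}(\Delta-s-\eps)+\eps/6$, each of size $\le\beta n$ by the expander property, and round $h$ on $S_h\cup T$; (ii) with this definition no relation between $\delta_{in}$ and $\Delta$ beyond $\Delta(h_{\li},h'_{\li})\ge\Delta-s_{\li}$ is ever invoked, so the recursion closes at every level $2\le j\le k$ with the same inner code. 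Your injectivity argument for $h\mapsto\tilde h|_{L_{>1}}$ via the outer distance, and your application of the Plotkin bound, are fine as stated.
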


\begin{proof}
	The proof is by induction on $k$. The base $k=1$ is just unique decoding as long as the distance is at least $\Delta - \eps$, which it is if the graph $G$ is a $(\delta_{out},\eps)$-expander. The conditions in the theorem for $k=1$ provide for a $(\frac{\delta}{16},\frac{\eps}{6})$-expander which is stronger.
	
	Now assume $k>1$ and the statement is true up to $k-1$. Moreover, assume the graph $G$ is a $(\beta,\frac{\eps}{6})$-expander for some $\beta<\delta_{out}$ to be chosen later. Let $g\in (\Sigma_{in}^d \cup \{?\})^R$ be a received word with $s$ fraction of erasures, and let $\calL$ be the list of codewords around $g$ whose size we wish to bound. That is,
	\begin{align*}
		\calL &= \inbraces{ h\in \AELC \suchthat \inparen{\frac{k+1}{k}} \Delta_R(g,h) + s < \Delta - \eps} \\
		&= \inbraces{ h\in \AELC \suchthat  \Delta_R(g,h) < \frac{k}{k+1} \inparen{\Delta - s - \eps}}
	\end{align*}

	For $\li \in L$, let $g_{\li}$ be the local snapshot of $g$ as seen in the neighborhood of $\li$. Note that $g_{\li}$ will also have erasures inherited from $g$, and let the fraction of erasures in $g_{\li}$ be $s_{\li}$. Let $T\sub L$ be the set of left vertices where $s_{\li} > s+\eps/6 $ erasures. Then, by the $(\beta, \frac{\eps}{6})$-expander property of $G$, we get that $|T| \leq \beta n$.

	Similarly, we may bound the set of vertices where $g_{\li}$ has too many errors - but this set will depend on the codeword from $\calL$. Define $e \defeq \frac{k}{k+1} \inparen{\Delta - s - \eps}$. Let $h\in \calL$ be a codeword so that $\Delta_R(g,h) <e$, and let $S_h \sub L$ be the set of vertices so that $\Delta(g_{\li},h_{\li}) > e+\eps/6$. Again, by the $(\beta,\frac{\eps}{6})$-expander property of $G$, we get that $|S_h| \leq \beta n$.
	
	For every $g_{\li}$, consider the list $\calL_{\li}$ of $\calC_{in}$ codewords defined as
	\[
		\calL_{\li} = \inbraces{h_{\li} \in \calC_{in} \suchthat \inparen{\frac{k+1}{k}} \Delta(g_{\li},h_{\li}) + s_{\li} < \Delta - \frac{\eps}{2} }
	\]
	Using the inner code's list decodability up to capacity, $|\calL_{\li}| \leq M(\frac{\eps}{2})$ for every $\li \in L$. We divide $L$ into three sets $L_0,L_1,L_{>1}$ based on the size of $\calL_{\li}$ being $0,1$ or $>1$ respectively. Note that the sets $L_0,L_1,L_{>1}$ do not depend on $h\in \calL$, unlike $S_h$.
	
	\begin{lemma}\label{lem:adjust}
		To every $h\in \calL$, we can associate an $\tilde{h}$ with the following properties:
		\begin{enumerate}[(i)]
			\item $\tilde{h} \in \Pi_{\li\in L_0} \{f^*\} \times \Pi_{\li \in L_1 \cup L_{>1}} \calL_{\li}$.
			\item $\tilde{h}$ and $h$ only differ on $S_h \cup T$. In particular, $\Delta_L(\tilde{h},h) \leq 2\beta n$.
		\end{enumerate}
	\end{lemma}
	
	That is, every $h\in \calL$ is $2\beta$-close to the space $\Pi_{\li\in L_0} \{f^*\} \times \Pi_{\li \in L_1 \cup L_{>1}} \calL_{\li}$. From property (i), it follows that for any distinct $h_1,h_2\in \calL$, $\tilde{h}_1$ and $\tilde{h}_2$ can only differ on $L_{>1}$. We will next show that they must in fact differ a lot on $L_{>1}$, but this difference need not be pairwise. 
	
	Let $\tilde{h}\vert_{L_{>1}}$ denote the restriction of $\tilde{h}$ to coordinates in $L_{>1}$, and
	\[
		\widetilde{\calL} \defeq \inbraces{ \tilde{h}\vert_{L_{>1}} \suchthat h\in \calL} \sub \Pi_{\li \in L_{>1}} \calL_{\li}
	\]
	$\widetilde{\calL}$ can be seen as a collection of strings of length $|L_{>1}|$ over an alphabet of size $M(\frac{\eps}{2})$. We will use \cref{thm:plotkin} in conjunction with the next lemma to prove an upper bound on the size of $\widetilde{\calL}$, and therefore on the size of $\calL$.
	
	\begin{lemma}\label{lem:plotkin_disagreement}
		No set of $\tow_{(1+M(\eps))^3}(k-2)+1$ strings in $\widetilde{\calL}$ can agree on $(\tow_{(1+M(\eps))^3}(k-2)+3)\cdot \beta n$ positions.
	\end{lemma}

	With the choice of $\beta$ in the theorem statement, we get
	\begin{align*}
		\frac{\inparen{\tow_{(1+M(\eps))^3}(k-2)+3}\cdot \beta n}{|L_{>1}|} &\leq \frac{\inparen{\tow_{(1+M(\eps))^3}(k-2)+3}\cdot \beta n}{\frac{\delta_{out}n}{4}}\\		
		&= \frac{4\inparen{\tow_{(1+M(\eps))^3}(k-2)+3}\cdot}{\delta_{out}} \frac{1}{16\cdot (\tow_{(1+M(\eps))^3}(k-1))^2} \cdot \delta_{out} \\
		&= \frac{\inparen{\tow_{(1+M(\eps))^3}(k-2)+3}}{4(\tow_{(1+M(\eps))^3}(k-1))^2}\\
		&\leq \frac{\tow_{(1+M(\eps))^3}(k-2)}{\tow_{(1+M(\eps))^3}(k-1)^2}\\
		&\leq \frac{1}{\tow_{(1+M(\eps))^3}(k-1)}\\
		&\leq \frac{1}{(1+M(\eps))^{3\tow_{(1+M(\eps))^3}(k-2)}} \\
		&\leq \frac{1}{M(\eps)^{2\tow_{(1+M(\eps))^3}(k-2)}}.
	\end{align*}
	Therefore, \cref{thm:plotkin} can be applied to get that the size of $\widetilde{\calL}$, and therefore the size of $\calL$, is at most 
	\begin{align*}
		&\tow_{(1+M(\eps))^3}(k-2)^2\cdot (1+M(\eps)^{\tow_{(1+M(\eps))^3}(k-2)}) \\
		\leq& (1+M(\eps))^{2\tow_{(1+M(\eps))^3}(k-2)}\cdot (1+M(\eps))^{\tow_{(1+M(\eps))^3}(k-2)} \\
		=& (1+M(\eps))^{3\tow_{(1+M(\eps))^3}(k-2)} \\
		=& \tow_{(1+M(\eps))^3}(k-1)
	\end{align*}
	completing the induction step.
\end{proof}

\begin{remark}
	We always use $M(\eps)$ as the local list size bound on $|\calL_{\li}|$ in the proof above, and there is room for much better bounds for small $k$. However, this will not qualitatively change the tower-type bounds. To avoid unnecessary distraction from list size changing with $k$, we choose to work with a single list-size bound for the inner code - the bound one gets when $\frac{\eps}{2}$-close to capacity.
\end{remark}

\begin{corollary}
	For any $R\in (0,1)$ and $\eps>0$, there is an infinite family of codes based on AEL amplification starting with an arbitrary high rate, constant distance code with the following properties:
	\begin{enumerate}[(i)]
		\item The code has rate $R$ and distance at least $1-R-\eps$.
		\item The code is list decodable up to $1-R-\eps$ with list size bounded by a tower of base $(1/\eps)^{\calO(1/\eps)}$ and height $\calO(1/\eps)$.
		\item The alphabet size of the code is bounded by a tower of base $(1/\eps)^{\calO(1/\eps)}$ and height $\calO(1/\eps)$.
	\end{enumerate}
\end{corollary}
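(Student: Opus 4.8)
The plan is to deduce the corollary by instantiating \cref{thm:main_technical} with an appropriate choice of parameters. First I would fix a target rate $R \in (0,1)$ and gap $\eps > 0$, and choose the integer $k = \lceil 1/\eps \rceil$ so that $\frac{k+1}{k} \leq 1 + \eps$. The inner code $\calC_{in}$ needs to be a code of rate close to $1$ (so that the final rate is $R$) with constant distance $\Delta$ and, crucially, list-decodable with erasures up to radius $\Delta - \eta/2$ with list size $M(\eta)$ bounded by a function of $\eta$ alone. I would take $\calC_{in}$ to be a capacity-achieving code with constant list size over a constant-sized alphabet — random linear codes \cite{AGL24}, Reed--Solomon codes with random evaluation points \cite{ST20, BGM23, GZ23, AGL24}, or folded Reed--Solomon codes \cite{KRSW23, Tamo24} all qualify, and all have $M(\eta) = (1/\eta)^{\calO(1/\eta)}$. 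One subtlety: the theorem asks for list decodability \emph{from erasures}, which is a stronger requirement than list decoding from errors; I would need to verify (or quote) that these capacity-achieving constructions list decode from erasures with the stated list sizes — this is typically easier than the error case since erasures only reduce the effective blocklength, and a capacity-achieving code of rate $1 - \Theta(\eta)$ on a $(1-s)$-fraction of surviving coordinates still decodes correctly up to the relevant radius.

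Next I would choose the outer code $\calC_{out}$: following \cref{thm:near_mds_main}-style reasoning, take $\calC_{out}$ to be an arbitrary high-rate code (rate $1 - \eps'$ for small $\eps'$) with constant distance $\delta_{out}$, e.g. the linear-time unique-decodable codes of Guruswami--Indyk \cite{GI05}. The AEL code $\AELC$ obtained by passing $\calC_{out}$ through a bipartite spectral expander with second singular value $\lambda$ has rate $r_{in} \cdot r_{out}$ and distance at least $\Delta - \lambda/\delta_{out}$. To apply \cref{thm:main_technical} I need the expander to be a $(\beta, \eps/6)$-expander with $\beta = \frac{\delta_{out}}{16 \cdot (\tow_{(1+M(\eps))^3}(k-1))^2}$; by the stated proposition this holds as soon as $\lambda \leq \beta \cdot \eps/6$, which is a constant depending only on $\eps$ (and on $\delta_{out}$, which is itself a constant). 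Choosing $\lambda$ this small forces the degree $d$ of the expander to be a large constant — specifically $d = \poly(1/\lambda)$, which is a tower of height $\calO(1/\eps)$ — and the final alphabet size is $q_0^d$ where $q_0$ is the (constant) alphabet of $\calC_{in}$, hence also a tower of height $\calO(1/\eps)$ and base $(1/\eps)^{\calO(1/\eps)}$. I would then pick $\calC_{in}$ of rate $\rho_0$ slightly above $R$ and $\calC_{out}$ of rate $1 - \eps'$ so that $\rho_0(1-\eps') = R$ and the distance bound $\Delta - \lambda/\delta_{out} \geq 1 - \rho_0 - \eps' \geq 1 - R - \eps$ holds after absorbing lower-order terms, exactly as in the proof of \cref{thm:near_mds_main}.

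With these choices, \cref{thm:main_technical} directly gives that for any received word $g$ with $s = 0$ erasures, the number of codewords $h \in \AELC$ with $\frac{k+1}{k} \Delta(g,h) < \Delta - \eps$ is at most $\tow_{(1+M(\eps))^3}(k-1)$. Since $\frac{k+1}{k} \leq 1 + \eps$ and $\Delta \geq 1 - \rho_0 \geq 1 - R$, the condition $\Delta(g,h) < \frac{k}{k+1}(\Delta - \eps)$ is implied by $\Delta(g,h) \leq 1 - R - \eps'$ for a suitably adjusted $\eps'$ (after rescaling $\eps$ by a constant factor at the start, as in the other theorems in this thesis), so the list at radius $1 - R - \eps$ is bounded by $\tow_{(1+M(\eps))^3}(k-1)$. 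Plugging in $k = \Theta(1/\eps)$ and $M(\eps) = (1/\eps)^{\calO(1/\eps)}$ gives a list size that is a tower of base $(1/\eps)^{\calO(1/\eps)}$ and height $\calO(1/\eps)$, and the alphabet size bound follows as computed above. Finally I would note explicitly that the construction is explicit (the expander is explicit via \cite{LPS88}, $\calC_{out}$ is explicit via \cite{GI05}, and $\calC_{in}$ can be found by brute-force search in constant time, or taken to be an explicit folded RS code), and that it gives an infinite family by letting the blocklength of $\calC_{out}$ grow.

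The main obstacle I anticipate is not the parameter bookkeeping but verifying the erasure list-decodability hypothesis needed to invoke \cref{thm:main_technical} — the theorem is stated in terms of an inner code that list decodes \emph{with erasures}, and one must be careful that the capacity-achieving constructions cited actually deliver this (rather than merely list decoding from errors). A secondary concern is making sure the recursive parameter $\beta$ in \cref{thm:main_technical} is consistent with what a spectral expander of constant degree can provide: since $\beta$ shrinks like the reciprocal of a tower in $k$, the required $\lambda$ is astronomically small but still a \emph{constant} independent of $n$, so Ramanujan graphs of the corresponding (constant) degree suffice — I would just need to state this cleanly so the reader sees the degree, though enormous, does not grow with the blocklength.
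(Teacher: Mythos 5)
Your proposal is correct and follows essentially the same route as the paper, whose entire proof is to invoke \cref{thm:main_technical} with $s=0$ and $k=1/\eps$ and to take folded Reed--Solomon codes as the inner code with $M(\eps)=(1/\eps)^{\calO(1/\eps)}$. The parameter bookkeeping, expander-degree discussion, and the erasure-list-decodability caveat you raise are all reasonable elaborations of what the paper leaves implicit.
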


\begin{proof}
	We use \cref{thm:main_technical} with $s=0$ and $k=1/\eps$. Using the folded Reed-Solomon codes, we can take $M(\eps) = (1/\eps)^{\calO(1/\eps)}$.
\end{proof}

\begin{proof}[Proof of \cref{lem:adjust}]
\[
    \tilde{h}_{\li} = \begin{cases}
        h_{\li} \qquad \li \not\in S_h \cup T\\
        f^* \qquad \li \in L_0 \\
        f_{\li} \qquad \li \in (S_h\cup T) \backslash L_0, \text{ where } f_{\ell} \in \calL_{\ell} \text{ (arbitrarily)}
    \end{cases}
\]
\end{proof}

\begin{proof}[Proof of \cref{lem:plotkin_disagreement}]
	Let $a = \tow_{(1+M(\eps))^3}(k-2)+1$. Towards a contradiction, we assume that there are $a$ strings $f_1, f_2,\cdots ,f_a$ in $\widetilde{\calL}$ that agree on $(a+2)\beta n$ positions in $L_{>1}$. Suppose the strings $f^{(1)}, f^{(2)},\cdots ,f^{(a)}$ in $\widetilde{\calL}$ are restrictions of $\tilde{h}^{(1)},\tilde{h}^{(2)},\cdots ,\tilde{h}^{(a)}$ to coordinates in $L_{>1}$. Then for the codewords $h^{(1)},h^{(2)},\cdots ,h^{(a)}$ in $\calL$, it holds that $\tilde{h}^{(1)},\tilde{h}^{(2)},\cdots ,\tilde{h}^{(a)}$ agree on $(a+2)\beta n$ positions in $L_{>1}$. Let this agreement set be $A \sub L_{>1}$.
	
	Since $h$ and $\tilde{h}$ only differ on $S_h \cup T$, the codewords $h^{(1)},h^{(2)},\cdots ,h^{(a)} \in \calL$ agree on the set $U \defeq A\setminus \inparen{S_{h^{(1)}} \cup S_{h^{(2)}} \cup \cdots \cup S_{h^{(a)}} \cup T}$, which is of size at least $(a+2)\beta n - a\cdot \beta n - \beta n  = \beta n$. 
	
	Let $h_{\li}$ be the common value of $h^{(1)},h^{(2)},\cdots ,h^{(a)}$ on an $\li \in U$. Since $\li\in A\sub L_{>1}$, there exists an $h'_{\li}\in \calL_{\li}$ distinct from $h_{\li}$. By the triangle inequality,
	\begin{align*}
		\Delta(g_{\li},h_{\li}) + \Delta(g_{\li},h'_{\li}) &\geq \Delta(h_{\li},h'_{\li}) \geq \Delta - s_{\li} \\
		\Delta(g_{\li},h_{\li}) &\geq \Delta - s_{\li} - \Delta(g_{\li},h'_{\li}) \\
		\Delta(g_{\li},h_{\li}) &\geq \Delta - s_{\li} - \frac{k}{k+1}\inparen{\Delta - s_{\li} - \frac{\eps}{2}} \\
		\Delta(g_{\li},h_{\li}) &\geq \frac{\Delta - s_{\li}}{k+1} + \frac{k}{k+1}\cdot \frac{\eps}{2}
	\end{align*}
	Therefore, every $\li \in U$ identifies at least $\frac{\Delta- s_{\li}+(k\eps/2)}{k+1}\cdot d \geq \frac{\Delta- s - (\eps/6)+(k\eps/2)}{k+1}\cdot d$ edges in its neighborhood such that the right vertices touched by these edges are common error locations (with respect to $g$) for all $h^{(1)},h^{(2)},\cdots h^{(a)}$. Let us call this set of error locations in $R$ identified by $\li$ as $V(\li)$. Define $V(U) \defeq \bigcup_{\li \in U} V(\li)$. Since $|U| \geq \beta n$, using the $(\beta,\eps/6)$-expander property, we conclude that $|V(U)| \geq \frac{\Delta- s - (\eps/6)+(k\eps/2)}{k+1} - \frac{\eps}{6}$. 
	
	In conclusion, we have that $V(U)$ is a common error location set for $h^{(1)},h^{(2)},\cdots h^{(a)}$, and $\frac{|V(U)|}{n} \geq \frac{\Delta- s - (\eps/6)+(k\eps/2)}{k+1} - \frac{\eps}{6}$.
	
	Next we construct a $g' \in \inparen{\Sigma_{in}^d \cup \{?\}}^R$ with $s'\cdot n$ erasures such that all $h^{(1)}, h^{(2)},\cdots ,h^{(a)}$ satisfy $\inparen{\frac{k}{k-1}}\Delta_R(g',h^{(i)}) + s' < \Delta-\eps$. This would contradict the inductive hypothesis since $a > \tow_{(1+M(\eps))^3}(k-2)$. The new $g'$ is simply $g$ with the modification that the symbols in the set $V(U)$ are erased. Clearly, $s' = s+|V(U)|$, and since $V(U)$ was the location of errors between any $h^{(i)}$ and $g$, it also follows that $\Delta(g',h^{(i)}) = \Delta(g,h^{(i)}) - |V(U)|$.
	
	\begin{align*}
		\inparen{\frac{k}{k-1}}\Delta_R(g',h^{(i)}) + s' &= \inparen{\frac{k}{k-1}}\inparen{\Delta_R(g,h^{(i)}) - |V(U)|} + (s+|V(U)|) \\
		&= \inparen{\frac{k}{k-1}}\Delta_R(g,h^{(i)}) - \frac{|V(U)|}{k-1} + s \\
		&\leq \inparen{\frac{k}{k-1}}\Delta_R(g,h^{(i)}) - \frac{\Delta - s -\eps/6 + (k\eps/2)}{k^2-1} +\frac{\eps/6}{k-1} + s \\
		&\leq \inparen{\frac{k}{k-1}}\inparen{\frac{k}{k+1}}\inparen{\Delta -s-\eps} - \frac{\Delta - s -\eps/6 + (k\eps/2)}{k^2-1} +\frac{\eps/6}{k-1} + s \\
		&= \Delta - s - \frac{\eps}{k^2-1}\inparen{ k^2-\frac{1}{6} + \frac{k}{2} - \frac{k+1}{6}} + s \\
		&\leq \Delta - \frac{\eps}{k^2-1}\inparen{ k^2 + \frac{2k-2}{6}} \\
		&< \Delta - \frac{\eps}{k^2-1}\inparen{ k^2} < \Delta - \eps
	\end{align*}
 \end{proof}

\chapter{Improved list size bounds for Folded Reed-Solomon Codes} \label{chap:list_size}
In the last chapter, we came up with new codes that achieve list decoding capacity, albeit not efficiently. As mentioned then, the first codes to achieve the list decoding capacity were the folded RS codes \cite{GR08}. Originally, their list size was proven to be $n^{\calO(1/\eps)}$, where $n$ is the blocklength and $\eps$ is the gap to capacity. In fact, \cite{Gur11} proved a stronger statement that the list is contained in an affine subspace of dimension $1/\eps$. 

This list size was brought down over time with subspace evasive sets \cite{DL12, BAS14}, or combinatorially bounding intersections of Hamming balls and affine subspaces \cite{KRSW23, Tamo24}. \cref{tab:list_size_capacity} lists some of these improvements. We note that there are other explicit codes achieving list decoding capacity based on multiplicity codes and algebraic-geometric (AG) codes \cite{GX13, GX22, GRZ21}. but to the best of our knowledge, the state of the art list size for any explicit capacity achieving code remains $(1/\eps)^{\calO(1/\eps)}$.

\section{Our Results}
We extend the above line of work to improve the list size of folded RS codes to $\calO(1/\eps^2)$, and thereby improving the state of the art. First, we give an elementary proof that generalizes the results of \cite{KRSW23, Tamo24}. This is again based on upper bounds on the intersection of Hamming balls and affine subspaces, and gives the same asymptotic bound of $(1/\eps)^{\calO(1/\eps)}$ that was known before. Then, for the specific case of folded RS codes, we improve this analysis to get a list size of $\calO(1/\eps^2)$.

We use a bottom up inductive proof, that gives us precise bounds on the list size for fixed decoding radii of the form $\approx \frac{k}{k+1}(1-R)$.
\begin{theorem}
For $t$-folded Reed Solomon codes, and any integer $k \in [t]$,
\[
	\abs{\calL \inparen{g,\frac{k}{k+1} \cdot \inparen{1-\frac{t}{t-k+1}R}}} \leq (k-1)^2 + 1
\]
\end{theorem}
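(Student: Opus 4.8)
The plan is to reduce the statement to a question about the intersection of a Hamming ball with an affine subspace of message polynomials, and then to prove a single clean inductive statement about such intersections, running the induction on the dimension of the subspace in a \emph{bottom-up} fashion and feeding in the folded Wronskian determinant criterion of \cite{GK16} to control how fast the subspace shrinks when agreement coordinates are fixed.

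\textbf{Step 1 (reduction to affine subspaces).} Applying \cref{thm:lin_alg_rs} with $s = k$ and $m = t$, the list $\calL(g,\eta)$ with $\eta \defeq \tfrac{k}{k+1}\bigl(1 - \tfrac{t}{t-k+1}R\bigr)$ is contained in an affine subspace $V \subseteq \F_q[X]^{<Rn}$ of dimension $k-1$. So it suffices to bound $\abs{V \cap \calL(g,\eta)}$, and in fact I would prove the stronger, self-strengthening statement: for every $d \ge 0$, every affine subspace $V$ of $\F_q[X]^{<Rn}$ of dimension $d$, and every received word $g$, one has $\abs{V \cap \calL\bigl(g,\, \tfrac{d+1}{d+2}(1 - \tfrac{t}{t-d}R)\bigr)} \le d^2 + 1$. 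Specializing $d = k-1$ recovers the theorem.

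\textbf{Step 2 (base cases).} For $d = 0$, $V$ is a single polynomial and the bound $1$ is immediate. For $d = 1$, $V$ is a line $\{f_0 + \lambda v : \lambda \in \F_q\}$; split $[N]$ into the set $\overline{S}$ of blocks on which the direction $v$ vanishes identically and its complement $S$. Every two points of $V$ agree on $\overline{S}$ and differ at \emph{every} block of $S$, and since $v$ has fewer than $Rn$ roots we get $\abs{S} > (1-R)N > (1 - \tfrac{t}{t-1}R)N$. Hence the agreement sets (with $g$) of any list codewords, restricted to $S$, are pairwise disjoint; if there were three list codewords, one of them would agree with $g$ on at most $\abs{S}/3$ blocks of $S$ and so would differ from $g$ on more than a $\tfrac{2}{3}(1-\tfrac{t}{t-1}R)$ fraction of coordinates, contradicting membership in $\calL(g, \tfrac{2}{3}(1-\tfrac{t}{t-1}R))$. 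Thus $\abs{V \cap \calL} \le 2 = 1^2 + 1$.

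\textbf{Step 3 (inductive step) and the main obstacle.} Fix $d \ge 2$, write $\eta_d = \tfrac{d+1}{d+2}(1 - \tfrac{t}{t-d}R)$, and suppose for contradiction that $f_0, \ldots, f_L \in V \cap \calL(g,\eta_d)$ with $L \ge d^2 + 1$. Let $A_j \subseteq [N]$ be the blocks on which $f_j$ agrees with $g$, so $\abs{A_j} > (1-\eta_d)N$ and $\sum_j \abs{A_j} > (L+1)(1-\eta_d)N$. For a block $c$ let $m_c$ be the number of the $f_j$ agreeing with $g$ at $c$, so $\sum_c m_c = \sum_j \abs{A_j}$. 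All codewords counted by $m_c$ lie in the affine subspace $V_c := \{f \in V : f \text{ agrees with } g \text{ on block } c\}$, whose dimension is $d - r_c$, where $r_c$ is the rank inside $V$ of the $t$ evaluation constraints imposed by block $c$. Treating block $c$ as an erasure and puncturing it (exactly as error locations are treated as erasures in the interleaved-code arguments of \cite{GGR09}), the inductive hypothesis applied to $V_c$ bounds $m_c$ in terms of $d - r_c$. The essential global input is the folded Wronskian criterion of \cite{GK16}: for any $d$ linearly independent polynomials of degree $< Rn$, the total rank deficiency of the associated folded Wronskian matrices, summed over all blocks, is small, and this is precisely what produces the factor $\tfrac{t}{t-d}$ in $\eta_d$ and lets the per-block dimension drops be amortized. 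Combining the lower bound on $\sum_c m_c$ with the inductive upper bound and the Wronskian count yields $L + 1 \le d^2 + 1$, the desired contradiction. I expect the main difficulty to be the numerical bookkeeping in this last step: the rank-deficiency budget must be spent ``smallest dimension first'', so that the pigeonhole losses at successive levels telescope rather than compound, and each dimension drop is charged exactly one unit — this is exactly the refinement of \cite{KRSW23, Tamo24} (who spent the budget top-down and got bounds like $k(k+1)^{k-2}$) that is needed to land on the clean $(k-1)^2+1$. Checking that the erasure-aware induction composes with the folded Wronskian bound with no slack is the crux of the proof.
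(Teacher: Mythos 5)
Your overall route — reduce to a $(k-1)$-dimensional affine subspace via \cref{thm:lin_alg_rs}, then induct on the dimension using per-block rank drops controlled globally by the folded Wronskian bound of \cite{GK16} and a double-counting of agreements — is the right one and matches the paper's. But your inductive statement has a genuine gap: you tie the decoding radius to the dimension, claiming $\abs{V \cap \calL(g, \tfrac{d+1}{d+2}(1-\tfrac{t}{t-d}R))} \le d^2+1$ for every $d$-dimensional $V$. In the inductive step you must bound $m_c$, the number of list codewords (at the \emph{original} radius $\eta_d$) lying in the subspace $V_c$ of dimension $d' = d - r_c$. Your hypothesis at dimension $d'$ only controls $V_c \cap \calL(g,\eta_{d'})$ with $\eta_{d'} < \eta_d$ — a strictly smaller ball — so it does not bound $m_c$. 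Puncturing the single block $c$ cannot rescue this: one coordinate of forced agreement improves the relative distance by $O(1/N)$, while $\eta_d - \eta_{d'}$ is a constant. (The erasure trick in the interleaved-code argument works because fixing one interleaved component pins down a \emph{constant fraction} of coordinates, not one block.)

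The fix is to decouple the two parameters, which is what the paper does: with the radius parameter $k$ held fixed throughout, prove that for every affine subspace $\calH$ of dimension $d < k$, $\abs{\calH \cap \calL(g,\tfrac{k}{k+1}(1-\tfrac{t}{t-k+1}R))} \le (k-1)d+1$, by induction on $d$ at the \emph{same} radius. This linear-in-$d$ bound is also what makes the bookkeeping work: the per-coordinate count sums to $N + (k-1)\sum_i(d-r_i)$, which the Wronskian lemma bounds directly, whereas your quadratic per-level bound $(d-r_c)^2+1$ would require controlling $\sum_i (d-r_i)^2$. Two smaller points you should also address: (i) coordinates with $r_i=0$ give no dimension drop, so applying the hypothesis there is circular — you must bound them separately by $\abs{\calH_g}$ and show this bad set has density less than $R$ (since $r_i=0$ forces a nonzero direction polynomial of degree $<Rn$ to vanish on all $t$ points of the block); (ii) the base case $d=1$ needs list size $k$, not $2$, which is the alphabet-free line lemma at radius $\tfrac{k}{k+1}\Delta$. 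The final specialization $d=k-1$ then yields $(k-1)^2+1$ exactly as you intend.
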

By choosing $t > k/\eps$, we get that the list size for decoding up to $\frac{k}{k+1}(1-R-\eps)$ is at most $(k-1)^2+1$. For example, if we had a fixed budget of being able to deal with an output list of size 50, this theorem shows that we can approach a decoding radius of $\frac{8}{9}(1-R)$ by increasing $t$. We also note that the decoding radius of $\frac{k}{k+1}(1-R)$ is larger than the Johnson bound $1-\sqrt{R}$ whenever $R\geq \frac{1}{k^2}$.

\begin{table}
\centering
\begin{tabular}{|m{5cm}|c|c|c|}
    \hline
    Code & List size & Explicit? & Reference \\ [0.5ex]
    \hline
    \hline
     Random code & $1/\epsilon$ & Non-explicit & \cite{ZP81} \\ [0.5ex]
     \hline
     Random linear code & $1/\epsilon$ & Non-explicit & \cite{AGL24}\\ [0.5ex]
     \hline
     Randomly evaluated Reed-Solomon & $1/\epsilon$ & Non-explicit & \cite{BGM23}\\ [0.5ex]
     \hline
     \hline
     Folded Reed-Solomon & $n^{1/\epsilon}$ & Explicit & \cite{GR08}\\
     \hline
     Subspace-evasive subcode of Folded RS& $(1/\epsilon)^{1/\epsilon}$ & Explicit & \cite{DL12} \\ [0.5ex]
     \hline
     Folded Reed-Solomon
      & $(1/\epsilon)^{\frac{1}{\epsilon}\cdot \log(\frac{1}{R})}$ & Explicit & \cite{KRSW23}\\ [0.5ex]
     \hline
     \qquad \texttt{"} & $(1/\epsilon)^{1/\epsilon}$ & Explicit & \cite{Tamo24}\\  [0.5ex]
     \hline
     \qquad \texttt{"} & $1/\epsilon^2$ & Explicit & This work\\  [0.5ex]
     \hline
\end{tabular}
\caption{List sizes for some codes of rate $R$ at decoding radius $1-R-\eps$.}
\label{tab:list_size_capacity}
\end{table}

\subsection{Future Work}
The original motivation for this work was to apply the ideas of \cite{KRSW23, Tamo24} to the AEL-based capacity achieving codes of \cref{chap:capacity}. It remains to be seen whether the techniques of this chapter can help us avoid the large list sizes for those codes.

One advantage of the arguments of \cite{KRSW23, Tamo24} is that they immediately suggest randomized algorithms to find the list in linear time, given a basis for the affine subspace. 
One wonders whether our proof technique can be used to give a deterministic near-linear time algorithm to obtain the list given a basis for the affine subspace in which it is contained. If true, this would give a near-linear time \emph{deterministic} algorithm for decoding folded RS codes using the work of \cite{GHKS24}.

Indeed, when decoding up to $\frac{2}{3}(1-R)$, which means we are dealing with a 1-dimensional affine subspace, a simple near-linear time deterministic algorithm can be obtained. If the affine subspace is $\{f_0 + \alpha f_1 \suchthat \alpha \in \F_q\}$, and the received word is $g$, we look at the most frequent values appearing among $\{\frac{g(i)-f_0(i)}{f_1(i)} \}$ over $i$ such that $f_1(i)\neq 0$. This avoids having to try all possible values in $\F_q$ for $\alpha$. Can this idea be generalized to higher dimensional affine subspaces?

Coming back to combinatorial bounds, can a better analysis allow us to improve the list size to optimal $(k-1)+1=k$ instead of $(k-1)^2+1$? We do not know of any explicit constructions for such higher-order MDS codes when $k\geq 3$.

Finally, the notion of Wronskian determinants is tailored to the algebraic structure of folded RS and multiplicity codes. Can we generalize it to general linear codes, and what further applications does it have?
\section{Intersection of affine subspace and Hamming balls}

In this section, we show that the intersection of a low-dimensional affine subspace and a Hamming ball cannot be too large for any code, giving alphabet-independent bounds on the list size. Let us start with the easiest case where we show that a 1-dimensional affine subspace (essentially, a line) intersects Hamming balls of radius $\frac{2\Delta}{3}$ in at most 2 places.

\begin{lemma}
	Let $\calC$ be a linear code of distance $\Delta$ and blocklength $n$ over alphabet $\F_q$, and let $\calH \sub \calC$ be an affine subspace of dimension 1. Then, for any $g\in \F_q^n$,
	\[
		\abs{\calH \cap \calL \inparen{g,\frac{2\Delta}{3} } } \leq 2.
	\]
\end{lemma}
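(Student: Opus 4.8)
The plan is to reduce the claim to a short double-counting argument, exactly in the spirit of the sketch given earlier for folded Reed--Solomon codes. First I would exploit the one-dimensional structure: write $\calH = \{f_0 + \alpha f_1 \suchthat \alpha \in \F_q\}$ for some $f_1 \neq 0$, and let $S = \{i \in [n] \suchthat f_1(i) \neq 0\}$ and $\overline{S} = [n] \setminus S$. The key structural observation is that all points of $\calH$ agree on $\overline{S}$ (since they differ only in the $\alpha f_1$ term, which vanishes there), while any two distinct points $f_0 + \alpha f_1$ and $f_0 + \alpha' f_1$ differ in \emph{every} coordinate of $S$. Since two distinct codewords of $\calC$ must disagree in at least $\Delta n$ places, and all those disagreements lie inside $S$, we get $\abs{S} \geq \Delta n$.

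Next I would run the double-counting argument. Suppose for contradiction that there are $3$ distinct codewords $h_1, h_2, h_3 \in \calH \cap \calL(g, \tfrac{2\Delta}{3})$. For each $j$, let $A_j = \{i \in S \suchthat h_j(i) = g(i)\}$ be the agreement set of $h_j$ with $g$, restricted to $S$. Because the $h_j$ pairwise disagree everywhere on $S$, the sets $A_1, A_2, A_3$ are pairwise disjoint subsets of $S$. Hence $\abs{A_1} + \abs{A_2} + \abs{A_3} \leq \abs{S} \leq n$, so some $A_j$ has $\abs{A_j} \leq \abs{S}/3$. But then $h_j$ disagrees with $g$ in at least $\abs{S} - \abs{S}/3 = \tfrac{2}{3}\abs{S} \geq \tfrac{2}{3}\Delta n$ coordinates (all inside $S$), contradicting $\Delta(g, h_j) < \tfrac{2\Delta}{3}$ — here I should be slightly careful about strict versus non-strict inequalities, using that membership in $\calL(g, \tfrac{2\Delta}{3})$ means distance \emph{strictly} less than $\tfrac{2\Delta}{3}$, so $\abs{A_j} > \abs{S}/3$ is forced for all three, which is impossible since the $A_j$ are disjoint in $S$. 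This yields the bound of $2$.

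The only mild subtlety — and the one place I would slow down — is handling the boundary/degeneracy cases: what if $\calH$ has fewer than $q$ distinct elements (impossible here since $f_1 \neq 0$ over a field, but worth a line), and making sure the agreement sets are genuinely disjoint, which uses that \emph{any} two distinct points of the line differ on \emph{all} of $S$, not just on $\Delta n$ of its coordinates. That "differ everywhere on $S$" property is special to dimension one and is precisely what makes this case clean; it is also the observation the excerpt flags as the seed of the general induction on the dimension of the affine subspace. I do not anticipate any real obstacle for this one-dimensional statement — it is a couple of lines once the set $S$ is isolated — so the write-up will be short and self-contained, with the disjointness-plus-pigeonhole step being the heart of it.
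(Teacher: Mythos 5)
Your proposal is correct and follows essentially the same route as the paper's proof: isolate the support $S$ of the direction vector $f_1$, note that distinct points of the line differ on all of $S$ (so $\abs{S}\geq \Delta n$ and the agreement sets are disjoint), and apply pigeonhole to rule out a third codeword. Your extra care about the strict inequality in the definition of $\calL$ is a welcome touch that the paper's write-up glosses over.
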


\begin{proof}
	Let $\calH = \{f_0 + \alpha \cdot f_1 \suchthat \alpha \in \F_q\}$ for some $f_0$ and $f_1$ in $\calC$, and let $S\sub [n]$ be the set of coordinates where $f_1$ is non-zero. Clearly, $|S| \geq \Delta \cdot n$.
	
	Let $S_h \sub S$ denote the set of coordinates in $S$ where $g$ and $h\in \calH$ agree. Note that for any two distinct $h_1,h_2 \in \calH$, they differ on every coordinate in $S$. This means that for any distinct $h_1, h_2 \in \calH$, the sets $S_{h_1}$ and $S_{h_2}$ are disjoint.
	
	Now for the sake of contradiction, assume there are three codewords $h_1,h_2,h_3 \in \calH \cap \calL\inparen{g,\frac{2\Delta}{3}}$. Then for at least one of these $h_i$, its $S$-agreement with $g$ must be small so that $|S_{h_i}| \leq \frac{|S|}{3}$. For this $h_i$, it therefore also holds that its disagreement with $g$ is at least $\frac{2|S|}{3} \geq \frac{2\Delta}{3}$, which contradicts $h_i \in \calL\inparen{g,\frac{2\Delta}{3}}$.
\end{proof}

With essentially the same proof, this lemma can be generalized to larger radii as long as we are still working with a line. We will need this version for higher dimensional $\calH$.

\begin{lemma}\label{lem:dim1}
	Let $\calC$ be a linear code of distance $\Delta$ and blocklength $n$ over alphabet $\F_q$, and let $\calH \sub \calC$ be an affine subspace of dimension 1. Then, for any $g\in \F_q^n$,
	\[
		\abs{\calH \cap \calL \inparen{g,\frac{k}{k+1} \Delta} } \leq k.
	\]
\end{lemma}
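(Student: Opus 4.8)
The plan is to repeat the disjointness argument from the $k=1$ case (\cref{lem:dim1} for $k=1$), observing that the key combinatorial fact — that any two distinct codewords on a line differ everywhere on the support $S$ of the direction vector — has nothing to do with the radius. First I would write $\calH = \{f_0 + \alpha \cdot f_1 \suchthat \alpha \in \F_q\}$ for some $f_0, f_1 \in \calC$, and let $S \subseteq [n]$ be the set of coordinates where $f_1$ is nonzero; since $f_1$ is a nonzero codeword (if $f_1 = 0$ then $\abs{\calH} = 1$ and the bound is trivial), $\abs{S} \geq \Delta \cdot n$. The crucial observation is that for any two distinct $h_1, h_2 \in \calH$, we have $h_1(i) \neq h_2(i)$ for every $i \in S$, because $h_1 - h_2 = (\alpha_1 - \alpha_2) f_1$ with $\alpha_1 \neq \alpha_2$, and this vector is nonzero exactly on $S$.

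Next I would define, for each $h \in \calH$, the agreement set $S_h \subseteq S$ of coordinates in $S$ where $g$ and $h$ agree. By the observation above, for distinct $h_1, h_2 \in \calH$ the sets $S_{h_1}$ and $S_{h_2}$ are disjoint subsets of $S$: if some $i \in S_{h_1} \cap S_{h_2}$, then $g(i) = h_1(i)$ and $g(i) = h_2(i)$, forcing $h_1(i) = h_2(i)$, contradicting disjointness on $S$. Then I would argue by contradiction: suppose $h_1, \ldots, h_{k+1}$ are $k+1$ distinct codewords in $\calH \cap \calL(g, \tfrac{k}{k+1}\Delta)$. Their agreement sets $S_{h_1}, \ldots, S_{h_{k+1}}$ are pairwise disjoint subsets of $S$, so $\sum_{j=1}^{k+1} \abs{S_{h_j}} \leq \abs{S}$, and hence some $h_j$ has $\abs{S_{h_j}} \leq \tfrac{\abs{S}}{k+1}$. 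For this $h_j$, the number of coordinates in $[n]$ where $g$ and $h_j$ disagree is at least the number where they disagree within $S$, which is $\abs{S} - \abs{S_{h_j}} \geq \tfrac{k}{k+1}\abs{S} \geq \tfrac{k}{k+1}\Delta n$. This contradicts $h_j \in \calL(g, \tfrac{k}{k+1}\Delta)$, completing the proof.

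There is essentially no obstacle here — the argument is a verbatim generalization of the $k=1$ case, and the only thing to be slightly careful about is the degenerate case $f_1 = 0$, which should be dispatched at the start. The reason this works uniformly in $k$ is that the $k$-fold disjointness forces one agreement set to be small (a pigeonhole step), and the threshold $\tfrac{k}{k+1}$ is precisely calibrated so that "small agreement on $S$" translates into "too much disagreement on $[n]$". The statement is tight in the sense that the induction on dimension carried out later in the chapter uses exactly this base case, so no stronger bound for the $1$-dimensional case is needed.
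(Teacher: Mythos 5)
Your proof is correct and is exactly the argument the paper intends: the paper proves the $k=2$, radius-$\tfrac{2\Delta}{3}$ case in full and then states that \cref{lem:dim1} follows by "essentially the same proof," which is precisely the verbatim generalization (disjoint agreement sets on $S$, pigeonhole over $k+1$ codewords) that you carry out. Your explicit handling of the degenerate case $f_1 = 0$ is a minor point the paper glosses over but does not change anything.
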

Next, we prove a list size bound of 12 for 2-dimensional affine planes when decoding up to $\frac{3\Delta}{4}$. 

\begin{lemma}\label{lem:dim2}
	Let $\calC$ be a linear code of distance $\Delta$ and blocklength $n$ over alphabet $\F_q$, and let $\calH \sub \calC$ be an affine subspace of dimension 2. Then, for any $g\in \F_q^n$,
	\[
		\abs{\calH \cap \calL \inparen{g,\frac{k}{k+1} \Delta} } \leq k(k+1).
	\]
\end{lemma}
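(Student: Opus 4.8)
The plan is to reduce the 2-dimensional case to the 1-dimensional case (\cref{lem:dim1}) by a fibering argument, analogous in spirit to the inductive step used for interleaved codes earlier in the excerpt. Write $\calH = \{f_0 + \alpha f_1 + \beta f_2 : \alpha,\beta \in \F_q\}$ for fixed $f_0,f_1,f_2 \in \calC$ (where $f_1,f_2$ together with $f_0$ span the affine plane; without loss of generality $f_1,f_2$ are linearly independent). Let $S_2 \subseteq [n]$ be the set of coordinates where $f_2$ is nonzero, so $|S_2| \geq \Delta n$. First I would observe that for a fixed value $\alpha_0$, the set $\calH_{\alpha_0} := \{f_0 + \alpha_0 f_1 + \beta f_2 : \beta \in \F_q\}$ is a 1-dimensional affine subspace of $\calC$, so by \cref{lem:dim1} we have $|\calH_{\alpha_0} \cap \calL(g, \tfrac{k}{k+1}\Delta)| \leq k$. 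The content of the lemma is therefore to bound the number of "active fibers" — values $\alpha_0$ for which $\calH_{\alpha_0}$ contains at least one list element — by $k+1$.

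The key step is the double-counting / disjointness argument on the coordinates of $S_2$. Fix any coordinate $i \in S_2$. Two codewords $h = f_0 + \alpha f_1 + \beta f_2$ and $h' = f_0 + \alpha' f_1 + \beta' f_2$ with $\alpha = \alpha'$ but $\beta \neq \beta'$ differ at coordinate $i$ (since $f_2(i) \neq 0$); more importantly, I would like to say that list elements lying in *different* fibers cannot agree with $g$ too often on $S_2$. The mechanism: for a list element $h \in \calH_{\alpha_0}$, let $A_h \subseteq S_2$ be its agreement set with $g$ restricted to $S_2$. If $h \in \calL(g,\tfrac{k}{k+1}\Delta)$ then the total disagreement of $h$ with $g$ is $< \tfrac{k}{k+1}\Delta n$, hence $|A_h| > |S_2| - \tfrac{k}{k+1}\Delta n \geq |S_2| - \tfrac{k}{k+1}|S_2| \cdot \tfrac{\Delta n}{|S_2|}$; using $|S_2| \geq \Delta n$ this gives $|A_h| \geq |S_2|\left(1 - \tfrac{k}{k+1}\right) = \tfrac{|S_2|}{k+1}$ — wait, one must be careful here, since $\tfrac{k}{k+1}\Delta n$ need not be $\leq \tfrac{k}{k+1}|S_2|$ unless $\Delta n \leq |S_2|$, which does hold. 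So each list element has $S_2$-agreement at least $\tfrac{|S_2|}{k+1}$. Now I claim that for two list elements $h, h'$ in distinct fibers ($\alpha \neq \alpha'$), the agreement sets $A_h$ and $A_{h'}$ overlap in at most... hmm, this is where the argument is more subtle than dimension 1: in dimension 1 they were exactly disjoint, but here $h - h' = (\alpha-\alpha')f_1 + (\beta-\beta')f_2$ need not vanish nowhere on $S_2$.

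The main obstacle, therefore, is controlling this cross-fiber overlap. The resolution I would pursue: pick *one* representative list element $h_j$ from each active fiber $\alpha_j$, $j = 1,\dots,m$ (so $m$ is the number of active fibers). For $j \neq j'$, $h_j$ and $h_{j'}$ are distinct codewords of $\calC$, hence agree with each other on at most $(1-\Delta)n$ coordinates, so on $S_2$ they agree on at most $(1-\Delta)n$ coordinates, i.e. they disagree on at least $|S_2| - (1-\Delta)n$ coordinates of $S_2$. On any coordinate of $S_2$ where $h_j \neq h_{j'}$, at most one of them can agree with $g$. Combining this "almost-disjointness" with the agreement lower bound via inclusion–exclusion / double counting across all $m$ representatives, I expect to get $m \cdot \tfrac{|S_2|}{k+1} \leq |S_2| + (\text{overlap slack})$, where the overlap slack is bounded by $\binom{m}{2}\bigl((1-\Delta)n\bigr)$ or similar. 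This is exactly the bottom-up, double-counting structure flagged in the introduction ("we then use the inductive hypothesis to obtain a weak version of the disjointness, and a double counting argument similar to 1-dimensional case finishes the proof"). Carrying the arithmetic through — being careful with the relation between $|S_2|$, $\Delta n$, and the slack terms — should yield $m \leq k+1$, and then the total list size is at most $m \cdot k \leq k(k+1)$ by applying \cref{lem:dim1} fiber-by-fiber. The delicate point to get right in the full write-up is the precise bookkeeping so that the "weak disjointness" loss is absorbed into the $+1$ in $k+1$ rather than degrading the bound; the induction on dimension (generalizing to $\calH$ of dimension $s$ with bound roughly $k(k+1)^{s-1}$ or the sharper $s\cdot(s+1)^{s-2}$-type bounds mentioned in the intro) would then follow the same template with the general inductive hypothesis replacing \cref{lem:dim1}.
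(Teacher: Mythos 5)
There is a genuine gap: the mechanism you propose for bounding the number of "active fibers" does not work, and the fibering-by-coefficient decomposition is not the right way to invoke \cref{lem:dim1}. In your inclusion--exclusion step you would get $m\cdot\frac{|S_2|}{k+1} < |S_2| + \binom{m}{2}(1-\Delta)n$, and the slack term $\binom{m}{2}(1-\Delta)n$ is in general far larger than the per-element gain $\frac{|S_2|}{k+1}\ge\frac{\Delta n}{k+1}$; e.g.\ for $\Delta=1/2$ the inequality is satisfied for every $m$ and yields no bound at all. Worse, the structural claim you are trying to establish --- at most $k+1$ values of $\alpha$ carry a list element --- is strictly stronger than the lemma (which is consistent with $k(k+1)$ fibers each containing one element), and nothing in the distance of $\calC$ forces list elements to concentrate in few fibers, since an element of fiber $\alpha$ can be far from the line point $f_0+\alpha f_1$.

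The paper's proof uses the "weak disjointness" idea you quote from the introduction, but fibers by \emph{coordinates} rather than by a coefficient. Let $S$ be the set of coordinates where $(f_1(i),f_2(i))\neq(0,0)$, so $|S|\ge\Delta n$, and let $S_h\sub S$ be the agreement set of $h$ with $g$ on $S$. For a fixed $i\in S$, the condition $h_i=g_i$ is one \emph{non-trivial} linear equation in $(\alpha_1,\alpha_2)$, so the list elements whose agreement set contains $i$ lie in an affine subspace of dimension at most $1$, and \cref{lem:dim1} bounds their number by $k$. Hence $\sum_{h}|S_h|\le k|S|$, while each list element satisfies $|S_h|>\frac{|S|}{k+1}$ (your computation of this lower bound is correct, applied to $S$ instead of $S_2$), and double counting gives $|\calH_g|<k(k+1)$. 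The fix to your write-up is therefore to replace "fix $\alpha$ and count fibers" by "fix a coordinate and count the list elements agreeing there"; this is also the version that inducts cleanly to dimension $d$ in \cref{lem:dimd}.
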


\begin{proof}

Denote $\calH_g = \calH \cap \calL \inparen{g,\frac{k}{k+1} \Delta}$.

	As before, let $\calH = \{f_0 + \alpha_1 \cdot f_1 + \alpha_2 \cdot f_2 \suchthat \alpha_1,\alpha_2 \in \F_q\}$ for some $f_0, f_1$ and $f_2$ in $\calC$, and let $S\sub [n]$ be the set of coordinates where at least one of $f_1$ and $f_2$ is non-zero. As before, we define $S_h \sub S$ to be the set of coordinates in $S$ where $g$ and $h \in \calH$ agree.

	Next, we would like an analog of the disjointness property for agreement sets $\{S_h\}_{h\in \calH}$. We claim that any coordinate $i \in S$ will appear in at most $k$ sets in $\{S_h\}_{h\in \calH_g}$. This is because every $h\in \calH_g$ whose $S_h$ contains $i$ must have $h_i = g_i$, and so the collection of these $h$ are restricted to a 1-dimensional affine subspace inside $\calH$. Appealing to \cref{lem:dim1}, the number of such $h$ is at most $k$. Therefore,
	\[
		\sum_{h \in \calH_g} |S_h| \leq k \cdot |S|.
	\]
	It is easy to observe that every $h \in \calH_g$ must have $|S_h| > \frac{|S|}{k+1}$. If not, $g$ and $h$ disagree on at least $ \frac{k}{k+1} |S|$ positions, which is at least $\frac{k}{k+1} \Delta n$, contradicting $h\in \calL \inparen{g,\frac{k}{k+1} \Delta}$. Combining the two,
	\begin{gather*}
		k \cdot |S| \geq \sum_{h \in \calH_g} |S_h| > \sum_{h \in \calH_g} \frac{|S|}{k+1} = |\calH_g| \frac{|S|}{k+1} \\
		|\calH_g| < k(k+1)
	\end{gather*}
\end{proof}
Finally, we prove the general case via induction whose base cases were the lemmas above.

\begin{lemma}\label{lem:dimd}
	Let $\calC$ be a linear code of distance $\Delta$ and blocklength $n$ over alphabet $\F_q$, and let $\calH \sub \calC$ be an affine subspace of dimension $d$. Then, for any $g\in \F_q^n$,
	\[
		\abs{\calH \cap \calL \inparen{g,\frac{k}{k+1} \Delta} } \leq k(k+1)^{d-1}.
	\]
\end{lemma}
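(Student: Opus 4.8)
The plan is to prove Lemma \ref{lem:dimd} by induction on the dimension $d$, with base case $d = 1$ given by Lemma \ref{lem:dim1} and inductive step a verbatim generalization of the proof of Lemma \ref{lem:dim2}. Throughout, fix $g \in \F_q^n$ and write $\calH_g = \calH \cap \calL\inparen{g, \tfrac{k}{k+1}\Delta}$, and write $\calH = f_0 + V$ where $V = \Span\set{f_1,\ldots,f_d}$ is a $d$-dimensional subspace; since $\calC$ is linear and $\calH \sub \calC$, taking the point $\alpha = 0$ shows $f_0 \in \calC$, whence $V \sub \calC$ and in particular every nonzero element of $V$ has Hamming weight at least $\Delta n$.

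First I would introduce the coordinate set $S \sub [n]$ consisting of all $i$ such that $(f_j)_i \neq 0$ for some $j \in [d]$. Since $f_1 \in V \sub \calC$ is a nonzero codeword and $\supp(f_1) \sub S$, we get $\abs{S} \geq \Delta n$. For each $h \in \calH$ let $S_h \sub S$ be the set of coordinates in $S$ on which $g$ and $h$ agree. As in Lemma \ref{lem:dim2}, every $h \in \calH_g$ satisfies $\abs{S_h} > \abs{S}/(k+1)$: otherwise $g$ and $h$ disagree on at least $\tfrac{k}{k+1}\abs{S} \geq \tfrac{k}{k+1}\Delta n$ coordinates, contradicting $h \in \calL\inparen{g,\tfrac{k}{k+1}\Delta}$.

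The heart of the inductive step is the claim that each fixed coordinate $i \in S$ lies in at most $k(k+1)^{d-2}$ of the sets $\set{S_h}_{h \in \calH_g}$. Indeed, any $h \in \calH_g$ with $i \in S_h$ satisfies $h_i = g_i$, so all such $h$ lie in $\calH^{(i)} := \set{h \in \calH \suchthat h_i = g_i}$; because $i \in S$ there is some $j$ with $(f_j)_i \neq 0$, so the linear functional $v \mapsto v_i$ is not identically zero on $V$, which forces $\calH^{(i)}$ to be either empty or an affine subspace of dimension exactly $d-1$, contained in $\calH \sub \calC$. Applying the inductive hypothesis at dimension $d-1$ bounds the number of such $h$ by $k(k+1)^{d-2}$. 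Double counting then gives
\[
k(k+1)^{d-2}\,\abs{S} ~\geq~ \sum_{i \in S} \abs{\set{h \in \calH_g \suchthat i \in S_h}} ~=~ \sum_{h \in \calH_g} \abs{S_h} ~>~ \abs{\calH_g} \cdot \frac{\abs{S}}{k+1},
\]
and dividing by $\abs{S}/(k+1) > 0$ yields $\abs{\calH_g} < k(k+1)^{d-1}$, which in particular gives the stated bound.

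There is no serious obstacle: the two base cases already worked out in the excerpt fix the template, and the only point requiring care is that slicing $\calH$ by the affine constraint $h_i = g_i$ for $i \in S$ drops the dimension by \emph{exactly} one — this is precisely why $S$ must be defined as the union of the supports of $f_1,\ldots,f_d$ rather than the support of a single $f_j$. One should also observe that the claim is trivially true when $\calH^{(i)} = \emptyset$, and that the strict inequality $\abs{\calH_g} < k(k+1)^{d-1}$ immediately implies the weaker inequality asserted in the lemma.
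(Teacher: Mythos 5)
Your proposal is correct and follows essentially the same route as the paper: induction on $d$ with the double-counting argument from the proof of \cref{lem:dim2}, where fixing a coordinate $i \in S$ to agree with $g$ cuts $\calH$ down to a $(d-1)$-dimensional affine subspace and the inductive hypothesis bounds the multiplicity of $i$ by $k(k+1)^{d-2}$. The paper's own proof of \cref{lem:dimd} is just a two-line sketch deferring to \cref{lem:dim2}; you have filled in exactly the details it leaves implicit (in particular, why the slice has dimension exactly $d-1$), and done so correctly.
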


\begin{proof}
	The proof is very similar the proof of \cref{lem:dim2}. Each coordinate in $S$ will appear in at most $k(k+1)^{d-2}$ sets out of $\{S_h\}_{h\in \calH_g}$. Moreover, each $|S_h| > \frac{|S|}{k+1}$ due to same reason as before. Combining, we get
	\[
		|\calH_g| < k(k+1)^{d-2} \cdot (k+1) = k (k+1)^{d-1}
	\]
\end{proof}

\section{Getting more out of the Folded RS code}

The key idea we used in the previous section was that fixing any coordinate to be in the agreement set reduces the search space dimension by 1. However, here we only used agreement of $g$ with a Reed-Solomon codeword, whereas we have the opportunity to decrease the dimension much more by using the agreement of $g$ with a codeword on the \emph{folded} symbol. In an ideal case, such a fixing will uniquely determine the codeword, giving us disjointness of agreement sets as in the case of \cref{lem:dim1} and an optimal list size.

Unfortunately, the set of $t$ constraints imposed by a $t$-folded symbol need not be linearly independent. In fact, there might not even be $d$ linearly independent constraints (recall that $t$ is typically chosen so that $t \gg d$), which is what would suffice to pin down a codeword. However, these linear dependencies can be bounded in number globally using the Wronskian of (a basis of) the affine subspace we are working with.

Let us set some notation for folded RS codes. Let $q > n$ and $m = Rn$ be the field size and degree parameters respectively. The $t$-folded Reed-Solomon code $\calC$ is then of rate $R$, distance at least $1-R$, alphabet $\F_q^t$ and blocklength $N = n/t$. We assume that this folding is according to a primitive element $\gamma$ of $\F_q$. For this code, we will denote the list of codewords in a ball of radius $\delta$ around $g\in (\F_q^t)^N$ by $\calL(g,\delta)$.

Let $\calH$ be an affine subspace of $\F_q[X]^{<m}$ with dimension $d$, so that there exist vectors $h_0, h_1,h_2,\cdots ,h_d$ such that
\[
	\calH = \inbraces{h_0 + \sum_{j=1}^d \alpha_j h_j \suchthat \forall j\in [d], \alpha_j \in \F_q }
\]
Moreover, the set of polynomials $\inbraces{h_1,h_2,\cdots ,h_d}$ is linearly independent over $\F_q$.

The condition that a polynomial $h = h_0 + \sum_{j=1}^d \alpha_j h_j$ agrees with $g$ on position $i\in [N]$ after folding can be written as the collection of $t$ equations: 
\[
	 \forall j\in [t], \quad h(\gamma^{(i-1)t+j-1}) = g(\gamma^{(i-1)t+j-1})
\]
Writing as a linear system,

\begin{align*}
\begin{bmatrix}
h_1(\gamma^{(i-1)t}) & h_2(\gamma^{(i-1)t}) & \cdots & h_d(\gamma^{(i-1)t})\\
h_1(\gamma^{(i-1)t+1}) & h_2(\gamma^{(i-1)t+1}) & \cdots & h_d(\gamma^{(i-1)t+1})\\
\vdots & \vdots & \cdots & \vdots \\
\vdots & \vdots & \cdots & \vdots \\
h_1(\gamma^{(i-1)t+t-1}) & h_2(\gamma^{(i-1)t+t-1}) & \cdots & h_d(\gamma^{(i-1)t+t-1})
\end{bmatrix}
\begin{bmatrix}
\alpha_1 \\
\alpha_2 \\
\vdots \\
\alpha_d
\end{bmatrix}
=
\begin{bmatrix}
(g-h_0)(\gamma^{(i-1)t}) \\
(g-h_0)(\gamma^{(i-1)t+1}) \\
\vdots \\
\vdots \\
(g-h_0)(\gamma^{(i-1)t+t-1})
\end{bmatrix}
\end{align*}

Let us call the $t\times d$ matrix appearing above as $A_i$ for $i\in [N]$, and denote $r_i = \rank(A_i)$. If $r_i$ is always $d$, that is $A_i$ is always full rank, then each agreement $h_i$ and $g_i$ would fix all $\alpha_j$ for $j\in [d]$, and we would get the best case scenario where all agreement sets must be disjoint. However, this need not be true. Guruswami and Kopparty used folded Wronskian determinants to show that a weakening of this statement is true in an average sense globally. They wrote this in the language of strong subspace designs, and for completeness we present their proof in our simplified setting.

We first start with the following folded Wronskian criterion for linear independence, whose proof can be found in \cite{GK16}.

\begin{lemma}\label{lem:wronskian}
	Let $\gamma \in \F_q^*$ be a generator. The polynomials $p_1,p_2,\cdots ,p_d \in \F_q[X]^{<m}$ are linearly independent over $\F_q$ if and only if the determinant
	\[
		\begin{pmatrix}
			p_1(X) & p_2(X) & \cdots & p_d(X) \\
			p_1(\gamma X) & p_2(\gamma X) & \cdots & p_d(\gamma X) \\
			\vdots & \vdots & \vdots & \vdots \\
			p_1(\gamma^{d-1} X) & p_2(\gamma^{d-1}X) & \cdots & p_d(\gamma^{d-1}X)
		\end{pmatrix}
	\]
	is non-zero as a polynomial in $\F_q[X]$.
\end{lemma}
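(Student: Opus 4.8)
The plan is to prove both directions by a direct computation with the $d\times d$ matrix
\[
M(X) ~=~ \bigl(p_j(\gamma^{i}X)\bigr)_{0 \le i \le d-1,\; 1 \le j \le d},
\]
whose determinant is exactly the displayed folded Wronskian. The easy implication is immediate: if $\sum_j c_j p_j \equiv 0$ with the $c_j \in \F_q$ not all zero, then substituting $X \mapsto \gamma^{i}X$ gives $\sum_j c_j\, p_j(\gamma^{i}X) \equiv 0$ for every $i$, so the columns of $M(X)$ satisfy the same nontrivial $\F_q$-linear relation and $\det M(X)\equiv 0$. For the converse I would assume $p_1,\dots,p_d$ are $\F_q$-linearly independent and show $\det M(X)\not\equiv 0$. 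First I would note that $\det M(X)$ is multilinear and alternating in $(p_1,\dots,p_d)$ (each column of $M(X)$ depends $\F_q$-linearly on the corresponding $p_j$), so replacing this tuple by any other basis of its $\F_q$-span only multiplies the Wronskian by a nonzero scalar. Hence I may assume that the coefficient matrix $A\in\F_q^{d\times m}$ of $p_1,\dots,p_d$ (with $p_j(X)=\sum_{\ell=0}^{m-1}A_{j,\ell}X^{\ell}$) is in reduced row echelon form; this is the only place linear independence enters, via $\operatorname{rank}A=d$.

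The key computation is Cauchy--Binet applied to the factorization $M(X)=B(X)\,A^{\trans}$, where $B(X)\in\F_q[X]^{d\times m}$ has entries $B(X)_{i,\ell}=(\gamma^{i}X)^{\ell}=\gamma^{i\ell}X^{\ell}$:
\[
\det M(X) ~=~ \sum_{\substack{S \subseteq \{0,\dots,m-1\}\\ \abs{S}=d}} \det\bigl(B(X)_{\cdot,S}\bigr)\cdot\det\bigl(A_{\cdot,S}\bigr).
\]
For $S=\{\ell_1<\dots<\ell_d\}$, pulling $X^{\ell_k}$ out of the $k$-th column of $B(X)_{\cdot,S}$ leaves a Vandermonde matrix in the nodes $\gamma^{\ell_1},\dots,\gamma^{\ell_d}$, so $\det(B(X)_{\cdot,S}) = c_S\cdot X^{e(S)}$ with $e(S)=\sum_{\ell\in S}\ell$ and $c_S=\prod_{k<k'}(\gamma^{\ell_{k'}}-\gamma^{\ell_k})$. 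Since $\gamma$ is a generator, its order $q-1$ is at least $m$ (indeed $m<n<q$ in the folded Reed--Solomon setting), so the exponents $\ell_1,\dots,\ell_d\in\{0,\dots,m-1\}$ are pairwise distinct modulo $q-1$, the $\gamma^{\ell_k}$ are distinct, and every $c_S$ is nonzero.

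It remains to rule out cancellation among the monomials $X^{e(S)}$, and this is the step I expect to be the main obstacle: distinct $d$-subsets $S$ can share the same weight $e(S)$, so a priori the nonzero coefficients $c_S\det(A_{\cdot,S})$ could sum to zero. This is precisely where the echelon normalization pays off. Let $\ell_1^{*}<\dots<\ell_d^{*}$ be the pivot columns of $A$ and $S_0=\{\ell_1^{*},\dots,\ell_d^{*}\}$; then $\det(A_{\cdot,S_0})=1$. Moreover, in RREF the support of row $k$ is contained in $\{\ell:\ell\ge\ell_k^{*}\}$, so any $S$ with $\det(A_{\cdot,S})\neq 0$ admits a system of distinct column representatives $\pi$ with $\pi(k)\ge\ell_k^{*}$ for all $k$, whence $e(S)=\sum_k\pi(k)\ge\sum_k\ell_k^{*}=e(S_0)$, with equality only for $S=S_0$. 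Therefore the coefficient of $X^{e(S_0)}$ in $\det M(X)$ equals $c_{S_0}\cdot\det(A_{\cdot,S_0})=c_{S_0}\neq 0$, so $\det M(X)\not\equiv 0$, which finishes the proof. (One can avoid echelon form altogether by taking $S_0$ to be the nonzero-minor column set of minimum weight $e(S_0)$ and invoking uniqueness of the minimum-weight basis of a matroid with distinct element weights; I would present the echelon-form version since it keeps the argument self-contained.)
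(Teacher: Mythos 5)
Your proof is correct. Note, however, that the paper does not actually prove this lemma: it states it and defers entirely to [GK16], so there is no in-paper argument to compare against. Measured against the standard proof in Guruswami--Kopparty, your argument is the same in essence but packaged differently. The common core is: (i) the folded Wronskian is multilinear and alternating in $(p_1,\dots,p_d)$, so one may pass to a convenient basis of the span; (ii) isolate an extremal monomial of $\det M(X)$ whose coefficient is a Vandermonde determinant in powers of $\gamma$, nonzero because $\gamma$ has order $q-1\ge m$. The usual presentation picks a basis with \emph{distinct degrees} $\deg p_1>\cdots>\deg p_d$ and reads off the coefficient of the top monomial $X^{\sum_j\deg p_j}$, which is $\prod_j \mathrm{lc}(p_j)$ times the Vandermonde in $\gamma^{\deg p_1},\dots,\gamma^{\deg p_d}$; you instead pass to reduced row echelon form and extract the \emph{bottom} monomial via Cauchy--Binet. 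The Cauchy--Binet expansion and the system-of-distinct-representatives argument are correct but are extra machinery: once you have an echelon basis, the minimal-weight (or maximal-degree) term can be identified by expanding the determinant directly, which shortens the write-up. So I would classify this as the same proof idea with a heavier implementation, not a genuinely different route.

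One point worth making explicit, which you correctly use but which is hidden in the lemma statement: the equivalence requires $m\le q-1$, i.e.\ the degrees must be smaller than the order of $\gamma$. Otherwise the Vandermonde nodes $\gamma^{\ell}$ need not be distinct and the claim is false (e.g.\ $p_1(X)=X$, $p_2(X)=X^{q}$ are linearly independent but have identically zero folded Wronskian). In the paper's setting $m=Rn<n<q$, so the hypothesis holds, but a careful write-up should state it.
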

Next, we use the lemma above to bound the sum of "rank deficit" over all coordinates.
\begin{theorem}[Guruswami-Kopparty \cite{GK16}]
$\sum_{i=1}^N (d-r_i) \leq \frac{d\cdot (m-1)}{t-d+1}$.
\end{theorem}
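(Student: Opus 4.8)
The statement to prove is $\sum_{i=1}^N (d - r_i) \leq \frac{d(m-1)}{t-d+1}$, where $r_i = \rank(A_i)$ and $A_i$ is the $t \times d$ evaluation matrix whose $(j+1)$-th row is $(h_1(\gamma^{(i-1)t+j}), \ldots, h_d(\gamma^{(i-1)t+j}))$ for $j = 0, \ldots, t-1$. The key object is the folded Wronskian determinant of the linearly independent family $h_1, \ldots, h_d$,
\[
	W(X) = \Det\begin{pmatrix}
		h_1(X) & \cdots & h_d(X) \\
		h_1(\gamma X) & \cdots & h_d(\gamma X) \\
		\vdots & & \vdots \\
		h_1(\gamma^{d-1}X) & \cdots & h_d(\gamma^{d-1}X)
	\end{pmatrix},
\]
which by \cref{lem:wronskian} is a nonzero polynomial, since $h_1, \ldots, h_d$ are linearly independent over $\F_q$. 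The plan is to first bound $\deg W$, and then show that each coordinate $i$ with rank deficit $d - r_i$ forces $W$ to vanish at $\gamma^{(i-1)t+j}$ for at least $(d - r_i)(t - d + 1)$ distinct choices of $j$ in $\{0, 1, \ldots, t-1\}$. Summing over $i$ and comparing with $\deg W$ gives the bound.

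First I would bound the degree of $W$. Each entry $h_\ell(\gamma^a X)$ has degree at most $m - 1$ in $X$ (since $h_\ell \in \F_q[X]^{<m}$), so the determinant, being a sum of products of $d$ entries one from each row, has degree at most $d(m-1)$. Thus $W$ has at most $d(m-1)$ roots in $\F_q^*$ (counted without multiplicity is enough). Next, the heart of the argument: fix a coordinate $i \in [N]$ and let $r = r_i$. Consider the $t$ row vectors $v_j = (h_1(\gamma^{(i-1)t+j}), \ldots, h_d(\gamma^{(i-1)t+j})) \in \F_q^d$ for $j = 0, \ldots, t-1$; these are the rows of $A_i$, and they span a space of dimension $r$. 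I claim that for every window of $d$ consecutive rows $v_j, v_{j+1}, \ldots, v_{j+d-1}$ (with $0 \le j \le t - d$), these $d$ vectors are linearly dependent whenever the window "sees" a deficit — more precisely, I would show that among the $t - d + 1$ windows of length $d$, at least $(d - r)(t - d + 1)$ of the corresponding Wronskian evaluations vanish. Observe that the Wronskian evaluated at $X = \gamma^{(i-1)t+j}$ is exactly $\Det(v_j; v_{j+1}; \ldots; v_{j+d-1})$ stacked as a $d \times d$ matrix (this is precisely how the folded Wronskian structure interacts with the geometric progression of evaluation points $\gamma^{(i-1)t}, \gamma^{(i-1)t+1}, \ldots$). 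So the task reduces to a purely linear-algebraic counting claim: if $v_0, \ldots, v_{t-1} \in \F_q^d$ span a space of dimension $r \le d$, then the number of $j \in \{0, \ldots, t-d\}$ for which $\{v_j, \ldots, v_{j+d-1}\}$ is linearly independent is at most $\min(r, t-d+1)$ — equivalently, the number of $j$ for which the window is dependent is at least $(t - d + 1) - r = (t-d+1) - r$... I need to be more careful here: the bound I actually want per coordinate is that the number of vanishing window-determinants is at least $(d-r)(t-d+1)/(\text{something})$; let me instead aim for the cleaner statement that the number of $j$ with nonvanishing determinant is at most $r \cdot \lceil (t-d+1)/? \rceil$. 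The cleanest route, following \cite{GK16}, is: among any $d$ consecutive $v_j$'s, if all of $v_0, \ldots, v_{t-1}$ lie in an $r$-dimensional space with $r < d$, then \emph{every} window determinant vanishes, contributing $t - d + 1$ roots. More generally one partitions the index set and counts, obtaining that the total number of nonvanishing window determinants across coordinate $i$ is at most $r_i$ times a factor, so the number of \emph{vanishing} ones is at least $(d - r_i)(t-d+1)$.

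Then I would assemble: each coordinate $i$ contributes at least $(d - r_i)(t - d + 1)$ distinct roots of $W$ among the evaluation points $\{\gamma^{(i-1)t+j} : 0 \le j \le t-d\}$, and these point-sets are disjoint across different $i$ because the exponents $(i-1)t + j$ range over disjoint intervals of $[0, n-1]$ and $\gamma$ is primitive of order $q - 1 \ge n$. Hence
\[
	\sum_{i=1}^N (d - r_i)(t - d + 1) \;\le\; \#\{\text{roots of } W\} \;\le\; \deg W \;\le\; d(m-1),
\]
and dividing by $t - d + 1$ yields the claimed inequality. The main obstacle is the linear-algebraic counting step: making precise and correct the claim that an $r$-dimensional configuration of $t$ vectors forces at least $(d-r)(t-d+1)$ of the $t-d+1$ consecutive $d$-windows to be degenerate, and matching the $\gamma^{(i-1)t+j}$ evaluation points with the rows of the folded Wronskian determinant so that degeneracy of a window is literally a root of $W$. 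I expect this to require a short but slightly delicate argument about how $\rank A_i = r_i$ constrains consecutive sub-windows (e.g. by a greedy/interval-covering argument on which windows can possibly be full rank), and I would model it closely on the strong subspace design argument of Guruswami--Kopparty, specializing their parameters to the single affine subspace $\calH$ at hand.
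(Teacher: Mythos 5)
Your setup is the right one and matches the paper's: you form the folded Wronskian $W(X)$ of $h_1,\dots,h_d$, note it is nonzero by the linear-independence criterion and has degree at most $d(m-1)$, and correctly observe that $W(\gamma^{(i-1)t+j})$ is exactly the determinant of the window of $d$ consecutive rows $v_j,\dots,v_{j+d-1}$ of $A_i$, so that rank deficiency of $A_i$ forces these evaluations to vanish. The degree bound and the disjointness of the evaluation points across coordinates are also fine.

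However, there is a genuine gap in the counting step, and you half-notice it yourself mid-argument. You want each coordinate $i$ to contribute $(d-r_i)(t-d+1)$ to the root count of $W$, but coordinate $i$ only supplies $t-d+1$ distinct evaluation points $\gamma^{(i-1)t+j}$, $0\le j\le t-d$. Counting roots \emph{without multiplicity} (as you propose: ``counted without multiplicity is enough''), the contribution of coordinate $i$ is capped at $t-d+1$, so your target is unreachable whenever $d-r_i\ge 2$. No linear-algebraic argument about which windows are degenerate can fix this — even if \emph{all} $t-d+1$ windows vanish, you only get a factor of $1$, not $d-r_i$. The factor $d-r_i$ must come from root \emph{multiplicity}: the paper shows that each point $\gamma^{(i-1)t+j-1}$ is a root of $D(X)=\det H(X)$ of multiplicity at least $d-r_i$. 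The mechanism is that the $\ell$-th derivative $D^{(\ell)}(X)$ expands as a sum of determinants each of which retains at least $d-\ell$ columns of $H(X)$; since $\rank(A_{ij})\le r_i$, any $r_i+1$ columns of $A_{ij}=H(\gamma^{(i-1)t+j-1})$ are linearly dependent, so every term vanishes for $\ell=0,1,\dots,d-r_i-1$. Summing multiplicities then gives $\sum_i (d-r_i)(t-d+1)\le \deg D\le d(m-1)$ as desired. Without this derivative/multiplicity step your proof does not close.
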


\begin{proof}
	We start with instantiating \cref{lem:wronskian} with $p_j = h_j$ for $j\in [d]$, which are linearly independent polynomials used in the definition of $\calH$. By \cref{lem:wronskian}, the determinant of the following matrix
	\[
		H(X) \defeq \begin{bmatrix}
			h_1(X) & h_2(X) & \cdots & h_d(X) \\
			h_1(\gamma X) & h_2(\gamma X) & \cdots & h_d(\gamma X) \\
			\vdots & \vdots & \vdots & \vdots \\
			h_1(\gamma^{d-1} X) & h_2(\gamma^{d-1}X) & \cdots & h_d(\gamma^{d-1}X)
		\end{bmatrix}
	\]
	is non-zero. Denote this determinant by $D(X) = \det(H(X))$. Since each $h_i$ is of degree at most $m-1$, we note that $D(X)$ is a polynomial of degree at most $(m-1)d$, so that the number of zeros of $D(X)$ (with multiplicity) is bounded by $(m-1)d$. Therefore, it suffices to show that the number of zeros of $D(X)$ is at least $(t-d+1)\cdot \sum_{i=1}^N (d-r_i)$.
	
	In fact, we will describe the exact set of zeros with their mutliplicities that illustrates this. The next claim immediately completes the proof. Note that we say that a non-root is a root with multiplicity 0. 
	\begin{claim}\label{claim:root_with_mult}
		For every $i\in [N]$, for every $j\in [t-d+1]$, $\gamma^{(i-1)t+j-1}$ is a root of $D(X)$ with multiplicity at least $d-r_i$.
	\end{claim}
	\begin{proof}[Proof of \cref{claim:root_with_mult}]
		Recall that $r_i$ is the rank of matrix $A_i$. For $j\in [t-d+1]$, let $A_{ij}$ denote the $d\times d$ submatrix of $A_i$ formed by selecting all $d$ columns and rows from $j$ to $j+d-1$. That is,
		\begin{align*}
A_{ij} = \begin{bmatrix}
h_1(\gamma^{(i-1)t+j-1}) & h_2(\gamma^{(i-1)t+j-1}) & \cdots & h_d(\gamma^{(i-1)t+j-1})\\
h_1(\gamma^{(i-1)t+j}) & h_2(\gamma^{(i-1)t+j}) & \cdots & h_d(\gamma^{(i-1)t+j})\\
\vdots & \vdots & \cdots & \vdots \\
\vdots & \vdots & \cdots & \vdots \\
h_1(\gamma^{(i-1)t+j+d-2}) & h_2(\gamma^{(i-1)t+j+d-2}) & \cdots & h_d(\gamma^{(i-1)t+j+d-2})
\end{bmatrix}
\end{align*}
	Since $A_{ij}$ is a submatrix of $A_i$, $\rank(A_{ij}) \leq \rank(A_i) = r_i$. If $r_i<d$, then $A_{ij}$ is not full rank and $\det(A_{ij}) = 0$. However, note that $A_{ij} = H(\gamma^{(i-1)t+j-1})$. In conclusion, if $d-r_i>0$, then $\gamma^{(i-1)t+j-1}$ is a root of $D(X)$.
	
	Extending this argument to multiplicities, let $D^{(\ell)}(X)$ be the $\ell^{th}$ derivative of $D(X)$ for $\ell \in \{0,1,\cdots,d\}$. Then this derivative can be written as a sum of $d^{\ell}$ determinants such that every determinant has at least $d-\ell$ columns common with $H(X)$. This follows by writing out the determinant as a signed sum of monomials, applying the product rule of differentiation, and packing them back into determinants. \snote{Move to prelims.}
	
	Therefore, $D^{(\ell)}(\gamma^{(i-1)t+j-1})$ can be written as a sum of determinants where each determinant has at least $d-\ell$ columns in common with $A_{ij}$. For $\ell = 0,1,\cdots ,d-r_i-1$, this leaves at least $r_i+1$ columns in each determinant from $A_{ij}$. Recall that $\rank(A_{ij})\leq r_i$, which implies that any set of $r_i+1$ columns in $A_{ij}$ are linearly dependent, causing each of the $d^{\ell}$ determinants in the sum for $H^{(\ell)}(\gamma^{(i-1)t+j-1})$ to vanish. We conclude that $H^{(\ell)}(\gamma^{(i-1)t+j-1}) = 0$ for $\ell = 0,1,\cdots , d-r_i-1$, and so $\gamma^{(i-1)t+j-1}$ is a root of $D(X)$ with multiplicity at least $d-r_i$.
	\end{proof}
\end{proof}

Now we use the above global upper bound on rank deficit to prove a list size bound with induction.

\begin{theorem}
Suppose $k>d$ and $t \geq k$. Let $\calH$ be an affine subspace of dimension $d$. Then, for every $g \in (\F_q^t)^N$,
\[
	\abs{ \calH \cap \calL\inparen{g, \frac{k}{k+1} \cdot \inparen{1-\frac{t}{t-k+1} \cdot R}} } \leq (k-1)\cdot d + 1.
\]
\end{theorem}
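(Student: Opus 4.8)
The plan is to prove the statement by induction on the dimension $d$, combining a double-counting argument in the spirit of \cref{lem:dim1}--\cref{lem:dimd} with the Guruswami--Kopparty rank-deficit bound $\sum_{i=1}^N (d-r_i) \le \frac{d(m-1)}{t-d+1}$. Abbreviate $\delta \defeq \frac{k}{k+1}\inparen{1-\frac{t}{t-k+1}R}$ and $L_0 \defeq (k-1)d+1$, and write $\calH = \{h_0 + \sum_{j=1}^d \alpha_j h_j\}$ with $h_1,\dots,h_d$ linearly independent in $\F_q[X]^{<m}$. For each folded block $i\in[N]$ let $A_i \in \F_q^{t\times d}$ record the evaluations of $h_1,\dots,h_d$ on the $t$ points of block $i$, set $r_i=\rank A_i$, and let $\calH^{(i)}\subseteq\calH$ be the set of $h\in\calH$ agreeing with $g$ on block $i$; this is either empty or an affine subspace of $\calH$ of dimension $d-r_i$, since agreement on block $i$ is the (consistent or inconsistent) linear system $A_i\alpha = (g-h_0)|_i$. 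The base case $d=0$ is immediate ($|\calH|=1$), and for $d\ge1$ we have $d-r_i<d<k$, so the inductive hypothesis applies to every $\calH^{(i)}$ with the \emph{same} $k$ and the \emph{same} radius $\delta$.

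The core step is to double count the pairs $(h,i)$ with $h\in\calH_g\defeq\calH\cap\calL(g,\delta)$ and $h$ agreeing with $g$ on block $i$. Since each such $h$ agrees with $g$ on more than $(1-\delta)N$ blocks, the number of pairs exceeds $|\calH_g|(1-\delta)N$; summed over blocks it equals $\sum_i|\calH_g\cap\calH^{(i)}|$. A block with $r_i\ge1$ contributes at most $(k-1)(d-r_i)+1$ by induction (interpreting the empty case as $0$); a block with $r_i=0$ has $\calH^{(i)}\in\{\emptyset,\calH\}$ and so contributes $0$ or $|\calH_g|$. Let $c$ be the number of \emph{common-agreement} blocks (those with $\calH^{(i)}=\calH$); these force $h_1,\dots,h_d$ to vanish identically on the block, so $c\le (m-1)/t < RN$ since $h_1$ is a nonzero polynomial of degree $<m$. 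Applying the Guruswami--Kopparty bound only to the blocks with $r_i\ge1$ — the $c$ common blocks each have deficit $d$, so $\sum_{i:r_i\ge1}(d-r_i)\le \frac{d(m-1)}{t-d+1}-dc$ — and using $|\{i:r_i\ge1\}|\le N-c$, the $c$-terms collect into a single $cL_0$ and we obtain
\[
|\calH_g|\,\bigl[(1-\delta)N - c\bigr] ~<~ (k-1)\cdot\tfrac{d(m-1)}{t-d+1} ~+~ N ~-~ c\,L_0 .
\]
A short check shows $1-\delta-R>0$ for all $t\ge k\ge2$ (and $k=1$ forces $d=0$), so $(1-\delta)N-c\ge(1-\delta-R)N>0$ and the displayed inequality can be divided through. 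It then suffices to verify $(L_0+1)\bigl[(1-\delta)N-c\bigr]\ \ge\ (k-1)\frac{d(m-1)}{t-d+1}+N-c\,L_0$, because this yields $|\calH_g|<L_0+1$, i.e. $|\calH_g|\le L_0 = (k-1)d+1$.

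Substituting $1-\delta=\frac1{k+1}\bigl(1+\frac{ktR}{t-k+1}\bigr)$, bounding $m-1\le RtN$ and $c<RN$, and noting that the $c$-terms on the two sides differ by a harmless $-c$ absorbed by $c<RN$, this last inequality reduces to the purely numerical claim
\[
\bigl[(k-1)d+2\bigr](1-\delta) ~\ge~ (k-1)\,\tfrac{dRt}{t-d+1} ~+~ 1 ~+~ R ,
\]
valid for all $R\in(0,1)$ and $t\ge k>d\ge1$. Its constant-in-$R$ part contributes a slack of exactly $\frac{(k-1)(d-1)}{k+1}$, and the $R$-linear part reduces (after writing $\frac{t}{t-a}=1+\frac{a}{t-a}$) to showing that a certain quadratic in $d$ is nonpositive on $[1,k-1]$; the inequality holds with equality only in the limiting regime $d=1,\ t\to\infty$. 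This is a routine estimate I would relegate to a calculation.

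I expect the main obstacle to be precisely \emph{getting this numerical inequality to hold for all $t\ge k$}: a naive bookkeeping — charging the common-agreement blocks to the Wronskian budget, or discarding the $-c$ term — already fails for $d=1$ and large $t$, even though the true bound holds, so the two accounting refinements above (subtract the deficit-$d$ common blocks from the Guruswami--Kopparty budget, and track the exact cancellation of the $c$-terms into $cL_0$) are essential, and in fact this is what forces the choice of the radius $\delta$. Finally, I would record as a corollary that plugging \cref{thm:lin_alg_rs} (with $s=k$, so that its list lies in an affine subspace of dimension $s-1$) into this theorem recovers the bound $(s-1)^2+1$ stated at the start of the chapter.
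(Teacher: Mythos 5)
Your proposal is correct and follows essentially the same route as the paper's proof: induction on the dimension $d$, double counting block agreements, charging the rank deficits $\sum_i (d-r_i)$ to the Guruswami--Kopparty folded-Wronskian budget, and bounding the number of rank-zero blocks by $RN$ via the degree of a nonzero basis polynomial. The only differences are cosmetic — you charge $|\calH_g|$ only to the common-agreement blocks rather than to all rank-zero blocks, you absorb the $d=1$ case into the general inductive step instead of invoking the one-dimensional lemma as a separate base case, and your final linear-in-$R$ verification is an equivalent rearrangement of the paper's closing sign computation.
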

\begin{proof}
We prove this by induction on $d$. The case $d=0$ is trivial, and the case $d=1$ follows by \cref{lem:dim1} and using $\abs{\calL\inparen{g, \frac{k}{k+1} \cdot \inparen{1-\frac{t}{t-k+1} \cdot R}}} \leq \abs{\calL\inparen{g, \frac{k}{k+1} \cdot  \inparen{1- R}}}$.

Henceforth, let $d\geq 2$, and denote $\calH_g = \calH \cap \calL\inparen{g, \frac{k}{k+1} \cdot \inparen{1-\frac{t}{t-k+1} \cdot R}}$, and $S_h$ be the agreement set between $g$ and $h$ (over all of $[n]$). Using the lower bound on the size of agreement sets,
\[
	\inparen{\frac{1}{k+1}+\frac{kR}{k+1} \cdot \frac{t}{t-k+1}} N |\calH_g| \leq \sum_{h\in \calH_g} |S_h|
\]
An upper bound on $\sum_{h\in \calH_g} |S_h|$ can again be proved using double counting. Again, we will consider two cases depending on $r_i=0$ or $r_i>0$. In the latter, we can reduce dimension of the affine space $\calH$ by at least 1 when we decide to assume $h_i= g_i$. Let $B\sub [N]$ be the bad set with $r_i = 0$, and $b = |B|/N$. It is easy to see that $b < R$.
\begin{align*}
	\sum_{h\in \calH_g} |S_h| &= \sum_{i=1}^N \abs{ \{ h\in \calH_g \suchthat \forall j\in [t], \  h(\gamma^{(i-1)t+j-1}) = g(\gamma^{(i-1)t+j-1}) \}} \\
	&\leq \sum_{i\in \bar{B}} \insquare{(k-1)(d-r_i) + 1} + \sum_{i\in B}|\calH_g|\\
	&= N-|B|+(k-1)\sum_{i\not\in B} \insquare{d-r_i} + |B|\cdot |\calH_g|\\
	&\leq |B|\cdot |\calH_g| +N-|B| + (k-1)\inparen{\frac{d\cdot (m-1)}{t-d+1} - d|B|} \\
	&\leq |B|\cdot |\calH_g| + N\inparen{1-b+(k-1)d\inparen{\frac{t}{t-d+1}R - b}}
\end{align*}

Comparing the lower bound and upper bound,
\begin{align*}
	|\calH_g| &\leq \frac{1-b+(k-1)d \inparen{\frac{t}{t-d+1}R-b}}{\inparen{\frac{1}{k+1}+\frac{kR}{k+1} \cdot \frac{t}{t-k+1}-b}} \\
	&< \frac{1-b+(k-1)d \inparen{\frac{t}{t-k+1}R-b}}{\inparen{\frac{1}{k+1}+\frac{kR}{k+1} \cdot \frac{t}{t-k+1}-b}} \\
\end{align*}
We show that $|\calH_g| < 1+(k-1)d$ by showing that 
\[
\inparen{\frac{1}{k+1}+\frac{kR}{k+1} \cdot \frac{t}{t-k+1}-b} \inparen{|\calH_g| - 1 - (k-1)d}<0
\]
This suffices to conclude our induction.
\begin{align*}
&\inparen{\frac{1}{k+1}+\frac{kR}{k+1} \cdot \frac{t}{t-k+1}-b} \inparen{|\calH_g| - 1 - (k-1)d}\\
&< 1+ \frac{t}{t-k+1} (k-1)dR - \frac{1}{k+1} - \frac{kR}{k+1} \cdot \frac{t}{t-k+1} - \frac{(k-1)d}{k+1} - \frac{kR}{k+1} \cdot \frac{t}{t-k+1}\cdot (k-1)d \\
&= \frac{k}{k+1}\left(1-\frac{t}{t-k+1}R\right) + \frac{t}{t-k+1} (k-1)dR - \frac{(k-1)d}{k+1} - \frac{kR}{k+1} \cdot \frac{t}{t-k+1}\cdot (k-1)d \\
&= \frac{k}{k+1}\left(1-\frac{t}{t-k+1}R\right) - \frac{(k-1)d}{k+1} + \frac{R}{k+1} \cdot \frac{t}{t-k+1}\cdot (k-1)d \\
&=  \frac{k}{k+1}\left(1-\frac{t}{t-k+1}R\right) - \frac{(k-1)d}{k+1}\inparen{1 - \frac{t}{t-k+1}R}\\
&=  \left( \frac{k-(k-1)d}{k+1} \right) \cdot \inparen{1-\frac{t}{t-k+1}R}
\end{align*}
The last term is $\leq 0$ as long as $k\leq (k-1)d$, which is always true for $d\geq 2$.
\end{proof}

We can now use \cref{thm:lin_alg_rs} to claim that for $t$-folded RS codes, the list $\calL \inparen{g,\frac{k}{k+1} \cdot \inparen{1-\frac{t}{t-k+1}R}}$ is contained in an affine subspace of dimension $k-1$, and this leads to the following corollary.
\begin{corollary}
For $t$-folded Reed Solomon codes,
\[
	\abs{\calL \inparen{g,\frac{k}{k+1} \cdot \inparen{1-\frac{t}{t-k+1}R}}} \leq (k-1)^2 + 1
\]
\end{corollary}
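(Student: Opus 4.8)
This corollary is an immediate consequence of the preceding theorem combined with \cref{thm:lin_alg_rs}. The plan is to simply chain the two statements together. First, I would apply \cref{thm:lin_alg_rs} with the parameter $s = k$ (valid since $1 \leq k \leq t$, where $t$ plays the role of the folding parameter $m$ in that theorem): for the $t$-folded Reed-Solomon code of rate $R$ and distance $1-R$, and any received word $g \in (\F_q^t)^N$, the list $\calL\inparen{g, \frac{k}{k+1}\inparen{1 - \frac{t}{t-k+1}R}}$ is contained in an affine subspace $\calH$ of $\F_q[X]^{<Rn}$ of dimension at most $k-1$.

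Next I would invoke the previous theorem with $d = k-1$. Note that the hypotheses are met: we need $k > d$, which holds since $d = k-1 < k$, and we need $t \geq k$, which is the standing assumption in this regime. The theorem then gives
\[
	\abs{\calH \cap \calL\inparen{g, \frac{k}{k+1}\inparen{1 - \frac{t}{t-k+1}R}}} \leq (k-1)\cdot d + 1 = (k-1)^2 + 1.
\]
Since the entire list $\calL\inparen{g, \frac{k}{k+1}\inparen{1 - \frac{t}{t-k+1}R}}$ is contained in $\calH$ (by the first step), intersecting with $\calH$ does not remove any list elements, so $\abs{\calL\inparen{g, \frac{k}{k+1}\inparen{1 - \frac{t}{t-k+1}R}}} \leq (k-1)^2 + 1$, which is exactly the claimed bound.

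There is essentially no obstacle here — the only thing to be mildly careful about is bookkeeping: matching the notation of \cref{thm:lin_alg_rs} (which uses $m$ for the folding parameter and $s$ for the auxiliary integer) with the notation of this chapter (which uses $t$ for folding and $k$ for the auxiliary integer), and confirming that the decoding radius $\frac{k}{k+1}\inparen{1-\frac{t}{t-k+1}R}$ appearing in the affine-subspace containment result of \cref{thm:lin_alg_rs} matches the radius in the combinatorial intersection theorem of the previous section. Both are stated with precisely this radius, so the match is exact. If desired, one could also remark that the dimension bound from \cref{thm:lin_alg_rs} may be strictly less than $k-1$, in which case the combinatorial theorem (being monotone in $d$) gives an even smaller list size, but the stated bound $(k-1)^2+1$ is the clean uniform statement.
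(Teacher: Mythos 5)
Your proof is correct and is exactly the paper's argument: the paper likewise invokes \cref{thm:lin_alg_rs} (with $s=k$, $m=t$) to place the list inside an affine subspace of dimension $k-1$, and then applies the preceding intersection theorem with $d=k-1$ to obtain $(k-1)\cdot(k-1)+1=(k-1)^2+1$. The radius bookkeeping and the monotonicity remark are both fine; nothing further is needed.
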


\renewcommand{\bibname}{References}
\bibliographystyle{alphaurl}
\addcontentsline{toc}{chapter}{References}
\bibliography{macros, madhur}

\appendix
\begin{appendices}

\chapter{Properties of Ta-Shma's Construction}\label{appendix:ta-shma}

The goal of this section is to provide a reasonably self-contained
compilation of the properties of the slightly modified version of
Ta-Shma code construction~\cite{TS17} from~\cite{JQST20}. The
properties we need are collected
in~\cref{fact:ta-shma_splittable_tuples}.

\TaShmaConsFact

We first recall the $s$-wide replacement product in \cref{sec:s_wide_replacement_prod}, then describe Ta-Shma's original construction based on it in \cref{sec:ta_shma_construction}, describe our modification to obtain splittability in \cref{sec:tweaks}, derive the splittability property in \cref{sec:ta-shma:splittability}, and finally choose parameters in terms of desired bias $\epsilon$ of the code we construct in \cref{sec:ta-shma_param}. We refer the reader to~\cite{TS17} for formal details
beyond those we actually need here.


\section{The s-wide Replacement Product}\label{sec:s_wide_replacement_prod}

Ta-Shma's code construction is based on the so-called $s$-wide
replacement product~\cite{TS17}.  This is a derandomization of
random walks on a graph $G$ that will be defined via a product
operation of $G$ with another graph $H$
(see \autoref{def:s_wide_replacement} for a formal definition). We
will refer to $G$ as the
\emph{outer} graph and $H$ as the \emph{inner} graph in this construction.

Let $G$ be a $d_1$-regular graph on vertex set $[n]$ and $H$ be a
$d_2$-regular graph on vertex set $[d_1]^s$, where $s$ is any positive
integer.  Suppose the neighbors of each vertex of $G$ are labeled 1,
2, \dots, $d_1$.  For $v \in V(G)$, let $v_G[j]$ be the $j$-th
neighbor of $v$.  The $s$-wide replacement product is defined by
replacing each vertex of $G$ with a copy of $H$, called a ``cloud''.
While the edges within each cloud are determined by $H$, the edges
between clouds are based on the edges of $G$, which we will define via
operators $\matr G_0, \matr G_1, \dots, \matr G_{s-1}$.  The $i$-th
operator $\matr G_i$ specifies one inter-cloud edge for each vertex
$(v, (a_0, \dots, a_{s-1})) \in V(G) \times V(H)$, which goes to the
cloud whose $G$ component is $v_G[a_i]$, the neighbor of $v$ in $G$
indexed by the $i$-th coordinate of the $H$ component.  (We will
resolve the question of what happens to the $H$ component after taking
such a step momentarily.)

Walks on the $s$-wide replacement product consist of steps with two
different parts: an intra-cloud part followed by an inter-cloud part.
All of the intra-cloud substeps simply move to a random neighbor in
the current cloud, which corresponds to applying the operator $\matr
I \otimes \matr A_H$, where $\matr A_H$ is the normalized adjacency
matrix of $H$.  The inter-cloud substeps are all deterministic, with
the first moving according to $\matr G_0$, the second according to
$\matr G_1$, and so on, returning to $\matr G_0$ for step number
$s+1$.  The operator for such a walk taking $k-1$ steps on the $s$-wide
replacement product is
$$
\prod_{i=0}^{k-2} \matr G_{i \bmod s} (\matr I \otimes \matr A_H).
$$

Observe that a walk on the $s$-wide replacement product yields a walk
on the outer graph $G$ by recording the $G$ component after each step
of the walk.  The number of $(k-1)$-step walks on the $s$-wide replacement
product is
$$
|V(G)| \cdot |V(H)| \cdot d_2^{k-1} = n \cdot d_1^s \cdot d_2^{k-1},
$$
since a walk is completely determined by its intra-cloud steps.  If
$d_2$ is much smaller than $d_1$ and $k$ is large compared to $s$,
this is less than $n d_1^{k-1}$, the number of $(k-1)$-step walks on $G$
itself.  Thus the $s$-wide replacement product will be used to
simulate random walks on $G$ while requiring a reduced amount of
randomness (of course this simulation is only possible under special
conditions, namely, when we are uniformly distributed on each cloud).

To formally define the $s$-wide replacement product, we must consider
the labeling of neighbors in $G$ more carefully.

\begin{definition}[Rotation Map]
	Suppose $G$ is a $d_1$-regular graph on $[n]$.
	For each $v \in [n]$ and $j \in [d_1]$, let $v_G[j]$ be the $j$-th neighbor of $v$ in $G$.
	Based on the indexing of the neighbors of each vertex, we define the rotation map~\footnote{This kind of map is denoted rotation map in the zig-zag terminology~\cite{RVW00}.}
        $\textup{rot}_G \colon [n] \times [d_1] \to [n] \times [d_1]$ such that for every $(v,j) \in [n] \times [d_1]$,
        $$
        \textup{rot}_G((v,j)) = (v',j') \iff v_G[j] = v' \text{ and } v'_G[j']=v.
        $$
	Furthermore, if there exists a bijection $\phi \colon [d_1] \to [d_1]$ such that for every $(v,j) \in [n] \times [d_1]$,
        $$
        \textup{rot}_G((v,j)) = (v_G[j],\phi(j)),
        $$
	then we call $\textup{rot}_G$ \emph{locally invertible}.
\end{definition}

If $G$ has a locally invertible rotation map, the cloud label after
applying the rotation map only depends on the current cloud label, not
the vertex of $G$.  In the $s$-wide replacement product, this
corresponds to the $H$ component of the rotation map only depending on
a vertex's $H$ component, not its $G$ component.  We define the
$s$-wide replacement product as described before, with the inter-cloud
operator $\matr G_i$ using the $i$-th coordinate of the $H$ component,
which is a value in $[d_1]$, to determine the inter-cloud step.

\begin{definition}[$s$-wide replacement product]\label{def:s_wide_replacement}
  Suppose we are given the following:
  \begin{itemize}
    \item A $d_1$-regular graph $G=([n'],E)$ together with a locally invertible rotation map
          $\textup{rot}_G \colon [n'] \times [d_1] \to [n'] \times [d_1]$.
    \item A $d_2$-regular graph $H = ([d_1]^s,E')$.
  \end{itemize}
  And we define:
  \begin{itemize}
    \item For $i \in \set{0,1,\dots,s-1}$, we define $\textup{Rot}_i \colon [n'] \times [d_1]^s \to [n'] \times [d_1]^s$ as,
          for every $v \in [n']$ and $(a_0,\dots,a_{s-1}) \in [d_1]^s$,
          $$
          \textup{Rot}_i((v, (a_0,\dots,a_{s-1}))) \coloneqq (v', (a_0,\dots,a_{i-1},a_i',a_{i+1},\dots,a_{s-1})),
          $$
          where $(v',a_i') = \textup{rot}_G(v,a_i)$.
    \item Denote by $\matr G_i$ the operator realizing $\textup{Rot}_i$ and
          let $\matr A_H$ be the normalized random walk operator of $H$. Note that $\matr G_i$ is
          a permutation operator corresponding to a product of transpositions.
  \end{itemize}

  Then $k-1$ steps of the $s$-wide replacement product are given by the operator
  $$
  \prod_{i=0}^{k-2} \matr G_{i \bmod s} (\matr I \otimes \matr A_H).
  $$
\end{definition}

Ta-Shma instantiates the $s$-wide replacement product with an outer
graph $G$ that is a Cayley graph, for which locally invertible
rotation maps exist generically.

\begin{remark}
   Let $R$ be a group and $A \subseteq R$ where the set $A$ is closed under inversion. For every Cayley graph
   $\textup{Cay}(R,A)$, the map $\phi \colon A \to A$ defined as
   $\phi(g) = g^{-1}$ gives rise to the locally invertible rotation
   map
   $$
   \textup{rot}_{\textup{Cay}(R,A)}((r,a)) = (r\cdot a ,a^{-1}),
   $$
   for every $r \in R$, $a \in A$.
\end{remark}

\begin{figure}[ht!]
	\centering
	\begin{tikzpicture}[scale=0.8,every node/.style={scale=0.8}]
		\foreach \i in {0,...,4}
		{
			\node (v1\i) [circle,draw] at (102+72*\i:4.2) {1};
			\node (v2\i) [circle,draw] at (78+72*\i:4.2) {2};
			\node (v3\i) [circle,draw] at (78+72*\i:3) {3};
			\node (v4\i) [circle,draw] at (102+72*\i:3) {4};
		}
		\foreach \i in {0,...,4}
		{
			\draw (v1\i) -- (v2\i) -- (v3\i) -- (v4\i) -- (v1\i);
			
			\pgfmathparse{int(mod(\i+1,5))};
			\def\j{\pgfmathresult};
			\draw (v1\i) -- (v2\j);
			
			\pgfmathparse{int(mod(\i+2,5))};
			\def\j{\pgfmathresult};
			\draw (v4\i) -- (v3\j);
			
			\node[cloud,draw,minimum width=4cm,minimum height=3cm,rotate=72*\i,cloud puffs=12] at (90+72*\i:3.6) {};
		}
	\end{tikzpicture}
	\caption{An example of the 1-wide replacement product with outer graph $G = K_5$ and inner graph $H = C_4$.
		Vertices are labeled by their $H$ components.
		Note that the rotation map is locally invertible, with $\phi(1) = 2$, $\phi(2) = 1$, $\phi(3) = 4$, and $\phi(4) = 3$.}
\end{figure}
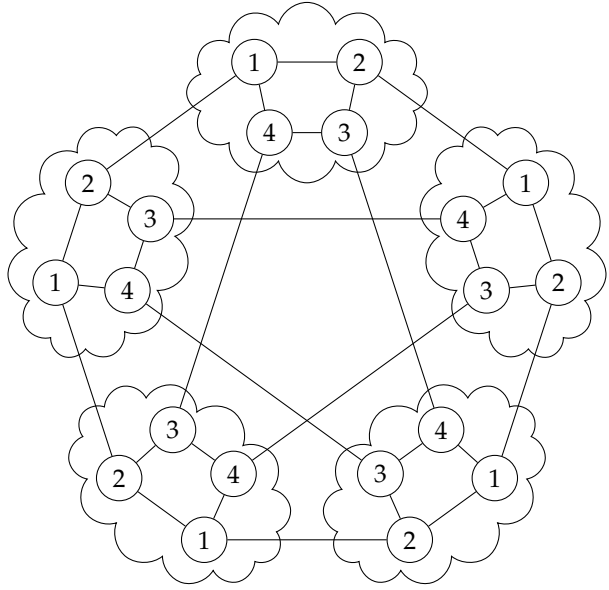

\section{The Construction}\label{sec:ta_shma_construction}

Let $n'=|V(G)|, m=d_1^s= |V(H)|$ and $n = n'\cdot m = |V(G)\times V(H)|$. Ta-Shma's code construction works by starting with a constant bias
code $\Cc_0'$ in $\mathbb{F}_2^{n'}$, repeating each codeword $m=d_1^s$ times to get a new $\epsilon_0$-biased code $\Cc_0$ in $\mathbb{F}_2^{n}$, and boosting $\Cc_0$ to arbitrarily
small bias using direct sum liftings. Recall that the direct sum
lifting is based on a collection $\tuples{k} \subseteq [n]^{k}$, which
Ta-Shma obtains using $k-1$ steps of random walk on the $s$-wide
replacement product of two regular expander graphs $G$ and $H$. The
graph $G$ is on $n'$ vertices 
and other parameters like degrees $d_1$ and $d_2$ of $G$ and $H$
respectively are chosen based on target code parameters.

To elaborate, every $k-1$ length walk on the replacement product gives
a sequence of $k$ vertices in the replacement product graph, which can be seen as
an element of $[n]^{k}$. This gives the collection $\tuples{k}$ with $|\tuples{k}|
= n' \cdot d_1^s \cdot d_2^{k-1}$ which means the rate of lifted code
is smaller than the rate of $\Cc_0'$ by a factor of $d_1^s
d_2^{k-1}$. However, the collection $\tuples{k}$ is a parity sampler and
this means that the bias decreases (or the distance increases) from that of $\Cc_0$. The
relationship between this decrease in bias and decrease in rate with
some careful parameter choices allows Ta-Shma to obtain nearly optimal
$\epsilon$-balanced codes.

\section{Tweaking the Construction}\label{sec:tweaks}


Recall the first $s$ steps in Ta-Shma's construction are given by the
operator
$$
\matr G_{s-1}(\matr I \otimes \Aye_H) \matr G_{s-2} \cdots G_{1}(\matr I \otimes \Aye_H) \matr G_{0} (\matr I \otimes \Aye_H).
$$
Naively decomposing the above operator into the product of operators
$\prod_{i=0}^{s-1} \matr G_{i} (\matr I \otimes \Aye_H)$ is not good
enough to obtain the \emph{splittability} property which would hold
provided $\sigma_2(\matr G_{i}(\matr I \otimes \Aye_H))$ was small
for every $i$ in $\set{0,\ldots,s-1}$. However, each $\matr G_i (\matr
I \otimes \Aye_H)$ has $\abs{V(G)}$ singular values equal to $1$
since $G_{i}$ is an orthogonal operator and $(\matr I \otimes \Aye_H)$
has $\abs{V(G)}$ singular values equal to $1$. To avoid this issue we
will tweak the construction to be the following product
$$
\prod_{i=0}^{s-1} (\matr I \otimes \Aye_H) \matr G_{i} (\matr I \otimes \Aye_H).
$$

The operator $(\matr I \otimes \Aye_H) \matr G_{i} (\matr
I \otimes \Aye_H)$ is exactly the walk operator of the zig-zag product
$G \zigzag H$ of $G$ and $H$ with a rotation map given by the
(rotation map) operator $\matr G_i$. This tweaked construction is slightly
simpler in the sense that $G \zigzag H$ is an undirected graph. We
know by the zig-zag analysis that $(\matr I \otimes \Aye_H) \matr
G_{i} (\matr I \otimes \Aye_H)$ is expanding as long $G$ and $H$ are
themselves expanders.  More precisely, we have a bound that follows
from~\cite{RVW00}.

\begin{fact}\label{fact:zig_zag_bound}
  Let $G$ be an outer graph and $H$ be an inner graph used in the $s$-wide replacement product.
  For any integer $0 \leq i \leq s-1$,
  $$
  \sigma_2((I \otimes \Aye_H) G_i (I \otimes \Aye_H)) \leq \sigma_2(G) + 2 \cdot \sigma_2(H) + \sigma_2(H)^2.
  $$
\end{fact}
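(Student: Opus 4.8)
The plan is to recognize $\matr W := (\matr I \otimes \matr A_H)\,\matr G_i\,(\matr I \otimes \matr A_H)$ as the walk operator of the zig-zag product of $G$ with $H$ (with the rotation map realized by $\matr G_i$), and to run the standard Reingold--Vadhan--Wigderson spectral argument. First I would record that $\matr W$ is doubly stochastic: $\matr G_i$ is a permutation operator (a product of transpositions, since $\textup{rot}_G$ is locally invertible), and $\matr I \otimes \matr A_H$ is symmetric and doubly stochastic because $H$ is regular and undirected. Hence $\matr W\mathbf{1} = \mathbf{1} = \matr W^{\top}\mathbf{1}$, $\|\matr W\|_{\mathrm{op}} \le 1$, and $\matr W$ preserves $\mathbf{1}^{\perp}$, so $\sigma_2(\matr W) = \max\{\|\matr W v\| : v \perp \mathbf{1},\ \|v\|=1\}$ and it suffices to bound $\|\matr W v\|$ for such $v$.

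The main device is the projector $\Pi := \matr I \otimes \matr J_H$ onto functions constant on each cloud, where $\matr J_H$ is the projector onto the uniform distribution on $[d_1]^s$. Since $\matr A_H\matr J_H = \matr J_H\matr A_H = \matr J_H$, the error term $\matr E := (\matr I\otimes\matr A_H) - \Pi$ satisfies $\matr E = \matr E(\matr I - \Pi) = (\matr I - \Pi)\matr E$ and $\|\matr E\|_{\mathrm{op}} \le \sigma_2(H)$. Expanding $\matr W = (\Pi + \matr E)\matr G_i(\Pi + \matr E)$ into four terms, the crucial one is the ``purely parallel'' piece, for which I claim the identity $\Pi\,\matr G_i\,\Pi = \matr A_G \otimes \matr J_H$. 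This is exactly where the structure of the $s$-wide replacement product is used: after applying $\Pi$ the cloud label is uniform on $[d_1]^s$, so its $i$-th coordinate is uniform on $[d_1]$, and by definition $\matr G_i$ then moves the $G$-component to a uniformly random $G$-neighbor, with the bijection $\phi$ leaving the remaining coordinates uniform; a final application of $\Pi$ re-uniformizes the cloud, producing precisely $\matr A_G$ on the $G$-component. I expect the careful verification of this identity (including the bookkeeping that $\phi$ preserves uniformity of the untouched coordinates) to be the one genuinely non-routine step.

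Granting it, for a unit $v \perp \mathbf{1}$ write $v = v_{\parallel} + v_{\perp}$ with $v_{\parallel} = \Pi v \perp \mathbf{1}$ and $v_{\perp} = (\matr I - \Pi)v$; note $\matr E v = \matr E v_{\perp}$. Then
\[
\matr W v ~=~ (\matr A_G\otimes\matr J_H)v_{\parallel} ~+~ \Pi\,\matr G_i\,\matr E v_{\perp} ~+~ \matr E\,\matr G_i\,v_{\parallel} ~+~ \matr E\,\matr G_i\,\matr E v_{\perp},
\]
and I would bound the four summands in norm by $\sigma_2(G)\|v_{\parallel}\|$ (using $v_{\parallel}\perp\mathbf{1}$ and that $\matr A_G\otimes\matr J_H$ acts as $\matr A_G$ on cloud-constant functions), $\sigma_2(H)\|v_{\perp}\|$, $\sigma_2(H)\|v_{\parallel}\|$, and $\sigma_2(H)^2\|v_{\perp}\|$ respectively, via $\|\Pi\|,\|\matr G_i\| \le 1$ and $\|\matr E\|\le\sigma_2(H)$. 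Summing and using $\|v_{\parallel}\|\le 1$ and $\|v_{\perp}\|\le 1$ yields $\|\matr W v\| \le \sigma_2(G) + 2\sigma_2(H) + \sigma_2(H)^2$, which is the claim. Apart from the identity $\Pi\matr G_i\Pi = \matr A_G\otimes\matr J_H$, this is just the routine RVW triangle-inequality bookkeeping, so that identity is the place I expect to spend care.
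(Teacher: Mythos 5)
Your proof is correct, and it is exactly the standard Reingold--Vadhan--Wigderson decomposition ($W=(\Pi+\matr E)\matr G_i(\Pi+\matr E)$ with $\Pi \matr G_i \Pi = \matr A_G\otimes \matr J_H$ and $\|\matr E\|\le\sigma_2(H)$) that the paper invokes for this fact; the paper itself writes no proof and simply cites \cite{RVW00} after identifying the operator as the zig-zag walk operator. So you have supplied the omitted standard argument rather than a different one.
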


This bound will imply \emph{splittability} as shown
in~\cref{sec:ta-shma:splittability}. We will need to argue that this
modification still preserves the correctness of the parity sampling
and that it can be achieved with similar parameter trade-offs.

The formal definition of a length-$t$ walk on this slightly modified
construction is given below.
\begin{definition}
  Let $k \in \mathbb{N}$, $G$ be a $d_1$-regular graph and $H$
  be a $d_2$-regular graph on $d_1^s$ vertices. Given a starting vertex $(v,u) \in V(G) \times
  V(H)$, a $(k-1)$-step walk on the tweaked $s$-wide replacement product of $G$ and $H$ is
  a tuple $((v_1,u_1),\dots, (v_{k},u_{k})) \in (V(G) \times V(H))^{k}$ such that
  \begin{itemize}
    \item $(v_1,u_1) = (v,u)$, and
    \item for every $1 \le i < k$, we have $(v_i,u_i)$ adjacent to $(v_{i+1},u_{i+1})$
          in $(\matr I \otimes \Aye_H) \matr G_{(i-1) \bmod s} (\matr I \otimes \Aye_H)$.
  \end{itemize}
  Note that each $(\matr I \otimes \Aye_H) \matr G_{(i-1) \bmod s} (\matr I \otimes \Aye_H)$
  is a walk operator of a $d_2^2$-regular graph. Therefore, the starting vertex $(v,u)$ together
  with a degree sequence $(m_1,\dots,m_k) \in [d_2^2]^{k-1}$ uniquely defines a $(k-1)$-step walk.
\end{definition}

\subsubsection{Parity Sampling}\label{sec:tweaked_parity_sampling}

We argue informally why parity sampling still holds with similar
parameter trade-offs. In particular,
we formalize a key result underlying parity sampling and,
in~\cref{sec:ta-shma_param}, we compute the new trade-off
between bias and rate in some regimes. In~\cref{sec:s_wide_replacement_prod},
the definition of the original $s$-wide replacement product as a
purely graph theoretic operation was given. Now, we explain how
Ta-Shma used this construction for parity sampling obtaining codes
near the GV bound.

For a word $z \in \mathbb{F}_2^{V(G)}$ in the base code, let $\matr
P_z$ be the diagonal matrix, whose rows and columns are indexed by
$V(G) \times V(H)$, with $(\matr P_z)_{(v,u), (v,u)} =
(-1)^{z_v}$. Proving parity sampling requires analyzing the operator
norm of the following product
\begin{equation}\label{eq:tweaked_tashma}
  \matr P_z \prod_{i=0}^{s-1} \matr (\matr I \otimes \Aye_H) \matr G_{i} \matr P_z (\matr I \otimes \Aye_H),
\end{equation}
when $\bias(z) \le \epsilon_0$. Let $\one \in \mathbb{R}^{V(G)\times
V(H)}$ be the all-ones vector, scaled to be of unit length under the $\ell_2$ norm, and
$W$ be the collection of all $(t-1)$-step walks on the tweaked $s$-wide
replacement product. Ta-Shma showed (and it is not difficult to
verify) that
$$
\bias\left(\dsum_{W}(z) \right) = \left\lvert \ip{\one}{\matr P_z \prod_{i=0}^{k-2} \matr (\matr I \otimes \Aye_H) \matr G_{i \bmod s} \matr P_z (\matr I \otimes \Aye_H) \one} \right\rvert.
$$
The measure used in this inner product is the usual counting measure over $\mathbb{R}^{V(G)\times V(H)}$. From the previous equation, one readily deduces that
$$
\bias\left(\dsum_{W}(z) \right) \le \sigma_1\left(\matr P_z \prod_{i=0}^{s-1} \matr (\matr I \otimes \Aye_H) \matr G_{i} \matr P_z (\matr I \otimes \Aye_H)\right)^{\lfloor (k-1)/s \rfloor}.
$$

The
key technical result obtained by Ta-Shma is the following, which is
used to analyze the bias reduction as a function of the total number
walk steps $k-1$. Here $\theta$ is a parameter used in obtaining explicit Ramanujan graphs.
 
\begin{fact}[Theorem 24 abridged~\cite{TS17}]\label{fact:ta-shma_main}
 If $H$ is a Cayley graph on $\mathbb{F}_2^{s\log d_1}$ and $\epsilon_0 + 2 \cdot \theta + 2 \cdot \sigma_2(G) \le \sigma_2(H)^2$, then
 $$
 \norm{\prod_{i=0}^{s-1} \matr P_z \matr G_i (\matr I \otimes \matr A_H)}_{\textup{op}} \le \sigma_2(H)^s + s \cdot \sigma_2(H)^{s-1} + s^2 \cdot \sigma_2(H)^{s-3},
 $$
 where $\matr P_z \in \mathbb{R}^{(V(G)\times V(H)) \times (V(G)\times V(H))}$ is the \emph{sign operator} of a $\epsilon_0$ biased word $z \in \mathbb{F}_2^{V(G)}$
 defined as a diagonal matrix with $(P_z)_{(v,u),(v,u)} = (-1)^{z_v}$ for every $(v,u) \in V(G) \times V(H)$.
\end{fact}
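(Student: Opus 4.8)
The plan is to follow Ta-Shma's argument~\cite{TS17}, whose core is a spectral decomposition of $\mathbb{R}^{V(G)\times V(H)}$ according to how far the current state is from being uniform on each ``cloud'' $\{v\}\times V(H)$. Write $\matr A := \matr I\otimes\matr A_H$ for the intra-cloud operator and $\Pi := \matr I\otimes\matr J_H$ for the orthogonal projection onto the cloud-uniform subspace $L = \{g\otimes\one : g\in\mathbb{R}^{V(G)}\}$, where $\matr J_H$ is the normalized all-ones matrix of $H$. Since $H$ is a $d_2$-regular expander we split $\matr A = \Pi + \matr E$ with $\matr E = \matr I\otimes(\matr A_H - \matr J_H)$ supported on $L^{\perp}$ and $\norm{\matr E}_{\textup{op}}\le\sigma_2(H)$. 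Also the sign operator $\matr P_z$ has the form $\matr D_z\otimes\matr I$ with $\matr D_z = \mathrm{diag}((-1)^{z_v})_{v\in V(G)}$, so it preserves both $L$ and $L^{\perp}$ and commutes with $\Pi$. A direct computation using that $\mathrm{rot}_G$ is locally invertible gives $\Pi\,\matr G_i\,\Pi = \matr A_G\otimes\matr J_H$ as operators, hence on $L\cong\mathbb{R}^{V(G)}$ each ``parallel block'' $\Pi\matr P_z\matr G_i\Pi$ acts as the twisted walk $\matr D_z\matr A_G$.

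The first quantitative input I would establish is that, under the hypothesis $\epsilon_0 + 2\theta + 2\sigma_2(G)\le\sigma_2(H)^2$, the composition of two twisted-walk steps is small: $\norm{\matr D_z\matr A_G\matr D_z\matr A_G}_{\textup{op}}\le\sigma_2(H)^2$ on $\mathbb{R}^{V(G)}$. This follows because $\matr A_G$ contracts the space orthogonal to the all-ones vector $\one_{V(G)}$ by a factor $\sigma_2(G)$, while $\matr D_z$ is orthogonal but, since $\bias(z)\le\epsilon_0$, maps $\one_{V(G)}$ to a vector whose component along $\one_{V(G)}$ has length at most $\epsilon_0$; combining these over two steps, and absorbing into $\theta$ the error incurred because $G$ and $H$ are only explicitly near-Ramanujan rather than exactly Ramanujan, yields the bound. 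A single step $\matr D_z\matr A_G$ has operator norm only $1$, which is why the analysis must be carried out over blocks of at least two steps.

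Next I would expand the $s$-fold product $\prod_{i=0}^{s-1}\matr P_z\matr G_i\matr A = \prod_{i=0}^{s-1}\matr P_z\matr G_i(\Pi + \matr E)$ into $2^s$ terms indexed by the set $S\subseteq\{0,\ldots,s-1\}$ of steps at which one picks the $\matr E$ summand. The term $S=\{0,\ldots,s-1\}$ is a product of orthogonal operators times $s$ copies of $\matr E$, so its operator norm is at most $\sigma_2(H)^s$; this is the main term. Every other term makes at least one ``parallel excursion'' (a maximal run of consecutive steps at which $\Pi$ was chosen), and I would control each such excursion using the previous paragraph together with the behaviour of the rotation operators on the pattern-graded decomposition of $\mathbb{R}^{V(H)}$. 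Since $H$ is a Cayley graph on $\mathbb{F}_2^{s\log d_1}\cong(\mathbb{F}_2^{\log d_1})^s$, its spectral decomposition refines into components indexed by the set $T\subseteq\{0,\ldots,s-1\}$ of coordinate-blocks on which the corresponding character is nontrivial, and each $\matr G_i$ only alters membership of the single index $i$ in $T$. Because each coordinate is touched exactly once per length-$s$ cycle, and because one may choose the generating set of $H$ so that the relevant single-coordinate projections of the $H$-walk remain expanding, a state can leave $L^{\perp}$, enter $L$, and later return to $L^{\perp}$ only at a cost of two ``crossings'', i.e.\ a factor $\sigma_2(H)^2$, exactly the budget supplied by the hypothesis. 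Summing the resulting estimates over the all-$\matr E$ pattern, the $O(s)$ patterns with a single length-one parallel excursion, and the $O(s^2)$ patterns with two defects (two length-one excursions, or one length-two excursion), and checking that patterns with longer excursions or more defects are even smaller, produces the claimed bound $\sigma_2(H)^s + s\cdot\sigma_2(H)^{s-1} + s^2\cdot\sigma_2(H)^{s-3}$.

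The main obstacle is precisely this interface between the two pieces of the decomposition: the rotation operators $\matr G_i$ preserve neither $L$ nor $L^{\perp}$, so one cannot simply multiply per-step norms. Handling this cleanly requires the finer pattern-graded analysis of $\mathbb{R}^{V(H)}$ and a careful choice of the Cayley structure of $H$ so that the only way a non-uniform state can become cloud-uniform, or vice versa, is through the single coordinate currently being rotated, with a controlled spectral cost; this is the technical heart of Ta-Shma's construction. The accompanying bookkeeping that converts these local estimates into the global bound, with the precise prefactors $s$ and $s^2$ and exponents $s-1$ and $s-3$, is the most delicate remaining step, and I would carry it out by organizing the $2^s$ terms according to the number and lengths of parallel excursions and bounding the contribution of each class by a geometric-series argument in $\sigma_2(H)$.
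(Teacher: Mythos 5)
The paper does not prove this statement at all: it is imported as a black-box fact, an abridged form of Theorem~24 of Ta-Shma~\cite{TS17}, and the only ``proof'' the thesis offers is the citation (the surrounding text merely derives the corollary for the tweaked construction from it). So there is no internal argument to compare against; what you have written is a reconstruction of the \emph{external} proof in \cite{TS17}.

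Judged as such, your outline has the right skeleton: the split $\matr I\otimes\matr A_H=\Pi+\matr E$ into the cloud-uniform subspace and its complement, the identity $\Pi\,\matr G_i\,\Pi=\matr A_G\otimes\matr J_H$ so that parallel stretches act as the twisted $G$-walk, and the two-step contraction $\norm{\matr A_G\matr D_z\matr A_G}_{\textup{op}}\le\epsilon_0+2\theta+2\sigma_2(G)\le\sigma_2(H)^2$ are indeed the ingredients of Ta-Shma's analysis (one correction: $\theta$ enters as the discrepancy between the codeword length $n'$ and the vertex count $n''$ of the explicit Ramanujan graph, i.e.\ as extra effective bias of $z$ on $V(G)$, not as a spectral defect of $G$ or $H$). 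However, the proposal stops exactly where the theorem's content begins. The passage from ``each crossing between $L$ and $L^{\perp}$ costs $\sigma_2(H)^2$'' to the precise bound $\sigma_2(H)^s+s\cdot\sigma_2(H)^{s-1}+s^2\cdot\sigma_2(H)^{s-3}$ — i.e.\ the classification of the $2^s$ excursion patterns, the treatment of long parallel excursions via the two-step lemma, and the verification that the Cayley structure of $H$ over $\F_2^{s\log d_1}$ really restricts the interaction of $\matr G_i$ with the character decomposition to coordinate $i$ at the claimed spectral cost — is asserted rather than carried out, as you yourself acknowledge. Those steps are precisely what Theorem~24 of \cite{TS17} establishes, so as it stands your write-up is a correct-in-spirit outline with its technical core left open; for the purposes of this thesis, invoking \cite{TS17} as the paper does is the intended treatment, and a self-contained proof would require filling in that bookkeeping in full.
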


We reduce the analysis of Ta-Shma's tweaked construction
to an analog of ~\cref{fact:ta-shma_main}. In doing so, we only lose one extra step
as shown below.

\begin{corollary}\label{cor:tweaked_ta-shma_spectral_analysis}
 If $H^2$ is a Cayley graph on $\mathbb{F}_2^{s\log d_1}$ and $\epsilon_0 + 2 \cdot \theta + 2 \cdot \sigma_2(G) \le \sigma_2(H)^4$, then
 $$
 \norm{\prod_{i=0}^{s-1} (\matr I \otimes \matr A_H) \matr P_z \matr G_i (\matr I \otimes \matr A_H)}_{\textup{op}} \le \sigma_2(H^2)^{s-1} + (s-1) \cdot \sigma_2(H^2)^{s-2} + (s-1)^2 \cdot \sigma_2(H^2)^{s-4},
 $$
 where $\matr P_z$ is the \emph{sign operator} of an $\epsilon_0$-biased word $z \in \mathbb{F}_2^{V(G)}$
 as in~\cref{fact:ta-shma_main}.
\end{corollary}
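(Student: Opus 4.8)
The plan is to reduce \cref{cor:tweaked_ta-shma_spectral_analysis} to Ta-Shma's original spectral estimate (\cref{fact:ta-shma_main}), at the price of replacing the inner graph $H$ by its square $H^2$ and of losing one step in the walk length — this is exactly the ``$s-1$ versus $s$'' discrepancy between the two statements. Write $\matr B = \matr I \otimes \matr A_H$, so that $\matr B^2 = \matr I \otimes \matr A_H^2 = \matr I \otimes \matr A_{H^2}$ where $H^2$ is the square graph. The relevant elementary facts are: $\matr B$ is self-adjoint (as $H$ is undirected) and a contraction, $\norm{\matr B}_{\textup{op}} \le 1$; each $\matr G_i$ is an orthogonal permutation operator; and $\matr P_z$ is an orthogonal $\pm1$-diagonal operator. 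Moreover, since $H$ is a Cayley graph on the abelian group $\mathbb{F}_2^{s\log d_1}$, so is $H^2$ (its connection multiset is $A +_{\text{multiset}} A$), and because $\matr A_{H^2} = \matr A_H^2$ is symmetric we have exactly $\sigma_2(H^2) = \sigma_2(H)^2$; consequently the spectral hypothesis needed to invoke \cref{fact:ta-shma_main} with inner graph $H^2$, namely $\epsilon_0 + 2\theta + 2\sigma_2(G) \le \sigma_2(H^2)^2 = \sigma_2(H)^4$, is precisely the hypothesis of the corollary.

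\textbf{Reduction to a product of $s-1$ terms.} First I would expand the tweaked product and collapse the $s-1$ interior pairs $\matr B\cdot\matr B$:
$$
\prod_{i=0}^{s-1} (\matr I \otimes \matr A_H)\, \matr P_z \matr G_i\, (\matr I \otimes \matr A_H)
~=~ \matr B\,\bigl( \matr P_z \matr G_0 \matr B^2 \matr P_z \matr G_1 \matr B^2 \cdots \matr B^2 \matr P_z \matr G_{s-1} \bigr)\, \matr B .
$$
Next, drop the two outermost $\matr B$'s using that they are contractions, and then remove $\matr G_{s-1}$ and the trailing $\matr P_z$ from the right using that they are orthogonal (right-multiplication by an orthogonal operator leaves the operator norm unchanged). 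What remains is $\prod_{i=0}^{s-2}\matr P_z \matr G_i \matr B^2$, i.e.
$$
\norm{ \prod_{i=0}^{s-1} (\matr I \otimes \matr A_H)\, \matr P_z \matr G_i\, (\matr I \otimes \matr A_H) }_{\textup{op}}
~\le~
\norm{ \prod_{i=0}^{s-2} \matr P_z \matr G_i\, (\matr I \otimes \matr A_{H^2}) }_{\textup{op}} ,
$$
which is exactly a product over the first $s-1$ of the $s$ rotation operators of the $s$-wide replacement product of $G$ and $H^2$, interleaved with the sign operator $\matr P_z$.

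\textbf{Invoking Ta-Shma's estimate, and the main obstacle.} It then remains to apply \cref{fact:ta-shma_main} with inner graph $H^2$ and walk length $s-1$ to conclude
$$
\norm{ \prod_{i=0}^{s-2} \matr P_z \matr G_i\, (\matr I \otimes \matr A_{H^2}) }_{\textup{op}}
~\le~ \sigma_2(H^2)^{s-1} + (s-1)\,\sigma_2(H^2)^{s-2} + (s-1)^2\,\sigma_2(H^2)^{s-4}.
$$
The subtle point — and the main obstacle to a fully rigorous write-up — is that \cref{fact:ta-shma_main} is stated only for a full period, i.e.\ the product over all $s$ operators $\matr G_0,\ldots,\matr G_{s-1}$, whereas here we have a product over $\matr G_0,\ldots,\matr G_{s-2}$ only. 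This is not a black-box application: one has to track Ta-Shma's proof of Theorem 24 and check that his inductive norm estimate for a $t$-term product $\prod_{i=0}^{t-1}\matr P_z\matr G_i(\matr I\otimes\matr A_H)$ degrades to $\sigma_2(H)^{t} + t\,\sigma_2(H)^{t-1} + t^2\,\sigma_2(H)^{t-3}$ for every $1\le t\le s$, not merely $t=s$; taking $t=s-1$ then yields the stated bound, and the ``one extra step'' lost in the corollary is precisely this reduction from $s$ terms to $s-1$. Everything else — collapsing adjacent copies of $\matr B$, the orthogonality-based peeling, and the $H \to H^2$ translation of the spectral hypothesis — is routine. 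I would also note that, unlike the blocks of the original $s$-wide product, each block $(\matr I\otimes\matr A_H)\matr G_i(\matr I\otimes\matr A_H)$ here is a genuine undirected zig-zag walk operator, so that \cref{fact:zig_zag_bound} applies to it directly; this is the sense in which the tweaked construction is technically cleaner.
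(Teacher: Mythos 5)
Your proof is correct and follows essentially the same route as the paper: collapse the adjacent copies of $\matr I \otimes \matr A_H$ into $\matr I \otimes \matr A_{H^2}$, discard the outer contractions and one orthogonal/contractive block, and invoke \cref{fact:ta-shma_main} with inner graph $H^2$ on the resulting $(s-1)$-term product (the paper peels off the block $\matr P_z\matr G_0(\matr I\otimes\matr A_H)$ and keeps indices $1,\dots,s-1$, whereas you peel from the other end and keep $0,\dots,s-2$ — an immaterial difference). Your explicit caveat that \cref{fact:ta-shma_main} must be checked to hold for a truncated product of $s-1$ consecutive rotation operators, not just the full $s$-term period, is a point the paper silently elides, so flagging it is appropriate rather than a defect.
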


\begin{proof}
  We have
  \begin{align*}
    \norm{\prod_{i=0}^{s-1} (\matr I \otimes \matr A_H) \matr P_z \matr G_i (\matr I \otimes \matr A_H)}_{\text{op}} &\le \norm{(\matr I \otimes \matr A_H)}_{\text{op}} \norm{\prod_{i=1}^{s-1}  \matr P_z \matr G_i (\matr I \otimes \matr A_H^2)}_{\text{op}} \norm{\matr P_z \matr G_{0} (\matr I \otimes \matr A_H)}_{\text{op}}\\
                                                  &\le \norm{\prod_{i=1}^{s-1}  \matr P_z \matr G_i (\matr I \otimes \matr A_H^2)}_{\text{op}}\\
                                                  &\le \sigma_2(H^2)^{s-1} + (s-1) \cdot \sigma_2(H^2)^{s-2} + (s-1)^2 \cdot \sigma_2(H^2)^{s-4},
  \end{align*}
  where the last inequality follows from~\cref{fact:ta-shma_main}.
\end{proof}

\begin{remark}
  We know that in the modified construction $H^2$ is a Cayley graph since $H$ is a Cayley graph.
\end{remark}


\section{Splittability}\label{sec:ta-shma:splittability}

In this subsection, we focus on the splittability parameters arising out of the construction described above. The collection $\tuples{k}\subseteq [n]^k$ is obtained from taking $k-1$ step walks on $s$-wide replacement as described above, which is $d_2^2$-regular. Recall from \cref{def:split} that we need to show $\sigma_2(\swap{W[a,t]}{W[t+1,b]}) \le \tau$ for all $1\leq a<t<b\leq k$, where,

\begin{align*}
\left( \swap{W[a,t]}{W[t+1,b]} \right)_{(i_a,\cdots,i_t),(i_{t+1},\cdots ,i_b)}  := \frac{\one[(i_a,\cdots, i_t,i_{t+1},\cdots ,i_b)\in W[a,b]]}{d_2^{2(b-s)}}
\end{align*}

\begin{lemma}\label{lemma:swap_matrix_rep}
  Let $1 \leq a < t < b  \leq k$.
  Suppose $G$ is a $d_1$-regular outer graph on vertex set $[n]$ with walk operator $G_{t}$ used at step $s$ of a walk on the $s$-wide replacement product and $H$ is a $d_2$-regular inner graph on vertex set $[m]$ with normalized random walk operator $\Aye_H$.
  Then there are orderings of the rows and columns of the representations of $\swap{W[a,t]}{W[t+1,b]}$ and $\Aye_H$ as matrices such that
  $$
  \swap{W[a,t]}{W[t+1,b]} = \left( (I \otimes \Aye_H) G_{t} (I \otimes \Aye_H) \right) \otimes \Jay/d_2^{2(b-t-1)},
  $$
  where $\Jay \in \mathbb{R}^{[d_2]^{2(t-a)} \times [d_2]^{2(b-t-1)}}$ is the all ones matrix.
\end{lemma}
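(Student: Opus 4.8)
The plan is to unpack the definition of the $s$-wide replacement product walk collection and track carefully how a walk decomposes at the split position $t$. First I would observe that a $(k-1)$-step walk on the tweaked $s$-wide replacement product is uniquely determined by its starting vertex $(v_1,u_1) \in V(G)\times V(H)$ together with a sequence of ``degree labels'' $(m_1,\ldots,m_{k-1}) \in [d_2^2]^{k-1}$, since each transition operator $(\matr I \otimes \Aye_H)\matr G_{i \bmod s}(\matr I \otimes \Aye_H)$ is the walk operator of a $d_2^2$-regular graph. Hence $W[a,b]$ is in bijection with tuples consisting of the vertex $(v_a,u_a)$ reached at step $a$ together with the labels $m_a,\ldots,m_{b-1}$ used for the subsequent $b-a$ transitions. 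The key point is that the vertex reached at step $t+1$ is itself a deterministic function of $(v_t,u_t)$ and the single label $m_t$, and thereafter the walk from step $t+1$ to step $b$ is governed by the same $d_2^2$-regular operators applied starting from $(v_{t+1},u_{t+1})$.

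Next I would set up the row/column orderings. Index the rows of $\swap{W[a,t]}{W[t+1,b]}$ by elements of $W[a,t]$, which I identify with pairs $\big((v_t,u_t),\, (m_a,\ldots,m_{t-1})\big)$ — i.e. the endpoint vertex at time $t$ together with the ``history labels'' that led there within the segment — and similarly index the columns of $\swap{}{}$ by $W[t+1,b]$ identified with $\big((v_{t+1},u_{t+1}),\,(m_{t+1},\ldots,m_{b-1})\big)$. With these orderings, the entry of $\swap{W[a,t]}{W[t+1,b]}$ at position indexed by $\big((v_t,u_t),(m_a,\ldots,m_{t-1})\big)$ and $\big((v_{t+1},u_{t+1}),(m_{t+1},\ldots,m_{b-1})\big)$ is $1/d_2^{2(b-t)}$ precisely when the pair $(i_a,\ldots,i_t,i_{t+1},\ldots,i_b)$ obtained by concatenating forms a valid walk in $W[a,b]$, and $0$ otherwise. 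Now a concatenation is valid iff (i) $(v_{t+1},u_{t+1})$ is reachable from $(v_t,u_t)$ by one application of the step-$s$ operator $(\matr I \otimes \Aye_H)\matr G_{t \bmod s}(\matr I \otimes \Aye_H)$ — and the number of labels $m_t$ realizing any such transition is exactly the relevant entry of the $d_2^2$-regular adjacency operator — and (ii) the history labels $(m_a,\ldots,m_{t-1})$ and the future labels $(m_{t+1},\ldots,m_{b-1})$ are completely unconstrained, each ranging freely over $[d_2^2]$. Constraint (ii) is exactly what produces an all-ones tensor factor $\Jay \in \mathbb{R}^{[d_2]^{2(t-a)}\times[d_2]^{2(b-t-1)}}$, while constraint (i) produces the factor $(\matr I\otimes\Aye_H)\matr G_{t}(\matr I\otimes\Aye_H)$, once one checks the normalization: the valid-walk indicator summed over the $d_2^2$ choices of $m_t$ matches $d_2^2$ times the corresponding normalized walk-operator entry, and dividing the global $1/d_2^{2(b-t)}$ by this gives $1/d_2^{2(b-t-1)}$ for the $\Jay$ block, matching the claimed $\Jay/d_2^{2(b-t-1)}$.

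Putting these together, I would write $\swap{W[a,t]}{W[t+1,b]}$ as the Kronecker product $\big((\matr I\otimes\Aye_H)\matr G_t(\matr I\otimes\Aye_H)\big)\otimes \big(\Jay/d_2^{2(b-t-1)}\big)$, being careful that the $(t-a)$-fold label history on the left and the $(b-t-1)$-fold label future on the right combine to give the rectangular all-ones matrix of the stated dimensions $[d_2]^{2(t-a)}\times[d_2]^{2(b-t-1)}$. I expect the main obstacle to be purely bookkeeping: choosing the row/column orderings so that the Kronecker factorization is literally true as matrices (not merely up to permutation), and correctly accounting for the powers of $d_2$ in the normalization. Once the factorization is established, the singular value bound in the definition of $\tau$-splittability follows immediately from $\sigma_2\big(\swap{}{}\big) = \sigma_2\big((\matr I\otimes\Aye_H)\matr G_t(\matr I\otimes\Aye_H)\big)$ — since the normalized all-ones matrix $\Jay/d_2^{2(b-t-1)}$ has operator norm $1$ and rank $1$, tensoring with it scales all singular values by $1$ and only replicates them — combined with \cref{fact:zig_zag_bound}, which bounds this zig-zag second singular value by $\sigma_2(G) + 2\sigma_2(H) + \sigma_2(H)^2$; this last step would be stated as a corollary immediately following the lemma.
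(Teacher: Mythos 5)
Your proposal is correct and follows essentially the same route as the paper's proof: you group the walks in $W[a,t]$ and $W[t+1,b]$ by their endpoint/start vertex in $V(G)\times V(H)$ (your labelling by degree sequences is just an explicit enumeration of the paper's blocks $W_{i,j}$ of sizes $d_2^{2(t-a)}$ and $d_2^{2(b-t-1)}$), observe that the entry depends only on adjacency of those two vertices under $(\matr I\otimes \Aye_H)\matr G_t(\matr I\otimes \Aye_H)$, and factor out the all-ones block with the same normalization bookkeeping. The concluding remark on singular values is exactly the paper's \cref{cor:split_walk_spectral_gap}.
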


\begin{proof}
  Partition the set of walks $W[a,t]$ into the sets
  $W_{1,1}, \dots, W_{n',m}$, where $w \in W_{i,j}$ if the last vertex
  of the walk $i_{t} = (v_{t},u_{t})$ satisfies $v_{t} = i$
  and $u_{t} = j$. Similarly, partition $W[t+1,b]$ into the sets
  $W_{1,1}', \dots, W_{n',m}'$, where $(i_{t+1},\cdots,i_b) \in W_{i,j}'$ if the first
  vertex of the walk $i_{t+1} = (v_{t+1},u_{t+1})$ satisfies $v_{t+1} = i$ and $u_{t+1} =
  j$.  Note that $\abs{W_{i,j}} = d_2^{2(t-a)}$ and $\abs{W_{i,j}'} =
  d_2^{2(b-t-1)}$ for all $(i,j) \in [n']\times[m]$, since there are $d_2^2$ choices for each step of the walk.

  Now order the rows of the matrix $\swap{W[a,t]}{W[t+1,b]}$ so that all
  of the rows corresponding to walks in $W_{1,1}$ appear first,
  followed by those for walks in $W_{1,2}$, and so on in
  lexicographic order of the indices $(i,j)$ of $W_{i,j}$, with an
  arbitrary order within each set.  Do a similar re-ordering of the
  columns for the sets $W_{1,1}', \dots, W_{n',m}'$.  Observe that
  \begin{align*}
  \left(\swap{W[a,t]}{W[t+1,b]}\right)&_{(i_a,\cdots,i_t),(i_{t+1},\cdots ,i_b)} = \frac{\one_{{(i_a,\cdots,i_t,i_{t+1},\cdots ,i_b)} \in W[a,b]}}{d_2^{2(b-t)}} \\
                                          & = \frac{d_2^2 \cdot (\text{weight of  transition from } i_t \text{ to } i_{t+1}  \text{ in } (I \otimes \Aye_H)G_{t} (I \otimes \Aye_H))}{d_2^{2(b-t)}},
  \end{align*}
  which only depends on the adjacency of the last vertex of $(i_a,\cdots,i_t)$ and
  the first vertex of $(i_{t+1},\cdots,i_b)$.  If the vertices $i_{t} = (v_{t},u_{t})$ and $i_{t+1}=(v_{t+1},u_{t+1})$
  are adjacent, then
  $$
  \left(\swap{W[a,t]}{W[t+1,b]}\right)_{(i_a,\cdots,i_t),(i_{t+1},\cdots ,i_b)} = \left((I \otimes \Aye_H)G_{t}(I \otimes \Aye_H)\right)_{(v_t,u_t),(v_{t+1}, u_{t+1})}/d_2^{2(b-t-1)},
  $$
  for every $(i_a,\cdots,i_t) \in W[a,t]$ and $(i_{t+1},\cdots ,i_b) \in W[t+1,b]$; and otherwise \\
  $\left(\swap{W[a,t]}{W[t+1,b]}\right)_{(i_a,\cdots,i_t),(i_{t+1},\cdots ,i_b)} = 0$.  Since the walks in
  the rows and columns are sorted according to their last and first
  vertices, respectively, the matrix $\swap{W[a,t]}{W[t+1,b]}$ exactly
  matches the tensor product
  $((I \otimes \Aye_H)G_{t} (I \otimes \Aye_H)) \otimes \Jay/d_2^{2(b-t-1)}$.
\end{proof}

\begin{corollary}\label{cor:split_walk_spectral_gap}
  Let $1 \leq a < t < b \leq k$.
  Suppose $G$ is a $d_1$-regular outer graph with walk operator $G_{t}$ used at step $t$ of a walk on the $s$-wide replacement product and $H$ is a $d_2$-regular inner graph with normalized random walk operator $\Aye_H$.
  Then
  $$
  \sigma_2(\swap{W[a,t]}{W[t+1,b]}) = \sigma_2((I \otimes \Aye_H) G_{t} (I \otimes \Aye_H)).
  $$
\end{corollary}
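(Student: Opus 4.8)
The plan is to obtain \cref{cor:split_walk_spectral_gap} directly from \cref{lemma:swap_matrix_rep}, using nothing beyond the standard description of the singular values of a Kronecker product. \cref{lemma:swap_matrix_rep} already hands us (after an innocuous reordering of rows and columns, which does not affect singular values) the factorization
\[
\swap{W[a,t]}{W[t+1,b]} ~=~ \matr M \otimes \tfrac{1}{d_2^{2(b-t-1)}}\,\Jay, \qquad \matr M \defeq (\matr I \otimes \Aye_H)\,\matr G_t\,(\matr I \otimes \Aye_H),
\]
where $\Jay \in \mathbb{R}^{[d_2]^{2(t-a)} \times [d_2]^{2(b-t-1)}}$ is the all-ones matrix. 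So the entire remaining task is to read off the second-largest singular value of the right-hand side.

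First I would record two elementary facts. \emph{(i)} For real matrices $\matr A,\matr B$ the multiset of singular values of $\matr A \otimes \matr B$ is exactly $\{\sigma_i(\matr A)\cdot \sigma_j(\matr B)\}_{i,j}$ (immediate from $(\matr A \otimes \matr B)^{\top}(\matr A \otimes \matr B) = (\matr A^{\top}\matr A)\otimes(\matr B^{\top}\matr B)$ and the analogous statement for eigenvalues of Kronecker products). \emph{(ii)} The normalized all-ones block $\tfrac{1}{d_2^{2(b-t-1)}}\Jay$ has rank one, and its unique nonzero singular value equals $1$: it maps the constant function to the constant function and annihilates the orthogonal complement of the constants. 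The scalar $1/d_2^{2(b-t-1)}$ is precisely the normalization that makes this block a norm-one operator between the two index sets equipped with their uniform measures, which are the measures implicit in the splittability setup; here one uses the regularity of $W$, namely that each block in the partitions of $W[a,t]$ and $W[t+1,b]$ has size $d_2^{2(t-a)}$ and $d_2^{2(b-t-1)}$ respectively, as verified inside the proof of \cref{lemma:swap_matrix_rep}.

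Combining \emph{(i)} and \emph{(ii)}, the singular values of $\matr M \otimes \tfrac{1}{d_2^{2(b-t-1)}}\Jay$ are exactly $\{\sigma_i(\matr M)\cdot 1\}_i$ together with a block of zeros; that is, they are the singular values of $\matr M$ padded with zeros. Hence the second-largest singular value of $\swap{W[a,t]}{W[t+1,b]}$ is $\sigma_2(\matr M) = \sigma_2\big((\matr I \otimes \Aye_H)\,\matr G_t\,(\matr I \otimes \Aye_H)\big)$, which is the claim. (One may additionally note that $\matr M$ is symmetric, since $\Aye_H$ is symmetric and $\matr G_t$ is a product of transpositions, so $\sigma_2(\matr M) = |\lambda_2(\matr M)|$; this is not needed for the corollary but is what lets it feed cleanly into \cref{fact:zig_zag_bound}.)

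There is no real obstacle here: all the content sits in \cref{lemma:swap_matrix_rep}, which is already proved, and the corollary is just bookkeeping about Kronecker products. The one spot that deserves a sentence of care is fact \emph{(ii)} — checking that $1/d_2^{2(b-t-1)}$ is exactly the right scaling, so that tensoring with the all-ones block leaves the nonzero singular values of $\matr M$ unchanged rather than rescaling them — and this is where the $d$-regularity of the walk collection $W$ is invoked.
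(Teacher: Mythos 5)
Your proposal is correct and follows essentially the same route as the paper: the paper's proof also consists of invoking \cref{lemma:swap_matrix_rep} and then asserting (without further justification) exactly the fact you prove, namely that tensoring with the normalized all-ones block $\Jay/d_2^{2(b-t-1)}$ preserves $\sigma_2$. Your added care about the normalization — that the all-ones block is a norm-one, rank-one operator with respect to the uniform measures on $W[a,t]$ and $W[t+1,b]$, so the nonzero singular values of $(\matr I \otimes \Aye_H)\matr G_t(\matr I \otimes \Aye_H)$ pass through unrescaled — is precisely the bookkeeping the paper leaves implicit.
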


\begin{proof}
  Using~\cref{lemma:swap_matrix_rep} and the fact that
  $$
  \sigma_2(((I \otimes \Aye_H)G_{t} (I \otimes \Aye_H)) \otimes \Jay/d_2^{2(b-t-1)})  = \sigma_2((I \otimes \Aye_H)G_{t} (I \otimes \Aye_H)),
  $$
  the result follows.
\end{proof}

\begin{remark}
\Cref{cor:split_walk_spectral_gap} is what causes the splittability argument to break down for Ta-Shma's original construction, as $\sigma_2(\matr G_{t} (\matr I \otimes \matr A_H)) = 1$.
\end{remark}

\section{Parameter Choices}\label{sec:ta-shma_param}

In this section, we choose parameters to finally obtain \cref{fact:ta-shma_splittable_tuples}, for which we must argue about bias, rate and splittability.

A graph is said to be an
$(n,d,\lambda)$-graph provided it has $n$ vertices, is $d$-regular, and
has second largest singular value of its normalized adjacency matrix
at most $\lambda$.
\begin{notation}
  We use the following notation for the graphs $G$ and $H$ used in
  the $s$-wide replacement product.
  \begin{itemize}
    \item The outer graph $G$ will be an $(n'',d_1,\lambda_1)$-graph.
    \item The inner graph $H$ will be a $(d_1^s,d_2,\lambda_2)$-graph.
  \end{itemize}
  The parameters $n'', d_1,d_2, \lambda_1,\lambda_2$ and $s$ are yet to be chosen.
\end{notation}

We are given the dimension $D$ of the desired code and its bias $\epsilon \in (0,1/2)$.
We set a parameter $\alpha \le 1/128$ such that (for convenience)
$1/\alpha$ is a power of $2$ and
\begin{equation}\label{eq:alpha_epsilon}
\frac{\alpha^5}{4 \log_2(1/\alpha)} \ge \frac{1}{\log_2(1/\epsilon)}.
\end{equation}

By replacing $\log_2(1/\alpha)$ with its upper bound $1 / \alpha$, we observe that $\alpha = \Theta(1/\log_2(1/\epsilon)^{1/6})$ satisfies this bound, and so we choose $s = \Theta(\log_2(1/\epsilon)^{1/6})$.

\noindent \textbf{The inner graph $H$.} \enspace The choice of $H$ is
same as Ta-Shma's choice. More precisely, we set $s=1/\alpha$ and
$d_2 = s^{4s}$.
We obtain a Cayley graph $H
= \textup{Cay}(\mathbb{F}_2^{4s\log_2(d_2)}, A)$ such that $H$ is an
$(n_2=d_2^{4s}, d_2, \lambda_2)$ graph where
$\lambda_2 = b_2/\sqrt{d_2}$ and $b_2 = 4 s \log_2(d_2)$.
(The set of generators, $A$, comes from a small bias code derived from
a construction of Alon et al.~\cite{AGHP92}.)



\noindent \textbf{The base code $\Cc_0$.} This is dealt with in detail in \cref{sec:abstract_dec}. We choose $\epsilon_0=1/d_2^2$ and use \cref{cor:base_code} to obtain a code $\Cc_0'$ in $\mathbb{F}_2^{n'}$ that is $\epsilon_0$-biased and has a blocklength $\Omega(D/ \epsilon_0^c)$ for some constant $c$. Call this blocklength of $\Cc_0'$ to be $n'$.
Next we replicate the codewords $m=d_1^s$ times to get code $\Cc_0$ in $\mathbb{F}_2^n$ with the same bias but a rate that is worse by a factor of $m$. In the proofs below, we only use properties of $\Cc_0$ that is of multiplicity $m$, has rate $\Omega(\epsilon_0^c)/m$ and has bias $\epsilon_0$, as specified in \cref{fact:ta-shma_splittable_tuples}.

\noindent \textbf{The outer graph $G$.} \enspace
Set $d_1 = d_2^4$ so that $n_2 = d_1^s$ as required by the
$s$-wide replacement product. We apply Ta-Shma's explicit Ramanujan
graph lemma (Lemma 2.10 in \cite{TS17}) with parameters $n'$, $d_1$ and
$\theta$ to obtain an $(n'',d_1,\lambda_1)$ Ramanujan graph $G$ with
$\lambda_1 \le 2\sqrt{2}/\sqrt{d_1}$ and $n'' \in [(1-\theta)n',n']$ or $n''\in [(1-\theta)2n',2n']$. Here, $\theta$ is an error parameter that we set as $\theta = \lambda_2^4/6$ (this choice of $\theta$ differs from
Ta-Shma). Because we can construct words with block length $2n'$
(if needed) by duplicating each codeword, we may assume w.l.o.g. that $n''$ is close to $n'$ and
$(n'-n'') \le \theta n' \le 2 \theta n''$. See \cite{TS17} for a more formal description of this graph.

Note that
$\lambda_1 \le \lambda_2^4/6$ since $\lambda_1 \le 3/\sqrt{d_1} =
3/d_2^2 = 3 \cdot \lambda_2^4/b_2^4 \le \lambda_2^4/6$. Hence,
$\epsilon_0 + 2\theta + 2\lambda_1 \le \lambda_2^4$, as needed to apply \cref{cor:tweaked_ta-shma_spectral_analysis}.

\noindent \textbf{The walk length.} \enspace
Set the walk length $k-1$ to be the smallest integer such that
$$
(\lambda_2^2)^{(1-5\alpha)(1-\alpha)(k-1)} \le \epsilon.
$$
This will imply using Ta-Shma's analysis that the bias of the final code is at most $\epsilon$ as shown later.

\begin{center}
\fbox{\begin{minipage}{33em}

$s=1/\alpha,\quad s = \Theta(\log(1/\epsilon)^{1/6}), \text{ so that } \frac{\alpha^3}{4 \log_2(1/\alpha)} \ge \frac{1}{\log_2(1/\epsilon)}$
\vskip 0.3cm
$H: (n_2,d_2,\lambda_2),\quad n_2=d_1^s,\quad d_2=s^{4s},\quad \lambda_2=\frac{b_2}{\sqrt{d_2}},\quad b_2 = 4s\log d_2$
\vskip 0.3cm
$\Cc_0': $ bias $\epsilon_0=1/d_2^2$,\quad blocklength $n' = O(D/\epsilon_0^c)$
\vskip 0.3cm
$\Cc_0: $ bias $\epsilon_0 =1/d_2^2$,\quad multiplicity $m=d_1^s$,\quad blocklength $n = O(mD/\epsilon_0^c)$
\vskip 0.3cm
$G: (n'',d_1,\lambda_1),\quad n'' \approx n' = O(D/\epsilon_0^c),\quad d_1=d_2^4,\quad \lambda_1\leq \frac{2\sqrt{2}}{d_1}$
\vskip 0.3cm
$k: \text{ smallest integer such that } (\lambda_2^2)^{(1-5\alpha)(1-\alpha)(k-1)} \leq \epsilon$

\end{minipage}}
\end{center}

\begin{proof}[Proof of \cref{fact:ta-shma_splittable_tuples}]

We will prove it in the following claims. We denote by $\tuples{k} \subseteq [n]^k$ the collection of walks on the $s$-wide replacement product obtained above, and we denote by $\Cc$ the final code obtained by doing the direct sum operation on $\Cc_0$ using the collection of tuples $\tuples{k}$. The explicitness of $\tuples{k}$ follows from Ta-Shma's construction since all the objects used in the construction have explicit constructions. 

Next, the multiplicity $m=d_1^s = d_2^{4s} = s^{16s^2} = 2^{16s^2\log s} \leq (2^{s^6})^{o(1)} = (1/\epsilon)^{o(1)}$.

\begin{claim}\label{claim:k_lower_bound}
  We have $k-1 \ge s/\alpha = s^2$, and that $k-1\le 2s^5$, so that 
  \begin{gather*}
  \Theta(\log(1/\epsilon)^{1/3}) \leq k \leq \Theta(\log(1/\epsilon))
  \end{gather*}
\end{claim}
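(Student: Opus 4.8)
The plan is to convert the definition of $k$ into an explicit estimate for $k-1$ and then bound the two quantities that appear, namely $\log_2(1/\lambda_2^2)$ and $\log_2(1/\epsilon)$, in terms of $s$. Write $\Lambda := \log_2(1/\lambda_2^2)$ and $\kappa := (1-5\alpha)(1-\alpha)$ for brevity. By definition $k-1$ is the least integer with $(\lambda_2^2)^{\kappa(k-1)} \le \epsilon$; taking base-$2$ logarithms (and using $\lambda_2 < 1$) this is equivalent to $\kappa\,(k-1)\,\Lambda \ge \log_2(1/\epsilon)$, while minimality of $k-1$ gives the matching upper bound $k-1 \le \log_2(1/\epsilon)/(\kappa\Lambda) + 1$. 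So it suffices to sandwich $\Lambda$ and $\kappa$, and then invoke the relation between $\alpha$ and $\epsilon$.

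For $\Lambda$: since $\lambda_2 = b_2/\sqrt{d_2}$ with $d_2 = s^{4s}$ and $b_2 = 4 s\log_2 d_2 = 16 s^2\log_2 s$, we have $\Lambda = \log_2 d_2 - 2\log_2 b_2 = 4 s\log_2 s - 2\log_2(16 s^2\log_2 s)$. First I would check that the subtracted term is at most $s\log_2 s$ for all sufficiently large $s$, which gives
\[
3\, s\log_2 s ~\le~ \Lambda ~\le~ 4\, s\log_2 s ;
\]
and since $\alpha \le 1/128$ we trivially have $1/2 \le \kappa \le 1$. Substituting these into the two inequalities above yields
\[
\frac{\log_2(1/\epsilon)}{4\, s\log_2 s} ~\le~ k-1 ~\le~ \frac{2\log_2(1/\epsilon)}{3\, s\log_2 s} + 1 .
\]

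It remains to relate $\log_2(1/\epsilon)$ and $s$. For the lower bound I would use the defining constraint on $\alpha$ in the (weaker, hence implied) form $\alpha^3/(4\log_2(1/\alpha)) \ge 1/\log_2(1/\epsilon)$, which together with $s = 1/\alpha$ gives $\log_2(1/\epsilon) \ge 4 s^3\log_2 s$; plugging this into the left inequality gives $k-1 \ge s^2 = s/\alpha$, as claimed. For the upper bound I would use the choice $\alpha = \Theta(\log_2(1/\epsilon)^{-1/6})$ --- equivalently $s = \Theta(\log_2(1/\epsilon)^{1/6})$, which is how $\alpha$ was fixed just after \cref{eq:alpha_epsilon} --- giving $\log_2(1/\epsilon) \le C s^6$ for an absolute constant $C$; plugging this into the right inequality gives $k-1 \le \tfrac{2C}{3}\cdot s^5/\log_2 s + 1 \le 2 s^5$ once $\epsilon$ is small enough that $s$ is large enough to absorb the $+1$ and make $2C/(3\log_2 s) \le 1$. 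Finally, $s^2 \le k-1 \le 2 s^5$ together with $s = \Theta(\log_2(1/\epsilon)^{1/6})$ gives $k = \Theta(\log_2(1/\epsilon)^{1/3})$ from below and $k = O(\log_2(1/\epsilon))$ from above, which is the stated two-sided bound.

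The computation is routine; the only mildly delicate points are (i) keeping the constants in $3 s\log_2 s \le \Lambda \le 4 s\log_2 s$ clean enough that the divisions land exactly on $s^2$ and $2 s^5$ rather than off by a logarithmic factor, and (ii) matching the correct form of the $\alpha$--$\epsilon$ relation to each side --- the $\alpha^3$-form (implied by \cref{eq:alpha_epsilon}) for the lower bound, and the $\Theta(\cdot)$ asymptotics coming from the choice of $\alpha$ for the upper bound; these are consistent since \cref{eq:alpha_epsilon} with $\alpha^5$ implies the $\alpha^3$-version and the $\Theta$ lives only in the choice of $\alpha$. I would also note the harmless convention that $1/\alpha$ is rounded to a power of $2$, which perturbs $s$ by at most a factor of $2$ and is absorbed by all the $\Theta$'s above.
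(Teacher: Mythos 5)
Your proof is correct and follows essentially the same route as the paper: both arguments reduce to the minimality definition of $k$, the explicit values $d_2=s^{4s}$, $b_2=16s^2\log_2 s$, $\lambda_2=b_2/\sqrt{d_2}$, the $\alpha^3$-weakening of \cref{eq:alpha_epsilon} for the lower bound, and the choice $s=\Theta(\log_2(1/\epsilon)^{1/6})$ for the upper bound. The only difference is organizational — you solve the defining inequality for $k-1$ and sandwich $\log_2(1/\lambda_2^2)$, whereas the paper plugs the candidate walk lengths $s/\alpha$ and $2s^5$ directly into the defining inequality and verifies them — and this changes nothing of substance.
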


\begin{proof}
  Using $d_2 = s^{4s}$ and~\cref{eq:alpha_epsilon}, we have
  \begin{align*}
  \left(\frac{1}{\lambda_2^2}\right)^{(1-5\alpha)(1-\alpha)s/\alpha} & \le \left(\frac{1}{\lambda_2^2}\right)^{s/\alpha} = \left(\frac{d_2}{b_2^2}\right)^{s/\alpha} \le \left(d_2\right)^{s/\alpha} = s^{4s^2/\alpha}\\
                                                              & = 2^{4s^2 \log_2(s)/\alpha} = 2^{4 \log_2(1/\alpha)/\alpha^3} \le  2^{\log_2(1/\epsilon)} = \frac{1}{\epsilon}.
  \end{align*}
  Hence, $\epsilon \ge (\lambda_2^2)^{(1-5\alpha)(1-\alpha)s/\alpha}$ and thus
  $k-1$ must be at least $s/\alpha$.
  
    In the other direction, we show that $(\lambda_2^2)^{(1-5\alpha)(1-\alpha)2s^5} \le \epsilon$, which will imply $k\leq \Theta(s^5)\implies k\leq \Theta(s^6)=\Theta(\log(1/\epsilon))$.
    
    \begin{align*}
    (\lambda_2^2)^{(1-5\alpha)(1-\alpha)2s^5} \leq \left( \frac{b_2^2}{d_2}\right)^{s^5} \leq \left( \frac{1}{s^{3s}}\right)^{s^5} = 2^{-\Theta(s^6\log s)} \leq 2^{-\Theta(s^6)} = 2^{-\log(1/\epsilon)} \leq \epsilon
    \end{align*}
\end{proof}

\begin{remark}\label{remark:choice_of_k}
 By the minimality of $k$, we have $(\lambda_2^2)^{(1-5\alpha)(1-\alpha)(k-2)} \ge \epsilon$.
 Since $1/(k-1) \le \alpha$, we get $(\lambda_2^2)^{(1-5\alpha)(1-\alpha)^2 (k-1)} \ge \epsilon$. This will be useful in rate computation.
\end{remark}

\begin{claim}
  The code $\Cc$ is $\epsilon$-balanced.
\end{claim}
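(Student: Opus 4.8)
The goal is to show that the final code $\Cc = \dsum_{\tuples{k}}(\Cc_0)$ is $\epsilon$-balanced, i.e.\ $\bias(\Cc) \le \epsilon$. Since $\Cc_0$ is a linear code and direct sum of a linear code is linear, it suffices to bound $\bias(\dsum_{\tuples{k}}(z))$ for every nonzero $z \in \Cc_0$, and since $\Cc_0$ has bias at most $\epsilon_0 = 1/d_2^2$, it suffices to prove that $\tuples{k}$ is an $(\epsilon_0, \epsilon)$-parity sampler. The plan is to run exactly the reduction from \cref{sec:tweaked_parity_sampling}: express $\bias(\dsum_{\tuples{k}}(z))$ as the inner product $\lvert\ip{\one}{\matr P_z \prod_{i=0}^{k-2} (\matr I \otimes \Aye_H) \matr G_{i \bmod s} \matr P_z (\matr I \otimes \Aye_H) \one}\rvert$, and bound it by the operator norm of the $s$-step block raised to the power $\lfloor (k-1)/s \rfloor$.

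First I would invoke \cref{cor:tweaked_ta-shma_spectral_analysis}, whose hypothesis $\epsilon_0 + 2\theta + 2\lambda_1 \le \lambda_2^4$ was already verified in the parameter-setting discussion (using $\lambda_1 \le \lambda_2^4/6$, $\theta = \lambda_2^4/6$, and $\epsilon_0 = 1/d_2^2 = \lambda_2^4/b_2^4 \le \lambda_2^4/6$). This gives
\[
\norm{\prod_{i=0}^{s-1} (\matr I \otimes \matr A_H) \matr P_z \matr G_i (\matr I \otimes \matr A_H)}_{\textup{op}} \le \sigma_2(H^2)^{s-1} + (s-1) \cdot \sigma_2(H^2)^{s-2} + (s-1)^2 \cdot \sigma_2(H^2)^{s-4}.
\]
Using $\sigma_2(H^2) = \lambda_2^2 = b_2^2/d_2$ and $d_2 = s^{4s}$, the two lower-order terms are swamped: a crude bound shows the right-hand side is at most $(\lambda_2^2)^{(1-5\alpha)(s-1)} \le (\lambda_2^2)^{(1-5\alpha)s(1-\alpha)}$, say, where the factor $(1-5\alpha)$ absorbs the polynomial-in-$s$ correction terms (this is the standard estimate from Ta-Shma, since $s = 1/\alpha$ and $\lambda_2^2$ is polynomially small in $d_2$ which is super-exponential in $s$). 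Raising to the power $\lfloor (k-1)/s \rfloor \ge (1-\alpha)(k-1)/s$ yields $\bias(\dsum_{\tuples{k}}(z)) \le (\lambda_2^2)^{(1-5\alpha)(1-\alpha)^2(k-1)}$.

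Finally, by the choice of walk length — $k-1$ is the smallest integer with $(\lambda_2^2)^{(1-5\alpha)(1-\alpha)(k-1)} \le \epsilon$ — and since $(1-\alpha)^2(k-1) \ge (1-\alpha)(k-1)$ fails in the wrong direction, I instead note $(\lambda_2^2)^{(1-5\alpha)(1-\alpha)^2(k-1)} \le (\lambda_2^2)^{(1-5\alpha)(1-\alpha)(k-1)} \cdot (\lambda_2^2)^{-(1-5\alpha)\alpha(1-\alpha)(k-1)}$; cleaner is to simply redo the exponent bookkeeping so that the accumulated losses (one lost step from the tweak, the floor in $\lfloor(k-1)/s\rfloor$, the $(1-5\alpha)$ from lower-order terms) all fit inside the slack budgeted by writing the target exponent as $(1-5\alpha)(1-\alpha)(k-1)$ from the start, exactly as the walk length was defined. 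The main obstacle — really the only non-routine point — is making the constant/exponent bookkeeping precise enough that the three sources of loss (the extra $(\matr I \otimes \Aye_H)$ factor costing one walk step, the $\lfloor \cdot \rfloor$, and the additive lower-order terms in \cref{cor:tweaked_ta-shma_spectral_analysis}) are all dominated by the $\alpha$-slack, which is why \cref{claim:k_lower_bound}'s lower bound $k-1 \ge s/\alpha = s^2$ is needed: it guarantees $1/(k-1) \le \alpha$ so that the floor and the single lost step are each an $\alpha$-fraction of the walk length. With that in hand the inequality $\bias(\dsum_{\tuples{k}}(z)) \le \epsilon$ follows, completing the proof that $\Cc$ is $\epsilon$-balanced.
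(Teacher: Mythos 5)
Your approach is exactly the paper's: reduce to parity sampling of $\tuples{k}$ on the $\epsilon_0$-biased word $z$, express the bias as $\bigl|\ip{\one}{\matr P_z \prod_i (\matr I\otimes\Aye_H)\matr G_{i\bmod s}\matr P_z(\matr I\otimes \Aye_H)\one}\bigr|$, invoke \cref{cor:tweaked_ta-shma_spectral_analysis} on each block of $s$ steps, raise to the power $\lfloor(k-1)/s\rfloor$, and close with the definition of $k$. However, your final step as written does not go through, and you acknowledge this without actually fixing it. The problem is that you pay the $(1-\alpha)$ loss twice: once per block, by rewriting the per-block exponent $s-1$ as $s(1-\alpha)$, and once globally, from $\lfloor(k-1)/s\rfloor \ge (1-\alpha)(k-1)/s$. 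This lands you at $(\lambda_2^2)^{(1-5\alpha)(1-\alpha)^2(k-1)}$, which is \emph{larger} than $(\lambda_2^2)^{(1-5\alpha)(1-\alpha)(k-1)}$ (the exponent is smaller and the base is less than $1$), so the minimality-based definition of $k$ gives you only roughly $\epsilon^{1-\alpha}$, not $\epsilon$. Gesturing at ``redoing the bookkeeping'' leaves the claim unproved as stated.

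The correct accounting, which is what the paper does, charges the polynomial prefactor and the floor to \emph{different} slack terms so that each is used once. Bound the three-term sum by $3(s-1)^2\,\sigma_2(H^2)^{s-4}$ and absorb the prefactor into a single extra power of $\sigma_2(H^2)$ using $\sigma_2(H^2)\le 1/(3s^2)$ (amply true since $\lambda_2^2 = b_2^2/d_2$ with $d_2 = s^{4s}$), giving a per-block bound of $\sigma_2(H^2)^{s-5}$ with no $(1-\alpha)$ factor; the $5$ lost powers are exactly the $(1-5\alpha)$ term since $(s-5)/s = 1-5\alpha$. Then the only $(1-\alpha)$ comes from the floor: $(s-5)\lfloor(k-1)/s\rfloor \ge (s-5)(k-1-s)/s = (1-5\alpha)\bigl(1-\tfrac{s}{k-1}\bigr)(k-1) \ge (1-5\alpha)(1-\alpha)(k-1)$, where the last step uses $k-1\ge s/\alpha$ from \cref{claim:k_lower_bound}. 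This matches the exponent in the definition of $k$ and yields bias at most $\epsilon$.
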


\begin{proof}
  Using~\cref{cor:tweaked_ta-shma_spectral_analysis}, we have that the final bias  
  \begin{align*}
    b \coloneqq \left(\sigma_2(H^2)^{s-1} + (s-1) \cdot \sigma_2(H^2)^{s-2} + (s-1)^2 \cdot \sigma_2(H^2)^{s-4}\right)^{\lfloor (k-1)/s \rfloor}
  \end{align*}
  is bounded by
  \begin{align*}
    b &\le (3(s-1)^2\sigma_2(H^2)^{s-4})^{((k-1)/s)-1} && (\text{Using } \sigma_2(H^2)\leq 1/3s^2) \\
      &\le ((\sigma_2(H^2)^{s-5})^{(k-1-s)/s} \\
      &= \sigma_2(H^2)^{(1-5/s)(1-s/(k-1))(k-1)}\\
      &\le \sigma_2(H^2)^{(1-5\alpha)(1-\alpha)(k-1)}\\
      &= \left(\lambda_2^2\right)^{(1-5\alpha)(1-\alpha)(k-1)} \le \epsilon,                                                                                                                          
  \end{align*}
  where the last inequality follows from $s = 1/\alpha$ and $k-1 \ge s/\alpha$, the latter from~\cref{claim:k_lower_bound}.
\end{proof}

\begin{claim}\label{claim:rate_round_i}
  $\Cc$ has rate $\Omega(\epsilon^{2+ 28 \cdot \alpha})$.
\end{claim}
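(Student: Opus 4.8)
The plan is to track how rate changes through each step of Ta-Shma's construction, multiplying together the rate losses and comparing to the target $\epsilon^{2+o(1)}$. Recall the chain of objects: we start with $\Cc_0' \subseteq \F_2^{n'}$ of rate $\Omega(\epsilon_0^c)$ (from \cref{cor:base_code}, with $\epsilon_0 = 1/d_2^2$), then form $\Cc_0 \subseteq \F_2^n$ by repeating each symbol $m = d_1^s$ times, which preserves the dimension $D$ but multiplies the blocklength by $m$, so $\Cc_0$ has rate $\Omega(\epsilon_0^c)/m$. Finally $\Cc = \dsum_{\tuples{k}}(\Cc_0)$ has the same dimension $D$ but blocklength $\abs{\tuples{k}} = n'' \cdot d_1^s \cdot d_2^{2(k-1)}$ (since each $(k-1)$-step walk on the tweaked $s$-wide replacement product is a $d_2^2$-regular walk, hence determined by a start vertex in $V(G)\times V(H)$ together with a degree sequence in $[d_2^2]^{k-1}$). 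So the rate of $\Cc$ is the rate of $\Cc_0'$ divided by $m \cdot d_1^s \cdot d_2^{2(k-1)} = d_1^{2s} \cdot d_2^{2(k-1)}$, i.e.
\[
\text{rate}(\Cc) ~=~ \Omega\inparen{\frac{\epsilon_0^c}{d_1^{2s} \cdot d_2^{2(k-1)}}} \mper
\]

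The key computation is then to bound $d_2^{2(k-1)}$ from below and above in terms of $\epsilon$, using the two inequalities coming from the (near-)minimal choice of $k$. From \cref{remark:choice_of_k}, $(\lambda_2^2)^{(1-5\alpha)(1-\alpha)^2(k-1)} \ge \epsilon$, and since $\lambda_2^2 = b_2^2/d_2$ with $b_2 = 4s\log_2 d_2$, we have $\lambda_2^2 \ge 1/d_2$ (up to the $b_2$ factor, which I will absorb — more precisely $\lambda_2^2 = b_2^2/d_2 \le 1$ and $\log(1/\lambda_2^2) = \log d_2 - 2\log b_2 \ge (1-o(1))\log d_2$). Taking logs, $(1-5\alpha)(1-\alpha)^2(k-1)\log(1/\lambda_2^2) \le \log(1/\epsilon)$, which gives
\[
d_2^{k-1} ~\le~ (1/\lambda_2^2)^{(k-1)(1+o(1))} ~\le~ (1/\epsilon)^{\frac{1+o(1)}{(1-5\alpha)(1-\alpha)^2}} ~=~ (1/\epsilon)^{1 + O(\alpha)} \mper
\]
So $d_2^{2(k-1)} \le (1/\epsilon)^{2 + O(\alpha)}$. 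It remains to check that the other factors $\epsilon_0^c$ and $d_1^{2s}$ are all of the form $(1/\epsilon)^{o(1)}$. For this: $d_1^{2s} = d_2^{8s} = s^{32s^2} = 2^{O(s^2 \log s)}$, and since $s = \Theta(\log(1/\epsilon)^{1/6})$, this is $2^{O(\log(1/\epsilon)^{1/3}\log\log(1/\epsilon))} = (1/\epsilon)^{o(1)}$; similarly $1/\epsilon_0 = d_2^2 = s^{8s} = (1/\epsilon)^{o(1)}$, so $\epsilon_0^{-c} = (1/\epsilon)^{o(1)}$ as well. Multiplying these together, $\text{rate}(\Cc) \ge \Omega\inparen{(1/\epsilon)^{-(2+O(\alpha))}} = \Omega(\epsilon^{2+O(\alpha)})$, and I would track the constant hidden in $O(\alpha)$ carefully enough to get the claimed $28\alpha$ (the precise constant depends on bookkeeping of $(1-5\alpha)^{-1}(1-\alpha)^{-2} = 1 + (5+2)\alpha + O(\alpha^2) = 1 + 7\alpha + \cdots$ against the factor of $2$ in the exponent, plus slack for the lower-order terms, landing comfortably under $28\alpha$ for $\alpha \le 1/128$).

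The main obstacle is purely the bookkeeping: being careful that the $b_2 = 4s\log_2 d_2$ factors appearing in $\lambda_2^2 = b_2^2/d_2$ (which make $\log(1/\lambda_2^2)$ slightly smaller than $\log d_2$) and the factors of $(1-5\alpha)(1-\alpha)^2$ in the exponent together contribute only an $O(\alpha)$-size additive perturbation to the exponent $2$, and that all the ``junk'' factors ($m$ minus the walk contribution, $d_1^s$, $\epsilon_0^{-c}$) are genuinely $(1/\epsilon)^{o(1)}$ given $s = \Theta(\log(1/\epsilon)^{1/6})$ — i.e. that $s^2 \log s = o(\log(1/\epsilon))$, which holds since $s^2 = \Theta(\log(1/\epsilon)^{1/3})$. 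There is no conceptual difficulty; one just needs to convert the defining inequality for $k$ into the bound on $d_2^{2(k-1)}$ and assemble the pieces. I would also double-check the blocklength formula $\abs{\tuples{k}} = n'' d_1^s d_2^{2(k-1)}$ against the walk count for the tweaked construction (each step is a walk operator of a $d_2^2$-regular graph, so $k-1$ steps give $d_2^{2(k-1)}$ choices, matching \cref{sec:tweaks}), since an error of a factor $2$ in the exponent here would break the whole estimate.
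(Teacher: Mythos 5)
Your proposal is correct and follows essentially the same route as the paper: compute the blocklength $n''\cdot d_1^s\cdot d_2^{2(k-1)}$, convert the near-minimality of $k$ (via \cref{remark:choice_of_k} and the bound $d_2^{1-2\alpha}\le 1/\sigma_2(H^2)$ coming from $d_2^{\alpha}\ge b_2$) into $d_2^{2(k-1)}\le(1/\epsilon)^{2+O(\alpha)}$, and absorb the remaining factors $d_1^{O(s)},\epsilon_0^{-c}$ as $(1/\epsilon)^{O(\alpha)}$. The only (harmless) slip is an extra factor of $m=d_1^s$ in your rate ratio from double-counting the replication step against the vertex count of the replacement product; since that factor is itself $(1/\epsilon)^{o(1)}$ it does not affect the conclusion.
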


\begin{proof}
  The support size is the number of walks of length $k$ on the
  $s$-wide replacement product of $G$ and $H$ (each step of the walk has $d_2^2$ options),
  which is
  \begin{align*}
    |V(G)| |V(H)| d_2^{2(k-1)} = n'' \cdot d_1^s \cdot d_2^{2(k-1)} &= n'' \cdot  d_2^{2(k-1)+4s} \le n' \cdot  d_2^{2(k-1)+4s}\\
                                                      &= \Theta\left(\frac{D}{\epsilon_0^c} \cdot d_2^{2(k-1)+4s}\right)\\
                                                      &= \Theta\left(D \cdot (d_2^2)^{k-1+2s+c}\right)\\
                                                      &= O\left(D \cdot (d_2^2)^{(1+3\alpha)(k-1)}\right),                                                      
  \end{align*}
  where the penultimate equality follows from the assumption that $\epsilon_0$ is a constant.

  Note that $d_2^{\alpha} = d_2^{1/s} = s^{4} \ge b_2$ since
  $b_2 = 4 s \log_2(d_2) = 16 s^2 \log_2(s) \le s^4$
  . Thus,
  $$
  d_2^{1-2\alpha} = \frac{d_2}{d_2^{2\alpha}} \le \frac{d_2}{b_2^2} = \frac{1}{\sigma_2(H^2)}.
  $$
  We obtain
  \begin{align*}
  (d_2^2)^{(k-1)} &\le \left(\frac{1}{\sigma_2(H^2)}\right)^{\frac{2(k-1)}{1-2\alpha}} \\
  &\le \left(\frac{1}{\epsilon}\right)^{\frac{2}{(1-2\alpha)(1-5\alpha)(1-\alpha)^2}} && \text{(Using~\cref{remark:choice_of_k})}\\
  &\le \left(\frac{1}{\epsilon}\right)^{2(1+10\alpha)},
  \end{align*}
  which implies a block length of
  $$
  O\left(D \cdot (d_2^2)^{(1+3\alpha)(k-1)}\right) = O\left(D \left(\frac{1}{\epsilon}\right)^{2(1+10\alpha)(1+3\alpha)}\right) = O\left(D \left(\frac{1}{\epsilon}\right)^{2(1+14\alpha)}\right).
  $$
\end{proof}


\begin{claim}
$\tuples{k}$ is $\tau$-splittable for $\tau \leq 2^{-\Theta(\log(1/\epsilon)^{1/6})}$.
\end{claim}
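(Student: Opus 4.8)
The plan is to establish splittability of $\tuples{k}$ purely by combining the structural identity of \cref{cor:split_walk_spectral_gap} with the zig-zag bound of \cref{fact:zig_zag_bound}, and then substituting the parameter choices fixed above. Since \cref{def:split} handles the case $k=1$ by fiat, I may assume $k \ge 2$.

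First I would check that $\tuples{k}$ is $d$-regular in the sense of \cref{def:split}, with $d = d_2^2$ and $n = \abs{V(G)\times V(H)} = n'' d_1^s$. Each of the $k-1$ steps of a walk on the tweaked $s$-wide replacement product is a step of some operator $(\matr I \otimes \Aye_H)\matr G_{i}(\matr I \otimes \Aye_H)$, which is the normalized random-walk operator of a $d_2^2$-regular graph ($\Aye_H$ being $d_2$-regular and $\matr G_i$ a permutation), and any vertex of $V(G)\times V(H)$ can serve as the start of a walk; hence $\abs{W[a,b]} = n \cdot (d_2^2)^{b-a}$ and $W[a]=[n]$, as required. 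Then, fixing any $1 \le a \le t < b \le k$, the split at position $t$ is governed by the operator $\matr G_{t'}$ for an appropriate $t' \in \set{0,\dots,s-1}$, and \cref{cor:split_walk_spectral_gap} gives
$$
\sigma_2\!\left(\swap{W[a,t]}{W[t+1,b]}\right) ~=~ \sigma_2\!\left((\matr I \otimes \Aye_H)\,\matr G_{t'}\,(\matr I \otimes \Aye_H)\right) ~\le~ \sigma_2(G) + 2\,\sigma_2(H) + \sigma_2(H)^2 ~=~ \lambda_1 + 2\lambda_2 + \lambda_2^2,
$$
where the first inequality is \cref{fact:zig_zag_bound}; crucially this bound is uniform over $t' \in \set{0,\dots,s-1}$, hence over all split positions $a \le t < b$.

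Finally I would plug in the parameters. As noted in the parameter discussion, $\lambda_1 \le 3/\sqrt{d_1} = 3/d_2^2 = 3\lambda_2^4/b_2^4 \le \lambda_2$, so it suffices to take $\tau = \lambda_1 + 2\lambda_2 + \lambda_2^2 \le 4\lambda_2 = 4 b_2/\sqrt{d_2}$. With $d_2 = s^{4s}$ and $b_2 = 4 s \log_2 d_2 = 16 s^2 \log_2 s$ we get $\lambda_2 = 16 s^2 \log_2 s / s^{2s} \le s^{-s}$ for $s$ sufficiently large, so $\tau \le 4 s^{-s} = 2^{-\Omega(s \log s)}$. Since $s = \Theta(\log_2(1/\epsilon)^{1/6})$, this is at most $2^{-\Theta(\log(1/\epsilon)^{1/6})}$, as claimed, which completes the proof of \cref{fact:ta-shma_splittable_tuples}.

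I do not expect a genuine obstacle here: all the nontrivial work is already carried out in \cref{lemma:swap_matrix_rep} and \cref{cor:split_walk_spectral_gap} (which show that, after reordering rows and columns, the split operator is a tensor product of the zig-zag walk operator with an all-ones block, so its second singular value equals that of the zig-zag operator) and in \cref{fact:zig_zag_bound}. The remaining content is bookkeeping — matching the split position $t$ with the correct operator index $t'$, observing that the resulting bound does not depend on $a$ or $b$, and verifying the arithmetic showing that $4\lambda_2$ drops below $2^{-\Theta(\log(1/\epsilon)^{1/6})}$ for the chosen value of $s$.
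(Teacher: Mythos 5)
Your proposal is correct and follows essentially the same route as the paper: reduce to $\sigma_2$ of the zig-zag walk operator via \cref{cor:split_walk_spectral_gap}, apply \cref{fact:zig_zag_bound}, bound the result by $4\lambda_2$, and substitute $d_2 = s^{4s}$, $b_2 = 16s^2\log_2 s$ to get $2^{-\Theta(s\log s)} \le 2^{-\Theta(\log(1/\epsilon)^{1/6})}$. The extra verification of $d_2^2$-regularity is a harmless (and reasonable) addition that the paper leaves implicit.
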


\begin{proof}
As we saw in Corollary \cref{cor:split_walk_spectral_gap}, the splittability $\tau$ can be upper bounded by $\sigma_2((I\otimes \Aye_H)G_t(I\otimes \Aye_H))$, which is at most $\sigma_2(G)+2\cdot \sigma_2(H)+\sigma_2(H)^2$ by Fact \ref{fact:zig_zag_bound}. So, the collection $\tuples{k}$ is $\tau$-splittable for

\begin{align*}
\tau \leq \sigma_2(G)+2\cdot \sigma_2(H)+\sigma_2(H)^2 \leq 4\lambda_2 &= 4b_2/d_2^{1/2} \\
&= 64s^2\log s/s^{2s} \\
&= 2^{-\Theta(s\log s)} \\
&\leq 2^{-\Theta(s)}\\
&= 2^{-\Theta(\log(1/\epsilon)^{1/6})}
\end{align*}
\end{proof}
\end{proof}

\end{appendices}

\end{document}